\title{Geometric models of soliton vortex dynamics}
  \author{\href{mailto:mmrig@leeds.ac.uk}{Ren\'e Israel Garc\'ia Lara}}
  \author{Ren\'e Israel Garc\'ia Lara}
\def\higgsField{\phi}
\def\gaugePotential{A}
\def\curvatureField{F}
\def\neutralField{N}
\def\vortexSet{\mathcal{P}}
\def\antivortexSet{\mathcal{Q}}
\def\fieldConfSpace{\mathcal{A}}
\def\gaugeTransfSpace{\mathcal{G}}
\def\configurationSpace{\mathscr{C}}
\def\bigTantent{\mathcal{T}}
\def\vertSpace{\mathcal{V}}
\def\connectionSpace{\mathscr{A}}
\def\gaugeTransfSpace{\mathscr{G}}
\def\LagrangianDensity{\mathcal{L}}
\def\projectiveSpace{\mathbb{P}}
\NewDocumentCommand\set{m}{
  \left\{#1\right\}
}
\NewDocumentCommand\st{m}{
  \;\left\lvert\;{#1}\right.
}
\newcommand*\half{\frac{1}{2}}
\newcommand*{\projSp}{\projectiveSpace}
\newcommand*\plane{{\mathbb{R}^2}}
\newcommand*\surface{\Sigma}
\newcommand*\sphere{{\mathbb{S}^2}}
\DeclareMathOperator\Aut{Aut}
\newcommand*\reals{\mathbb{R}}%
\newcommand*\cpx{\mathbb{C}}%
\newcommand*\laplacian{\Delta}
\newcommand*\grad{\nabla}
\newcommand*\del{\partial}
\newcommand*\disk{\mathbb{D}}
\newcommand*\order{\mathcal{O}}
\newcommand*\moduli{\mathcal{M}}
\newcommand*\eval[2]{\left. #1 \right\rvert_{#2}}
\newcommand*\conj[1]{\overline{#1}}
\newcommand*\Lsp{\mathrm{L}}
\DeclareMathOperator\Arg{Arg}
\NewDocumentCommand\lproduct{sm}{
  \IfBooleanTF#1
    {
        \left\langle{#2}\right\rangle
    }{
        \langle{#2}\rangle
    }
}
\newcommand*\dv[1]{\frac{d}{d #1}}
\NewDocumentCommand\pdv{mo}{
 % Example usage:
 % \pdv{x}y  : This is the partial derivative operator "times" y.
 % \pdv{x}[y]: This is the partial derivative y with respect to x.
 \IfValueTF{#2}
 {\frac{\del #2}{\del #1}}
 {\frac{\del}{\del #1}}
}
\NewDocumentCommand\abs{sm}{
  \IfBooleanTF#1
  {
    \left\lvert {#2} \right\rvert}
  {
    \lvert {#2} \rvert
  }
}
\NewDocumentCommand\norm{sm}{
  \IfBooleanTF#1
  {
    \left\lvert\left\lvert {#2} \right\rvert\right\rvert
  }
  {
    \lvert\lvert {#2} \rvert\rvert
  }
}
\NewDocumentCommand\brk{r()}{
    \left( {#1} \right)
}
\DeclareMathOperator\Ker{Ker}
\newtheorem{theorem}{Theorem}
\numberwithin{theorem}{chapter}
\newtheorem*{theorem*}{Theorem}
\newtheorem{proposition}[theorem]{Proposition}%[theorem]
\newtheorem{lemma}[theorem]{Lemma}%[theorem]
\newtheorem{corollary}[theorem]{Corollary}%[theorem]
\newtheorem{definition}[theorem]{Definition}%[theorem]
\newtheorem{conjecture}[theorem]{Conjecture}%[theorem]
\newtheorem{theoremdef}[theorem]{Theorem/Definition}%[theorem]
\newenvironment{example}{\textbf{Example}. }{
  \begin{flushright}
    $\square$
    \end{flushright}
}
\begin{document}
%\language{english}
\renewcommand\baselinestretch{1.2}
\baselineskip=18pt plus1pt

\maketitle
%set the number of sectioning levels that get number and appear in the contents
\setcounter{secnumdepth}{2}
\setcounter{tocdepth}{2}

\frontmatter
%\include{leeds-style/Dedication/dedication}
%\include{leeds-style/Acknowledgement/acknowledgement}

% Thesis Abstract -----------------------------------------------------

%\begin{abstractslong}    %uncommenting this line, gives a different abstract heading
\begin{abstracts}        %this creates the heading for the abstract page
    In this work we focus on BPS solutions 
    of the gauged $O(3)$ Sigma model, originally due to Schroers, and use these 
    ideas 
    to study the geometry of the moduli space. The model has an asymmetry 
    parameter 
    $\tau$ breaking the symmetry of vortices and antivortices on the field 
    equations.     
    It is shown that the moduli space is incomplete both on the Euclidean plane 
    and 
    on a compact surface. 
     %we considered the effect of a general asymmetry parameter
    %$\tau$ on the geometry of moduli space. We can 
    %distinguish two types of basic topological solitons, called
    %vortices and antivortices, representing bumps of energy that by
    %topological constraints can not coalesce. 
    On the Euclidean plane, the $\mathrm{L}^2$ metric 
    on the moduli space is approximated for well separated cores and results 
    consistent
    with similar approximations for the Ginzburg-Landau functional are found.
    The scattering angle of approaching vortex-antivortex pairs of different 
    effective mass is computed numerically and is shown to be different from 
    the 
    well known scattering of approaching Ginzburg-Landau vortices.  
    %We introduce the ideas of Samols about localization on moduli
    %space and prove that on the contrary of vortices of the
    %Ginzburg-Landau functional, this is an incomplete metric space. 
    %We
    %compute 
    The volume of the moduli space for general $\tau$ is computed
    % for the case 
    %where ambient space is Riemanns' sphere or a flat torus. We also
    for the case of the round sphere and flat tori.
    
    The model on a compact surface is deformed introducing 
    a neutral field and a Chern-Simons term.
    %extend the model with Chern-Simons theory. We find a lower bound
    A lower bound for the Chern-Simons constant $\kappa$ such that the extended 
    model admits a solution is shown to exist, and if the total number of 
    vortices and antivortices are different, the existence of an upper bound 
    is  also shown. Existence of multiple solutions to the governing elliptic 
    problem is established on a compact surface as well as the existence of two 
    limiting behaviours as $\kappa \to 0$. A localization formula for the 
    deformation is found for both Ginzburg-Landau and the $O(3)$ Sigma model 
    vortices  
    and it is shown that it can be extended to the coalescense set. This rules 
    out the possibility that this is Kim-Lee's term in the case of 
    Ginzburg-Landau vortices, moreover, the deformation term is compared 
    on the plane with the Ricci form of the surface and it is shown they are 
    different, hence also discarding that this is the term proposed by 
    Collie-Tong to model vortex dynamics with Chern-Simons interaction.
\end{abstracts}
%\end{abstractlongs}

% ----------------------------------------------------------------------

%%% Local Variables: 
%%% mode: latex
%%% TeX-master: "../thesis"
%%% End: 

%\include{Abbreviations/Abbreviations}

\tableofcontents
%\listoffigures
%\listoftables

\mainmatter

\newcommand*{\gp}{A}
\newcommand*{\hf}{\phi}

\chapter{Introduction}\label{c:intro}

    % Topological solitons have been toroughly studied in the
    % literature. On the plane, vortices play a central role to
    % understand superconductivity at critical coupling. From the
    % breaking work of Strachan and later of Samols, it was clear that a
    % geometric picture of topological solitons on field theory is
    % possible under certain conditions where what we now call
    % Bogomolny's trick is possible.

%This work is about geometric ideas of field theory on Riemann
%surfaces. 
This work is about the geometry of moduli spaces of vortices and antivortices 
 on a Riemann surface $\surface$. We are interested mostly in the 
 gauged $O(3)$ Sigma 
 model, 
 where the fields are represented by a connection 
 $\gp$ and a section $\hf$ of a fibre bundle 
 with fibres 
 diffeomorphic to $\mathbb{P}^1$, the Riemann sphere. 
 We say $\hf$ is a Higgs field with target the Riemann sphere. 
 Static solutions of the field equations modulo gauge equivalence 
 form the moduli space of vortices and antivortices, each solution is 
 determined by the cores of the fields: 
 the preimages of the north pole (vortex points) and the south pole (antivortex 
 points). It can be proved the total number of the cores is enumerable and if 
 $\surface$ is compact, it is finite. We will assume without 
 loss of generality this is the 
 case, even though $\surface$ can be the complex plane. The dynamics 
of slowly varying fields can be described by geodesic motion of curves 
on the moduli space~\cite{manton_remark_1982} with  
a metric called the $\Lsp^2$ metric. This metric is K\"ahler and well 
understood for the moduli space of vortices of the Ginzburg-Landau
functional, in which case it is known that the moduli space is a complete
metric space and if the ambient surface is compact the moduli space is also
compact, hence of finite volume. 
% these are  
%We explore ideas started by Manton~\cite{manton_remark_1982}
%and use them in the realm of BPS vortices, along the lines of the work
%of Samols~\cite{samols1992}. We mostly use the moduli space of static
%BPS vortices as our model space and discuss the ideas behind
%localization. 

The $O(3)$ Sigma model we will study is 
asymmetric, vortices and antivortices have different
effective mass, moreover, 
%accounting for an
%asymmetry on the vacuum manifold of the underlying field theory. The
the existence of two types of cores 
means vortices and 
antivortices cannot coalesce, therefore, a natural question is if
the moduli space is still complete. 
Another question we address is how the asymmetry 
affects the volume of the moduli space. These questions were addressed for the 
symmetric case in the
reference~\cite{romao2018}. The techniques used in the reference however
do not apply in general, we developed analytical tools to 
extend the results to the asymmetric case.

Later, we add a Chern-Simons deformation to the model and
describe the change in the dynamics of the fields on the moduli space. 
The deformation is tuned by means of a deformation constant 
$\kappa$ which we assume small. 
It turns out that the dynamics of the theory is described by 
geodesic motion perturbed with a connection term proportional to 
$\kappa$, i.e. a term 
dependent on the velocity of the cores. Our model resembles the model of Kim and
Lee~\cite{kim_first_2002} with the difference that the target is the
sphere and there are two types of cores to consider. 
It is well known for several related models with Chern-Simons deformations 
that multiple solutions of the field equations occur. 
We study the problem of existence and multiplicity of 
solutions to the field equations of the deformed 
$O(3)$ Sigma model, the main result is that even though 
multiple solutions of the equations can exist, 
there is a minimal deformation, such that 
no matter which configuration of vortices and antivortices on the 
moduli space we choose, we can find exactly one 
solution close to the undeformed solution of the 
$O(3)$ Sigma model. 

We conclude with a description of the chapters of the thesis.

In chapter~\ref{ch:pre} we describe the ideas of localization in
abstract terms. Our approach is general and suits equally well
Ginzburg-Landau vortices as well as the $O(3)$ Sigma model, with the benefit 
that it
makes clear what we mean by adding a Chern-Simons term. 
%After
%discussing localization of the BPS model in detail, 
We also present
analytical results that are common to other parts of the next
chapters. 
%We present the governing elliptic problem of the theory and
%results of partial differential equations and function spaces that we
%need later to analyze the limiting behaviour of colliding cores of
%different type.

In chapter~\ref{c:vav-euclidean} we focus on the $O(3)$ Sigma 
model on the
euclidean plane. We study asymmetric vortex-antivortex pairs, 
supporting our
analysis with numerical evidence of the behaviour of colliding
vortex-antivortex pairs. 
%To be precise, we computed the energy density and magnetic
%field of Hedgehog vortex solutions to compare how these quantities
%vary with the effective mass of the core. We used these results to
%approximate the interaction of a vortex-antivortex pair at large
%separation. Afterwards, 
 We compute the metric on the moduli space of vortex-antivortex 
 pairs numerically 
%for
%different values of the asymmetry 
and use this computations to study the scattering of
approaching cores. 
%The chapter finishes with an analysis of the
%behaviour of 
%Ricci magnetic geodesics on the moduli space.
The main result is theorem~\ref{thm:vav-euc-incompleteness} 
which 
says that the moduli space is incomplete.

In chapter~\ref{c:vav-compact} we move to a compact ambient
surface. 
%In this case elliptic regularity and PDEs tools allow to give
%rigorous proofs of our statements: we focus on the limiting behaviour
%of families of solutions to 
%the governing elliptic problem and use these results to prove
The main results are the incompleteness of the moduli space of 
vortex-antivortex pairs,  
theorem~\ref{thm:moduli-space-incomplete-compact-surface}, 
and the computation of the volume of 
 the moduli space for the round sphere and for flat tori in 
 theorem~\ref{thm:vol-torus}, confirming 
% Our
%results extend the incompleteness result of
 a general conjecture by Rom\~ao-Speight~\cite{romao2018} 
%for asymmetric pairs on any compact
%surface and confirm their conjecture about the 
%for the volume 
in these cases.
%We end
%the chapter with a numerical study of solutions to the field
%equations on the sphere.

Chapter~\ref{ch:cs-moduli} is devoted to the study of Chern-Simons
deformations on compact surfaces. 
%The model Lagrangian we use descends from
%the original Kim-Lee-Lee Lagrangian, however, we keep the electric
%charge fixed 
%while the Chern-Simons coefficient $\kappa$ is left arbitrary. 
%There are several
%well-known results in the literature for Kim-Lee-Lee's Lagrangian,
%about existence and multiplicity of the field equations; all of them
%depend on the possibility to choose a convenient pair of electric
%charge and the Chern-Simons coefficient. Choosing the electric charge
%from the beginning we lost generality, on the other hand, we could
%give a more 
%detailed description of the space of solutions to the field
%equations. In particular, 
%We prove the existence of exactly two
%limiting behaviours for the solutions if the Chern-Simons coefficient is
%small and the number of vortices on the surface differs from the number
%of antivortices. 
We prove the existence of multiple solutions for small 
deformations of the $O(3)$ Sigma model if the number of vortices and 
antivortices is different and find bounds for 
the deformation constant.
%We also discuss in this chapter about the bounds on
%the Chern-Simons coefficient for which the field equations
%have a solution, proving that if the number of vortices and
%anti-vortices on the surface differ, necessarily there is an upper
%bound for such constants and multiple solutions for small
%$\kappa$. The relative difference of vortices and antivortices is
%assumed to satisfy the Bradlow bound like
%constraint~\eqref{eq:vav-size-constraint}, relating the relative number
%of both types of cores to the area of the surface and the asymmetry on
%the effective mass.
%The existence of this upper bound for the constant is
%related to the relative number of vortices and antivortices on the
%surface as we exhibit an example of vortex-antivortex system
%for which numerical evidence suggests there is a solution for any
%given $\kappa$. 
%We illustrate the results of this chapter solving the
We also solve the 
field equations numerically on the sphere for two configurations of 
vortices and antivortices at antipodal positions. 
The main result is theorem~\ref{thm:limit-kappa-0-hf}, 
describing the behaviour of the solutions to the field 
equations. 
We finalise the chapter
applying the localization technique to vortices of the Ginzburg-Landau 
model and vortices/antivortices of the $O(3)$ Sigma model, both with a 
Chern-Simons deformation. We found that dynamics is deviated from  
geodesic motion by a connection term consistent with 
previous results of
Kim-Lee~\cite{kim_first_2002} and Collie-Tong~\cite{collie_dynamics_2008}, 
and compared our result with theirs.
 %which we think of as a curvature associated with the
 %Chern-Simons term. We discuss the limits of 
%the localization formula as two cores of the same type coalesce,
%finding that the formula obtained can be continuously extended to the
%collision point, this is coincident with Collie and Tong's previous
%results and differs from Kim and Lee's. Our deduction is different
%though as we rely on analytical tools to state our claims. 

%\let\surface\undefined
\let\gp\undefined 
\let\hf\undefined

%%% Local Variables:
%%% TeX-master: "../../geometric-models"
%%% End:

\newcommand*{\confSp}{\mathcal{C}}
\newcommand*{\gaugeGp}{\mathcal{G}}
\newcommand*{\hf}{\higgsField}
\newcommand*{\connSp}{\connectionSpace}
\newcommand*{\fieldSp}{\fieldConfSpace}
\newcommand*{\bigTangent}{\mathcal{T}}
\newcommand*{\gp}{\gaugePotential}
\newcommand*{\vertSp}{\vertSpace}
\newcommand*{\Energy}{\mathrm{E}}
\newcommand*{\stgp}{\varphi}
\newcommand*{\Diff}{\mathrm{D}}
\newcommand*{\mf}{B}
\newcommand*{\ef}{e}
\newcommand*{\vset}{\vortexSet}
\newcommand*{\avset}{\antivortexSet}
\newcommand*{\moduliSp}{\moduli}
\newcommand*{\pair}{(\hf, \gp)}
\newcommand*{\Lagrangian}{\mathrm{L}}
\newcommand*{\dvr}[1]{\del\cdot{#1}}
\newcommand*{\vb}[1]{\mathbf{#1}}
\newcommand*{\north}{N}
\newcommand*{\south}{-N}
\newcommand*{\cDiff}{\mathcal{D}}
\newcommand*{\eform}{e}
\newcommand*{\Bform}{B}
\newcommand*{\Jop}{\operatorname{J}}
\newcommand*{\vol}{\mathrm{Vol}}
\newcommand*{\fnu}{h}
\newcommand*{\sgn}{s}
\newcommand*{\kmetric}{\mathrm{K}}
%\newcommand*{\disk}{\mathbb{D}}
%%%%%%%%%%%%%%%%%%%%%%%%%%%%%%%%%%%%%%%%%%%%%%%%%%% 

\newcommand*{\spL}{\mathrm{L}}

\chapter{Preliminaries}\label{ch:pre}

This chapter is for basic definitions and results of field theory 
that we will use in the successive. To study the geometry 
of the moduli space of vortices we need several analytical tools, this 
chapter is intended to be a bridge between field theory and analysis. 

In section~\ref{sec:field-th} we introduce the $O(3)$ Sigma model, which 
will play a central role all along the thesis.
%\section{Field theory on complex line bundles}\label{sec:field-th}

In section~\ref{sec:loc-formulas} we discuss a localization formula 
for the $O(3)$ Sigma model, we compute a metric for the moduli space 
of vortices and antivortices, the $\spL^2$ metric, 
and 
prove that 
it is K\"ahler.

%\section{Localisation }\label{sec:loc-formulas}

Section~\ref{c:intro-gov-elliptic} is about the analytic properties of 
the Taubes equation, this is the elliptic PDE that guarantees the 
existence of the moduli space of vortices and antivortices. Several 
theorems of analysis are introduced in this section to keep them 
collected in the same place for further reference. 
In subsection~\ref{subsec:smooth-par-dep} we prove that the 
solution to the Taubes equation depends differentiably on the 
position of the vortices and antivortices.
%\section{The governing elliptic problem}\label{c:intro-gov-elliptic}

In section~\ref{sec:top-meth} we state less known theorems of functional 
analysis about compact non-linear operators that we will need later.
%\section{Topological methods}\label{sec:top-meth}

\let\spL         \undefined

\section{Field theory on complex line bundles}\label{sec:field-th}

In this section we introduce notation and a few facts about
$\projSp^1$ fibre bundles that will be required for most of the
work. 

Let us start considering a principal $U(1)$ bundle $U(1) \to P \to
\reals\times\surface$, where $\surface$ is a Riemann surface. No
further assumption on $\surface$ is needed. Let $M$ be an
$n$-dimensional manifold, such that there exists a homomorphism
\begin{equation}
\label{eq:rep-u1-on-M}
\rho: U(1) \to \Aut(M),
\end{equation}
from the structure group to the group of automorphisms of $M$. The
word automorphism means that if $M$ has an extra structure, for
example, if is a symplectic or K{\"a}hler manifold, then $\rho$ should
preserve this structure. Let $F$ be the fibre bundle associated to $\rho$,
\begin{equation}
%\label{eq:target-manifold}
F = (\reals\times\surface) \times_{\rho} M.
\end{equation}

Recall a connection form on
$P$ is a $\mathfrak{u}(1)$ valued form $\omega$ on $P$, such that the
kernel $\Ker(\omega)$ defines the horizontal sub-bundle of $TP$. Since
$U(1)$ is one dimensional, we can identify $\omega$ with a
regular form. For any local section $s_a:U_a \subset
\reals\times\surface \to P$, the connection is given by a local form
$A_a = s_a^{*}(\omega)$ such that in any overlap $U_a\cap U_b \neq
\emptyset$ there is a transition function $\theta_{ab}: U_a\cap U_b \to \reals$ satisfying the condition,  
\begin{equation}
%\label{eq:overlap-Aa-Ab}
A_b = A_a + d\theta_{ab}.
\end{equation}

$U(1)$ is an abelian group, hence the adjoint representation of the
structure group is trivial,  the group of gauge transformations in
this case is
\begin{align}
 \gaugeGp = C^{\infty}(\reals\times\surface, U(1)). 
\end{align}

The space of connections $\connSp$ is an affine space: for any two
connection forms $\omega, \omega' \in \connSp$,
the difference $\omega - \omega'$ determines a unique 1-form $A \in
\Omega^1(\reals\times\surface)$ such that if $s_a:U_a\to P$ is 
a local trivialisation, then $s_a^*(\omega - \omega')$ is the restriction of
$A$ to $U_a$. Therefore $\connSp$ is in bijection with 
$\Omega^1(\reals\times\surface)$, the space of 1-forms on
$\reals\times\surface$. Let $\fieldSp = \Gamma
F \times \connSp$ be the space of pairs of fields $(\hf, A)$,
consisting of a section, $\hf: \reals\times\surface \to F$, 
%
%\begin{equation*}
%\label{eq:a-section}
%\hf: \reals\times\surface \to F
%\end{equation*}
%
and a connection form $A \in \Omega^1(\reals\times\surface)$.

\par

The quotient
$\fieldSp/\gaugeGp$ is the configuration space $\confSp$. If $M =
\sphere$, then $\rho$ has two antipodal fixed points, the north and
south poles. We choose one that we will denote as $N$ and call it
the north pole. In this setting $F$ is a $\projSp^1$ bundle, the
fibres are modelled on the complex projective line. The fact that $\rho$
represents the unitary group by rotations of the sphere lets us pull
the north pole back into a section $\north: \reals\times\surface \to
F$. The south pole can also be pulled back into another section, that
we denote by $\south$, however we must emphasise that $F$ lacks any
algebraic structure conferring other meaning to the name than a mere
notation. We also denote by $X \in \mathfrak{X}(\sphere)$ the Killing
field generated by $\rho$,
\begin{align}
X_p = \eval{\dv{s}}{s = 0} \brk(\rho\brk(e^{is})\cdot p), \qquad p \in \sphere.
\end{align}

A section $\hf: \reals\times \surface \to F$ is determined completely
by the family of maps $\hf_\alpha: U_\alpha \to \sphere$ defined for
each trivialising neighbourhood $U_\alpha \subset
\reals\times\surface$, if $U_{\alpha\beta} = U_\alpha \cap
U_\beta \neq \emptyset$, we have
\begin{align}
  \hf_\beta(x) = \rho(\exp(i\theta_{\alpha\beta}(x)))\cdot
\hf_\alpha(x), \qquad x \in U_{\alpha\beta} = U_\alpha \cap
U_\beta.  
\end{align}

Since $\rho$ acts by isometries, we can define the product
$\lproduct{\north,\hf}$ using the trivialisations: for $x \in
U_{\alpha}$,
\begin{align}
 \lproduct{\north(x),\hf(x)} = \lproduct{N, \hf_\alpha(x)}.  
\end{align}

We also define the covariant derivative of $\hf$ as the section
\begin{align}
 \cDiff \hf: \reals\times\surface \to
T^{*}(\reals\times\surface)\otimes \hf^*(TF) 
\end{align}
determined by the trivialisations  
$\cDiff\hf_{\alpha}: U_{\alpha} \to T^*U_\alpha\otimes T\sphere$ as,
\begin{align}\label{eq:cov-der-line-bundle}
\cDiff \hf_{\alpha} = \vb d \hf_\alpha - \vb\gp_{\alpha}\otimes
X_{\hf_{\alpha}}, 
\end{align}
where $\vb d \hf_\alpha: TU_\alpha \to TF$ can be split into
its temporal and spatial components,
\begin{align}
\vb d \phi_\alpha = dt\otimes\del_t\hf_\alpha + d\hf_\alpha, \qquad
  d\hf_\alpha(t,\cdot)\in T^{*}\surface\otimes (\hf_\alpha(t,\cdot))^*(TF).
\end{align}

Likewise, $\vb\gp_{\alpha} = \gp^0_{\alpha}\,dt + \gp_{\alpha}$, where $\gp^0_{\alpha} \in
C^{\infty}(U_{\alpha})$ and $\gp_{\alpha}(t,\cdot) \in
\Omega^1(U_{\alpha})$. If we define
\begin{align}
\Diff_t\hf_\alpha &= \del_t\hf_\alpha - \gp^0_{\alpha} \otimes X_{\hf_{\alpha}}, &
\Diff \hf_\alpha  &= d\hf_\alpha - \gp_{\alpha}\otimes X_{\hf},
\end{align}
then,
\begin{align}
\cDiff \hf_\alpha = dt \otimes \Diff_t\hf_\alpha + \Diff \hf_\alpha.
\end{align}

We introduce a Lorentzian metric as follows. If $g$ denotes a
Riemannian metric in $\surface$ then the 
metric in $\reals \times \surface$ is the product $dt^2 - g$. This
metric induces a metric in
$\Omega^2(\reals\times\surface)$. Recall the curvature
form $\omega \in \Omega^2(\reals\times\surface)$ is given in a local
trivialisation by $\omega = d\vb\gp_{\alpha}$ and define the electric and
magnetic forms, as the forms $\eform\in\Omega^1(\reals\times\surface)$ and
$\Bform\in\Omega^2(\reals\times\surface)$ respectively, such that,
\begin{align}
\omega = dt\wedge \eform + \Bform,
\end{align}
and for fixed $t$, $\eform(t,\cdot)\in\Omega^1(\surface)$,
$\Bform(t,\cdot)\in\Omega^2(\surface)$.
% For any $t$, the norm
% of $\omega$ in the $\Lsp^2$ sense is,
% %
% \begin{align*}
% \norm{\omega(t,\cdot)}^2_{\Lsp^2(\reals\times\surface)} =
% \norm{\eform}^2_{\Lsp^2(\surface)} - \norm{\Bform}^2_{\Lsp^2(\surface)}.
% \end{align*}

Although $\abs {d\hf_\alpha}$ is gauge
dependent, at the intersection $U_{\alpha\beta}$ of any two
trivialisation neighbourhoods, $\abs{\cDiff\hf_\alpha} =
\abs{\cDiff\hf_\beta}$, hence we can define 
\begin{align}
\norm{\cDiff\hf(t, \cdot)}^2 = \norm{\Diff_t\hf}^2 - \norm{\Diff\hf}^2.
\end{align}

With all these definitions, we can express the gauged $O(3)$
Lagrangian as,
\begin{align}\label{eq:o3-lag}
  \Lagrangian_{O(3)} = \half \brk( \norm{\Diff_t\hf}^2 +
  \norm{\eform}^2 - (\norm{\Diff\hf}^2 + \norm{\Bform}^2 + \norm{ \tau
  - \lproduct{\north, \hf}}^2)),
\end{align}
where the asymmetry parameter $\tau\in (-1, 1)$ determines the vaccuum
manifold and if $\surface$ is non-compact, we must add suitable
boundary conditions to $\hf$ and $\gp$ to guarantee convergence of the
norms. The $O(3)$ Lagrangian admits Bogomolny type static
solutions in the temporal gauge, in which $\gp_\alpha^0 = 0$. In this
gauge, the total conserved energy of a time independent pair of fields
$(\hf, \gp)$ is
\begin{align}
  \Energy = \half\brk(\norm{\Diff\hf}^2 + \norm{\Bform}^2 + \norm{ \tau
  - \lproduct{\north, \hf}}^2).
\end{align}

The temporal covariant derivative $\Diff_{\gp}$ can be decompose into
holomorphic and anti-holomorphic parts, 
\begin{align}
\Diff_{\gp} = \del_{\gp} + \conj\del_{\gp}, 
\end{align}
where in a local holomorphic coordinate chart $U_{\alpha}$ in which
$\hf$ trivialises as $\hf_{\alpha}:U_{\alpha} \to \sphere$,
\begin{align}
\del_{\gp}\hf_{\alpha} &= \half\brk(\Diff_{\gp}\hf_{\alpha}(\del_1) -
\hf_{\alpha}\times \Diff_{\gp}\hf_{\alpha}(\del_2)), &
\conj\del_{\gp}\hf_{\alpha} &= \half\brk(\Diff_{\gp}\hf_{\alpha}(\del_1) +
\hf_{\alpha}\times \Diff_{\gp}\hf_{\alpha}(\del_2)),
\end{align}
%
 %where $\Jop: T\sphere \to T\sphere$ is the almost complex structure in
%the sphere.

We will consider the sets
\begin{align}
\vset &= \hf^{-1}(\north), &
\avset &= \hf^{-1}(-\north),
\end{align}
which we call the set of vortices and antivortices. The term vortex is of 
wide use for the Abelian Higgs model, where it refers to the zeros of the 
Higgs field. Both theories, the Abelian Higgs model and the $O(3)$ Sigma model, 
 have similarities, for example the $U(1)$ symmetry of the fields, hence it is 
 natural to refer to vortices of the
$O(3)$ Sigma model, on the other hand, the term antivortex, which is also 
used in the literature, stresses the distinction with vortices, since 
 vortices and antivortices cannot coalesce.  
 We assume
that both sets are finite. In proposition~\ref{prop:energy-bog-sols} 
we define the Bogomolny equations. 
%The terminology of vortices and 
%antivortices comes from an analogy to the fact that there are two types of 
%equivalent Bogomolny equations, sometimes called the Bogomolny and 
%anti-Bogomolny equations, for example in~\cite{schroers_spectrum_1996}. 
%and denote the size of each one as
%$k_{\pm}$ respectively.

\begin{proposition}\label{prop:energy-bog-sols}
  If $(\hf, \gp)$ is a solution of the Bogomolny equations,
  \begin{align}
    \label{eq:bog1}
    \conj\del_{\gp}\hf &= 0,\\
    \label{eq:bog2}
    *\Bform &= \lproduct{\north, \hf} - \tau,
  \end{align}
  then the pair minimises the energy of the $O(3)$ Lagrangian and the
  minimum energy is,
  \begin{align}
   \Energy = 2\pi (1 - \tau)\,k_+ + 2\pi (1 + \tau)\, k_-.
  \end{align}
\end{proposition}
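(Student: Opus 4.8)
\noindent\emph{Proposed approach.} The plan is a Bogomol'nyi completion of the square, carried out in the static temporal gauge. Using the Cauchy--Riemann-type decomposition $\Diff = \del_{\gp} + \conj\del_{\gp}$, the covariant Dirichlet term $\half\norm{\Diff\hf}^2$ is rewritten as a positive multiple of $\norm{\conj\del_{\gp}\hf}^2$ plus $\half\int_{\surface}\lproduct{\hf,\Diff\hf\wedge\Diff\hf}$, the integral of the gauge-invariant topological density. Simultaneously, completing the square in the magnetic and potential terms,
\begin{equation*}
 \half\norm{\Bform}^2 + \half\norm{\tau - \lproduct{\north,\hf}}^2 = \half\norm{*\Bform - \lproduct{\north,\hf} + \tau}^2 + \int_{\surface}\lproduct{\north,\hf}\,\Bform - \tau\int_{\surface}\Bform .
\end{equation*}
After these two moves $\Energy$ is the sum of the manifestly non-negative terms $\norm{\conj\del_{\gp}\hf}^2$ and $\half\norm{*\Bform - \lproduct{\north,\hf} + \tau}^2$, together with a remainder $R = \half\int_{\surface}\lproduct{\hf,\Diff\hf\wedge\Diff\hf} + \int_{\surface}\lproduct{\north,\hf}\,\Bform - \tau\int_{\surface}\Bform$, and the problem reduces to evaluating $R$.

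To evaluate $R$ I would use the pointwise identity
\begin{equation*}
 \half\lproduct{\hf,\Diff\hf\wedge\Diff\hf} = \hf^{*}\sigma_{\sphere} + d\!\brk(\lproduct{\north,\hf}\,\gp) - \lproduct{\north,\hf}\,\Bform ,
\end{equation*}
where $\sigma_{\sphere}$ is the area form of the unit sphere. It follows from $\Diff\hf = d\hf - \gp\otimes X_{\hf}$, from $X_{\hf} = \north\times\hf$, which gives $\lproduct{\hf,d\hf\times X_{\hf}} = -\,d\lproduct{\north,\hf}$, and from $d\gp = \Bform$ in the static gauge. Substituting into $R$, the two occurrences of $\lproduct{\north,\hf}\,\Bform$ cancel, the exact form $d(\lproduct{\north,\hf}\,\gp)$ integrates to zero by Stokes' theorem --- by compactness of $\surface$, or by the imposed decay of $\hf$ and $\gp$ when $\surface$ is the plane --- and $R$ collapses to $\int_{\surface}\hf^{*}\sigma_{\sphere} - \tau\int_{\surface}\Bform$.

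The crucial and most delicate step is the topological evaluation of these two integrals. By~\eqref{eq:bog1} the field $\hf$ is covariantly holomorphic, hence near each point of $\vset$ and of $\avset$ it is locally a finite branched cover of a small disc about $\north$, respectively $\south$; summing local degrees gives $\int_{\surface}\hf^{*}\sigma_{\sphere} = 2\pi(k_+ + k_-)$, while the magnetic flux is $2\pi$ times the degree of the line bundle associated to the $U(1)$ structure, which the same local winding data identify with $\int_{\surface}\Bform = 2\pi(k_+ - k_-)$; here $k_{\pm}$ are the numbers of points of $\vset$ and $\avset$ counted with multiplicity. Therefore $R = 2\pi(k_+ + k_-) - 2\pi\tau(k_+ - k_-) = 2\pi(1-\tau)k_+ + 2\pi(1+\tau)k_-$ --- the symmetric case $\tau = 0$ recovering the familiar value $2\pi(k_+ + k_-)$, a convenient normalisation check --- so that $\Energy \ge 2\pi(1-\tau)k_+ + 2\pi(1+\tau)k_-$, with equality precisely when $\conj\del_{\gp}\hf = 0$ and $*\Bform = \lproduct{\north,\hf} - \tau$; a solution of~\eqref{eq:bog1}--\eqref{eq:bog2} thus minimises $\Energy$ within its topological sector and attains the stated value. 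I expect the main obstacle to be exactly this last step --- fixing signs, orientations and multiplicities, treating $\hf^{*}\sigma_{\sphere}$ in its gauge-invariant (equivariant) sense so that the degree count is legitimate, and controlling the boundary contributions on a non-compact surface --- the algebra of the completion of the square being routine once the pointwise identity is in hand.
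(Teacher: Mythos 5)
Your proposal is correct and follows essentially the same route as the paper: a Bogomol'nyi completion of the square followed by the identification of the cross terms with the gauge-invariant (equivariant) pullback of the area form plus an exact form, integrating to $2\pi(k_++k_-)-2\pi\tau(k_+-k_-)$. The only cosmetic difference is that the paper evaluates this topological remainder by introducing the gauge-invariant one-form $\varpi=\varphi^{*}(d\varrho)-\gaugePotential$ and taking limits of boundary integrals over shrinking disks about the cores, which is the same local winding computation you describe as a degree count.
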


Proposition~\ref{prop:energy-bog-sols} should be attributed to several
authors who proved it for the different cases. On the plane it was proved by
Schroers~\cite{schroers_bogomolnyi_1995} for $\tau = 1$ and later for 
general $\tau$ in \cite{schroers_spectrum_1996}. On a compact
manifold for $\tau = 0$ it was proved by Sibner, Sibner and
Yang~\cite{sibner2010}. Speight and R\~omao~\cite{romao2018} give another proof 
which is suitable for both a compact surface and the euclidean plane,  which we 
adapt. 

\begin{proof}
  We distinguish two cases. Firstly, let us assume that $\surface$ is
  compact. We can choose an open and dense set $U \subset \surface$
  holomorphic to the unit disc such that it  contains $\vset \cup
  \avset$. Since $U$ is contractible, the restriction $F\mid_U$ can be
  trivialised. In this trivialisation, $\hf$ is equivalent to a
  function $\varphi: U \to \sphere$. Since the action of $U(1)$ in the
  sphere is Hamiltonian, we can consider the moment map $\mu: \sphere
  \to \reals$,
  \begin{align}
    \mu(p) = \lproduct{N, p} - \tau.
\end{align}

If $\omega$ denotes the symplectic form in the sphere, then
$d\mu = \iota_X\omega$. Let us denote by
$\Jop: T\sphere \to T\sphere$, $\Jop_x(v) = x\times v$ the almost
complex structure on the sphere. Recall the basic identity,
\begin{align}
\lproduct{\Jop v, w} = \omega(v, w), \qquad v,w \in T_x\sphere.
\end{align}
 We will use the Bogomolny trick,
  \begin{align}
0 &\leq \half\brk(\norm{\Diff_1\varphi + \Jop \Diff_2\varphi}^2 +
\norm{*\Bform - \mu\circ\varphi}^2) \nonumber \\
&= \Energy + \lproduct{\Diff_1\varphi, \Jop \Diff_2\varphi} -
\lproduct{*\Bform, \mu\circ\varphi} \nonumber\\
&= \Energy + \lproduct{\del_1\varphi, \Jop \del_2\varphi}
+ \int_U \omega(X_{\varphi}, \gp_1\del_2\varphi - \gp_2\del_1\varphi)
\,\vol -
\lproduct{*\Bform, \mu\circ\varphi} \nonumber\\
&= \Energy + \int_U \brk(-\varphi^{*}\omega + \gp\wedge
d(\mu\circ\varphi) - \Bform \wedge \mu\circ\varphi ) \nonumber\\
&= \Energy - \int_U \brk( \varphi^{*}\omega + d\brk(\mu\circ\varphi\cdot\gp)).
\end{align}

Note that $\varphi^{*}\omega + d(\mu\circ\varphi\cdot\gp)$ is gauge
invariant and can be extended to all of $\surface$. Introducing
spherical coordinates $(\vartheta, \varrho)$ in $\sphere$ with $\varrho$
the azimuthal angle, we define the one form,
\begin{align}
\varpi = \varphi^{*}(d\varrho) - \gp \in \Omega^1(U\setminus
\vset\cup\avset),
\end{align}
and note that $\varpi$ is gauge invariant and therefore also extends
to $\surface\setminus \vset\cup\avset$. If we denote by $\disk_{\epsilon}$
a collection of disjoint $\epsilon$-disks, each one centred at one point
$x\in\vset\cup\avset$, then,
\begin{align}
\int_U \brk( \varphi^{*}\omega + d\brk(\mu\circ\varphi\cdot\gp)) &=
-\int_{\surface\setminus 
\vset\cup\avset}d(\lproduct{\north,\hf}\varpi)  - \tau
\int_{\surface}\Bform
\label{eq:int-pb-symp-form-sphere}\\
&= \lim_{\epsilon\to 0} \int_{\del \disk_{\epsilon}}
\lproduct{\north,\hf}\varpi - \tau\int_{\surface}\Bform
\nonumber\\
&= 2\pi (k_+ + k_-) - 2\pi \tau (k_+ - k_-)
\nonumber\\
&= 2\pi (1 - \tau)\,k_+ + 2\pi (1 + \tau)\,k_-.
\nonumber
\end{align}

Hence,
\begin{align}
\Energy \geq 2\pi (1 - \tau)\,k_+ + 2\pi (1 + \tau)\,k_-,
\end{align}
and the energy is minimised if $(\hf, \gp)$ is a solution to the 
Bogomolny equations. If $\surface$ instead is the Euclidean plane,
we have to assume that $\Diff\hf$, $\Bform$ and $\mu\circ\hf$ are
$\Lsp^2$ sections of their respective bundles. In this case
we can take $U = \plane$, and most of the proof follows verbatim the
previous steps, except that to compute the integral
\eqref{eq:int-pb-symp-form-sphere} we must suppose that the fields
satisfy the boundary condition,
\begin{align}
\lim_{\abs{x}\to\infty}(\lproduct{\north,\hf} - \tau) &= 0.
\end{align}

\end{proof}

We started assuming the sets $\vset$ and $\avset$ where finite and found
that a pair $(\hf, \gp)$ of solutions to the Bogomolny equations
minimises the static energy. In the compact case, the assumption about the
size of the sets is redundant, the proof for $\tau = 0$ found
in~\cite{sibner2010} can be adapted to the asymmetric case.

\begin{proposition}\label{prop:degree}
  If $(\hf, \gp)$ is a solution to the Bogomolny equations, then $\vset$
  and $\avset$ are discrete. In particular, if $\surface$ is compact,
  these are finite sets. Moreover, if $x \in \vset\cup\avset$, then
  $\hf(x)$ is of finite degree, in the sense that there is a unique positive
  integer $d$ such that if $x \in \vset\cup\avset$ and $\varphi: U \to
  \cpx$, $\pi:V\subset \sphere \to \cpx$, are holomorphic coordinates
  about $x$ and $\hf(x)$ with $\varphi(x) = \pi(\hf(x)) = 0$,  then
  there is a 
  smooth function $R: \varphi(U) \to \pi(V)$ such that,
  \begin{align}
    \pi\circ\hf\circ \varphi^{-1}(z) = z^dR(z), \qquad \forall z \in
    \varphi(U), 
\end{align}
but $R(0) \neq 0$.
\end{proposition}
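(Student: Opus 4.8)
The plan is to exploit the first Bogomolny equation, which says $\conj\del_{\gp}\hf = 0$, i.e.\ locally $\hf_\alpha$ is a solution of a $\conj\del$-equation with a smooth coefficient coming from the connection. The key point is that near a core this forces $\hf_\alpha$ to behave like a holomorphic map with an isolated zero of finite order. More precisely, working in a holomorphic chart $U$ about $x \in \vset\cup\avset$ and a holomorphic chart $V$ about $\hf(x)$ on $\sphere$, compose to get $f = \pi\circ\hf\circ\varphi^{-1}: \varphi(U) \to \cpx$ with $f(0) = 0$. The equation $\conj\del_{\gp}\hf = 0$ becomes, in these coordinates, $\conj\del f = a(z)f$ for some smooth (indeed bounded) function $a$, because the connection term $\gp_\alpha\otimes X_{\hf_\alpha}$ depends linearly on the section and the almost complex structure intertwines $X$ with multiplication; one should check this reduction carefully using the decomposition of $\Diff_{\gp}$ into $\del_{\gp} + \conj\del_{\gp}$ given in the excerpt.

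Granting the local equation $\conj\del f = a f$, I would invoke the standard similarity principle (a Carleman/Bers--Vekua type result): there is a continuous function $e^{s(z)}$, nowhere zero, with $f(z) = e^{s(z)}g(z)$ where $g$ is genuinely holomorphic on $\varphi(U)$. Since $f \not\equiv 0$ (because $\hf$ is not identically $\north$ or $\south$ — the sets $\vset,\avset$ are disjoint and, by proposition~\ref{prop:degree}'s own preceding material, discrete once we prove this), $g$ has an isolated zero at $0$ of some finite order $d \geq 1$, so $g(z) = z^d h(z)$ with $h$ holomorphic and $h(0)\neq 0$. Then $R(z) := e^{s(z)}h(z)$ is continuous with $R(0)\neq 0$, and in fact smooth: once we know $f$ vanishes to order exactly $d$, elliptic regularity for $\conj\del f = af$ (bootstrapping, since $a$ is smooth) gives $f \in C^\infty$, and dividing the smooth function $f$ by the smooth nonvanishing-to-order-$d$ factor $z^d$ away from $0$ while controlling Taylor coefficients at $0$ shows $R$ extends smoothly. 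This simultaneously yields that the zeros are isolated, hence discrete, hence finite when $\surface$ is compact, and that the local degree $d$ is well defined (independent of the choice of holomorphic coordinates, since a holomorphic change of coordinates multiplies $z^d$ by a nonvanishing holomorphic function, absorbed into $R$).

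The main obstacle I anticipate is the reduction of the geometric equation $\conj\del_{\gp}\hf = 0$ on a $\projSp^1$-bundle to the scalar Cauchy--Riemann-with-potential equation $\conj\del f = af$ in coordinates, handled uniformly across the two types of cores. One must be careful that at a core the target point $\hf(x)$ is a fixed point of $\rho$, so the Killing field $X$ vanishes there and the chart $\pi$ can be chosen so that $X$ corresponds to the standard rotation vector field vanishing at $0$; then the connection term $\gp_\alpha \otimes X_{\hf_\alpha}$, expressed through $\pi$, becomes a smooth multiple of $f$ itself, which is exactly what makes the similarity principle applicable. A secondary, more routine point is the passage from ``$R$ continuous and nonzero'' to ``$R$ smooth'': this is pure elliptic bootstrapping for $\conj\del$ with smooth coefficients, so I would state it and cite the relevant regularity theorem from section~\ref{c:intro-gov-elliptic} rather than reprove it. The uniqueness of the integer $d$ over all of $\vset\cup\avset$ (a single $d$ working for every core) as literally stated would, I suspect, actually be an assertion that for each core such an integer exists and is unique for that core; I would phrase the proof to establish exactly that and note the multiplicities may a priori differ between cores.
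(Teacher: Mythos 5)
Your proposal is correct and follows essentially the same route as the paper: reduce the first Bogomolny equation in a stereographic/holomorphic chart to the scalar equation $\bar\partial \psi = a\psi$ with smooth $a$, then factor $\psi = e^{-w}g$ with $g$ holomorphic (your ``similarity principle'' is exactly the paper's application of the $\bar\partial$-Poincar\'e lemma to produce $w$ with $\bar\partial w = a$), from which discreteness of the zero set and the finite local degree follow. Your closing remark is also the right reading of the statement: the integer $d$ is unique per core, not uniform over all cores.
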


\begin{proof}
 Suppose $x \in \vset$ and $\hf_{\alpha}: U_{\alpha} \to \sphere$
 is a local trivialisation in an holomorphic chart $\varphi_{\alpha}:
 U_{\alpha} \to \cpx$ with $\varphi_{\alpha}(x) = 0$. Let $\pi_-:
 \sphere\setminus\set{-N} \to \cpx$ be south pole stereographic 
 projection and let
 \begin{align}
  \psi_{\alpha} = \pi_-\circ \hf_{\alpha}\circ\varphi_{\alpha}^{-1}:
  \varphi_{\alpha}(U_{\alpha}\setminus\avset) \to \cpx.
\end{align}
Since $\pi_-$ is a 
holomorphic local diffeomorphism, the first Bogomolny equation is
equivalent in these charts to,
\begin{align}\label{eq:bog1-equiv}
\conj\del \psi_{\alpha} = \half(-\gp_2 + \gp_1 i)\psi_{\alpha}.
\end{align}

If $\gp$ is smooth, by the $\conj\del$-Poincare lemma, there exists a
smooth function $w: \varphi_{\alpha}(U_{\alpha}\setminus\avset)\to
\cpx$ such that $\conj\del w = \half(-\gp_2 + \gp_1 i)$, hence the function
$e^w\psi_{\alpha}$ is holomorphic, $\conj\del (e^w\psi_{\alpha}) = 0$ 
and the zero set of $\psi_{\alpha}$ is discrete unless $\psi_{\alpha}
\equiv 0$ which is impossible because it violates the Bogomolny 
equations. This proves that $\vset$ is a discrete set. Since
$e^w\psi_{\alpha}$ is holomorphic, the assertion about the degree
follows in these charts and since the degree is an holomorphic
invariant, this proves the claim for any other holomorphic
chart. Using the north 
pole stereographic projection proves similar claims for $\avset$.
\end{proof}

We say that $x \in \vset$ is the position of a single vortex if the
degree is 1 and 
similarly for $x \in \avset$, if the degree is 1 we say that $x$ is
the position of a single antivortex. We will denote the size of the
sets  $\vset$, $\avset$ as $k_{\pm}$ respectively, where we count each
vortex and antivortex with multiplicity.

For any solution $(\hf,  \gp)$ to the 
Bogomolny equations, we define the function $\fnu: \surface\setminus
\vset\cup\avset \to \reals$, 
\begin{align}
  \fnu = \log\brk(\frac{1 - \lproduct{\north, \hf}}{1 +
  \lproduct{\north, \hf}}).
\end{align}

If we define the map $\psi_{\alpha}: \pi_-\circ\hf_{\alpha}:
U_{\alpha} \to \cpx$ as in the proof of proposition~\ref{prop:degree},
where $\hf_{\alpha}: U_{\alpha}\to \sphere$ represents $\hf$ in a local
trivialisation $U_{\alpha} \subset \surface \setminus \vset\cup\avset$, and
$\pi_-:\sphere\setminus \set{-N} \to \cpx$ is south pole's stereographic
projection, then $\exp(\fnu) = \abs{\psi_{\alpha}}^2$  and 
$\log \psi_{\alpha} = \frac{\fnu}{2} + \chi_{\alpha} i$, where the
argument function $\chi_{\alpha}:U_{\alpha} \to \reals$ is gauge
dependent. By equation~\eqref{eq:bog1-equiv},
\begin{align}
- \frac{1}{4}\laplacian (\log \psi_{\alpha}) = \half\del (-\gp_2 +
\gp_1 i) = \frac{1}{4} \brk( -(\del_1\gp_2 - \del_2\gp_1) + (\del_1\gp_1 +
\del_2\gp_2) i ),
\end{align}
where $\laplacian$ is geometer's laplacian, which in
the holomorphic coordinates we are considering is of the form 
$\laplacian = -e^{-\Lambda}(\del_1^2 + \del_2^2)$, where
$e^{\Lambda}$ is the conformal factor of the metric. Taking the real
part of the previous equation, we find,
\begin{align}
-\laplacian h = -2 *\Bform = 2\brk( \frac{e^h - 1}{e^h + 1}
+ \tau).
\end{align}

If $x \in \vset \cup \avset$ has degree $d_x$, we can extend the
definition of $h$ to the core set $\vset\cup\avset$ by requiring it to
be a solution to~\cite{schroers_spectrum_1996}, 
\begin{align}
\label{eq:taubes}
-\laplacian \fnu = 2 \brk(\frac{e^{\fnu} - 1}{e^{\fnu} + 1}
+ \tau) + 4\pi\sum_{x \in \vset} d_x\delta_x -  4\pi\sum_{x\in\avset}
d_x\delta_x,
 \end{align}
 where $\delta_{x}$ is Dirac's measure concentrated at $x$. 
  For any
 test function~$\varphi \in C^{\infty}_0(\surface)$,
 \begin{align}
   \int_{\surface}\varphi\,\delta_x = \varphi(x). 
 \end{align}
 Notice $\delta_{x}$ includes the  measure on $\surface$.  We will
 call equation~\eqref{eq:taubes}  
 \emph{the Taubes equation}, as is analogous to the equation studied by
 Taubes for the Ginzburg-Landau functional \cite{taubes1980}. The Taubes 
 equation as given by~\eqref{eq:taubes} was also obtained by Schroers 
 in~\cite{schroers_spectrum_1996}.

\section{Localization }
\label{sec:loc-formulas}

The idea of a localization formula originates in the work of
Strachan~\cite{strachan_lowvelocity_1992}.  It was later generalised by
Samols~\cite{samols1992} and is based on ideas about geodesic
approximation originating in~\cite{manton_remark_1982}. 
From his
work, Strachan and Samols developed approximations to the 
dynamics of the Abelian Higgs model in the moduli space of static 
solutions of the field equations, later, Stuart proved 
in~\cite{stuart_dynamics_1994} that the 
moduli space approximation is correct. 
The results of Stuart also extended to 
other field theories, for example in~\cite{demoulini_adiabatic_2009} 
Demoulini-Stuart proved a moduli space approximation to the dynamics of 
the Chern-Simons-Schr\"odinger model 
proposed by Manton in~\cite{manton_first_1997}. On the other hand, 
for some field theories it is possible to find an explicit formula for a 
metric on the moduli space governing the dynamics, such that it 
only depends of local data, i.e., the position of the cores of the field 
$\hf$. Over the time, the localization formula has been refined and
extended to other field 
theories, e.g. Chern-Simons vortices 
\cite{kim_vortex_1994,collie_dynamics_2008} 
or Ginzburg-Landau vortices with electric
and magnetic impurities \cite{tong_vortices_2014}.
We can describe in an unified way the idea behind localization if we
introduce the $\Lsp^2$ metric in the space of fields modulo gauge
transformations. By this we mean the space of sections of a given
$U(1)$ fibre bundle as described on section~\ref{sec:field-th}.
There are several situations in which this space is
finite dimensional, for example for BPS solitons of the Ginzburg-Landau
functional. In this case, there are rigorous proofs 
of this fact \cite{taubes1980,yang_strings_1999}. We make no
assumption on finite dimensionality though, since the theory can be
written in full generality. We restrict the previous field theoretic
setup to the static case and 
think of $\fieldConfSpace \to \confSp$ as an infinite dimensional
principal $\gaugeTransfSpace$-bundle \cite{nagy2017}. A curve
$(\hf_s, A_s) :I \to \fieldSp$ is said to be differentiable, if 
for any $x \in \surface$, the curves $s\mapsto \hf_s(x)$, $s\mapsto A_s(x)$, 
where, 
\begin{align}
%\label{eq:curve-field-space}
\hf_s(x) : I \to F, &&
A_s(x) : I \to T_x\surface,
\end{align}
are differentiable. For a differentiable curve in field space, the variation is
the pair $(\delta\hf,\delta A)$, 
\begin{align}
%\label{eq:deltahf-deltagp}
\delta\hf: \surface \to \hf^{*}TF, && \delta A \in \Omega^1(\surface),
\end{align}
of pointwise derivatives:
\begin{align}
%\label{eq:deltahf-deltagp-def}
\delta\hf(x) = \eval{\dv{s}}{s = 0} \hf_s(x), &&
\delta A(x) = \eval{\dv{s}}{s=0} A_s(x).
\end{align}

We will think of the space of variations as the tangent space of
$\fieldSp$ and denote it as $\bigTangent$. If $\alpha \in
\gaugeGp$ is a gauge transformation, the fields transform as, 
\begin{align}
\label{eq:gauge-transform-fields}
e^{i\alpha}\cdot\hf, &&  \gp + d\alpha,
\end{align}
where the product $e^{i\alpha}\cdot\hf$ is to be understood as the
action of $e^{i\alpha}$ in $\hf$ via the representation
$\rho$. Equation \eqref{eq:gauge-transform-fields} defines an action 
$\alpha * (\hf, \gp)$, of the gauge group in the space of
fields. This action extends naturally to tangent space. By an abuse in
notation, let us denote by $X$  
the vector field induced in target space by this action, then
$\gaugeGp$ acts in $\bigTangent$ as,
\begin{align}
\label{eq:u1-action-bigtangent}
\alpha * (\delta\hf, \delta\gp) = (\delta\hf + \alpha X_{\hf},
\delta\gp + d\alpha). 
\end{align}

Moreover, the vertical space,
\begin{equation}
\label{eq:vert-space}
\vertSp_{(\hf, \gp)} = \set{(\alpha X_{\hf}, d\alpha) \st \alpha \in
  \gaugeGp},
\end{equation}
determines a sub-bundle of $\bigTangent$ whose fibre is in bijection with the
Lie algebra~$\gaugeGp$ of gauge transformations~\cite{nagy2017}.

The Riemannian metrics in $\surface$ and $M$  extend to metrics in
the cotangent bundle $T^{*}\surface$ and $F$ respectively, which on
the other hand, extend to a metric in the space of fields: if $(\hf,
\gp) \in \fieldSp$ and $(\delta\hf, \delta\gp) \in \bigTangent_{(\hf,
  \gp)}$, the $\Lsp^2$-metric is the product of metrics
induced by the Riemannian structure in the domain and the target space, 
\begin{equation}
\label{eq:l2-metric-tuples}
\norm{(\delta\hf,\delta\gp)}_{\fieldSp}^2 =
\norm{\delta\hf}^2_{\Lsp^2(\surface,F)} +
\norm{\delta\gp}^2_{\Lsp^2(\surface)}. 
\end{equation}
$\confSp$ is the relevant space for applications, as two field
configurations differing by a gauge transformation are regarded as
physically the same. In analogy to a finite dimensional vector bundle,
the $\Lsp^2$-metric 
%\eqref{eq:l2-metric-tuples}
 can be used to split
$\bigTangent$ in a direct sum of the vertical space $\vertSp$ and its
orthogonal complement. If the quotient $\confSp$ has a
finite dimensional differentiable structure, this complement can be
identified with its tangent space. This is not necessarily the
case, however we can consider that the orthogonal complement
describes tangent vectors to $\confSp$, whether this space is finite
dimensional or not. Hence, the orthogonal complement describes the
dynamics of curves $[(\hf_s, \gp_s)] : I \to \confSp$ with a lift to
field space, even if the quotient lacks regularity. 

Given $(\delta\hf, \delta{\gp}) \in \bigTangent_{(\hf,\gp)}$, let
$\beta \in \gaugeGp$ be the projection onto
$\vertSp_{(\hf,\gp)}$ with respect to the $\Lsp^2$ product. If
$\alpha\in\gaugeGp$ represents another  
arbitrary vertical vector at $(\hf, \gp)$, then 
\begin{equation}
\label{eq:loc-perp-cond}
 \lproduct{(\delta\hf - \beta X_{\hf}, \delta \gp - d\beta), (\alpha
  X_{\hf}, d\alpha)} = 0.
\end{equation}

Since $\alpha$ is arbitrary, the perpendicularity condition is
equivalent to the equation, 
\begin{equation}
  \label{eq:perp-condition}
  \left(\laplacian + \abs{X_{\hf}}^2\right)\beta =
   \lproduct{X_{\hf}, \delta\hf} + d^{*}\delta\gp, 
\end{equation}
where $d^{*}: \Omega^1(\surface) \to \Omega^0(\surface)$ is the
codifferential, $d^{*} = -*d*$. What is interesting about 
%equation
%\eqref{eq:perp-condition} 
the perpendicularity condition is that it is
independent of the theory because no Lagrangian or functional for
the fields was necessary to deduce it. At the same time, we can
talk of kinetic energy in configuration space, at least
for curves $[\hf_s,\gp_s]$ admitting a lift to $\fieldSp$. 
For such a curve, we could define its instant energy as,
\begin{equation}
  \label{eq:energy-conf-class}
  \Energy[\delta\hf, \delta{\gp}] =
  \frac{1}{2}\norm{(\delta\hf^{\perp}, \delta{\gp}^{\perp})}_{\fieldSp}^2.
\end{equation}

We think of the kinetic energy of a dynamic pair $(\hf, \gp)$ of
solutions to the field equations slowly varying in time, as
approximated by the energy of the 
variation of a static pair of solutions. 
In this way, we reduce the full theory in spacetime to variations of
the static solutions   to the Bogomolny equations. With this point of
view, the components  of the gauge potential are curves defined in some
interval $I \subset \reals$,
\begin{align}
\label{eq:loc-stgp-split}
\gp_0: I \subset \reals \to
C^{\infty}(\surface), &&
\gp: I \subset \reals \to \Omega^1(\surface).
\end{align}

Likewise for the electric and magnetic fields. Let us
denote by $\fieldSp'$ the subset of $\fieldSp$ of solutions of the 
Bogomolny equations, and by 
$\moduliSp$ the quotient space $\fieldSp'/\gaugeGp$. There is 
 a bundle inclusion,
\begin{equation}
  \label{eq:bundle-comm-diag}
\begin{tikzcd}
  \fieldSp' \arrow[d] \arrow[r, hook] & \fieldSp \arrow[d] \\
  \moduliSp             \arrow[r,hook]  & \confSp
\end{tikzcd}  
\end{equation}
and since the Bogomolny equations are gauge invariant, both bundles share
the same vertical space. Therefore, the orthogonal projection onto
$\vertSp$ is the same.
%Hence $\beta$ is the projection of a variation
%$(\delta\hf, 
%\delta\gp)$ onto vertical space if and only if satisfies the
%perpendicularity condition. 
%is a
%solution to equation
%\eqref{eq:perp-condition}.

\begin{example}(\textit{Localization of Ginzburg-Landau vortices}).
  A an example we consider the Ginzburg-Landau
  functional. In this case the target space is $\cpx$ and we can think of
  sections $\hf$ as complex valued functions $\reals \times \surface
  \to \cpx$. As described above, static fields are the same as pairs
  $(\hf, \gp)$ of a function $\hf: \surface \to \cpx$ and a
  connection $\gp$ on a principal bundle $U(1)\to P \to \surface$. As
  it turns out~\cite{manton_topological_2004}, 
  static configurations in the radiation gauge minimise the energy
  \begin{align}
    \Energy = \half\brk(\norm{\Diff\hf}^2 + \norm{\Bform}^2 +
    \frac{1}{4}\,\norm{ 1 - \abs{\hf}^2}^2)
  \end{align}
  and satisfy the following Bogomolny equations,
  \begin{align}
    \conj\del_{\gp}\hf &= 0,\\
    *\Bform &= \half(1 - \abs \hf^2).
  \end{align}

  The action of $U(1)$ on the target manifold gives rise to the vector
  field $X_{\hf} = i\hf$. As is well known from the work of Taubes,
  solutions to the field equations modulo gauge equivalence are
  determined by the zeros of $\hf$, $(p_1,
  \ldots,p_n)$. If we let the zeros vary with respect to a parameter,
  $p_k(s)$, $k = 1, \ldots,n$, 
  identified as the \emph{time} parameter, then the
  perpendicularity condition is equivalent to,
\begin{equation}
%\label{eq:gl-perp-cond}
(\laplacian + \abs{\hf}^2)\beta = -\frac{i}{2} \left( \hf
  \dot\hf^{\dagger} - \dot\hf \hf^{\dagger} \right) + d^{*}\dot\gp,
\end{equation}
and the projection of the variation on the horizontal subspace
of $\bigTangent$ is,
\begin{align}
%\label{eq:orth-complement}
\dot\hf^{\perp} &= \dot{\hf} - i\hf\beta, &
\dot\gp^{\perp} &= \dot\gp - d\beta.
\end{align}

Since the variation is determined by variations of the zeros, if each
$p_k$ is in the same open and dense holomorphic neighbourhood $U$,
\begin{align}
%\label{eq:gl-var-core}
\dot\hf &= \dot p_k\, \pdv{p_k}[\hf], &
\dot\gp &= \dot p_k\, \pdv{p_k}[\gp] .
\end{align}

In the sequel we make the convention that
repeated indices represent sums. If $\beta_k$ is the projection onto
vertical space 
% to equation \eqref{eq:gl-perp-cond}
corresponding to the variation $(\del_{p_k}\hf, \del_{p_k}\gp)$, then
$\beta = \dot 
p_k\,\beta_k$. If we denote the pair
$(\hf, \gp)$ by $\Phi$,
%and take variations and orthogonal complement
%component wise,
the instant energy of a trajectory in the moduli space is
therefore, 
\begin{align}
\nonumber
\Energy[\dot\Phi] &= \half \norm{\dot\Phi^{\perp}}^2_{\fieldSp}  \\
  \nonumber
  &= \half \dot p_k\,\dot p_r \lproduct*{
  (\del_{p_k}\Phi)^{\perp}, (\del_{p_r}\Phi)^{\perp}}_{\fieldSp}\\
  &= \half\, \dot p_k \dot p_r\, g_{p_kp_r}.
\label{eq:gl-kin-energy-moduli}
\end{align}

The coefficients $g_{p_kp_r}$ determine a metric in the moduli
space. Manton proposed an interpretation of this metric
in~\cite{manton_remark_1982}. In our language, the static energy in
the 
sub-bundle $\fieldSp'$ must be preserved by solutions of the 
Bogomonly equations, because they are energy minimisers. Thence,
$\Energy[\dot\Phi]$ approximates the energy of slow moving solutions
of the full field equations. Equation
\eqref{eq:gl-kin-energy-moduli} opens the possibility to study the
dynamics of the full field equations as geodesic motion in a finite
dimensional manifold. It was Samols who proved that
this metric depends only in the first derivatives of $\hf$ at the
zeros \cite{samols1992} of the Higgs field $\hf$, obtaining the formula
bearing his name on $\plane$,
\begin{align}\label{eq:loc-formula-samols}
  ds^2 = \pi\,\sum_{rs}\left(
    \delta_{rs} + 2\,\del_rb_s
  \right)\,dp_r\,\conj{dp_s},
\end{align}
where the coefficients depend on the position of the zeros of $\hf$, 
in fact, if $h = \log |\hf|^2$, then,
\begin{align}
b_s = 2\partial_{z}\vert_{z=p_s}(h - \log |z - p_s|^2),
\end{align} 
which explains why \eqref{eq:loc-formula-samols} is called a localization 
formula, in the sense that the data needed to compute the metric is only 
local to  the position of the zeros of $\hf$.
\end{example}

\subsection{Localization of BPS solitons of the gauged O(3) Sigma
  model}\label{subsec:loc-bps-solitons}
 
Having discussed localization of Ginzburg-Landau vortices as example,
we turn attention to the gauged $O(3)$ Sigma model and apply the same
technique in detail. Let $\varphi_{\alpha}: U_{\alpha}
\to \cpx$ be a holomorphic chart admitting a trivialisation on
$U_{\alpha}$ such that $\hf$ is equivalent to a  function $\hf_{\alpha}:
U_{\alpha} \to \sphere$. As before, let us define the stereographic
projection of $\hf_{\alpha}$ as $\psi_{\alpha} = \pi_{-} \circ
\hf_{\alpha} \circ \varphi_{\alpha}^{-1}$. In this chart,
$\psi_{\alpha} = \exp({h/2 + i\chi_{\alpha}})$ where the function $h$ can be
extended to a
well defined gauge invariant function on
$\surface\setminus\vset\cup\avset$; however, 
$\chi_{\alpha}$ is only defined on
$U_{\alpha}\setminus\vset\cup\avset$ modulo $2\pi$. If $U_{\beta}$ is
another holomorphic chart, we can also define a related function
$\chi_{\beta}$ in $U_{\beta}$, if the domains overlap, then for all
$x$ in the intersection $U_{\alpha\beta}$,
\begin{align}\label{eq:chi-beta-alpha}
  \chi_{\beta} = \chi_{\alpha} + \theta_{\alpha\beta} + 2\pi n, \qquad
  n \in \mathbb{Z},
\end{align}
where $\theta_{\alpha\beta}: U_{\alpha\beta} \to \reals$ are 
transition functions. Therefore $d\chi_{\beta} = d\chi_{\alpha} +
d\theta_{\alpha\beta}$ and the arguments of the family $\psi_{\alpha}$
define a connection on 
$\surface\setminus \vset\cup\avset$ which we call $d\chi$.
% On the
%other hand, assume the cores are moving in the sense that for any
%$c\in \vset\cup\avset$ there is a curve $\gamma_c: I\subset\reals\to
%\surface$ such that at any time the curves do not intersect. 
%In this
%scenario, the curves define a family of time varying functions
%$\chi_{\alpha}$ and because $\theta_{\alpha\beta}$ is time independent,
% by
%equation~\eqref{eq:chi-beta-alpha} these families define a  function 
%%
%\begin{align}
% \dot\chi: \surface_{I}\to \reals, 
%\end{align}
%%
%where
%%
%\begin{align}
%  \surface_{I} = (I \times
%\surface)\setminus \set{(t, \gamma_c(t)) \mid t \in I, \; c\in
%  \vset\cup\avset }.
%\end{align}
%
%These curves determine a curve
Let $(\hf, \gp): I \subset \reals \to
\confSp$ be a curve on the space of solutions to the Bogomolny
equations, for each $t \in I$, we denote the core positions of $(\hf(t, \cdot), 
\gp(t, \cdot))$ 
 as $p_j(t) \in \vset \cup \avset$, $j = 1, \ldots, k_+ + k_-$. 
%: for any $t \in I$, $(\hf(\cdot, t), \gp(\cdot, t))$ is 
%the solution with data $\gamma_p(t) \in \vset$ and $\gamma_q(t) \in \avset$. 
 We assume the cores are not intersecting and each curve $p_j(t)$ is 
 differentiable. Given $t \in I$, we choose a gauge such that 
 $(\dot \hf, \dot \gp)$ is perpendicular to the gauge orbit,   
%Then $(\hf, \gp)$ satisfies Gauss' law, the
%variation of the time dependent Lagrangian with respect to $\gp_0$:
%
%\begin{align}
%  \lproduct{\Diff_t\hf, X_{\hf}} + d^{*}\eform = 0.
%\end{align}
%Gauss' law is equivalent to the condition that $(\dot\hf, \dot\gp)$ is
%perpendicular in the $\Lsp^2$ sense to the gauge orbit through $(\hf,
%\gp)$: 
by~\eqref{eq:perp-condition} choosing this gauge is equivalent to 
\begin{align}\label{eq:fake-gauss-law}
\lproduct{\dot\hf, X_{\hf}} + d^{*}\dot\gp = 0.
\end{align}
%for curves
%in the space of Bogomolny's solutions.
By~\eqref{eq:chi-beta-alpha}, $(\hf, A)$ defines a  function 
\begin{align}
\dot\chi: \surface_{I}\to \reals, 
\end{align}
where
\begin{align}
\surface_{I} = (I \times
\surface)\setminus \set{(t, p_j(t)) \mid t \in I, \; j = 1, \ldots 
k_+ + k_- }.
\end{align}

Let $\eta =
\frac{\dot \fnu}{2} + \dot \chi i: \surface_I\to\cpx$, then in any
holomorphic trivialisation, $\dot \psi_{\alpha} = \eta\, \psi_{\alpha}$, 
%Note that $\eta$ is well defined as a function on $\surface_I$,
moreover, by 
%Gauss' law
\eqref{eq:fake-gauss-law} and the Taubes equation, on each 
time slice
%
%\begin{align}
% \surface_t = \surface\setminus\set{\gamma_c(t) \mid c \in \vset \cup
%  \avset}, 
%\end{align}
\begin{align}
 \surface_t = \surface\setminus\set{p_j(t) \mid j = 1, \ldots, k_+ + k_-}, 
\end{align}
$\eta$ is a solution to 
\begin{align}\label{eq:eta1-simple}
  -\laplacian \eta = \frac{4e^\fnu}{(1 + e^\fnu)^2}\,\eta.
\end{align}

Now we will extend~\eqref{eq:eta1-simple} to an equation 
valid on all of $I\times\surface$, not just $\surface_I$.
% If $U_{\alpha}$ is an holomorphic trivialisation containing only one
% core $p \in\vset\cup\avset$, by an abuse of notation, we assume
% $x,p\in \cpx$. After stereographic projection, 
% $\psi_{\alpha} = (x - p)^{\pm 1}R_{\alpha}(x)$, $x \in
% U_{\alpha}\setminus\set{p}$, where the remainder also satisfies
% $R_{\alpha}(p)\neq 0$. In this local chart we also have,
%
%\begin{align*}
%  \fnu(x) = \pm \log\,\abs{x - p}^2 + h_{\alpha}(x), \qquad x \in
%  U_{\alpha}\setminus \set{p},
%\end{align*}
%
Let us assume $U_{\alpha}$ is dense and $p_j(t) \in U_\alpha$ for 
all $t \in I$ and $j \in\set{1, \ldots, k_+ + k_-}$. For any given 
$t \in I$, 
%we count the cores with multiplicity and denote them as 
%$p_j$, $j = 1, \ldots, k_+ + k_-$. 
let $z_j(t) = \varphi_{\alpha}(p_j(t)) \in \cpx$ and to 
simplify notation, let us write $z_j(t)$ as $z_j$ since time will play no role 
in the following. We define the signature $\sgn_j \in \set{\pm 1}$ as,
\begin{align}
\sgn_j = \begin{cases}
1, & p_j \in \vset,\\
-1, & p_j \in \avset.
\end{cases}
\end{align}

By proposition~\ref{prop:degree}, there is a smooth function 
$R_\alpha:\cpx \to \cpx$ such that, 
\begin{align}
 \psi_{\alpha}(z) = \prod_{j=1}^{k_+ + k_-}(z - 
 z_j)^{\sgn_j}R_{\alpha}(z), 
 \qquad z \in
 \cpx \setminus\varphi_\alpha{(\vset \cup \avset)}, 
\end{align}
where the remainder also satisfies
$R_{\alpha}(z_j)\neq 0$. Whence,
\begin{align}
  \fnu(\varphi^{-1}_\alpha(z)) = \sum_{j=1}^{k_+ + k_-}\sgn_j \log\,\abs{z - 
  z_j}^2 + h_{\alpha}(z),
\end{align}
where $h_{\alpha}: \cpx\setminus\varphi_\alpha{(\vset \cup \avset)} \to \reals$ 
is smooth. Since
the chart is holomorphic, the metric can be written as
$e^{\Lambda(z)}|dz|^2$ and the Laplacian as
$\laplacian = -4 e^{-\Lambda}\del_{z}\bar\del_{z}$. Therefore, as
distributions, 
\begin{align}\label{eq:laplacian-log-z-zj-exp-delta-zj}
  \laplacian \log\,\abs{z - z_j}^2 = -4\pi\,e^{-\Lambda}\delta_{z_j}.
\end{align}
%
%where $\delta_{p_j}$ is a Dirac distribution weighted with respect to the 
 Recall the volume form of the surface in holomorphic coordinates is $\vol =
i/2\,e^{\Lambda}\,dz\wedge d\conj z$, 
equation~\eqref{eq:laplacian-log-z-zj-exp-delta-zj} means that for any test 
function $\varphi:
\cpx \to \reals$,
\begin{align}
  \int_{\cpx} \log\, \abs{z - z_j}^2\, \laplacian\varphi\, \vol
  = -4\pi\,\varphi(z_j).
\end{align}
%
%Assume $z_j$ vary with respect to
%the time parameter as a function $z_j(t)$ and 
 Let $z_j = z_j^1 + z_j^2\,i$, 
we denote by $D_\epsilon(z_j)$ the holomorphic disk $\abs{z - z_j} < \epsilon$ 
and by $(r_j,
\theta_j)$ polar coordinates centred at $z_j$. Now, we let $t$ vary 
and compute the following time derivative, 
\begin{align}
  \del_t\int_{\cpx} \log\,\abs{z - z_j(t)}^2\,
  \laplacian\varphi\, \vol
  &= \int_{\cpx} -2\,\brk(\frac{\dot z^1_j \cos(\theta_j) + \dot z^2_j
    \sin(\theta_j)}
  {r_j}) \laplacian\varphi \,\vol \nonumber\\
  &= \lim_{\epsilon \to 0}\int_{\cpx\setminus
    D_{\epsilon}(z_j)} 
    -2\,\brk(\frac{\dot z^1_j \cos(\theta_j) + \dot z^2_j 
    \sin(\theta_j)}
  {r_j}) \laplacian\varphi \,\vol \nonumber\\
  &= - 4\pi\,(\dot z^1_j\, \del_1\varphi(z_j) 
  + \dot z^2_j\, \del_2\varphi(z_j)) \nonumber\\
  &= -8\pi \Re (\dot z_j\,\del_{z}\varphi(z_j)).
\end{align}
where we applied the divergence theorem 
%and used
%the fact that  
% $\laplacian_\surface(\cos(\theta_p)/r_p) = 0$.
to compute the limit, hence, in the sense of distributions,
%
%\begin{align*}%\label{eq:lap-dt-log}
%\laplacian (\del_t\log\,\abs{x - p}^2) = 4\pi \brk(\dot p_1\,
%  \del_1\delta_p + \dot p_2\, \del_2\delta_p).  
%\end{align*}
\begin{align}%\label{eq:lap-dt-log}
\laplacian (\del_t\log\,\abs{z - z_j}^2) = 8\pi \Re\brk(\dot 
z_j\,\del_z\delta_{p_j})
= -8\pi \Re\brk(\dot 
z_j\,\del_{z_j}\delta_{p_j}).  
\end{align}

%Since the degree is an holomorphic invariant,
%this equation determines a distribution in
%$\surface$ which we denote by $4\pi \del_t\delta_p$. We conclude that $\dot 
%\fnu$ is a solution to the equation,
For a given trajectory of the cores, the right side of this equation defines 
a distribution on $\surface$, on the other hand, on the left side is the 
time 
derivative of the singular part of $h\circ\varphi^{-1}$, from this observation 
we state formally (i.e. without considering details about convergence in 
function spaces) that $\dot h$ must be a solution to the equation,
%
%\begin{align*}
%-\laplacian \dot \fnu = \frac{4e^\fnu \dot \fnu}{(1 + e^\fnu)^2} -
%4\pi \sum_{p\in \varphi_\alpha(\vset)}\del_t \delta_p + 4\pi 
%\sum_{q \in \varphi_\alpha(\avset)}
%\del_t \delta_q,
%\end{align*}
\begin{align}
-\laplacian \dot \fnu = \frac{4e^\fnu \dot \fnu}{(1 + e^\fnu)^2} 
+ 8\pi \sum_{j}\sgn_j \Re (\dot z_j\,\del_{z_j}\delta_{p_j})
\end{align}

Similarly, for any $z_j \in \varphi_\alpha(\vset \cup \avset)$ there is a small 
neighbourhood $D\subset \cpx$ such that,
\begin{align}
\dot\chi = \Im\left(\frac{\dot\psi_{\alpha}}{\psi_{\alpha}}\right)
= -\sgn_j
\brk (
\frac{-\dot z^1_j\, \sin(\theta_j) + \dot z^2_j\,\cos(\theta_j)}{r_j}
) + \tilde \chi_{\alpha},
\end{align}
for some smooth function $\tilde\chi_{\alpha}: D \to
\reals$. For the singular part of this equation we have,
\begin{align}
  \laplacian \brk(
\frac{-\dot z_j^1\, \sin(\theta_j) + \dot z_j^2\,\cos(\theta_j)}{r_j}
) &= - 2\pi \brk(
-\dot z_j^1 \del_2\delta_{p_j} + \dot z_j^2 \del_1\delta_{p_j}
) \nonumber\\
&= -4\pi\,\Im (\dot z_j\del_z\delta_{p_j}) \nonumber\\
&=  4\pi\,\Im (\dot z_j\del_{z_j}\delta_{p_j}).
\end{align}

%As for $\dot h$, the distribution $-\dot p_1 \del_2\delta_p + \dot p_2 
%\del_1\delta_p$ is independent of the holomorphic chart chosen, 
Hence, $\dot{\chi}$ is a solution to the equation,
 %
% \begin{align*}
% -\laplacian \dot{\chi} = \frac{4e^h}{(1 + e^h)^2}\,\dot{\chi} - 
% 2\pi \sum_{p \in \varphi_\alpha(\vset)} (-\dot{p}_1\,\del_2\delta_p 
% + \dot{p}_2\del_1\delta_p)
% + 2\pi\sum_{q\in \varphi_\alpha(\avset)} (-\dot{q}_1\,\del_2\delta_q
% + \dot{q}_2\del_1\delta_q).
% \end{align*}
 \begin{align}
-\laplacian \dot{\chi} = \frac{4e^h}{(1 + e^h)^2}\,\dot{\chi} 
+ 4\pi \sum_{j}\sgn_j \Im (\dot z_j\del_{z_j}\delta_{p_j})
.
\end{align}
 
We conclude that $\eta = \dot h / 2 + \dot \chi\,i$ is a solution to the 
equation,
 \begin{align}\label{eq:eta}
 -\laplacian \eta = \frac{4e^h}{(1 + e^h)^2}\,\eta +
 4\pi \sum_{j} \sgn_j \dot z_j\,\del_{z_j} \delta_{p_j}. 
 \end{align}
 
 Equation \eqref{eq:eta} is formal, in order to make sense of it, 
 we have to supplement it with analytical properties of the solution $h$ to the 
 Taubes equation and in the case of the plane 
 with proper limiting behaviour at infinity. 
 In the successive chapters we will address these issues. We assume however the 
 existence of exactly one solution to~\eqref{eq:eta}. Under this assumption, 
 the solution is given by the function
 %
% \begin{align}\label{eq:eta-formula}
% \eta = \sum_{p\in\varphi_\alpha(\vset)} \dot{p} \, \del_p \fnu 
% + \sum_{q \in \varphi_\alpha(\avset)}
% \dot{q}\,\del_q \fnu.
% \end{align}
 \begin{align}\label{eq:eta-formula}
\eta = \sum_{j} \dot z_j \, \del_{z_j} \fnu.
\end{align}
 
Note that although each core position $z_j$ is defined up to holomorphic 
coordinates, the right hand side is well defined independently of the chart 
chosen, provided the cores are contained in it. With this initial setup, 
we compute the localization
 formula~\eqref{eq:loc-formula}. 
% To state the formula, we define the signature function $\sgn
% : \vset \cup \avset \to \set{1, -1}$,
% %
% \begin{align*}
% \sgn_c = \begin{cases}
%   1, & c \in \vset,\\
%   -1, & c \in \avset.
% \end{cases}
% \end{align*}

\begin{lemma}\label{lem:coef-sym}
    Let $\varphi: U \subset \surface \to \cpx$ be a holomorphic chart, $U$ 
    open and dense, such that $\vset\cup\avset \subset U$. Assume the cores 
    are simple, for each $p_j 
    \in\vset\cup\avset$ define,
 \begin{align}\label{eq:pre-bcoef}
 b_j = 2 \eval{\conj\del}{z = z_j} \brk(\sgn_j \fnu(\varphi^{-1}(z)) 
 - \log\,\abs{z - z_j}^2),
 \end{align}
 where $z = \varphi(x)$, $z_j = \varphi(p_j)$. Then the coefficients $b_j$ 
 have the symmetries,
 \begin{align}
     \del_{z_i} b_j = \conj\del_{z_j} \conj b_i, &&
     \del_{z_i} \conj b_j = \del_{z_j} \conj b_i.
 \end{align}
\end{lemma}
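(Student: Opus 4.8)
The plan is to exploit the fact, established earlier, that the full variational data $\eta$ is captured by the single solution $\eta = \sum_j \dot z_j\,\del_{z_j}\fnu$ to equation~\eqref{eq:eta}, and that the coefficients $b_j$ encode the regular part of $\fnu$ near each core. First I would write, in the dense holomorphic chart $\varphi$, the local expansion
\begin{align}
\sgn_j\,\fnu(\varphi^{-1}(z)) = \log\abs{z - z_j}^2 + a_j + \bar b_j (z - z_j) + b_j \overline{(z-z_j)} + \order(\abs{z-z_j}^2),
\end{align}
so that $b_j = 2\,\eval{\conj\del}{z=z_j}(\sgn_j\fnu - \log\abs{z-z_j}^2)$ as in~\eqref{eq:pre-bcoef}, and observe that $a_j$ and $b_j$ are smooth functions of the core positions $(z_1,\ldots,z_{k_++k_-})$ and their conjugates, because of the smooth dependence of $\fnu$ on the cores proved in subsection~\ref{subsec:smooth-par-dep}. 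The symmetry relations are statements about the mixed second derivatives of these coefficients, so the natural route is to produce a ``potential'': a real-valued function $W$ of the core positions whose derivatives reproduce the $b_j$, since then $\del_{z_i}b_j$ and $\del_{z_j}\conj b_i$ are both equal to a mixed third derivative of $W$ and hence agree by equality of mixed partials.

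The key step is identifying $W$. The candidate is (a finite-part regularization of) the functional $\int_\surface \fnu\,(\text{something})$, or more precisely the renormalized ``self-energy'' obtained by subtracting the logarithmic singularities: set
\begin{align}
W = \sum_j \sgn_j\, a_j \;-\; 2\sum_{i\ne j}\sgn_i\sgn_j \log\abs{z_i - z_j} \;+\; (\text{smooth boundary/curvature terms}),
\end{align}
chosen so that $W$ is exactly the value of a regularized Dirichlet-type action for the Taubes equation evaluated at the solution $\fnu$. Differentiating the Taubes equation~\eqref{eq:taubes} with respect to $z_i$ shows that $\del_{z_i}\fnu$ solves the linearized equation with source $-4\pi\sum_j\sgn_j\del_{z_i}\delta_{p_j}$; pairing this against $\del_{z_j}\fnu$ (which solves the same linearized equation with $j$ in place of $i$) and integrating by parts, the symmetric bilinear form $\int (\del_{z_i}\fnu)(\laplacian + V)(\del_{z_j}\fnu)$ is manifestly symmetric in $i,j$, where $V = 4e^\fnu/(1+e^\fnu)^2$. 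Evaluating this pairing concretely — the distributional source picks out the derivative of the regular part of $\del_{z_j}\fnu$ at $z_i$, which is precisely $\del_{z_i}b_j$ up to a universal constant — turns the symmetry of the bilinear form into the stated identity $\del_{z_i}b_j = \conj\del_{z_j}\conj b_i$. The second identity follows the same way after noting that $\conj b_j$ is read off from the conjugate expansion, or equivalently by pairing $\del_{z_i}\fnu$ against $\conj\del_{z_j}\fnu$.

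The main obstacle I expect is making the integration-by-parts argument rigorous near the cores: the sections $\del_{z_i}\fnu$ have $1/\abs{z-z_i}$-type singularities, so the pairings above are only conditionally convergent and must be interpreted via the $\epsilon$-disk excision used already in the proof of proposition~\ref{prop:energy-bog-sols}, with the boundary integrals over $\del D_\epsilon(z_i)$ and $\del D_\epsilon(z_j)$ carefully tracked as $\epsilon\to 0$. One must check that the excised boundary terms converge and that the limit isolates exactly the coefficient $b_j$ (this is where the normalization factor $2$ and the definition via $\conj\del$ in~\eqref{eq:pre-bcoef} matter), and that all remaining bulk integrals are absolutely convergent — on the plane this additionally requires the decay of $\fnu$ and its core-derivatives at infinity, which is why equation~\eqref{eq:eta} was flagged as needing the analytic input from the later chapters. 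Once the excision bookkeeping is done, the symmetry is immediate from the self-adjointness of $\laplacian + V$.
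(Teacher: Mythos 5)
Your proposal is correct and, once you reach the ``key step'', it is essentially the paper's own argument: the paper likewise pairs the fundamental solutions $\sgn_i\del_{z_i}\fnu$ and $\sgn_j\del_{z_j}\fnu$ (and then $\conj\del_{z_j}\fnu$) of the linearized operator $-(\laplacian + 4e^\fnu(e^\fnu+1)^{-2})$, uses Green's identity with $\epsilon$-disk excision, and reads off $\del_{z_j}\conj b_i - \del_{z_i}\conj b_j = 0$ and $\conj\del_{z_j}\conj b_i - \del_{z_i}b_j = 0$ from the distributional sources $-4\pi\del\delta_{p_i}$, exactly as you describe (generalizing Manton--Sutcliffe). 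The preliminary search for a potential $W$ is an unnecessary detour that your own argument never actually uses, and note only that the pairing of $\del_{z_i}\fnu$ with $\del_{z_j}\fnu$ yields the identity $\del_{z_i}\conj b_j = \del_{z_j}\conj b_i$ while the pairing with $\conj\del_{z_j}\fnu$ yields $\del_{z_i}b_j = \conj\del_{z_j}\conj b_i$ (you have the two labels interchanged, which is immaterial).
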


\begin{proof}
%%%%%%%%%%%% Local declarations
\newcommand{\Eop}{\operatorname{K}}

For the proof we generalise the argument of Manton and 
Sutcliffe~\cite[pg.~209]{manton_topological_2004} given for vortices of the 
Ginzburg-Landau functional on the Euclidean plane. 
Let $\Eop = -(\laplacian + 4e^\fnu(e^\fnu + 1)^{-2})$, 
by the Taubes equation, $\sgn_i\del_{z_i}\fnu$ (no summation) is a
fundamental solution of $\Eop$, 
\begin{align}
    \Eop (\sgn_i\del_{z_i}\fnu) = -4\pi\del \delta_{p_i}.
\end{align}

If $i\neq j$, $\del_{z_i}\fnu$ and $\del_{z_j}\fnu$ have different
 singularities and we can integrate by parts to obtain,
\begin{align}
  \int_\surface \brk(
  \sgn_j\del_{z_j}\fnu\Eop (\sgn_i\del_{z_i}\fnu) 
  -  \sgn_i\del_{z_i}\fnu\Eop (\sgn_j\del_{z_j}\fnu)
  )\,\vol = 0,
\end{align}
where the integration by parts involves computing a limit at each
singularity, we omit the details for clarity of the argument.

On the other hand,
\begin{align}
    \int_\surface \brk(
    \sgn_j\del_{z_j}\fnu \,(-4\pi \del\delta_{p_i})
    -  \sgn_i\del_{z_i}\fnu \,(-4\pi \del\delta_{p_j})
    )\,\vol &= 4\pi \brk(
    \sgn_j\del(\del_{z_j}\fnu)(p_i) - 
    \sgn_i\del(\del_{z_i}\fnu)(p_j)
    )\nonumber\\
    &= 2\pi \sgn_j\sgn_i \brk(\del_{z_j}\conj b_i - \del_{z_i}\conj b_j).
\end{align}

Therefore, $\del_{z_i}\conj b_j = \del_{z_j}\conj b_i$. Since $\Eop$ is a real
operator, 
\begin{align}
    \Eop (\sgn_i\conj\del\fnu_{z_i}) = -4\pi\conj\del \delta_{z_i},
\end{align}
hence,
\begin{equation}
\begin{aligned}%\label{eq:eop-int}
 \int_\surface \brk( \sgn_j\conj\del_{z_j}\fnu \Eop (\sgn_i\del_{z_i}\fnu)
 - \sgn_i\del_{z_i}\fnu \Eop (\sgn_j\conj\del_{z_j}\fnu)) \vol 
 &= 
 -4\pi \int_\surface \brk( \sgn_j\conj\del_{z_j}\fnu\,\del\delta_{p_i}
 - \sgn_i\del_{z_i}\fnu \,\conj\del\delta_{p_j}) \vol\\
 &= 2\pi\sgn_j\sgn_i \, (\conj\del_{z_j}\conj b_i - \del_{z_i} b_j).
 \end{aligned}
\end{equation}

As in the previous case, we can apply integration by parts to 
prove that the first integral is zero. Therefore 
$\del_{z_i}b_j = \conj\del_{z_j} \conj b_i$.

%%%%%%%%%%% End local
\let \Eop\undefined
\end{proof}

We denote by $\moduli^{k_+,k_-}$ the moduli space of solutions to the 
Bogomolny equations with $k_+$ vortices and $k_-$ antivortices. 

 \begin{theorem}\label{thm:bps-loc-formula}
 If $\varphi: U \subset \surface \to \cpx$ is an open and dense holomorphic
  chart, containing the cores of a time varying trajectory
  $(\hf, \gp): I \subset \reals \to \moduli^{k_+, k_-}$, such that the
  variation $(\dot\hf, \dot \gp)$ satisfies Gauss's equation and each core $p_i 
  \in \vset \cup \avset$ is simple, then the
  kinetic energy of the trajectory can be computed as,
 \begin{align}
 \label{eq:loc-formula}
 \Energy &= \pi \sum_{i,j=1}^{k_+ + k_-} 
            \brk( e^{\Lambda(z_i)}(1 - \sgn_{i} \tau)\delta_{ij}
             + \del_{z_i} b_{j}
             )\,\dot z_i \, \conj{\dot z_j},
 \end{align}
 where $z_j = \varphi(p_j)$. Moreover, the quadratic form,
 \begin{align}\label{eq:kmetric}
 \kmetric = 2\pi \sum_{i,j = 1}^{k_+ + k_-}
            \brk( e^{\Lambda(z_i)}(1 - \sgn_{i} \tau)\delta_{ij}
             + \del_{z_i} b_{j}
             )\, d z_i\, d \conj z_j,
 \end{align}
 determines a K\"ahler metric in the open and dense set 
 of non intersecting vortices and antivortices.
\end{theorem}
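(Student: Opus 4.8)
The plan is to reduce the kinetic energy $\Energy = \tfrac{1}{2}\norm{(\dot\hf^\perp,\dot\gp^\perp)}^2_{\fieldSp}$ to a boundary integral over small circles surrounding the cores, exactly as in Samols's original argument, and then identify the resulting expression with \eqref{eq:loc-formula}. First I would record that, since $(\dot\hf,\dot\gp)$ already satisfies Gauss's equation \eqref{eq:fake-gauss-law}, the variation is horizontal, so $\dot\hf^\perp = \dot\hf$ and $\dot\gp^\perp = \dot\gp$ and no extra projection term appears. In a dense holomorphic trivialising chart the variation is encoded by the single complex function $\eta = \dot\fnu/2 + \dot\chi\, i$ via $\dot\psi_\alpha = \eta\,\psi_\alpha$, and by \eqref{eq:eta-formula} we have the closed form $\eta = \sum_j \dot z_j\,\del_{z_j}\fnu$. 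I would then express the $\Lsp^2$ norm of the field variation purely in terms of $\fnu$ and $\eta$: using the metric on $T\sphere$ induced by stereographic projection (the factor $4e^\fnu/(1+e^\fnu)^2$ appearing throughout), one finds $\norm{\dot\hf}^2 + \norm{\dot\gp}^2$ equals an integral over $\surface\setminus(\vset\cup\avset)$ of an expression of the form $\tfrac{4e^\fnu}{(1+e^\fnu)^2}\abs{\eta}^2$ plus terms involving $\grad\dot\fnu$ and $\dot\chi$; invoking \eqref{eq:eta1-simple} (equivalently \eqref{eq:eta}) together with an integration by parts converts this to a sum of limits of boundary integrals $\lim_{\epsilon\to0}\int_{\del D_\epsilon(z_j)}(\cdots)$.

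The heart of the computation is evaluating these boundary terms. Near $p_j$ we have $\fnu\circ\varphi^{-1}(z) = \sgn_j\log\abs{z-z_j}^2 + h_\alpha(z)$ with $h_\alpha$ smooth and $h_\alpha(z_j)\neq$ singular, and the definition \eqref{eq:pre-bcoef} of $b_j$ is precisely $b_j = 2\,\eval{\conj\del}{z=z_j}\!\brk(\sgn_j\fnu - \log\abs{z-z_j}^2)$, i.e. $b_j$ is (twice) the first Taylor coefficient of the smooth remainder. Expanding $\eta = \sum_k \dot z_k\,\del_{z_k}\fnu$ near $z_j$, separating the singular contribution $\dot z_j\,\del_{z_j}(\sgn_j\log\abs{z-z_j}^2)$ from the smooth part, and carrying out the contour integral on $\abs{z-z_j}=\epsilon$, the $1/\bar{z}$ and $1/z$ poles produce the $\del_{z_i}b_j$ terms while the leading singular--against--singular term produces, after using the conformal factor $e^{\Lambda(z_j)}$ and the Bogomolny equation \eqref{eq:bog2} (which near a core of sign $\sgn_j$ gives the $(1 - \sgn_j\tau)$ weight), the diagonal term $e^{\Lambda(z_i)}(1-\sgn_i\tau)\delta_{ij}$. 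The cross terms $i\neq j$ that are not of the stated form must cancel, and this is where Lemma~\ref{lem:coef-sym} is needed: the symmetries $\del_{z_i}b_j = \conj\del_{z_j}\conj b_i$ guarantee that the apparently non-symmetric contributions combine into the Hermitian form displayed in \eqref{eq:loc-formula}. Collecting everything and matching the overall constant (the $2\pi$ from the angular integration halved by the $\tfrac12$ in the kinetic energy, giving $\pi$) yields the formula.

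For the second assertion, that \eqref{eq:kmetric} is K\"ahler, I would argue as follows. The metric $\kmetric = 2\pi\sum_{i,j}(e^{\Lambda(z_i)}(1-\sgn_i\tau)\delta_{ij} + \del_{z_i}b_j)\,dz_i\,d\conj z_j$ is manifestly a Hermitian metric of type $(1,1)$ in the complex coordinates $(z_1,\dots,z_{k_++k_-})$ on the open dense set of distinct cores, provided the coefficient matrix is Hermitian positive definite; Hermiticity of $(e^{\Lambda(z_i)}(1-\sgn_i\tau)\delta_{ij} + \del_{z_i}b_j)$ is exactly the content of Lemma~\ref{lem:coef-sym} (the diagonal part is real, and $\del_{z_i}b_j = \conj{\del_{z_j}b_i}$ by the lemma), while positive-definiteness follows because $\kmetric$ arises as the restriction of the positive-definite $\Lsp^2$ metric $\norm{\cdot}^2_{\fieldSp}$ to the horizontal subspace. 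To see it is K\"ahler it suffices to exhibit a local K\"ahler potential, or equivalently to check $d\Omega = 0$ for the associated $(1,1)$-form $\Omega = i\sum_{i,j}g_{ij}\,dz_i\wedge d\conj z_j$; since $g_{ij} = e^{\Lambda(z_i)}(1-\sgn_i\tau)\delta_{ij} + \del_{z_i}b_j$, the diagonal term $e^{\Lambda(z_i)}\abs{dz_i}^2$ is already the pullback of the (K\"ahler) area form on each sphere factor, and $\del_{z_i}b_j$ is a gradient in $z_i$, so one checks $\del_{z_k}g_{ij} = \del_{z_i}g_{kj}$ and its conjugate using Lemma~\ref{lem:coef-sym}; closedness of $\Omega$ reduces precisely to these identities.

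The main obstacle I anticipate is making the integration by parts rigorous near the cores: the function $\eta$ has a genuine singularity at each $z_j$ (a simple pole in the singular frame), $\fnu\to-\infty$ at vortices and $+\infty$ at antivortices, and the Dirac-delta sources in \eqref{eq:eta} must be handled carefully so that the $\epsilon\to0$ limits of the boundary integrals exist and pick out exactly the coefficients $b_j$ and no spurious divergent pieces. This requires the precise asymptotics of $\fnu$ and $\eta$ at the cores, which is why the theorem is stated under the hypothesis that the cores are simple and $U$ is dense, and why the analytic groundwork of Section~\ref{c:intro-gov-elliptic} (smoothness of the remainder $h_\alpha$ and of the solution's dependence on core positions) is invoked; on $\plane$ one must additionally control the behaviour at infinity using the boundary condition $\lim_{\abs{x}\to\infty}(\lproduct{\north,\hf}-\tau)=0$ from Proposition~\ref{prop:energy-bog-sols}.
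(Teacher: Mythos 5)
Your proposal is correct and follows essentially the same route as the paper's proof: reduce the kinetic energy to boundary integrals over small circles around the cores via the first Bogomolny equation ($\gp = d\chi - \tfrac12 *d\fnu$) and integration by parts, evaluate the limits using the local expansion of $\fnu$ and the Taubes equation (which produces the $(1-\sgn_j\tau)e^{\Lambda(z_j)}$ diagonal term), and invoke Lemma~\ref{lem:coef-sym} both to assemble the Hermitian form and to verify closedness of the associated $(1,1)$-form. The only (harmless) addition beyond what the paper records is your explicit remark on positive-definiteness via restriction of the $\Lsp^2$ metric.
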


Theorem~\ref{thm:bps-loc-formula} was proved for $\tau = 0$
in~\cite{romao2018}. We follow the authors ideas and extend them to
the remaining cases.

\begin{proof}
Let $D_\epsilon$ be a collection of disjoint holomorphic 
$\epsilon$-disks, each one 
 centred at one of the cores in $\varphi(\vset \cup \avset)$ and let 
 $U_\epsilon = 
 U\setminus D_\epsilon$. We will make a calculation similar to the
 one done in~\cite{manton_topological_2004} for the Ginzburg-Landau
functional. The energy of the trajectory can be
 computed as,
\begin{align}
\Energy &= \half \brk(\norm{\dot\psi}^2 + \norm{\dot\gp}^2) \nonumber\\
        &= \lim_{\epsilon\to 0} \half\int_{U_\epsilon} \brk(
          \frac{4e^{\fnu}\brk(\frac{1}{4}\dot\fnu^2 + \dot\chi^2)}{(1 +
          e^{\fnu})^2} 
          + \abs{\dot\gp}^2)\, \vol,
\end{align}
by the first of the Bogomolny equation, $\conj\del\psi = \half
(-\gp_2 + \gp_1 i)\psi$, on the other hand, $\conj\del\psi = \psi
\,\conj\del(\half\fnu + \chi i)$, hence,
\begin{align}\label{eq:gp-chi-fnu}
  \gp = d\chi - \half *d\fnu,
\end{align}
which implies,
\begin{align}
  \abs{\dot\gp}^2 = \abs{d\dot\chi}^2 
  - \lproduct{d\dot\chi, *d\dot\fnu}
  + \frac{1}{4}\abs{*d\dot\fnu}^2.
\end{align}

Integrating by parts,
\begin{align}
  \int_{U_{\epsilon}} \abs{d\dot\chi}^2\,\vol
  &= \int_{U_{\epsilon}} d\dot\chi \wedge *d\dot\chi\nonumber\\
  &= \int_{\del U_{\epsilon}} \dot\chi\,*d\dot\chi +
      \int_{U_{\epsilon}}
      \dot\chi\,\laplacian \dot\chi\,\vol\nonumber\\
  &= -\int_{\del D_{\epsilon}}\dot\chi\,*d\dot\chi
    - \int_{U_{\epsilon}}\frac{4e^{\fnu}\dot\chi^2}{(1 + e^{\fnu})^2} \vol.
\end{align}

Proceeding in a similar way, we obtain a second pair of
 equations,
\begin{align}
  \int_{U_{\epsilon}} \lproduct{d\dot\chi, *d\dot\fnu}\,\vol
  &= \int_{\del D_{\epsilon}} \dot\chi\,d\dot\fnu, \\
  \int_{U_{\epsilon}} \abs{*d\dot\fnu}^2\,\vol
  &= -\int_{\del D_{\epsilon}} 
     \dot\fnu\,*d\dot\fnu
    - \int_{U_{\epsilon}}\frac{4 e^{\fnu}\dot\fnu^2}{(1 + e^{\fnu})^2} \vol.
\end{align}
Substituting into the equation for the energy, we obtain,
\begin{align}
  \Energy &= -\half \lim_{\epsilon\to 0}\int_{\del D_{\epsilon}}\brk(
  \dot\chi\,*d\dot\chi + \dot\chi\,d\dot\fnu
  + \frac{1}{4} \dot\fnu\,*d\dot\fnu
  ) \nonumber\\
  &= -\half \lim_{\epsilon\to 0}\int_{\del D_{\epsilon}}\brk(
  \dot\chi\,*\dot\gp - \half \,\dot\fnu\,\dot\gp),
\end{align}
where we have used the time derivative of equation~\eqref{eq:gp-chi-fnu} 
to simplify the energy. Since $\epsilon\to 0$, the only terms that
contribute to the 
energy are the singular terms. We will compute each of these terms at the 
respective core. For any  
$z_j \in \varphi(\vset\cup\avset)$, let $D_{\epsilon}(z_j)$ be the
$\epsilon$ holomorphic disk centred at this point. 
If $\epsilon$ is small, for $z \in \disk_{\epsilon}(z_j)$ we have the 
approximations,
\begin{align}
    \dot\fnu &= -2\sgn_j \brk(
    \frac{\dot z_j^1 \cos(\theta_j) + \dot z_j^2\sin(\theta_j)}{r_j}) + 
    R_\fnu(z),\\
    \dot\chi &= -\sgn_j \brk(
    \frac{-\dot z_j^1 \sin(\theta_j) + \dot z_j^2 \cos(\theta_j)}{r_j})
    + R_\chi(z),
\end{align}
for some residual smooth functions $R_\fnu$ and $R_\chi$. We also
expand $\dot\gp$ in  
polar coordinates centred at $z_j$,
\begin{align}
    \dot\gp = \dot\gp_r\,dr_j + \dot\gp_\theta\,r_j\,d\theta_j,
\end{align}
where,
\begin{align}
    \dot\gp_r      &= \dot\gp_1\,\cos(\theta_j) + \dot\gp_2\,\sin(\theta_j), &
    \dot\gp_\theta &= -\dot\gp_1\,\sin(\theta_j) + \dot\gp_2\,\cos(\theta_j).
\end{align}

The singular terms in the energy integral contribute as,
\begin{align}
    \lim_{\epsilon\to 0}\int_{\del D_\epsilon(z_j)} \dot\chi\,*\dot\gp 
    &= -\sgn_j \lim_{\epsilon\to 0}\int_{D_\epsilon(z_j)} \brk(
      \frac{-\dot z_j^1 \sin(\theta_j) + \dot z_j^2 \cos(\theta_j)}{\epsilon}
    )\,\dot\gp_r\,\epsilon d\theta_j\nonumber\\
    &= \pi\sgn_j (\dot z_j^1\,\dot\gp_2(z_j) - \dot 
    z_j^2\,\dot\gp_1(z_j))\nonumber\\
    &= -\pi\sgn_j \Im\brk( \dot z_j \, \brk(\dot\gp_1(z_j) - \dot\gp_2(z_j)\,i))
\end{align}
and
\begin{align}
    \lim_{\epsilon\to 0}\int_{\del D_\epsilon(z_j)} \dot\fnu\,\dot\gp 
    &= -2\sgn_j \lim_{\epsilon\to 0}\int_{\del D_\epsilon(z_j)} \brk(
    \frac{\dot z_j^1 \cos(\theta_j) + \dot z_j^2\sin(\theta_j)}{\epsilon})
    \,\dot\gp_\theta\,\epsilon\, d\theta_j\nonumber\\
    &= -2\pi\sgn_j\,\brk(\dot z_j^1\,\dot\gp_2(z_j) - \dot 
    z_j^2\,\dot\gp_1(z_j))\nonumber\\
    &= 2\pi\sgn_j\,\Im\brk( \dot z_j \, \brk(\dot\gp_1(z_j) - 
    \dot\gp_2(z_j)\,i)),
\end{align}

Therefore, the energy of a moving pair is,
\begin{align}
    \Energy = \pi \sum_{j=1}^{k_+ + k_-} \sgn_j 
    \Im\brk( \dot z_j \, \brk(\dot\gp_1(z_j) - \dot\gp_2(z_j)\,i)). 
\end{align}

By equations~\eqref{eq:gp-chi-fnu} and~\eqref{eq:eta-formula},
\begin{align}
    \dot\gp_1 - \dot\gp_2\,i 
    &= \brk(\del_1\dot\chi + \half\,\del_2\dot\fnu)
      - \brk(\del_2\dot\chi - \half \del_1\dot\fnu) i \nonumber\\
    &= \del_1\brk( \half\dot\fnu - \dot\chi\,i)\,i 
      + \del_2\brk(\half\dot\fnu - \dot\chi\,i) \nonumber\\
    &= 2i\,\del_z\conj\eta \nonumber\\
    &= 2i\sum_{j} \conj{\dot z}_j
    \,\del_z \conj\del_{z_j} \fnu.
\end{align}

In a small neighbourhood of any $z_j$, we have the 
asymptotic expansion,
\begin{equation}
  \begin{aligned}
    \sgn_{j} \fnu(\varphi^{-1}(z)) = &\log\,\abs{z - z_j}^2 + a_{j} + 
    \half \conj b_{j}\,(z - z_j)
    + \half b_j\,{(\conj z - \conj z_j)}\\
    &+ \conj c_j\,(z - z_j)^2
    + d_j\,\abs{z - z_j}^2 + c_j\,{(\conj z - \conj z_j)^2}\\
    &+ \mathcal{O}(|z_j|^3).
  \end{aligned}\label{eq:loc-h-coeffs-p}
\end{equation}

Hence, 
\begin{align}
 d_j &= \lim_{z\to z_j} \del_z\conj\del_z(\sgn_j \fnu(\varphi^{-1}(z)) 
 - \log \abs{z -
   z_j}^2)\nonumber\\
   &= \frac{1}{4} \lim_{z \to z_j}\,(-e^{\Lambda(z)}\laplacian) 
   (\sgn_{j} \fnu(\varphi^{-1}(z)) - \log\,\abs{z - z_j}^2)\nonumber\\
   &= \half \sgn_{j} \,e^{\Lambda(z_j)} \lim_{z \to z_j}
   \brk(\frac{e^\fnu -
   1}{e^\fnu + 1} + \tau)\nonumber\\
   &= -\half e^{\Lambda(z_j)}\brk(1 -\sgn_j \tau)
\end{align}
and since $\del_z\conj \del_{z_k} \log\,\abs{z - z_j}^2 = 0$ for any $z \neq 
z_j$, 
\begin{align}
\del_z(\conj\del_{z_k} \fnu)(z_j) 
= \sgn_j \brk(\half \conj\del_{z_k} \conj b_{j} - 
d_{j}\,\delta_{jk}).
\end{align}

Hence,
\begin{align}
  \Energy &= \pi \sum_j
            \sgn_j \Im\brk(\dot p_j \cdot
            2i\,\sum_k\conj{\dot z_k}\eval{\del_z(\conj\del_{z_k}\fnu)}{z = z_j}
            )
            \nonumber\\
          &= 2 \pi\sum_{j,k}\Re\brk(
            \dot z_j \, \conj{\dot z_k}\,
            \brk(\half\conj\del_{z_k}\conj b_j - d_j\delta_{jk})
            )
            \nonumber\\
          &= \pi \sum_{j,k} \Re\brk(
            \brk( e^{\Lambda(z_j)}(1 - \sgn_j \tau)\delta_{jk}
             + \conj\del_{z_k} \conj b_j
             )\,\dot z_j \, \conj{\dot z_k}
            )
            \nonumber\\
          &= \pi \sum_{j,k} \Re\brk(
            \brk( e^{\Lambda(z_j)}(1 - \sgn_{j} \tau)\delta_{jk}
             + \del_{z_j} b_k
             )\,\dot z_j \, \conj{\dot z_k}
            ).
            \label{eq:pre-energy-real-part}
\end{align}

The last equation is a consequence of the symmetry 
$\del_{z_j}b_k = \conj\del_{z_k} \conj b_j$. Also by this
symmetry,~\eqref{eq:loc-formula} 
is a real quantity and therefore coincides
with~\eqref{eq:pre-energy-real-part} as expected, since $\Energy$
represents the kinetic energy of a trajectory on the moduli space.  
To prove that the metric is K\"ahler we must prove that the induced form,
\begin{align}
  \omega = \pi i \sum_{j,k}
  \brk( e^{\Lambda(z_j)}(1 - \sgn_j \tau)\delta_{jk}
  + \del_{z_j}b_k)\,dz_j\wedge d\conj z_k,
\end{align}
is closed. For the following computation, we employ lemma~\ref{lem:coef-sym} 
and the fact that each term $e^{\Lambda(z_j)}(1 - \sgn_j \tau)\,dz_j \wedge 
d\conj z_k$ 
is closed,
\begin{align}
  d\omega &= \pi i \sum_{r,s,t} \brk(
  \del_{z_t}\del_{z_r}b_s\, dz_t\wedge dz_r \wedge d\conj z_s
  + \conj\del_{z_t}\del_{z_r}z_s\, d\conj z_t\wedge dz_r \wedge d\conj z_s
  )\nonumber\\
  &= \pi i \sum_{r,s,t} \brk(
  \del_{z_t}\del_{z_r}b_s\, dz_t\wedge dz_r \wedge d\conj z_s
  - \conj\del_{z_t}\del_{z_r}b_s\,d\conj z_t \wedge d\conj z_s \wedge 
  dz_r)\nonumber\\
  &= \pi i \sum_{r,s,t} \brk(
  \del_{z_t}\del_{z_r}b_s\, dz_t\wedge dz_r \wedge d\conj z_s
  - \conj\del_{z_t}\del_{z_s}b_r\, d\conj z_t \wedge d\conj z_r \wedge d 
  z_s)\nonumber\\
  &= \pi i \sum_{r,s,t} \brk(
  \del_{z_t}\del_{z_r}b_s\, dz_t\wedge dz_r \wedge d\conj z_s
  - \conj\del_{z_t}\conj\del_{z_r}\conj b_s\, d\conj z_t \wedge d\conj z_r 
  \wedge 
  d z_s)\nonumber\\
  &= -2\pi \Im\brk( \sum_{r,s,t}
  \del_{z_t}\del_{z_r}b_s\, dz_t\wedge dz_r \wedge d\conj z_s)\nonumber\\ 
  &= 0,
\end{align}
where the last sum is zero by the commutativity of the mixed derivatives.
\end{proof}

%Therefore, dynamics on the subset of $\moduli^{k_+,k_-}$ of non-coalescent 
%cores can be described by geodesic trajectories with respect to the $\Lsp^2$
% metric 
%defined by theorem~\ref{thm:bps-loc-formula}. 

%%% Local Variables:
%%% mode: latex
%%% TeX-master: "../../../geometric-models.tex"
%%% End:

\newcommand*{\fnF}{F}
\newcommand*{\fnV}{V}
\newcommand*{\cmu}{\mu}
\newcommand*{\fnh}{h}
\newcommand*{\spW}{\mathrm{W}}
\newcommand*{\spL}{\mathrm{L}}
\newcommand*{\spH}{\mathrm{H}}
\newcommand*{\spS}{\mathrm{S}}
\newcommand*{\spHilbert}{\mathcal{H}}
\newcommand*{\avg}[1]{\overline{#1}}
\newcommand*{\manifold}{M}
\newcommand*{\wto}{\rightharpoonup}
\newcommand*{\opT}{\mathrm{T}}
\newcommand*{\cf}{\Lambda}
\newcommand*{\sign}{s}

\section{The governing elliptic problem}\label{c:intro-gov-elliptic}

Equation \eqref{eq:taubes} is the governing elliptic problem. Once $h$ is 
determined, the Bogomolny equations determine $\Bform$ and then $\gp$ and $\hf$
up to gauge equivalence. We let, 
\begin{align}
\begin{gathered}
    \fnF : \reals \to \reals,\\
    \fnF(t) = 2\brk(\frac{e^t - 1}{e^t + 1} + \tau),
\end{gathered}  &&
    \begin{gathered}
    \fnV: \reals \to \reals^+, \\
    \fnV(t) = \frac{4e^t}{(1 + e^t)^2}.
    \end{gathered}
\end{align}
Note that $\fnV = \fnF'$ and that $\fnF$ and all
of its derivatives are bounded functions, moreover, if 
$\cmu = \log \brk( (1 - \tau)(1 + \tau)^{-1})$, then $\fnF$ satisfies the
following properties,
\begin{align}
\fnF(\cmu) &= 0,  \\
\fnF'(\cmu) &> 0,\\
\fnF(t) &< 0, \qquad  t <  \cmu,\\
\fnF(t) &> 0, \qquad  t > \cmu,
\end{align}
and,
\begin{equation}
  \norm {\fnF}_{\mathrm{L}^{\infty}(\reals)} +
  \norm{(1 + e^{-t})\fnV}_{\mathrm{L}^{\infty}(\reals)} +
  \norm{e^{-t}(e^t - 1)^{-1}\fnV'}_{\mathrm{L}^{\infty}(\reals)} < \infty.
\end{equation}

If $\vset$ or $\avset$ is non-empty, there exists exactly one function 
$\fnh \in C^{\infty}(\plane \setminus \vset \cup \avset)$
 \cite{han2000existence}, such that,
\begin{align}
-\laplacian \fnh = \fnF(\fnh) + 4\pi\sum_{p\in\vset}
\delta(x - p) - 4\pi\sum_{q \in \avset} \delta(x -q), &&
\lim_{\abs x \to \infty} \fnh = \cmu,
\end{align}
moreover, for any $\epsilon \in (0, 1)$, there exist positive constants 
$C = C(\epsilon)$ and $R = R(\epsilon)$ such that
\begin{equation}
\abs{\fnh(x) - \cmu} \leq 
  C \exp \brk(- \half\sqrt{(1 - \tau^2)(1 - \epsilon)} \abs x),
  \qquad \abs x \geq R.
\end{equation}

Therefore, in the euclidean plane, there exists a unique solution to the 
Taubes 
equation. For a compact surface, existence of a solution to the Taubes 
equation was proved for $\tau = 0$ in~\cite{sibner2010}. We will prove
that this is also the case for $\tau \neq 0$ in
chapter~\ref{c:vav-compact}.

 In this section we prove that solutions to the Taubes equation depend
 smoothly on vortex positions. Recall Sobolev's space $\spW^{k,p}$ is
 the completion of the space of $C_0^\infty$ functions compactly
 supported with respect to  Sobolev's norm, 
 \begin{align}
     \norm{\varphi}_{\spW^{k,p}} 
     = \brk(\sum_{j = 0}^k \norm{\nabla^k \varphi}_{\spL^p}^p)^{1/p},
 \end{align}
 where $\nabla^j\varphi \in (T^*\surface)^{\otimes{j}}$ is the jth exterior
 covariant derivative. We denote the space $\spW^{k,2}$ as
 $\spH^k$. This is a Hilbert space with the product, 
 \begin{align}
   \lproduct{\varphi, \psi}_{\spH^k} = 
   \sum_{j = 0}^k \lproduct{\nabla^j\varphi, \nabla^j\psi}_{\spL^2}.
 \end{align}

 %We denote the $\spW^{k,p}$ norm as $\norm{\cdot}_{k,p}$. If $p = 2$
 %we omit the subindex $p$ and denote the norm in $\spH^k$ as
 %$\norm{\cdot}_k$. 
 For the inner product in $\Lsp^2$ we omit the
 subindex if is clear from the context that we refer to $\Lsp^2$
 functions. 
 
 In the sequel, we will use some results of analysis that we quote
 here for further  
 reference. The proofs are standard and can be found in the
 literature, for example in~\cite{gilbarg2015elliptic} 
 and~\cite{evans2010partial}.
 
 \begin{theorem}[Banach-Alaoglu]
 Let $\mathrm{X}$ be a Banach space, then the closed unit ball of the 
 dual $\mathrm{X}^*$ is compact with respect to the weak-* topology.
 \end{theorem}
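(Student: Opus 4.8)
The plan is to realize the closed unit ball $B^{*} = \{f \in X^{*} : \norm{f}_{X^{*}} \le 1\}$ as a closed subset of a compact product space and then invoke Tychonoff's theorem. First I would, for each $x \in X$, form the compact disc $D_{x} = \{\lambda : \abs{\lambda} \le \norm{x}\}$ in the scalar field ($\reals$ or $\cpx$), and give the product $P = \prod_{x \in X} D_{x}$ the product topology; by Tychonoff's theorem $P$ is compact. Next I would introduce the evaluation map $\Phi : B^{*} \to P$, $\Phi(f) = (f(x))_{x \in X}$, which is well defined because $\abs{f(x)} \le \norm{f}_{X^{*}}\,\norm{x} \le \norm{x}$ whenever $f \in B^{*}$.

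The second step is to note that $\Phi$ is a homeomorphism onto its image. Injectivity is immediate, since a linear functional is determined by its values on $X$; and both the weak-$*$ topology on $B^{*}$ and the product topology on $P$ are, by construction, the topology of pointwise convergence on $X$, so $\Phi$ and its inverse are continuous on the respective domains. It therefore suffices to prove that $\Phi(B^{*})$ is compact, and since $P$ is already compact it is enough to prove that $\Phi(B^{*})$ is closed in $P$.

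For the last step I would identify $\Phi(B^{*})$ inside $P$: a point $(\lambda_{x})_{x \in X} \in P$ belongs to $\Phi(B^{*})$ exactly when the assignment $x \mapsto \lambda_{x}$ is linear, the bound $\norm{\cdot}_{X^{*}} \le 1$ being then automatic from membership in $P$. For each fixed pair $x, y \in X$ and scalars $\alpha, \beta$, the set of $(\lambda_{x}) \in P$ satisfying $\lambda_{\alpha x + \beta y} - \alpha\lambda_{x} - \beta\lambda_{y} = 0$ is the zero set of a continuous function of the three coordinate projections involved, hence closed in $P$; intersecting over all such $(x, y, \alpha, \beta)$ yields $\Phi(B^{*})$, which is thus closed, hence compact, so $B^{*}$ is weak-$*$ compact. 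The only point demanding care — and it is conceptual rather than computational — is the identification of the weak-$*$ topology on $B^{*}$ with the subspace topology it inherits from $P$; the remaining ingredients are Tychonoff's theorem, which is assumed here as standard, and the routine verification of closedness.
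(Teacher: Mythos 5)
Your proof is correct: it is the classical Tychonoff argument (embed the dual unit ball in the product of discs $\prod_{x}D_{x}$ via evaluation, observe that the weak-$*$ topology is the topology of pointwise convergence and hence agrees with the subspace topology from the product, and check that linearity is a closed condition so the image is a closed subset of a compact space). The paper itself gives no proof of this statement — it is quoted as a standard background result with a reference to the literature — so there is nothing to compare against; your argument is the standard and complete one.
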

 
 \begin{theorem}[Rellich-Kondrachov] 
 If $\Omega$ is a 
 an open bounded Lipschitz domain of $\reals^n$, $1 \leq p < n$, 
 $p^* = \frac{np}{n - p}$, then $\spW^{1, p}(\Omega)$ is continuously embedded in
 $\spL^{p^*}(\Omega)$ and compactly embedded in $\spL^q(\Omega)$ for any 
 $1 \leq q < p^*$.
 
 If $\Omega$ is a compact manifold of dimension $n$, $k > l$, $k - n/p > l - n/q$,
  then the embedding $\spW^{k,p} \subset \spW^{l, q}$ is completely continuous.
 \end{theorem}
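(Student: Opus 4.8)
The plan is to establish the Euclidean assertion first and then obtain the manifold assertion by localisation. For the Euclidean part, the continuous embedding $\spW^{1,p}(\Omega)\hookrightarrow\spL^{p^*}(\Omega)$ is precisely the Gagliardo--Nirenberg--Sobolev inequality: I would prove $\norm{u}_{\spL^{p^*}(\reals^n)}\le C\norm{\grad u}_{\spL^p(\reals^n)}$ for $u\in C^1_0(\reals^n)$ by the classical slice argument (bound $u$ pointwise by integrals of $|\grad u|$ over lines parallel to the coordinate axes, then integrate out the variables one at a time with the generalised H\"older inequality), and then transfer the estimate to $\spW^{1,p}(\Omega)$ by means of a bounded extension operator $\spW^{1,p}(\Omega)\to\spW^{1,p}(\reals^n)$, whose existence uses that $\Omega$ is a bounded Lipschitz domain. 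The hypothesis $p<n$ is what makes $p^*$ finite and positive, so that this is the relevant target space.

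For the compact embedding into $\spL^q(\Omega)$, $1\le q<p^*$, I would take a bounded sequence $\{u_m\}$ in $\spW^{1,p}(\Omega)$, extend it to a sequence bounded in $\spW^{1,p}(\reals^n)$ and supported in one fixed bounded set, and run a mollification argument. Writing $u^\varepsilon=u*\eta_\varepsilon$ for a standard mollifier, two facts drive the proof. First, $\norm{u^\varepsilon-u}_{\spL^1}\le C\,\varepsilon\,\norm{\grad u}_{\spL^1}$ uniformly in $m$; combining this with the uniform $\spL^{p^*}$ bound from the first part and the interpolation inequality $\norm{v}_{\spL^q}\le\norm{v}_{\spL^1}^{\theta}\norm{v}_{\spL^{p^*}}^{1-\theta}$ with $\theta\in(0,1]$ determined by $q$ (here $\theta>0$ because $q<p^*$) gives $\sup_m\norm{u_m^\varepsilon-u_m}_{\spL^q}\to0$ as $\varepsilon\to0$. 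Second, for each fixed $\varepsilon$ the family $\{u_m^\varepsilon\}_m$ satisfies $\norm{u_m^\varepsilon}_{\spL^\infty}+\norm{\grad u_m^\varepsilon}_{\spL^\infty}\le C_\varepsilon$, so it is uniformly bounded and equicontinuous on the fixed compact set and, by Arzel\`a--Ascoli, precompact in $C$ and hence in $\spL^q$. A total-boundedness (diagonal) argument then shows that $\{u_m\}$ is itself totally bounded, hence precompact, in $\spL^q(\Omega)$.

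The manifold statement follows by charts. Fix a finite atlas $\{(U_\alpha,\varphi_\alpha)\}$ of the compact $\surface$, with $n=\dim\surface$, and a subordinate smooth partition of unity $\{\rho_\alpha\}$; then a bounded sequence in $\spW^{k,p}(\surface)$ gives, for each $\alpha$, a sequence $\{(\rho_\alpha u_m)\circ\varphi_\alpha^{-1}\}$ bounded in $\spW^{k,p}$ on a bounded Euclidean domain. The scaling hypothesis $k-n/p>l-n/q$ is exactly what lets one chain the elementary first-order Rellich--Kondrachov step with the continuous Sobolev embeddings $\spW^{j,p}\hookrightarrow\spW^{j-1,p^*}$ finitely many times and land \emph{compactly} in $\spW^{l,q}$ of each chart; reassembling with the partition of unity yields complete continuity of $\spW^{k,p}(\surface)\hookrightarrow\spW^{l,q}(\surface)$.

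The main obstacle is organisational rather than conceptual: one must make the mollification estimates genuinely uniform over the whole bounded family (which is why the preliminary extension to a single fixed bounded set matters, so that one constant $C_\varepsilon$ serves all $m$), and, in the manifold case, keep careful track of how transition maps and the cutoffs $\rho_\alpha$ interact with derivatives of order $k$ when iterating the scalar inequality, paying attention to the borderline instances of $k-n/p>l-n/q$. This is the classical argument found in Gilbarg--Trudinger and Evans, which is why the statement is simply quoted here.
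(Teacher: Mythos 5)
Your proposal is correct and is precisely the classical argument (Gagliardo--Nirenberg--Sobolev plus extension for the continuous embedding, mollification with interpolation and Arzel\`a--Ascoli for compactness, and chart-by-chart localisation for the manifold case). The paper does not prove this theorem at all --- it quotes it as a standard result and refers to Gilbarg--Trudinger and Evans, which is exactly the proof you have reproduced.
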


 That the embedding $\spW^{k,p} \subset \spW^{l, q}$ is completely
 continuous is equivalent to claiming that any bounded sequence of
 functions in $\spW^{k,p}$ has a subsequence converging in
 $\spW^{l,q}$. In practice, we will use the Rellich-Kondrachov theorem
 to guarantee that given a bounded sequence of $\spW^{1,p}$ functions
 either on a compact surface or on an open bounded subset of $\plane$,
 we can find a subsequence convergent in $\spL^p$. 
 
 \begin{theorem}[Sobolev's embedding]
 If $\Omega$ is either $\reals^n$ or a bounded domain of with Lipschitz boundary 
 of a compact Riemannian manifold of dimension $n$, and if $k > l$, 
 $1 \leq p < q < \infty$ and $\alpha \in (0, 1]$ are such that,
 \begin{align}
     \frac{1}{p} - \frac{k}{n} = - \frac{r + \alpha}{n},
 \end{align}
 then we have the continuous embedding 
 $\spW^{k,p}(\Omega) \subset C^{r,\alpha}(\Omega)$.
 \end{theorem}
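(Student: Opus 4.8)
The plan is to reduce the general statement to the flat case $\Omega = \reals^n$ and, there, to Morrey's inequality. When $\Omega$ is a bounded Lipschitz domain of a compact Riemannian manifold, a finite atlas of coordinate charts together with a subordinate partition of unity writes every $u \in \spW^{k,p}(\Omega)$ as a finite sum of $\spW^{k,p}$ functions supported in bounded Lipschitz domains of $\reals^n$, and a $C^{r,\alpha}$ bound obtained in each chart transfers back (the chart maps being smooth). For a bounded Lipschitz domain $\Omega \subset \reals^n$ one invokes a bounded extension operator $E\colon \spW^{k,p}(\Omega) \to \spW^{k,p}(\reals^n)$ and restricts the resulting estimate on $\reals^n$. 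Hence it suffices to establish $\spW^{k,p}(\reals^n) \hookrightarrow C^{r,\alpha}(\reals^n)$, with the accompanying norm bound, under the dimensional relation $\tfrac1p - \tfrac kn = -\tfrac{r+\alpha}{n}$, $0 < \alpha \le 1$.

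The core is the case $k = 1$, which forces $p > n$ (since then $\tfrac1p - \tfrac1n = -\tfrac{\alpha}{n} < 0$) and $\alpha = 1 - \tfrac np$. For $u \in C_c^\infty(\reals^n)$ and a ball $B = B(x,\rho)$, expressing $u(y) - u(x)$ as the integral of the directional derivative of $u$ along the segment from $x$ to $y$ and then averaging over $y \in B$ gives the Riesz-potential estimate
\[
 \frac{1}{|B|}\int_{B} \bigl|u(y) - u(x)\bigr|\,dy \;\le\; C_n \int_{B} \frac{|\grad u(z)|}{|z-x|^{\,n-1}}\,dz .
\]
Applying H\"older's inequality with exponents $p$ and $p'$ to the right-hand side, the kernel $|z-x|^{-(n-1)}$ lies in $\spL^{p'}(B)$ precisely because $(n-1)p' < n \iff p > n$, and a direct computation bounds the result by $C\,\rho^{\,1-n/p}\,\|\grad u\|_{\spL^p}$. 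Comparing $u(x)$ and $u(y)$ each to the average of $u$ over a ball of radius $\rho = |x-y|$ containing both points yields the H\"older seminorm estimate $|u(x)-u(y)| \le C\,|x-y|^{1-n/p}\,\|\grad u\|_{\spL^p(\reals^n)}$, while averaging over a fixed ball gives $\|u\|_{\spL^\infty} \le C\,\|u\|_{\spW^{1,p}}$. Density of $C_c^\infty(\reals^n)$ in $\spW^{1,p}(\reals^n)$ promotes both to all of $\spW^{1,p}$, so $\spW^{1,p}(\reals^n) \hookrightarrow C^{0,1-n/p}(\reals^n)$ continuously.

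For general $k$ I would descend to this base case by iterating the Gagliardo-Nirenberg-Sobolev inequality $\spW^{1,q}(\reals^n) \hookrightarrow \spL^{q^*}(\reals^n)$, $\tfrac{1}{q^*} = \tfrac1q - \tfrac1n$, which (applied to every derivative of order $\le m-1$) upgrades to $\spW^{m,q}(\reals^n) \hookrightarrow \spW^{m-1,q^*}(\reals^n)$ so long as $q < n$. Starting from $D^{r}u \in \spW^{k-r,p}(\reals^n)$ and performing $k-r-1$ such steps gives $D^{r}u \in \spW^{1,q}(\reals^n)$ with $\tfrac1q = \tfrac1p - \tfrac{k-r-1}{n}$; the hypothesis $\tfrac1p - \tfrac kn = -\tfrac{r+\alpha}{n}$ with $\alpha > 0$ forces $\tfrac1q < \tfrac1n$, i.e.\ $q > n$, so the base case applies and delivers $D^{r}u \in C^{0,\beta}$ with $\beta = 1 - \tfrac nq = (k - \tfrac np) - r = \alpha$; hence $u \in C^{r,\alpha}$, the norm bound following by composing the estimates of the successive steps. (Whenever an intermediate exponent equals the borderline value $n$ one first uses $\spW^{1,n} \hookrightarrow \spL^{s}$ for a large finite $s$ and reshuffles the remaining derivatives; the borderline case $\alpha = 1$ likewise needs only a minor adjustment. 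Neither affects the structure of the argument.)

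The one genuinely substantive step — and the place where care is needed — is Morrey's inequality, the passage from $\spL^p$ control of $\grad u$ to a pointwise H\"older modulus of continuity; everything else (the Stein-type extension on Lipschitz domains, the localisation on a compact manifold, and the Gagliardo-Nirenberg-Sobolev iteration) is bookkeeping layered on top of it. In particular the sharp threshold $p > n$ enters exactly as the convergence condition $(n-1)p' < n$ in the H\"older step, and the H\"older exponent is pinned down by the dimensional arithmetic $\tfrac1p - \tfrac kn = -\tfrac{r+\alpha}{n}$.
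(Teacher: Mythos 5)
The paper does not prove this statement at all: it is quoted as a standard result, with the proof explicitly deferred to the literature (Gilbarg--Trudinger and Evans). Your outline is precisely the standard argument from those references --- Morrey's inequality as the $k=1$ core, the Gagliardo--Nirenberg--Sobolev iteration to reduce to it, and extension plus chart localisation to handle Lipschitz domains and the compact manifold --- and the exponent bookkeeping ($q>n$ at the final step, $\beta=k-\tfrac np-r=\alpha$) checks out, so the proposal is correct and consistent with the proof the paper points to.
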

 
 \begin{theorem}[Lax-Milgram]
 If $B: \spHilbert \times \spHilbert \to \reals$ is a continuous bilinear form in a 
 Hilbert space $\spHilbert$ and there is a positive constant $\alpha$ such that,
 \begin{align}
     \abs{B(u,u)} \geq \alpha\norm{u}^2,
 \end{align}
 then, for any $u \in \spHilbert$ there is a unique $v \in \spHilbert$, such 
 that,
 \begin{align}
     B(v, x) = \lproduct{u, x}
     \qquad \forall x \in \spHilbert.
 \end{align}
 Moreover, 
 \begin{align}
     \norm{v} \leq \frac{1}{\alpha} \norm{u}. 
 \end{align}
\end{theorem}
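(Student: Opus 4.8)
The plan is to run the classical operator-theoretic proof, reducing everything to the Riesz representation theorem. First I would fix $w \in \spHilbert$ and note that, by continuity of $B$, the map $x \mapsto B(w, x)$ is a bounded linear functional on $\spHilbert$; Riesz representation then produces a unique element, call it $Aw \in \spHilbert$, with $B(w, x) = \lproduct{Aw, x}$ for every $x \in \spHilbert$. Bilinearity of $B$ in its first slot makes $w \mapsto Aw$ linear, and taking $x = Aw$ gives $\norm{Aw}^2 = B(w, Aw) \le \abs{B(w, Aw)} \le C\,\norm{w}\,\norm{Aw}$, where $C$ is the continuity constant of $B$; hence $A : \spHilbert \to \spHilbert$ is bounded with $\norm{Aw} \le C\,\norm{w}$.

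Next I would extract the two consequences of coercivity. Putting $x = w$ in the defining identity gives $\lproduct{Aw, w} = B(w, w)$, so the hypothesis $\abs{B(w,w)} \ge \alpha\,\norm{w}^2$ together with Cauchy--Schwarz yields $\norm{Aw} \ge \alpha\,\norm{w}$ for all $w$. This lower bound shows at once that $A$ is injective, and that its range is closed: if $Aw_n \to y$, then $(Aw_n)$ is Cauchy, the bound forces $(w_n)$ to be Cauchy, $w_n \to w$, and continuity of $A$ gives $y = Aw$ in the range.

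The key step, and the one I expect to be the main obstacle since the rest is bookkeeping, is surjectivity of $A$. I would argue by contradiction: if the (closed) range of $A$ were a proper subspace of $\spHilbert$, choose $z \neq 0$ in its orthogonal complement. Then $\lproduct{Az, z} = 0$ because $Az$ lies in the range, while at the same time $\lproduct{Az, z} = B(z, z)$ and coercivity forces $\abs{B(z,z)} \ge \alpha\,\norm{z}^2 > 0$ — a contradiction. Therefore $A$ is a bijection of $\spHilbert$.

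Finally, for a given $u \in \spHilbert$, the sought equation $B(v, x) = \lproduct{u, x}$ for all $x$ reads exactly $\lproduct{Av, x} = \lproduct{u, x}$ for all $x$, i.e. $Av = u$; bijectivity of $A$ delivers a unique $v = A^{-1}u$, which is existence and uniqueness at once, and $\alpha\,\norm{v} \le \norm{Av} = \norm{u}$ gives the stated estimate. If one preferred a route avoiding the orthogonal-complement argument, an alternative is to fix a small $\rho > 0$ and show the affine map $x \mapsto x - \rho(Ax - u)$ is a contraction on $\spHilbert$, whose fixed point solves $Av = u$; that version needs coercivity in the form $B(w,w) \ge \alpha\,\norm{w}^2$ rather than merely its absolute value, so I would keep the Riesz-plus-orthogonality argument as the primary one.
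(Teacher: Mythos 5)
Your proof is correct. The paper itself does not prove this theorem --- it only states it and defers to Gilbarg--Trudinger --- so there is nothing internal to compare against; your argument is the standard Riesz-representation proof (bounded operator $A$ with $B(w,x)=\langle Aw,x\rangle$, the coercivity bound $\|Aw\|\ge\alpha\|w\|$ giving injectivity and closed range, orthogonality giving surjectivity), and you handle the absolute-value form of the coercivity hypothesis correctly, including the observation that the contraction-mapping alternative would require the stronger sign-definite version.
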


The proof can be found in \cite[p.~83]{gilbarg2015elliptic}.

 \begin{theorem}[Schauder's estimates]
 If $\Omega' \Subset \Omega$ are open sets of any manifold $\manifold$, $f \in
\spH^k(\Omega)$ and  $u \in \spH^1(\Omega)$ is a weak solution to the equation
\begin{equation}
  \laplacian_\manifold u = f,
\end{equation}
then $u \in \spH^{k + 2}(\Omega')$ and
\begin{equation}
  \norm{u}_{\spH^{k + 2}(\Omega')} \leq C \brk(
  \norm{f}_{\spH^k(\Omega)} + \norm{u}_{\spL^2(\Omega)}),
\end{equation}
for some constant $C = C(k, \Omega, \Omega')$. In a compact manifold $\manifold$, 
we also have the estimate,
\begin{align}
\norm{u - \avg u}_{\spH^{k + 2}} \leq C\, \norm{f}_{\spH^k},    
\end{align}
for some constant $C = C(k, \Omega, \Omega')$, where 
$\avg u = \frac{1}{\abs{\manifold}}\,\int_\manifold u\,\vol$ is the trace of $u$.
\end{theorem}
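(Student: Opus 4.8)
The plan is to reduce everything to the standard interior $\spL^2$ regularity estimate worked out in a coordinate chart, and then bootstrap in the order of differentiability. I would start with the base case $k=0$. Fix nested open sets $\Omega'\Subset\Omega''\Subset\Omega$ and a cutoff $\zeta\in C_0^\infty(\Omega'')$ with $\zeta\equiv 1$ on $\Omega'$, and work inside a single chart, where $\laplacian_\manifold$ becomes a second order operator $a^{ij}\del_i\del_j+b^i\del_i$ with smooth coefficients and $(a^{ij})$ uniformly positive definite on $\overline{\Omega''}$. Testing the weak form of $\laplacian_\manifold u=f$ against the second difference quotient $-\del_k^{-s}(\zeta^2\del_k^{s}u)$, expanding, and using Cauchy--Schwarz with a small parameter lets me absorb $\norm{\zeta\,\del_k^{s}\grad u}_{\spL^2}^2$ into the left-hand side and bound $\norm{\del_k^{s}\grad u}_{\spL^2(\Omega')}$ uniformly in the step $s$; letting $s\to 0$ gives $\grad u\in\spH^1(\Omega')$, i.e. $u\in\spH^2(\Omega')$, with $\norm{u}_{\spH^2(\Omega')}\le C(\norm{f}_{\spL^2}+\norm{u}_{\spH^1})$. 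Running the Caccioppoli inequality on a slightly larger domain (test against $\zeta^2u$) controls $\norm{\grad u}_{\spL^2}$ by $\norm{f}_{\spL^2}+\norm{u}_{\spL^2}$, so the $\spH^1$ norm on the right can be replaced by the $\spL^2$ norm.

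For the inductive step, assume the estimate up to order $k-1$ and let $f\in\spH^k(\Omega)\subset\spH^{k-1}(\Omega)$; by the order $k-1$ hypothesis $u\in\spH^{k+1}$ on intermediate domains. For each coordinate derivative $v=\del_i u$, differentiating the equation gives $\laplacian_\manifold v=\del_i f-[\del_i,\laplacian_\manifold]u$, whose right-hand side lies in $\spH^{k-1}$ on intermediate domains, since $\del_i f\in\spH^{k-1}$ and the commutator pairs only second derivatives of $u$ with smooth coefficients, hence lies in $\spH^{k-1}$ because $u\in\spH^{k+1}$ there. Applying the order $k-1$ estimate to $v$ yields $v\in\spH^{k+1}$ on a smaller domain, so $\del_i u\in\spH^{k+1}(\Omega')$ for every $i$, i.e. $u\in\spH^{k+2}(\Omega')$; iterating over a finite chain $\Omega'=\Omega_0\Subset\Omega_1\Subset\cdots\Subset\Omega_{k}=\Omega$ produces the asserted bound. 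Carrying this out chart by chart and summing against a partition of unity subordinate to a finite cover of $\overline{\Omega'}$ transfers the estimate to the manifold $\manifold$.

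For the global statement on a compact $\manifold$ (where $\Omega=\Omega'=\manifold$ and there is no boundary) I would cover $\manifold$ by finitely many charts and sum the local estimates to get $\norm{u}_{\spH^{k+2}(\manifold)}\le C(\norm{f}_{\spH^k}+\norm{u}_{\spL^2})$. Since $\laplacian_\manifold$ annihilates constants, $u-\avg u$ solves the same equation and has zero mean, so the Poincar\'e inequality gives $\norm{u-\avg u}_{\spL^2}\le\lambda_1^{-1}\norm{\laplacian_\manifold u}_{\spL^2}=\lambda_1^{-1}\norm{f}_{\spL^2}$ with $\lambda_1$ the first nonzero eigenvalue of $\laplacian_\manifold$; feeding this back in absorbs the $\spL^2$ term and leaves $\norm{u-\avg u}_{\spH^{k+2}}\le C\norm{f}_{\spH^k}$.

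The main obstacle is the base case: making the difference-quotient argument rigorous for the variable-coefficient operator, in particular handling the extra commutator terms generated when $\del_k^{s}$ meets the coefficients $a^{ij},b^i$ and tuning the absorption parameter so that the dangerous term $\norm{\zeta\,\del_k^{s}\grad u}_{\spL^2}^2$ is genuinely swallowed rather than merely rearranged. Once the interior $\spH^2$ estimate is secured, the rest is bookkeeping with nested domains, partitions of unity, and, in the compact case, the spectral gap of the Laplacian.
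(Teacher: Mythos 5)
Your argument is correct and is precisely the standard interior $\spL^2$ regularity proof (difference quotients for the $\spH^2$ base case, Caccioppoli to trade $\spH^1$ for $\spL^2$ on the right, bootstrap via differentiating the equation, partition of unity, and the spectral gap/Poincar\'e inequality for the zero-mean global estimate on a compact manifold). The paper does not prove this theorem itself but defers to the literature (Gilbarg--Trudinger, Evans), and your proposal is essentially the proof found there, so there is nothing to reconcile.
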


Given a pair of open sets $\Omega'$, $\Omega$ on a topological space,
the notation $\Omega' \Subset \Omega$ means $\overline{\Omega'}
\subset \Omega$.

\subsection{Smooth parametric dependence of $h$}\label{subsec:smooth-par-dep}

The moduli space can be identified with $\brk(\surface^{k_+} \times 
\surface^{k_-}
\setminus \Delta_{k_+,k_-}) / \spS_{k_+} \times \spS_{k_-}$, where 
$\Delta_{k_+,k_-}$ is the big fat diagonal of intersecting 
vortices and antivortices and the product of symmetric groups 
$\spS_{k_+} \times \spS_{k_-}$ act permuting the components of 
$\surface^{k_+} \times \surface^{k_-}$. Let us focus in the open and
dense subset of 
non-overlapping cores. We can identify this space with
 $\surface^{k_+}\times\surface^{k_-} \setminus \Delta_{k_+,k_-}$. We
 aim to prove  
 that in this subspace, $\fnh$ depends smoothly on the positions of
 the cores. 
 
%  \begin{lemma}label{lem:h2-l2-iso}
%   If $V \in \spH^r(\surface)$ is a non-negative function
%   with only finite zeroes, such that if $\surface$ is the Euclidean
%   plane, then  
%   $\lim_{\abs x \to \infty}V(x) \in (0, 1]$, then
%   Schrodinger's operator 
%   %
%   \begin{align*}
%    \laplacian + V : \spH^{r + 2}(\surface) \to \spH^r(\surface),
%   \end{align*}
%   %
%   is a Hilbert space isomorphism.
% \end{lemma}

 \begin{lemma}\label{lem:h2-l2-iso}
   Let $\surface$ be either the plane or a compact surface.
   If $V \in C^\infty(\surface)$, is a non-negative smooth function
  with only finite zeros, such that if $\surface$ is the plane,
  $\lim_{\abs x \to \infty}V(x) \in (0, 1]$, and all the derivatives
  $\grad^kV$ are  bounded, then for any $r \geq 0$, Schrodinger's
  operator, 
  \begin{align}
   \laplacian + V : \spH^{r + 2}(\surface) \to \spH^r(\surface),
  \end{align}
  is a Hilbert space isomorphism.
\end{lemma}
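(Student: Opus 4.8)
The plan is to reduce everything to the case $r = 0$ and then bootstrap. For $r = 0$ I would realise $\laplacian + V$ variationally: on $\spH^1(\surface)$ introduce the symmetric bilinear form $B(u,v) = \lproduct{\grad u, \grad v}_{\spL^2} + \lproduct{Vu, v}_{\spL^2}$, which is continuous because $V$ is bounded, so that $u \in \spH^1$ solving $B(u,v) = \lproduct{f, v}_{\spL^2}$ for all $v \in \spH^1$ is exactly a weak solution of $(\laplacian + V)u = f$. If $B$ is coercive, $B(u,u) \geq \alpha\norm{u}_{\spH^1}^2$, then Lax--Milgram produces for each $f \in \spL^2$ a unique such $u$ with $\norm{u}_{\spH^1} \leq \alpha^{-1}\norm{f}_{\spL^2}$; then $\laplacian u = f - Vu \in \spL^2$ and Schauder's estimates upgrade $u$ to $\spH^2$ with $\norm{u}_{\spH^2} \leq C\norm{f}_{\spL^2}$. (On the plane the interior Schauder estimate is summed over a uniformly locally finite cover by unit balls, the constants being uniform by translation invariance of the flat Laplacian.) Thus $\laplacian + V\colon \spH^2(\surface) \to \spL^2(\surface)$ is a bounded bijection with bounded inverse.

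Everything therefore rests on coercivity of $B$, which is the step where the two geometries part and is the main obstacle. On a compact (connected, else argue componentwise) $\surface$ I would argue by contradiction: if $\norm{u_n}_{\spH^1} = 1$ and $B(u_n, u_n) \to 0$, then $\norm{\grad u_n}_{\spL^2} \to 0$ and $\int_\surface V\abs{u_n}^2 \to 0$; Rellich--Kondrachov extracts a subsequence $u_n \to u$ in $\spL^2$ with $\norm{u}_{\spL^2} = 1$, while the weak $\spH^1$ limit, also $u$, has $\grad u = 0$ by lower semicontinuity of the norm, so $u$ is constant; then $\int_\surface V\abs{u}^2 = 0$ forces $u \equiv 0$ because $V$ is positive off a finite set --- a contradiction. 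On the plane $\surface = \plane$ there is no global Rellich, so I would instead exploit the mass gap at infinity: choose $R, \delta > 0$ with $V \geq \delta$ on $\set{\abs x > R}$ and a cutoff $\chi \in C^{\infty}_0$ equal to $1$ on the disk of radius $R$ and supported in the disk of radius $2R$. Then $\norm{u}_{\spL^2(\abs x > R)}^2 \leq \delta^{-1}B(u,u)$ directly, while $\norm{u}_{\spL^2(\abs x < R)}^2 \leq \norm{\chi u}_{\spL^2}^2 \leq C\norm{\grad(\chi u)}_{\spL^2}^2$ by Poincaré's inequality on the disk of radius $2R$, and the last quantity is bounded by $\norm{\grad u}_{\spL^2}^2 + \norm{\grad\chi}_{\spL^\infty}^2\norm{u}_{\spL^2(R < \abs x < 2R)}^2 \leq (1 + \delta^{-1}\norm{\grad\chi}_{\spL^\infty}^2)B(u,u)$; adding these gives $\norm{u}_{\spL^2(\plane)}^2 \lesssim B(u,u)$, and since $\norm{\grad u}_{\spL^2}^2 \leq B(u,u)$ as well, $B(u,u) \geq \alpha\norm{u}_{\spH^1}^2$.

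For the bootstrap to arbitrary $r$, injectivity is immediate: if $(\laplacian + V)u = 0$ with $u \in \spH^{r+2} \subset \spH^1$ then $B(u,u) = 0$ and coercivity gives $u = 0$. For surjectivity together with the regularity $\spH^r \ni f \mapsto u \in \spH^{r+2}$, I would start from the $\spH^2$ solution furnished by the $r = 0$ case and iterate: $\laplacian u = f - Vu$, and since $V$ and all its derivatives are bounded, multiplication by $V$ preserves each $\spH^s$ with $s \geq 0$, so $u \in \spH^s$ implies the right-hand side lies in $\spH^{\min(r,s)}$, whence Schauder's estimates give $u \in \spH^{\min(r,s)+2}$; after finitely many steps $u \in \spH^{r+2}$, with $\norm{u}_{\spH^{r+2}} \leq C\norm{f}_{\spH^r}$ from the chained estimates (or the open mapping theorem). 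This establishes that $\laplacian + V\colon \spH^{r+2}(\surface) \to \spH^r(\surface)$ is a Hilbert space isomorphism. Finally, on a compact $\surface$ one may shorten the argument by noting that $V\cdot\colon\spH^{r+2}\to\spH^r$ is compact by Rellich--Kondrachov, so $\laplacian+V$ is Fredholm of index $0$ and injectivity alone suffices; this shortcut is unavailable on the plane, where $\laplacian$ is not Fredholm and the coercivity supplied by $\lim_{\abs x\to\infty} V > 0$ is genuinely needed.
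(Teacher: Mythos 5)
Your proposal is correct, and it follows the same overall strategy as the paper's proof (realise $\laplacian + V$ via the bilinear form $B$, prove coercivity, apply Lax--Milgram, then bootstrap with elliptic estimates), but two of your sub-arguments genuinely diverge. For coercivity on the plane, the paper splits $\plane$ into the exterior region where $V \geq a > 0$ and a disk, and on the disk runs the same compactness-contradiction argument (Banach--Alaoglu plus Rellich) that it uses on a compact surface; your cutoff-plus-Poincar\'e--Friedrichs argument replaces that contradiction entirely with a direct estimate, $\norm{u}^2_{\mathrm{L}^2(\abs{x}>R)} \leq \delta^{-1}B(u,u)$ and $\norm{\chi u}^2_{\mathrm{L}^2} \leq C\norm{\nabla(\chi u)}^2_{\mathrm{L}^2}$, which is constructive, gives an explicit constant, and notably does not use finiteness of the zero set of $V$ on the plane (only $V \geq 0$ and the positive limit at infinity); the compactness argument remains necessary only in the compact case. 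For the estimate $\norm{u}_{\mathrm{H}^{r+2}} \leq C\norm{(\laplacian+V)u}_{\mathrm{H}^r}$ on the plane, the paper differentiates the equation (so that $\del_j\varphi$ solves $(\laplacian+V)\del_j\varphi = \del_j\psi - \del_jV\,\varphi$), iterates on test functions, and concludes by density, whereas you sum interior Schauder estimates over a uniformly locally finite cover and use that multiplication by $V$ preserves each $\mathrm{H}^s$; both are standard and correct, the paper's route being more self-contained and yours requiring the (true but usually unproved-in-text) uniform global interior estimate on $\plane$. Your closing Fredholm remark for the compact case is also correct ($\laplacian$ has index $0$ on a compact surface and $V\cdot$ is a compact perturbation $\mathrm{H}^{r+2}\to\mathrm{H}^r$), and it is a legitimate shortcut the paper does not take.
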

 
\begin{proof}
  By the hypothesis on the potential function $V$, the operator
  $\laplacian + V$ is continuous. Let us define
  the bilinear form $B: \spH^1 \times \spH^1 \to \reals$ and the 
  linear functional $A: \spH^1 \to \reals$ such that,
  \begin{equation}
    \begin{aligned}
      B(u, v) &= \lproduct{\grad u, \grad v} + \lproduct{Vu,
        v},\\
      A(u) &= \lproduct{b, u},
    \end{aligned}
  \end{equation}
  where $b \in \spH^r$. By the Cauchy-Schwarz inequality, $A$ and $B$ are
  continuous.

  Firstly, we claim $B$ is coercive, i.e., there is a positive
  constant $\alpha$ such that, 
  \begin{equation}
    \alpha\,\norm{u}^2_{\spH^1} \leq B(u, u).
  \end{equation}
  
  Let $\Omega$ be either the compact surface $\surface$ or 
  an open disk $\disk_R(0)\subset \plane$ such that 
  there is a constant $a \in (0, 1]$ for which $V(x) \geq a > 0$ if $x
  \in \plane\setminus \Omega$. In the latter case,
  \begin{equation}\label{eq:pre-sch-pot-euc-bound}
    \norm{\grad u}^2_{\spL^2(\plane\setminus\Omega)} + \norm{V^{1/2}
      u}^2_{\spL^2(\plane\setminus \Omega)} \geq a
    \norm{u}^2_{\spH^1(\plane\setminus\Omega)}. 
  \end{equation}
  
  Assume towards a contradiction the existence of a sequence
  $\set{u_n} \subset \spH^1(\Omega)$, such that
  \begin{align}
\norm{u_n}_{\spH^1(\Omega)} &= 1, &
B(u_n, u_n) &\leq \frac{1}{n}.
\end{align}

By the Banach-Aloglu theorem, we can assume $u_n \wto u_{*}$ in
$\spH^1(\Omega)$, and by the Rellich theorem, we can assume the strong
convergence $u_n\to u_{*}$ in $\Lsp^2(\Omega)$. Since $B(u_n,u_n)\to
0$, 
\begin{equation}
  \norm{\grad u_n}_{\Lsp^2(\Omega)} \to 0,
\end{equation}
hence $u_*$ is constant almost everywhere, because by the strong convergence 
in $\spL^2$ and the convergence,
\begin{align}
\lproduct{u_n, u_*}_{\spH^1(\Omega)} \to \norm{u_*}^2_{\spH^1(\Omega)},
\end{align}
we deduce $\lproduct{\grad u_n, \grad u_*}_{\spL^2(\Omega)} \to \norm{\grad 
u_*}^2_{\spL^2(\Omega)}$, but $\lproduct{\grad u_n, \grad 
u_*}_{\spL^2(\Omega)}\to 0$, hence $\grad u_* = 0$ almost everywhere. On the 
other hand,
\begin{equation}
  \norm{V^{1/2} u_n}_{\spL^2(\Omega)} \to 0,
\end{equation}
and $V$ is positive except for a finite set, thence $u_{*} = 0$. We
conclude $u_n \to 0$ in $\spH^1(\Omega)$, but this is a
contradiction because each $u_n$ has unit norm. Therefore, there is a
positive 
constant $a'$ such that if $u \in \spH^1(\Omega)$, then 
$B(u, u) \geq a' \norm{u}^2_{\spH^1(\Omega)}$. If $\surface$ is
compact we conclude $B$ is coercive. If
$\surface$ is the plane, let us 
take $\alpha = \min(a, a')$. If $u \in \spH^1(\plane)$,
\begin{equation}
  B(u, u) \geq a \norm{u}^2_{\spH^1(\plane\setminus \Omega)} + a'
  \norm{u}^2_{\spH^1(\Omega)}  \geq \alpha \norm{u}^2_{\spH^1(\plane)}.
\end{equation}

Secondly, we prove the basic inequality,
\begin{align}\label{eq:energy-estimate-potential}
  \norm{u}_{\spH^{r + 2}} \leq C\,\norm{(\laplacian + V)\,u}_{\spH^r},
\end{align}
where $u \in \spH^{r + 2}$ is arbitrary. If $\surface$ is compact this
is by Schauder's estimates and coercivity of $B$. If $\surface$ is
the plane, we first prove the inequality for $\varphi \in
C^{\infty}_0$. Assume $r = 0$, by coercivity,
\begin{align}
  \norm{\varphi}_{\spH^1} \leq C\,\norm{(\laplacian + V)\varphi}_{\spL^2}.
\end{align}

We know in this case $\norm{\grad^2\varphi}_{\spL^2} = \norm{\laplacian
  \varphi}_{\spL^2}$~\cite[Thm. 9.9]{gilbarg2015elliptic}, hence,
\begin{align}
  \norm{\varphi}_{\spH^2} &\leq C\,(\norm{(\laplacian + V)\varphi}_{\spL^2} +
  \norm{\grad^2\varphi}_{\spL^2})\nonumber\\
  &= C\,(\norm{(\laplacian + V)\varphi}_{\spL^2} +
  \norm{\laplacian\varphi}_{\spL^2})\nonumber\\
  &\leq C\,(\norm{(\laplacian + V)\varphi}_{\spL^2} +
  \norm{\varphi}_{\spL^2})\nonumber\\
  &\leq C\,\norm{(\laplacian + V)\varphi}_{\spL^2}.
\end{align}

Let $\psi = (\laplacian + V)\,\varphi \in C^{\infty}_0$. Given the
test function $\varphi$,
$\del_j\varphi$ is a solution to the problem,
\begin{align}
  (\laplacian + V)\,\del_j\varphi = \del_j\psi - \del_jV\,\varphi.
\end{align}

By hypothesis, the derivatives of the potential are bounded. Applying
 the previous bound to $\grad\varphi$,
\begin{align}
  \norm{\grad\varphi}_{\spH^2} &\leq 
  C\,\norm{(\laplacian + V)\,\grad\varphi}_{\spL^2} 
  \nonumber\\
  &\leq C\,(\norm{\grad\psi}_{\spL^2} 
  + \norm{\varphi\,\grad V}_{\spL^2}) \nonumber\\
  &\leq C\,(\norm{\psi}_{\spH^1} + \norm{\varphi}_{\spL^2}) \nonumber\\
  &\leq C\,(\norm{(\laplacian + V)\,\varphi}_{\spH^1}).
\end{align}

We apply this argument recursively. Having found
bounds for $\varphi$ and $\grad \varphi$ up to some $r$,
\begin{align}
  \norm{\varphi}_{\spH^{r + 3}} &\leq \norm{\varphi}_{\spH^{r + 2}} +
  \norm{\grad\varphi}_{\spH^{r + 2}}\nonumber\\
  &\leq C(\norm{(\laplacian + V)\,\varphi}_{\spH^{r}} + 
  \norm{(\laplacian + V)\,\varphi}_{\spH^{r + 1}})\nonumber\\
  &\leq C\,\norm{(\laplacian + V)\,\varphi}_{\spH^{r + 1}}.
\end{align}

Thus, for all $r \geq 0$ there is a constant $C$ such that for any
$\varphi \in C^{\infty}_0$,
\begin{align}\label{eq:varphi-psi-hr-bound}
  \norm{\varphi}_{\spH^{r + 2}} 
  \leq C\,\norm{(\laplacian + V)\,\varphi}_{\spH^r}.
\end{align}

Since $C^{\infty}_0$ is dense in $\spH^r$ and $(\laplacian + V)$ is
continuous, we conclude~\eqref{eq:energy-estimate-potential} is also valid
on the plane. 

Thirdly, we prove $(\laplacian + V)$ is surjective. By the Lax-Milgram
theorem, for any $b \in \spH^r$ there is a 
unique  $u \in \spH^1$, such that $B(u, v) = A(v)$ for all 
$v \in \spH^1$. This function is a weak solution of the
equation, 
\begin{equation}\label{eq:linear-elliptic-problem-existence-lemma}
  (\laplacian + V)\,u = b.
\end{equation}

If $\surface$ is compact, elliptic regularity
implies $u$ is a strong solution in  $\spH^{r + 2}(\surface)$. We
prove this is also the case on
the plane. Let $\psi \in C^{\infty}_0$ and
denote by $\varphi$ the weak solution to the equation,
\begin{align}
  (\laplacian + V)\,\varphi = \psi.
\end{align}

Elliptic regularity and Sobolev's embedding imply $\varphi$ is a
strong solution in $C^{\infty}$. Notice $\varphi \in \spH^{r +
  2}$  $\forall r \geq 0$ because our previous argument can
still be applied to show~\eqref{eq:varphi-psi-hr-bound} holds. Let
$\left\{\psi_n\right\} \subset C^{\infty}_0 $ be a sequence of test
functions converging to $b$ in $\spH^r$. For each $\psi_n$ let
$\varphi_n \in C^{\infty}$ be a strong solution of the elliptic
problem. By~\eqref{eq:varphi-psi-hr-bound} $\left\{\varphi_n\right\}$ is a
Cauchy sequence in $\spH^{r +2}$, thus there is $u \in \spH^{r
+ 2}$ such that $\varphi_n \to u$. By continuity of $\laplacian + V$,
$u \in \spH^{r + 2}$ is a strong solution
of~\eqref{eq:linear-elliptic-problem-existence-lemma}.

Finally, \eqref{eq:energy-estimate-potential} implies $\laplacian +
V$ is injective. By the open mapping
theorem, the inverse is also continuous and the operator is an
isomorphism. 
\end{proof}

%%%%%%%%%%%%%%%%%%%%%%%%%%%%%%%%%%%%%%%%%
For a compact manifold in general, we can estimate the norm of
solutions to linear problems,

\newcommand \afn {a}
\newcommand \bfn {b}
\newcommand \ueps {u}
\newcommand \Hsp {\mathrm{H}}
\newcommand \Xsp{\mathcal{X}}
\newcommand \aForm {\mathrm{A}}
\newcommand \fFunc {\mathrm{B}}
\newcommand \intManifold {\int_{\manifold}}

\begin{proposition}
  \label{lem:linear-prob-cmp}
If $\manifold$ is a compact Riemannian manifold of dimension
$n$, $-\laplacian$  is the Laplace-Beltrami operator of the
metric, $\afn, \bfn \in \Lsp^2(\manifold)$ are functions such that
$\afn$ is non-negative and bounded with positive integral $\int_M \afn\,
\vol$, then the problem, 
\begin{equation}
  \label{eq:linear-problem}
  -\laplacian \ueps = \afn\ueps + \bfn,
\end{equation}
has exactly one solution $\ueps \in \Hsp^2(\manifold)$. Moreover,
there is a positive constant $C(a)$ such that,
\begin{equation}
  %\label{eq:linear-bound-lem-cmp}
  \norm{\ueps}_{\spH^2} \leq K\,\norm{\bfn}_{\spL^2},
\end{equation}
where the constant $K(\afn)$ depends on the bound for $\afn$ and
$\int_{\manifold}\afn \,\vol$. 
\end{proposition}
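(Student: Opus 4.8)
The plan is to run the same Lax--Milgram plus elliptic regularity scheme used in the proof of Lemma~\ref{lem:h2-l2-iso}; the one genuinely new point is that the potential $\afn$ is now merely bounded rather than smooth with finitely many zeros, so that lemma does not apply verbatim, and that the constants have to be tracked. I would first rewrite the equation as $(\laplacian + \afn)\,\ueps = -\bfn$ and observe that $\ueps \in \spH^1(\manifold)$ is a weak solution exactly when
\begin{align}
\fFunc(\ueps, v) := \lproduct{\grad\ueps, \grad v} + \lproduct{\afn\ueps, v} = -\lproduct{\bfn, v}, \qquad \forall\, v \in \spH^1(\manifold).
\end{align}
Since $\afn \in \spL^{\infty}(\manifold)$, the bilinear form $\fFunc$ is bounded on $\spH^1(\manifold) \times \spH^1(\manifold)$, and $v \mapsto -\lproduct{\bfn, v}$ is a bounded linear functional on $\spH^1(\manifold)$ of norm at most $\norm{\bfn}_{\spL^2}$.

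The heart of the matter, and the step I expect to be the only real obstacle, is the coercivity estimate: there is a constant $\alpha > 0$, depending only on $\manifold$, on the bound $\norm{\afn}_{\spL^{\infty}}$, and on $\intManifold \afn\,\vol$, such that $\fFunc(u, u) \geq \alpha\,\norm{u}_{\spH^1}^2$ for every $u \in \spH^1(\manifold)$. I would prove this by the contradiction-and-compactness argument of Lemma~\ref{lem:h2-l2-iso}, but letting the potential vary so that the resulting $\alpha$ is uniform over the class of admissible potentials: were there no such $\alpha$, one could find $\afn_n \geq 0$ with $\norm{\afn_n}_{\spL^{\infty}}$ bounded and $\intManifold \afn_n\,\vol$ bounded below, together with $u_n$ satisfying $\norm{u_n}_{\spH^1} = 1$ and $\fFunc_{\afn_n}(u_n, u_n) \to 0$; then $\norm{\grad u_n}_{\spL^2} \to 0$ and $\intManifold \afn_n u_n^2\,\vol \to 0$. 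By Rellich--Kondrachov, after passing to a subsequence, $u_n \to u_{*}$ in $\spL^2(\manifold)$ and $u_n \wto u_{*}$ in $\spH^1(\manifold)$, so $\grad u_{*} = 0$ and $u_{*}$ is a constant $c$; by Banach--Alaoglu, after a further subsequence, $\afn_n \wto \afn_{*}$ weakly-$*$ in $\spL^{\infty}(\manifold)$ (the dual of $\spL^1(\manifold)$), with $\afn_{*} \geq 0$ and $\intManifold \afn_{*}\,\vol$ still bounded below. Since $u_n^2 \to c^2$ in $\spL^1(\manifold)$ and $\norm{\afn_n}_{\spL^{\infty}}$ is bounded, one gets $\intManifold \afn_n u_n^2\,\vol \to c^2 \intManifold \afn_{*}\,\vol$, which forces $c = 0$; but then $\norm{u_n}_{\spL^2} \to 0$ and $\norm{\grad u_n}_{\spL^2} \to 0$ contradict $\norm{u_n}_{\spH^1} = 1$. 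Hence $\alpha = \alpha(\manifold, \norm{\afn}_{\spL^{\infty}}, \intManifold \afn\,\vol) > 0$ works.

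With coercivity in hand, the Lax--Milgram theorem produces a unique weak solution $\ueps \in \spH^1(\manifold)$ with $\norm{\ueps}_{\spH^1} \leq \alpha^{-1}\,\norm{\bfn}_{\spL^2}$, and uniqueness at the level of $\spH^1$ forces uniqueness at the level of $\spH^2$. To upgrade the regularity, note that $\afn\ueps + \bfn \in \spL^2(\manifold)$ because $\afn$ is bounded, so $\laplacian\ueps = -(\afn\ueps + \bfn) \in \spL^2(\manifold)$ in the weak sense; Schauder's estimate on the compact manifold $\manifold$ then gives $\ueps \in \spH^2(\manifold)$ together with
\begin{align}
\norm{\ueps}_{\spH^2} \leq C\,\brk(\norm{\afn\ueps + \bfn}_{\spL^2} + \norm{\ueps}_{\spL^2}) \leq C\,(\norm{\afn}_{\spL^{\infty}} + 1)\,\alpha^{-1}\,\norm{\bfn}_{\spL^2} + C\,\norm{\bfn}_{\spL^2},
\end{align}
which is exactly the asserted bound, with $K$ depending only on $\manifold$, $\norm{\afn}_{\spL^{\infty}}$ and $\intManifold \afn\,\vol$. (Alternatively one could observe that $\laplacian + \afn$ is self-adjoint, non-negative and has compact resolvent, and that its kernel is trivial because $\intManifold \afn\,\vol > 0$, so it is an isomorphism $\spH^2(\manifold) \to \spL^2(\manifold)$; this shortcut gives existence and uniqueness but not the quantitative estimate, so the coercivity route is the one I would follow.)
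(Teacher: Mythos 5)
Your proposal is correct, but it takes a genuinely different route from the one in the text. You apply Lax--Milgram directly on all of $\mathrm{H}^1(M)$ to the unmodified form $B(u,v)=\lproduct{\grad u,\grad v}+\lproduct{au,v}$, and you obtain coercivity by a compactness--contradiction argument (Rellich plus weak-$*$ compactness of the potentials), which in particular makes the coercivity constant uniform over all non-negative potentials sharing a common $\mathrm{L}^\infty$ bound and a common positive lower bound on $\int_M a$. The text instead splits $\mathrm{H}^1(M)=\mathcal{X}\oplus\mathbb{R}$ into zero-average functions plus constants, solves for the constant component explicitly by integrating the equation, and then applies Lax--Milgram on $\mathcal{X}$ to a modified bilinear form containing the rank-one correction $-(\int_M a)^{-1}\int_M au\cdot\int_M av$; coercivity there is immediate and constructive, since Cauchy--Schwarz shows the modified form dominates $\norm{du}^2_{\mathrm{L}^2}$ and Poincar\'e's inequality on the zero-average subspace does the rest, with no compactness needed. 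The trade-off is this: the text's argument yields an explicit formula for the constant $K$ in terms of $\norm{a}_{\mathrm{L}^2}$ and $\int_M a$, whereas yours produces a non-explicit constant but delivers uniformity over potential classes for free --- a property the statement itself only implicitly requires, and one that is obtained by essentially your compactness technique elsewhere in the text (compare lemma~\ref{lem:vol-unif-coercive}). The weak-to-strong upgrade via Schauder's estimate and the final $\mathrm{H}^2$ bound are handled the same way in both arguments, and your verification that $c=0$ forces the contradiction is sound since $\int_M a_n$ stays bounded away from zero along the sequence.
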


If $n = 2, 3$, by Sobolev's embedding, $\ueps \in C^0(\manifold)$, in
general, we only have $\ueps \in {\Hsp}^2(\manifold)$ unless we know
$\afn$ and $\bfn$ have more regularity. This problem has been
studied for different conditions on the coefficients in the references 
\cite{mitrea1999boundary,mitrea2000potential}.

\begin{proof}
  We will prove the existence of solutions to the linear problem 
  %equation
  %\eqref{eq:linear-problem}
  and continuity on the datum as an
    application of the Lax-Milgram theorem.
    
Let $\Xsp$ be the subspace of $\Hsp^1 (\manifold)$ of
functions of zero average, 

\begin{equation}
  %\label{eq:Xsp-def-manifold}
  \Xsp = \set{ u \in \Hsp^1(\manifold) \st \int_M u\,\vol = 0}.
\end{equation}

$\Hsp^1 (\manifold)$ can be decomposed as
\begin{equation}
%  \label{eq:cann-decomp-manifold}
  \Hsp^1 (\manifold) = \Xsp \oplus \reals.
\end{equation}

Finding a solution to equation~\eqref{eq:linear-problem} is equivalent
to find $(\ueps_0, c) \in\Xsp \oplus \reals$, such that
\begin{equation}
%  \label{eq:linear-prob-h10}
  -\laplacian \ueps_0 = \afn\ueps_0 + \afn\,c + \bfn.
\end{equation}

By the divergence theorem, the constant is
\begin{equation}
  \label{eq:c-value}
  c = \frac{
   -\int_{\manifold} (\afn\ueps_0 + \bfn) \,\vol
  }{
   \int_{\manifold}\afn \,\vol
  }.
\end{equation}

\eqref{eq:linear-problem} is equivalent to finding $\ueps_0 \in \Xsp$
such that 
\begin{equation}
  \label{eq:h10-equiv-problem}
  -\laplacian\, \ueps_0 = \afn\ueps_0 + \bfn - \frac{
    \afn \cdot \int_{\manifold} (\afn\ueps_0 + \bfn)\, \vol
  }{
    \int_{\manifold}\afn \,\vol
  }.
\end{equation}

Let us define the operators $\aForm:\Xsp\times\Xsp \to \reals$, $\fFunc:
\Xsp \to \reals$, as

\begin{align}
\aForm(u, v) &= \lproduct{ d\ueps, dv} + \lproduct{\afn\ueps, v}
  - \frac{1}{\intManifold\afn\,\vol}\,
   \intManifold \afn \ueps\,\vol \cdot\intManifold \afn v\,\vol,\\
\fFunc(v) &=
\frac{
    1
  }{
    \intManifold\afn \,\vol
  }
  \intManifold
  \left(
    \afn\cdot\intManifold\bfn\,\vol - \bfn \cdot\intManifold\afn\,\vol
  \right)\cdot v\,\vol. 
\end{align}

Equation (\ref{eq:h10-equiv-problem}) can be rewritten in variational
form as the problem of finding $\ueps_0 \in\Xsp$, such that for any $v
\in \Xsp$,
\begin{equation}
%  \label{eq:variational-problem}
  \aForm(\ueps_0, v) = \fFunc(v).
\end{equation}

$\fFunc$ is bounded and $\aForm$ continuous because
$\afn, \bfn \in \Lsp^2(\manifold)$. By Cauchy-Schwartz's inequality,
\begin{equation}
%  \label{eq:cauchy-schwartz}
  \abs*{\intManifold \afn \ueps\, \vol} \leq
  \norm{\sqrt\afn}_{\spL^2}\,\norm{\sqrt\afn\ueps}_{\spL^2},
\end{equation}
hence,
\begin{equation}
%  \label{eq:a-bil-coercive}
  \begin{aligned}[b]
      \aForm(u, u) &= \norm{d\ueps}^2_{\spL^2}
  + \frac{
    1
  }{
    \intManifold a \,\vol
  } \brk(
    \intManifold a \ueps^2 \,\vol \cdot \intManifold a \,\vol
    -  \brk( \intManifold a\ueps \,\vol)^2
    )\\
   &= \norm{d\ueps}^2_{\spL^2} + \frac{
    1
  }{
    \intManifold a \,\vol
  } \brk( 
    \norm{\sqrt{\afn} u}^2_{\spL^2}\,\norm{\sqrt\afn}^2_{\spL^2} - 
    \lproduct{\afn, u}^2_{\spL^2}
  )\\
   &\geq \norm{d\ueps}^2_{\spL^2}.
  \end{aligned}
  \end{equation}

  By Poincar\'e's inequality, there is a positive
  constant $\alpha$, such that
  \begin{equation}
%    \label{coercivity-A-H10}
    \alpha\,\norm{\ueps}_{\spH^1}^2 \leq 
    \norm{d\ueps}^2_{\spL^2} \leq  \aForm(\ueps, \ueps). 
  \end{equation}

 Therefore, there exists a unique solution $\ueps \in
\Hsp^1 (\manifold)$ to \eqref{eq:linear-problem}. By standard elliptic
regularity estimates, $u \in \Hsp^2(\manifold)$. By the Lax-Milgram theorem,

\begin{equation}
%  \label{eq:u0-F-bound}
  \norm{\ueps_0}_{\spH^1} \leq \frac{
    1
  }{
    \alpha\intManifold a \,\vol
  }\,\norm*{
    a\cdot\intManifold b\,\vol - b\cdot\intManifold a \,\vol
  }_{\spL^2}. 
\end{equation}

By equation~(\ref{eq:c-value}),
\begin{equation}
%  \label{eq:c-bound}
  \abs{c} \leq \frac{
    1
  }{
    \intManifold \afn \,\vol
  } \brk(
    \norm{\afn}_{\spL^2}\cdot \norm{\ueps_0}_{\spL^2} +
    \abs*{\intManifold b \,\vol}
  ).
\end{equation}

Therefore, $\ueps$ is bounded in $\Hsp^1 (\manifold)$ by,
\begin{equation}
%  \label{eq:u-bound}
  \norm{\ueps}_{\spH^1} \leq
  \frac{
    C
  }{
    \intManifold \afn \,\vol
  } \brk(
    \frac{
      \norm{\afn}_{\spL^2}^2
    }{
      \intManifold \afn \,\vol
    }
    + \norm{\afn}_{\spL^2}
    + 1
  )
  \norm{\bfn}_{\spL^2},
\end{equation}
for some suitable constant $C$. By the elliptic estimate  we conclude
$\ueps \in \Hsp^2(\manifold)$ and since $\afn$ is bounded,
\begin{align}
  \norm{u}_{\spH^2} &\leq K\,\left({\norm{\laplacian u}_{\spL^2} + 
  \norm{u}_{\spL^2}}\right)\nonumber\\
  &\leq K\,\left(\norm{a \ueps_0}_{\spL^2} + \norm{b}_{\spL^2} +
    \norm{\ueps_0}_{\spL^2}\right)\nonumber\\
  &\leq K\,\left(\norm{\ueps_0}_{\spL^2} + \norm{b}_{\spL^2}\right)\nonumber\\
  &\leq K\,\norm{b}_{\spL^2}.
\end{align}
where the constant $K$ was renamed from one inequality to the following.
\end{proof}

\let \afn \undefined
\let \bfn \undefined
\let \ueps \undefined
\let \Hsp \undefined
\let \Xsp \undefined
\let \aForm \undefined
\let \fFunc \undefined
\let \intManifold \undefined

%%%%%%%%%%%%%%%%%%%%%%%%%%%%%%%%%%%%%%%%%%

We prove smooth dependence on parameters by the implicit function
theorem. If $\surface = \plane$, we define,
\begin{align}\label{eq:v-g-def}
v_c &=  - \log \brk(1 + \frac{1}{\abs{x -
c}^2}), &
g_c &= - \frac{4}{ \brk(1 + \abs{x - c}^2)^2},
\end{align}
then 
\begin{equation}
  -\laplacian v_c = g_c + 4\pi\delta(x - c). 
\end{equation}

If $\surface$ is compact, we rely on the existence of Green's 
function~\cite{aubin2013some}. This is a smooth symmetric function 
$G:\surface\times \surface\setminus \Delta \to \reals$, such that,
\begin{align}
    -\laplacian_x G(x,y) &= \delta_y - \frac{1}{\abs\surface}, &
    \int_\surface G(x,y) \vol_x &= 0.
\end{align}

Notice that we have chosen the oposite sign for $G(x,y)$ with respect
to~\cite{aubin2013some}. In this case, we define,  
\begin{align}\label{eq:intro-vc-compact-case}
    v_c &=  4\pi\,G(x, c).
\end{align}

Given tuples $\vb{p} = (p_1,\ldots, p_{k_+})$, $\vb{q} =
(q_1,\ldots, q_{k_-})$ of non intersecting vortices and antivortices, let 
\begin{align}
  \vb c &= (p_1, \ldots, p_{k+}, q_1, \ldots, q_{k_-}) \in \surface^{k_+ + 
  k_-},\\
  v &= \sum_j\sign_jv_{c_j}.\label{eq:fn-v}\\
  g &= \begin{cases}
  \sum_j \sign_jg_{c_j}, & \surface = \plane,\\
  -\frac{4\pi}{|\surface|} (k_+ - k_-), & \surface\,\text{ compact}.
  \end{cases}
\end{align}

%If $\surface = \plane$ we define $g
% = \sum_j \sign_jg_{c_j}$ and If $\surface$ is compact we 
%let $g = -4\pi (k_+ - k_-)\abs{\surface}^{-1}$. 
Let $\tilde \fnh = \fnh - v - \cmu$, then the Taubes equation is equivalent to 
its regularised counterpart,
\begin{equation}
  \label{eq:taubes-reg}
    -\laplacian \tilde \fnh = \fnF(\tilde \fnh + v + \cmu) - g.
\end{equation}

If $\surface$ is the euclidean plane, we add the boundary condition,
\begin{equation}
    \lim_{\abs x \to \infty} \tilde h = 0.
\end{equation}

% \begin{theorem}\label{thm:h-reg-param-dep}
% Let $\vb p = (p_1, \ldots, p_{k_+}) \in \vset$, 
% $\vb q = (q_1, \ldots, q_{k_-}) \in \avset$ be sequences of non-intersecting 
% cores in $\surface$, either a compact surface or the Euclidean plane. 
% If $h(x; \vb p, \vb q)$ is the solution to Taubes' equation  for this 
% configuration, then for any families $U_r \subset\surface$, $r = 1,\ldots, k_+$, 
% $V_s \subset\surface$, $s = 1, \ldots, k_-$, such that $U_r \cap V_s = \emptyset$ 
% for all $r, s$, the map 
% %
% \begin{align*}
%   \surface \setminus 
%   \overline{(\cup_r U_r) \cup (\cup_sV_s)} \times_r U_r \times_s
% V_s \to \reals, &&
% (x, \vb p, \vb q) \mapsto h(x; \vb p, \vb q),
% \end{align*}
% %
% is smooth.
% \end{theorem}

\begin{theorem}\label{thm:h-reg-param-dep}
Let $\vb p = (p_1, \ldots, p_{k_+})$, $\vb q = (q_1, \ldots, q_{k_-})$
be sequences of non-intersecting simple cores in $\surface$, either a compact
surface or the Euclidean plane. 
Let us denote by $h(x; \vb p, \vb q)$ the solution to the Taubes equation
for this configuration. For any families $U_r \subset\surface$, $r =
1,\ldots, k_+$, $V_s \subset\surface$, $s = 1, \ldots, k_-$,
of open neighbourhoods on $\surface$ such that
$U_r \cap V_s = \emptyset$, let $W = (\cup_{r}U_r)
\bigcup (\cup_sV_s)$, then the restriction
\begin{align}
  h: \left(\surface \setminus 
  \overline{W}\right) \times_r U_r \times_s
V_s \to \reals, 
\end{align}
is smooth.
\end{theorem}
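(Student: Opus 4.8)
The plan is to apply the implicit function theorem in Banach spaces, using the regularised Taubes equation \eqref{eq:taubes-reg} as the defining relation. Fix a configuration $(\vb p_0, \vb q_0)$ of non-intersecting simple cores with $\vb p_0 \in \times_r U_r$, $\vb q_0 \in \times_s V_s$, and let $\tilde h_0$ be the corresponding solution of the regularised equation. The key observation is that the functions $v_{c_j}$ (and hence $v$ and, on the plane, $g$) depend smoothly — in fact real-analytically away from the singularity — on the parameter $c_j$, as long as we evaluate them on the fixed open set $\surface \setminus \overline{W}$, which by hypothesis stays uniformly away from every possible core position. So the only genuinely infinite-dimensional issue is the dependence of $\tilde h$ on $(v, g)$, equivalently on the parameters.

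First I would set up the map. Let $\mathcal{P} = \left(\surface \setminus \overline{W}\right)^{\mathrm{c}}$-neighbourhood parametrised by $(\vb p, \vb q) \in \times_r U_r \times_s V_s$, and consider, for a suitable $r \geq 0$ (say $r$ large enough that $\spH^{r+2} \hookrightarrow C^{2,\alpha}$ by Sobolev's embedding),
\begin{align}
  \Phi: \left(\times_r U_r \times_s V_s\right) \times \spH^{r+2}(\surface) &\to \spH^r(\surface),\\
  \Phi(\vb p, \vb q, \tilde h) &= \laplacian \tilde h + \fnF(\tilde h + v + \cmu) - g,
\end{align}
where on the plane one works instead with the weighted/decaying spaces appropriate to the boundary condition $\tilde h \to 0$, exactly as in Lemma~\ref{lem:h2-l2-iso}. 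Since $\fnF$ and all its derivatives are bounded (as recorded just before the statement of the governing elliptic problem), and since $v + \cmu$ depends smoothly on $(\vb p, \vb q)$ with values in $C^\infty$ on any fixed compact region while having controlled behaviour near the cores, the Nemytskii-type map $\tilde h \mapsto \fnF(\tilde h + v + \cmu)$ is smooth from $\spH^{r+2}$ to $\spH^r$; this is where the Sobolev embedding into a Hölder space is used so that composition with the smooth bounded function $\fnF$ is differentiable. Thus $\Phi$ is a smooth map of Banach manifolds, and $\Phi(\vb p_0, \vb q_0, \tilde h_0) = 0$.

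Next I would check the hypothesis of the implicit function theorem: the partial derivative of $\Phi$ in the $\tilde h$ slot at the solution is
\begin{align}
  D_{\tilde h}\Phi\big|_{(\vb p_0,\vb q_0,\tilde h_0)}(\psi) = \left(\laplacian + \fnF'(h_0)\right)\psi = \left(\laplacian + \fnV(h_0)\right)\psi,
\end{align}
and $\fnV(h_0) = 4e^{h_0}/(1+e^{h_0})^2$ is smooth, non-negative, with only finitely many zeros (at the cores, where $h_0 \to \pm\infty$ so $\fnV(h_0)\to 0$), bounded, with bounded derivatives, and on the plane $\fnV(h_0) \to \fnV(\cmu) = (1-\tau^2)/1 \in (0,1]$ by the asymptotic decay $h_0 \to \cmu$. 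Hence Lemma~\ref{lem:h2-l2-iso} applies verbatim and $\laplacian + \fnV(h_0): \spH^{r+2} \to \spH^r$ is a Banach space isomorphism. The implicit function theorem then produces a smooth map $(\vb p, \vb q) \mapsto \tilde h(\,\cdot\,; \vb p, \vb q) \in \spH^{r+2}(\surface)$ locally near $(\vb p_0,\vb q_0)$, and since $\tilde h_0$ was an arbitrary solution and the regularised Taubes equation has a unique solution (cited existence/uniqueness), these local maps patch to a globally defined smooth map on $\times_r U_r \times_s V_s$. Finally, $h = \tilde h + v + \cmu$, and evaluating at $x \in \surface \setminus \overline{W}$ the term $v(x) = \sum_j \sign_j v_{c_j}(x)$ is a smooth function of $(x, \vb p, \vb q)$ there (no singularities are hit), while $\tilde h$ is smooth in $(\vb p, \vb q)$ with values in $\spH^{r+2} \hookrightarrow C^{2}$, and bootstrapping interior elliptic regularity on $\surface\setminus\overline{W}$ upgrades this to joint smoothness in $(x, \vb p, \vb q)$. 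This gives the claimed smoothness of $h$ on $\left(\surface \setminus \overline{W}\right) \times \times_r U_r \times_s V_s$.

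\textbf{Main obstacle.} The technical heart is verifying that $\Phi$ is genuinely $C^\infty$ (not merely $C^1$) as a map between the chosen Banach spaces, in particular that the parameter dependence through $v_{c_j}$ — which is singular precisely at $c_j$ — is smooth into $\spH^{r+2}$ or the relevant weighted space; one must exploit that differentiating $v_{c_j}$ in $c_j$ produces terms no more singular than $v_{c_j}$ itself and that these singularities are absorbed by the regularisation, and on the plane one must track the decay at infinity carefully so that all the relevant norms remain finite. Everything else is a routine application of Lemma~\ref{lem:h2-l2-iso}, the implicit function theorem, and elliptic bootstrapping.
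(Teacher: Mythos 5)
Your proposal is correct and follows essentially the same route as the paper: both define the operator $\laplacian \tilde h + F(\tilde h + v + \cmu) - g$ on Sobolev spaces, verify its smoothness in the parameters via the smooth dependence of $e^{v_c}$ and $g$ on the core positions, identify the linearisation as $\laplacian + \fnV(h)$ and invoke Lemma~\ref{lem:h2-l2-iso} to apply the implicit function theorem, then recover smoothness of $h = \tilde h + v + \cmu$ away from $\overline W$ by Sobolev embedding. The technical point you flag as the main obstacle (smoothness of the Nemytskii-type map) is exactly what the paper addresses through the boundedness and decay properties of $f(r) = (1+r^2)^{-2}$ and the regularity of Green's function off the diagonal.
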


\begin{proof}
Consider the function

\begin{equation}
%\label{eq:f-trick}
f(r) = \frac{1}{(1 + r^2)^2}, \qquad r \in  \reals,
\end{equation}
this function has the property that it and all of its derivatives are
dominated by $r^{-4}$ as $r \to \infty$. This guarantees that,
\begin{equation}
%\label{eq:g0-reg}
g \in \spH^r(\plane), \qquad \forall r \geq 0.
\end{equation}
and that as a function $\reals^{2n} \to \spH^r(\plane)$,
$g$ varies smoothly. We note that in the plane, the function 
\begin{equation}
%\label{eq:e-u0-reg}
e^{v_c} = \frac{\abs{x - c}^2}{1 + \abs{x - c}^2},
\end{equation}
and all of its derivatives are bounded, and that $e^{v_{\vb p}}$ and
$e^{v_{\vb q}}$ have no common zeros if $\vb{p}$ and $\vb{q}$ have 
no common elements. In the compact case, it is known that for fixed $y$, 
$G(x,y)$ has a singularity at $y$, however, locally in any open disk 
$\disk_r(y)$ of smaller radius than the injectivity radius, $G(x,y)$ has the 
asymptotic expansion,
\begin{align}\label{eq:green-logd-smoothpart}
    G(x, y) = \frac{1}{2\pi}\,\log \brk(d(x, y)) + \tilde{G}(x,y),
    \qquad \forall x \in \disk_r(y), 
\end{align}
where $d(x,y)$ is the Riemannian distance and $\tilde G(x, y)$ is a smooth 
function defined on the disk. Hence $e^{v_c}$ is also
smooth and well defined on $\surface$.

In any case, $F(u + v + \cmu) \in \spH^r(\surface)$ for any 
$u \in \spH^r(\surface)$. Let 
$\Delta\surface = \surface^{k_+}\times\surface^{k_-}\setminus \Delta_{k_+,k_-}$, 
then the function,
\begin{align}
%\label{eq:u0-exp-h}
\Delta\surface \times \spH^r(\surface) \to \spH^r(\surface), &&
(\vb p, \vb q, u) \mapsto F(u + v + \cmu) - g,
\end{align}
is smooth. Therefore, the operator
\begin{equation}
%\label{eq:taubes-op-def}
\opT\, u = \laplacian u + F(u + v + \cmu) - g, \qquad u \in
\spH^{r + 2}(\surface),  
\end{equation}
is a well defined, smooth operator $\Delta\surface \times \spH^{r +
  2}(\surface) \to \spH^r(\surface)$. If $\tilde \fnh$ is a solution
to 
the regularised Taubes equation, then $\del_{\tilde h}\opT:
\spH^{r+2} \to \spH^r$ is the operator,
\begin{equation}
%\label{eq:taubes-op-drv}
\brk(\del_{\tilde \fnh}\opT) \delta u 
= (\laplacian + \fnV(\tilde\fnh + v + \cmu))\, \delta u,
\end{equation}
where as a function of $\surface$,
\begin{equation}
  V(x) = \fnV\brk(\tilde \fnh + v + \cmu) \in C^r
\end{equation}
is a positive function whose zero set is $\vset \cup \avset$ and 
if $\surface$ is $\plane$, has
the property that $\lim_{\abs x \to \infty} \fnV(x) = (1 - \tau^2)$.
  By lemma~\ref{lem:h2-l2-iso}, $\del_{\tilde h}\opT$ is an
  isomorphism, by the implicit  
function theorem, the mapping $(\vb p, \vb q) \mapsto \tilde \fnh$ is
smooth as a map $\Delta \surface \to \spH^r(\surface)$. By Sobolev's embedding,
it is also smooth as a map 
$\Delta\surface \to C^{r -2}(\surface)$ for all $r \geq 2$. Hence, it 
depends smoothly on $(\vb p, \vb q)$. 
Finally, since the solution to the Taubes equation is 
$u = \tilde \fnh + v + \cmu$, we have that for any
neighbourhood $W$ of $\vset\cup\avset$, the restriction $u:
\surface\setminus \conj W \to \reals$ depends smoothly on the cores.
\end{proof} 

\begin{corollary}
    Let $U$ be either  an open and dense subset of the compact surface 
    $\surface$ or the euclidean plane. In any holomorphic chart $\varphi: U \to 
    \cpx$ containing the cores, the localization formula~\eqref{eq:loc-formula} 
    can be extended continuously to the coincidence set.
\end{corollary}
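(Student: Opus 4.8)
The plan is to reduce the statement to the behaviour of the regularised Taubes potential and then invoke (an extension of) Theorem~\ref{thm:h-reg-param-dep}. By the coincidence set I mean the locus inside $\mathrm{Sym}^{k_+}\surface\times\mathrm{Sym}^{k_-}\surface$ where two or more cores of the \emph{same} sign coalesce; since vortices and antivortices cannot coalesce, this is the only coincidence locus lying in the closure of $\moduli^{k_+,k_-}$. As $\kmetric$ in \eqref{eq:kmetric} is a Hermitian form, it suffices to show that its coefficient functions $G_{ij}(\vb z)=e^{\Lambda(z_i)}(1-\sgn_i\tau)\delta_{ij}+\del_{z_i}b_j(\vb z)$, defined on the open set of pairwise distinct same-sign cores, extend continuously across this locus; the first summand is manifestly continuous, so the issue is $\del_{z_i}b_j$.

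First I would isolate the singular part of $b_j$. Writing $\fnu(\varphi^{-1}(z))=\sum_k\sgn_k\log\abs{z-z_k}^2+\fnu^\flat(z;\vb z)$ with $\fnu^\flat$ smooth near the cores, the definition \eqref{eq:pre-bcoef} gives $b_j=2\sum_{k\neq j}\sgn_j\sgn_k(\conj z_j-\conj z_k)^{-1}+2\sgn_j\,\conj\del_z\fnu^\flat(z_j;\vb z)$. The explicit sum depends on the core tuple only through the antiholomorphic variables $\conj z_1,\dots,\conj z_{k_++k_-}$, hence is annihilated by the \emph{holomorphic} derivative $\del_{z_i}$. Therefore $\del_{z_i}b_j=2\sgn_j\,\del_{z_i}\!\brk(\conj\del_z\fnu^\flat(z_j;\vb z))$, a combination of first-order derivatives of $\fnu^\flat$ (in $z$ and in the core positions) evaluated at the cores, together with the explicit quantity $d_j=-\tfrac12 e^{\Lambda(z_j)}(1-\sgn_j\tau)$ found in the proof of Theorem~\ref{thm:bps-loc-formula}, which is continuous in $z_j$ even on the coincidence set. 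So the corollary follows once $\fnu^\flat$, together with its first core-derivatives, extends continuously across the coincidence set, locally uniformly away from the cores.

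This extension of Theorem~\ref{thm:h-reg-param-dep} to coalescing cores is the analytic core. Away from the coincidence set it is that theorem. To reach the set I would rerun the implicit-function argument with the \emph{elementary symmetric functions} of each group of cores as parameters, using that the singular pieces of the regularisation can be absorbed into data jointly smooth across coalescences: on the plane the combinations $e^{v_c}=\abs{x-c}^2(1+\abs{x-c}^2)^{-1}$ and $g_c=-4(1+\abs{x-c}^2)^{-2}$ are smooth in $(x,c)$, and on a compact surface the logarithmic part of Green's function in \eqref{eq:green-logd-smoothpart} plays the same role; moreover the linearisation $\laplacian+\fnV(\fnu)$ of \eqref{eq:taubes-reg} remains an isomorphism at a coalesced configuration because $\fnV(\fnu)$ is non-negative with only the (finite) coalesced core set as zeros, so Lemma~\ref{lem:h2-l2-iso} still applies. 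Hence $\fnu^\flat$ is smooth in the symmetric functions, and pulling back by the polynomial (and at the coincidence set degenerate) map $\vb z\mapsto(\text{symmetric functions})$ shows $\del_{z_i}\fnu^\flat$, and therefore each $G_{ij}$, extends continuously to the coincidence set.

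A variant that needs only continuity of $\fnu$ runs through the energy functional: by \eqref{eq:gp-chi-fnu} and \eqref{eq:eta-formula} the kinetic energy of Theorem~\ref{thm:bps-loc-formula} equals $\tfrac12\int_\surface\brk(\fnV(\fnu)\abs{\eta}^2+\abs{\dot\gp}^2)\,\vol$ with $\dot\gp=d\,\Im\eta-*d\,\Re\eta$ and $\eta$ the solution of \eqref{eq:eta}, whose source $4\pi\sum_j\sgn_j\dot z_j\,\del_{z_j}\delta_{p_j}$ depends continuously on the core positions even at a collision. Coalescence of same-sign cores produces a genuine limiting Taubes configuration, so uniform a priori barriers, Schauder estimates and uniqueness give $\fnu\to\fnu^\infty$ in $C^r_{\mathrm{loc}}$ away from the cores, hence $\eta$ and $\dot\gp$ converge there too; near the cores $\fnV(\fnu)\abs{\eta}^2$ is uniformly bounded (the poles of $\eta$ are cancelled by the zeros of $\fnV(\fnu)$) and the would-be singular part of $\dot\gp$ is holomorphic and cancels, so dominated convergence gives convergence of $\Energy$ to a Hermitian form depending only on the coalesced configuration. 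Its coefficients are the $G_{ij}$, which therefore extend continuously, and \eqref{eq:kmetric} extends as claimed. The main obstacle in either route is the uniform control of $\fnu$ — and of $\eta$ — under coalescence; once that is secured, the cancellation of the antiholomorphic singular parts makes the extension routine.
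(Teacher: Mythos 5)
Your proposal is correct and follows essentially the same route as the paper's proof: decompose $b_j$ into an explicitly antiholomorphic singular part (coming from the $\log\abs{z-z_k}^2$ pieces of Green's function) plus a regular part, observe that the holomorphic derivative $\del_{z_i}$ annihilates the singular part, and conclude from continuity of the regular data across same-sign coalescence. The only difference is one of care rather than of method: you explicitly flag and justify the extension of the parametric-dependence result of Theorem~\ref{thm:h-reg-param-dep} to coalescing same-sign cores (via the observation that Lemma~\ref{lem:h2-l2-iso} still applies there), a point the paper's proof uses silently when it asserts that $\del_{z_j}\conj\del_z\tilde\fnu_\varphi(z_i,\vb z)$ varies continuously.
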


\begin{proof}
%Let $\Delta U = U^{k_+}\times U^{k_-} 
%\setminus\Delta_{k_+,k_-}$ and let $\vb c = (p_1, \ldots, p_{k_+}, q_1, 
%\ldots, 
%q_{-}) \in \Delta U$ be 
%any sequence of non-intersecting cores. Pushing forward Green's functions 
%defined in the previous theorem, there are smooth functions 
%$G: (U \times U)\setminus\Delta \to \reals$ and 
%$\tilde\fnh: U \times \Delta U \to \reals$, such that,
%%
%\begin{align*}
%\fnh(x) = \sum_j 4\pi\sgn_{c_j}\,G(x, c_j) + \tilde\fnh(x, \vb c) + \cmu, 
%\qquad x \in U \setminus \cup_j\set{c_j}.
%\end{align*}
%
%Let $e^{\cf(x)}\brk(dx_1^2 +
%dx_2^2)$ be the local expression of the metric in the given chart. Recall if 
%$\surface$ is the Euclidean plane,  
%%
%\begin{align*}
%    G(x, y) = -\frac{1}{2\pi} \log \brk(1 + \frac{1}{\abs{x - y}^2}),
%\end{align*}
%%
%and if $\surface$ is a compact surface, for any  
%neighbourhood of $V \subset U$ of radius smaller than the injectivity radius, 
%%
%\begin{align*}
%    G(x, y) &= \frac{1}{2\pi}\,\log(d(x, y)) + \tilde G(x, y),
%    \qquad x, y \in V
%\end{align*}
%%
%where $d(x, y)$ is the Riemannian distance. In this neighbourhood, the length 
%minimizing 
%geodesic is the ray joining $y$ to $x$, which can be parametrised as 
%$s\,x + (1 - s)\, y$, 
%$s \in [0, 1]$. Hence,
%%
%\begin{align*}
%    d(x, y) &= \int_0^1 \exp\brk({\frac{\cf\brk(sx + (1 - s)y)}{2}})\abs{x - 
%y}\,ds\\
%    &= \abs{x - y}\,\int_0^1 \exp\brk({\frac{\cf\brk(sx + (1 - s)y)}{2}})\,ds.
%\end{align*}
For any given core $p_j \in U$, let $z_j = \varphi(p_j) \in \cpx$. 
We assume each $p_j$ is simple and that all the $z_j$ are contained in a 
bounded domain $D \subset \cpx$. This assumption is superfluous for the
 Euclidean 
plane but for a compact surface is necessary for the existence of a 
smooth function $H: D \times D \to \reals$ such that for any $z, w \in D$,
\begin{align}
    G(\varphi^{-1}(z), \varphi^{-1}(w)) = 
    \frac{1}{2\pi}\,\log\,\abs{z - w} + H(z, w).
\end{align}
%
%valid in a small neighbourhood of $y$, where $H(x, y)$ is a smooth function 
%of $x$ and $y$. 
%In order to prove the result, we show without loss of
%generality that  
%if $c_1, c_2$ are of the same type, then for any $i, j \in \set{1, \ldots, k_+ 
%+ k_-}$,
%the limit $\lim_{c_1 \to c_2} \del_{p_i} b_j(\vb c)$ exists.
Assume without loss of generality $\sgn_1 = \sgn_2$, to prove the result it is 
enough to show that for any pair of indices $i,j \in \set{1, \ldots, k_+ + 
k_-}$, $\lim_{z_1 \to z_2} \partial_{z_j}b_i(\vb z)$ exists, 
where $\vb{z} = (z_1, \ldots, z_{k_+ + k_-})$.
%$\lim_{c_1 \to c_2} \del_{p_r} b_s(\vb c) = \del_{p_r} b_s(\vb c')$
% where 
%$\vb c' \in \surface^{k_+ + k_-}$ is the sequence obtained from $\vb
%c$ by the replacement  
%$c_1 \mapsto c_2$. 
%Let $\sgn_i = \sgn_{c_i}$ and assume the Riemannian distance $d(c_1, c_2)$ is 
%smaller than the injectivity radius of the surface.
%
 In the following computation, we denote by $h_\varphi(z) = 
 h(\varphi^{-1}(z))$, 
 $G_\varphi(z, w) = G(\varphi^{-1}(z), \varphi^{-1}(w))$, $\tilde h_\varphi(z) 
 = \tilde h(\varphi^{-1}(z))$ the local representation of the  
 functions, 
\begin{align}
b_i &= 2\,\conj\del_{z = z_i} \brk(
s_i\,\fnh_\varphi(z) - \log\abs{z - z_i}^2)\nonumber\\
  &= 2\,\conj\del_{z = z_i}\brk(4\pi\,G_\varphi(z, z_i) - \log\,\abs{z - 
  z_i}^2)
   + 8\pi\,\sum_{k \neq i} \sgn_i\sgn_k\conj\del_z G_\varphi(z_i, z_k)
   + 2\sgn_i\,\conj\del_z\tilde\fnh(z_i,\vb z)\nonumber\\
  &= 8\pi\,\conj\del_z H_\varphi(z_i, z_i) 
  + 8\pi\sum_{k \neq i} \sgn_i\sgn_k\conj\del_z G_\varphi(z_i, z_k)
   + 2\sgn_i\,\conj\del_z\tilde\fnh_\varphi(z_i,\vb z),
\end{align}
where $\conj\del_z$ refers to the derivative with respect to the first 
variable in each term. Hence,
\begin{align}
    \del_{z_j}b_i &= 8\pi\,\del_{z_j}\conj\del_z H_\varphi(z_i, z_i) 
  + 8\pi\,\sum_{k \neq i} \sgn_i\sgn_k\del_{z_j}\conj\del_z G_\varphi(z_i, z_k)
   + 2\sgn_i\,\del_{z_j}\conj\del_z\tilde\fnh_\varphi(z_i,\vb z).
\end{align}

The functions $\del_{z_j}\conj\del_z H_\varphi(z_i, z_i)$ and 
$\del_{z_j}\conj\del_z\tilde\fnh_\varphi(z_i,\vb z)$ vary continuously with 
$(z_i,\vb z)$, whence, the limits,
\begin{align}
\lim_{z_1 \to z_2} \del_{z_j}\conj\del_z H_\varphi(z_i, z_i), &&
\lim_{z_1 \to z_2} \del_{z_j}\tilde\fnh_\varphi(z_i,\vb z),
\end{align} 
both exist. In the above sum, if $i \neq 1$ and $k \neq 1$, or if 
either $i = 1$ or $k = 1$ 
and the other is not index 2, the limit
\begin{align}
\lim_{z_1 \to z_2} \del_{z_j}\conj\del_z G_\varphi(z_i, z_k)  
\end{align}
exists because $G$ is smooth away of the diagonal set of $\surface \times 
\surface$. Finally, if $\set{i, k} = \set{1, 
2}$, we can assume without loss of generality $i = 1$, $k = 2$, to compute,
\begin{align}
\lim_{z_1 \to z_2} \del_{z_j}\conj\del_z G_\varphi(z_1, z_2)   
&= \lim_{z_1 \to z_2} \del_{z_j}\conj\del_{z = z_1} \brk(
 \frac{1}{2\pi}\,\log\,\abs{z - z_2} + H_\varphi(z, z_2))\nonumber\\
 &= \lim_{z_1 \to z_2} \del_{z_j}\conj\del_z H_\varphi(z_1, z_2)\nonumber\\
 &= \del_{z_j}\conj\del_z H_\varphi(z_2, z_2).
\end{align}

Therefore, $\lim_{z_1 \to z_2}\del_{z_j} b_i(\vb z)$ exists, implying the 
localization formula can be extended to the coincidence set. 
\end{proof}

In later applications we will also focus on vortices of the
Ginzburg-Landau functional, in this case, the governing elliptic
problem is the orginal Taubes equation,
\begin{align}
  -\laplacian h = e^h - 1 + 4\pi\, \sum_{i} \delta_{p_i}.
\end{align}

If $\surface$ is the euclidean plane, we add the condition $\lim_{\abs
x \to \infty} h = 0$.  In both cases, we know that
there exists a solution  $h$ to the Taubes equation for any configuration
$\vb p$ of points. On the plane this is proved in
\cite{taubes1980} whereas in a compact surface
$\surface$ the proof can be found in \cite{yang_system_2000}. 
As for the $O(3)$ Sigma model, given a configuration $\vb p = (p_1, \ldots,
p_n)$ of cores, if we define $\tilde h$ such that $h = \tilde h + v_{\vb
  p}$, then $\tilde h$ is the unique solution of the regularized
Taubes equation for the Ginzburg-Landau functional, 
\begin{align}\label{eq:reg-taubes-gl}
  -\laplacian \tilde h = e^{\tilde h + v_{\vb p}} -  1 - g_{\vb p},
\end{align}
where the functions $v_{\vb p}$, $g_{\vb p}$ are defined either as in
equation~\eqref{eq:v-g-def} if $\surface$ is the euclidean plane or
$v_{\vb p}$ is defined as in equation~\eqref{eq:v-g-def} and $g_{\vb
  p}$ is the constant function $4\pi n {\abs \surface}^{-1}$ if
$\surface$ is compact. 

Mimicking the proof of theorem~\ref{thm:h-reg-param-dep}, we prove the
following proposition.

\begin{proposition}\label{prop:gl-reg-taubes-param-dep}
  Let $\vb p \in \surface^n$ be a sequence of points on a Riemann 
  surface $\surface$ either compact or the euclidean plane. If
  $\surface$ is compact assume 
  $n \in \mathbb{Z}^+$ satisfies Bradlow's bound for vortices of the
  Ginzburg-Landau functional,
  \begin{align}
    4\pi\,n \leq \abs \surface.
  \end{align}

  Let $\tilde h : \surface \times \surface^n \to \reals$ be such that
  $\tilde h(x; \vb p)$ is the unique solution
  to equation~\eqref{eq:reg-taubes-gl} with data $\vb p$, then $\tilde h$
  is a smooth function of $x$ and the data.
\end{proposition}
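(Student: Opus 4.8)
The plan is to imitate the proof of Theorem~\ref{thm:h-reg-param-dep}, with the bounded nonlinearity $\fnF$ of the $O(3)$ model replaced by the exponential nonlinearity $t \mapsto e^{t} - 1$ of the Ginzburg-Landau Taubes equation (whose derivative $t \mapsto e^{t}$ will play the role of the potential), and with smooth dependence on the data obtained from the implicit function theorem in Banach spaces. The existence and uniqueness of the solution $\tilde h$ of the regularized equation~\eqref{eq:reg-taubes-gl} for fixed $\vb p$, under Bradlow's bound when $\surface$ is compact, are taken from the cited references; only smooth dependence on $\vb p$ remains.

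First I would fix $r \geq 2$, let $\Delta\surface$ denote the configuration space of $n$ pairwise distinct points of $\surface$, and introduce the nonlinear operator
\begin{equation*}
  \opT(\vb p, u) = \laplacian u + e^{u + v_{\vb p}} - 1 - g_{\vb p},
  \qquad (\vb p, u) \in \Delta\surface \times \spH^{r+2}(\surface),
\end{equation*}
whose vanishing is~\eqref{eq:reg-taubes-gl}. I would check that $\opT$ maps into $\spH^{r}(\surface)$ and is smooth. The data depend smoothly on $\vb p$: on the plane $g_{\vb p} = \sum_j g_{p_j}$ and $e^{v_{\vb p}} = \prod_j \abs{x - p_j}^2 (1 + \abs{x - p_j}^2)^{-1}$ are smooth functions of $x$ (see~\eqref{eq:v-g-def}) whose derivatives in $x$ and in $\vb p$ are $\order(\abs{x}^{-2})$ or better, hence define smooth maps $\Delta\surface \to \spH^{r}(\plane)$; on a compact surface $g_{\vb p}$ is the constant $4\pi n\,\abs{\surface}^{-1}$, and the local expansion~\eqref{eq:green-logd-smoothpart} gives $e^{v_c} = d(x,c)^2\, e^{4\pi\tilde{G}(x,c)}$ near the diagonal and $e^{v_c} = e^{4\pi G(x,c)}$ off it, so $e^{v_{\vb p}}$ is a smooth bounded function on $\surface$ depending smoothly on $\vb p$. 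Since $\spH^{r+2}(\surface) \hookrightarrow C^{r}(\surface)$ by Sobolev's embedding, the superposition $u \mapsto e^{u + v_{\vb p}} - 1 = e^{v_{\vb p}}(e^{u} - 1) + (e^{v_{\vb p}} - 1)$ is smooth from $\spH^{r+2}(\surface)$ to $\spH^{r}(\surface)$ — on the plane the first summand is a bounded smooth multiple of $e^{u} - 1 \in \spH^{r}(\plane)$ and the second is $\order(\abs{x}^{-2})$, so the image genuinely lies in $\spH^{r}(\plane)$ and not merely in a local space.

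Next I would compute the partial derivative of $\opT$ in $u$ at a solution $\tilde h$ of~\eqref{eq:reg-taubes-gl},
\begin{equation*}
  (\del_u\opT)\,\delta u = (\laplacian + V)\,\delta u,
  \qquad V = e^{\tilde h + v_{\vb p}} = e^{h}.
\end{equation*}
This $V$ extends to a non-negative smooth function on $\surface$ vanishing exactly at the points of $\vb p$, with bounded derivatives, and on the plane $\lim_{\abs{x}\to\infty} V = 1 \in (0,1]$ because $\tilde h$ and $v_{\vb p}$ vanish at infinity; exactly as in the proof of Theorem~\ref{thm:h-reg-param-dep}, $V$ therefore meets the hypotheses of Lemma~\ref{lem:h2-l2-iso}, so $\del_u\opT : \spH^{r+2}(\surface) \to \spH^{r}(\surface)$ is a Hilbert space isomorphism. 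The implicit function theorem now makes $\vb p \mapsto \tilde h(\,\cdot\,;\vb p)$ a smooth map $\Delta\surface \to \spH^{r+2}(\surface)$, hence a smooth map $\Delta\surface \to C^{r}(\surface)$ by Sobolev's embedding. As $r$ is arbitrary and, for fixed $\vb p$, elliptic regularity makes $\tilde h(\,\cdot\,;\vb p)$ smooth in $x$, one concludes that $\tilde h$ is jointly smooth in $(x,\vb p)$, which is the assertion.

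The step I expect to be the main obstacle is verifying that $\opT$ is genuinely smooth between these spaces on the \emph{non-compact} plane: unlike the $O(3)$ case the nonlinearity is unbounded, so one must use the boundedness of $u + v_{\vb p}$ together with the decay of both $u$ and $e^{v_{\vb p}} - 1$ to land in $\spH^{r}(\plane)$; in the compact case the analogous care is with the logarithmic singularity of Green's function, so that $e^{v_{\vb p}}$, and hence the potential $V$ in the linearization, is smooth rather than merely continuous. Once these mapping and regularity facts are in place, the argument is structurally identical to that of Theorem~\ref{thm:h-reg-param-dep}.
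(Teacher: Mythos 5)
Your proposal is correct and follows essentially the same route as the paper: the same operator $\mathrm{T}(\mathbf{p},u)=\laplacian u + e^{u+v_{\mathbf p}}-1-g_{\mathbf p}$, the same linearization $\laplacian + e^{\tilde h + v_{\mathbf p}}$ verified against lemma~\ref{lem:h2-l2-iso}, and the same application of the implicit function theorem followed by Sobolev embedding. The only difference is that you spell out the mapping property of the exponential nonlinearity on the non-compact plane (via the splitting $e^{v_{\mathbf p}}(e^{u}-1)+(e^{v_{\mathbf p}}-1)$ and the decay of $e^{v_{\mathbf p}}-1$), which the paper leaves implicit by reference to theorem~\ref{thm:h-reg-param-dep}.
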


\begin{proof}
 As in the proof of theorem~\ref{thm:h-reg-param-dep}, we define an
 operator $\opT: \surface^n \times \mathrm{H}^{r + 2} \to \mathrm{H}^r$,
 such that,
 \begin{align}
  \opT\,(\vb p, u) = \laplacian u + e^{u + v_{\vb p}} - 1 - g_{\vb p},
\end{align}
and observe that as in  the proof of the theorem, this operator is
smooth. Moreover, the derivative $\del_u\opT: \mathrm{H}^{r+2} \to
\mathrm{H}^r$ at $(\vb p, \tilde h)$ is,
\begin{align}
  \del_u\opT(\delta u) = (\laplacian + e^{\tilde h + v_{\vb p}})\,\delta u. 
\end{align}

We notice the potential $V(x) = e^{\tilde h + v_{\vb p}}$ and all the
derivatives are bounded functions. If $\surface$ is compact, this is
because $V$ is smooth and if the surface is $\plane$, this is
becuase $e^{v_{\vb p}}$ has this property, as shown in the proof of
theorem~\ref{thm:h-reg-param-dep} and because $h$, the solution
to the Taubes equation, and all of their derivatives decay exponentially
as $\abs x \to \infty$. Hence, $\tilde h = h - v_{\vb p}$ and the
derivatives are continuous bounded functions. By lemma 
\ref{lem:h2-l2-iso}, $\del_u\opT$ is an isomorphism. By the implicit
function theorem, for any $\vb p \in \surface^n$ and any $r \geq 2$,
there is a neighbourhood $U \subset \surface^n$ of $\vb p$, such that
the map $\vb p \mapsto \tilde h(x;\vb p)$ is smooth as a function $U
\to \mathrm{H}^{r+2}$. By Sobolev's embedding, $\tilde h(x)$ is of class
$C^r$ as a function $\surface \times U \to \reals$. Since
differentiability is a local property, this implies $\tilde h$ is of
class $C^r$ on $\surface \times \surface^n$ for any $r \geq
0$. Therefore $\tilde h$ is smooth.
\end{proof}

\section{Topological methods}\label{sec:top-meth}

We finalize this chapter with a brief exposition of some results of analysis 
that we will use to prove existence of solutions to the
elliptic problem on compact manifolds in chapters~\ref{c:vav-compact}
and~\ref{ch:cs-moduli}. Both methods are attributable to Leray and
Schauder. Our exposition will be short and will focus on the results
we need. Details can be found in the books~\cite{chipot2011}
and~\cite{deimling2010}. Recall a subset of a topological space is
precompact if the closure is compact.

\begin{definition}
  Let $X, Y$ be Banach spaces and $\Omega \subset X$. A continuous map $T:
  \Omega \to Y$ is compact if it maps bounded subsets of $\Omega$ to
  precompact subsets of $Y$.
\end{definition}
\noindent  
As a caveat, in~\cite{deimling2010} compact operators are called completely 
continuous.

\begin{theoremdef}
  Let $\Omega \subset X$ be an open and bounded subset of a real Banach
  space, $T: \Omega \to X$ compact and $y \not\in (I -
  T)(\del\Omega)$. For each admisible triple $(T, \Omega, y)$, there is a 
  unique integer
  $\deg(I - T, \Omega, y) \in \mathbb{Z}$, with the following
  properties:
  \begin{enumerate}
  \item $\deg(I, \Omega, y) = 1$ for $y \in \Omega$. 
  \item $\deg(I - T, \Omega, y) = \deg(I - T, \Omega_1, y) + \deg(I -
    T, \Omega_2, y)$ whenever $\Omega_1$, $\Omega_2$ are disjoint open
    subsets of $\Omega$ such that $y \not\in (I - T)(\conj\Omega
    \setminus (\Omega_1 \cup \Omega_2))$.
  \item Homotopy invariance: $\deg(I - H(t,\cdot), \Omega, y(t))$ is
    independent of $t \in 
    [0, 1]$ whenever $H: [0, 1] \times \conj\Omega \to X$ is compact,
    $y: [0, 1] \to X$ is continuous and $y(t) \not\in (I - H(t,
    \cdot))(\del \Omega)$ on $[0, 1]$.
    \item General homotopy invariance: Let $\Theta \subset [0, 1]
      \times X$ be bounded and open in $[0, 1] \times X$ with
      $\Theta_t = \set{x \in X : (t, x) \in \Theta}$. If $H:
      \conj\Theta \to X$ is compact and $y: [0, 1] \to X$ is
      continuous with $y(t) \not\in (I - H(t, \cdot))(\del \Theta_t)$
      for all $t \in [0, 1]$, then $\deg(I - H(t, \cdot), \Theta_t,
      y(t))$ is independent of t.
  \end{enumerate}
\end{theoremdef}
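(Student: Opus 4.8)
The plan is to reduce the statement to finite dimensions, where the Brouwer degree is available, and to transfer that degree to the Banach space setting by a compact finite-rank approximation of $T$.

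First I would recall the Brouwer degree: for an open bounded $\Omega \subset \reals^n$, a map $f \in C^1(\conj\Omega, \reals^n)$, and a regular value $y \notin f(\del\Omega)$, set $\deg(f, \Omega, y) = \sum_{x \in f^{-1}(y)} \operatorname{sgn}\det Df(x)$; using Sard's theorem together with smooth approximation one extends this to arbitrary continuous $f$ and arbitrary $y \notin f(\del\Omega)$, checking that the value is independent of the chosen approximations. The resulting integer satisfies normalization, additivity over disjoint open sets, and homotopy invariance, and --- a separate classical theorem I would invoke --- these three properties characterize it uniquely among $\mathbb{Z}$-valued functions of admissible finite-dimensional triples.

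Next, for the infinite-dimensional statement I would exploit compactness of $T$. Since $T(\conj\Omega)$ is precompact, for each $\epsilon > 0$ it is covered by finitely many balls $B_\epsilon(y_1), \dots, B_\epsilon(y_m)$, and a partition of unity subordinate to this cover produces a finite-rank map $T_\epsilon$ with image in $V := \operatorname{span}\set{y_1, \dots, y_m}$ satisfying $\norm{T_\epsilon - T}_\infty < \epsilon$. Fixing $\epsilon < \operatorname{dist}(y, (I - T)(\del\Omega))$ and enlarging $V$ so that $y \in V$, the restriction $I - T_\epsilon : \conj\Omega \cap V \to V$ is admissible for the Brouwer degree, and one sets $\deg(I - T, \Omega, y) := \deg_{\mathrm B}(I - T_\epsilon, \Omega \cap V, y)$. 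The crucial point --- and the step I expect to be the main obstacle --- is well-definedness: this integer must not depend on $\epsilon$ nor on $V$. For two such approximations one runs the affine homotopy $I - ((1-s)T_\epsilon + sT_{\epsilon'})$, which stays admissible because $\norm{T_\epsilon - T_{\epsilon'}} < 2\epsilon$ keeps $y$ off the boundary image, and combines this with the excision property of the Brouwer degree on the nested finite-dimensional subspaces; getting all of that bookkeeping straight is where the real work lies.

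Finally I would verify the four listed properties. Normalization is immediate since $I$ approximates itself. Additivity over disjoint $\Omega_1,\Omega_2$ follows from a common finite-rank approximation together with Brouwer additivity. Plain homotopy invariance follows by approximating $H$ uniformly on $[0,1]\times\conj\Omega$ by a finite-rank homotopy (compactness of $H$ again) and using the estimate that keeps $y(t)\notin(I-H(t,\cdot))(\del\Omega)$; the general version over $\conj\Theta$ is the same argument with a single finite-rank approximation valid for all $t$. For uniqueness I would first deduce, formally from properties (1)--(3), the reduction identity that the degree of an admissible triple with $T$ of finite rank in $V$ and $y \in V$ equals the Brouwer degree of the restriction to $\Omega\cap V$; then on finite-dimensional subspaces any such degree satisfies the Brouwer axioms and hence coincides with the Brouwer degree, and homotopy invariance propagates this equality from finite-rank maps to arbitrary compact $T$, so the constructed degree is the only one possible. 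Throughout, the detailed estimates are exactly those in \cite{chipot2011} and \cite{deimling2010}.
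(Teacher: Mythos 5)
The paper does not prove this statement at all: it is quoted as a classical Theorem/Definition of the Leray--Schauder degree, with the reader referred to \cite{chipot2011} and \cite{deimling2010} for details. Your outline is precisely the standard construction found in those references (Brouwer degree, Schauder finite-rank approximation of the compact map, well-definedness via the affine homotopy and the reduction/excision property on nested finite-dimensional subspaces, then verification of the axioms and uniqueness), and it is correct in its essentials; the only point to watch is that in the well-definedness step you need both approximation errors small enough that the combined bound $2\epsilon$ stays below $\operatorname{dist}(y,(I-T)(\del\Omega))$, which is the standard choice.
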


$\deg(I - T, \Omega, y)$ is Leray-Schauder's
degree. It can be proved~\cite[Thm. 8.2]{deimling2010} that $\deg(I
- T, \Omega, y) \neq 0$ implies $(I - T)^{-1}(y) \neq \emptyset$. As
an application of this concept, there is the following
result of Sch\"afer~\cite{schaefer1955methode},

\begin{theorem}\label{thm:fixed-point-alternative}
  Let $T: X \to X$ be compact. Then the following alternative holds:
  \begin{enumerate}
  \item $x - \lambda\,T(x) = 0$ has a solution for every $\lambda \in
    [0, 1]$, or
  \item $S = \set{x \in X : \exists \lambda \in [0, 1]\,s.t.\, x -
      \lambda \,T(x) = 0}$ is unbounded.
  \end{enumerate}
\end{theorem}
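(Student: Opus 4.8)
The plan is to prove the dichotomy by contraposition, using only the Leray--Schauder degree and its axioms recorded in the Theorem/Definition above. Assume alternative (2) fails, i.e.\ $S$ is bounded, and pick $R>0$ with $S \subset B_R(0)$. Fix an arbitrary $\lambda_0 \in [0,1]$; it suffices to exhibit a solution of $x - \lambda_0 T(x) = 0$, since $\lambda_0$ is then free to range over $[0,1]$ and alternative (1) follows.

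First I would take the open bounded set $\Omega = B_{2R}(0) \subset X$ and introduce the homotopy $H : [0,1] \times \conj\Omega \to X$, $H(t,x) = t\lambda_0\,T(x)$. Since $T$ is compact and $t\lambda_0$ is a bounded scalar, $H$ is compact. The key boundary check is the following: if some $x \in \del\Omega$ satisfied $x - H(t,x) = 0$, then $x = (t\lambda_0)\,T(x)$ with $t\lambda_0 \in [0,1]$, so by definition $x \in S \subset B_R(0)$, contradicting $\norm{x} = 2R$. Hence $0 \notin (I - H(t,\cdot))(\del\Omega)$ for every $t \in [0,1]$, so the homotopy invariance property applies with the constant choice $y(t) \equiv 0$.

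Then I would evaluate the degree at the endpoints of the homotopy. At $t = 0$ the map $I - H(0,\cdot)$ is the identity and $0 \in \Omega$, so the normalization property gives $\deg(I, \Omega, 0) = 1$. By homotopy invariance, $\deg(I - \lambda_0 T, \Omega, 0) = 1 \neq 0$, and therefore (by the cited fact that nonzero degree forces a nonempty preimage, \cite[Thm.~8.2]{deimling2010}) there exists $x \in \Omega$ with $x = \lambda_0 T(x)$. As $\lambda_0 \in [0,1]$ was arbitrary, alternative (1) holds, which is the contrapositive of the claim.

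I do not anticipate a real obstacle: this is a textbook consequence of the degree axioms already in hand. The only points requiring a little care are the choice of the ambient ball $B_{2R}(0)$ — it must strictly contain $S$ so that no solution of $x = t\lambda_0 T(x)$ can ever sit on $\del\Omega$ — and the trivial but essential observation that the composite scalar $t\lambda_0$ remains in $[0,1]$, which is precisely what places any such solution inside $S$ and triggers the contradiction on the boundary.
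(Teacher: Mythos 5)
Your argument is correct: the homotopy $H(t,x)=t\lambda_0 T(x)$ is compact, the boundedness of $S$ inside $B_R(0)$ rules out zeros of $I-H(t,\cdot)$ on $\del B_{2R}(0)$, and normalization plus homotopy invariance plus the nonvanishing-degree solvability fact give a fixed point for every $\lambda_0$. The paper does not prove this theorem itself but defers to \cite[Cor.~I.1.18]{chipot2011}, and your degree-theoretic argument is precisely the standard proof given there, built on exactly the axioms the paper records in its Theorem/Definition, so there is nothing to add.
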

\noindent 
For a linear operator, alternative 1 always holds by choosing the  
solution $x = 0$, however, 
for non-linear operators this is not always the case. A proof of the theorem 
can be found in~\cite[Cor. I.1.18]{chipot2011}. In
general, computing the degree is a difficult task. Suppose $x_0 \in (I
- T)^{-1}(y)$ isolated, then $x_0$ is the only solution of the
equation $x - T(x) = y$ in some disk $\disk_{\epsilon_0}(x_0)$. By
homotopy invariance, $\deg(I - T, \disk_{\epsilon}(x_0), y)$ is
independent of $\epsilon$ for $0 < \epsilon < \epsilon_0$.

\begin{definition}
  With the previous assumptions, the index of an isolated solution
  $x_0$ to the equation $x - T(x) = y$ is
  \begin{align}
    \mathrm{ind}(I - T, x_0, y) = \deg(I - T, \disk_{\epsilon}(x_0), y),
  \end{align}
  where $\epsilon > 0$ is any sufficiently small radius. 
\end{definition}

If $T$ is compact and differentiable at $x_0$, then
$T'(x_0)$ is  a compact linear operator. We state the following theorems,

\begin{theorem}[Leray-Schauder]\label{thm:leray-schauder-index}
  If $T: \Omega \subset X \to X$ is compact and differentiable at
  $x_0$ and if $I - T'(x_0)$ is injective, then $\mathrm{ind}(I - T,
  x_0, y) = \pm 1$. More precisely,
  \begin{align}
    \mathrm{ind}(I - T, x_0, y) &= \mathrm{ind}(I - T'(x_0), x_0, y)\nonumber
    \\
    &= (-1)^{\beta}, \qquad \beta = \sum_{\lambda > 1} m(\lambda).
  \end{align}

  The sum is taken over all eigenvalues $\lambda > 1$ of $T'(x_0)$ and
  $m(\lambda)$ is the algebraic multiplicity of $\lambda$. 
\end{theorem}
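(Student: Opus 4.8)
The plan is to reduce the nonlinear statement to a degree computation for the linearization, and then to evaluate that degree using the Riesz spectral decomposition of the compact operator. After translating we may assume $x_0 = 0$ and $y = 0$, and write $L = T'(0)$, which is a compact linear operator since $T$ is compact and differentiable at $0$. The hypothesis that $I - T'(x_0)$ is injective, together with Riesz--Fredholm theory for compact operators, gives that $I - L$ is a topological isomorphism; in particular $1 \notin \sigma(L)$, and the estimate below also shows that $x_0$ is an isolated solution of $x - T(x) = y$, so that $\mathrm{ind}(I - T, x_0, y)$ is defined.

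First I would prove the identity $\mathrm{ind}(I - T, 0, 0) = \mathrm{ind}(I - L, 0, 0)$. Differentiability at $0$ means $T(x) = Lx + R(x)$ with $\norm{R(x)} = o(\norm x)$ as $x \to 0$. I would use the homotopy $H(t, x) = (1-t)Lx + t\,T(x) = Lx + t\,R(x)$ on a small ball about $0$: each $H(t,\cdot)$ is compact, $H$ is continuous in $t$, and since $\norm{x - Lx} \geq c\,\norm x$ for some $c > 0$ (because $I - L$ is invertible), choosing $\epsilon$ so small that $\norm{R(x)} < c\,\norm x$ on $\norm x = \epsilon$ makes $x - H(t,x) \neq 0$ on $\del\disk_\epsilon(0)$ for all $t \in [0,1]$. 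Homotopy invariance of the Leray--Schauder degree then yields $\deg(I - T, \disk_\epsilon(0), 0) = \deg(I - L, \disk_\epsilon(0), 0)$, which is the first displayed equality of the theorem.

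The main step is to compute $\deg(I - L, \disk_\epsilon(0), 0)$. Since $L$ is compact, $\sigma(L)\setminus\{0\}$ consists of isolated eigenvalues of finite algebraic multiplicity, so there are only finitely many real eigenvalues $\lambda_1,\dots,\lambda_k$ in $(1,\infty)$; let $m(\lambda_i)$ be their algebraic multiplicities. The Riesz decomposition provides a splitting $X = X_1 \oplus X_2$ into closed $L$-invariant subspaces with $X_1 = \bigoplus_i \Ker(L - \lambda_i)^{n_i}$ finite dimensional of dimension $\beta = \sum_i m(\lambda_i) = \sum_{\lambda > 1} m(\lambda)$, $\sigma(L|_{X_1}) = \{\lambda_1,\dots,\lambda_k\}$, and $\sigma(L|_{X_2}) = \sigma(L)\setminus\{\lambda_1,\dots,\lambda_k\}$; in particular $\sigma(L|_{X_2})$ contains no real number in $[1,\infty)$. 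I would then deform $L$ to $L|_{X_1}\oplus 0$ through the compact homotopy $H(s, x_1 \oplus x_2) = L|_{X_1}x_1 \oplus s\,L|_{X_2}x_2$, $s\in[0,1]$, which is admissible on $\del\disk_\epsilon(0)$ because $x - H(s,x) = 0$ forces $x_1 = 0$ (as $1 \notin \sigma(L|_{X_1})$) and $x_2 = 0$ (as $1/s \notin \sigma(L|_{X_2})$ for $s \in (0,1]$, and trivially for $s = 0$). Since $L|_{X_1}\oplus 0$ takes values in the finite-dimensional subspace $X_1$, the reduction property of the degree identifies $\deg(I - L, \disk_\epsilon(0), 0)$ with the Brouwer degree $\deg(I - L|_{X_1}, \disk_\epsilon(0)\cap X_1, 0) = \operatorname{sgn}\det(I - L|_{X_1})$, and $\det(I - L|_{X_1}) = \prod_i (1 - \lambda_i)^{m(\lambda_i)}$ is a product of $\beta$ negative numbers, so the index equals $(-1)^\beta$, as claimed.

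The delicate point is this spectral step: one must invoke the Riesz theory of compact operators to peel off precisely the part of $L$ carried by real eigenvalues above $1$, check that the complementary part has no spectrum on $[1,\infty)$ so that it contracts to $0$ through maps of the form $I - (\text{compact})$, and then combine the reduction property of the Leray--Schauder degree with the elementary fact that the Brouwer degree of an invertible linear map at the origin is the sign of its determinant. The reduction to the linear case is, by contrast, a routine application of the differentiability estimate and homotopy invariance.
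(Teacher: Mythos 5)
Your argument is correct: the translation to $x_0=0$, the small-ball homotopy $Lx+tR(x)$ justified by the coercivity of $I-L$ and the $o(\norm{x})$ estimate, the Riesz splitting $X=X_1\oplus X_2$ isolating the generalized eigenspaces of the real eigenvalues exceeding $1$, the contraction of $L|_{X_2}$ through $sL|_{X_2}$ (admissible because $1/s\notin\sigma(L|_{X_2})$), and the final reduction to the Brouwer degree $\operatorname{sgn}\det(I-L|_{X_1})=(-1)^{\beta}$ all fit together without gaps. Note that the paper does not prove this theorem at all: it is quoted as a classical result from the Leray--Schauder degree theory literature (Deimling, Chipot), and your proof is essentially the standard one found there.
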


\begin{definition}\label{def:cont-lambda-unif-balls}
    An operator $H(\lambda, x)$, $H: \reals \times X \to X$,
    is continuous in $\lambda$ uniformly with respect to $x$ in balls in $X$
    if for any given ball $B \subset X$ and for any $\epsilon > 0$, 
    there is a $\delta > 0$ such that   
    $|\lambda_2 - \lambda_1| < \delta$ implies  $|H(\lambda_2, x) - 
    H(\lambda_1,x)| < \epsilon$ for all $x \in B$.
\end{definition}
The following theorem can
be found in~\cite[Thm. I.3.3]{chipot2011}. 
\begin{theorem}\label{thm:ls-deg-continuum}
  Let $H: \reals \times X \to X$ be such that for all $\lambda \in \reals$
  the map $H(\lambda, \cdot): X \to X$ is compact and $H(\lambda, x)$
  is continuous in $\lambda$ uniformly with respect to $x$ in balls in
  $X$~(definition~\ref{def:cont-lambda-unif-balls}). Let $(\lambda_0, x_0)$ be 
  a solution of the equation
  \begin{align}\label{eq:implicit-equation}
    x - H(\lambda, x) = 0.
  \end{align}
  
  Suppose $\mathcal{U} \subset X$ is an open, bounded set such that
  $x_0 \in \mathcal{U}$ and,
  \begin{enumerate}
  \item for fixed $\lambda_0$ there is no other solution in $\overline{\mathcal{U}}$,
    
  \item $\deg(I - H(\lambda_0,\cdot), \mathcal{U}, 0) \neq 0$.
  \end{enumerate}

    Then there exist two connected and closed sets (=continua)
    $\mathcal{C}^+ \subset [\lambda_0,\infty) \times X$ and
    $\mathcal{C}^- \subset (-\infty, \lambda_0] \times X$ of solutions
    of~\eqref{eq:implicit-equation} with $(\lambda_0, x_0) \in
    \mathcal{C}^+ \cap\mathcal{C}^-$. For $\mathcal{C}^+$ one of the
    following two alternatives hold:
    \begin{enumerate}
    \item $\mathcal{C}^+$ is unbounded or,
    \item $\mathcal{C}^+ \cap (\set{\lambda_0} \times (X\setminus \overline{\mathcal{U}})) \neq \emptyset$. 
    \end{enumerate}

    The same alternatives hold for $\mathcal{C}^-$.
\end{theorem}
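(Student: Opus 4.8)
The plan is to establish the global alternative for $\mathcal{C}^+$ by contradiction, combining the general homotopy invariance of the Leray--Schauder degree with the topological fact that in a compact metric space a connected component is the intersection of the relatively clopen sets that contain it; the statement for $\mathcal{C}^-$ then follows by running the same argument on $(\lambda,x)\mapsto H(2\lambda_0-\lambda,x)$, and $(\lambda_0,x_0)\in\mathcal{C}^+\cap\mathcal{C}^-$ will be built into the construction. Write $\Sigma^+=\{(\lambda,x)\in[\lambda_0,\infty)\times X:\ x-H(\lambda,x)=0\}$ and let $\mathcal{C}^+$ be the connected component of $\Sigma^+$ through $(\lambda_0,x_0)$. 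A preliminary observation is that the two hypotheses on $H$ --- $H(\lambda,\cdot)$ compact for each $\lambda$, and $H$ continuous in $\lambda$ uniformly with respect to $x$ in balls (definition~\ref{def:cont-lambda-unif-balls}) --- force $H$ to send bounded subsets of $[\lambda_0,\infty)\times X$ into precompact subsets of $X$: cover the relevant $\lambda$-interval by finitely many subintervals on which $H(\lambda,\cdot)$ stays within $\varepsilon$ of some $H(\lambda_i,\cdot)$ uniformly over the given ball, and use that each $H(\lambda_i,\cdot)$ has precompact image. In particular $H$ is jointly continuous, so $\Sigma^+$ is closed, and $x=H(\lambda,x)$ traps $x$ in a precompact set over any bounded range of $(\lambda,x)$; hence $\Sigma^+\cap\overline B$ is compact for every bounded ball $B$ of $[\lambda_0,\infty)\times X$, and $\mathcal{C}^+$ is compact whenever it is bounded.

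Suppose now, contrary to the claim, that $\mathcal{C}^+$ is bounded and that $\mathcal{C}^+\cap(\{\lambda_0\}\times(X\setminus\overline{\mathcal{U}}))=\emptyset$. Then $\mathcal{C}^+$ is compact; choose $R$ so that $\mathcal{C}^+$ lies in the interior of $B_R=\{(\lambda,x):|\lambda-\lambda_0|+\|x\|<R\}$, and set $K=\Sigma^+\cap\overline{B_R}$ (a compact metric space) and $A=(\Sigma^+\cap\partial B_R)\cup(\Sigma^+\cap(\{\lambda_0\}\times(X\setminus\mathcal{U})))$. Then $A$ is closed in $K$ and disjoint from $\mathcal{C}^+$ --- the second piece because $x_0\in\mathcal{U}$ is the only solution of $x=H(\lambda_0,x)$ in $\overline{\mathcal{U}}$, so $\partial\mathcal{U}$ carries none. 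Since $\mathcal{C}^+$ is the component of $(\lambda_0,x_0)$ in $K$ and components coincide with quasicomponents in a compact metric space (Whyburn's lemma), there is a partition $K=K_1\sqcup K_2$ into disjoint compact, hence relatively clopen, sets with $\mathcal{C}^+\subset K_1$ and $A\subset K_2$. As $K_1$ is compact and disjoint from the closed sets $K_2$, $\partial B_R$ and $\{\lambda_0\}\times(X\setminus\mathcal{U})$ (the last because $K_1\subset\Sigma^+$, so any point of it in $\{\lambda_0\}\times(X\setminus\mathcal{U})$ would lie in $A\subset K_2$), there is $\delta>0$ with $\mathrm{dist}(K_1,\,K_2\cup\partial B_R\cup(\{\lambda_0\}\times(X\setminus\mathcal{U})))>3\delta$. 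Let $\mathcal{O}$ be the open $\delta$-neighbourhood of $K_1$ in $[\lambda_0,\infty)\times X$: it is bounded, open, contains $\mathcal{C}^+$, satisfies $\overline{\mathcal{O}}\subset B_R$ and $\mathcal{O}\cap(\{\lambda_0\}\times X)\subset\{\lambda_0\}\times\mathcal{U}$, and $\partial\mathcal{O}$ contains no solution of $x=H(\lambda,x)$, since such a point would lie in $K=K_1\cup K_2$ within distance $\delta$ of $K_1$ yet outside $K_1$.

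Define $\mathcal{O}_\lambda=\{x\in X:(\lambda,x)\in\mathcal{O}\}$ and $d(\lambda)=\deg(I-H(\lambda,\cdot),\mathcal{O}_\lambda,0)$. This is admissible: $H(\lambda,\cdot)$ is compact, $\mathcal{O}_\lambda$ is open and bounded, and if $x\in\partial_X\mathcal{O}_\lambda$ solved $x=H(\lambda,x)$ then $(\lambda,x)\in\overline{\mathcal{O}}\setminus\mathcal{O}=\partial\mathcal{O}$, which carries no solution. By the first paragraph $H$ is compact on $\overline{\mathcal{O}}$, so the general homotopy invariance of the degree (item 4 of the Theorem/Definition on $\deg$, with $\Theta=\mathcal{O}$ after reparametrising the bounded $\lambda$-range to $[0,1]$, and $y\equiv 0$) shows $d(\lambda)$ is independent of $\lambda\in[\lambda_0,\infty)$. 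But $\mathcal{O}$ is bounded, so $\mathcal{O}_\lambda=\emptyset$ for $\lambda$ large, whence $d(\lambda)=0$ there; while at $\lambda_0$ the open bounded set $\mathcal{O}_{\lambda_0}\subset\mathcal{U}$ contains $x_0$, the unique solution of $x=H(\lambda_0,x)$ in $\overline{\mathcal{U}}$, so by the additivity property (item 2, with the second open piece empty) $d(\lambda_0)=\deg(I-H(\lambda_0,\cdot),\mathcal{U},0)\neq 0$ by hypothesis. This contradiction proves the alternative for $\mathcal{C}^+$; the identical reasoning on $(-\infty,\lambda_0]\times X$ yields $\mathcal{C}^-$.

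I expect the only genuinely delicate point to be the construction of the set $\mathcal{O}$: turning the compact component $\mathcal{C}^+$ into a bounded open neighbourhood whose boundary is free of solutions and whose slice over $\lambda_0$ is trapped inside $\mathcal{U}$. This is precisely where topology enters, through the coincidence of components and quasicomponents in compact metric spaces, and where a little care with the radius $\delta$ is needed. Everything else is bookkeeping with the axioms of the Leray--Schauder degree recorded above, together with the elementary joint-compactness argument of the first paragraph.
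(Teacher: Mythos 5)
The paper does not actually prove this theorem: it is quoted from the literature with the remark that it ``can be found in~\cite[Thm. I.3.3]{chipot2011}'', so there is no in-paper argument to compare against. Your proof is correct and is essentially the standard Rabinowitz-type continuation argument that the cited reference employs: joint compactness of $H$ on bounded sets, separation of the compact component $\mathcal{C}^+$ from the ``bad'' set via the coincidence of components and quasicomponents in a compact metric space, and the contradiction between $d(\lambda)=0$ for large $\lambda$ and $d(\lambda_0)=\deg(I-H(\lambda_0,\cdot),\mathcal{U},0)\neq 0$ obtained from general homotopy invariance plus excision. The only point worth tightening is notational: the set $A$ should explicitly be taken as a closed subset of $K=\Sigma^+\cap\overline{B_R}$ (i.e.\ intersect the $\lambda_0$-slice piece with $\overline{B_R}$), which your phrase ``$A$ is closed in $K$'' already does implicitly.
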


The hypotesis on $H$ implies the restriction to bounded subsets of
$\reals \times X$ is compact. The definition of compact
operator on the reference is slightly different, however, it is not 
difficult to go through the proof and adapt it to our current
definition.

We conclude the section mentioning that several results related to 
theorem~\ref{thm:ls-deg-continuum} can be found in 
the literature. A good survey of related applications is~\cite{Mawhin1999}.

\let \sign      \undefined
\let \Bform     \undefined
\let \fnF       \undefined
\let \fnV       \undefined
\let \cmu       \undefined
\let \fnh       \undefined
\let \spW       \undefined
\let \spL       \undefined
\let \spH       \undefined
\let \spS       \undefined
\let \spHilbert \undefined
\let \avg       \undefined
\let \manifold  \undefined
\let \wto       \undefined 
\let \opT       \undefined
\let \cf        \undefined

%%% Local Variables:
%%% TeX-master: "../../../geometric-models"
%%% End:

%%%%%%%%%%%%%%%%%%%%%%%%%%%%%%%%%%%%%%%%%%%%%%%%%%%
%\let\disk        \undefined
\let\confSp      \undefined
\let\gaugeGp     \undefined
\let\hf          \undefined
\let\connSp      \undefined
\let\fieldSp     \undefined
\let\bigTangent  \undefined
\let\gp          \undefined
\let\vertSp      \undefined
\let\Energy      \undefined
\let\stgp        \undefined
\let\Diff        \undefined
\let\mf          \undefined
\let\ef          \undefined
\let\vset        \undefined
\let\avset       \undefined
\let\moduliSp    \undefined
\let\pair        \undefined
\let\Lagrangian  \undefined           
\let\vb          \undefined
\let\dvr         \undefined
\let\north       \undefined
\let\south       \undefined
\let\cDiff       \undefined
\let\eform       \undefined
\let\Bform       \undefined
\let\Jop         \undefined
\let\vol         \undefined
\let\fnu         \undefined
\let\sgn         \undefined
\let\kmetric     \undefined
%%% Local Variables:
%%% mode: latex
%%% TeX-master: "../../geometric-models"
%%% End:

%%%%%% Local definitions
\newcommand*{\modulic}{\moduli_0^{1,1}}
\newcommand*{\qty}{\brk}
\newcommand*{\lagrangian}{\mathrm{L}}
\newcommand*{\bog}{Bogomolny}
\newcommand*{\va}[1]{\mathbf{#1}}
\newcommand*{\vu}[1]{\mathbf{#1}}
\newcommand*{\dd}{d}
\newcommand*{\lgrdensity}{\mathcal{L}}
\newcommand*{\delzbar}{\conj \del_x}

%%%%%%%%%%%%%%%%%%%%%%%%%%%%%%%%%%%%%%%%%%%%%%%%%%%%%%%%%%%%%%%

\chapter{
  Asymmetric vortex-antivortex systems in the euclidean plane
}\label{c:vav-euclidean}

\newcommand{\hf}{\phi} 
\newcommand{\gp}{A}
\newcommand{\eucGroup}{\mathbb{E}(2)}
\newcommand{\fnh}{h}
\newcommand{\vset}{P}
\newcommand{\avset}{Q}
\newcommand{\vb}{\mathbf}
\newcommand{\sgn}{s}
\newcommand{\kmetric}{\mathrm{K}}
\newcommand{\cf}{\Lambda}
\newcommand{\kform}{\omega}
\newcommand{\Energy}{\mathrm{E}}

In this chapter we study the moduli space of vortex-antivortex pairs
on the euclidean plane in detail. Our approach will be analytical and 
numerical. To understand the geometry of the moduli space, we need to
analyse the properties of the Taubes equation in the critical case when a
vortex and an antivortex collide.

%\section{The moduli space of vortex-antivortex pairs}
In section~\ref{sec:moduli-vort-antiv}, we study the space of vortex-antivortex 
pairs, the main result is that it is incomplete. To prove this theorem, we find 
bounds for $h_\epsilon$, the solution to the Taubes equation in 
several lemmas in subsection~\ref{sec:eps-0}.

%\section{Asymptotic approximation at large separation}
In section~\ref{s:scattering-theory}, we develop an asymptotic approximation 
for 
the $\Lsp^2$ metric of vortex-antivortex pairs in the centre of 
mass frame, and complement it with the point source formalism in 
%\subsection{The point-source formalism}
subsection~\ref{sec:point-source-formalism}, in which we approximate the 
fields linearising the field equations. The main result is 
the Lagrangian~\eqref{eq:lag-point-source-formalism} which confirms the 
asymptotic formula obtained previously. In 
%\subsection{Approximating the conformal factor in a neighbourhood of the 
%singularity}
subsection~\ref{sec:short-range-approx}, we find another 
asymptotic aproximation for the metric, this time for small $\epsilon$, 
the main result is equation~\eqref{eq:lambda-eps}. 

%\section{Numerical approximation to the metric}
In section~\ref{sec:num-approx-metric} we approximate the $\Lsp^2$ metric 
numerically, using the data found by numerical methods to study 
the scattering of vortex-antivortex pairs in 
%\subsection{Scattering}
subsection~\ref{sec:vav-scattering} and in this way testing our 
approximations of the previous section.

%\section{Ricci magnetic geodesic motion}
Finally, in section~\ref{sec:ricci-magnetic-geod-motion} we study 
Ricci magnetic geodesics. These curves are of mathematical interest, 
there are a few results about the relation between extensibility of them and 
completeness of 
the underlying space, as the moduli space of vortex-antivortex pairs 
is incomplete, the question of whether or not it is complete in the Ricci 
magnetic sense is interesting in its own.

In order to start, let us note that on $\plane$ any fibre 
bundle is trivial, therefore we can 
consider sections on the target manifold as pairs $(\hf, \gp)$ of
a function $\hf: \plane \to \sphere$ and a 1-form 
$\gp \in \Omega^1(\plane)$. Since the Lagrangian is isometrically invariant,
by Noether's theorem there will be
conserved currents. In the Euclidean case, the conserved
quantities are the total energy, \(\Energy\), the linear 
 and angular momenta. We already know how to compute the energy.
 For the remaining constants of motion note that 
 the Laplacian is invariant under the action of the group of 
 isometries of the plane, 
 $\eucGroup \cong \plane \rtimes \mathbb{O}(2)$, which is a Lie 
 group of dimension three. If $\fnh(x; \vb p, \vb q): \plane\setminus 
 (\vset \cup \avset) \to \reals$ is the solution to the Taubes equation,
  and $\gamma \in \eucGroup$ acts in $\vb p$ and $\vb q$
  component-wise, this implies,
  \begin{align}
      \fnh(x; \gamma \vb p, \gamma \vb q) = 
      \fnh(\gamma^{-1}x; \vb p, \vb q).
  \end{align}

\begin{lemma}\label{lem:isom-inv-coefs}
  Let $\vb c = (c_1, \ldots, c_{k_+ + k_-})$ be a sequence of 
  cores, ordered such that the first $k_+$ are the vortices. Let 
  $b_j$ be the coefficients defined in lemma~\ref{lem:coef-sym}. 
  If $\gamma x = \alpha x + \beta$, $\alpha, x, \beta \in \plane$, 
  $\abs{\alpha} = 1$ ,
  is an orientation preserving isometry, then
  \begin{align}
      b_j(\gamma \vb c) = \alpha\,b_j(\vb c).
  \end{align}
  
  If $\gamma$ is the orientation reversing generator, 
  $\gamma x = \conj x$, we have,
  \begin{align}
      b_j(\conj{\vb c}) = \conj b_j(\vb c).
  \end{align}
\end{lemma}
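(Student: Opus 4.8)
The plan is to derive both identities from the isometric covariance of the Taubes solution, $\fnh(x;\gamma\vb p,\gamma\vb q) = \fnh(\gamma^{-1}x;\vb p,\vb q)$, just recorded, and then to track how the operator $\conj\del$ in the definition of $b_j$ transforms under the change of variables $w = \gamma^{-1}z$. Identifying $\plane$ with $\cpx$ and using the standard complex coordinate, lemma~\ref{lem:coef-sym} gives $b_j(\vb c) = 2\,\conj\del_{z=c_j}\brk(\sgn_j\,\fnh(z;\vb c) - \log\abs{z - c_j}^2)$, where I abbreviate by $\fnh(\cdot;\vb c)$ the solution whose vortices and antivortices are read off from $\vb c$. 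The bracketed function is real-valued, and the signs $\sgn_j$ are unchanged when the cores are moved by an isometry, since an isometry acts on the domain only and leaves the vortex/antivortex labelling intact; in particular the covariance identity reads $\fnh(z;\gamma\vb c) = \fnh(\gamma^{-1}z;\vb c)$.

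For an orientation preserving $\gamma$, write $\gamma z = \alpha z + \beta$ with $\abs\alpha = 1$, so that $\gamma^{-1}z = \conj\alpha\,(z - \beta)$ depends holomorphically on $z$, with $\del_z(\gamma^{-1}z) = \conj\alpha$ and $\conj\del_z(\gamma^{-1}z) = 0$. I would substitute $\fnh(z;\gamma\vb c) = \fnh(\gamma^{-1}z;\vb c)$ into the definition of $b_j(\gamma\vb c)$ and use two elementary facts: first, $z - \gamma c_j = \alpha\,(\gamma^{-1}z - c_j)$, hence $\log\abs{z-\gamma c_j}^2 = \log\abs{\gamma^{-1}z - c_j}^2$ because $\abs\alpha = 1$; second, by the chain rule, applying $\conj\del_z$ to a map of the form $z\mapsto\Phi(\gamma^{-1}z)$, with $\Phi$ a function of $w$ and $\conj w$, produces $\alpha\,(\conj\del_w\Phi)(\gamma^{-1}z)$. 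Since $\gamma^{-1}$ sends $\gamma c_j$ to $c_j$, these give $\conj\del_{z=\gamma c_j}\brk(\sgn_j\fnh(z;\gamma\vb c) - \log\abs{z-\gamma c_j}^2) = \alpha\cdot\conj\del_{w=c_j}\brk(\sgn_j\fnh(w;\vb c) - \log\abs{w-c_j}^2) = \alpha\cdot\half\,b_j(\vb c)$, and multiplying by $2$ gives $b_j(\gamma\vb c) = \alpha\,b_j(\vb c)$.

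For the orientation reversing generator $\gamma z = \conj z$, which is an involution, covariance gives $\fnh(z;\conj{\vb c}) = \fnh(\conj z;\vb c)$, and $\abs{z - \conj c_j} = \abs{\conj z - c_j}$. The only change from the previous case is that $z\mapsto\conj z$ interchanges the holomorphic and antiholomorphic variables, so the chain rule now turns $\conj\del_z$ applied to $z\mapsto\Phi(\conj z)$ into $(\del_w\Phi)(\conj z)$; evaluating at $z = \conj c_j$, i.e. $w = c_j$, gives $\del_{w=c_j}\brk(\sgn_j\fnh(w;\vb c) - \log\abs{w-c_j}^2)$. Because that bracketed function, call it $f$, is real, $\del_{w=c_j}f$ is the complex conjugate of $\conj\del_{w=c_j}f = \half\,b_j(\vb c)$, hence equals $\half\,\conj{b_j(\vb c)}$; multiplying by $2$ gives $b_j(\conj{\vb c}) = \conj{b_j(\vb c)}$.

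No step here is a genuine obstacle: the argument is essentially bookkeeping, and the only place demanding care is the Wirtinger chain rule in the orientation reversing case, where one must keep $\del_w$ and $\conj\del_w$ distinct, together with the short observation that the regular part $\sgn_j\fnh - \log\abs{z-c_j}^2$ is real — precisely what turns a holomorphic derivative into the conjugate coefficient. Finally, I would remark that $\eucGroup\cong\plane\rtimes\mathbb{O}(2)$ is generated by the orientation preserving isometries together with the single reflection $z\mapsto\conj z$, and that the two transformation laws above are compatible under composition, so together they pin down the behaviour of every $b_j$ under all of $\eucGroup$.
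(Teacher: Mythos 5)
Your proposal is correct and follows essentially the same route as the paper: substitute the covariance identity $h(x;\gamma\vb c)=h(\gamma^{-1}x;\vb c)$ into the definition of $b_j$, use $|\alpha|=1$ to absorb the logarithmic singularity, and apply the Wirtinger chain rule (the antiholomorphic derivative picks up the factor $\alpha$ in the holomorphic case, and turns into a holomorphic derivative in the reflection case). Your explicit remark that the regular part $\sgn_j h-\log|z-c_j|^2$ is real, which converts $\del_{w=c_j}$ of it into $\tfrac12\conj{b_j}$, is exactly the step the paper uses implicitly in its last line.
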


\begin{proof}
  After some algebraic manipulation and the chain rule,
  \begin{align}
      b_j(\gamma\vb c) &= 2\,\conj\del_{x = \gamma c_j} 
      \brk(\sgn_j\fnh(x; \gamma \vb c) - \log\,\abs{x - \gamma 
      c_j}^2)\nonumber\\
      &= 2\,\conj\del_{x = \gamma c_j}  \brk(
       \sgn_j\fnh(\gamma^{-1}x; \vb c) - \log\,\abs{\gamma^{-1} x 
       - c_j}^2)\nonumber\\
       &= 2\,\del_{x =c_j}  \brk(
       \sgn_j\fnh(x; \vb c) - \log\,\abs{x - c_j}^2)
       \conj\del_{x = \gamma c} (\gamma^{-1}x)\nonumber\\
       & \quad + 2\,\conj\del_{x =c_j}  \brk(
       \sgn_j\fnh(x; \vb c) - \log\,\abs{x - c_j}^2)
       \conj\del_{x = \gamma c}\conj{(\gamma^{-1}x)}\nonumber\\
       &= 2\,\conj\del_{x =c_j}  \brk(
       \sgn_j\fnh(x; \vb c) - \log\,\abs{x - c_j}^2)
       \frac{1}{\conj \alpha}\nonumber\\
       &= \alpha \,b_j(\vb c),
  \end{align}
  where we used the fact that $\gamma^{-1}x$ is holomorphic and $\alpha$
  unitary to simplify the result of the chain rule. 
  For the second identity we proceed analogously,
  \begin{align}
      b_j(\conj{\vb c}) &= 2\,\conj\del_{x = \conj c_j} \brk(
      \sgn_j \fnh(x; \conj{\vb c}) - \log\,\abs{x - \conj c_j}^2)\nonumber\\
      &= 2\,\conj\del_{x = \conj c_j} \brk(
      \sgn_j\fnh(\conj x; \vb c) - \log\,\abs{\conj x - c_j}^2)\nonumber\\
      &= 2\,\del_{x = c_j} \brk(
      \sgn_j\fnh(x; \vb c) - \log\,\abs{ x - c_j}^2)\nonumber\\
      &= \conj b_j(\vb c).
  \end{align}
\end{proof}

As a consequence of the lemma, the coefficients $b_j$ are translation
invariants: if $X$ is the Killing field generated by a one parameter
family of isometries $\gamma_s$, then 
\begin{align}
\mathcal{L}_X b_j(\vb c) = \dot a_0\,b_j(\vb c),
\end{align}
where $\mathcal{L}_X$ denotes the Lie derivative. 
If $\gamma_s x = x + sb$ is a one parameter family of translations, then 
$\dot a_0 = 0$, on the other hand, 
\begin{align}
X = \sum_k \brk(b\,\del_{c_k} + \conj b\, \conj\del_{c_k}),
\end{align}

Letting $b = 1$ and $b = i$, we obtain,
\begin{align}
\sum_k\brk(\del_{c_k} + \conj\del_{c_k}) b_j(\vb c) = 0, &&
\sum_k\brk(\del_{c_k} - \conj\del_{c_k}) b_j(\vb c) = 0. 
\end{align}

Hence, $\sum_k\del_{c_k}\,b_j = \sum_k\conj\del_{c_k} b_j = 0$. Applying 
the symmetries of the coefficients,
\begin{gather}
\conj\del_{c_j} \,\sum_k b_k = \sum_k \conj\del_{c_j} b_k 
= \sum_k \conj\del_{c_k} b_j = 0,\\
\del_{c_j} \,\sum_k b_k =\sum_k \del_{c_j} b_k 
= \sum_k \conj\del_{c_k} \conj b_j = 
\conj{\sum_k \del_{c_k} b_j} = 0.
\end{gather}

Therefore, $\sum_k b_k$ is constant. Repeating this argument with the 
one-parameter family of rotations $\gamma_s\,x = e^{s\,i}\,x$, we find that 
$X = i\sum_k \brk(c_k\,\del_{c_k} - \conj c_k\,\conj \del_{c_k})$, thence,
\begin{align}
\sum_k \brk(c_k\,\del_{c_k} - \conj c_k\,\conj \del_{c_k})\,b_j = b_j.
\end{align}

Summing over $j$, we find,
\begin{align}
\sum_j b_j = \sum_j \sum_k \brk(c_k\,\del_{c_k} - \conj c_k\,\conj 
\del_{c_k})\,b_j
 = \sum_k \brk(c_k\,\del_{c_k} - \conj c_k\,\conj \del_{c_k})\cdot\sum_j b_j = 
 0,
\end{align}
since $\sum_j b_j$ is constant. This result is analogous to the similar 
result obtained by Samols for vortices of the Ginzburg-Landau functional 
in~\cite{samols1992}. As a consequence of this symmetry we have the following 
proposition about conservation of momentum.

%As the set of overlapping vortices and
%antivortices has measure 0 with respect to the metric defined by the
%localization formula, this constant is well defined on moduli space. 

\begin{proposition}
  The total conserved momentum of a vortex-antivortex system with cores at 
  position $\vb c$ is,
  \begin{align}
      P_1 + P_2\,i = 2\pi\,\sum_j (1 - \sgn_j\tau)\,\dot c_j,
  \end{align}
  where $s_j = s_{c_j}$ is the sign function determining the type of the 
  core.
\end{proposition}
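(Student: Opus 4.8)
The plan is to obtain the momentum as the Noether charge associated with the translation invariance of the effective Lagrangian governing the adiabatic dynamics on the moduli space, and then to collapse the resulting double sum over the cores using the symmetry identities for the coefficients $b_j$ established above. No analysis beyond what has already been proved should be required; once the localization formula and the translation properties of the $b_j$ are in hand the argument is purely algebraic.

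First I would recall that on $\plane$, in the standard complex coordinate, the ambient metric is flat, so $\cf \equiv 0$ in the localization formula of Theorem~\ref{thm:bps-loc-formula}, and the geodesic approximation identifies the adiabatic dynamics with geodesic motion for the $\Lsp^2$ metric; equivalently the effective Lagrangian is the kinetic energy
\begin{align*}
\Energy = \pi \sum_{i,j} M_{ij}(\vb c)\,\dot c_i\,\conj{\dot c_j}, \qquad
M_{ij}(\vb c) = (1 - \sgn_i\tau)\,\delta_{ij} + \del_{c_i} b_j(\vb c),
\end{align*}
where the $c_j$ are complex coordinates for the cores on the open dense stratum of non-coincident cores, the map $\vb c \mapsto M_{ij}(\vb c)$ is smooth there by Theorem~\ref{thm:h-reg-param-dep}, and $M$ is Hermitian by Lemma~\ref{lem:coef-sym}, so $\Energy$ is real. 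I would then check that $\Energy$ is invariant under the one-parameter group of simultaneous translations $c_j \mapsto c_j + s\beta$, $\beta \in \cpx$ fixed: the velocities are unchanged, the constant part of $M$ is manifestly invariant, and differentiating the identities $\sum_k \del_{c_k} b_j = \sum_k \conj\del_{c_k} b_j = 0$ (themselves consequences of Lemma~\ref{lem:isom-inv-coefs}) yields $\sum_k \del_{c_k} M_{ij} = \sum_k \conj\del_{c_k} M_{ij} = 0$.

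By Noether's theorem, for each $\beta \in \cpx$ the quantity
\begin{align*}
Q_\beta = \sum_k \brk( \frac{\del \Energy}{\del \dot c_k}\,\beta + \frac{\del \Energy}{\del \conj{\dot c_k}}\,\conj\beta )
= \pi\,\beta \sum_{k,j} M_{kj}\,\conj{\dot c_j} + \pi\,\conj\beta \sum_{i,k} M_{ik}\,\dot c_i
\end{align*}
is conserved, the derivatives being Wirtinger derivatives. Here I would evaluate the inner sums using the translation identities above: $\sum_k M_{kj} = (1-\sgn_j\tau) + \sum_k \del_{c_k}b_j = 1 - \sgn_j\tau$, while $\sum_k M_{ik} = (1-\sgn_i\tau) + \del_{c_i}\sum_k b_k = 1 - \sgn_i\tau$ because $\sum_k b_k$ is constant. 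Substituting,
\begin{align*}
Q_\beta = \pi\,\beta \sum_j (1-\sgn_j\tau)\,\conj{\dot c_j} + \pi\,\conj\beta \sum_j (1-\sgn_j\tau)\,\dot c_j = 2\pi\,\Re\brk( \conj\beta \sum_j (1 - \sgn_j\tau)\,\dot c_j ),
\end{align*}
and taking $\beta = 1$ gives the conserved linear momentum $P_1$ in the $x^1$-direction and $\beta = i$ gives $P_2$, so that $P_1 + P_2\,i = 2\pi \sum_j (1 - \sgn_j\tau)\,\dot c_j$. I expect the main obstacle to be bookkeeping rather than substance: one must keep the Wirtinger derivatives, the asymmetry signs $\sgn_j$, and the row versus column sums of the Hermitian matrix $M$ straight so that Lemma~\ref{lem:coef-sym} together with the translation identities collapse the double sum to the stated expression; it is also worth remarking briefly that this moduli-space Noether charge coincides, in the adiabatic limit, with the field-theoretic linear momentum obtained from the stress tensor of the $O(3)$ Lagrangian, which is the content of the geodesic approximation.
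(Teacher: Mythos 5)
Your proof is correct and follows essentially the same route as the paper: the paper also obtains the momentum as the conserved quantity associated with the translational Killing fields $X=\sum_k(b\,\del_{c_k}+\conj b\,\conj\del_{c_k})$, collapses the double sum using exactly the identities $\sum_k\del_{c_k}b_j=0$ and the constancy of $\sum_k b_k$, and evaluates at $b=1$ and $b=i$. (Only a cosmetic slip: on the plane the conformal factor $e^{\Lambda}$ equals $1$, i.e. $\Lambda\equiv 0$, rather than the factor itself vanishing.)
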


\begin{proof}
  By lemma~\ref{lem:isom-inv-coefs} the translation group acts
  isometrically on the moduli space. Hence for any $b \in \cpx$ the fields,
  \(X = \sum_k(b \,\del_{c_k} + \conj b\,
  \conj\del_{c_k})\) are Killing fields and the 
  product
  \begin{align}
    P_b = \lproduct{\dot{\vb c}, X}
  \end{align}
  is constant along geodesic trajectories and corresponds to the
  projection of  
  momentum on the $b$ direction. If $\kmetric$ denotes the K\"ahler 
  metric, equation~\eqref{eq:kmetric}, 
  \begin{align}
      P_b &= \Re \brk(\kmetric(\dot{\vb c}, X))\nonumber \\
      &= \half \brk(
      \kmetric(\dot{\vb c}, X) + \conj{\kmetric(\dot{\vb c}, X)})\nonumber\\
      &= \half \brk(
      \kmetric(\dot{\vb c}, X) +\kmetric(X, \dot{\vb c}))\nonumber\\
      &= \half \sum_{i,j}\kmetric_{i \conj j}
      \,\brk(\dot c_i \conj X_j + X_i\conj{\dot c_j}).
  \end{align}
  
  On the other hand, 
  $\kmetric_{i \conj j} = 2\pi \brk( (
  1 - \sgn_i\tau) \delta_{ij} + \del_{c_i}b_j)$. Note that by the invariance of 
  the coefficients $b_j$, we have,
  \begin{align}
    \sum_{i,j}
      \del_{c_i} b_j\,\brk(\dot c_i \conj X_j + X_i\conj{\dot c_j}) &=  
      \conj b\,\sum_i \dot c_i\del_{c_i} \brk(\sum_j b_j)
      + b\sum_j \brk(\dot c_j \sum_i \del_{c_i} b_j) = 0.
  \end{align}
  
  Hence,
  \begin{align}
      P_b = \pi \sum_i (1 - \sgn_i\tau) 
      (\dot c_i\,\conj b + b \, \conj{\dot c_i}).
  \end{align}
  
  If we let $b = 1$ and $b = i$ we get the momentum in the direction of the 
  real an imaginary axes are the real and imaginary parts of the vector,
  \begin{align}
  2\pi\,\sum_j (1 - \sgn_j\tau)\,\dot c_j.
  \end{align}
\end{proof}

\begin{proposition}
  The angular momentum of a vortex-antivortex system is,
\begin{equation}
\ell = \kform(\vb c, \dot{\vb c}),
\end{equation}
where $\kform \in \Omega^2(\plane)$ is the K\"ahler form of the metric.
\end{proposition}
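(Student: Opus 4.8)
The plan is to follow the blueprint of the preceding proposition on linear momentum, replacing translations by rotations. By the general principle used there (Noether's theorem for an isometric one-parameter group), each Killing field $X$ on the moduli space produces a quantity $\lproduct{\dot{\vb c},X}$ conserved along geodesics, and the angular momentum is this quantity for the rotational symmetry. First I would invoke Lemma~\ref{lem:isom-inv-coefs}: exactly as for translations, the transformation law $b_j(\gamma\vb c)=\alpha\,b_j(\vb c)$ with $\abs{\alpha}=1$ together with the chain rule shows that the rotation group $\gamma_s x=e^{is}x$ acts isometrically on $\moduli^{k_+,k_-}$, so its generator — computed in the discussion preceding the momentum proposition to be
\[
  X \;=\; i\sum_k\brk(c_k\,\del_{c_k}-\conj c_k\,\conj\del_{c_k})
\]
— is a Killing field, and $\ell=\lproduct{\dot{\vb c},X}=\Re\,\kmetric(\dot{\vb c},X)$ with $\kmetric$ the K\"ahler metric of Theorem~\ref{thm:bps-loc-formula}.

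Next I would observe that $X=JE$, where $J$ is the almost complex structure of $\moduli^{k_+,k_-}$ and $E=\sum_k\brk(c_k\,\del_{c_k}+\conj c_k\,\conj\del_{c_k})$ is the radial (Euler) vector field, whose value at a configuration is the position vector $\vb c$ under the identification of the tangent space with $\cpx^{k_++k_-}$; this is the abuse of notation in the statement, $\kform(\vb c,\dot{\vb c})=\kform(E,\dot{\vb c})$. Because $\kmetric$ is K\"ahler — which is where Lemma~\ref{lem:coef-sym} is needed, giving both the Hermitian symmetry $\kmetric_{j\bar k}=\conj{\kmetric_{k\bar j}}$ and closedness of $\kform$ — the standard relation $\kform(u,v)=g(Ju,v)$ between $g=\Re\kmetric$, $J$ and $\kform$ yields $\ell=g(\dot{\vb c},JE)=\kform(E,\dot{\vb c})=\kform(\vb c,\dot{\vb c})$. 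To fix the normalisation constants I would pair this with the direct computation: writing the holomorphic components of $X$ as $X^j=ic_j$ gives $\kmetric(\dot{\vb c},X)=-i\sum_{j,k}\kmetric_{j\bar k}\dot c_j\conj c_k$, so $\ell=\Im\brk(\sum_{j,k}\kmetric_{j\bar k}\dot c_j\conj c_k)$; on the other hand $\kform=\pi i\sum_{j,k}\brk(e^{\cf(z_j)}(1-\sgn_j\tau)\delta_{jk}+\del_{z_j}b_k)\,dz_j\wedge d\conj z_k=\tfrac{i}{2}\sum_{j,k}\kmetric_{j\bar k}\,dz_j\wedge d\conj z_k$, and evaluating on $(\vb c,\dot{\vb c})$ and applying the Hermitian symmetry collapses the double sum to the same imaginary part, so the two expressions agree.

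The routine content is thus just these two short computations and the chain-rule verification that rotations are isometries. I do not anticipate a conceptual obstacle; the only delicate point is bookkeeping — keeping the $\tfrac{i}{2}$ normalisation between $\kform$ and $\kmetric$ and the holomorphic/antiholomorphic components consistent — so that the final identity comes out as $\ell=\kform(\vb c,\dot{\vb c})$ and not a rescaled or sign-reversed variant. The argument runs parallel to, and is slightly simpler than, the proof of the momentum proposition.
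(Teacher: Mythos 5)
Your proposal is correct and follows essentially the same route as the paper: identify $\ell$ as the pairing $\lproduct{\dot{\vb c},X}$ with the rotational Killing field $X=i\sum_k\brk(c_k\,\del_{c_k}-\conj c_k\,\conj\del_{c_k})$ and then recognise the result as $\kform(\vb c,\dot{\vb c})$. The only cosmetic difference is that you land on the Kähler form via the identity $X=JE$ together with $\kform(u,v)=g(Ju,v)$, whereas the paper simply expands $\half\brk(\kmetric(\dot{\vb c},X)+\kmetric(X,\dot{\vb c}))$ in the explicit coefficients of Theorem~\ref{thm:bps-loc-formula}; both verifications are equivalent and your coordinate cross-check reproduces the paper's computation.
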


% \pi i \sum_q \brk(
%  \brk( (1 - \sgn_p)\,\delta_{pq} + \del_p b_q )\,
%  \brk( c_p\,\conj{\dot c}_q - \conj c_q\,\dot c_p )).

\begin{proof}
  Conservation of angular momentum corresponds to the action of
\(SO(2)\) on the moduli space. Let 
$X = i\,\sum_k \brk(c_k\del_{c_k} - \conj c_k \,\conj\del_{c_k})$ be the 
Killing field generating the action of $SO(2)$, the
conserved angular momentum is, 
\begin{align}
  \ell &= \lproduct{\dot{\vb c}, X}\nonumber\\
  &= \half \brk(\kmetric(\dot{\vb c}, X) + \kmetric(X, \dot{\vb c}))\nonumber\\
    &= \pi i\,\sum_{j,k} \brk(
      \brk( (1 - \sgn_j)\,\delta_{jk} + \del_{c_j} b_k ) 
      ( -\dot c_j \, \conj c_k + c_j \, \conj{\dot c_k}))\nonumber\\
      &= \kform(\vb c, \dot{\vb c}).
\end{align}
\end{proof}

It is convenient to express the dynamics of vortex-antivortex systems in the 
centre of mass frame. Let us define,
\begin{align}
  C &= \frac{1}{(1 - \tau) k^+ + (1 + \tau) k^{-}} 
  \sum_j (1 - \sgn_j \tau) c_j,\\
  M &= 2\pi\,(1 - \tau) k^+ + 2\pi\,(1 + \tau) k^{-},
\end{align}

$M$ is the total mass and $C$ the centre of mass of the system as determined 
by conservation of momentum and energy. Let us define the variables  
\(\xi_j \in\plane \) such that, 
\begin{align}
c_j = C + \xi_j.
\end{align}

Let \(m_j = 2 \pi (1 - \sgn_j \tau) \) be the effective mass of a core, then
 \(\sum_j m_j\,\xi_j = 0 \). Note that this linear combination is invariant 
 under the action of $S_{k_+}\times S_{k_-}$ on the moduli space, where $S_n$ 
 is the symmetric group of order $n$, hence it determines a well defined 
 subspace  $\moduli_0^{k_+,k_-} \subset
 \moduli^{k_+,k_-}$ where $C = 0$.

\begin{proposition}\label{prop:metric-cm-decomp}
  Let $\kmetric_0$ be the restriction of the K\"ahler metric 
  to $\moduli_0^{k_+,k_-}$, then,
\[
  \kmetric = M \abs{dC}^2 + \kmetric_0.
\]
\end{proposition}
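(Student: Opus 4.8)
The plan is to start from the localization formula \eqref{eq:kmetric} for $\kmetric$ on $\moduli^{k_+,k_-}$, change coordinates from the core positions $(c_1,\dots,c_{k_++k_-})$ to the centre of mass $C$ together with the relative variables $(\xi_1,\dots,\xi_{k_++k_-})$ subject to the constraint $\sum_j m_j\,\xi_j = 0$, and then show that the cross terms $dC\,d\conj\xi_j$ vanish and that the pure $dC\,d\conj C$ part has coefficient exactly $M$. Concretely, I would write $dc_j = dC + d\xi_j$ and substitute into
\[
\kmetric = 2\pi\sum_{i,j}\brk( e^{\Lambda(z_i)}(1 - \sgn_i\tau)\,\delta_{ij} + \del_{z_i}b_j)\,dc_i\,d\conj c_j,
\]
noting that on the Euclidean plane $\Lambda \equiv 0$, so $e^{\Lambda(z_i)} = 1$ and the diagonal term is just $2\pi(1-\sgn_i\tau)\,\delta_{ij} = m_i\,\delta_{ij}$. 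The diagonal (Kronecker) part then contributes $\sum_i m_i\,(dC + d\xi_i)(d\conj C + d\conj\xi_i)$, whose $dC\,d\conj C$ coefficient is $\sum_i m_i = M$, whose cross terms are $(\sum_i m_i\,d\xi_i)\,d\conj C + dC\,(\sum_i m_i\,d\conj\xi_i) = 0$ by the constraint $\sum_i m_i\,\xi_i = 0$ (hence $\sum_i m_i\,d\xi_i = 0$), and whose remaining part is $\sum_i m_i\,d\xi_i\,d\conj\xi_i$, which lives purely in the $\xi$ variables.

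The key point is then to handle the off-diagonal piece $2\pi\sum_{i,j}\del_{z_i}b_j\,dc_i\,d\conj c_j$. Expanding $dc_i\,d\conj c_j = (dC + d\xi_i)(d\conj C + d\conj\xi_j)$, the $dC\,d\conj C$ coefficient is $2\pi\sum_{i,j}\del_{z_i}b_j$, the $d\xi_i\,d\conj C$ coefficient is $2\pi\sum_j\del_{z_i}b_j = 2\pi\,\del_{z_i}(\sum_j b_j)$, and the $dC\,d\conj\xi_j$ coefficient is $2\pi\sum_i\del_{z_i}b_j$. Here I invoke the translation-invariance identities established just before the proposition: $\sum_j b_j$ is constant (shown via $\sum_k\del_{c_k}b_j = \sum_k\conj\del_{c_k}b_j = 0$ together with the coefficient symmetries of Lemma~\ref{lem:coef-sym}), so $\del_{z_i}(\sum_j b_j) = 0$ and $\sum_i\del_{z_i}b_j = \conj{\sum_i\del_{c_i}b_j}$... more directly, $\sum_k\del_{c_k}b_j = 0$ kills the $d\xi_i\,d\conj C$ terms and the $dC\,d\conj C$ term alike, and $\sum_k\conj\del_{c_k}b_j = 0$ combined with $\del_{z_i}b_j = \conj\del_{z_j}\conj b_i$ kills the $dC\,d\conj\xi_j$ terms. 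So the entire $\del_{z_i}b_j$ block contributes only to the relative variables, and combining with the diagonal computation gives $\kmetric = M\,\abs{dC}^2 + \kmetric_0$, where $\kmetric_0 = \sum_i m_i\,d\xi_i\,d\conj\xi_i + 2\pi\sum_{i,j}\del_{z_i}b_j\,d\xi_i\,d\conj\xi_j$ is manifestly the restriction of $\kmetric$ to the slice $\moduli_0^{k_+,k_-} = \{C = 0\}$, since on that slice $dC = 0$ and $c_j = \xi_j$.

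The main obstacle — really the only non-routine step — is making the change of coordinates and the vanishing of cross terms rigorous given that the $\xi_j$ are not independent: they satisfy one complex linear relation, so $\moduli_0^{k_+,k_-}$ is a codimension-one (complex) submanifold and one must be careful that "restriction of $\kmetric$" means pullback along the inclusion $C = 0$. I would address this by viewing $(C, \xi_1,\dots,\xi_{k_++k_-})$ as coordinates on the product $\cpx \times \{\sum m_j\xi_j = 0\}$, which is diffeomorphic to the original configuration space of non-coincident cores via $c_j = C + \xi_j$, and then the computation above is just a substitution of one-forms with the constraint differentiated. The translation-invariance of the $b_j$ (Lemma~\ref{lem:isom-inv-coefs} and its corollaries) is exactly what guarantees the splitting is orthogonal; without it the $dC\,d\conj\xi_j$ terms would survive. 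I expect the write-up to be short: one display for the substitution, one for invoking $\sum_k\del_{c_k}b_j = 0$ and the symmetry of Lemma~\ref{lem:coef-sym}, and one concluding line identifying the leftover as $\kmetric_0$.
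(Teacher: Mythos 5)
Your proposal is correct and follows essentially the same route as the paper's proof: substitute $dc_i = dC + d\xi_i$ into the localization formula, use $\sum_i m_i\,d\xi_i = 0$ to kill the cross terms in the mass block, and use the translation-invariance identities ($\sum_k \del_{c_k} b_j = 0$ and the constancy of $\sum_j b_j$) to kill the $\abs{dC}^2$ and cross terms in the $\del_{c_i}b_j$ block. The only difference is cosmetic bookkeeping of which identity kills which cross term; both identities are established before the proposition and both are needed, as you note.
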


\begin{proof}
This is a consequence of translation invariance, let 
$m_i = 2\pi\,(1 - \sgn_i\,\tau)$ 
be the mass of the core at $c_i$, 
\begin{align}
\kmetric &= \sum_i m_i\,\abs{dC + d\xi_i}^2 
+ 2\pi\sum_{i,j} \del_{c_i} b_j\,(dC + d\xi_i)\,\conj{(dC + d\xi_j)}.
\end{align}

The first terms can be split into
\begin{equation}
M\abs{dC}^2 + 2\Re\,\brk(\conj{dC}\,\sum_i m_i d\xi_i) + \sum_i m_i 
\abs{d\xi_i}^2
= M\abs{dC}^2 + \sum_i m_i \abs{d\xi_i}^2,
\end{equation}
and the second terms can be split as,
\begin{multline}
2\pi\brk( \sum_j (\abs{dC}^2 + \conj{d\xi}_j dC)\cdot \sum_i \,\del_{c_i} b_j
 + \conj{dC}\,\sum_i d\xi_i\cdot \sum_j \del_{c_i} b_j
 + \sum_{i, j} \del_{c_i} b_j \conj{d\xi}_j d\xi_{c_i}) \\
 = 2\pi\,\sum_{i, j} \del_{c_i} b_j\, \conj{d\xi}_j d\xi_{i},
\end{multline}
where the first two terms cancelled because the coefficients $b_j$ are
 translation invariant. 
Substituting back into the formula for \(\kmetric\) we conclude the
claim of the proposition.
\end{proof}

As a consequence, the moduli space decomposes in a product of K\"ahler
manifolds,  
\begin{equation}
\moduli^{k_+, k_-} \cong \plane \times \moduli_0^{k_+, k_-},
\end{equation}
such that the metric splits in a trivial flat metric in $\plane$ and 
the nontrivial restriction to $\moduli_0^{k_+,k_-}$. This
splitting was first observed by Samols \cite{samols1992} for vortices in
the Abelian Higgs model. \(\moduli_0^{k_+,k_-} \) is the space of
vortices and antivortices with fixed centre of mass. Given the decomposition of 
the metric in the moduli space, the energy and angular momentum in the centre 
of 
mass frame are,
\begin{align}
     E &= \half \kmetric(\dot\xi, \dot\xi),\\
  \ell &= \kform(\xi, \dot{\xi}),
\end{align}
where $\kmetric$ and $\kform$ are the K\"ahler metric and K\"ahler form of
 ambient space at $\xi = (\xi_1, \ldots, \xi_{k_+ + k_-})$.

\let  \Energy   \undefined
\let  \hf       \undefined 
\let  \gp       \undefined 
\let  \eucGroup \undefined
\let  \fnh      \undefined
\let  \sgn      \undefined
\let  \kmetric  \undefined
\let  \cf       \undefined
\let  \kform    \undefined 

%%% Local Variables:
%%% TeX-master: "../../../geometric-models.tex"
%%% End:

\newcommand{\hf}{\phi} 
\newcommand{\gp}{A}
\newcommand{\fnh}{h}
\newcommand{\sgn}{s}
\newcommand{\kmetric}{\mathrm{K}}
\newcommand{\cf}{\Omega}
\newcommand{\kform}{\omega}
\newcommand{\gmetric}{g}

\newcommand*{\fnv}{v}
\newcommand*{\fng}{g}
\newcommand*{\fnV}{V}
\newcommand*{\cmu}{\mu}
\newcommand*{\fnu}{u}
\newcommand*{\spL}{\mathrm L}
\newcommand*{\spH}{\mathrm H}
\newcommand*{\spW}{\mathrm W}
\newcommand*{\fnF}{F}
\newcommand*{\domain}{\mathcal D}
\newcommand*{\fnFmu}{\fnF_\cmu}
\newcommand*{\fnVmu}{\fnV_\cmu}
\newcommand*{\fnth}{\tilde\fnh}
\newcommand*{\vol}{\mathrm{Vol}}
\newcommand*{\avg}{\overline}
\newcommand*{\energyDens}{\mathcal{E}}

\section{The moduli space of vortex-antivortex pairs}
\label{sec:moduli-vort-antiv}

In this section we focus on the moduli space of 
vortex-antivortex pairs on Euclidean space and extend the analysis done by 
Rom\~ao-Speight in~\cite{romao2018} for $\tau = 0$. 
We focus on the non trivial part of the metric in the submanifold  
$\moduli_0^{1,1} 
\cong \plane\setminus\set{0}$, of 
pairs with centre of mass at the origin.
Let
\begin{align}
  b(x) &= b_1(x, -x), \qquad x \in \reals^+.
\end{align}

By the invariance of the coefficient $b_1$ with respect to conjugation, 
$b$ is a real function. Let us assume $c_1$ is the vortex position, 
introducing $(\epsilon, \theta)$ 
coordinates such that $c_1 - c_2 = 2\epsilon e^{i\theta}$, we have,
\begin{align}
b_1(c_1, c_2) &= e^{i\theta}b(\epsilon).
\end{align}

Recall $b_1 + b_2 = 0$ and $\del_1 b_1 + \del_2 b_1 = 0$, then the restriction 
of the metric to $\moduli_0^{1,1}$ is,
\begin{align}
  g_0 = \cf(\epsilon)\,(d\epsilon^2 + \epsilon^2d\theta^2),
\end{align}
where the conformal factor is,
\begin{align}
  \cf(\epsilon) = 2\pi \brk(
2 (1 - \tau^2) + \frac{1}{\epsilon}\,\dv{\epsilon}\brk(\epsilon b(\epsilon))).
\end{align}

\subsection{The singularity at $\epsilon = 0$}\label{sec:eps-0}

In this section we study the limiting behaviour of solutions
to the Taubes equation for vortex-antivortex pairs as $\epsilon\to 0$. We 
aim to prove bounds for $h_\epsilon$ in order to estimate the 
length of radial geodesics and finalize proving that the moduli space 
of vortex-antivortex pairs is incomplete. We start 
defining the following constant and 
functions,
\begin{align}
\mu &= \frac{1 - \tau}{1 + \tau}, &
    \fnFmu(t) &= 2\,\frac{e^t - 1}{\cmu\,e^t + 1}, &
    \fnVmu(t) &= \frac{2(\cmu + 1)\,e^t}{(\cmu\,e^t + 1)^2}.
\end{align}

If $h_T(x,\epsilon, -\epsilon)$ is the solution to the Taubes 
equation with a vortex at position $(\epsilon, 0)$ and an antivortex at 
$(-\epsilon, 0)$, let us define the function $\fnh_\epsilon$ such that 
$h_T = \fnh_\epsilon + \mu$. To express the Taubes equation in a convenient 
way, 
we make the change of variable, 
\begin{align}
x' =
(1 - \tau^2)^{-1/2}\,x,
\end{align}
under this change of variable, the position of a vortex or antivortex 
is $(\pm\epsilon',0) = (\pm(1 - \tau^2)^{-1/2}\,\epsilon, 0)$. By an abuse 
of notation, we still denote by $x$ coordinates in the rescaled 
Euclidean plane and by $(\pm\epsilon, 0)$ the positions of the cores. 
With these definitions, the Taubes equation is equivalent to,
\begin{align}
  -\laplacian \fnh_{\epsilon} = \fnFmu(\fnh_\epsilon) + 
  4\pi\delta_{\epsilon} - 4\pi\delta_{-\epsilon},
\end{align}
together with the constraint,
\begin{align}
\lim_{\abs x \to \infty} \fnh_\epsilon = 0.
\end{align}

Let $\fnu$ be the solution to the Taubes equation for the Ginzburg-Landau
functional \cite{taubes1980},
\begin{align}\label{eq:taubes-gl}
-\laplacian \fnu = e^u - 1 + 4\pi\delta_0,
\end{align}
Yang proves in~\cite{yang_strings_1999} that $u < 0$. For the following 
results, 
we will assume $\tau \in [0, 1)$, the case $\tau < 0$ being similar.
Repeating the argument of Yang, the function 
$u_{\epsilon}(x) = u(x - \epsilon)$ is a sub-solution of
$\fnh_{\epsilon}$, i.e. $u_\epsilon < 0$ and 
\begin{align}
-\laplacian u_\epsilon \geq \fnFmu(u_\epsilon),\qquad x \in \plane\setminus 
\set{\epsilon, -\epsilon}.
\end{align}

 On the other hand, the function 
$-u_{-\epsilon} = -\fnu(x + \epsilon)$ is a super-solution: it is positive and 
\begin{align}
-\laplacian (-u_{-\epsilon}) \leq \fnFmu(-u_{-\epsilon}),\qquad
x \in \plane\setminus \set{\epsilon, -\epsilon}.
\end{align}

 By the maximum principle,
\begin{align}
u_{\epsilon}(x) < h_{\epsilon}(x) < -u_{-\epsilon}(x), \qquad x \in
  \plane \setminus \set{\epsilon, -\epsilon}.
\end{align}

\begin{lemma}\label{lem:vav-euc-u-du-estimates}
For any $\delta \in (0, 1)$ there exist two constants $C(\delta)$ and 
$R(\delta)$ such that 
\begin{align}
    \abs{u(x)} \leq C\,e^{-(1 - \delta)\,\abs{x}}, &&
    \abs{\grad u(x)} \leq C\,e^{-(1 - \delta)\,\abs{x}}, &&
    \abs{x} > R.
\end{align}
In particular, $\norm{u}_{\spL^p} < \infty$ for any $p > 0$.
\end{lemma}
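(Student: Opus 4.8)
The statement is a standard exponential-decay estimate for the solution $u$ of the Ginzburg--Landau Taubes equation $-\laplacian u = e^u - 1 + 4\pi\delta_0$ on $\plane$ with $u \to 0$ at infinity, and it is essentially due to Taubes. Let me sketch how I would prove it.

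**Setting up the linearisation.** Away from the origin, $u$ is smooth and satisfies $-\laplacian u = e^u - 1$. Write $e^u - 1 = u + O(u^2)$; since $u \to 0$ as $|x| \to \infty$, for $|x|$ large we have $e^u - 1 = (1 + \zeta(x))\,u$ with $\zeta(x) \to 0$. Thus $u$ satisfies a linear Schr\"odinger-type equation $(-\laplacian + (1+\zeta))\,u = 0$ outside a large disk, where the "mass" $1 + \zeta$ is close to $1$. The natural comparison function is the modified Bessel function $K_0(|x|)$, which solves $(-\laplacian + 1)K_0 = 0$ on $\plane \setminus \{0\}$ and decays like $|x|^{-1/2}e^{-|x|}$ at infinity. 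The plan is a maximum-principle (comparison) argument.

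**The comparison argument.** Fix $\delta \in (0,1)$. Choose $R_0 = R_0(\delta)$ so large that $|\zeta(x)| < \delta(2-\delta)$ for $|x| \geq R_0$, which guarantees $1 + \zeta(x) \geq (1-\delta)^2$ there. The function $w(x) = A\,e^{-(1-\delta)|x|}$, for $|x| > 0$, satisfies $-\laplacian w = \big((1-\delta)^2 - \tfrac{1-\delta}{|x|}\big)w \leq (1-\delta)^2 w$, so $(-\laplacian + (1+\zeta))w \geq 0$ on $|x| \geq R_0$; that is, $w$ is a supersolution of the operator annihilating $u$. Now on the sphere $|x| = R_0$ choose $A$ large enough that $w \geq |u|$ there (possible since $u$ is bounded on the compact set), and note $w \to 0 = u$ at infinity. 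Since $\pm u$ and $w$ both satisfy (or super-satisfy) the same linear elliptic equation on the exterior region $\{|x| \geq R_0\}$ with $\pm u \leq w$ on the boundary and at infinity, the maximum principle on this unbounded domain (valid because the zeroth-order coefficient $1+\zeta$ is positive and $w$ provides a positive supersolution vanishing at infinity) yields $|u(x)| \leq w(x) = A\,e^{-(1-\delta)|x|}$ for $|x| \geq R_0$. This gives the first bound with $C = A$, $R = R_0$.

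**Gradient bound and $\spL^p$ finiteness.** For $\grad u$, apply interior elliptic estimates (Schauder or $\spW^{2,p}$ estimates, or simply differentiating the equation) on a unit ball $B_1(x)$ centred at a point with $|x| \geq R_0 + 1$: since $-\laplacian u = e^u - 1$ with right-hand side bounded in $C^\alpha(B_1(x))$ by a constant times $\sup_{B_1(x)}|u| \leq C e^{-(1-\delta')|x|}$ for a slightly smaller $\delta'$, we get $\|\grad u\|_{C^0(B_{1/2}(x))} \leq C' e^{-(1-\delta')|x|}$; reabsorbing $\delta'$ into $\delta$ (which is where a tiny loss in the exponent is spent — hence the "$\delta$" in the statement) gives the gradient estimate. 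Finally, $\|u\|_{\spL^p(\plane)}^p = \int_{|x|\leq R}|u|^p + \int_{|x|>R}|u|^p$; the first integral is finite since $u$ is continuous (indeed smooth) on $\plane \setminus \{0\}$ and, near $0$, $u$ behaves like $2\log|x| + O(1)$, so $|u|^p$ is locally integrable in two dimensions; the second integral is finite by the exponential bound. Hence $\|u\|_{\spL^p} < \infty$ for all $p > 0$.

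**Main obstacle.** The only delicate point is justifying the maximum principle on the unbounded exterior domain $\{|x| \geq R_0\}$ — one must ensure no "escape at infinity", which is handled precisely because $w$ is a positive supersolution tending to zero, so the difference $w \mp u$ cannot attain a negative interior infimum without contradicting the elliptic inequality; alternatively one runs the comparison on annuli $R_0 \leq |x| \leq R_1$ and lets $R_1 \to \infty$, using $u \to 0$. Everything else (the linearisation $e^u-1 = (1+\zeta)u$, the interior gradient estimate, the logarithmic singularity at the origin) is routine. Note that the sub/super-solution bounds $u_\epsilon < h_\epsilon < -u_{-\epsilon}$ already recorded before the lemma show why this estimate on $u$ is exactly what is needed to control $h_\epsilon$ away from and near the cores.
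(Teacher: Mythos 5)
Your proof of the decay of $u$ itself is essentially identical to the paper's: linearise $e^u-1=(1+\zeta)u$ near infinity, compare with the supersolution $w=Ce^{-(1-\delta)|x|}$ using $-\laplacian w = (1-\delta)\bigl(1-\delta-|x|^{-1}\bigr)w$, fix $R$ so the zeroth-order coefficient dominates, fix $C$ on the boundary circle, and apply the maximum principle to $\pm u - w$ on the exterior region; the $\spL^p$ argument (exponential decay at infinity plus local integrability of the logarithmic singularity) is also the same. The one place you diverge is the gradient bound: the paper re-runs the identical comparison argument on $\grad u$, using that $\grad u$ solves a linearised equation of the same form and that $\grad u\to 0$ at infinity (which it extracts from Taubes's result that $u\in \spH^r$ for all $r\geq 2$), whereas you transfer the pointwise decay of $u$ to $\grad u$ via interior Schauder/$\spW^{2,p}$ estimates on unit balls. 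Both routes are valid; yours avoids invoking the a priori decay of $\grad u$ at infinity but spends a harmless loss $\delta\mapsto\delta'$ in the exponent, while the paper's keeps the exponent clean but needs the extra input that the derivative already tends to zero. Your remark on justifying the maximum principle on the unbounded exterior (exhaust by annuli and use $u\to 0$) is exactly the point the paper glosses over, so including it is a small improvement rather than a gap.
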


\begin{proof}
That $u$ and its derivatives decay exponentially fast at infinity
can be found in the literature, for example 
in~\cite{jaffe1980monopoles,taubes1980}, here we adapt a proof 
of Yang for solutions of the elliptic problem of the $O(3)$ Sigma 
model in the symmetric case~\cite[Lemma 8.3]{yang1999strings}. Since 
$\lim_{\abs{x} \to \infty} u = 0$, we linearise~\eqref{eq:taubes-gl} 
about  $u = 0$ in a neighbourhood of infinity to obtain,
\begin{align}\label{eq:linear-taubes-at-infty}
    -\laplacian u = f(x)\,u, \qquad \abs{x} \geq R.
\end{align}
where $f(x)$ is a function such that $f(x) \to 1$ as $\abs{x} \to \infty$. 
 Let us introduce the comparison function 
 \begin{align}
     w(x) = C\,e^{-(1 - \delta)\,\abs{x}}, \qquad \abs{x} \geq R,
 \end{align}
 where $C(\delta)$ and $R(\delta)$ are positive constants yet to be determined. 
  The Laplacian of this function is 
 $-\laplacian w = (1 - \delta)\brk(1 - \delta - \abs{x}^{-1})w$. 
 Choosing $R$ sufficiently large, we can guarantee that 
 \begin{align}
     f(x) > (1 - \delta)\brk(1 - \delta - \frac{1}{\abs x}),
 \end{align}
 hence,
 \begin{align}
     -\laplacian (u - w) > f(x) (u - w), 
 \end{align}
 for $\abs{x} > R$. Let us choose $C$ big enough for the continuous function 
 $u - w$ to be negative at the boundary $\abs{x} = R$. Since $u(x) - w(x) \to 0$
  as $\abs{x} \to \infty$, by the maximum principle $u(x) < w(x)$ for all 
  $\abs{x} \geq R$. Since \eqref{eq:linear-taubes-at-infty} is linear, we 
  can apply the same argument to $-u$. Choosing the bigger of each pair of 
  constants $(C, R)$ the decay rate of $u$ is proved. 
  
  For the decay rate of $\grad u$, we know that $u \in \spH^r$ for all $r
  \geq 2$  
  \cite{taubes1980}, 
  in particular, $\grad u \to 0$ as $\abs{x} \to \infty$. Linearising
  in a neighbourhood  
  of infinity, $\grad u$ is a solution to the equation,
  \begin{align}
      -\laplacian\,(\grad\fnu) = f(x)\,\grad \fnu,
  \end{align}
  for some function $f(x)$ such that $f(x) \to 1$ as $\abs{x} \to
  0$. We can apply the same argument as before to obtain the
  exponential decay estimate of $\grad\fnu$. To prove the assertion
  about the $\spL^p$ norm of $u$, note that  
  $\abs{u}^p$ also decays exponentially fast at infinity for any $p > 0$ and 
  since the singularity at $x = 0$ is logarithmic and $\lim_{\abs{x} \to 0} \,\abs{x}\,(\log\,\abs{x})^p = 0$, the integral 
  \begin{align}
      \int_{\plane} \abs u^p \, dx 
      = \int_{0}^{2\pi}\int_0^\infty \abs u^p \, r\,dr\,d\theta 
  \end{align}
  is convergent.
 \end{proof}

For any $R > 0$ and $\epsilon_0 > 0$, if $\abs 
{x} > R$ and $\epsilon < 
\epsilon_0$, by the triangle inequality  $\abs{x \pm \epsilon} > R - 
\epsilon_0$. As a consequence of this observation and 
Lemma~\ref{lem:vav-euc-u-du-estimates}, we have the following corollary,

\begin{corollary}\label{lem:vav-euc-ueps-bound}
    For any $\delta \in (0, 1)$ and $\epsilon_0 > 0$, there exists constants 
    $C(\delta)$, and $R(\delta, \epsilon_0)$, such that if 
    $\epsilon < \epsilon_0$ and $\abs x > R$,
    \begin{align}
    \abs{u_{\pm \epsilon}(x)} \leq C\,e^{-(1- \delta)\abs x}, &&
    \abs{\nabla u_{\pm \epsilon}(x)} \leq C\,e^{-(1- \delta)\abs x}.
    \end{align}
\end{corollary}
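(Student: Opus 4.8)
The plan is to read the corollary off Lemma~\ref{lem:vav-euc-u-du-estimates} by the triangle inequality, as the remark preceding the statement suggests. Fix $\delta \in (0,1)$ and $\epsilon_0 > 0$ and introduce an auxiliary parameter $\delta' := \delta/2 \in (0,\delta)$. Applying Lemma~\ref{lem:vav-euc-u-du-estimates} with $\delta'$ in place of $\delta$ produces constants $C' = C'(\delta')$ and $R' = R'(\delta')$, depending only on $\delta$, such that $\abs{u(y)} \le C' e^{-(1-\delta')\abs{y}}$ and $\abs{\nabla u(y)} \le C' e^{-(1-\delta')\abs{y}}$ whenever $\abs{y} \ge R'$. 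Writing $p_{\pm} = (\pm\epsilon,0) \in \plane$, we have $u_{\pm\epsilon}(x) = u(x - p_{\pm})$ and, since the gradient commutes with translations, $\nabla u_{\pm\epsilon}(x) = (\nabla u)(x - p_{\pm})$, with $\abs{p_\pm} = \epsilon < \epsilon_0$.

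Now suppose $\epsilon < \epsilon_0$ and $\abs{x} > R' + \epsilon_0$. Then $\abs{x - p_{\pm}} \ge \abs{x} - \epsilon > \abs{x} - \epsilon_0 > R'$, so the lemma's estimates apply at the point $x - p_{\pm}$ and give
\[
  \abs{u_{\pm\epsilon}(x)} \le C'\,e^{-(1-\delta')\abs{x - p_{\pm}}} \le C'\,e^{(1-\delta')\epsilon_0}\,e^{-(1-\delta')\abs{x}},
\]
and identically for $\abs{\nabla u_{\pm\epsilon}(x)}$. To replace the exponent $1-\delta'$ by the sharper $1-\delta$ while keeping the prefactor independent of $\epsilon_0$, note that $e^{(1-\delta')\epsilon_0}e^{-(1-\delta')\abs{x}} \le e^{-(1-\delta)\abs{x}}$ precisely when $\abs{x} \ge (1-\delta')\epsilon_0/(\delta - \delta') = (2-\delta)\epsilon_0/\delta$. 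Hence, setting $C := C'(\delta/2)$ (which depends only on $\delta$) and $R := \max\{R'(\delta/2) + \epsilon_0,\ (2-\delta)\epsilon_0/\delta\}$ (which depends on $\delta$ and $\epsilon_0$), we obtain $\abs{u_{\pm\epsilon}(x)} \le C e^{-(1-\delta)\abs{x}}$ and $\abs{\nabla u_{\pm\epsilon}(x)} \le C e^{-(1-\delta)\abs{x}}$ for all $\abs{x} > R$ and all $\epsilon < \epsilon_0$, which is the claim.

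There is essentially no obstacle here: the content is entirely in Lemma~\ref{lem:vav-euc-u-du-estimates}, and the corollary is the uniform-in-$\epsilon$ restatement obtained by shifting the base point by at most $\epsilon_0$. The only mildly delicate point is the bookkeeping with the auxiliary $\delta'$, needed so that the absorbed factor $e^{(1-\delta')\epsilon_0}$ can be buried in the exponential for $\abs{x}$ large, leaving the constant $C$ a function of $\delta$ alone and pushing all $\epsilon_0$-dependence into the threshold $R$, exactly as the statement requires.
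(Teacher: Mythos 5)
Your proof is correct and follows essentially the same route as the paper, which simply notes that $\abs{x \pm \epsilon} > R - \epsilon_0$ by the triangle inequality and invokes Lemma~\ref{lem:vav-euc-u-du-estimates}. Your extra bookkeeping with $\delta' = \delta/2$ is a welcome refinement: it actually delivers the stated dependence $C = C(\delta)$ with all the $\epsilon_0$-dependence pushed into $R$, a point the paper leaves implicit.
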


We also have the following uniform bounds, valid for any $p > 0$,

\begin{align}
\norm{\fnh_{\epsilon}}_{\spL^p} 
&\leq \norm{ -u_{-\epsilon} - u_{\epsilon}}_{\spL^p}\nonumber\\
&\leq \norm{u_{-\epsilon}}_{\spL^p} +  \norm{u_{\epsilon}}_{\spL^p}\\
  &= 2\,\norm{u}_{\spL^p}.
\end{align}

Let us introduce the functions,
\begin{align}
  v(t) &= -\log\,(1 + t^{-2}), &
  g(t) &= \frac{4}{(1 + t^2)^2}, &
  t > 0.
\end{align}
and let $v_{\epsilon} = v(\abs{x - \epsilon}) 
- v(\abs{x + \epsilon})$, 
$g_{\epsilon} = g(\abs{x - \epsilon}) - g(\abs{x + \epsilon})$. 
We have the norm estimates,
\begin{align}
  \norm{g_{\epsilon}}_{\spL^p} &\leq 2\,\norm{g(\abs{x})}_{\spL^p},\\
  \norm{v_{\epsilon}}_{\spL^p} &\leq 2\,\norm{v(\abs{x})}_{\spL^p}.
\end{align}

Each of the functions $\abs{\fnv_\epsilon}^p$, $\abs{\fng_\epsilon}^p$ 
is pointwise convergent to zero. Therefore,
\begin{align}
    \lim_{\epsilon\to 0}\, \norm{g_\epsilon}_{\spL^p} &= 0, & p > \half,\\
    \lim_{\epsilon\to 0}\, \norm{v_\epsilon}_{\spL^p} &= 0, & p > 1.
\end{align}

Let us define 
$\fnth_{\epsilon} = \fnh_{\epsilon} - v_{\epsilon}$.
Then $\fnth_{\epsilon}$ is a solution to the regularised Taubes equation,
\begin{align}
  -\laplacian \tilde\fnh_{\epsilon} =
  \fnFmu (v_{\epsilon} + \tilde\fnh_{\epsilon}) - g_{\epsilon}.
\end{align}

From now onwards, we will use the same variable $C$ to denote a positive 
constant, independent of $\epsilon$,  
that can change from one inequality to the following. By our estimates for 
the $p$ norm of $\fnh_\epsilon$ and $\fnv_\epsilon$, 
$\fnth_{\epsilon}$ is uniformly bounded in $\spL^p$ for $p > 1$.

\begin{lemma}\label{lem:vav-plane-htilde-h1-bound}
    Let $\epsilon_0 > 0$ be an arbitrary positive constant, $||\tilde 
    h_\epsilon||_{\spH^1} \leq C$ for $\epsilon 
    < \epsilon_0$.
\end{lemma}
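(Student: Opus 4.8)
The plan is to derive a uniform energy estimate by testing the regularised Taubes equation $-\laplacian\fnth_\epsilon = \fnFmu(\fnv_\epsilon + \fnth_\epsilon) - \fng_\epsilon$ against $\fnth_\epsilon$ itself. Fix $\epsilon_0 > 0$ and take $\epsilon \in (0,\epsilon_0)$. First I would note that $\fnth_\epsilon$ lies a priori in $\spH^1(\plane)$: the logarithmic singularities of $\fnv_\epsilon$ at $\pm\epsilon$ are cancelled by those of $\fnh_\epsilon$, so $\fnth_\epsilon$ extends to a smooth function on $\plane$, while at infinity $\fnh_\epsilon$ is squeezed between the exponentially decaying functions $\fnu_{\pm\epsilon}$ (Corollary~\ref{lem:vav-euc-ueps-bound}) and $\fnv_\epsilon$ together with its gradient decays like $\abs x^{-3}$; standard elliptic estimates then give exponential decay of $\grad\fnh_\epsilon$ as well, so $\grad\fnth_\epsilon \in \spL^2(\plane)$. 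Since the uniform bound $\norm{\fnth_\epsilon}_{\spL^p} \leq C$ for every $p > 1$ has already been established (in particular for $p = 2$), it suffices to bound $\norm{\grad\fnth_\epsilon}_{\spL^2(\plane)}$ uniformly in $\epsilon$.

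Next I would control the right-hand side $w_\epsilon := \fnFmu(\fnv_\epsilon + \fnth_\epsilon) - \fng_\epsilon$ in $\spL^2(\plane)$ uniformly. Here the key observation is that mere boundedness of $\fnFmu$ is useless on the non-compact plane, since a non-zero constant is not square integrable; instead one uses $\fnFmu(0) = 0$ together with the fact that $\fnFmu$ is Lipschitz with constant $\norm{\fnVmu}_{\spL^\infty(\reals)} < \infty$ (recall $\fnVmu = \fnFmu'$ is bounded). This gives the pointwise bound $\abs{\fnFmu(\fnv_\epsilon + \fnth_\epsilon)} \leq \norm{\fnVmu}_{\spL^\infty}\,\abs{\fnv_\epsilon + \fnth_\epsilon}$, whence
\begin{align}
\norm{\fnFmu(\fnv_\epsilon + \fnth_\epsilon)}_{\spL^2}
 \leq \norm{\fnVmu}_{\spL^\infty}\brk(\norm{\fnv_\epsilon}_{\spL^2} + \norm{\fnth_\epsilon}_{\spL^2})
 \leq \norm{\fnVmu}_{\spL^\infty}\brk(2\norm{\fnv}_{\spL^2} + C),
\end{align}
using $\norm{\fnv_\epsilon}_{\spL^2} \leq 2\norm{\fnv}_{\spL^2}$ (with $\fnv \in \spL^2(\plane)$) and the uniform $\spL^2$ bound on $\fnth_\epsilon$. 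Adding $\norm{\fng_\epsilon}_{\spL^2} \leq 2\norm{\fng}_{\spL^2} < \infty$ yields $\norm{w_\epsilon}_{\spL^2(\plane)} \leq C$ with $C$ independent of $\epsilon$.

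Finally I would integrate $-\laplacian\fnth_\epsilon = w_\epsilon$ against $\fnth_\epsilon$ over $\disk_R(0)$ and let $R \to \infty$; the boundary term $\int_{\del\disk_R}\fnth_\epsilon\,\del_\nu\fnth_\epsilon$ vanishes in the limit because $\fnth_\epsilon$ and $\grad\fnth_\epsilon$ decay faster than $\abs x^{-1}$ while $\abs{\del\disk_R} = 2\pi R$ (a cutoff argument makes this precise and simultaneously reconfirms $\grad\fnth_\epsilon \in \spL^2$). This gives
\begin{align}
\norm{\grad\fnth_\epsilon}_{\spL^2(\plane)}^2 = \int_\plane \fnth_\epsilon\,w_\epsilon\,dx
 \leq \norm{\fnth_\epsilon}_{\spL^2}\,\norm{w_\epsilon}_{\spL^2} \leq C,
\end{align}
uniformly in $\epsilon < \epsilon_0$, so that $\norm{\fnth_\epsilon}_{\spH^1}^2 = \norm{\fnth_\epsilon}_{\spL^2}^2 + \norm{\grad\fnth_\epsilon}_{\spL^2}^2 \leq C$. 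I expect the only genuinely delicate point to be the justification of the integration by parts — verifying $\fnth_\epsilon \in \spH^1(\plane)$ and that the boundary contributions disappear — which is handled once the cancellation of the logarithmic singularities and the known decay rates are invoked; the substantive content is the use of $\fnFmu(0) = 0$ and the Lipschitz bound (rather than plain boundedness) to obtain an $\spL^2$, not merely $\spL^\infty$, estimate for the nonlinearity on $\plane$.
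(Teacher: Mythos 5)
Your proof is correct and follows the same overall strategy as the paper's: test the regularised equation against $\tilde h_\epsilon$, integrate by parts, and apply Cauchy--Schwarz, reducing everything to a uniform $\mathrm{L}^2$ bound on the right-hand side. The only point where you diverge is in how $F_\mu(h_\epsilon)$ is controlled in $\mathrm{L}^2(\mathbb{R}^2)$: the paper splits the plane into a fixed disk (where boundedness of $F_\mu$ together with finite measure suffices) and its exterior (where the sub/supersolution sandwich $u_\epsilon < h_\epsilon < -u_{-\epsilon}$ gives exponential decay of $h_\epsilon$, hence of $F_\mu(h_\epsilon)$), whereas you use the global bound $\lvert F_\mu(s)\rvert \le \lVert V_\mu\rVert_{\mathrm{L}^\infty}\,\lvert s\rvert$ coming from $F_\mu(0)=0$, combined with the uniform $\mathrm{L}^2$ bounds on $v_\epsilon$ and $\tilde h_\epsilon$ already established. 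Both mechanisms are valid; yours avoids the interior/exterior splitting and is marginally more self-contained, and you are also more explicit than the paper about why the integration by parts produces no boundary contribution at infinity.
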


\begin{proof}
    Since $\tilde \fnh_\epsilon$ is uniformly bounded on $\spL^2$, we aim to 
    show that $\norm{\nabla \tilde \fnh_\epsilon}_{\spL^2}$ is also bounded
    if $\epsilon < \epsilon_0$. We have,
\begin{align}
\norm{\grad \tilde\fnh_{\epsilon}}_{\spL^2}^2 
&= -\lproduct{\fnF_{\cmu}(v_{\epsilon}
    + \tilde\fnh_{\epsilon}),
    \tilde\fnh_{\epsilon}} +
\lproduct{\fng_{\epsilon},
    \tilde\fnh_{\epsilon}},\nonumber\\
&= -\lproduct{\fnF_{\cmu}(\fnh_\epsilon),
    \tilde\fnh_{\epsilon}} +
\lproduct{\fng_{\epsilon},
    \tilde\fnh_{\epsilon}},\nonumber\\
&\leq \abs{\lproduct{\fnF_{\cmu}(\fnh_\epsilon),
        \tilde\fnh_{\epsilon}}} + 
        \norm{\fng_\epsilon}_{\spL^2}
        \,\norm{\tilde \fnh_\epsilon}_{\spL^2}\nonumber\\
    &\leq \abs{\lproduct{\fnF_{\cmu}(\fnh_\epsilon),
            \tilde\fnh_{\epsilon}}} + C,
\end{align}   
where $\lproduct{\cdot,\cdot}$ is the $\spL^2$ product. It remains to show 
$\lproduct{\fnFmu(\fnh_\epsilon), \tilde \fnh_\epsilon}$ is uniformly bounded. 
Let $\delta \in (0 , 
1)$ be any given number, by corollary~\ref{lem:vav-euc-ueps-bound}, there are 
positive constants $R$, $C$ such that if $\abs x > R$ and $\epsilon < 
\epsilon_0$, 
\begin{align}
\abs {\fnh_\epsilon(x)} &\leq \abs{u(x - \epsilon) - u(x + \epsilon)}\\
&\leq C\,e^{-(1-\delta)\abs{x}}.
\end{align}

Hence, there is another constant, such that, 
\begin{align}
\fnFmu(\fnh_{\epsilon}) \leq C\,e^{-(1-\delta)\abs{x}}, \qquad \abs x \geq R.
\end{align}

Let $U$ be the exterior of the disk $\disk_R(0)$, by the previous bound,
\begin{align}
\abs{\lproduct{\fnF_{\cmu}(\fnh_\epsilon),
        \tilde\fnh_{\epsilon}}}_{\spL^2(U)}
   \leq \norm{\fnF_{\cmu}(\fnh_\epsilon)}_{\spL^2(U)}\,
    \norm{\tilde\fnh_{\epsilon}}_{\spL^2(U)}
    \leq C\,\norm{e^{-(1-\delta)\abs{x}}}_{\spL^2(U)},
\end{align}
since $\tilde \fnh_\epsilon$ is uniformly bounded on $\spL^2$.
%
%On the other hand, if $R$ is sufficiently large and $\abs x \geq R$,
%%
%\begin{align*}
%\abs {\tilde \fnh_\epsilon(x)} &\leq \abs {\fnh_\epsilon(x)} + 
%\abs{\fnv_\epsilon(x)}\\
%&\leq C\,e^{-(1-\delta)\abs x} + \abs{v(\abs {x + \epsilon})}
%+ \abs{v(\abs {x - \epsilon})}\\
%&\leq C\,e^{-(1-\delta)\abs x} + \frac{1}{\abs{x-\epsilon}^2}
%+ \frac{1}{\abs{x+\epsilon}^2}\\
%&\leq C\,e^{-(1-\delta)\abs x} + \frac{2}{(\abs x-\epsilon_0)^2},
%\end{align*}
%%
%for the last inequality to hold, we assume $R > \epsilon_0$. Note that 
%both terms on the right of the last inequality are $\spL^2$ functions on 
%the exterior $U$ of the disk $\disk_R(0)$, hence $\fnFmu(\fnh_\epsilon)$ 
%and $\tilde \fnh_\epsilon$ are  
%uniformly bounded on $\spL^2(U)$ and 
%%
%\begin{align*}
%\abs{\lproduct{\fnF_{\cmu}(\fnh_{\epsilon}), \tilde 
%\fnh_\epsilon}_{\spL^2(U)}} 
%\leq 
%    \norm{\fnF_{\cmu}(\fnh_{\epsilon})}_{\spL^2(U)}\,\norm{\tilde 
%    \fnh_\epsilon}_{\spL^2(U)}
%    \leq C.
%\end{align*}
On the other hand, 
$\fnFmu$ is a bounded function, hence, 
\begin{align}
\abs{\lproduct{\fnF_{\cmu}(\fnh_\epsilon),
        \tilde\fnh_{\epsilon}}}_{\spL^2(\disk_R(0))}
\leq \norm{\fnF_{\cmu}(\fnh_\epsilon)}_{\spL^2(\disk_R(0))}\,
\norm{\tilde\fnh_{\epsilon}}_{\spL^2(\disk_R(0))}
\leq C.
\end{align}

This concludes the proof that  $\lproduct{\fnFmu(\fnh_\epsilon), \tilde 
\fnh_\epsilon}$ is bounded on $\spL^2$. 
\end{proof}

\begin{proposition}\label{prop:tilde-h-eps-conv-0}
    $\lim_{\epsilon \to 0} ||\tilde h_\epsilon||_{\spL^2} = 0$.
\end{proposition}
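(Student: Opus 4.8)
The plan is to argue by contradiction using weak compactness in $\spH^1(\plane)$ together with a Liouville-type identification of the limit. Suppose the claim fails; then there are $\delta_0 > 0$ and a sequence $\epsilon_n \to 0$ with $\norm{\tilde h_{\epsilon_n}}_{\spL^2} \geq \delta_0$. Put $\epsilon_0 = \sup_n \epsilon_n < \infty$. By Lemma~\ref{lem:vav-plane-htilde-h1-bound} the family $\{\tilde h_{\epsilon_n}\}$ is bounded in $\spH^1(\plane)$, so after passing to a subsequence I may assume $\tilde h_{\epsilon_n} \rightharpoonup w$ weakly in $\spH^1(\plane)$ and, by the Rellich--Kondrachov theorem applied on bounded disks, $\tilde h_{\epsilon_n} \to w$ strongly in $\spL^2$ on every bounded domain, hence almost everywhere after a further subsequence.

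First I would identify $w$. Since $v_{\epsilon_n} \to 0$ pointwise away from the origin and $g_{\epsilon_n} \to 0$ in $\spL^2(\plane)$, and $\fnFmu$ is continuous and bounded, passing to the distributional limit in the regularised Taubes equation $-\laplacian \tilde h_{\epsilon_n} = \fnFmu(v_{\epsilon_n} + \tilde h_{\epsilon_n}) - g_{\epsilon_n}$ (using the a.e.\ convergence and dominated convergence for the bounded nonlinearity) yields $-\laplacian w = \fnFmu(w)$ with $w \in \spH^1(\plane)$. Because $\fnFmu$ is globally bounded and satisfies $\abs{\fnFmu(t)} \leq C\abs t$ near $t = 0$, one has $\fnFmu(w) \in \spL^2(\plane)$, so testing against $w$ is legitimate and gives $\norm{\grad w}_{\spL^2}^2 = -\lproduct{\fnFmu(w), w} \leq 0$, since $\fnFmu$ is increasing with $\fnFmu(0) = 0$, hence $\fnFmu(t)\,t \geq 0$ for all $t$. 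Therefore $\grad w = 0$; a constant function lying in $\spL^2(\plane)$ must vanish, so $w \equiv 0$. It then remains to upgrade this to strong convergence on all of $\plane$: given $\eta > 0$, choose $R$ so that the uniform tail estimate below is $< \eta/2$, and use the local strong convergence to $0$ on $\disk_R(0)$ for the remaining part, obtaining $\norm{\tilde h_{\epsilon_n}}_{\spL^2(\plane)} \to 0$, which contradicts $\norm{\tilde h_{\epsilon_n}}_{\spL^2} \geq \delta_0$.

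The one genuinely delicate point — the main obstacle — is precisely the uniform tail estimate, i.e.\ controlling the $\spL^2$-mass of $\tilde h_\epsilon$ near infinity uniformly in $\epsilon$, which is what prevents the non-compactness of the plane from breaking the argument. For this I would combine the pointwise comparison $u_\epsilon < h_\epsilon < -u_{-\epsilon}$ with Corollary~\ref{lem:vav-euc-ueps-bound} (taking $\delta = \tfrac12$) to get $\abs{h_\epsilon(x)} \leq C e^{-\abs x/2}$ for $\abs x \geq R_0$, uniformly in $\epsilon < \epsilon_0$, together with the elementary bound $\abs{v_\epsilon(x)} \leq \abs{v(\abs{x - \epsilon})} + \abs{v(\abs{x + \epsilon})} \leq C\abs x^{-2}$ for $\abs x \geq R_0$ (using $\abs{v(t)} = \log(1 + t^{-2}) \leq t^{-2}$ and $\abs{x \pm \epsilon} \geq \abs x/2$ once $R_0 \geq 2\epsilon_0$). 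Hence $\abs{\tilde h_\epsilon(x)} \leq \psi(x) := C e^{-\abs x/2} + C\abs x^{-2}$ for $\abs x \geq R_0$, and $\int_{\abs x \geq R}\psi^2 \to 0$ as $R \to \infty$ independently of $\epsilon$, which is exactly the control the previous paragraph needs. The rest is routine bookkeeping: deriving the final contradiction and then observing that, since every sequence $\epsilon_n \to 0$ has a subsequence along which $\tilde h_{\epsilon_n} \to 0$ in $\spL^2$, the full family converges, which is the assertion.
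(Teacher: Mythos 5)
Your argument is correct and follows the same overall strategy as the paper's proof: weak $\mathrm{H}^1$-compactness from Lemma~\ref{lem:vav-plane-htilde-h1-bound}, local strong $\mathrm{L}^2$ convergence via Rellich--Kondrachov, identification of the limit as an $\mathrm{H}^1(\mathbb{R}^2)$ solution of $-\laplacian w = F_\mu(w)$, the conclusion $w \equiv 0$, and finally a local-plus-tail decomposition to upgrade to convergence on all of $\mathbb{R}^2$. The one step you handle genuinely differently is the vanishing of the limit: the paper appeals to elliptic regularity and the maximum principle, whereas you test the limit equation against $w$ itself and use the monotonicity $F_\mu(t)\,t \ge 0$ (with the paper's geometer's sign convention for $\laplacian$, which you have applied correctly) to get $\nabla w = 0$, hence $w = 0$ since a nonzero constant is not in $\mathrm{L}^2(\mathbb{R}^2)$; this energy identity is if anything more robust on a non-compact domain, where a supremum of $w$ need not be attained. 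Your uniform tail estimate is also slightly more scrupulous than the one printed in the paper: the paper bounds $\tilde h_\epsilon$ outside a large disk by $u_\epsilon$ and $u_{-\epsilon}$ alone, which controls $h_\epsilon$ but silently omits the $v_\epsilon$ contribution to $\tilde h_\epsilon = h_\epsilon - v_\epsilon$, whereas your explicit bound $\abs{v_\epsilon(x)} \le C\,\abs{x}^{-2}$ for $\abs{x} \ge R_0 \ge 2\epsilon_0$ closes that small gap. No step is missing.
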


\begin{proof}
Let $\tilde\fnh_n = \fnth_{\epsilon_n}$
be any sequence such that $\epsilon_n\to 0$. By 
lemma~\ref{lem:vav-plane-htilde-h1-bound} $\set{\tilde \fnh_n}$ is bounded on 
$\spH^1$, hence,  
by the Banach-Alaoglu theorem, after passing to a subsequence if necessary, 
there is a function
$\tilde\fnh_{*} \in \spH^1$ such that 
$\tilde\fnh_n \rightharpoonup \tilde\fnh_{*}$ weakly on 
$\spH^1$ and by the 
Rellich-Kondrashov theorem, after passing to another subsequence if 
necessary, 
we can 
assume that for any bounded
domain $\domain$, $\tilde\fnh_n
\to \tilde\fnh_{*}$ strongly on $\spL^2(\domain)$. 
We will assume without further notice that domains are bounded 
and their boundaries have at least Lipschitz regularity. 
Let $\varphi \in C_0^1(\plane)$ and let $\domain$ be a domain containing the 
support of $\varphi$, 

\begin{align}
  \lproduct{\varphi, \tilde\fnh_{*}}_{\spH^1} 
  &=  \lim\,\lproduct{\varphi, \tilde\fnh_n}_{\spH^1}\nonumber\\
  &= \lim\, \lproduct{\grad\varphi, \grad\tilde\fnh_n}_{\spL^2} + \lim\,
    \lproduct{\varphi, \tilde\fnh_n}_{\spL^2}\nonumber\\
  &= \lim\,\lproduct{\varphi, \laplacian\tilde\fnh_n}_{\spL^2} + \lim\,
    \lproduct{\varphi, \tilde\fnh_n}_{\spL^2}\nonumber\\
  &= -\lim\,\lproduct{\varphi,
\fnF_{\cmu}(v_n + \tilde\fnh_n) - g_n
    }_{\spL^2} + 
    \lproduct{\varphi, \tilde\fnh_{*}}_{\spL^2}.
\end{align}

The last equation because the convergence $\tilde\fnh_n\to
\tilde\fnh_{*}$ is strong on bounded domains
and $\varphi$ is compactly supported. Consequently,
\begin{align}
  \lproduct{\grad \varphi, \grad \tilde\fnh_{*}}_{\spL^2} =
  -\lim\,\lproduct{\varphi, \fnF_{\cmu}(v_n + \tilde\fnh_n) - g_n
    }_{\spL^2}.
\end{align}

By the mean value theorem, we have the estimate,
\begin{align}
\norm{\fnF_{\cmu}(v_n + \tilde\fnh_n) - \fnF_{\cmu}(
  \tilde\fnh_{*})}_{\spL^2(\domain)}
  &\leq C\,\brk(
  \norm{v_n}_{\spL^2(\domain)} 
  + \norm{\tilde\fnh_n - \tilde\fnh_{*}}_{\spL^2(\domain)}
  ).
\end{align}

Therefore, $\fnF_{\cmu}(v_n + \tilde\fnh_n) \to \fnF_{\cmu}(\tilde\fnh_{*})$ 
and $g_n \to 0$ strongly on $\spL^2(\domain)$, thence 
$\tilde\fnh_{*}$ is a weak solution of the equation,
\begin{align}
-\laplacian\tilde\fnh_{*} = \fnF_{\cmu}(\tilde\fnh_{*}).
\end{align}

By elliptic regularity $\tilde\fnh_{*}$ is a strong solution
and by the maximum principle $\tilde\fnh_{*} = 0$. 

Let $\domain$ be any other domain, our previous argument shows that any 
sequence
$\fnth_{n}$ has a 
convergent subsequence 
$\fnth_{n_j} \to 0$ on $\spL^2(\domain)$. Therefore for any domain
$\domain$, $\lim_{\epsilon \to 0} \norm{\fnth_\epsilon}_{\spL^2(\domain)}
 = 0$. Now we prove that 
 $\lim_{\epsilon \to 0} \norm{\fnth}_{\spL^2} = 0$, to this 
 end, let $\rho > 0$ and let us take $R > 0$ such that 
 \begin{align}
     \norm{u}_{\spL^2(\plane\setminus \disk_R(0))} < \frac{\rho}{2}.
 \end{align}
 Let $\epsilon_0$ be small enough such that $\abs{x \pm \epsilon} > R$ 
 for all $\epsilon < \epsilon_0$ and $\abs{x} > 2R$. 
  In this situation we have,
  \begin{align}
      \norm{\fnth_\epsilon}_{\spL^2(\plane\setminus \disk_{2R}(0))} 
      &< \norm{u_\epsilon}_{\spL^2(\plane\setminus \disk_{2R}(0))}
      + \norm{u_{-\epsilon}}_{\spL^2(\plane\setminus \disk_{2R}(0))}\nonumber\\
      &\leq 2\norm{u}_{\spL^2(\plane\setminus \disk_{R}(0))}\nonumber\\
      &< \rho.
  \end{align}
  
  On the other hand, there exists $\epsilon_1$ such that if $\epsilon < \epsilon_1$, 
  then,
  \begin{align}
    \norm{\fnth_\epsilon}_{\spL^2(\disk_{2R}(0))} < \rho,
  \end{align}
  taking $\epsilon' = \min(\epsilon_0, \epsilon_1)$, we conclude that 
  \begin{align}
    \norm{\fnth_\epsilon}_{\spL^2} < 2\rho,  \qquad \forall \epsilon < \epsilon'
  \end{align}
  and the limit $\lim_{\epsilon \to 0} \norm{\fnth_\epsilon}_{\spL^2} = 0$ 
  holds. 
\end{proof}  

Since $\fnth_\epsilon \to 0$ strongly as $\epsilon \to 0$, by the mean
value theorem as in the proof of the proposition, we have, 
 \begin{align}
     \lim_{\epsilon\to 0}\, \norm{\fnFmu(\fnv_\epsilon +
       \fnth_\epsilon)}_{\spL^2} 
      = 0.
 \end{align}

Moreover,
\begin{align}
    \norm{\laplacian \fnth_\epsilon}_{\spL^2} &\leq 
    \norm{\fnFmu( \fnv_\epsilon + \fnth_\epsilon)}_{\spL^2} 
    + \norm{\fng_\epsilon}_{\spL^2},
\end{align}
since both terms on the right side of the inequality converge 
to $0$, we have the limit 
\begin{align}
 \lim_{\epsilon\to 0}\,\norm{\laplacian\fnth_\epsilon}_{\spL^2} = 0.
\end{align}

\begin{lemma}\label{lem:vav-euc-fnth-uniform-conv-domains}
Let $\domain$ be any domain on the plane, the restrictions 
$\fnth_\epsilon|_\domain$ and  $\grad\fnth_\epsilon|_\domain$ converge 
uniformly to 0.
\end{lemma}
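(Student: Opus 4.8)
The plan is to promote the $\mathrm{L}^2$-convergence $\fnth_\epsilon \to 0$ of Proposition~\ref{prop:tilde-h-eps-conv-0} to $C^1$-convergence on compact sets by a uniform interior elliptic estimate followed by a Sobolev embedding and a compactness argument. Since $\domain$ is bounded we may fix radii with $\domain \Subset \disk_R(0) \Subset \disk_{R+1}(0) \subset \plane$, so it suffices to prove that $\fnth_\epsilon$ and $\grad\fnth_\epsilon$ converge to $0$ uniformly on $\overline{\disk_R(0)}$.

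First I would note that $\laplacian\fnth_\epsilon$ is \emph{uniformly} bounded in $\spL^\infty(\plane)$. Indeed, from the regularised Taubes equation $-\laplacian\fnth_\epsilon = \fnFmu(\fnv_\epsilon + \fnth_\epsilon) - \fng_\epsilon$, the term $\fnFmu(\fnv_\epsilon+\fnth_\epsilon)$ is bounded in absolute value by $\norm{\fnFmu}_{\spL^\infty(\reals)}$, a finite constant depending only on $\tau$ (recall $\mu>0$), and $\fng_\epsilon$ is uniformly bounded; this bound is completely insensitive to the positions $\pm\epsilon$ of the cores. Hence $\norm{\laplacian\fnth_\epsilon}_{\spL^p(\disk_{R+1}(0))} \le C$ for every $p<\infty$, uniformly in $\epsilon<\epsilon_0$. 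On the other hand, by Lemma~\ref{lem:vav-plane-htilde-h1-bound} the family $\{\fnth_\epsilon\}$ is bounded in $\spH^1(\plane)$, hence (by the Sobolev embedding in dimension two) bounded in $\spL^p(\disk_{R+1}(0))$ for every $p<\infty$. The interior $\spL^p$-estimate for the Laplacian~\cite{gilbarg2015elliptic} then gives, for every $p<\infty$,
\begin{align}
\norm{\fnth_\epsilon}_{\spW^{2,p}(\disk_R(0))} \le C, \qquad \epsilon < \epsilon_0,
\end{align}
with $C$ independent of $\epsilon$.

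Next, choosing $p>2$ and applying Sobolev's embedding theorem on the smooth domain $\disk_R(0)$ yields a uniform bound $\norm{\fnth_\epsilon}_{C^{1,\alpha}(\overline{\disk_R(0)})} \le C$ for some $\alpha\in(0,1)$, so by the Arzel\`a--Ascoli theorem $\{\fnth_\epsilon : \epsilon<\epsilon_0\}$ is precompact in $C^1(\overline{\disk_R(0)})$. The conclusion then follows by a standard contradiction argument: if $\fnth_\epsilon$ did not tend to $0$ in $C^1(\overline{\disk_R(0)})$, there would be $\delta>0$ and $\epsilon_n\to 0$ with $\norm{\fnth_{\epsilon_n}}_{C^1}\ge\delta$; a $C^1$-convergent subsequence would have a limit $w$ with $\norm{w}_{C^1}\ge\delta$, but $C^1$-convergence on the bounded set forces convergence in $\spL^2(\disk_R(0))$, whereas Proposition~\ref{prop:tilde-h-eps-conv-0} gives $\fnth_{\epsilon_n}\to 0$ in $\spL^2$, so $w=0$, a contradiction. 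Restricting the $C^1$-convergence to $\domain$ gives the uniform convergence of $\fnth_\epsilon|_\domain$ and $\grad\fnth_\epsilon|_\domain$ to $0$.

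The only real subtlety is the choice of function spaces: one cannot simply bootstrap in the $\spH^k$-scale used for Schauder's estimates above, because although the source term $\fnFmu(\fnv_\epsilon+\fnth_\epsilon)$ of the regularised equation is bounded and tends to $0$ in $\spL^2$, its gradient $\fnVmu(\fnv_\epsilon+\fnth_\epsilon)\,(\grad\fnv_\epsilon+\grad\fnth_\epsilon)$ is not uniformly small near the coalescing cores $\pm\epsilon$. Working instead from the $\spL^\infty$-bound on $\laplacian\fnth_\epsilon$ together with $\spL^p$-theory for $p>2$ circumvents this difficulty entirely, which is why I expect this to be the key point of the argument.
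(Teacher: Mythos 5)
Your proof is correct, but it takes a different route from the paper's. The paper argues \emph{quantitatively} at every stage: it uses the limits $\norm{\laplacian\fnth_\epsilon}_{\spL^2}\to 0$ and $\norm{\fnth_\epsilon}_{\spL^2}\to 0$ established just before the lemma, feeds them into the interior Schauder estimate to get $\fnth_\epsilon\to 0$ in $\spH^2(\domain)$, upgrades to uniform $C^0$ convergence by Sobolev embedding, then uses dominated convergence to show the source term $\fnFmu(\fnv_\epsilon+\fnth_\epsilon)$ actually tends to $0$ in $\spL^p(\domain')$ for $p>2$, so that the $\spW^{2,p}$ estimate yields $\norm{\fnth_\epsilon}_{\spW^{2,p}(\domain)}\to 0$ and hence $C^1$ convergence. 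You instead settle for \emph{uniform boundedness} of $\laplacian\fnth_\epsilon$ in $\spL^\infty$ (which is immediate from the boundedness of $\fnFmu$ and $\fng_\epsilon$) and of $\fnth_\epsilon$ in $\spL^p$ (from the $\spH^1$ bound), obtain a uniform $C^{1,\alpha}$ bound, and close the argument by Arzel\`a--Ascoli plus uniqueness of the $\spL^2$ limit. Both rest on the same elliptic machinery ($\spW^{2,p}$ interior estimates with $p>2$ followed by Sobolev embedding); yours requires fewer of the preparatory limits (only Proposition~\ref{prop:tilde-h-eps-conv-0} and Lemma~\ref{lem:vav-plane-htilde-h1-bound}, not the $\spL^2$ smallness of $\laplacian\fnth_\epsilon$), while the paper's direct version delivers the stronger conclusion that the $\spW^{2,p}(\domain)$ norms themselves vanish. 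Your closing remark about why a na\"ive $\spH^3$ bootstrap fails near the coalescing cores is accurate and is precisely the reason the paper also switches to the $\spW^{2,p}$ scale at that point.
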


\begin{proof}
If we take any pair of domains $\domain \Subset \domain'$, by 
Schauder's estimates,
\begin{align}\label{eq:schauder-bound-domain-fnth}
    \norm{\fnth_\epsilon}_{\spH^2(\domain)} \leq C (
    \norm{\laplacian \fnth_\epsilon}_{\spL^2(\domain')} 
    + \norm{\fnth_\epsilon}_{\spL^2(\domain')}),
\end{align}
which implies $\fnth_\epsilon \to 0$ in $\spH^2(\domain)$ as 
$\epsilon \to 0$. By 
Sobolev's embedding, we have that for any domain, 
$\lim_{\epsilon\to 0} \fnth_\epsilon = 0$ uniformly. Let 
$p > 2$ be any real number and let $\fnth_n$ be any sequence of functions 
such that $\epsilon_n \to 0$. Since the convergence is uniform on $\domain$, we 
can 
apply the dominated convergence theorem to obtain,
\begin{align}
    \norm{\tilde\fnh_n}_{\spL^p(\domain)} \to 0, &&
    \norm{\fnF(v_n + \tilde\fnh_n)}_{\spL^p(\domain)} \to 0
\end{align}
and since the sequence is arbitrary, we conclude the limits,
\begin{align}
    \lim_{\epsilon \to 0}\,\norm{\tilde\fnh_\epsilon}_{\spL^p(\domain)} = 0, &&
    \lim_{\epsilon \to 0}\,
    \norm{\fnF(v_\epsilon + \fnth_\epsilon)}_{\spL^p(\domain)} = 0,
\end{align}
are valid for any domain. In particular, both limits are valid for the
domain $\domain'$ of equation~\eqref{eq:schauder-bound-domain-fnth}. 
By Schauder's estimates 
$\norm{\fnth_\epsilon}_{\spW^{2,p}(\domain)} \to 0$ as $\epsilon \to 0$ 
and by Sobolev's embedding,
\begin{align}
    \lim_{\epsilon\to 0}\,\norm{\fnth_\epsilon}_{C^1(\domain)} = 0.
\end{align}
\end{proof}

\begin{proposition}\label{prop:vav-euc-fnth-unif-conv}
The convergence $\fnth_\epsilon \to 0$ is uniform on $\plane$.
\end{proposition}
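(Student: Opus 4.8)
The plan is to patch the interior convergence supplied by Lemma~\ref{lem:vav-euc-fnth-uniform-conv-domains} with a uniform exterior decay estimate for $\fnth_\epsilon = \fnh_\epsilon - \fnv_\epsilon$. Fix $\rho > 0$ and fix any $\epsilon_0 > 0$. First I would control $\fnth_\epsilon$ outside a large disk. For the part $\fnh_\epsilon$, the maximum-principle sandwich $u_\epsilon < \fnh_\epsilon < -u_{-\epsilon}$ together with the negativity of $u$ gives $\abs{\fnh_\epsilon(x)} \leq \max(\abs{u_\epsilon(x)}, \abs{u_{-\epsilon}(x)})$, and Corollary~\ref{lem:vav-euc-ueps-bound} bounds the right-hand side by $C e^{-(1-\delta)\abs x}$ uniformly for $\epsilon < \epsilon_0$ and $\abs x$ large. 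For the part $\fnv_\epsilon = v(\abs{x-\epsilon}) - v(\abs{x+\epsilon})$, the elementary inequality $\abs{v(t)} = \abs{\log(1 + t^{-2})} \leq t^{-2}$ yields $\abs{\fnv_\epsilon(x)} \leq \abs{x-\epsilon}^{-2} + \abs{x+\epsilon}^{-2} \leq 2(\abs x - \epsilon_0)^{-2}$, again uniformly in $\epsilon < \epsilon_0$ once $\abs x > \epsilon_0$.

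Next I would choose a radius $R$, depending only on $\rho$ and $\epsilon_0$, large enough that $C e^{-(1-\delta) R} + 2(R - \epsilon_0)^{-2} < \rho$ and that the corollary applies on $\abs x > R$; then, since both terms are decreasing in $\abs x$ beyond $R$, the two estimates combine to give $\abs{\fnth_\epsilon(x)} < \rho$ for every $\abs x \geq R$ and every $\epsilon < \epsilon_0$. On the remaining compact region $\overline{\disk_R(0)}$, Lemma~\ref{lem:vav-euc-fnth-uniform-conv-domains} applied to the bounded domain $\disk_R(0)$ provides an $\epsilon_1 > 0$ with $\sup_{\disk_R(0)} \abs{\fnth_\epsilon} < \rho$ whenever $\epsilon < \epsilon_1$. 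Taking $\epsilon^{*} = \min(\epsilon_0, \epsilon_1)$ then forces $\norm{\fnth_\epsilon}_{\spL^\infty(\plane)} < \rho$ for all $\epsilon < \epsilon^{*}$, which is precisely uniform convergence of $\fnth_\epsilon$ to $0$ on $\plane$.

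All four ingredients --- the exponential tail estimate of Corollary~\ref{lem:vav-euc-ueps-bound}, the maximum-principle sandwich, the logarithmic bound on $v$, and the domainwise uniform convergence of Lemma~\ref{lem:vav-euc-fnth-uniform-conv-domains} --- are already available, so I do not expect a genuine obstacle. The only point needing care is that the exterior estimate must be uniform in $\epsilon$; this is why I first pin down the cut-off radius $R$ using only $\rho$ and $\epsilon_0$, and only afterwards invoke the interior convergence on the now-fixed disk $\disk_R(0)$, rather than trying to control a shrinking or $\epsilon$-dependent region.
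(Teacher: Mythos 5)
Your proof is correct and follows essentially the same strategy as the paper's: split the plane into a fixed large disk, on which Lemma~\ref{lem:vav-euc-fnth-uniform-conv-domains} supplies uniform convergence, and its exterior, on which a bound uniform in $\epsilon$ is available. The only (harmless) difference lies in the exterior estimate: the paper applies the mean value theorem to show that $\abs{u(x-\epsilon)-u(x+\epsilon)}$ and $\abs{v(\abs{x-\epsilon})-v(\abs{x+\epsilon})}$ are both $O(\epsilon)$ uniformly for $\abs{x}>2R$, whereas you bound each translate separately via its decay at infinity and then choose $R$ large depending only on $\rho$ and $\epsilon_0$ --- both routes close the argument.
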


\begin{proof}
Recall 
\begin{align}
 \abs{\fnth_\epsilon} 
 \leq 
 \abs{\fnh_\epsilon} + \abs{\fnv_\epsilon} 
 \leq \abs*{u(x - \epsilon) - u(x + \epsilon)}
 + \abs*{v(\abs{x - \epsilon}) - v(\abs{x + \epsilon})}.
\end{align}

Let $R > 0$ be any large positive constant, such that the estimates of 
lemma~\ref{lem:vav-euc-u-du-estimates} hold for $\delta = \half$. 
If $\abs{x} > 2R$ and 
$\epsilon < R$, then $\abs{x \pm \epsilon} > R$. We can apply the mean value 
theorem to obtain the estimate
\begin{align}
    \abs*{v(\abs{x - \epsilon}) - v(\abs{x + \epsilon})}
    &= 
    \abs*{\log\,(1 + \abs{x - \epsilon}^{-2}) - \log\,(1 + \abs{x + 
    \epsilon}^{-2})}\nonumber\\
    &\leq \frac{1}{R^2}\abs*{\abs{x - \epsilon}^{-2} - \abs{x + 
    \epsilon}^{-2}}\nonumber\\
    &= \frac{4\epsilon\,\abs{x_1}}{
    R^2\,\abs{x - \epsilon}^2\abs{x + \epsilon}^2
    }\nonumber\\
    &= \abs*{
    \frac{4\epsilon\,(x_1 + \epsilon)}{
    R^2\,\abs{x - \epsilon}^2\abs{x + \epsilon}^2}
    - \frac{4\epsilon^2}{
    R^2\,\abs{x - \epsilon}^2\abs{x + \epsilon}^2}
    }\nonumber\\
    &\leq \frac{4\epsilon}{R^5} + \frac{4\epsilon^2}{R^6}.
\end{align}

Likewise, there is some $\xi$ in the linear segment joining $x - \epsilon$ to 
$x + \epsilon$ such that,
\begin{align}
    \abs{u(x - \epsilon) - u(x + \epsilon)} &= 2\,\abs{
    \del_1\fnu(\xi)\,\epsilon
    } \leq 2\,Ce^{-\half\abs{\xi}}\,\epsilon
      \leq 2\,C\epsilon,
\end{align}
where we have used lemma~\ref{lem:vav-euc-u-du-estimates}. We conclude that 
$\fnth_\epsilon \to 0$ uniformly on $\plane\setminus\disk_R(0)$, but by  
lemma~\ref{lem:vav-euc-fnth-uniform-conv-domains}, $\fnth_\epsilon$ also 
converges uniformly on $\disk_R(0)$. 
\end{proof}

Recall Poincare's constant of a domain $\domain$ is the best constant 
$C_p(\domain)$ such that for any zero average function $u: \domain \to \reals$, 
\begin{align}
\norm{u}_{\spL^2(\domain)} \leq C_p\,\norm{\nabla u}_{\spL^2(\domain)}.
\end{align}

\begin{lemma}\label{lem:vav-euc-coerc-cnst}
Let $a: \plane \to [0, M)$ be a continuous function, 
such that
\begin{enumerate}
    \item For some convex domain $\domain$ with diameter $d < \pi/M$,  
    $\int_\domain a\,\vol > 0$,
    
    \item $a$ is positive on $\Omega = \plane \setminus \domain$.
\end{enumerate}

If $m = \inf_\Omega a > 0$, 
the bilinear form 
\begin{align}
    B: \spH^1\times\spH^1 \to \reals, &&
    B(u, v) = \lproduct{\grad u, \grad v}_{\spL^2} + \lproduct{u,\,a
      v}_{\spL^2}, 
\end{align}
is coercive with coercivity constant,
\begin{align}
    0 < \alpha < \min \brk(m, 1, \frac{\int_\domain a\,\vol}{\vol(\domain)}, 
    \frac{1 - \frac{M\,d}{\pi}}{1 + Cp(\domain)}, 
    \frac{1}{\vol(\domain)}\,\int_\domain a \brk (1 
    - \frac{a}{M})\,\vol).
\end{align}
\end{lemma}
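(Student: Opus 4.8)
The plan is to establish coercivity by a contradiction/compactness argument, treating the two regions $\domain$ and $\Omega$ separately, and the hard part is getting an explicit lower bound on $\alpha$ rather than merely the existence of a positive one.

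First I would set up the contradiction: suppose no such $\alpha$ works, so there is a sequence $u_n \in \spH^1$ with $\norm{u_n}_{\spH^1} = 1$ and $B(u_n, u_n) \to 0$ (or more precisely $B(u_n,u_n) < \alpha_n \norm{u_n}^2$ with $\alpha_n \to$ the claimed bound; to get the explicit constant one must be more careful and argue directly). Since $a \geq m > 0$ on $\Omega = \plane \setminus \domain$, one immediately gets $\norm{\grad u_n}_{\spL^2}^2 + m\norm{u_n}_{\spL^2(\Omega)}^2 \leq B(u_n,u_n)$, so on the exterior region the form already dominates $\min(1,m)$ times the $\spH^1(\Omega)$ norm. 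The delicate part is the interior domain $\domain$, where $a$ may vanish and one only knows $\int_\domain a\,\vol > 0$. Here I would use the Poincar\'e--type inequality on the convex domain $\domain$ of diameter $d$: by the hypothesis $d < \pi/M$, and since $a < M$, the Payne--Weinberger inequality (Poincar\'e inequality on convex domains with constant $d/\pi$) lets one control $\norm{u - \avg u}_{\spL^2(\domain)}$ by $(d/\pi)\norm{\grad u}_{\spL^2(\domain)}$, which is what produces the factor $1 - Md/\pi$ and the denominator $1 + C_p(\domain)$ in the stated bound.

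Concretely, for $u \in \spH^1(\domain)$ write $u = \avg u + (u - \avg u)$ on $\domain$. Then
\begin{align}
  \lproduct{u, a u}_{\spL^2(\domain)}
  &= (\avg u)^2 \int_\domain a\,\vol + 2\avg u \int_\domain a (u - \avg u)\,\vol
     + \int_\domain a (u - \avg u)^2\,\vol,
\end{align}
and one estimates the cross term and the last term using $0 \leq a < M$ and the Payne--Weinberger inequality $\norm{u - \avg u}_{\spL^2(\domain)} \leq (d/\pi)\norm{\grad u}_{\spL^2(\domain)}$. Combining with the gradient term $\norm{\grad u}_{\spL^2(\domain)}^2$ that is also present in $B$, and optimizing, yields $B$-contributions from $\domain$ bounded below by a multiple of $\norm{u}_{\spH^1(\domain)}^2$, the multiple being controlled by the four quantities $\int_\domain a\,\vol / \vol(\domain)$, $(1 - Md/\pi)/(1 + C_p(\domain))$, and $\frac{1}{\vol(\domain)}\int_\domain a(1 - a/M)\,\vol$ appearing in the statement (these arise from, respectively, the $(\avg u)^2$ term, the gradient-vs-oscillation balance, and a refined handling of the $a(u-\avg u)^2$ term where the factor $1 - a/M$ saves room because $a < M$). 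Adding the interior and exterior estimates and taking $\alpha$ to be the minimum of all the constants that appeared — together with $1$ and $m$ to absorb the raw gradient and exterior $\spL^2$ terms — gives $B(u,u) \geq \alpha\norm{u}_{\spH^1(\plane)}^2$.

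The main obstacle I anticipate is bookkeeping: making the interior estimate yield exactly the stated minimum rather than some cruder constant, since the cross term $2\avg u\int_\domain a(u-\avg u)\,\vol$ has indefinite sign and must be absorbed by a Young's-inequality split that has to be balanced simultaneously against the $(\avg u)^2\int a$ term and the $\norm{\grad u}^2$ term. One also has to be slightly careful that $\domain$ is convex precisely so that the sharp Poincar\'e constant $d/\pi$ (Payne--Weinberger) is available — a general Lipschitz domain would only give a worse, non-explicit constant. Everything else — continuity of $B$, the fact that $a$ bounded makes the $\lproduct{u, au}$ term finite, and the reduction to the two regions — is routine.
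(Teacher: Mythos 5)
Your outline matches the paper's proof essentially step for step: the exterior estimate via $\min(m,1)$, the decomposition $u=\avg u+(u-\avg u)$ on the convex domain, the Payne--Weinberger constant $d/\pi$, and the identification of which term in the minimum each hypothesis produces. The paper handles the indefinite cross term by viewing the interior inequality as a quadratic in $\avg u$ and imposing a nonpositive discriminant (which, after Cauchy--Schwarz, is exactly where the constant $\frac{1}{\vol(\domain)}\int_\domain a(1-a/M)\,\vol$ comes from) rather than by a Young's-inequality split, but these are equivalent, so your proposal is correct and takes the same route.
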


\begin{proof}
We aim to prove the existence of a positive constant $\alpha$ such that 
for any $u \in \spH^1$,
\begin{align}
\norm{u}^2_{\spH^1} \, \alpha \leq B(u, u),
\end{align}
 
Let $\alpha_1 = \min(m, 1)$, in the exterior $\Omega$ of the given domain,
 \begin{align}
     \norm{u}^2_{\spH^1(\Omega)}\,\alpha_1 \leq 
     \norm{\grad u}^2_{\spL^2(\Omega)} + \lproduct{u,\, au}_{\spL^2(\Omega)}.
 \end{align}
 
 On the other hand, any $u \in \spH^1(\domain)$ can be decomposed as 
 $u_0 + \avg u$,
  where $u_0$ is of zero average on $\domain$ and $\avg u \in \reals$, 
  hence, 
  $u_0$ is orthogonal to $\avg u$ in $\spH^1(\domain)$. 
  Coercivity in $\domain$ is equivalent to find a positive constant 
  $\alpha_2$ such that, 
\begin{multline}
    \brk(
    \norm{u_0}_{\spH^1(\domain)}^2 + \avg u^2\,\vol(\domain))\alpha_2 
    \leq 
    \norm{\grad u_0}^2_{\spL^2(\domain)} 
    + \lproduct{a,\,u_0^2}_{\spL^2(\domain)}\\
    + 2\,\avg u\, \lproduct{a,\,u_0}_{\spL^2(\domain)}
    + \avg u^2\,\lproduct{a, 1}_{\spL^2(\domain)},
\end{multline}
or equivalently,
\begin{multline}
\brk(\lproduct{a, 1}_{\spL^2(\domain)} - \alpha_2\,\vol(\domain))\,\avg u^2
+ 2\,\lproduct{a, u_0}_{\spL^2(\domain)}\,\avg u 
+ (1 - \alpha_2)\,\norm{\grad u_0}^2_{\spL^2(\domain)} \\
+ \lproduct{a,\,u_0^2}_{\spL^2(\domain)} 
- \alpha_2\norm{u_0}^2_{\spL^2(\domain)} \geq 0.
\end{multline}

For this is quadratic inequality on $\avg u$ to hold regardless of $\avg u$, 
the leading coefficient with respect to $\avg u$ must be positive and 
the discriminant of the quadratic must be non-positive, from these two 
conditions 
we deduce the following restrictions:
\begin{gather}
  \alpha_2 <  \frac{\lproduct{a, 1}_{\spL^2(\domain)}}{\vol(\domain)}
  = \frac{\int_\domain\,a\,\vol }{\vol(\domain)}.
\end{gather}

\begin{multline}
  \lproduct{a, u_0}^2 _{\spL^2(\domain)}
    \leq 
    \brk(\lproduct{a, 1}_{\spL^2(\domain)} - \alpha_2\vol(\domain))\\
    \brk(
    (1 - \alpha_2)\,\norm{\grad u_0}^2_{\spL^2(\domain)}
+ \lproduct{a,\,u_0^2}_{\spL^2(\domain)} 
- \alpha_2\norm{u_0}^2_{\spL^2(\domain)}
).
\end{multline}

We claim the existence of a positive constant $\alpha_2$ such that the 
second restriction is independent of $\fnu_0$. To this end, let us divide this 
inequality by $M$,
\begin{multline}
\lproduct*{\frac{a}{M}, u_0}^2 _{\spL^2(\domain)}
\leq 
\brk(\lproduct*{\frac{a}{M}, 1}_{\spL^2(\domain)} - 
\frac{\alpha_2}{M}\,\vol(\domain))\\
\brk(
\frac{1 - \alpha_2}{M}\,\norm{\grad u_0}^2_{\spL^2(\domain)}
+ \lproduct*{\frac{a}{M},\,u_0^2}_{\spL^2(\domain)} 
- \frac{\alpha_2}{M}\norm{u_0}^2_{\spL^2(\domain)}
).
\end{multline}

By Cauchy-Schwarz,
\begin{align}
    \lproduct*{\frac{a}{M},\, u_0}^2 _{\spL^2(\domain)}
    \leq \norm*{\frac{a}{M}}^2_{\spL^2(\domain)}
    \,\norm*{u_0}^2_{\spL^2(\domain)}.
\end{align}

Notice that,
\begin{align}
    \norm*{\frac{a}{M}}^2_{\spL^2(\domain)} \leq 
    \lproduct*{\frac{a}{M}, 1}_{\spL^2(\domain)} 
    - \frac{\alpha_2}{M}\vol(\domain),
\end{align}
if and only if
\begin{align}
    \alpha_2 \leq \frac{1}{\vol(\domain)}\,\int_\domain a \brk (1 
    - \frac{a}{M})\,\vol.
\end{align}

On the other hand, the inequality 
\begin{align}
    \norm{u_0}^2_{\spL^2(\domain)}
    \leq 
    \brk(
    \frac{1 - \alpha_2}{M}\,\norm{\grad u_0}^2_{\spL^2(\domain)}
    + \lproduct*{\frac{a}{M},\,u_0^2}_{\spL^2(\domain)} 
    - \frac{\alpha_2}{M}\,\norm{u_0}^2_{\spL^2(\domain)}
    )
\end{align}
is equivalent to,
\begin{align}
    \norm{u_0}^2_{\spL^2(\domain)}
    - \lproduct*{\frac{a}{M},\,u_0^2}_{\spL^2(\domain)} 
    + \frac{\alpha_2}{M}\,\norm{u_0}^2_{\spL^2(\domain)}
    \leq 
    \frac{1 - \alpha_2}{M}\,\norm{\grad u_0}^2_{\spL^2(\domain)}.
\end{align}

By Poincare's inequality and the bound $0 \leq a/M < 1$,
\begin{align}
    \norm{u_0}^2_{\spL^2(\domain)}
    - \lproduct*{\frac{a}{M},\,u_0^2}_{\spL^2(\domain)} 
    + \frac{\alpha_2}{M}\,\norm{u_0}^2_{\spL^2(\domain)}
    &\leq 
    \brk(1 + \frac{\alpha_2}{M})\,\norm{u_0}^2_{\spL^2(\domain)}\nonumber\\
    &\leq 
    \brk(1 + \frac{\alpha_2}{M})C_p\,\norm{\grad u_0}^2_{\spL^2(\domain)}.
\end{align}

For the right side of this inequality to be lesser than 
$(1 - \alpha_2)/M$, we require,
\begin{align}
    \alpha_2 < \frac{1 - M\,C_p}{1 + C_p}.
\end{align}

Since $\domain$ is convex, we know by a result of Payne and 
Weinberger~\cite{Payne1960} that 
 $C_p \leq d/\pi$. Since $d/\pi < 1/M$ implies $MC_p < 1$ and 
\begin{align}
    \frac{1 - \frac{Md}{\pi}}{1 + C_p}
    \leq \frac{1 - M\,C_p}{1 + C_p},
\end{align}
 it is enough to require 
 $\alpha_2 < \brk(1 - \frac{Md}{\pi})(1 + C_p)^{-1}$ to 
 obtain the final bound. Defining $\alpha \leq \min(\alpha_1, \alpha_2)$,
 we prove coercivity with a constant as stated in the lemma. 
\end{proof}

\begin{lemma}\label{lem:vav-euc-fnv-fng-bounds}
For any $\epsilon_0 > 0$, there is a positive constants $C(\epsilon_0)$, such 
that for all $\epsilon \leq \epsilon_0$,
\begin{align}
    \norm {\fng_\epsilon}_{\spL^p} &\leq
    C\epsilon,
    & 
    p &> \frac{2}{5},\\
    \norm {\fnv_\epsilon}_{\spL^2} &\leq
    C\epsilon\,\abs{\log \epsilon}, \\
    \norm {\fnv_\epsilon}_{\spL^p} &\leq
    C\epsilon^{2/p}, & 
    p &> 1,\, p \neq 2,
\end{align}
\end{lemma}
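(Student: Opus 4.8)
The statement asks for three $\mathrm{L}^p$ bounds on the "difference" functions $g_\epsilon(x) = g(\lvert x-\epsilon\rvert) - g(\lvert x+\epsilon\rvert)$ and $v_\epsilon(x) = v(\lvert x-\epsilon\rvert) - v(\lvert x+\epsilon\rvert)$, with explicit powers of $\epsilon$. The key observation is that both $g$ and $v$ are radial functions on the plane with controlled decay: $g(t) = 4/(1+t^2)^2$ decays like $t^{-4}$, while $v(t) = -\log(1 + t^{-2})$ is $\mathrm{O}(t^{-2})$ at infinity and has a logarithmic singularity $\sim \log t$ at $t=0$. Writing $g_\epsilon$ and $v_\epsilon$ as differences of translates, I would split each $\mathrm{L}^p$ integral into a region near the two cores (say $\lvert x\rvert \le 2\epsilon$, or two small disks around $\pm\epsilon$) and the complementary far region, and estimate each piece separately. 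In the far region $\lvert x\rvert$ large, I would use the mean value theorem to write the difference as $2\epsilon$ times a directional derivative evaluated at an intermediate point, turning a difference of $\mathrm{O}(r^{-k})$ quantities into $\mathrm{O}(\epsilon\, r^{-(k+1)})$, which is then integrable against $r\,dr$ for the relevant range of $p$; this is exactly the mechanism already used in the proof of Proposition~\ref{prop:vav-euc-fnth-unif-conv}.

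\textbf{The three bounds in turn.} For $\norm{g_\epsilon}_{\spL^p} \le C\epsilon$ with $p > 2/5$: since $g$ is smooth and bounded with all derivatives decaying like $r^{-5}$, the mean value theorem gives $\lvert g_\epsilon(x)\rvert \le 2\epsilon \sup \lvert \nabla g\rvert$ on the segment joining $x\mp\epsilon$, and $\lvert \nabla g(y)\rvert \le C(1 + \lvert y\rvert)^{-5}$. Raising to the $p$-th power and integrating, the near-core contribution is $\mathrm{O}(\epsilon^p \cdot \epsilon^2)$ from a region of area $\mathrm{O}(\epsilon^2)$ (using boundedness of $\nabla g$), while the far contribution is $\mathrm{O}(\epsilon^p)\int_1^\infty r^{-5p}\,r\,dr$, finite precisely when $5p - 1 > 1$, i.e.\ $p > 2/5$; taking $p$-th roots gives the linear-in-$\epsilon$ bound with the constant uniform for $\epsilon \le \epsilon_0$. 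For $\norm{v_\epsilon}_{\spL^2} \le C\epsilon\lvert\log\epsilon\rvert$: here the logarithmic singularity of $v$ at each core matters. I would estimate $\lvert v_\epsilon\rvert \le \lvert v(\lvert x - \epsilon\rvert)\rvert + \lvert v(\lvert x + \epsilon\rvert)\rvert$ on two disks of radius $\epsilon$ about $\pm\epsilon$; since $\int_0^\epsilon (\log r)^2 r\,dr = \mathrm{O}(\epsilon^2 (\log\epsilon)^2)$, each disk contributes $\mathrm{O}(\epsilon (\log\epsilon))$ to the $\spL^2$ norm, which accounts for the $\epsilon\lvert\log\epsilon\rvert$ factor. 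Away from the cores $v$ is smooth with $\lvert\nabla v(y)\rvert = \mathrm{O}(\lvert y\rvert^{-3})$ near $y=0$ and $\mathrm{O}(\lvert y\rvert^{-3})$ at infinity, so the mean value theorem again yields $\lvert v_\epsilon(x)\rvert \le C\epsilon\,\lvert x\rvert^{-3}$ on an annulus $\{ 2\epsilon \le \lvert x\rvert \le 1\}$ and $\le C\epsilon\,\lvert x\rvert^{-3}$ for $\lvert x\rvert \ge 1$; both give an $\spL^2$ contribution of order $\epsilon$, which is dominated by $\epsilon\lvert\log\epsilon\rvert$. For the general $\norm{v_\epsilon}_{\spL^p} \le C\epsilon^{2/p}$ with $p>1$, $p\neq 2$: the near-core contribution from the two $\epsilon$-disks is now $\mathrm{O}\big((\int_0^\epsilon \lvert\log r\rvert^p r\,dr)^{1/p}\big) = \mathrm{O}(\epsilon^{2/p}\lvert\log\epsilon\rvert)$, and one absorbs the logarithm into a slightly larger power by noting $\lvert\log\epsilon\rvert \le C_\delta\,\epsilon^{-\delta}$ for any $\delta$; more carefully, since $p \neq 2$ one gets clean powers from the annular/far pieces ($\mathrm{O}(\epsilon)$ from the mean value theorem, with $\epsilon \le \epsilon^{2/p}$ when $p > 2$ and the near-core term dominating when $1 < p < 2$).

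\textbf{Main obstacle.} The delicate point is the interplay of the two error sources in the $v_\epsilon$ estimates: the $\log$-singularity near the cores and the $r^{-2}$ tail at infinity, which scale differently in $\epsilon$, so one must identify \emph{which} region produces the stated power for each range of $p$ and verify the stated exponent is sharp enough — in particular, explaining why the exponent is $2/p$ and not $2/p$ up to a log, and why $p=2$ is excluded (the near-core $\spL^2$ integral of $(\log r)^2$ generates exactly the logarithmic factor that cannot be absorbed without worsening the power, which is why the $p=2$ case is stated separately with the $\lvert\log\epsilon\rvert$). I would therefore organize the proof around a single decomposition $\plane = D_\epsilon(\epsilon) \cup D_\epsilon(-\epsilon) \cup (\{\lvert x\rvert \le 2\epsilon\}\setminus \text{those disks}) \cup \{2\epsilon \le \lvert x\rvert \le 1\} \cup \{\lvert x\rvert \ge 1\}$, estimate each region by either the triangle inequality (near the singularities) or the mean value theorem (away from them), and track the $\epsilon$-powers; the bookkeeping, while routine, is where all the care goes. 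Everything else — uniformity of constants for $\epsilon \le \epsilon_0$, and the fact that these are genuine $\spL^p$ functions — follows from the explicit decay of $g$, $v$ and their derivatives, exactly as recorded just before the lemma.
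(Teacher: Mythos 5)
Your overall architecture matches the paper's: split the plane into small disks about $\pm\epsilon$, an intermediate annulus, and a far region; treat the logarithmic singularities by the triangle inequality together with an explicit computation of $\int_0^\epsilon|\log r|^p\,r\,dr$; and extract a factor of $\epsilon$ from the difference of translates everywhere else. (The paper performs that last step by algebra --- factoring $4\epsilon x_1$ out of the difference explicitly, and using the inversion $x'=1/x$ on the annulus --- rather than by the mean value theorem, but the two mechanisms buy the same thing, and your treatment of $g_\epsilon$ is fine.) Two of your intermediate claims for $v_\epsilon$, however, are wrong in ways that matter. First, $v(t)=-\log(1+t^{-2})\sim 2\log t$ as $t\to0$, so $|\nabla v(y)|\sim 2/|y|$ near the origin, not $O(|y|^{-3})$ as you assert. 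Taken literally, your bound $|v_\epsilon(x)|\le C\epsilon|x|^{-3}$ on $\{2\epsilon\le|x|\le1\}$ makes the annulus contribution to the $\mathrm{L}^2$ norm of order $\epsilon^{-1}$, contradicting your claim that it is $O(\epsilon)$. With the correct rate the annulus gives $\bigl(\epsilon^2\int_{2\epsilon}^1 r^{-2}\,r\,dr\bigr)^{1/2}=\epsilon\,|\log(2\epsilon)|^{1/2}$, which is still dominated by $\epsilon|\log\epsilon|$, so the $\mathrm{L}^2$ estimate survives once the exponent is repaired.

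Second, and more seriously, for $p\ne2$ your near-core term is, as you yourself compute, of size $\epsilon^{2/p}|\log\epsilon|$, and your proposed repair --- absorbing the logarithm via $|\log\epsilon|\le C_\delta\,\epsilon^{-\delta}$ --- only yields $\epsilon^{2/p-\delta}$, which is strictly weaker than the stated $C\epsilon^{2/p}$; nothing in your sketch closes this gap. The paper's route at this point is to substitute $u=-2\log r$ and reduce the near-core integral to the incomplete Gamma integral $\int_{-2\log(3\epsilon)}^{\infty}u^p e^{-u}\,du$, which it then bounds to extract the clean power; you would need to reproduce (or scrutinise) that step rather than wave at it. Finally, your remark that ``the near-core term dominates when $1<p<2$'' is backwards: in that range $\epsilon^{2/p}<\epsilon$, so the $O(\epsilon)$ contributions from the annulus, the far region and the smooth part $\log(1+|x+\epsilon|^2)-\log(1+|x-\epsilon|^2)$ are the larger ones, and your argument gives no mechanism for bounding them by $\epsilon^{2/p}$. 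As written, your proof establishes the first two estimates (after correcting the gradient rate) but not the third.
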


\begin{proof}
Let us rewrite $\fng_\epsilon$,
\begin{align}
    \fng_\epsilon(x) &= \frac{4}{\brk(1 + \abs{x - \epsilon}^2 )^2}
    - \frac{4}{\brk(1 + \abs{x + \epsilon}^2 )^2}\nonumber\\
    &= \frac{4\brk(
    \abs{x + \epsilon}^2 - \abs{x - \epsilon}^2
    )
    \brk(
    2 + \abs{x + \epsilon}^2 + \abs{x - \epsilon}^2
    )
    }{
    \brk( 1 + \abs{x + \epsilon}^2 )^2\brk(1 + \abs{x - \epsilon}^2 )^2
    }\nonumber\\
    &=
    \frac{16\,\epsilon x_1 \brk(
    2 + \abs{x + \epsilon}^2 + \abs{x - \epsilon}^2
    )
    }{
    \brk( 1 + \abs{x + \epsilon}^2 )^2\brk(1 + \abs{x - \epsilon}^2 )^2
    }
\end{align}
and let us take $R > \epsilon_0$. If $\Omega = \plane\setminus\disk_R(0)$, 
\begin{align}
    \norm{\fng_\epsilon}_{\spL^p(\Omega)} 
    \leq 
    16\,\epsilon \norm*{
    \frac{ 
    x_1 \brk(
    2 + 2(\abs x + R)^2
    )
    }{
    \brk( 1 + (\abs x - R)^2 )^4
    }}_{\spL^p(\Omega)}.
\end{align}

The norm on the right decay as $\abs{x}^{-5}$ as $\abs x \to \infty$, 
hence is convergent for $p > 2/5$. On the other hand, we have,
\begin{align}
    \norm{\fng_\epsilon}_{\spL^p(\disk_R(0))} 
    \leq 
    16\,\epsilon \norm*{
    \,x_1 \brk(
    2 + 2(\abs x + R)^2
    )}_{\spL^p(\disk_R(0))}.
\end{align}

Thence $\norm{\fng_\epsilon}_{\spL^p} \leq C\epsilon$ 
if $\epsilon < R$. For $\fnv_\epsilon$ we follow several steps, 
dividing the plane in subregions where we can have control of the 
logarithmic singularities. We start with an algebraic rearrangement,
\begin{align}
    \fnv_\epsilon(x) &= \log\brk(
    \frac{1 + \abs{x + \epsilon}^{-2}}{1 + \abs{x - \epsilon}^{-2}}
    )\nonumber\\
    &= \log\brk( 1 + 
    \frac{\abs{x + \epsilon}^{-2} - \abs{x - \epsilon}^{-2}}
    {1 + \abs{x - \epsilon}^{-2}})\nonumber\\
    &= \log\brk( 1 - 
    \frac{4\epsilon x_1}
    {\abs{x + \epsilon}^2(1 + \abs{x - \epsilon}^2)}).
\end{align}

Let $R > 2\epsilon_0$ be a large positive constant such that 
if $\abs{x} \geq R$ and $\epsilon < R/2$, we have the approximation,
\begin{align}
    \abs{\fnv_\epsilon(x)} &=  
    \frac{4\epsilon \abs{x_1}}
    {\abs{x + \epsilon}^2(1 + \abs{x - \epsilon}^2)}
    + \order(\epsilon^2)\nonumber\\
    &\leq \frac{4\epsilon \abs{x_1}}
    {\brk(\abs{x} - \frac{R}{2})^2\brk (1 + (\abs{x} - \frac{R}{2})^2)}.
\end{align}
$\fnv_\epsilon$ is bounded in 
$\Omega = \plane \setminus \disk_R(0)$ by a function of order $\abs {x}^{-3}$,
 hence,
 \begin{align}
     \norm{\fnv_\epsilon}_{\spL^p(\Omega)} \leq 4\epsilon\,
     \norm*{
     \frac{x_1}
    {\brk(\abs{x} - \frac{R}{2})^2\brk (1 + (\abs{x} - \frac{R}{2})^2)}
     }_{\spL^p(\Omega)},
 \end{align}
 for any $p > 1$. On the other hand,
 \begin{multline}
     \norm{
     \fnv(\abs{x - \epsilon}) - \fnv(\abs{x + \epsilon})}_{\spL^p(\disk_R(0))}
     \leq 
     \norm*{
       \log(1 + \abs{x + \epsilon}^2) 
     - \log(1 + \abs{x - \epsilon}^2)
     }_{\spL^p(\disk_R(0))}
     \\+
     \norm*{
       \log(\abs{x - \epsilon}^2) 
     - \log(\abs{x + \epsilon}^2)
     }_{\spL^p(\disk_R(0))}.
\end{multline}

For the first term, the difference can be bounded as,
\begin{align}
    \norm*{
       \log(1 + \abs{x + \epsilon}^2) 
     - \log(1 + \abs{x - \epsilon}^2)
     }_{\spL^p(\disk_R(0))} 
     &\leq
    \norm*{
      \,\abs{x + \epsilon}^2 - \abs{x - \epsilon}^2
     }_{\spL^p(\disk_R(0))}\nonumber\\
     &\leq 
    4 \epsilon \norm*{
      x_1
     }_{\spL^p(\disk_R(0))}\nonumber\\
    &\leq 
    4 \epsilon R \brk(\pi\,R^2)^{1/p}.
\end{align}

For the second term, we proceed in two steps. Firstly, let 
us consider the annulus $2\epsilon \leq \abs{x} \leq R$ 
and note that,
\begin{align}
    \log\abs{x - \epsilon}^2 - \log\abs{x + \epsilon}^2
    = \log\abs*{1 - \frac{\epsilon}{x}}^2 
    - \log\abs*{1 + \frac{\epsilon}{x}}^2.
\end{align}

Let $A(R, 2\epsilon) = \disk_R(0)\setminus
\disk_{2\epsilon}(0)$ be the given annulus, with 
$A(1/(2\epsilon), 1/R)$ defined accordingly. We make the
change of variables $x' = 1/x$ and compute,
\begin{align}
    \norm*{
    \log\,\abs{x - \epsilon}^2 - \log\,\abs{x + \epsilon}^2
    }_{\spL^p(A(R, 2\epsilon))}
    &= 
    \norm*{
    \brk
    (\log\,\abs{1 - \epsilon x'}^2 - \log\,\abs{1 + \epsilon 
    x'}^2)
    \abs{x'}^{-2}
    }_{\spL^p(A(1/(2\epsilon), 1/R))}\nonumber\\
    &\leq
    2\,
    \norm*{
    \brk
    (\abs{1 - \epsilon x'}^2 - \abs{1 + \epsilon 
    x'}^2)
    \abs{x'}^{-2}
    }_{\spL^p(A(1/(2\epsilon), 1/R))}\nonumber\\
    &\leq 
    8\epsilon\,
    \norm*{\,
    \abs{x'}^{-1}
    }_{\spL^p(A(1/(2\epsilon), 1/R))}.
\end{align}

The last norm can be computed exactly, we found that,
\begin{align}
    \norm*{\,
    \abs{x'}^{-1}
    }_{\spL^p(A(1/(2\epsilon), 1/R))}
    = 
    \begin{cases}
    \sqrt{2\pi} 
    \brk(
    \log\brk(
    \frac{R}{2\epsilon}
    )
    )^{1/2}, & p = 2,\\
    \frac{4}{\abs{p - 2}^{1/p}}\,
    \abs*{\frac{\epsilon^2}{2^{p - 2}} - 
    \frac{\epsilon^p}{R^{p - 2}}}^{1/p},
    & p \neq 2.
    \end{cases}
\end{align}

Secondly, we use the inequality $\abs{x} \leq \abs{x \pm \epsilon} +
\epsilon$, which can be obtained by an application of the triangle 
inequality. With this inequality at hand,
\begin{align}
    \norm*{
       \log(\abs{x - \epsilon}^2) 
     - \log(\abs{x + \epsilon}^2)
     }_{\spL^p(\disk_{2\epsilon}(0))} 
     &\leq 
     \norm*{
       \log(\abs{x - \epsilon}^2) 
     }_{\spL^p(\disk_{2\epsilon}(0))} 
     + 
     \norm*{
      \log(\abs{x + \epsilon}^2)
     }_{\spL^p(\disk_{2\epsilon}(0))}\nonumber\\
     &\leq
     2 \norm*{
      \log\,\abs{x}^2
     }_{\spL^p(\disk_{3\epsilon}(0))}.
\end{align}

The last norm can also be computed,
\begin{align}
    \norm*{
      \log\,\abs{x}^2
     }_{\spL^p(\disk_{3\epsilon}(0))}
    =
    \begin{cases}
    6\sqrt{\pi}\,\epsilon\,\brk(
     \log^2(3\epsilon) - \log(3\epsilon) + \half
     )^{1/2}, & p = 2,\\
     \frac{\pi^{1/p}}{2}\,\brk(
     \int_{-2\log (3\epsilon)}^\infty u^p\,e^{-u}\,du)^{1/p},
     & p \neq 2.
    \end{cases}
\end{align}

In the last integral, $e^{-u}$ dominates $u^p$, hence 

\begin{align}
\norm*{
    \log\,\abs{x}^2
}_{\spL^p(\disk_{3\epsilon}(0))}
\leq C\,
\begin{cases}
\epsilon\,\abs{\log\epsilon} & p = 2,\\
\epsilon^{2/p}
& p \neq 2,
\end{cases}
\end{align}
where the constant is independent of $\epsilon$. 

%
%for $p \neq 2$,
%%
%\begin{align*}
%    \norm*{
%      \log\,\abs{x}^2
%     }_{\spL^p(\disk_{3\epsilon}(0))}
%    \leq 
%    C\,\epsilon^{2/p}.
%\end{align*}

%For $p = 2$, $\epsilon\,\abs{\log(\epsilon)}$
%dominates each of the sub-regions in which we divided the plane, 
%whereas for the remaining cases, the dominating term is $\epsilon^{2/p}$.
Taking into account all the regions in which we divided the plane, we 
find that the dominant term is $\epsilon\,\abs{\log \epsilon}$ for $p = 2$ and 
$\epsilon^{2/p}$ in other case. This concludes the proof of the lemma.
\end{proof}

\begin{proposition}\label{prop:vav-euc-del1-fnth-estimate-loc}
For any domain neighbourhood $\domain$ of 
the origin, there is an $\epsilon_0 > 0$ such that,
\begin{align}
    \max_\domain\,\abs{\del_1\fnth_\epsilon(x)} \leq C 
    \epsilon^{2/p},
\end{align}
for all $\epsilon < \epsilon_0$.
\end{proposition}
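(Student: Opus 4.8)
The plan is to control $\del_1\fnth_\epsilon$ on the fixed bounded neighbourhood $\domain$ of the origin by interior elliptic regularity applied to the regularised Taubes equation $-\laplacian\fnth_\epsilon=\fnFmu(\fnv_\epsilon+\fnth_\epsilon)-\fng_\epsilon$, which holds classically on all of $\plane$ with no distributional term. I would fix $p>2$ (the constant $C$ depending only on $p$ and $\domain$) and pick a bounded Lipschitz domain $\domain'$ with $\domain\Subset\domain'$, so that the cores $\pm\epsilon$ lie in $\domain$ once $\epsilon$ is small. Since $p>2$, Morrey's embedding $\spW^{2,p}(\domain)\hookrightarrow C^{1,1-2/p}(\overline{\domain})$ reduces the problem to bounding $\norm{\fnth_\epsilon}_{\spW^{2,p}(\domain)}$, and the standard interior $\spL^p$ elliptic estimate on $\domain\Subset\domain'$ bounds that in turn by $\norm{\laplacian\fnth_\epsilon}_{\spL^p(\domain')}+\norm{\fnth_\epsilon}_{\spL^p(\domain')}$. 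As $\fnFmu$ is globally Lipschitz and $\fnFmu(0)=0$, the first term is at most $C(\norm{\fnv_\epsilon}_{\spL^p}+\norm{\fnth_\epsilon}_{\spL^p(\domain')}+\norm{\fng_\epsilon}_{\spL^p})$, so by Lemma~\ref{lem:vav-euc-fnv-fng-bounds}, which gives $\norm{\fnv_\epsilon}_{\spL^p}\le C\epsilon^{2/p}$ and $\norm{\fng_\epsilon}_{\spL^p}\le C\epsilon$, the whole matter reduces to producing a rate for $\norm{\fnth_\epsilon}_{\spL^p(\domain')}$; and since $2/p<1$, a rate of order $\epsilon$ or $\epsilon\abs{\log\epsilon}$ there is absorbed into $\epsilon^{2/p}$.

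To get such a rate I would produce a global $\spL^\infty$ bound on $\fnth_\epsilon$ by linearising the nonlinearity at $0$. Putting $a_\epsilon=\int_0^1\fnVmu(t\,\fnth_\epsilon)\,dt$, so that $\fnFmu(\fnth_\epsilon)=a_\epsilon\fnth_\epsilon$, the regularised equation becomes, on $\plane$, $(\laplacian+a_\epsilon)\fnth_\epsilon=\fng_\epsilon-\bigl(\fnFmu(\fnv_\epsilon+\fnth_\epsilon)-\fnFmu(\fnth_\epsilon)\bigr)$; the mean value theorem and the boundedness of $\fnVmu$ bound the right-hand side in modulus by $\abs{\fng_\epsilon}+\norm{\fnVmu}_{\spL^\infty}\abs{\fnv_\epsilon}$, so by Lemma~\ref{lem:vav-euc-fnv-fng-bounds} it has $\spL^2$ norm at most $C\epsilon\abs{\log\epsilon}$. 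The potential $a_\epsilon$ is positive and bounded, since $\fnVmu$ is positive, bounded and tends to $0$ at $\pm\infty$; and, crucially, $a_\epsilon\to\fnVmu(0)=1+\tau$ uniformly on $\plane$, because $\fnth_\epsilon\to0$ uniformly by Proposition~\ref{prop:vav-euc-fnth-unif-conv}, so that for $\epsilon$ small $a_\epsilon\ge\half(1+\tau)$ everywhere. Hence the form $\lproduct{\grad u,\grad v}+\lproduct{a_\epsilon u,v}$ is coercive on $\spH^1(\plane)$ with a constant independent of $\epsilon$ (one may cite Lemma~\ref{lem:vav-euc-coerc-cnst}, or argue directly since $a_\epsilon$ is bounded below by a positive constant on all of $\plane$), so testing it against $\fnth_\epsilon$, together with the identity $\norm{\grad^2 u}_{\spL^2(\plane)}=\norm{\laplacian u}_{\spL^2(\plane)}$, yields $\norm{\fnth_\epsilon}_{\spH^2(\plane)}\le C\epsilon\abs{\log\epsilon}$, and then $\norm{\fnth_\epsilon}_{\spL^\infty(\plane)}\le C\epsilon\abs{\log\epsilon}$ by the Sobolev embedding $\spH^2(\plane)\hookrightarrow C^0(\plane)$ in dimension two.

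Assembling the pieces: $\norm{\fnth_\epsilon}_{\spL^p(\domain')}\le\abs{\domain'}^{1/p}\norm{\fnth_\epsilon}_{\spL^\infty(\plane)}\le C\epsilon\abs{\log\epsilon}$, hence $\norm{\laplacian\fnth_\epsilon}_{\spL^p(\domain')}\le C\epsilon^{2/p}$ for $\epsilon$ small, hence $\norm{\fnth_\epsilon}_{\spW^{2,p}(\domain)}\le C\epsilon^{2/p}$, and finally $\max_{\domain}\abs{\del_1\fnth_\epsilon}\le\norm{\fnth_\epsilon}_{C^1(\overline{\domain})}\le C\epsilon^{2/p}$ for all $\epsilon<\epsilon_0$. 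The hard part is the uniformity in $\epsilon$ of the linear elliptic estimate in the second step: a priori the constant in $\norm{\fnth_\epsilon}_{\spH^2}\le C\norm{(\laplacian+a_\epsilon)\fnth_\epsilon}_{\spL^2}$ depends on the potential $a_\epsilon$, and what makes it $\epsilon$-independent is exactly the already-established uniform convergence $\fnth_\epsilon\to0$, which forces $a_\epsilon$ to converge uniformly to the positive constant $1+\tau$. Two further points require care: the linearising potential must be built from the regular part $\fnth_\epsilon$ and not from $\fnh_\epsilon$ (which blows up at the cores), while the boundedness of $\fnFmu$ and $\fnVmu$ is what keeps the residual singular term $\fnVmu(\xi)\fnv_\epsilon$ in $\spL^p$; and the hypothesis $p>2$ is used so that Morrey's embedding lands in $C^1$, which fails for $p\le2$.
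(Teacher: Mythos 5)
Your proof is correct, and it reaches the same final bound through the same overall scaffolding as the paper (an $\epsilon$-uniform coercive bilinear form giving $\norm{\fnth_\epsilon}_{\spH^1}\leq C\epsilon\abs{\log\epsilon}$, then elliptic bootstrap through $C^0$ and $\spW^{2,p}$ with $p>2$ and Sobolev/Morrey embedding, with the $\epsilon^{2/p}$ rate entering through $\norm{\fnv_\epsilon}_{\spL^p}$). The genuine difference is where you linearise the nonlinearity. The paper writes $\fnFmu(\fnv_\epsilon+\fnth_\epsilon)=a_\epsilon\,(\fnv_\epsilon+\fnth_\epsilon)$ with $a_\epsilon=\fnVmu(\xi_\epsilon)$; this potential necessarily vanishes at the two cores (where $\fnv_\epsilon\to\mp\infty$), so uniform coercivity is delicate and is exactly what Lemma~\ref{lem:vav-euc-coerc-cnst} is built for, with its convex domain of small diameter and the Payne--Weinberger bound on the Poincar\'e constant. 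You instead factor $\fnFmu(\fnth_\epsilon)=\bigl(\int_0^1\fnVmu(t\,\fnth_\epsilon)\,dt\bigr)\fnth_\epsilon$ and push the difference $\fnFmu(\fnv_\epsilon+\fnth_\epsilon)-\fnFmu(\fnth_\epsilon)$, controlled pointwise by $\norm{\fnVmu}_{\spL^\infty}\abs{\fnv_\epsilon}$, onto the right-hand side. Since $\fnth_\epsilon\to 0$ uniformly (proposition~\ref{prop:vav-euc-fnth-unif-conv}), your potential converges uniformly on all of $\plane$ to the positive constant $\fnVmu(0)=2(\cmu+1)^{-1}=1+\tau$, so it is bounded below by a positive constant everywhere for small $\epsilon$ and coercivity is immediate; Lemma~\ref{lem:vav-euc-coerc-cnst} becomes unnecessary (and indeed citing it here would be a slight mismatch, since it is tailored to potentials with zeros -- your direct argument is the right one). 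The price is an extra forcing term of size $\norm{\fnv_\epsilon}_{\spL^2}\leq C\epsilon\abs{\log\epsilon}$, which costs nothing since the same quantity already governs the paper's estimate. Both routes lean on the same two inputs, Lemma~\ref{lem:vav-euc-fnv-fng-bounds} and the uniform convergence $\fnth_\epsilon\to 0$; yours trades the bespoke coercivity lemma for a slightly longer right-hand side, which is arguably the cleaner bookkeeping.
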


\begin{proof}
We start defining a family of potentials $a_\epsilon$ which 
approximate $\fnVmu(\fnv_\epsilon + \fnth_\epsilon)$ as 
$\epsilon \to 0$. If $x \neq \pm\epsilon$, there is a 
$\xi_\epsilon(x)$ such that $\abs{\xi_\epsilon(x)} \leq \abs*{\fnv_\epsilon(x) + \fnth_\epsilon(x)}$ 
and,
\begin{align}
    \fnFmu(\fnv_\epsilon(x) + \fnth_\epsilon(x))
    = \fnVmu(\xi_\epsilon(x))
    (\fnv_\epsilon(x) + \fnth_\epsilon(x)).
\end{align}

Let $a_\epsilon = \fnVmu(\xi_\epsilon)$, this is a positive 
function such that if $\fnv_\epsilon(x) + \fnth_\epsilon(x)\neq 0$,
\begin{align}
    a_\epsilon(x) = 
    \frac{
    \fnFmu(\fnv_\epsilon(x) + \fnth_\epsilon(x))
    }{
    \fnv_\epsilon(x) + \fnth_\epsilon(x)
    },
\end{align}
hence $a_\epsilon$ is continuous in the complement of the zeros 
of $\fnv_\epsilon + \fnth_\epsilon$. Moreover, if $x_0$ is 
in the set of zeros of $\fnv_\epsilon + \fnth_\epsilon$, 
\begin{align}
    \lim_{x \to x_0} a_\epsilon(x) = 
    \lim_{x \to x_0}
    \frac{
    \fnFmu(\fnv_\epsilon(x) + \fnth_\epsilon(x))
    }{
    \fnv_\epsilon(x) + \fnth_\epsilon(x)
    }
    = \fnVmu(0)
    = a_\epsilon(x_0),
\end{align}
since $\xi_\epsilon(x_0) = 0$ because $\xi_\epsilon$ is bounded by 
$\abs{\fnv_\epsilon + \fnth_\epsilon}$ and 
$\fnv_\epsilon + \fnth_\epsilon \to 0$ as $x \to x_0$.

Hence,
 $a_\epsilon$ is a continuous function on 
 $\plane\setminus\set{\pm\epsilon}$ which we can extend continuously  to 
 $\pm\epsilon$, because $\fnFmu$ and $\fnth_\epsilon$
  are bounded functions and $\fnv_\epsilon$ diverges to $\pm\infty$ 
  at the poles $\pm\epsilon$, hence, 
  $\lim_{x\to \pm\epsilon}a_\epsilon(x) = 0$. Redefining $a_\epsilon$ as this 
  extension, notice that it determines a family of bounded non-negative, 
  continuous functions, each of them with only two zeros at 
  the vortex-antivortex positions. Let $\domain'$ be a convex 
  domain neighbourhood of the origin, with diameter  $d < \pi/M$ 
  for some strict upper bound $M$ of $\fnVmu$. Pointwise, each 
  $\xi_\epsilon(x) \to 0$ as $\epsilon\to 0$, hence we also have 
  the convergence $a_\epsilon(x) \to 2(\cmu + 1)^{-1}$ as 
  $\epsilon \to 0$. By the dominated convergence theorem, 
  \begin{gather}
      \int_{\domain'}
      a_\epsilon\vol \to \frac{2}{1 + \cmu}\abs{\domain'}, \\
      \int_{\domain'}
      a_\epsilon\,\brk(
      1 - \frac{a_\epsilon}{M}
      )\vol \to 
      \frac{2}{1 + \cmu}\,\brk(
      1 - \frac{2}{M(1 + \cmu)})
      \abs{\domain'}.
  \end{gather}

Let $\Omega = \plane \setminus \domain'$, 
$m_\epsilon = \inf_\Omega a_\epsilon$ and let us 
assume $\epsilon_0$ is small enough for $\pm\epsilon \in \domain'$ provided 
$\epsilon \leq \epsilon_0$. 
 We know that $\fnv_\epsilon + \fnth_\epsilon \to 0$ uniformly 
 in $\Omega$, hence,
\begin{align}
    \lim_{\epsilon \to 0} m_\epsilon 
    =
    \lim_{\epsilon \to 0} \inf_\Omega 
    \fnVmu(\fnv_\epsilon + \fnth_\epsilon) = \frac{2}{\cmu + 1}.
\end{align}

By lemma~\ref{lem:vav-euc-coerc-cnst}, the potentials $a_\epsilon$ 
define coercive continuous bilinear functions 
$\spH^1\times\spH^1 \to \reals$, such that 
\begin{align}
    C_\epsilon \,\norm{\fnth_\epsilon}_{\spH^1}^2 
    \leq 
    \norm{\grad \fnth_\epsilon}^2_{\spL^2} 
    + \lproduct{a_\epsilon\,\fnth_\epsilon, \fnth_\epsilon}.
\end{align}

Let, 
\begin{align}
    m = \frac{1}{\cmu + 1}, &&
    \alpha_1 = \frac{1}{\cmu + 1}, &&
    \alpha_2 = \frac{1}{\cmu + 1}\,\brk(
      1 - \frac{2}{M(\cmu + 1)}),
\end{align}
if we select a positive constant $C > 0$ such that,
\begin{align}
    C < \min\brk(
    m, 1, \alpha_1, 
    \frac{1 - \frac{M\,d}{\pi}}{1 + Cp(\domain')}, 
    \alpha_2
    ),
\end{align}
then according to lemma~\ref{lem:vav-euc-coerc-cnst} we can use 
$C$ as a common coercivity constant for all 
the potential functions $a_\epsilon$ with $\epsilon \leq \epsilon_0$. 
Therefore,
\begin{align}
    \norm{\grad \fnth_\epsilon}^2_{\spL^2} 
    = -\lproduct{\fnF_{\cmu}(v_{\epsilon}
  + \tilde\fnh_{\epsilon}),
  \tilde\fnh_{\epsilon}} +
  \lproduct{\fng_{\epsilon},
  \tilde\fnh_{\epsilon}}
  = -\lproduct{a_\epsilon\cdot (\fnv_\epsilon + \fnth_\epsilon), 
  \fnth_\epsilon} + \lproduct{\fng_{\epsilon},
  \fnth_{\epsilon}}.
\end{align}

If we apply the uniform coercivity constant, we obtain
the bound,
\begin{align}
    C\,\norm{\fnth_\epsilon}^2_{\spH^1} 
    &\leq 
    \norm{\grad \fnth_\epsilon}^2_{\spL^2} 
    + \lproduct{a_\epsilon\fnth_\epsilon, \fnth_\epsilon}\nonumber\\
    &= 
    -\lproduct{a_\epsilon\fnv_\epsilon, 
  \fnth_\epsilon} + \lproduct{\fng_{\epsilon},
  \fnth_{\epsilon}}\nonumber\\
  &\leq 
  C_2\brk(
  \norm{\fnv_\epsilon}_{\spL^2}
  + 
  \norm{\fng_\epsilon}_{\spL^2}
  )
  \norm{\fnth_\epsilon}_{\spL^2},
\end{align}
where we have used Cauchy-Schwarz and the fact that the set $\set{a_\epsilon
    \,:\,\epsilon \leq \epsilon_0}$ 
is uniformly bounded. From this inequality, we deduce the 
existence of a positive constant $C$, such that,
\begin{align}
    \max\brk(\norm{\fnth_\epsilon}_{\spL^2},\,
    \norm{\grad \fnth_\epsilon}_{\spL^2}) \leq C \brk(
    \norm{\fnv_\epsilon}_{\spL^2} + 
    \norm{\fng_\epsilon}_{\spL^2}
    )
\end{align}

Applying lemma~\ref{lem:vav-euc-fnv-fng-bounds} 
we infer the existence of another constant, such that, 
\begin{align}
    \norm{\fnv_\epsilon}_{\spL^2} + 
    \norm{\fng_\epsilon}_{\spL^2}
    \leq 
    C\,\epsilon\,\abs{\log\epsilon},
\end{align}
for $\epsilon \leq \epsilon_0$. By the elliptic estimates and 
Sobolev's embedding, 
\begin{align}
    \norm{\fnth_\epsilon}_{C^0(\domain)} \leq 
    C_1 \norm{\fnth_\epsilon}_{\spH^2(\domain)}
    \leq 
    C_2 \brk(
    \norm{\laplacian \fnth_\epsilon}_{\spL^2(\domain)}
    +
    \norm{\fnth_\epsilon}_{\spL^2(\domain)}
    ).
\end{align}

Since,
\begin{align}
    \norm{\laplacian \fnth_\epsilon}_{\spL^2(\domain)}
    =
    \norm{a_\epsilon\, (\fnv_\epsilon +
    \fnth_\epsilon)}_{\spL^2(\domain)}
    \leq 
    C\brk(\norm{\fnv_\epsilon}_{\spL^2(\domain)} 
    + 
    \norm{\fnth_\epsilon}_{\spL^2(\domain)}),
\end{align}
we apply lemma~\ref{lem:vav-euc-fnv-fng-bounds} again and the 
estimate for the $\spL^2$ norm of $\fnth_\epsilon$ we have 
obtained to deduce that,
\begin{align}
    \norm{\fnth_\epsilon}_{C^0(\domain)} \leq 
    C \epsilon\abs{\log(\epsilon)}, \qquad \epsilon \leq \epsilon_0.
\end{align}

We use this estimate and Sobolev's 
embedding again to estimate the supremum of $\del_1\fnth_\epsilon$ 
at $\domain$. If $p > 2$, we have,
\begin{align}
    \norm{\fnth_\epsilon}_{C^1(\domain)} \leq 
    C_1 \norm{\fnth_\epsilon}_{\spW^{2,p}(\domain)}
    \leq 
    C_2 \brk(
    \norm{\laplacian \fnth_\epsilon}_{\spL^p(\domain)}
    +
    \norm{\fnth_\epsilon}_{\spL^p(\domain)}
    ).
\end{align}

Again by lemma~\ref{lem:vav-euc-fnv-fng-bounds} and the previous 
estimate on the $C^0$ norm of $\fnth_\epsilon$,
\begin{align}
 \norm{\laplacian \fnth_\epsilon}_{\spL^p(\domain)}
+
\norm{\fnth_\epsilon}_{\spL^p(\domain)}
&\leq 
\norm{a_\epsilon\,(\fnv_\epsilon +
    \fnth_\epsilon)}_{\spL^p(\domain)}
    + 
    \norm{\fng_\epsilon}_{\spL^p(\domain)}
    +
    \norm{\fnth_\epsilon}_{\spL^p(\domain)}\nonumber\\
    &\leq
    C\brk(
    \norm{\fnv_\epsilon}_{\spL^p(\domain)}
    +
    \norm{\fnth_\epsilon}_{\spL^p(\domain)}
    +
    \norm{\fng_\epsilon}_{\spL^p(\domain)})\nonumber\\
    &\leq 
    C\brk(
    \epsilon^{2/p}
    +
    \epsilon\,\abs{\log\epsilon}\cdot \abs{\domain}^{1/p}
    +
    \epsilon)\nonumber\\
    &\leq C\,\epsilon^{2/p}.
\end{align}

Since asymptotically $\epsilon\,\abs{\log\epsilon} \leq \epsilon^{2/p}$ as 
$\epsilon \to 0$. 
Therefore, $\norm{\del_1\fnth_\epsilon}_{C^0(\domain)} \leq 
C\,\epsilon^{2/p}$ if $\epsilon$ is small.
\end{proof}

%\input{chapters/vav-euclidean/sections/not-needed-anymore}

% \let \disk \undefined
% \let \fnv    \undefined
% \let \fng    \undefined
% \let \cmu    \undefined
% \let \fnu    \undefined
% \let \spL    \undefined
% \let \spH    \undefined
% \let \spW    \undefined
% \let \fnF    \undefined
% \let \domain \undefined
% \let \fnFmu  \undefined
% \let \fnVmu  \undefined
% \let \fnth   \undefined
% \let \vol    \undefined

%%% Local Variables:
%%% TeX-master: "../../../geometric-models.tex"
%%% End:

Going back to the original, undilated coordinates $x \in \plane$, we can state 
the following theorem,

\begin{theorem}\label{thm:vav-euc-incompleteness}
The moduli space $\moduli_0^{1,1}$ is an incomplete metric space, 
such that geodesic discs centred at the singular point 
$\epsilon = 0$ have finite area.
\end{theorem}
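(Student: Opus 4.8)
The plan is to reduce both assertions to one-dimensional integrals of the conformal factor $\cf(\epsilon)$ of the restricted metric $g_0 = \cf(\epsilon)\,(d\epsilon^2 + \epsilon^2\,d\theta^2)$, and to control $\cf$ near $\epsilon=0$ through the coefficient $b(\epsilon)$; I would work throughout in the rescaled coordinates of subsection~\ref{sec:eps-0}, the dilation $x\mapsto(1-\tau^2)^{-1/2}x$ being a constant conformal change, hence irrelevant to completeness and to finiteness of areas. The first step is to isolate the singular part of $b(\epsilon)$. Writing $\fnh_\epsilon = \fnth_\epsilon + \fnv_\epsilon$ with $\fnv_\epsilon(x) = \fnv(\abs{x-\epsilon}) - \fnv(\abs{x+\epsilon})$ and $\fnv(t) = -\log(1+t^{-2})$, and using both that $\fnv(t) - \log t^2 = -\log(1+t^2)$ is smooth and the reflection symmetry $x_2\mapsto-x_2$ of the configuration (which forces $\del_2\fnth_\epsilon(\epsilon,0)=0$, so that $2\,\eval{\conj\del_z\fnth_\epsilon}{z=\epsilon} = \del_1\fnth_\epsilon(\epsilon,0)\in\reals$), the definition of $b_1$ in Lemma~\ref{lem:coef-sym} yields, after an elementary computation,
\[
  b(\epsilon) = \del_1\fnth_\epsilon(\epsilon,0) - \frac1\epsilon + \frac{4\epsilon}{4\epsilon^2+1}.
\]
By Proposition~\ref{prop:vav-euc-del1-fnth-estimate-loc}, for every $p>2$ there is $C$ with $\abs{\del_1\fnth_\epsilon(\epsilon,0)}\le C\epsilon^{2/p}$ for small $\epsilon$; hence $\epsilon b(\epsilon) = -1 + \epsilon\,\beta(\epsilon)$, where $\beta(\epsilon) := \del_1\fnth_\epsilon(\epsilon,0) + 4\epsilon\,(4\epsilon^2+1)^{-1}$ satisfies $\abs{\beta(\epsilon)}\le C\epsilon^{2/p}$ and is $C^1$ on $(0,\epsilon_1]$ by Theorem~\ref{thm:h-reg-param-dep}. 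In particular $\lim_{\epsilon\to0}\epsilon b(\epsilon) = -1$, and $\cf$ is smooth and positive on $(0,\infty)$.

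For the area statement, the area element of $g_0$ is $\cf(\epsilon)\,\epsilon\,d\epsilon\wedge d\theta$, and the distance from $(\epsilon,\theta)$ to $\epsilon=0$ is the increasing function $\int_0^\epsilon\sqrt{\cf(t)}\,dt$ (the radial ray is length-minimising because any competing path must sweep every smaller value of $\epsilon$). Hence, for $\rho$ small, the geodesic disc of radius $\rho$ about $\epsilon=0$ meets $\modulic$ in a set $\set{0<\epsilon<\epsilon_*}$ with $\epsilon_*<\infty$; and from $\cf(\epsilon)\,\epsilon = 2\pi\big(2(1-\tau^2)\epsilon + \dv{\epsilon}(\epsilon b(\epsilon))\big)$ together with the first step,
\[
  \int_0^{\epsilon_1}\cf(\epsilon)\,\epsilon\,d\epsilon
  = 2\pi\left((1-\tau^2)\epsilon_1^2 + \epsilon_1 b(\epsilon_1) - \lim_{\epsilon\to0}\epsilon b(\epsilon)\right)
  = 2\pi\left((1-\tau^2)\epsilon_1^2 + \epsilon_1 b(\epsilon_1) + 1\right) < \infty.
\]
Since $\cf(\epsilon)\,\epsilon$ is continuous on $(0,\infty)$, it follows that these discs have area $2\pi\int_0^{\epsilon_*}\cf(\epsilon)\,\epsilon\,d\epsilon<\infty$.

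For incompleteness it suffices to show $\int_0^{\epsilon_1}\sqrt{\cf(\epsilon)}\,d\epsilon<\infty$ for some small $\epsilon_1$: then the radial ray $t\mapsto(t,0)$ on $(0,\epsilon_1]$ has finite length, so $\big(\tfrac1n,0\big)_n$ is a Cauchy sequence in $\modulic\cong\plane\setminus\set{0}$ without a limit. I would not attempt a pointwise bound on $\cf$ --- that would force control of $\dv{\epsilon}\del_1\fnth_\epsilon(\epsilon,0)$, i.e.\ another pass of the coercivity and elliptic estimates applied to the $\epsilon$-differentiated regularised Taubes equation --- but instead bound $\int_0^{\epsilon_1}\cf(\epsilon)\,d\epsilon$ and finish by Cauchy--Schwarz, $\int_0^{\epsilon_1}\sqrt{\cf(\epsilon)}\,d\epsilon \le \sqrt{\epsilon_1}\left(\int_0^{\epsilon_1}\cf(\epsilon)\,d\epsilon\right)^{1/2}$. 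From $\epsilon b(\epsilon) = -1 + \epsilon\beta(\epsilon)$ we have $\tfrac1\epsilon\dv{\epsilon}(\epsilon b(\epsilon)) = \beta(\epsilon)/\epsilon + \beta'(\epsilon)$, and for $0<\delta<\epsilon_1$,
\[
  \int_\delta^{\epsilon_1}\left(\frac{\beta(\epsilon)}{\epsilon} + \beta'(\epsilon)\right)d\epsilon
  = \beta(\epsilon_1) - \beta(\delta) + \int_\delta^{\epsilon_1}\frac{\beta(\epsilon)}{\epsilon}\,d\epsilon,
\]
whose right-hand side has a finite limit as $\delta\to0$, because $\beta(\delta)\to0$ and $\abs{\beta(\epsilon)/\epsilon}\le C\epsilon^{2/p-1}$ is integrable near $0$ (here $2/p-1>-1$ as $p>2$). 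Hence $\int_0^{\epsilon_1}\cf(\epsilon)\,d\epsilon<\infty$, and the theorem follows.

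The main obstacle is not this final deduction but the work in subsection~\ref{sec:eps-0}: once Proposition~\ref{prop:vav-euc-del1-fnth-estimate-loc} is in hand the argument is short. The one genuine subtlety here is that $b(\epsilon)$ itself diverges like $-1/\epsilon$, so a pointwise estimate of $\cf$ is out of reach without extra work; the content of the incompleteness step is that this divergence is harmless under integration --- it cancels outright in $\cf(\epsilon)\,\epsilon$ and survives in $\cf(\epsilon)$ only as an $\spL^1$, in fact integrated-by-parts, contribution --- which is precisely what the decay $\del_1\fnth_\epsilon(\epsilon,0) = \order(\epsilon^{2/p})$ is tailored to provide.
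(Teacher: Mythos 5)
Your proposal is correct and follows essentially the same route as the paper's own proof: isolate the $-1/\epsilon$ singularity of $b(\epsilon)$ and control the regular remainder via Proposition~\ref{prop:vav-euc-del1-fnth-estimate-loc}, bound the length of a radial ray by Cauchy--Schwarz through the energy $\int_0^{\epsilon_0}\Omega(\epsilon)\,d\epsilon$, and handle $\tfrac{1}{\epsilon}\tfrac{d}{d\epsilon}(\epsilon b)$ by integration by parts, the boundary and $\int \beta/\epsilon$ terms converging precisely because of the $O(\epsilon^{2/p})$ decay. The differences are only presentational (you justify that small geodesic discs about the singularity are coordinate discs, which the paper takes for granted, and your normalisation of $b$ differs from the paper's proof by the factor of $2$ already latent in the paper's two definitions), and neither affects the argument.
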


In comparison, the moduli space of vortices for the Ginzburg-Landau functional 
is 
complete, as can be seen in the results of 
Strachan~\cite{strachan_lowvelocity_1992} who studied geodesic motion on 
hyperbolic space or 
Samols~\cite{samols1992} on the euclidean plane. Incompleteness of 
the moduli space was expected by previous results of Rom\~ao-Speight, 
who conjectured an asymptotic logarithmic approximation to $\cf(\epsilon)$ for 
small $\epsilon$ at $\tau = 0$~\cite{romao2018}.

\begin{proof}
We will prove that $\moduli_0^{1,1}$ is incomplete exhibiting a curve
of finite length reaching the singularity at $\epsilon = 0$. Let us take any 
radial geodesic parametrized as 
\begin{align}
 \gamma_\theta:(0, \epsilon_0] \to \moduli_0^{1,1}, &&
 \gamma_\theta(\epsilon) = \epsilon\,e^{i\theta}.
\end{align}
 
By Cauchy-Schwarz, the length of this curve is bounded since,
\begin{align}
    \ell = \int_0^{\epsilon_0} \cf(\epsilon)^{1/2}\,d\epsilon
    \leq \epsilon_0^{1/2}\,\brk(
    \int_0^{\epsilon_0} \cf(\epsilon)\,d\epsilon
    )^{1/2}.
\end{align}

We will prove that the energy, and therefore the length, is finite.
Recall the interaction coefficient is given by 
\begin{align}
 b(\epsilon)  &= 2\eval{\del_1}{x = \epsilon}\brk(\fnh_\epsilon(x) 
 - \log\,\abs{x - \epsilon}^2)\nonumber\\
 &= 2\eval{\del_1}{x = \epsilon}\brk(
 \fnth_\epsilon(x) + \fnv(\abs{x + \epsilon}) 
 - \log\brk(1 + \abs{x - \epsilon}^2) + \cmu
 )\nonumber\\
 &= 2\,\del_1\fnth_\epsilon(\epsilon) 
 + \frac{8\epsilon}{1 + 4\epsilon^2}
 - \frac{2}{\epsilon}.
\end{align}

Let $\tilde b(\epsilon) = 2\,\del_1\fnth_\epsilon(\epsilon) 
 + 8\,\epsilon(1 + 4\epsilon^2)^{-1}$, we have,
\begin{align}
    \int_0^{\epsilon_0} \cf(\epsilon)\,d\epsilon &= 
    2\pi 
    \int_0^{\epsilon_0}   
2 (1 - \tau^2) + \frac{1}{\epsilon}\,\dv{\epsilon}\brk(\epsilon b(\epsilon))
    \,d\epsilon\nonumber\\
    &= 
    4\pi(1 - \tau^2)\,\epsilon_0 
    + 2\pi\,\int_0^{\epsilon_0} 
    \frac{1}{\epsilon}\,\dv{\epsilon}\brk(\epsilon \tilde b(\epsilon))
    \,d\epsilon\nonumber\\
    &= 
    4\pi(1 - \tau^2)\,\epsilon_0 
    + 2\pi
    \brk(
    \tilde b(\epsilon_0) - \lim_{\epsilon\to 0} \tilde b(\epsilon)
    + \int_0^{\epsilon_0}
    \frac{\tilde b(\epsilon)}{\epsilon}
    \,d\epsilon
    ),
\end{align}
where we have used integration by parts in the last equation.
 Let us assume $\epsilon_0$ is so small we can use the 
 estimate in 
 proposition~\ref{prop:vav-euc-del1-fnth-estimate-loc},
 \begin{align}
    \int_0^{\epsilon_0} \cf(\epsilon)\,d\epsilon &=  
    4\pi(1 - \tau^2)\,\epsilon_0 
    + 2\pi \tilde b(\epsilon_0)
    + 8\pi \tan^{-1}\brk(2 \epsilon_0)
    + 4\pi \int_0^{\epsilon_0} 
    \frac{\del_1\fnth_\epsilon(\epsilon)}{\epsilon}
    \,d\epsilon\nonumber\\
    &\leq 
    4\pi(1 - \tau^2)\,\epsilon_0 
    + 8\pi \tan^{-1}\brk(2 \epsilon_0)
    + \frac{16\pi \epsilon_0}{1 + 4\epsilon_0^2}
    + C\brk(
    \epsilon_0^{2/p}
    + \int_0^{\epsilon_0} 
    \epsilon^{\frac{2}{p} - 1}
    \,d\epsilon
    )\nonumber\\
    &\leq 
    4\pi(1 - \tau^2)\,\epsilon_0 
    + 8\pi \tan^{-1}\brk(2 \epsilon_0)
    + \frac{16\pi \epsilon_0}{1 + 4\epsilon_0^2}
    + C\epsilon_0^{2/p}.
 \end{align}
 
 Therefore, the energy is finite, hence, the 
 length of the geodesic is also finite, moreover, 
 the length is bounded by,
 \begin{align}
     \ell \leq 2\pi\brk(
     5 - \tau^2 
     )^{1/2}\,\epsilon_0 
     + C\,\epsilon_0^{1/p}.
 \end{align}
 
 For the area of a disk, we have a similar calculation,
 \begin{align}\label{eq:vol-disk-bound}
     \vol(\disk_R(0)) &= 
     2\pi \int_0^R \cf(\epsilon)\,\epsilon\,d\epsilon\nonumber\\
     &=
     4\pi^2\,(1 - \tau^2)\,R^2 + 4\pi^2 R\,\tilde b(R)\nonumber \\
     &\leq 
     4\pi^2\,(1 - \tau^2)\,R^2 + \frac{32\pi^2 R^2}{
     1 + 4 R^2}+ 8\pi^2 R\,\del_1\fnth_\epsilon(\epsilon)\nonumber \\
     &\leq 
     4\pi^2\,(1 - \tau^2)\,R^2 + \frac{32\pi^2 R^2}{
     1 + 4 R^2}+ C\, R^{1 + \frac{2}{p}}.
 \end{align}
 \end{proof}

Samols compared the area of small disks on the moduli space for the 
Ginzburg-Landau 
functional with the area of a cone with deficit angle $\pi$. 
%as~\eqref{eq:vol-disk-bound} shows, the area of 
%a small disk in $\moduli^{1,1}_0$ is also bounded, however this time by a disk 
%of area of order $R^{1 + 2/p}$, $p > 2$, where $R$ is the radius of the disk. 
Recall each vortex/antivortex has effective mass 
$2\pi(1\mp \tau)$ respectively, in the centre of mass coordinates,  
the reduced mass of the vortex-antivortex system is 
$\pi (1 - \tau^2)$, hence, if we normalize~\eqref{eq:vol-disk-bound} 
dividing by the reduced mass, we find that the first term in the 
upper bound is $4\pi R^2$, the 
area of a right circular cone of radius $R$ and deficit 
angle $3\pi/2$. The second and third terms in the upper bound are far from 
 optimal, because they do not depend on $\tau$ and  
 the third term is of order smaller than $2$, however, the 
 conjectured asymptotics of the conformal factor for small 
 $\epsilon$~\eqref{eq:lambda-eps} leads us to also conjecture that the 
 first term in the upper bound is the first term of an approximation 
 to $\vol(\disk_{R}(0))$ for small $R$.

\let  \hf       \undefined 
\let  \gp       \undefined 
\let  \fnh      \undefined
\let  \sgn      \undefined
\let  \kmetric  \undefined
\let  \cf       \undefined
\let  \kform    \undefined 
\let  \gmetric  \undefined 
\let  \fnV        \undefined

\let \fnv        \undefined
\let \fng        \undefined
\let \cmu        \undefined
\let \fnu        \undefined
\let \spL        \undefined
\let \spH        \undefined
\let \spW        \undefined
\let \fnF        \undefined
\let \domain     \undefined
\let \fnFmu      \undefined
\let \fnVmu      \undefined
\let \fnth       \undefined
\let \vol        \undefined
\let \energyDens \undefined

%%% Local Variables:
%%% mode: latex
%%% TeX-master: "../../../geometric-models.tex" 
%%% End:

\newcommand*{\hf}{\phi}
\newcommand*{\gp}{A}
\newcommand{\fnh}{h}
\newcommand*{\cp}{\times}
\newcommand*{\fnhh}{\hat h}
\newcommand*{\cf}{\Omega}
\newcommand*{\gmetric}{g}
\newcommand*{\fnv}{v}
\newcommand*{\fng}{g}
\newcommand*{\fnV}{V}
\newcommand*{\cmu}{\mu}
\newcommand*{\fnu}{u}
\newcommand*{\fnF}{F}
\newcommand*{\fnth}{\tilde\fnh}
\newcommand*{\energyDens}{\mathcal{E}}
\newcommand*{\pbrk}[1]{\left({#1}\right)}
\newcommand*{\mDiff}{\mathcal{D}}
\newcommand*{\bxi}{\boldsymbol{\xi}}
\newcommand*{\sign}{s}

\section{Asymptotic approximation at large
  separation}\label{s:scattering-theory} 

If the cores are separated by a large distance, it is plausible to
assume that the interactions are so weak, that in the neighbourhood of
any of them, they can be described by the solution corresponding to
one vortex plus a small perturbation term due to the
interactions. We use this idea to approximate dynamics in the moduli space
for well separated vortices. For Ginzburg-Landau
vortices this was done by Speight in~\cite{speight_static_1997} and 
Manton-Speight in~\cite{manton-speight-asymptotic}. We start finding Hedgehog
solutions to the Bogomolny equations. Let us assume that there are
exactly $N$ vortices at the origin. We will use the Ansatz, 
\begin{equation}
 \begin{aligned}
\hf &=  (\sin(f)\cos(N\theta),\,\sin(f)\sin(N\theta),\,\cos(f)), \\
\gp &= N a(r)\,d\theta,
\end{aligned}
\end{equation}
which assumes circular symmetry of the field equations. This Ansatz
was used  
before by Schroers to study solutions of the $U(1)$-gauged $O(3)$ Sigma 
model for $\tau = 1$ 
\cite{schroers_bogomolnyi_1995}. The energy density of this static
configuration 
 is,
 \begin{align}
    \energyDens = \pbrk{\frac{N(a - 1)\sin(f)}{2 r}}^2 
    + \pbrk{\tau - \cos(f)}^2.
 \end{align}

For these fields to represent $N$ vortices at the 
origin with finite energy, we add the boundary conditions,
\begin{align}
    f(0) &= 0, &
    a(0) &= 0, &
    \lim_{r\to \infty}f &= \cos^{-1}{\tau},  &
    \lim_{r\to \infty}a &= 1.
\end{align}

With this Ansatz, the Bogomolny equations reduce to the system of ODEs, 
\begin{align}
    f' &= \frac{N}{r}\,(a - 1)\sin(f), &
    a' &= \frac{r}{N}\,(\cos(f) - \tau).
\end{align}

Unfortunately, we cannot extend these equations to the origin, instead, we 
select a small initial value $\delta$ and perturb the Bogomolny equations to 
lowest order in $\delta$. We found that to lowest order,
\begin{align}
    f(\delta) &= \alpha\,\delta^N, &
    a(\delta) &= \frac{1 - \tau}{2N}\,\delta^2,
\end{align}
then we used $\alpha$ as a shooting parameter. In practice, we chose 
$\delta = 10^{-8}$ and for the boundary condition at infinity, we selected 
$r_\infty = 10$ except for the last $\tau$, for which $r_\infty = 20$. 
We took $r_\infty$ as infinity and shot until $(f(r_\infty), a(r_\infty))$ 
satisfied the boundary condition, as in the paper of 
Speight~\cite{speight_static_1997}. We used the solver \emph{solve\_ivp} of the 
scientific library \emph{SciPy} with default parameters. Internally, 
it uses the Runge-Kutta  method of order 5(4), which controls the error 
using a local extrapolation and uses a quartic interpolation polynomial 
to compute the solution at the preconfigured set of points shown in 
%depending on the value of $\tau$, because for $\tau$ close to 
%1, Bogomolny's equations are near to degenerate and the  
%performance of the shooting is poor. 
Figure~\ref{fig:hedgehogs}.
\begin{figure}
    \centering
    \includegraphics[width=.8\textwidth]{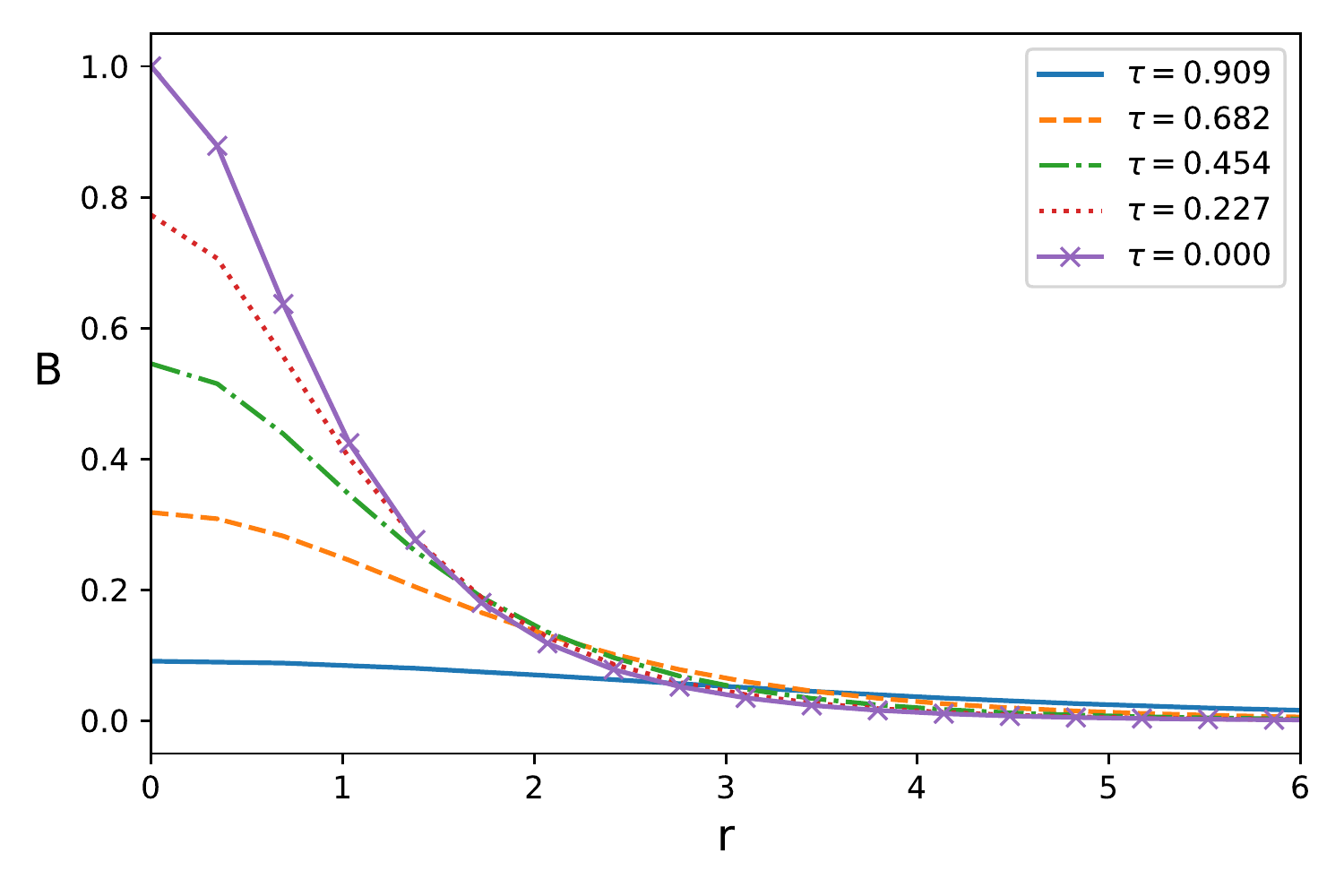}
    \includegraphics[width=.8\textwidth]{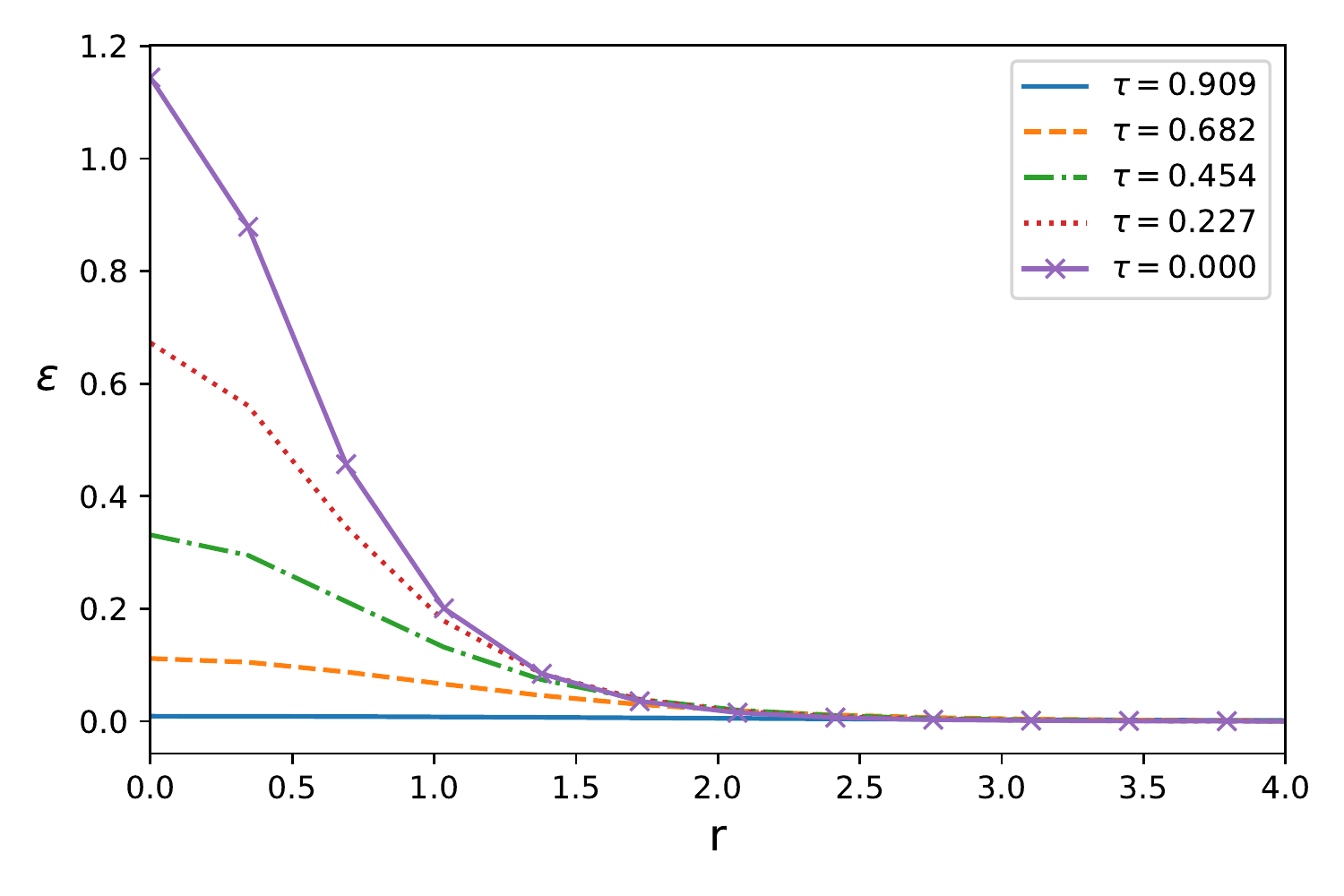}
    \caption{Magnetic field and energy density of hedgehog solutions for 
    positive values of $\tau$. The graphs show how as $\tau$ grows, 
    the energy and magnetic field weaken.}
    \label{fig:hedgehogs}
\end{figure}

If $(\phi, a)$ is the solution to the Bogomolny equations with $N$ vortices at 
the 
origin and parameter $\tau$ and we invert the orientation of the sphere, 
selecting $-n$ as the north pole, it is not difficult to see that $(\phi, -a)$ 
is also a solution to the Bogomolny equations, this time with parameter 
$-\tau$ 
and $N$ antivortices at the origin, hence the qualitative properties of an 
antivortex hedgehog are the same, except that to a $\tau$-vortex corresponds a 
$-\tau$ antivortex and to a $B$ (vortex) magnetic field corresponds a $-B$ 
(antivortex) magnetic field. 

Assuming there is only one core at the origin, 
the solution to the 
Taubes equation, $h$, is also radial, and away of the origin, is a
solution to the equation,
\begin{equation}
\label{eq:taubes-radial}
\frac{d^2h}{dr^2} + \frac{1}{r}\frac{dh}{dr} - 2\brk(
\frac{e^h - 1}{e^h + 1} + \tau) = 0.
\end{equation}
For small $r$, \(h \) has the asymptotic behaviour \(h = \pm \ln(r^2)
\) and for big \(r \), it approaches \(\log \left( \frac{1 - 
    \tau}{1 + \tau} \right)\). Linearizing about the limit at infinity, 
    we have the equation, 
\begin{align}
  \label{eq:linearised-taubes-infinity}
\frac{d^2\hat{h}}{dr^2} + \frac{1}{r}\frac{d\hat{h}}{dr} - (1 -
\tau^2)\hat{h} = 0, &&
\lim_{r \to \infty}\hat{h} = 0.
\end{align}
If we make the change of variables $r' = (1 - \tau^2)^{1/2}\,r$, then the function 
$\fnhh(r')$ is a solution to the modified Bessel equation,
\begin{align}
\frac{d^2\fnhh}{dr'^2} + \frac{1}{r'}\frac{d\fnhh}{dr'} - \fnhh = 0, &&
\lim_{r' \to \infty}\fnhh = 0.
\end{align}
whose general solution is a linear combination of modified Bessel's
function of first and second kind, \(J_0 \) and \(K_0 \). Since \(J_0 \)
diverges at infinity, we deduce the approximation,
\begin{equation}
\label{eq:assymptotic-expansion-h-infinity}
h(r) = \log \left( \frac{1 - \tau}{1 + \tau} \right) 
+ q K_0\brk((1 - \tau^2)^{1/2} r). 
%+ \order(e^{-ar}),\qquad a \in (0,\sqrt{1 - \tau^2}).
\end{equation}
The constant $q$ has to be determined numerically, as in 
the approximation done for Ginzburg-Landau
vortices \cite{speight_static_1997}.
 We found this constant for several values of $\tau$ by solving the Bogomolny 
 equations as explained above, with this data, we computed 
 the pairs $\left(K_0((1 - \tau^2)^{1/2}\,r), h(r)\right)$ and fitted 
 a least squares line as a model, whose slope was $q$. We tested visually 
 and by means of the coefficient of determination $R^2$ the goodness of 
 fit of the model to the data, finding on average $R^2 = 0.9985$, 
 meaning the linear model explained $99.85\%$ of the data, hence the 
 fit was good.  
 The dependence of the constant $q$ on $\tau$ can 
 be seen in figure~\ref{fig:qvstau}. It is interesting to note that 
  the graph suggests $q$ depends linearly with $\tau$, this is unexpected 
  since $q$ is not well understood even for the Ginzbug-Landau 
  functional, where there is an argument by David Tong~\cite{tong2002ns5}   
  proposing an 
  explanation for the value of $q$ based on string theory, but otherwise, 
  the value of the constant is only known numerically and 
  it is not clear whether 
  such argument can be extended to the $O(3)$ Sigma model.  
  The computed values of $q$ are also 
 displayed in table~\ref{tab:qvstau}. For $\tau = 0$, the value of $\pi q$ 
 was computed by Rom\~ao-Speight~\cite[p.~23]{romao2018} as $-7.1388$, 
 as can be seen 
 in table~\ref{tab:qvstau}, we found a value of $\pi q = -7.1346$, in agreement 
 with the known data. 
 
 \begin{table}[]
     \centering
     \begin{equation*}
        \begin{array}{|c|l|l|l|l|l|l|l|l|l|}
        \hline
        \tau    &  
         -0.909 &
         -0.682 &
         -0.454 &
         -0.227 &
          \quad\! 0   &
          \quad\! 0.227 &
          \quad\! 0.454 &
          \quad\! 0.682 &
          \quad\! 0.909
        \\
        \hline
           q     & 
         -1.2457 &
         -1.5414 &
         -1.7921 &
         -2.0321 &
         -2.271  &
         -2.5134 &
         -2.7568 &
         -2.9784 &
         -3.2504 \\
         \hline
     \end{array}
     \end{equation*}
     \caption{Constant $q$ for different values of $\tau$ for a vortex at 
     origin in Euclidean space. For an antivortex,  $q$ has positive sign.} 
     \label{tab:qvstau}
 \end{table}
 
 \begin{figure}
     \centering
     \includegraphics[width=.8\textwidth]{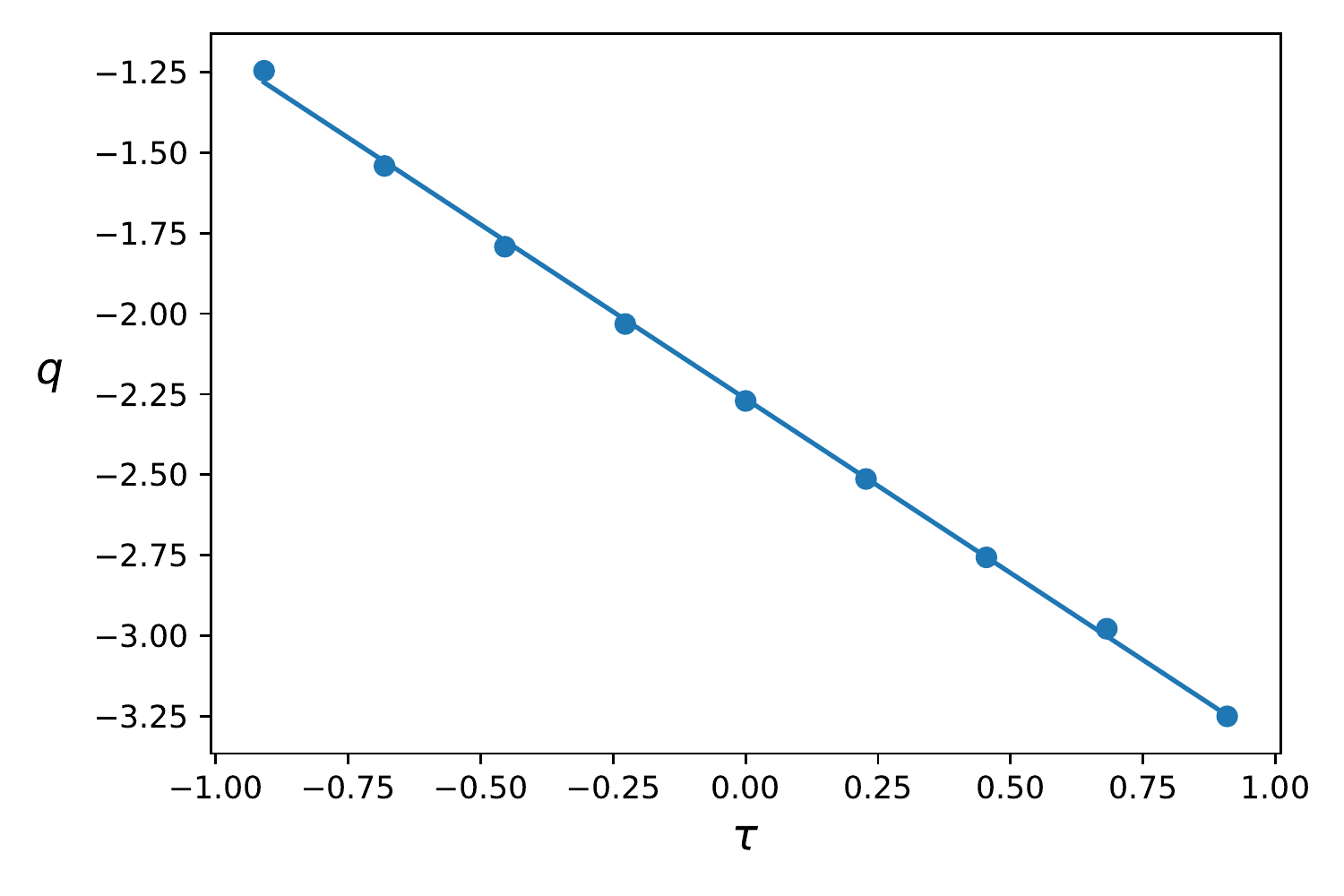}
     \caption{Dependency of the parameter $q$ on the asymmetry $\tau$ of the 
     vortex. For an antivortex $q$ is positive, the pattern is reversed 
     and $q$ increases with $\tau$.}
     \label{fig:qvstau}
 \end{figure}
 
 Let us consider an antivortex at position $-2\epsilon$ for big
 $\epsilon$. The antivortex perturbs $h$ in a  neighbourhood of the
 origin, since the separation is large, we can assume that this is a
 small perturbation of the Hedgehog solution. Let $\fnh_0$ be the
 single vortex solution at the origin. If $\fnh_1$ is a small
 perturbation of $\fnh_0$ caused by the antivortex in a neighbourhood
 of the origin, $\fnh_1$ is a solution to the linearization of the Taubes 
 equation,
 \begin{align}\label{eq:lin-h1}
    -\laplacian \fnh_1 = \frac{4e^{\fnh_0}\,\fnh_1}
    {(1 + e^{\fnh_0})^2}.
 \end{align}
 
 The singularity at origin is carried by $\fnh_0$ and 
 since the operator in equation~\eqref{eq:lin-h1} if free of 
 singularities, $\fnh_1$ extends smoothly to the origin. 
 Expanding in Fourier series $\fnh_1$, we find,
 \begin{align}
     \fnh = \fnh_0 + 
     \half f_0(r) + \sum_{n = 1}^\infty \pbrk{
     f_n(r)\cos(n\theta) + g_n(r)\sin(n\theta)}.
 \end{align}
 
 The functions $f_n(r)$ and $g_n(r)$ are solutions to the equation 
 \begin{align}
     f_n'' + \frac{1}{r}f_n' - \pbrk{
     \frac{4e^{\fnh_0}}{(1 + e^{\fnh_0})^2} +
     \frac{n^2}{r^2}}\,f_n = 0,
 \end{align}
and since $\fnh_1$ is well defined at $r = 0$, to lowest order we have 
$f_n(r) = \alpha_n\,r^n$, $g_n = \beta_n\,r^n$. 

To compute the coefficient $b_1$,
we note that $\fnh_0 = \log r^2 + \fnth_0(r)$, where the regular part $\fnth_0$
is a smooth function. Since $\log r^2$ 
 is the fundamental solution of Laplace's equation on the plane, 
 by~\eqref{eq:taubes-radial}, $\fnth_0$ is 
 a solution to the equation
 \begin{align}
 \frac{d^2\fnth_0}{dr^2} + \frac{1}{r}\frac{d\fnth_0}{dr} - 2\brk(
 \frac{e^{h_0} - 1}{e^{h_0} + 1} + \tau) = 0,
 \end{align}
hence,
 \begin{align}
     \conj\del_x \fnth_0(r) &= \fnth_0'(r)\,\conj\del_x\,r\nonumber \\
     &= \half \fnth_0'(r)\,e^{i\theta}\nonumber\\
     &= r\,\pbrk{
     2\pbrk{\frac{e^{\fnh_0} - 1}{e^{\fnh_0} + 1} + \tau} - \fnth_0''
     }.
 \end{align}
 
 The function $e^{\fnh_0}$ has no singularity at the origin, moreover it 
 is smooth, hence,
 \begin{align}
     \conj\del_x\fnth_0(0) = \lim_{r \to \infty} \conj\del_x\fnth_0(r) = 0.
 \end{align}
 $\fnh$ is symmetric with respect to the line joining the two cores. These 
 are located on the real axis, hence $\fnh(x) = \fnh(\conj x)$ which
 translates into 
 \begin{align}
 2\eval{\del_x}{x = 0}\pbrk{\fnh - \log r^2} 
 = \eval{\del_1}{x = 0}\pbrk{\fnh - \log r^2} = \alpha_1.
 \end{align}
 
 We conclude that $b_1 = \alpha_1$. To compute the nontrivial 
 coefficient in the metric of the moduli space, we note that for large
 $r$, $f_1$ is a  solution to the modified Bessel equation,
 \begin{align}
    f_1'' + \frac{1}{r}f_1' - \pbrk{
    1 - \tau^2 +
    \frac{1}{r^2}} f_1 = 0, 
 \end{align}
from here we can follow the computation done in
\cite{manton-speight-asymptotic} for Ginzburg-Landau vortices, the analysis 
is the same in the coordinate system $x'$ and the conclusion is that 
the coefficient $b_1$ for a pair of distant vortices is,
\begin{equation}
  \label{eq:b1-asymp-approx-pair}
  b_1(\epsilon) = \half q_1q_2\brk(1 - \tau^2)^{1/2}
  K_1\brk(2\brk(1 -\tau^2)^{1/2} \epsilon).
\end{equation}

By translation invariance, $b(\epsilon) = b_1(\epsilon)$, for $b$ the 
nontrivial term in the conformal factor of the metric in the reduced 
moduli space. Using the properties of Bessel's functions given in 
equation~\eqref{eq:bessel-props}, we find that at large separation
the conformal factor can be approximated as,
\begin{equation}
  \label{eq:conformal-factor-asymptotics-pair}
  \cf (\epsilon) = 2\pi(1 - \tau^2) \brk(
  2 - q_1q_2\,K_0\brk(2 
  \brk(1 - \tau^2)^{1/2} \epsilon)).
\end{equation}
From this formula we observe the conformal factor vanishes at 
$\tau = \pm 1$, this can be understood because the effective mass of a 
vortex or antivortex is $2\pi(1\mp\tau)$, hence as $\tau \to \pm 1$, 
most of the kinetic energy of a vortex-antivortex pair is concentrated at 
one of the cores which in the limit coincides with the 
centre of mass. Hence, by the decomposition of the $\Lsp^2$ metric in 
the centre of mass frame, proposition~\ref{prop:metric-cm-decomp}, 
one would expect this vanishing of the conformal factor.

\subsection{The point-source formalism}\label{sec:point-source-formalism}

Consider a single vortex or anti-vortex at origin, labelled 1, up to a local 
trivialization, the Higgs field is a map $\phi: U \subset \reals \times \plane 
\to \sphere$ 
with coordinates $\phi(x_0, x_1, x_2) = (X_1, X_2, X_3)$. In the south pole 
projection, this field is equivalent to $\psi(x_0, x_1, x_2) = (X_1/(1 + X_3), 
X_2/(1 + X_3))$. We can choose a local gauge, the real gauge, in which $\psi$ 
is real, or going back to the sphere, $\phi$ is constrained to the 
intersection circle of $\sphere$ with the plane \(X_2 = 0\). 
Since the field has nontrivial winding, this gauge choice can be 
made only with exception of the
core positions \cite{manton-speight-asymptotic}. We aim to calculate a
linear approximation to the field and vector potential far from the
core, in which case we can make this assumption. It will be convenient
to work in spherical coordinates, such that the Higgs field is
parameterised as \(\phi = (\sin(\varphi), 0,
\cos(\varphi)) \), with \(\varphi \) the azimuthal angle. In this
gauge, the spherical covariant derivatives are
\begin{equation}
  \label{eq:covariant-dv-real-gauge}
  D_{\mu}\phi = \del_{\mu}\phi - A_{\mu}\sin(\varphi)\,e_2.
\end{equation}

In this section, we aim to show that if we have a collection of cores, 
vortices and antivortices well separated among each other, we can approximate 
the dynamics of the system as if at each core position there were a scalar 
monopole point-source and a magnetic dipole. For large \(r \), the field 
approaches
the vacuum manifold, perturbatively 
we can approximate $\hf$ as \((\sin(\varphi + \varphi_{\infty}), 0, 
\cos(\varphi + \varphi_\infty))\), where 
\(\varphi_{\infty} =  \cos^{-1}(\tau) \) and \(\varphi\) is 
small. Keeping linear terms in \(\varphi \), we can make the approximation,
\begin{equation}
  \label{eq:linear-approx-covariant-dv-real-gauge}
  D_{\mu}\phi = \brk(\cos(\varphi_{\infty})\,\del_{\mu}\varphi,
  -A_{\mu}(\sin(\varphi_{\infty}) + \cos(\varphi_{\infty})\varphi),
  -\sin(\varphi_{\infty})\,\del_{\mu}\varphi). 
\end{equation}

Retaining terms up to quadratic order, far from the vortex position,
the Lagrangian density is approximately linear, corresponding to a 
non interacting field,
\begin{equation}
  \label{eq:bps-linear-lagrangian}
  \mathcal{L}_{free} = \half \del_{\mu}\varphi\,\del^{\mu}\varphi -
  \half \sin^2(\varphi_{\infty})\,\varphi^2 -
  \frac{1}{4}F_{\mu\nu}F^{\mu\nu} +
  \half \sin^2(\varphi_{\infty})\,A_{\mu}A^{\mu}.
\end{equation}

This is the Lagrangian density of two independent fields, whose extremals 
$(\varphi, \gp)$ satisfy the real Klein-Gordon and Proca equations,

\begin{align}
  (\Box + \sin^2(\varphi_{\infty}))\,\varphi &= 0,\\
  (\Box + \sin^2(\varphi_{\infty}))\,\gp_{\mu} &= \del_{\mu}\del^{\nu}\gp_{\nu},
\end{align}
where \(\Box = \del^2_t + \laplacian \) is the D'Alambertian
operator. We add a source term, 
\begin{align}
  \mathcal{L}_{source} = \rho\,\varphi - j_{\mu}A^{\mu},  
\end{align}
to the free Lagrangian density, in order to match the expected
behaviour at infinity of the fields, as in
\cite{speight_static_1997}. Therefore, the perturbed field 
equations are,
\begin{align}
(\Box + \sin^2(\varphi_{\infty}))\,\varphi &= \rho,
\label{eq:asymptotic-massive-field-equation}\\
(\Box + \sin^2(\varphi_{\infty}))\,\gp_{\mu} &= j_{\mu} +
\del_{\mu}\del^{\nu}\gp_{\nu}.
\end{align}
Taking the divergence of the second equation, we find that,
\begin{align}
(\Box + \sin^2(\varphi_{\infty}))\,\partial^\mu\gp_{\mu} &= 
\partial^\mu j_{\mu} +
\Box\del^{\nu}\gp_{\nu},
\end{align}
hence,
$\sin^2(\varphi_\infty)\,\partial^\mu \gp_\mu = \partial^\mu j_\mu$ and 
we infer,
\begin{align}
  (\Box + \sin^2(\varphi_{\infty}))\,\gp_{\mu} &= j_{\mu} +
  \frac{1}{\sin^2(\varphi_{\infty})}\del_{\mu}\del^{\nu}j_{\nu}.
  \label{eq:asymptotic-massive-potential-equation}
\end{align}

The sourced field equations of $\varphi$ and $\gp$ represent two
massive fields of mass \(\sin(\varphi_\infty) = \sqrt{1 - \tau^2} \). 
 In the real gauge, south pole stereographic projection of $\hf$ is 
\(\psi = \hf_1/(1 + \hf_3) \), hence, since $\varphi$ is small,
\begin{align}
  \psi = \frac{\sin(\varphi_{\infty}) +
    \cos(\varphi_{\infty})\,\varphi}{1 + \cos(\varphi_{\infty}) -
    \sin(\varphi_{\infty})\,\varphi},
\end{align}
moreover, to first order 
% in $\varphi$
we have,
\begin{equation}
  \label{eq:psi-asymptotics-trigonometric}
  \psi = \frac{\sin(\varphi_{\infty})}{1 + \cos(\varphi_{\infty})} +
  \frac{1}{1 + \cos(\varphi_{\infty})}\,\varphi.
\end{equation}

On the other hand, if we fix one core and consider the field at a large 
distance from it but larger to the other cores, we have the approximation  
\(\psi = \exp\pbrk{\frac{1}{2}h_0} \), 
where \(h_0 \) is the solution to the radial Taubes equation, 
given by equation~\eqref{eq:assymptotic-expansion-h-infinity}. To first order we 
have, 
\begin{equation}
  \label{eq:psi-asymptotics-tau}
  \begin{aligned}
  \psi &= \pbrk{\frac{1 - \tau}{1 + \tau}}^{1/2}
  \pbrk{
  1 + \frac{q_1}{2} K_0\pbrk{(1 - \tau^2)^{1/2}\,r}}\\
  &= \frac{\sin(\varphi_{\infty})}{1 + \cos(\varphi_{\infty})}
  \pbrk{
  1 + \frac{1}{2} q_1 K_0\pbrk{\sin(\varphi_\infty)\,r}}.
  \end{aligned}
\end{equation}

Hence, the asymptotic expansion of \(\varphi\) is,
\begin{align}
\varphi = \frac{q_1}{2}\sin(\varphi_{\infty})\,K_0\qty(\sin(\varphi_{\infty})\,r). 
\end{align}

We are interested in static fields, in this case, the field equations
 reduce to the static Klein-Gordon equation with a mass term,

\begin{align}
  (\laplacian + \sin^2(\varphi_{\infty}))\,\varphi &= \rho, &
  (\laplacian + \sin^2(\varphi_{\infty}))\,\gp_{\mu} &= j_{\mu} + 
  \frac{1}{\sin^2(\varphi_{\infty})}\del_{\mu}\del^{\nu}j_{\nu}.
  \label{eq:asymptotic-static-massive-field-equation}
\end{align}

We have,
\begin{equation}
  \label{eq:sphere-dv-real-gauge}
\mDiff_k\hf = \del_k\phi - A_k\,n\times\phi = \del_k\varphi\cdot(\cos(\varphi)e_1
-\sin(\varphi)e_3) - A_k\sin(\varphi)\,e_2,
\end{equation}
and
\begin{align}
  \hf \times (\del_2\hf - A_2\,n\times\hf) &=
  \brk(\sin(\varphi)e_1 + \cos(\varphi)e_3) \\
  &\quad \times \brk(
     \del_2\varphi\cdot\cos(\varphi)e_1 -  \gp_2\sin(\varphi)\,e_2
     - \del_2\varphi\cdot\sin(\varphi)e_3)\\
&= \gp_2\sin(\varphi)\cos(\varphi)e_1 + \del_2\varphi\cdot e_2 
   - \gp_2\sin^2(\varphi)\,e_3.\label{eq:sphere-J-times-dv-real-gauge}
\end{align}

In the gauge \(A_0 = 0 \) the first \bog\, equation is,
\begin{align}
\brk(\del_1\varphi + \gp_2\sin(\varphi))\brk(\cos(\varphi)e_1 -
\sin(\varphi)e_3) + \brk(\del_2\varphi - \gp_1\sin(\varphi)) e_2 = 0,
\end{align}
which is equivalent to,
\begin{align}
  %\label{eq:first-bog-real-gauge}
  \del_1\varphi + \gp_2\sin(\varphi) = 0, &&
  \del_2\varphi - \gp_1\sin(\varphi) = 0.
\end{align}

In a region far from the core position, these equations can be
linearized as
\begin{align}
%\label{eq:first-bog-linearised-infty}
  \del_1\varphi + \sin(\varphi_{\infty})A_2 = 0, &&
  \del_2\varphi - \sin(\varphi_{\infty})A_1 = 0.
\end{align}

In the gauge \(A_0 = 0 \) if the fields are static we have,
\begin{align}
  j_0 + \frac{1}{\sin^2(\varphi_{\infty})}\del_0\del^{\nu}j_{\nu}  = 0.  
\end{align}

For the spatial components, note that
\begin{align}
(\gp_1, \gp_2) = \frac{1}{\sin(\varphi_{\infty})}\,
(\del_2\,\varphi, -\del_1\,\varphi).
\end{align}

Introducing a fictitious unit vector \(\vb{k} \) perpendicular to the
plane in the positive orientation of \(\reals^3 \) and defining
\(\vb{A} = (A_1, A_2) \), the spatial part of the linearized
potential can be related to the Higgs field with the vector equation,
\begin{equation}
  \label{eq:vector-potential-real-gauge}
\vb{A} = -\frac{1}{\sin(\varphi_{\infty})}\,\vb{k}\times\nabla\varphi.
\end{equation}

To make our deduction of the point-source approximation, we will 
work in space-time coordinates; to this end, in this section we denote 
space-time coordinates as $x$ and space coordinates as $\vb x$.

% Let us 
% define the coordinates $x' = \sin(\varphi_\infty)\,x$. We will refer to 
%this coordinate system for the rest of the section. 
The static field equation of $\varphi$ is,
%in $x'$ coordinates is,
%
% \begin{align*}
%   %\label{eq:asymptotic-static-massive-field-equation}
%   (\laplacian + 1)\,\varphi &= \frac{\rho}{\sin^2(\varphi_\infty)}.
% \end{align*}
\begin{align}
  %\label{eq:asymptotic-static-massive-field-equation}
  (\laplacian + \sin^2\varphi_\infty)\,\varphi &= \rho.
\end{align}

% Green's function for the static Klein-Gordon equation is
% \(K_0(\abs{\vb x'}) \), 
% %
% \begin{equation*}
% \label{eq:static-klein-gordon-green-function}
% \brk(-\laplacian + 1)\,K_0(\abs{\vb x'}) = 2\pi\delta(\vb x').
% \end{equation*}

Green's function for the static Klein-Gordon equation is
\(K_0(\abs{\vb x}) \), 
\begin{equation}
\label{eq:static-klein-gordon-green-function}
\brk(\laplacian + 1)\,K_0(\abs{\vb x}) = 2\pi\delta(\vb x).
\end{equation}

Substituting the asymptotic approximation to \(\varphi \) we found, 
\begin{align}
\brk(\laplacian + \sin^2\varphi_\infty)\varphi &=
  \frac{q_1}{2}\sin(\varphi_{\infty})\,
  \pbrk{\laplacian + \sin^2\varphi_\infty} 
  K_0(\sin(\varphi_\infty)\,r)\nonumber\\
  &= q_1\pi\,\sin^3(\varphi_\infty)\,\delta(\sin(\varphi_\infty)\,\vb 
  x)\nonumber\\
  &= q_1\pi \sin(\varphi_{\infty})\,\delta(\vb x),
\end{align}
where in the last inequality we have used that for any constant $c$, 
$\delta(c\vb x) = c^{-2}\delta(\vb x)$. 
This suggests that the physics of a static vortex, seen
far from the core is equivalent to a particle with charge
\(q_1\pi \sin(\varphi_{\infty})\), therefore we define the one vortex source term,
\begin{equation}
\label{eq:asymptotic-approx-charge-density}
  \rho = q_1\pi \sin(\varphi_{\infty})\,\delta(\vb x).
\end{equation}

% \begin{equation*}
% \label{eq:asymptotic-approx-charge-density}
%   \rho = q_1\pi \sin(\varphi_{\infty})\,\delta(\vb x) 
%       = q_1\pi \sin^3(\varphi_{\infty})\,\delta(\vb x').
% \end{equation*}

% In the $\vb x'$ coordinate system, the vector potential has to be rescalated as 
% $\vb\gp' = \sin^{-1}(\varphi_\infty)\vb\gp$. 
% By~\eqref{eq:vector-potential-real-gauge}, $\vb\gp'$ is 
% a solution of the same equation but in $\vb x'$ coordinates, 
% %
% \begin{align*}
%     \vb \gp' = - \frac{1}{\sin(\varphi_\infty)}\,\vb k \cp \grad \varphi.
% \end{align*}

% Applying the operator  \((\laplacian + 1)\) to  \(\vb\gp' \), we find,
% %
% \begin{align*}
% \brk(\laplacian + 1)\vb\gp' &=
%             -\frac{1}{\sin(\varphi_{\infty})}\,
%             \vb{k}\times \nabla
%             \brk(\laplacian + 1)\,\varphi\\
%   &= -q_1\pi \,\vb{k} \times \nabla \delta(\vb x').
% \end{align*}
Applying the operator  \((\laplacian + \sin^2\varphi_\infty)\) to  
\(\vb\gp \), we find,
\begin{align}
\brk(\laplacian + \sin^2\varphi_\infty)\vb\gp &=
            -\frac{1}{\sin(\varphi_{\infty})}\,
            \vb{k}\times \nabla
            \brk(\laplacian + \sin^2\varphi_\infty)\,\varphi\nonumber\\
  &= -q_1\pi \,\vb{k} \times \nabla \delta(\vb x).
\end{align}

% On the other hand, let us define the vector current, 
% \(\vb{j'} = \sin^{-1}(\varphi_\infty)\,(j_1, j_2) \) and let us assume that  
% \(j_0 = 0\). From~\eqref{eq:asymptotic-static-massive-field-equation}, $\vb \gp'$ also satisfies the equation,
% %
% \begin{align*}
%   (\laplacian + 1) \vb\gp' = 
%   \frac{1}{\sin^2(\varphi_\infty)}\,
%   \pbrk{
%   \vb{j'} - \nabla\brk(\nabla\cdot \vb j') 
%   }.
% \end{align*}

On the other hand, let us assume that the current is static, in the sense that
\(j_0 = 0\). From~\eqref{eq:asymptotic-static-massive-field-equation}, 
we have that $\vb \gp$ satisfies the equation,
\begin{align}
   (\laplacian + \sin^2\varphi_\infty) \vb\gp = 
   \vb{j}
   - \frac{1}{\sin^2(\varphi_\infty)}\,\nabla\brk(\nabla\cdot \vb j).
\end{align}

Thence,
%
% \begin{equation*}
% \label{eq:current-equation-distribution}
% \vb{j'} - \nabla\brk(\nabla\cdot \vb j')  =
% -q_1\pi\sin^2(\varphi_\infty)\,\vb{k}\times\nabla\delta(\vb x').
% \end{equation*}
\begin{equation}
\label{eq:current-equation-distribution}
\sin^2(\varphi_\infty)\,\vb{j} - \nabla\brk(\nabla\cdot \vb j)  =
-q_1\pi\sin^2(\varphi_\infty)\,\vb{k}\times\nabla\delta(\vb x).
\end{equation}

Taking the divergence of this equation we find that $\nabla\cdot \vb j$ is 
a solution in the sense of distributions, to the equation,
%
% \begin{equation*}
% \label{eq:current-equation-laplace}
% (\laplacian + 1)\nabla\cdot{\vb{j'}} = 0.
% \end{equation*}
\begin{equation}
\label{eq:current-equation-laplace}
(\laplacian + \sin^2\varphi_\infty)\nabla\cdot{\vb{j}} = 0.
\end{equation}

We know that $\nabla\cdot\vb j$ is also a strong solution in 
 $\plane\setminus\set{0}$. It is sensible to assume that $\nabla\cdot\vb j$ is 
 an $\mathrm{L}^2$ solution to this equation. Under this assumption, by elliptic 
 regularity $\nabla\cdot\vb j$ is smooth in the plane and since 
 $\sin^2\varphi_\infty$ is in the 
 resolvent set of geometers' Laplacian, $\nabla\cdot\vb j = 0$. Therefore, 
the current is conserved and we have that the core behaves as a magnetic 
dipole generated by a point current,
%
% \begin{align*}
%     \vb j' = -q_1\pi \sin^2(\varphi_\infty)\,\vb k \times \nabla\delta(\vb x').
% \end{align*}
\begin{align}
    \vb j = -q_1\pi \,\vb k \times \nabla\delta(\vb x).
\end{align}

We will need later space-time coordinates, we define,
\begin{equation}
\label{eq:asymp-approx-static-dipole}
  j_{static} = \brk(0, \vb j),
\end{equation}
as the space-time point current in the lab frame.

Having calculated expressions for the charge and current of the point
particle approximation, we can 
calculate the interaction potential of a pair of vortices. For this, it  
is necessary to calculate the interaction Lagrangian, which is
obtained as 
\begin{align}
\lagrangian_{int} = \int \lgrdensity_{cross}\,dx, 
\end{align}
where \(\mathcal{L}_{cross}\) are the cross terms of \(\mathcal{L}_{free} +
\mathcal{L}_{source}\) in a superposition of two pairs of 
fields \((\varphi_k, \vb\gp_k)\), with sources \((\rho_k,
\vb j_k)\). For a pair of cores, the 
interaction Lagrangian reduces to~\cite{speight_static_1997}
\begin{equation}
    \label{eq:interaction-lagrangian}
    \lagrangian_{int} = \int \rho_1 \varphi_2 
    - {j}_{\mu}^{(1)} {\gp}_{(2)}^{\mu}\, dx.
\end{equation}

We aim to calculate the 
interaction Lagrangian for any number of separated 
moving cores whose separations are large. Let us consider a core 
moving slowly in the laboratory frame and 
let \(\xi \) be the coordinates on space-time with 
respect to this frame, which has coordinates \(x \). If the vortex is
moving at constant speed \(u \) in the direction of \(x_1 \) with respect to
the lab frame, the coordinate change on tangent space at \(x \) is
\cite{carroll2004spacetime}, 
\begin{equation}
\label{eq:lorentz-boost-rest-lab}
\left. 
\begin{aligned}
\xi_0 &= \gamma(u)\,(x_0 - u\,x_1),\\
\xi_1 &= \gamma(u)\,(-u\,x_0 + x_1),\\
\xi_2 &= x_2,
\end{aligned}
\quad\right\}
\end{equation}
where \(\gamma(u) = (1 - u^2)^{-1/2} \) is the Lorentz contraction
factor and the speed is relative to the speed of light, \(\abs{u} < 1 \). 
Our aim is to write the charge and magnetic dipole
of the moving core as seen in the laboratory frame. If the 
velocity with respect to the lab frame is not along the \(x_1 \) axes,
we can always rotate the coordinates before and then 
boost in the $x_1$ direction. 
In the rest frame, the core is static, and therefore the
charge density at large separation from their neighbours is
\(\rho(\bxi) = 
q\,\pi \sin(\varphi_{\infty})\,\delta(\bxi) \). Since we
are interested in the infinitesimal behaviour of the charge, we can
take \(x_0 = 0 \) in the Lorentz transformations relating rest and 
laboratory frames,
\begin{align}
\nonumber
  \rho(\bxi) &=
q\,\pi \sin(\varphi_{\infty})\,\delta(\gamma\,x_1\,e_1
+ x_2\,e_2)\\
\label{eq:asymp-approx-delta-rescaling}
&= \frac{1}{\gamma}\,q\,\pi\,\sin(\varphi_{\infty})\,\delta(\vb x).
\end{align}

If the speed is much slower than the speed
of light, \(\gamma^{-1} \) can be approximated as 
\begin{align}
    \gamma(u)^{-1} = 1 - \frac{1}{2}\,u^2 + \order(u^4).
\end{align}

Discarding higher order terms in \(u\), the instantaneous charge density of 
a slowly moving vortex is
\begin{equation}
\label{eq:asymp-approx-relativistic-charge}
  \rho(x) = q \pi \sin(\varphi_{\infty})\pbrk{1 -
\frac{u^2}{2}}\,\delta(\vb x).
\end{equation}

If the core is at an arbitrary position $y(t)$ and $u = \dot y(t)$ is the 
speed of the moving core, we conclude the charge density as seen in the 
laboratory frame is,
\begin{equation}\label{eq:vav-boosted-charge}
\rho = q\pi\,\sin(\varphi_{\infty})\pbrk{1 -
\frac{\dot{y}^2}{2}}\,\delta(\vb{x} - \vb{y}).
\end{equation}

%Computing the four-current in the laboratory frame requires more effort.

For an observer in an inertial frame, a slowly moving core $y(t)$ 
has the four-current,
\begin{equation}
%\left.
\begin{aligned}
j_0 &= q\pi\,\vb{k}\cp\vb{\dot{y}}\cdot\nabla\delta(\vb{x} -
\vb{y}),\\
\vb{j} &= q\pi\,\pbrk{-\vb{k}\cp\nabla +
    (\vb{k}\cp \vb{\dot{y}})\,\vb{\dot{y}}\cdot\nabla +
    \vb{k}\cp\vb{\ddot{y}}}\delta (\vb{x} - \vb{y}).
\end{aligned}\label{eq:boosted-charge-current-lab}
%\qquad\right\}
\end{equation}

\eqref{eq:boosted-charge-current-lab} was computed by Speight for 
Ginzburg-Landau 
vortices, details of the computation can be found 
in~\cite[eqs.~(3.20)~(3.21)]{manton-speight-asymptotic}, for the O(3) 
Sigma model, the calculation is the same, except for the 
factor of $\pi$ coming from our conventions on the constant $q$.

Since current is conserved, the components $\gp_\mu$ of the gauge potential 
are solutions to the equation,
\begin{align}
(\Box + \sin^2(\varphi_{\infty}))\,\gp_{\mu} 
&= j_{\mu}.
\end{align}

If we define the primed coordinate system,
\begin{align}
x' &= \sin(\varphi_\infty)\,x, \label{eq:large-asympt-approx-primed-system}
\end{align}
and fields,
\begin{align}
\varphi'(x') &= \varphi(\sin(\varphi_\infty)^{-1}\,x'), &
A'_\mu(x') &= A_\mu (\sin(\varphi_\infty)^{-1}\,x'),
\end{align}
with sources,
\begin{equation}
\begin{aligned}
\rho'(x') &= \sin(\varphi_\infty)^{-2}\,\rho(\sin(\varphi_\infty)^{-1}\,x'), 
\\
j'_\mu(x') &= \sin(\varphi_\infty)^{-2}\,j_\mu (\sin(\varphi_\infty)^{-1}\,x'),
\end{aligned}
\end{equation}
then $\varphi'$, $A'_\mu$ are solutions to the equations,
\begin{equation}
\begin{aligned}
(\Box' + 1)\, \varphi' 
&= \rho',\\
(\Box' + 1)\, A'_\mu 
&= j'_\mu.
\end{aligned}\label{eq:primed-trick-kg}
\end{equation}

Since $d\vb{y}/dt = d\vb{y}'/dt'$, defining $q' = q\pi \sin(\varphi_\infty)$,  
by~\eqref{eq:vav-boosted-charge},
\begin{align}
\rho' &= q'\pbrk{1 -
    \frac{\dot{y}'^2}{2}}\,\delta(\vb{x}' - \vb{y}'),
\end{align}
whereas by~\eqref{eq:boosted-charge-current-lab},
\begin{equation}
%\left.
\begin{aligned}
j_0' &= q'\,
\vb{k}\cp\vb{\dot{y}}'\cdot\nabla'\delta(\vb{x}' -
\vb{y}'),\\
\vb{j}' &= q'\,\pbrk{-\vb{k}\cp\nabla' +
    (\vb{k}\cp \vb{\dot{y}}')\,\vb{\dot{y}}'\cdot\nabla' +
    \vb{k}\cp\vb{\ddot{y}}'}\delta (\vb{x}' - \vb{y}').
\end{aligned}\label{eq:primed-trick-jsource}
%\qquad\right\}
\end{equation}

In the primed coordinate system, 
equations~\eqref{eq:primed-trick-kg}-\eqref{eq:primed-trick-jsource} are the 
same as those found in the asymptotic approximation of Ginzburg-Landau vortices 
by Speight and Manton, with the only exception that vortices and antivortices 
carry constants $q$ of different values. Hence, 
by~\cite[Eq.~(3.46)]{speight_static_1997}, for a pair of cores at positions 
labelled $\vb{x}_1$, $\vb{x}_2$,
\begin{equation}
\begin{aligned}
\lagrangian_{int} &= - \frac{q_1'q_2'}{4\pi }\,|\dot{\vb{x}}_2' - 
\dot{\vb{x}}_1'|^2\,K_0(|\vb{x}_2' - \vb{x}_1'|)\\
&= - \frac{q_1q_2}{4}\,\pi\,\sin^2(\varphi_\infty)\,|\dot{\vb{x}}_2 - 
\dot{\vb{x}}_1|^2\,K_0(\sin(\varphi_\infty)\,|\vb{x}_2 - \vb{x}_1|).
\end{aligned}
\end{equation}

Recall \(m_r = 2 \pi(1 + \sign_r \tau) \) is the effective mass of a core at 
position \(\vb x_r \), where $\sign_r = \pm 1$ is the
sign of the core, we conclude that if the cores are at large separation and 
moving slowly, their dynamics can be approximated by the Lagrangian,
\begin{equation}
\label{eq:interaction-lagrangian-approx-full}
\lagrangian = \sum_r \frac{m_r}{2} \abs{\dot{\vb x}_r}^2 
- \sum_{r \neq s} \frac{q_rq_s}{4}\,\pi\,\sin^2(\varphi_{\infty})
\abs{\dot{\vb x}_r - \dot{\vb x}_s}^2
K_0\pbrk{\sin(\varphi_{\infty}) \abs{\vb{x}_r - \vb{x}_s}}. 
\end{equation}

For a vortex-antivortex pair at large separation, if $M = m_1 + m_2$,  
$\vb X = \frac{m_1}{M}\, \vb x_1 + \frac{m_2}{M}\, \vb x_2$ is the centre 
of mass of the pair and $\vb x_1 - \vb x_2 = 2\,\epsilon e^{i\theta}$, are
coordinates relative to the centre of mass, the Lagrangian becomes,
\begin{align}\label{eq:lag-point-source-formalism}
    \lagrangian &= \frac{M}{2}\,\abs{\vb {\dot X}}^2 
    + \pbrk{\frac{2 m_1m_2}{M} 
    - q_1\,q_2\,\pi\,\sin^2(\varphi_\infty)
    \,K_0(2\sin(\varphi_\infty)\,\epsilon)}
    ({\dot \epsilon}^2 + \epsilon^2{\dot \theta}^2)\nonumber\\
    &= \frac{M}{2}\,\abs{\vb {\dot X}}^2 +
    (1 - \tau^2)\,\pi\,\pbrk{
    2 - q_1\,q_2\,K_0(2(1 - \tau^2)^{1/2}\,\epsilon)
    } ({\dot \epsilon}^2 + \epsilon^2{\dot \theta}^2).
\end{align}

If we get rid of the centre of mass term, we find that the conformal 
factor in the reduced moduli space is again as in 
equation~\eqref{eq:conformal-factor-asymptotics-pair}.

\let  \hf          \undefined 
\let  \gp          \undefined 
\let  \fnh         \undefined
\let  \cp          \undefined
\let  \fnhh        \undefined
\let  \cf          \undefined
\let  \gmetric     \undefined
\let  \fnv         \undefined
\let  \fng         \undefined
\let  \fnV         \undefined
\let  \cmu         \undefined
\let  \fnu         \undefined
\let  \fnF         \undefined
\let  \fnth        \undefined
\let  \energyDens  \undefined
\let  \pbrk        \undefined 
\let  \mDiff       \undefined
\let  \gpd         \undefined 
\let  \bxi         \undefined
\let  \sign        \undefined

%%% Local Variables:
%%% TeX-master: "../../../geometric-models"
%%% End:

\newcommand*{\cf}{\Omega}
\newcommand*{\fnh}{h}
\newcommand*{\fnth}{\tilde h}
\newcommand*{\pbrk}[1]{\left( #1 \right)}
\newcommand*{\curvature}{K}
\newcommand*{\Energy}{\mathrm{E}}
\newcommand*{\vol}{\mathrm{Vol}}
\newcommand*{\spL}{\mathrm{L}}

\subsection{Approximating the conformal factor in a neighbourhood of
  the singularity}\label{sec:short-range-approx}

In this section we aim to derive an asymptotic approximation to the conformal 
factor for 
small $\epsilon$, we do so finding the limit of the regular part of 
$h_\epsilon/\epsilon$ as $\epsilon \to 0$, where $h_\epsilon$ is the 
solution to the Taubes equation with vortex at $\epsilon$ and antivortex 
at $-\epsilon$ and then we prove the convergence 
is uniform in disks centred at the origin. Let us consider 
$(\epsilon, \theta)$ coordinates, 
we know $h_\epsilon$ depends smoothly on 
$\epsilon$ 
and the function $\partial_\epsilon h$ is a solution of the equation,
\begin{align}
-(\Delta + V(h_\epsilon))\,\partial_\epsilon h_\epsilon = 
4\pi\partial_1\delta_\epsilon + 4\pi\partial_1\delta_{-\epsilon}.
\label{eq:partial-eps-h-eps}
\end{align}

If $\mu = \log((1 - \tau)(1 + \tau)^{-1})$ is the limit value of 
$h_\epsilon$ as $|z| \to \infty$, we know that as $\epsilon \to 0$,  the 
potential function $V(h_\epsilon)$ converges 
pointwise to $V(\mu) = 1 - \tau^2 \in (0, 1]$ and uniformly outside of 
any neighbourhood of the origin. We also know each $\partial_\epsilon 
h_\epsilon$ decays exponentially fast as $|z| \to \infty$. 
Without loss of generality we assume  $\tau = 0$ from now onwards. 
 As the fundamental 
solution of the screened Poisson equation
\begin{align}
-(\Delta + 1) G = \delta_0,
\end{align}
with convergence $G \to 0$ as $|z| \to \infty$, is 
$(2\pi)^{-1}\,K_0(|z|)$, if we denote by  $\exp(i\theta_{\pm\epsilon})$ the 
argument of $z \mp \epsilon$, the function,
\begin{equation}
\begin{aligned}
H_\epsilon &= 2\pbrk{\partial_1 K_0(|z - \epsilon|) + \partial_1 
K_0(|z + \epsilon|)},\\
&= -2\,(\cos(\theta_\epsilon)\,K_1(|z - 
\epsilon|) 
    + \,\cos(\theta_{-\epsilon})\,
    K_1(|z + \epsilon|)),
\end{aligned}
\end{equation}
is the fundamental solution of the equation,
\begin{align}
-(\Delta + 1)\,H_\epsilon = 4\pi\partial_1\delta_\epsilon + 
4\pi\partial_1\delta_{-\epsilon}.\label{eq:partial-eps-problems-fund-sol}
\end{align}

By~\eqref{eq:partial-eps-h-eps} and~\eqref{eq:partial-eps-problems-fund-sol},
\begin{align}
(\Delta + V(h_\epsilon))\,(\partial_\epsilon h_\epsilon - H_\epsilon) 
= (1 - V(h_\epsilon))\,H_\epsilon.
\end{align}

Denoting by $f * g$ convolution on the plane, 
\begin{align}
\partial_\epsilon h_\epsilon - H_\epsilon = -\pbrk{(1 - 
V(h_\epsilon))\, 
H_\epsilon} * G_\epsilon,\label{eq:potential-conv-plane}
\end{align}
where $G_\epsilon$ is Green's function of the operator $-(\Delta + 
V(h_\epsilon))$. We aim to prove $|\partial_\epsilon h_\epsilon - H_\epsilon| 
\to 0$ 
 uniformly on the plane. To do this, we will use the concept of 
a doubling measure and prove a few common properties for the family of 
potentials $V(h_\epsilon)$. 

A measure $\nu$ is called doubling if there exists a constant $C > 0$, such 
that for any $z \in \cpx$ and $r > 0$, 
\begin{align}
\nu(\disk_{2R}(z)) \leq C\,\nu(\disk_{R}(z)).\label{eq:doub-cond}
\end{align}

Suppose $D \subset \plane$ is a measurable set with 
respect to the euclidean 
metric, we define,
\begin{align}
\nu_\epsilon(D) = \int_D V(h_\epsilon)\vol.
\end{align}

Given $\epsilon_0 > 0$, we will prove the existence of a uniform 
constant $C_d$ such that~\eqref{eq:doub-cond} holds for any $\epsilon \in (0, 
\epsilon_0)$ and a uniform constant 
$\delta >0$, such that,
\begin{align}
\nu_\epsilon(\disk_1(z)) > \delta\label{eq:delta-lower-bount-nueps}
\end{align}
for any $\epsilon \in (0, \epsilon_0)$, then, by a result of 
Christ~\cite[Thm.~1.13]{christ1991}
 there are a function 
$\varrho: \plane \to \reals^+$, a distance function $\rho: \plane \times \plane 
\to \reals$ induced by a Riemannian metric $d\rho^2$, and constants $C$, 
$\gamma$ all of them depending only on $C_d$, such that,
\begin{align}
|G_\epsilon(z_1, z_2)| \leq C\,\begin{cases}
\log (2\varrho(z_1)/|z_1 - z_2|), & |z_1 - z_2| \leq \varrho(z),\\
\exp(-\gamma \rho(z_1, z_2)), & |z_1 - z_2| \geq \varrho(z),
\end{cases}\label{eq:christs-bounds}
\end{align}
and $\rho(z_1, z_2) \geq c\,|z_1 - z_2|$ for some constant $c$ depending on 
$\delta$ but not on $C_d$. 

If $\tilde h_\epsilon = h_\epsilon + \log \abs{z - 
\epsilon} - \log \abs {z + 
\epsilon}$, we know that for any $\epsilon_0 > 0$,  there are constants $C_1$, 
$C_2$, such that for any 
 $z \in \cpx$ and $\epsilon \in (0, \epsilon_0)$,
\begin{align}
C_1 \leq e^{\tilde h_\epsilon} \leq C_2,
\end{align}
hence,
\begin{align}
\frac{4 C_1 |z - \epsilon|^2 |z + \epsilon|^2}{(C_1 |z - \epsilon|^2 + |z + 
\epsilon|^2)} \leq V(h_\epsilon) \leq 
\frac{4 C_2 |z - \epsilon|^2 |z + \epsilon|^2}{(C_2 |z - \epsilon|^2 + |z + 
\epsilon|^2)^2}.
\end{align}

Hence, there is a constant $C >0$ independent of $\epsilon$ such that,
\begin{align}
\frac{1}{C}\,\frac{4 |z-\epsilon|^2 |z+ \epsilon|^2}{(|z - \epsilon|^2 + |z + 
\epsilon|^2)^2} \leq V(h_\epsilon) \leq
C\,\frac{4 |z-\epsilon|^2 |z+ \epsilon|^2}{(|z - \epsilon|^2 + |z + 
\epsilon|^2)^2},
\end{align}
this implies the potential $V(h_\epsilon)$ induces a doubling measure if and 
only if
\begin{align}
\tilde V_\epsilon = \frac{4 |z-\epsilon|^2 |z+ \epsilon|^2  }{
(|z - \epsilon|^2 + |z + \epsilon|^2)^{2}},
\end{align}
does. 

\begin{lemma}\label{lem:doub-pot-geom}
    Let $M \in (0, 1)$, then $\tilde V_\epsilon^{-1}([0, M])$ consists of 
    two connected components, whose boundaries are the circles centred at 
    $\pm (1 - M)^{-1/2}\epsilon$ of radii $M^{1/2}(1 - M)^{-1/2}\epsilon$.
\end{lemma}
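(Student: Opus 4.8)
The plan is to solve the inequality $\tilde V_\epsilon(z) \leq M$ explicitly. First I would write $z = x + iy$ and set $r_\pm^2 = |z \mp \epsilon|^2 = (x\mp\epsilon)^2 + y^2$, so that $\tilde V_\epsilon = 4 r_+^2 r_-^2 / (r_+^2 + r_-^2)^2$. The sublevel condition $\tilde V_\epsilon \leq M$ is then equivalent, after clearing the (positive) denominator, to $4 r_+^2 r_-^2 \leq M (r_+^2 + r_-^2)^2$, i.e. $M(r_+^2+r_-^2)^2 - 4 r_+^2 r_-^2 \geq 0$. Viewing the left side as a quadratic in the ratio $t = r_+^2/r_-^2 > 0$, it factors as $M r_-^4\,(t^2 + (2 - 4/M)\,t + 1)$, whose roots are $t_{\pm} = \frac{2-M}{M} \pm \frac{2\sqrt{1-M}}{M}$ (here $0<M<1$ guarantees the discriminant $16(1-M)/M^2$ is positive and $t_\pm>0$, with $t_+ t_- = 1$). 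Hence $\tilde V_\epsilon \leq M$ holds precisely when $r_+^2/r_-^2 \leq t_-$ or $r_+^2/r_-^2 \geq t_+ = 1/t_-$, i.e. when $r_+^2 \leq t_- r_-^2$ or $r_-^2 \leq t_- r_+^2$; these two (symmetric) conditions give the two connected components.

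Next I would identify each component as an (open) disk by the classical Apollonius‑circle computation: the locus $|z-\epsilon|^2 = \lambda |z+\epsilon|^2$ for a constant $\lambda\in(0,1)$ is, upon expanding and completing the square, the circle
\begin{align}
\Bigl| z - \frac{1+\lambda}{1-\lambda}\,\epsilon \Bigr|^2 = \frac{4\lambda}{(1-\lambda)^2}\,\epsilon^2,
\end{align}
centred at $\frac{1+\lambda}{1-\lambda}\epsilon$ with radius $\frac{2\sqrt{\lambda}}{1-\lambda}\,\epsilon$, and the inequality $|z-\epsilon|^2 \leq \lambda|z+\epsilon|^2$ is the closed disk bounded by it (it contains $\epsilon$, the nearer core). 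Now I substitute $\lambda = t_- = \frac{2-M-2\sqrt{1-M}}{M}$. A short simplification using $2 - M - 2\sqrt{1-M} = (1-\sqrt{1-M})^2$ and $M = (1-\sqrt{1-M})(1+\sqrt{1-M})$ gives $\lambda = \frac{1-\sqrt{1-M}}{1+\sqrt{1-M}}$; then $1-\lambda = \frac{2\sqrt{1-M}}{1+\sqrt{1-M}}$, so
\begin{align}
\frac{1+\lambda}{1-\lambda} = \frac{1}{\sqrt{1-M}}, \qquad \frac{2\sqrt{\lambda}}{1-\lambda} = \frac{1}{\sqrt{1-M}}\cdot\frac{2\sqrt{\lambda}}{1+\lambda}.
\end{align}
Computing $\frac{2\sqrt{\lambda}}{1+\lambda}$ from $\lambda = \frac{1-\sqrt{1-M}}{1+\sqrt{1-M}}$ yields $\sqrt{M}$, so the centre is $\frac{\epsilon}{\sqrt{1-M}} = (1-M)^{-1/2}\epsilon$ and the radius is $\frac{\sqrt{M}\,\epsilon}{\sqrt{1-M}} = \sqrt{M}(1-M)^{-1/2}\epsilon$, exactly as claimed. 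The second condition $r_-^2 \leq t_- r_+^2$ is the mirror image $z \mapsto -z$, giving the disk centred at $-(1-M)^{-1/2}\epsilon$ with the same radius. Finally I would check these two disks are disjoint — the distance between centres is $2(1-M)^{-1/2}\epsilon$, which strictly exceeds the sum of radii $2\sqrt{M}(1-M)^{-1/2}\epsilon$ precisely because $M<1$ — so $\tilde V_\epsilon^{-1}([0,M])$ has exactly two connected components, each a closed disk of the stated centre and radius.

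The only mildly delicate point is the algebraic simplification of $\lambda$ and the resulting centre/radius — it is purely routine but must be done carefully to land on the clean expressions; everything else (clearing denominators, the quadratic-in-$t$ factorization, the Apollonius identity, disjointness) is straightforward. I expect no genuine obstacle here; it is a finite explicit computation.
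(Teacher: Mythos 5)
Your proposal is correct and follows essentially the same route as the paper: both reduce the condition $\tilde V_\epsilon \leq M$ to a quadratic in the squared modulus of the ratio $(z-\epsilon)/(z+\epsilon)$ and identify the resulting level sets as Apollonius circles with the stated centres $\pm(1-M)^{-1/2}\epsilon$ and radii $M^{1/2}(1-M)^{-1/2}\epsilon$. The only cosmetic difference is that the paper phrases this via the M\"obius map $w=(z-\epsilon)/(z+\epsilon)$ (using that it sends circles to circles and the cores to $0$ and $\infty$) to identify the two components, whereas you verify disjointness of the two disks directly; both are fine.
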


\begin{proof}
    Let $w = (z - \epsilon)(z + \epsilon)^{-1}$, if $\tilde V_\epsilon(z) = M$, 
    then, 
    \begin{align}
    \frac{4 |w|^2}{(|w|^2 + 1)^2} = M,
    \end{align}
    this equality implies,
    \begin{align}
    |w|^2 - \frac{2}{M^{1/2}} |w| + 1 = 0.
    \end{align}
    
    The roots of this equation are,
    \begin{align}
    r_\pm = \frac{1}{M^{1/2}} (1 \pm (1 - M)^{1/2}).\label{eq:lm-radii-pm-1}
    \end{align}
    
    If $z = x + y\,i$, for each root, the equation 
    \begin{align}
    \left\lvert \frac{z - \epsilon}{z + \epsilon}\right\rvert = r_\pm, 
    \end{align}
    determines the circles
    \begin{align}
    |z|^2 - 2\epsilon \pbrk{\frac{1 + r_\pm^2}{1 - r_\pm^2}} x
    + \epsilon^2 = 0,
    \end{align}
    of centres
    \begin{align}
    c_\pm = \epsilon \pbrk{\frac{1 + r_\pm^2}{1 - 
    r_\pm^2}}
= \frac{\mp \epsilon}{(1 - M)^{1/2}}\label{eq:lm-centres-pm-1}
    \end{align}
    and squared radii 
    \begin{align}
    R_\pm^2 &= |c_\pm|^2 - \epsilon^2\nonumber\\
    &= \frac{\epsilon^2}{1 - M} - \epsilon^2\nonumber\\
    &= \frac{M \epsilon^2}{1 - M}.\label{eq:lm-radii-pm-2}
    \end{align} 
    
    Mobius transformations map circles onto circles and $z = \epsilon$ is 
    mapped to $w = 0$, while $z = -\epsilon$ is mapped to $w = \infty$, where 
    for both points $4 |w|^2 (|w|^2 + 1)^{-2} = 0$, hence the disks 
    $|z - c_\pm| \leq r_\pm$ are the connected components of $\tilde 
    V_\epsilon([0, M])$.
\end{proof}

\begin{lemma}\label{lem:v1-doub-measure}
    $\tilde V_1$ defines a doubling measure.
\end{lemma}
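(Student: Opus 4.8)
The plan is to compare $\tilde V_1$ with an explicit model weight whose doubling property is classical, and then transfer.

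First I would simplify $\tilde V_1$ using the parallelogram identity $\abs{z-1}^2+\abs{z+1}^2=2(1+\abs z^2)$, which gives
\begin{align}
  \tilde V_1(z)=\frac{4\abs{z-1}^2\abs{z+1}^2}{\brk(\abs{z-1}^2+\abs{z+1}^2)^2}
   =\frac{\abs{z-1}^2\abs{z+1}^2}{(1+\abs z^2)^2}
   =\frac{\abs{z^2-1}^2}{(1+\abs z^2)^2}.
\end{align}
In particular $0\le\tilde V_1\le 1$, $\tilde V_1$ vanishes exactly at $z=\pm1$, and $\tilde V_1(z)\to1$ as $\abs z\to\infty$, so $\nu_1(D)=\int_D\tilde V_1\,\vol$ is finite on every bounded set.

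The key step is the two-sided bound: there is a universal constant $C_0$ with
\begin{align}
  C_0^{-1}\,\omega(z)\le\tilde V_1(z)\le C_0\,\omega(z),\qquad
  \omega(z):=\min\brk(1,\operatorname{dist}(z,\{1,-1\}))^2 .
\end{align}
This is elementary: one checks $\dfrac{\abs{z\mp1}^2}{1+\abs z^2}$ is comparable to $\min(1,\abs{z\mp1})^2$ with universal constants (if $\abs{z\mp1}\le1$ then $\abs z\le2$ so the denominator lies in $[1,5]$; for all $z$ one has $\abs{z\mp1}^2\le2(1+\abs z^2)$, and if $\abs{z\mp1}\ge1$ then also $1+\abs z^2\le5\abs{z\mp1}^2$), and then multiplies the two factors, using that at most one of $\abs{z-1},\abs{z+1}$ can be smaller than $1$ (their sum is $\ge2$), so that $\min(1,\abs{z-1})^2\min(1,\abs{z+1})^2=\omega(z)$. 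Since the doubling property is preserved under pointwise comparison of densities, it then suffices to prove that $d\mu:=\omega\,\vol$ is doubling.

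For that I would invoke the standard fact that on $\reals^2$ the weight $\operatorname{dist}(z,F)^{\alpha}$ is a doubling (indeed $A_\infty$) weight for every finite set $F$ and every $\alpha>-2$, here with $F=\{1,-1\}$ and $\alpha=2$; truncating at distance $1$ and matching to the value $1$ far from $F$ is harmless. Concretely one splits $\disk_R(z_0)$ into the case $\operatorname{dist}(z_0,\{1,-1\})\ge4R$, where $\omega$ varies by a bounded factor over $\disk_{2R}(z_0)$, so $\mu(\disk_{2R}(z_0))$ and $\mu(\disk_R(z_0))$ are both comparable to $\omega(z_0)R^2$, and the case $\operatorname{dist}(z_0,\{1,-1\})<4R$, where $\disk_{2R}(z_0)$ lies inside a fixed dilate of a disk centred at the nearer zero $p\in\{1,-1\}$ and the doubling inequality reduces to monotonicity of the radial integral $\int_{\disk_r(p)}\min(1,\abs{z-p})^2\,\vol$ in $r$. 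Combining the two cases gives a universal doubling constant for $\mu$, hence for $\nu_1$; and since $\tilde V_\epsilon(z)=\tilde V_1(z/\epsilon)$, the same constant then works for every $\epsilon>0$. I expect the only genuine obstacle to be organising this gluing so that the doubling constant comes out truly independent of $z_0$ and $R$ (the truncation at distance $1$ forcing the short case split above); the comparison $\tilde V_1\asymp\omega$ and the doubling of power weights are routine.
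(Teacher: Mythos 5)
Your proof is correct in substance but follows a genuinely different route from the paper's. The paper argues by contradiction and compactness: it extracts a subsequence $(z_n,r_n)$ violating doubling, passes to a limit in $\overline{\cpx}\times[0,\infty]$, and disposes of each limiting regime separately (dominated convergence for finite radii, the sublevel-set description of lemma~\ref{lem:doub-pot-geom} for large radii, the mean value theorem away from $\pm1$, and a cited result that nonnegative polynomials such as $\abs{z-1}^2$ are doubling weights for the critical case $z_n\to\pm1$, $r_n\to0$). You instead prove a quantitative two-sided comparison $\tilde V_1\asymp\min\brk(1,\mathrm{dist}(z,\set{1,-1}))^2$ — your simplification $\tilde V_1=\abs{z^2-1}^2(1+\abs z^2)^{-2}$ via the parallelogram law and the explicit constants are all correct — and then verify doubling for the model weight directly. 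Your approach buys an explicit, universal doubling constant, avoids the external citation, and makes the $\epsilon$-uniformity of the corollary immediate from the exact scaling $\tilde V_\epsilon(z)=\tilde V_1(z/\epsilon)$; the paper's compactness argument is less quantitative but leans on machinery it has already set up. One step of yours is under-justified as written: in the near case $\mathrm{dist}(z_0,\set{1,-1})<4R$, "monotonicity of the radial integral" only yields the upper bound $\mu(\disk_{2R}(z_0))\le\mu(\disk_{6R}(p))$; the substance is the matching lower bound $\mu(\disk_R(z_0))\gtrsim\min(1,R)^2R^2$, which needs the observation that $\disk_R(z_0)$ contains a disk of radius $R/4$ all of whose points lie at distance at least $R/2$ from the nearer zero $p$ (and, for $R$ small, at distance at least $1$ from the other zero). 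You flag this gluing issue yourself, and the missing estimate is elementary, so I would not call it a gap in the argument, only in the write-up.
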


\begin{proof}
    Assume otherwise towards a contradiction, then there exists a sequence 
  $\set{(z_n, r_n)}$ such that,
  \begin{align}
  \frac{\int_{\disk_{2r_n}(z_n)} \tilde V_{1}\; |dz|^2}
  {\int_{\disk_{r_n}(z_n)} \tilde V_{1}\; |dz|^2} \to 
  \infty\label{eq:lem-doub-pot-1}
  \end{align}   
  
  After passing to a subsequence if necessary, we can assume 
  $(z_n, r_n) \to (z_*, r_*) \in \overline \cpx \times 
  [0, 
  \infty]$, where $z_*$ could be the point at infinity, 
  meaning 
  $|z_n| \to \infty$. Through the proof we will consider a fixed but arbitrary 
  constant $M \in (0, 1)$. If $z_* \in \cpx$ 
  we consider four cases: 
  
  Case I. If $0 < r_* < \infty$, by the dominated 
  convergence theorem,
  \begin{align}
  \frac{\int_{\disk_{2r_n}(z_n)} \tilde V_{1}\; |dz|^2}
  {\int_{\disk_{r_n}(z_n)} \tilde V_{1}\; |dz|^2} \to 
  \frac{\int_{\disk_{2r_*}(z_*)} \tilde V_{1}\; |dz|^2}
  {\int_{\disk_{r_*}(z_*)} \tilde V_{1}\; |dz|^2}, 
  \end{align}   
  hence~\eqref{eq:lem-doub-pot-1} is not possible.
  
  Case II. If $r_* = \infty$, by lemma~\ref{lem:doub-pot-geom}, there is an 
  $R>0$ such 
  that $\tilde V_1(z) \geq M$ for $|z| \geq R$. Let $\Omega_n = 
  \disk_{r_n}(z_n) \setminus \disk_R(0)$. For $n$ sufficiently large $\Omega_n 
  \neq \emptyset$, moreover,
  \begin{align}
  \frac{\int_{\disk_{2r_n}(z_n)} \tilde V_{1}\; |dz|^2}
  {\int_{\disk_{r_n}(z_n)} \tilde V_{1}\; |dz|^2}
  \leq \frac{4\pi r_n^2}{M |\Omega_n|}
  = \frac{4\pi r_n^2}{M (\pi r_n^2 - |\disk_{r_n}(z_n) \cap \disk_R(0)|)}
  \to \frac{4}{M}.
  \end{align}
  
  Case III. If $r_* = 0$ and $z_* \neq \pm 1$, by the mean value theorem for 
  integrals, 
  \begin{align}
  \frac{\int_{\disk_{2r_n}(z_n)} \tilde V_{1}\; |dz|^2}
  {\int_{\disk_{r_n}(z_n)} \tilde V_{1}\; |dz|^2}
  &= 
  4 \, \frac{\frac{1}{4\pi r_n^2}\int_{\disk_{2r_n}(z_n)} \tilde V_{1}\; |dz|^2}
  { \frac{1}{\pi r_n^2}\int_{\disk_{r_n}(z_n)} \tilde V_{1}\; |dz|^2}
  \to 
  4,
  \end{align}
  since each averaged integral converges to $\tilde V_1(z_*) \neq 0$.
  
  Case IV. If $z_* = \pm 1$, assume without loss of generality 
  $z_* = 1$, let $R \in (0, 1/2)$ be any constant, for $n$ large enough, 
  the disk $\disk_{2r_n}(z_n)$ is contained in $\disk_R(1)$, then there is a 
  constant $C(R)$, such that for any $z \in \disk_R(1)$,
  \begin{align}
  \frac{1}{C}\,|z - 1|^2 \leq \tilde V_1(z) \leq C\,|z - 1|^2,
  \end{align}    
  the function $|z - 1|^2$ defines a doubling measure because it is a 
  non negative polynomial~\cite{shen1995}, implying for large $n$ the quotient 
  \begin{align}
\frac{\int_{\disk_{2r_n}(z_n)} \tilde V_{1}\; |dz|^2}
{\int_{\disk_{r_n}(z_n)} \tilde V_{1}\; |dz|^2}
\end{align}
is bounded. 

Therefore, if~\eqref{eq:lem-doub-pot-1} holds, $|z_n| \to \infty$. 
If $r_n \to r_0$ 
for $r_0 \in [0, \infty)$, for $n$ large the disk $\disk_{2r_n}(z_n)$ is in 
the exterior of the disk $\disk_{R}(0)$, hence $\tilde V_1 \in [M, 1]$, and 
 \begin{align}
\frac{\int_{\disk_{2r_n}(z_n)} \tilde V_{1}\; |dz|^2}
{\int_{\disk_{r_n}(z_n)} \tilde V_{1}\; |dz|^2}
&\leq 
\frac{4}{M}.
\end{align}

Finally, if $r_n \to \infty$, we can apply the same argument as in Case II to 
deduce that~\eqref{eq:lem-doub-pot-1} is not possible. This concludes all the 
possibilities for the sequence and proves the lemma. 
\end{proof}

If we define the change of variable $z = \epsilon w$, by 
lemma~\ref{lem:v1-doub-measure} we have,
\begin{align}
\frac{\int_{\disk_{2r}(z)} \tilde V_{\epsilon}\; |dz|^2}
{\int_{\disk_{r}(z)} \tilde V_{\epsilon}\; |dz|^2}
&= 
\frac{\int_{\disk_{2r/\epsilon}(z/\epsilon)} \tilde V_{1}\; |dw|^2}
{\int_{\disk_{r/\epsilon}(z/\epsilon)} \tilde V_{1}\; |dw|^2} < C,
\end{align}
where $C$ is independent of $\epsilon$, proving the following corollary.

\begin{corollary}
    For any $\epsilon_0 > 0$, there is a constant $C_d$ such 
    that~\eqref{eq:doub-cond} holds for any $\epsilon \in (0, \epsilon_0)$.     
\end{corollary}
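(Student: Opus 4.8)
The plan is to obtain the uniform doubling constant by combining Lemma~\ref{lem:v1-doub-measure} with the exact scaling of the model potentials and the two‑sided comparison already in hand. Recall that $V(h_\epsilon)$ induces a doubling measure precisely when $\tilde V_\epsilon$ does, and that the bound $C_1 \le e^{\tilde h_\epsilon} \le C_2$, uniform for $\epsilon \in (0,\epsilon_0)$, yields a constant $C>0$ with $C^{-1}\tilde V_\epsilon \le V(h_\epsilon) \le C\,\tilde V_\epsilon$ for all such $\epsilon$. Since multiplying the density by a function bounded above and below by positive constants only multiplies the ratio of ball‑integrals by $C^2$, it suffices to show that the family $\{\tilde V_\epsilon\}_{\epsilon\in(0,\epsilon_0)}$ is doubling with one common constant.

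For that I would use the homogeneity $\tilde V_\epsilon(\epsilon w) = \tilde V_1(w)$, which is immediate from the defining formula for $\tilde V_\epsilon$. The dilation $w \mapsto \epsilon w$ carries $\disk_\rho(z/\epsilon)$ onto $\disk_{\epsilon\rho}(z)$ and scales area by $\epsilon^2$, so for every $z\in\cpx$ and $r>0$ the change of variable $\zeta = \epsilon w$ gives
\[
\frac{\int_{\disk_{2r}(z)}\tilde V_\epsilon\,|d\zeta|^2}{\int_{\disk_{r}(z)}\tilde V_\epsilon\,|d\zeta|^2}
= \frac{\int_{\disk_{2r/\epsilon}(z/\epsilon)}\tilde V_1\,|dw|^2}{\int_{\disk_{r/\epsilon}(z/\epsilon)}\tilde V_1\,|dw|^2}
\le C_{\tilde V_1},
\]
where $C_{\tilde V_1}$ is the doubling constant furnished by Lemma~\ref{lem:v1-doub-measure}. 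The essential point is that this constant is global — valid for all centres and radii — so the right‑hand side is bounded independently of $\epsilon$, $z$ and $r$; hence each $\tilde V_\epsilon$ is doubling with the same constant $C_{\tilde V_1}$.

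Finally I would reassemble: for any measurable $D$, $\nu_\epsilon(D) = \int_D V(h_\epsilon)\,|dz|^2$ lies between $C^{-1}\int_D \tilde V_\epsilon\,|dz|^2$ and $C\int_D \tilde V_\epsilon\,|dz|^2$, whence
\[
\nu_\epsilon(\disk_{2R}(z)) \le C\!\int_{\disk_{2R}(z)}\!\!\tilde V_\epsilon\,|dz|^2 \le C\,C_{\tilde V_1}\!\int_{\disk_{R}(z)}\!\!\tilde V_\epsilon\,|dz|^2 \le C^2 C_{\tilde V_1}\,\nu_\epsilon(\disk_{R}(z)),
\]
which is exactly~\eqref{eq:doub-cond} with $C_d = C^2 C_{\tilde V_1}$, uniformly in $\epsilon\in(0,\epsilon_0)$. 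I do not expect a genuine obstacle here: the substantive work — the four‑case analysis showing $\tilde V_1$ is doubling, and the uniform bounds on $e^{\tilde h_\epsilon}$ — has already been carried out, and the corollary is merely the packaging of scale invariance together with the comparison estimate. The only thing to watch is the bookkeeping, namely that a doubling constant is by definition independent of the ball, so that the $\epsilon$‑dependence genuinely drops out after the rescaling.
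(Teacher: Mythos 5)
Your proof is correct and follows essentially the same route as the paper: the identity $\tilde V_\epsilon(\epsilon w)=\tilde V_1(w)$ plus the change of variables $z=\epsilon w$ reduces the doubling ratio for $\tilde V_\epsilon$ to that of $\tilde V_1$, whose global doubling constant comes from Lemma~\ref{lem:v1-doub-measure}. Your explicit transfer back to $V(h_\epsilon)$ via the uniform two-sided comparison (costing a factor $C^2$) is the same step the paper leaves implicit in its earlier remark that $V(h_\epsilon)$ is doubling iff $\tilde V_\epsilon$ is.
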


\begin{lemma}
    For any $\epsilon_0 > 0$, there is a constant $\delta >0$ such that 
    \begin{align}
    \int_{\disk_1(z)}\tilde V_\epsilon > \delta,
    \end{align}
    for all $z \in \cpx$, $\epsilon \in (0, \epsilon_0)$.
\end{lemma}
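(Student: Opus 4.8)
The plan is to argue by contradiction, in the spirit of the proof of Lemma~\ref{lem:v1-doub-measure}. Suppose no such $\delta$ exists; then there is a sequence $\set{(z_n, \epsilon_n)}$ with $\epsilon_n \in (0, \epsilon_0)$ and $\int_{\disk_1(z_n)} \tilde V_{\epsilon_n}\,\abs{dz}^2 \to 0$. After passing to a subsequence we may assume $\epsilon_n \to \epsilon_* \in [0, \epsilon_0]$ and $z_n \to z_*$ in the one-point compactification $\overline\cpx$, where $z_* = \infty$ means $\abs{z_n} \to \infty$. I would then dispose of the possibilities one at a time, using throughout the elementary bound $\tilde V_\epsilon = 4ab\,(a+b)^{-2} \leq 1$ with $a = \abs{w - \epsilon}^2$, $b = \abs{w + \epsilon}^2$, and the fact that for $\epsilon > 0$ the function $\tilde V_\epsilon$ is continuous and vanishes only at the two points $\pm\epsilon$.

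First I would treat the case $\epsilon_* > 0$ with $z_*$ finite. Here $\tilde V_{\epsilon_n}\mathbf{1}_{\disk_1(z_n)} \to \tilde V_{\epsilon_*}\mathbf{1}_{\disk_1(z_*)}$ pointwise almost everywhere (the limit circle and the zeros $\pm\epsilon_*$ being null sets), and for large $n$ it is dominated by $\mathbf{1}_{\disk_2(z_*)}$; so by dominated convergence the integrals converge to $\int_{\disk_1(z_*)} \tilde V_{\epsilon_*}\,\abs{dz}^2$, which is strictly positive since $\tilde V_{\epsilon_*}$ is positive off a two-point set --- a contradiction. Next, the case $\epsilon_* > 0$ with $z_* = \infty$: if $\abs{z_n} \to \infty$ then for $w \in \disk_1(z_n)$ one has $\abs{w \pm \epsilon_n} \geq \abs{z_n} - 1 - \epsilon_0 \to \infty$, so $a/b \to 1$ uniformly on $\disk_1(z_n)$, hence $\tilde V_{\epsilon_n} \to 1$ uniformly there and $\int_{\disk_1(z_n)} \tilde V_{\epsilon_n}\,\abs{dz}^2 \to \pi$ --- again a contradiction.

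The remaining and only genuinely delicate case is $\epsilon_* = 0$, and for this I would invoke Lemma~\ref{lem:doub-pot-geom}. Fix any $M \in (0,1)$; by that lemma the sublevel set $\set{\tilde V_{\epsilon_n} < M}$ is a union of two disks of radius $M^{1/2}(1-M)^{-1/2}\epsilon_n$, so it has total area $\frac{2\pi M}{1-M}\epsilon_n^2 \to 0$. Since $\abs{\disk_1(z_n)} = \pi$ and $\tilde V_{\epsilon_n} \leq 1$,
\[
  \int_{\disk_1(z_n)} \tilde V_{\epsilon_n}\,\abs{dz}^2 \;\geq\; M\,\abs*{\disk_1(z_n) \cap \set{\tilde V_{\epsilon_n} \geq M}} \;\geq\; M\brk(\pi - \tfrac{2\pi M}{1-M}\,\epsilon_n^2),
\]
which exceeds $M\pi/2$ for $n$ large --- a contradiction. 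This exhausts all cases, so a uniform $\delta$ exists; combined with the earlier pointwise comparison $V(h_\epsilon) \geq C^{-1}\tilde V_\epsilon$ it also yields the lower bound $\nu_\epsilon(\disk_1(z)) > \delta/C$ required in~\eqref{eq:delta-lower-bount-nueps}. The main obstacle is handling simultaneously the two degenerations --- spatial escape $\abs{z_n} \to \infty$ and scale collapse $\epsilon_n \to 0$ --- but the former is tamed by $\tilde V_\epsilon \leq 1$ together with $\tilde V_\epsilon \to 1$ at infinity, and the latter by the explicit geometry of the sublevel sets supplied by Lemma~\ref{lem:doub-pot-geom}.
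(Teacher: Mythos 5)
Your proof is correct, but it takes a longer route than the paper's. The paper's proof is a direct two-line estimate: choose $M\in(0,1)$ once and for all so that $r_0 = M^{1/2}(1-M)^{-1/2}\epsilon_0$ satisfies $2r_0^2<1$; then by Lemma~\ref{lem:doub-pot-geom} the sublevel set $\set{\tilde V_\epsilon < M}$ is a union of two disks of radius at most $r_0$ for every $\epsilon\in(0,\epsilon_0)$, so every unit disk $\disk_1(z)$ meets $\set{\tilde V_\epsilon \geq M}$ in area at least $\pi(1-2r_0^2)>0$, giving the uniform bound $\delta = M\pi(1-2r_0^2)$ with no case analysis. Your Case~3 contains exactly this observation, but you only deploy it along a sequence with $\epsilon_n\to 0$ inside a contradiction-and-compactness scaffold; the point you miss is that the area of the bad set, $2\pi M(1-M)^{-1}\epsilon^2$, is already uniformly small over the whole range $\epsilon\in(0,\epsilon_0)$ once $M$ is chosen appropriately, which makes your Cases~1 and~2 (dominated convergence for $\epsilon_*>0$, and $\tilde V_\epsilon\to 1$ at infinity) unnecessary. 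Those two cases are themselves sound --- the domination by $\mathbf{1}_{\disk_2(z_*)}$ and the uniform estimate $\abs{\,\abs{w-\epsilon}-\abs{w+\epsilon}\,}\leq 2\epsilon_0$ both check out --- so what your approach buys is robustness (it would survive even if the sublevel sets were not explicitly two disks, as long as their measure vanished with $\epsilon$), while the paper's buys brevity and an explicit value of $\delta$. Your closing remark that the bound transfers to $\nu_\epsilon$ via $V(h_\epsilon)\geq C^{-1}\tilde V_\epsilon$ is also correct and matches how the paper uses the lemma.
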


\begin{proof}
    Pick $M \in (0, 1)$ such that $r_0 = M^{1/2}(1 - M)^{-1/2}\epsilon_0$ 
    satisfies $2 r_0^2 < 1$. By lemma~\ref{lem:doub-pot-geom} there are two 
    disks $D_1$, $D_2$ of radius $r < r_0$ such that in the exterior of the 
    disks $\tilde V_\epsilon \geq M$. The complement $\Omega = 
    \disk_1(z)\setminus (D_1\cup D_2)$ is non empty for any $z \in 
    \cpx$ and it has bounded area,
    \begin{align}
    |\Omega| \geq \pi - 2\,|D_1| \geq \pi\,(1 - 2 r_0^2),
    \end{align}
    hence,
    \begin{align}
    \int_{\disk_1(z)} \tilde V_\epsilon\, |dz|^2
    &\geq \int_\Omega \tilde V_\epsilon\, |dz|^2 \nonumber\\
    &\geq M\,\pi\,(1 - 2r_0^2).
    \end{align}
    
    Selecting any $\delta < M\,\pi\,(1 - 2r_0^2)$ proves the lemma.
\end{proof}

Therefore, for any $p \geq 2$, there is a constant $C > 0$ such that, 
\begin{align}
|| G_\epsilon ||_{\spL^p} < C, \qquad \forall \epsilon \in (0, \epsilon_0),
\end{align}
let us choose any $p > 2$ and let $p_* = p\, (p - 1)^{-1}$ be H\"older's 
conjugate 
of $p$, by \eqref{eq:potential-conv-plane} and H\"older's inequality,
\begin{align}
|\partial_\epsilon h_\epsilon - H_\epsilon| &\leq || (1 - V(h_\epsilon)) 
H_\epsilon||_{\spL^{p_*}}\,|| G_\epsilon ||_{\spL^p}\nonumber \\
&\leq C\,(
|| (1 - V(h_\epsilon)) K_1(|z - \epsilon|)||_{\spL^{p_*}}\nonumber\\
&\quad + || (1 - V(h_\epsilon)) K_1(|z + 
\epsilon|)||_{\spL^{p_*}}).\label{eq:partial-eps-h-eps-H-eps-bound}
\end{align}

\begin{lemma}
    \begin{align}
    \lim_{\epsilon \to 0} || (1 - V(h_\epsilon))\,K_1(|z - 
    \epsilon|)||_{\spL_{p_*}} = 0,
    \end{align}
    and a similar statement holds for $K_1(|z + \epsilon|)$.
\end{lemma}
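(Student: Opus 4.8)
The plan is to split the plane into a small disk about the origin, where the possibly large factor $1 - V(h_\epsilon)$ is absorbed by the local $\spL^{p_*}$-integrability of the Bessel singularity, and its complement, where $1 - V(h_\epsilon)$ is uniformly small. Two elementary facts set this up. First, since $V(t) = 4e^t/(1+e^t)^2 \le 1$ for all $t$, one has $0 \le 1 - V(h_\epsilon) \le 1$ pointwise. Second, by translation invariance of Lebesgue measure $\norm{K_1(\abs{z - \epsilon})}_{\spL^{p_*}(\plane)} = \norm{K_1(\abs{z})}_{\spL^{p_*}(\plane)} =: C_0$, and $C_0 < \infty$ because $K_1(r) \sim r^{-1}$ as $r \to 0$ while $p_* = p\,(p-1)^{-1} < 2$ (here $p > 2$), and $K_1$ decays exponentially at infinity; thus $K_1(\abs{\cdot - \epsilon})$ has $\spL^{p_*}$ norm bounded independently of $\epsilon$.

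Next I would record the uniform smallness of $1 - V(h_\epsilon)$ away from the origin. With $\tau = 0$ we have $\mu = 0$ and $V(0) = 1$, so it suffices to show $h_\epsilon \to 0$ uniformly on $\plane \setminus \disk_R(0)$ for each fixed $R > 0$. This follows by writing $h_\epsilon = \tilde h_\epsilon + v_\epsilon$: Proposition~\ref{prop:vav-euc-fnth-unif-conv} gives $\tilde h_\epsilon \to 0$ uniformly on $\plane$, and the mean value estimates for $v_\epsilon$ on $\plane\setminus\disk_R(0)$ carried out in the proof of that proposition (or directly from Lemma~\ref{lem:vav-euc-u-du-estimates} and Corollary~\ref{lem:vav-euc-ueps-bound} via $u_\epsilon < h_\epsilon < -u_{-\epsilon}$) show $v_\epsilon \to 0$ uniformly there. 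Consequently $\norm{1 - V(h_\epsilon)}_{\spL^\infty(\plane \setminus \disk_R(0))} \to 0$ as $\epsilon \to 0$; one can even bound this by $C\,\norm{h_\epsilon}^2_{\spL^\infty(\plane\setminus\disk_R(0))}$ since $V'(0) = 0$.

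With these in hand the estimate follows by the standard $\rho$-argument, as in the proof of Proposition~\ref{prop:tilde-h-eps-conv-0}. Given $\rho > 0$, first choose $R > 0$ small enough that $\norm{K_1(\abs{z})}_{\spL^{p_*}(\disk_{2R}(0))} < \rho$, which is possible since $K_1(\abs{\cdot}) \in \spL^{p_*}_{\mathrm{loc}}(\plane)$. For $\epsilon < R$ both cores lie in $\disk_R(0)$, and substituting $w = z - \epsilon$ (so $\disk_R(0) \subset \{\,w : \abs{w+\epsilon} < R\,\} \subset \disk_{2R}(0)$),
\[
  \norm{\brk(1 - V(h_\epsilon))\,K_1(\abs{z - \epsilon})}_{\spL^{p_*}(\disk_R(0))}
  \le \norm{K_1(\abs{z-\epsilon})}_{\spL^{p_*}(\disk_R(0))}
  = \norm{K_1(\abs{w})}_{\spL^{p_*}(\disk_R(-\epsilon))} < \rho .
\]
On the complement, $\norm{\brk(1 - V(h_\epsilon))\,K_1(\abs{z-\epsilon})}_{\spL^{p_*}(\plane \setminus \disk_R(0))} \le C_0\,\norm{1 - V(h_\epsilon)}_{\spL^\infty(\plane \setminus \disk_R(0))}$, which is $< \rho$ once $\epsilon$ is small by the previous paragraph. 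Adding the two contributions gives $\norm{\brk(1 - V(h_\epsilon))\,K_1(\abs{z-\epsilon})}_{\spL^{p_*}(\plane)} < 2\rho$ for $\epsilon$ small, and $\rho$ being arbitrary the limit is $0$; the identical argument with $z + \epsilon$ in place of $z - \epsilon$ (its singularity being at $-\epsilon \in \disk_R(0)$) yields the companion statement. The only point requiring any real care is the local $\spL^{p_*}$-integrability of the $r^{-1}$ singularity of $K_1$, which is exactly where $p > 2$ is used; everything else is repackaging of facts already established.
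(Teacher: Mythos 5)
Your proof is correct, but it takes a different technical route from the paper's. The paper makes the same translation $w = z - \epsilon$ and then finishes in one stroke with the dominated convergence theorem: $(1 - V(h_\epsilon(w+\epsilon)))^{p_*}$ is bounded by $1$ and converges pointwise (a.e.) to $0$, while $K_1(\abs{w})^{p_*}$ is a fixed integrable dominating function because $p_* < 2$. You instead run an explicit two-region splitting: absorb the bounded factor $1 - V(h_\epsilon) \le 1$ against the small local $\spL^{p_*}$ mass of the Bessel singularity on $\disk_R(0)$, and use the \emph{uniform} convergence $V(h_\epsilon) \to V(\mu)$ on $\plane\setminus\disk_R(0)$ (which the paper establishes via Proposition~\ref{prop:vav-euc-fnth-unif-conv} and the estimates on $v_\epsilon$) against the global bound $C_0$ on $\norm{K_1(\abs{\cdot-\epsilon})}_{\spL^{p_*}}$. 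Both arguments rest on the same two pillars — $K_1(\abs{\cdot})\in\spL^{p_*}$ for $p_*<2$ and the degeneration of $1 - V(h_\epsilon)$ away from the cores — but the paper gets away with the weaker input of pointwise a.e.\ convergence, whereas your version is more quantitative and, as you note, could be upgraded to an explicit rate via $1 - V(h_\epsilon) = O(h_\epsilon^2)$ since $V'(\mu) = 0$ at $\tau = 0$. Your only genuinely load-bearing observations (local integrability of the $r^{-1}$ singularity, uniform smallness of $1-V$ off a fixed disk) are sound, so the argument stands as written.
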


\begin{proof}
    Let $w = z - \epsilon$, then,
    \begin{align}
    || (1 - V(h_\epsilon))\,K_1(|z - 
    \epsilon|)||_{\spL_{p_*}}^{p_*} &= 
    \int_{\plane} (1 - V(h_\epsilon(w + 
    \epsilon)))^{p_*}\,K_1(|w|)^{p_*}\,|dw|^2,
    \end{align}
    the function $K_1(|w|)$ is in $\spL^{p_*}$ for any $p_* < 2$, and 
    $(1 - V(h_\epsilon(w + \epsilon)))$ is a bounded function converging 
    pointwise to 0, by the dominated convergence theorem,
    \begin{align}
    \int_{\plane} (1 - V(h_\epsilon(w + 
    \epsilon)))^{p_*}\,K_1(|w|)^{p_*}\,|dw|^2 \to 0,
    \end{align}
    this proves the lemma for $K_1(|z - \epsilon|)$, for $K_1(|z + \epsilon|)$
    the proof is analogous.
\end{proof}

By~\eqref{eq:partial-eps-h-eps-H-eps-bound}, $|\partial_\epsilon h_\epsilon - 
H_\epsilon| 
\to 0$ uniformly on the plane as $\epsilon \to 0$. Note that the function 
\begin{align}
H_\epsilon^\tau = 
(1 - \tau^2)^{1/2}\,H_{(1 - \tau^2)^{1/2}\epsilon}((1 - \tau^2)^{1/2}\,|z|),
\end{align}
is the fundamental solution to 
\begin{align}
-(\Delta + (1 - \tau^2))\,H_\epsilon^\tau = 4\pi\,\partial_1\delta_\epsilon 
+ 4\pi\,\partial_1\delta_{-\epsilon}.
\end{align}

All the previous lemmas extend straightforwardly to conclude for any 
$\tau \in (-1, 1)$, the convergence $|h_\epsilon - H_\epsilon^\tau| \to 0$ 
uniform on the plane. Let us define the function 
\begin{equation}
f_{\epsilon} = \frac{1}{\epsilon}\brk(h - \log\,\abs{z - \epsilon}^2 
+ \log\,\abs{z + \epsilon}^2 - \log 
\frac{1 - \tau}{1 +
    \tau}),
\end{equation}
For this function at $\tau = 0$, Rom\~ao and Speight conjectured 
in~\cite{romao2018}, the uniform limit $f_\epsilon \to f_*$
%
%\begin{equation*}
%\label{eq:f-eps-asympt}
%%\begin{aligned}
%-\laplacian f_{\epsilon} = \frac{2\,(1 - \tau^2)}{\epsilon}\; \frac{\abs{z -
%        \epsilon}^2e^{\epsilon 
%        f_{\epsilon}} - \abs{z + \epsilon}^2}{(1 - \tau)\abs{z - \epsilon}^2
%    e^{\epsilon f_{\epsilon}} + (1 + \tau)\abs{z + \epsilon}^2}, \qquad 
%\lim_{\abs{x} \to \infty} f_{\epsilon} = 0.
%%\end{aligned}
%\end{equation*}
%
%We know that $\fnth_\epsilon \to \log((1 - \tau)(1 + \tau)^{-1})$ uniformly in 
%$C^1$ on a neighbourhood of the 
%origin, we make the assumption that \(f_{\epsilon} \to f_{*} \) as
%$\epsilon \to 0$  with enough regularity. This assumption is 
%supported by numerical evidence as can be observed from  
%Figure~\ref{fig:asympt-approx-fstar}.
%
%\begin{figure}[p]
%  \centering
%  
%\includegraphics[width=.90\textwidth]{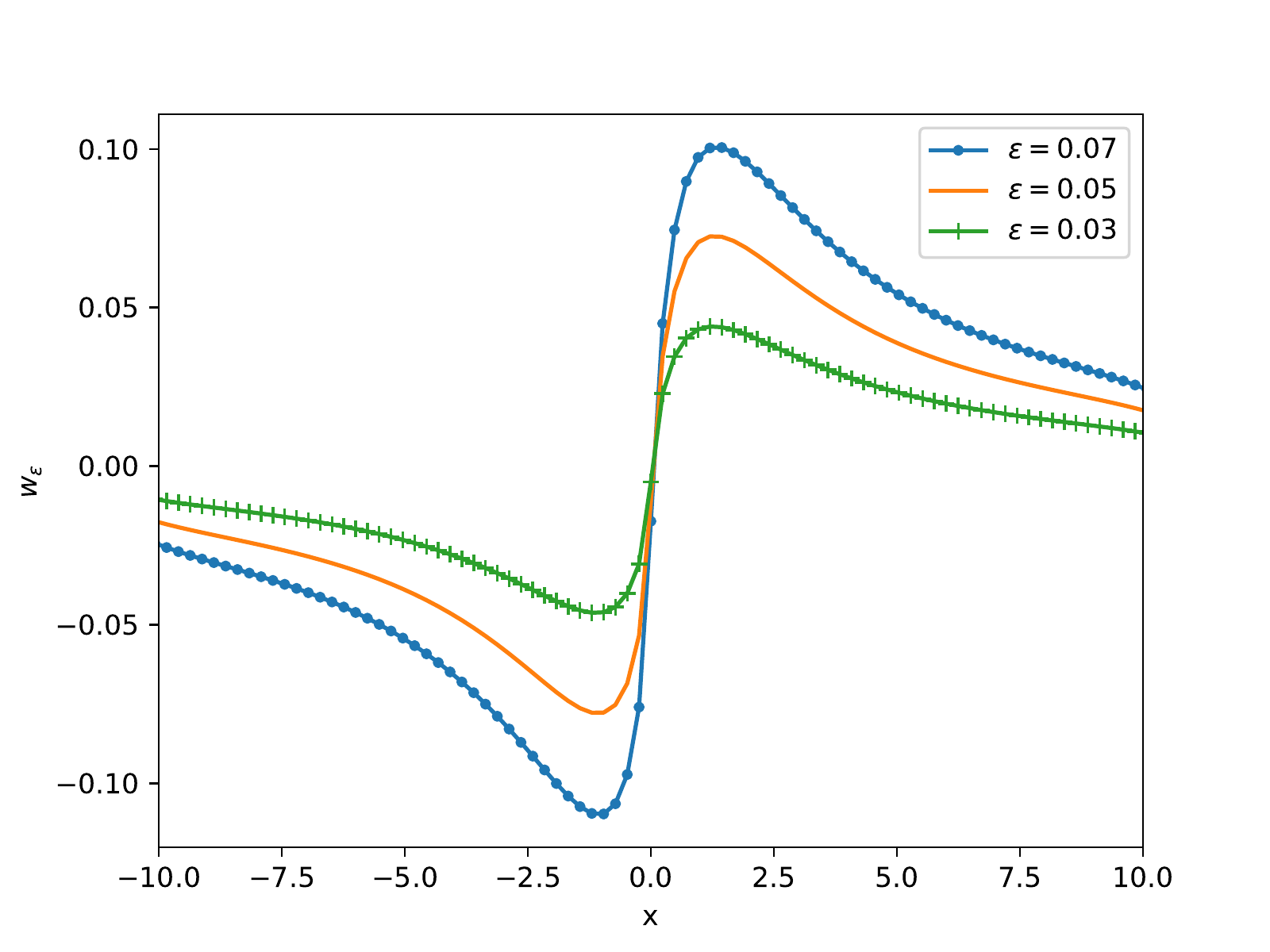}
%   
%\includegraphics[width=.90\textwidth]{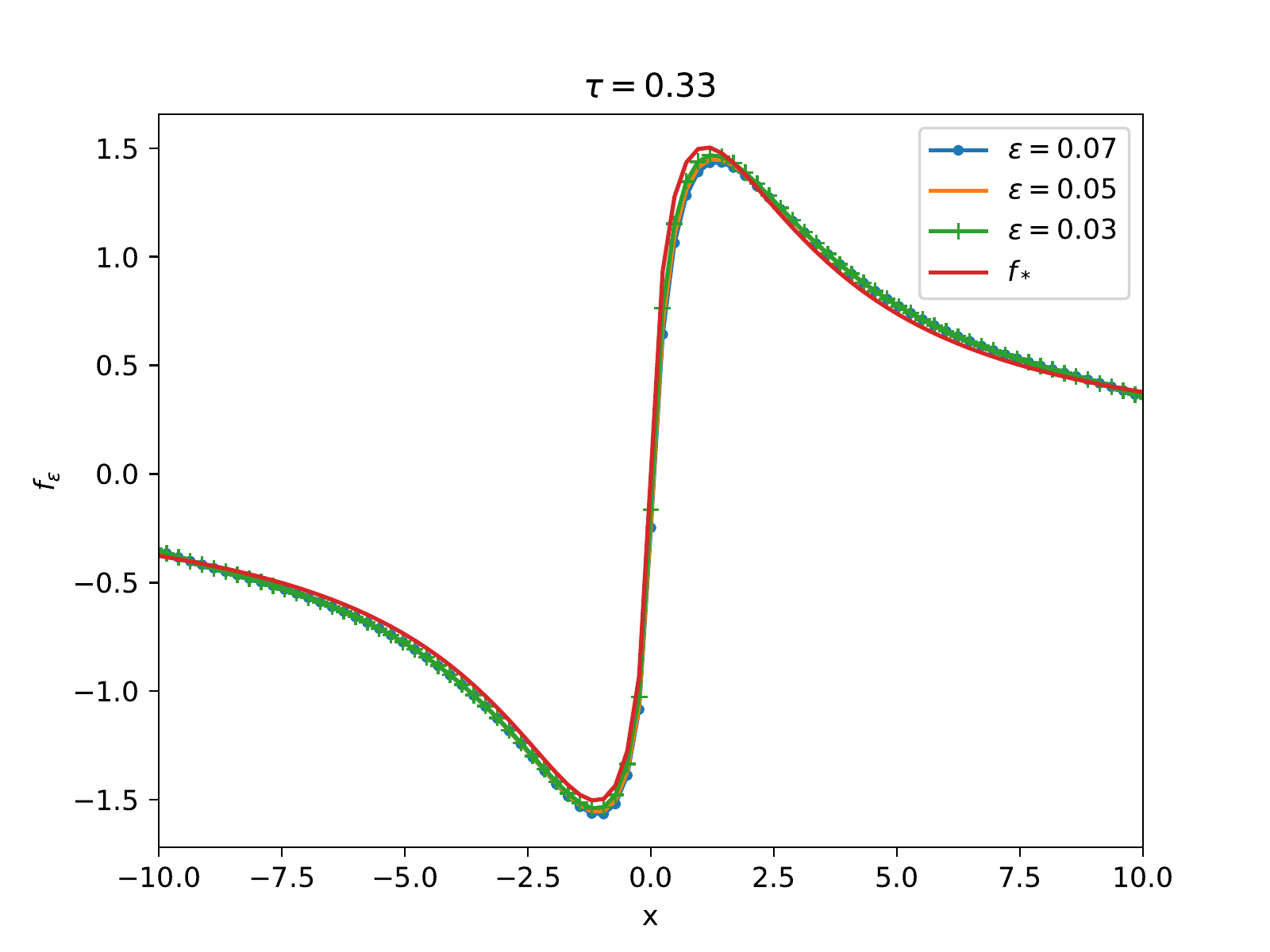}
%  \caption{\textbf{Top.} Real profile of the functions \(\tilde h_{\epsilon} -
%    \log(1-\tau)(1 + \tau)^{-1} \) converging uniformly to 0 on the real axis. 
%\textbf{Bottom.} Real profile of the functions \(f_{\epsilon}\)
%    on the real axis and the conjectured asymptotic
%    limit. In both cases, $\tau = 0.33$}\label{fig:asympt-approx-fstar}
%\end{figure} 
%
%If the convergence is as we conjecture, $f_{*}$ is a solution to the problem,
%
where for general $\tau$, $f_{*}$ is a solution to the problem,
  \begin{align}
  %\label{eq:f-star-asympt}
    -(\laplacian + 1 - \tau^2)\, f_{*} &=  
    - 4(1 - \tau^2)\,\frac{z_1}{\abs{z}^2},  \\
    \lim_{\abs{z} \to \infty} f_{*} &= 0, \\
    \lim_{\abs{z} \to 0} f_{*} &= 0.
  \end{align}

The equation for $f_{*}$ can be solved exactly, for $\tau = 0$, they found,
\begin{align}
f_* &= 4\,\frac{z_1}{|z|^2}\,(1 - |z| K_1(|z|))\nonumber\\
&= 4\,\cos(\theta)\,\pbrk{\frac{1}{|z|} - K_1(|z|)}.
\end{align}

If we define,
\begin{align}
f_*^\tau(z) = (1 - \tau^2)^{1/2}\,f_*((1 - \tau^2)^{1/2}\,z),
%4\,\cos(\theta)\,\pbrk{\frac{1}{|z|} - (1 - \tau^2)^{1/2}\,K_1((1 
%- \tau^2)^{1/2}\,|z|)}
\end{align}
then $f_*^\tau$ is the conjectured limit for general $\tau$. 
%
%\begin{equation*}
%\label{fstart-approx-asympt}
%f_{*}(z) = 4\,\frac{z_1}{\abs{z}^2}\,\left( 1 - \left(1 -
%\tau^{2}\right)^{1/2}\,\abs{z}\,
%K_1\left(\left(1 - \tau^{2}\right)^{1/2} \,\abs{z} \right) \right).
%\end{equation*}

Numerical evidence suggests $C^1$ uniform convergence as can be 
seen in  
Figure~\ref{fig:asympt-approx-fstar}. In the next proposition, we prove that in 
fact, the convergence is uniform 
at least in $C^0(\disk_R(0))$ for any disk centred at the origin.
\begin{figure}[p]


    \centering
    \includegraphics[width=.90\textwidth]{img/vav-euclidean/regular-taubes-profile.pdf}
    \includegraphics[width=.90\textwidth]{img/vav-euclidean/regular-taubes-profile-eps-fitting.pdf}
    \caption{\textbf{Top.} Real profile of the functions \(\tilde h_{\epsilon} -
        \log(1-\tau)(1 + \tau)^{-1} \) converging uniformly to 0 on the real 
        axis. \textbf{Bottom.} Real profile of the functions \(f_{\epsilon}\)
        on the real axis and the conjectured asymptotic
        limit. In both cases, $\tau = 0.33$}\label{fig:asympt-approx-fstar}
\end{figure} 

\begin{proposition}\label{prop:f-eps-unif-conv}
    For any $R>0$, 
    $f_\epsilon \to f_*^\tau$ uniformly on $\overline{\disk_R(0)}$.
\end{proposition}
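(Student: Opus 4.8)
The plan is to deduce the statement from the uniform convergence $\partial_\epsilon h_\epsilon - H_\epsilon^\tau \to 0$ on all of $\plane$ established just above, via an averaging argument based on the fundamental theorem of calculus. Fix $z\neq 0$. For $s\in(0,\epsilon]$ the bracket $h_s(z)-\log|z-s|^2+\log|z+s|^2-\mu$ is differentiable in $s$ by Theorem~\ref{thm:h-reg-param-dep}, and its value at $s=0$ is $0$ because $h_s(z)\to\mu$ pointwise off the origin (a consequence of the uniform convergence $\tilde h_\epsilon\to 0$ and of $v_\epsilon(z)\to 0$ proved in Subsection~\ref{sec:eps-0}). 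Hence
\[
f_\epsilon(z) = \frac{1}{\epsilon}\int_0^\epsilon \Big( \partial_s h_s(z) + \frac{2(z_1+s)}{|z+s|^2} + \frac{2(z_1-s)}{|z-s|^2} \Big)\, ds ,
\]
the logarithmic terms being elementary; the improper integral at $s=0$ converges since for fixed $z\neq 0$ one has $|\partial_s h_s(z)|\leq |\partial_s h_s(z)-H_s^\tau(z)|+|H_s^\tau(z)|$, the first term being uniformly small and the second bounded near $s=0$.

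Next I would split the integrand as $\big(\partial_s h_s(z)-H_s^\tau(z)\big)+\Psi_s(z)$, where $\Psi_s(z)=H_s^\tau(z)+\tfrac{2(z_1+s)}{|z+s|^2}+\tfrac{2(z_1-s)}{|z-s|^2}$. Using the explicit form $H_s^\tau(z)=-2m\big(K_1(m|z-s|)\tfrac{z_1-s}{|z-s|}+K_1(m|z+s|)\tfrac{z_1+s}{|z+s|}\big)$ with $m=\sqrt{1-\tau^2}$, one rewrites $\Psi_s(z)=\phi(z-s)+\phi(z+s)$ for $\phi(w)=2w_1\big(|w|^{-2}-m|w|^{-1}K_1(m|w|)\big)$. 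The small-argument expansion of $K_1$ shows the singularity of $\phi$ at the origin cancels, so $\phi$ extends to a continuous function on $\plane$ that is bounded and vanishes both at $0$ and at infinity, hence is uniformly continuous on $\plane$. In particular $\Psi_0=2\phi=f_*^\tau$ (which matches the stated formula after the dilation $z\mapsto mz$), and by uniform continuity of $\phi$ one gets $\sup_z|\Psi_s(z)-f_*^\tau(z)|\to 0$ as $s\to 0$.

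Combining: since $\sup_z|\partial_s h_s(z)-H_s^\tau(z)|\to 0$ and $\sup_z|\Psi_s(z)-f_*^\tau(z)|\to 0$ as $s\to 0$, for every $\eta>0$ there is $\epsilon_0>0$ such that the integrand above differs from $f_*^\tau(z)$ by less than $\eta$ for all $s<\epsilon_0$ and all $z$; averaging over $s\in(0,\epsilon)$ and writing $f_*^\tau(z)=\tfrac1\epsilon\int_0^\epsilon f_*^\tau(z)\,ds$ gives $|f_\epsilon(z)-f_*^\tau(z)|<\eta$ for all $\epsilon<\epsilon_0$ and all $z$. This yields uniform convergence on $\plane$, in particular on each $\overline{\disk_R(0)}$.

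The main obstacle I anticipate is making the first paragraph fully rigorous: Theorem~\ref{thm:h-reg-param-dep} applies only while the two cores are separated, so the differentiability of $s\mapsto h_s(z)$ and the validity of the fundamental theorem of calculus down to $s=0$ must be extracted purely from the pointwise and $L^\infty$ estimates of Subsection~\ref{sec:eps-0} together with the boundedness of $H_s^\tau$ away from the origin; everything else is bookkeeping built on the already established uniform limit $\partial_\epsilon h_\epsilon-H_\epsilon^\tau\to 0$. (Incidentally the same computation shows $f_\epsilon\to f_*^\tau$ uniformly on all of $\plane$, not merely on disks centred at the origin.)
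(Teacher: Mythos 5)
Your proof is correct and follows essentially the same route as the paper's: the paper replaces your averaging over $s\in(0,\epsilon)$ by the mean value theorem applied to $\hat h_\epsilon$, then likewise splits $\partial_\epsilon\hat h_\epsilon$ into $(\partial_\epsilon h_\epsilon - H_\epsilon^\tau)$ plus $\tilde f^\tau(z-\epsilon)+\tilde f^\tau(z+\epsilon)$, where $\tilde f^\tau$ is exactly your $\phi$, and concludes by continuity of $\tilde f^\tau$ together with the previously established uniform limit $\partial_\epsilon h_\epsilon - H_\epsilon^\tau\to 0$. The only cosmetic difference is FTC averaging versus the mean value theorem, and your parenthetical observation that the argument actually gives uniform convergence on all of $\plane$ is consistent with the paper's restriction to $\overline{\disk_R(0)}$ being a matter of what is stated rather than of what the method yields.
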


\begin{proof}
    Let
    \begin{align}
    \hat h_\epsilon = h - \log |z - \epsilon|^2 + \log |z + \epsilon|^2 - \log 
    \frac{1 - 
        \tau}{1 + \tau},
    \end{align}
    we know 
    $\hat h$ is smooth with respect to both $z$ and $\epsilon$ and 
    $\hat h_\epsilon \to 0$ in $C^1(\overline{\disk_R(0)})$ as $\epsilon \to 
    0$.   
    Let $\epsilon > 0$ be a given positive number, by the mean value theorem, 
    for any $z$ there is another 
    $\epsilon' \in (0, \epsilon)$ that may depend on $z$, such that, 
    \begin{align}
    f_\epsilon(z) = \partial_\epsilon \hat h_\epsilon|_{\epsilon'}(z),
    \end{align}
    hence, to prove the statement it is sufficient to show that 
    $\partial_\epsilon \hat h_\epsilon \to f_*^\tau$ uniformly on 
    $\overline{\disk_R(0)}$. Since,
    \begin{align}
    \partial_\epsilon \hat h_\epsilon = \partial_\epsilon h_\epsilon
    + \frac{2 \cos(\theta_\epsilon)}{|z - \epsilon|} 
    + \frac{2 \cos(\theta_{-\epsilon})}{|z + \epsilon|}, 
    \end{align}
    the convergence,
    \begin{align}
    \left\lvert
    \partial_\epsilon\hat h_\epsilon - H_\epsilon^\tau - 
    \frac{2 \cos(\theta_\epsilon)}{|z - \epsilon|} 
    - \frac{2 \cos(\theta_{-\epsilon})}{|z + \epsilon|}\right\rvert \to 
    0\label{eq:partial-eps-hat-h-eps-conv}
    \end{align}
    is uniform in $\overline{\disk_R(0)}$. Let,
    \begin{align}
    \tilde f^\tau(z) = 2 \cos(\theta) \pbrk{\frac{1}{|z|} - (1 - 
    \tau^2)^{1/2}\,K_1((1 - \tau^2)^{1/2}\,|z|)},
    \end{align}
    %,
    $\tilde f^\tau$ is a continuous function defined on the compact set 
    $\overline{\disk_R(0)}$, hence, it is equicontinuous, moreover,
    note that,
    \begin{align}
    H_\epsilon^\tau + 
    2\,\pbrk{\frac{\cos(\theta_\epsilon)}{|z - \epsilon|} 
    + \frac{\cos(\theta_{-\epsilon})}{|z + \epsilon|}} &= 
    \tilde f^\tau(z - \epsilon) + \tilde f^\tau(z + \epsilon),
    \end{align}
    since $\tilde f^\tau$ is equicontinuous, on $\overline{\disk_R(0)}$ we have 
    the 
    uniform convergence,
    \begin{align}
    \lim_{\epsilon \to 0}\, (\tilde f^\tau(z - \epsilon) 
    + \tilde f^\tau(z + \epsilon)) =  
    2\,\tilde f^\tau(z) = f^\tau_*(z).\label{eq:lim-tilde-f-eps}
    \end{align}
    
    By~\eqref{eq:partial-eps-hat-h-eps-conv} 
    and~\eqref{eq:lim-tilde-f-eps} $\partial \hat h_\epsilon \to f_*^\tau$ 
    uniformly 
    on $\overline{\disk_R(0)}$ as claimed. This concludes the proof of the 
    proposition. 
\end{proof}

Numerics together with proposition~\ref{prop:f-eps-unif-conv} suggest we can 
extend our claim about uniform convergence to higher order derivatives. In the 
following, we assume the asymptotic expansion, 
\begin{equation}
  \label{eq:gov-elliptic-approx-asympt}
  \fnh_{\epsilon}= \epsilon f_{*}^\tau + \log \abs{z - \epsilon}^2 - \log
  \abs{z + \epsilon}^2 + \log \frac{1 - \tau}{1 + \tau},
\end{equation}
is also valid for derivatives of $h_\epsilon$. With this expression, it is  
possible to derive
an asymptotic approximation to the conformal factor for small
$\epsilon$ as well.

Using the identities
\begin{align}
K_0' = -K_1, &&  K_1' = - K_0 - \frac{1}{x_1} K_1,
\label{eq:bessel-props}
\end{align}
and defining $\nu = (1 - \tau^2)^{1/2}$ to shorten the notation, 
we obtain the approximation
\begin{align}
  \cf (\epsilon) = 4\pi \nu^2
  \pbrk{
 1 + 2\,\pbrk{
 (2 - \nu)
 K_0( \nu \epsilon )
 - \nu \epsilon K_1 ( \nu \epsilon )}
    },\label{eq:lambda-eps}
\end{align}
valid for small \(\epsilon
\).

\section{Numerical approximation to the metric}\label{sec:num-approx-metric}

To approximate the conformal factor numerically, we define $\fnth =
\fnh - \log\, \abs{x - \epsilon}^2 + \log\, \abs{x + \epsilon}^2$. $\fnth$ is 
the  
solution of the regularised equation, 
\begin{equation}
\label{eq:vav-reg}
\begin{aligned}
-\laplacian \fnth &= 2 \brk(\frac{
    \abs {x - \epsilon}^2 e^{\fnth} - \abs{x + \epsilon}^2
}{
    \abs {x + \epsilon}^2 e^{\fnth} + \abs{x + \epsilon}^2
} + \tau), \qquad 
\lim_{\abs x \to \infty} \fnth &= \log \frac{1 - \tau}{1 + \tau}.
\end{aligned}
\end{equation}

Since $\fnth$ is symmetric with respect to the $x_1$ axis, the
regularised Taubes equation was solved with an over-relaxation method
on the domain $-10 \leq x_1 \leq 10$, $0 \leq x_2 \leq 10$. The domain
was discretized with a square grid of size 0.1 as
in~\cite{samols1992}. The initial condition was taken as a
superposition of an approximated vortex and an antivortex as,
\begin{equation}
\label{eq:vav-numeric-fnth0}
\fnth_{0} = \log(\rho^2(R_+)) - \log(R_+^2)
-\log(\rho^2(R_-)) + \log(R_+^2) - \log(\mu),
\end{equation}
where $\rho = \tanh(0.6 r)$, $R_{\pm}$ is the distance of a point
in the grid to $\pm\epsilon$ and $\mu = (1-\tau)(1+\tau)^{-1}$.

The non-trivial term in the metric was computed as,
\begin{align}
\dv{\epsilon}\pbrk{\epsilon\,b(\epsilon)} 
= \dv{\epsilon}\pbrk{\epsilon \del_1 \fnth(\epsilon)}.
\end{align}

Figure~\ref{fig:cf-tau} shows the 
conformal factor 
for various values of $\tau$. Motivated by the asymptotic approximations, 
conformal factor data was 
interpolated by a curve 
\begin{align}
\hat \Omega = A + B\,K_0(2\epsilon).\label{eq:hat-omega}
\end{align}

The 
interpolation showed to explain $99\%-96\%$ of the data, depending on the 
value of $\tau$. As can be seen in the figure, the metric 
flattens as $\tau \to 1$,
preserving the singularity at the
origin. 

\begin{figure}
    \centering
    \includegraphics[scale=.9]{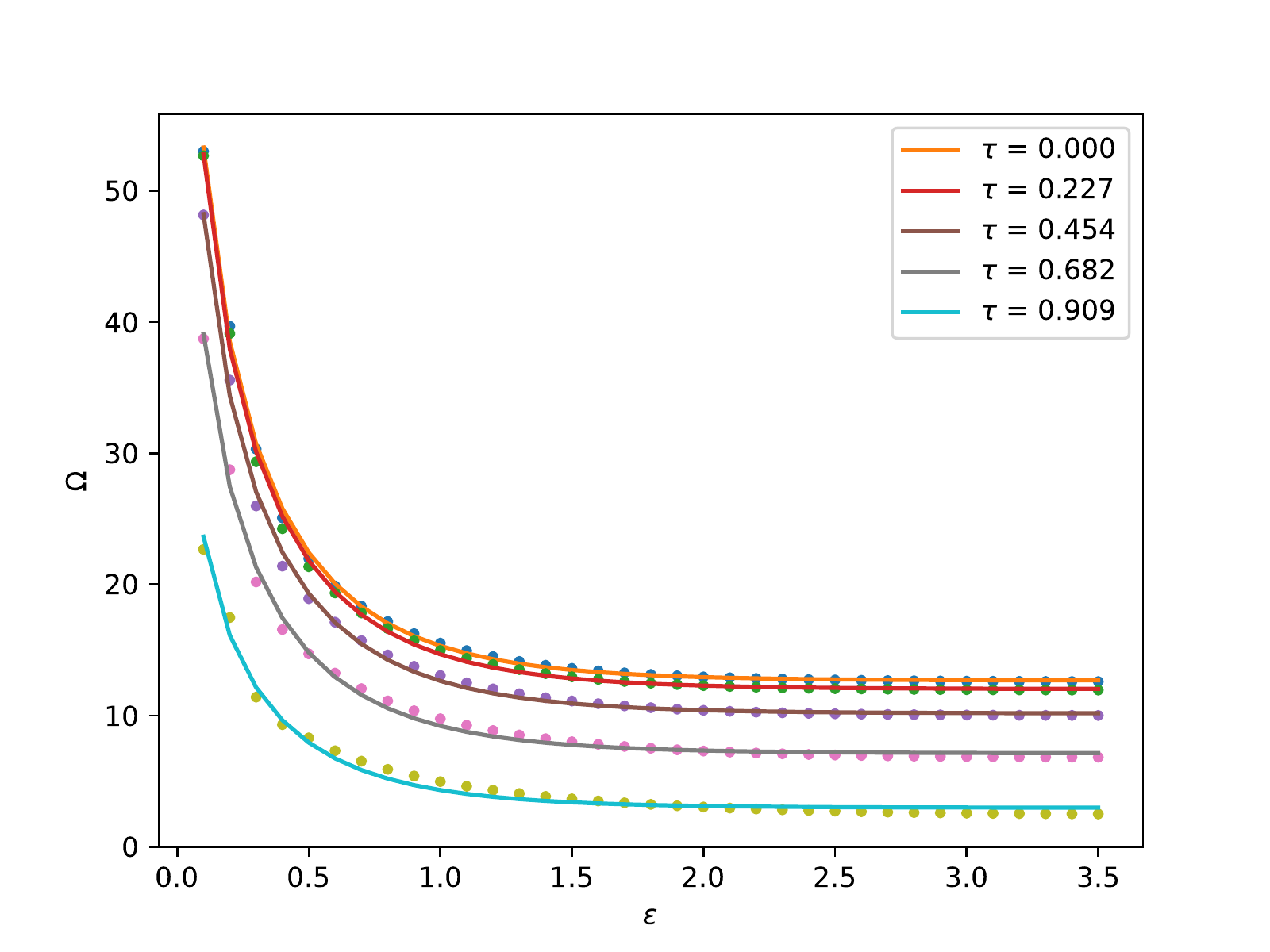}
    \caption{Conformal factor of the metric for some values of 
        $\tau$. The graph shows that as $\tau$ increases from 0, the metric 
        flattens, maintaining its singularity at the origin.}
    \label{fig:cf-tau}
\end{figure}

Figure~\ref{fig:conformal-factor-approx-short-long} shows the
short and long range approximations to the conformal factor for the
symmetric case and for $\tau = 0.909$. As can be seen in the figure, the 
approximations are consistent with the data, with the long range
approximation slightly better in the range of $\epsilon$ that the Taubes  
equation 
was solved. 
%For larger values of $\tau$, the 
%short range approximation lags behind the conformal factor faster than
%for the symmetric case, suggesting that our assumption about uniform
%convergence of $f_\epsilon$ could only be valid for small $\epsilon$.

\begin{figure}[p]
    \centering
    \includegraphics[scale=.80]{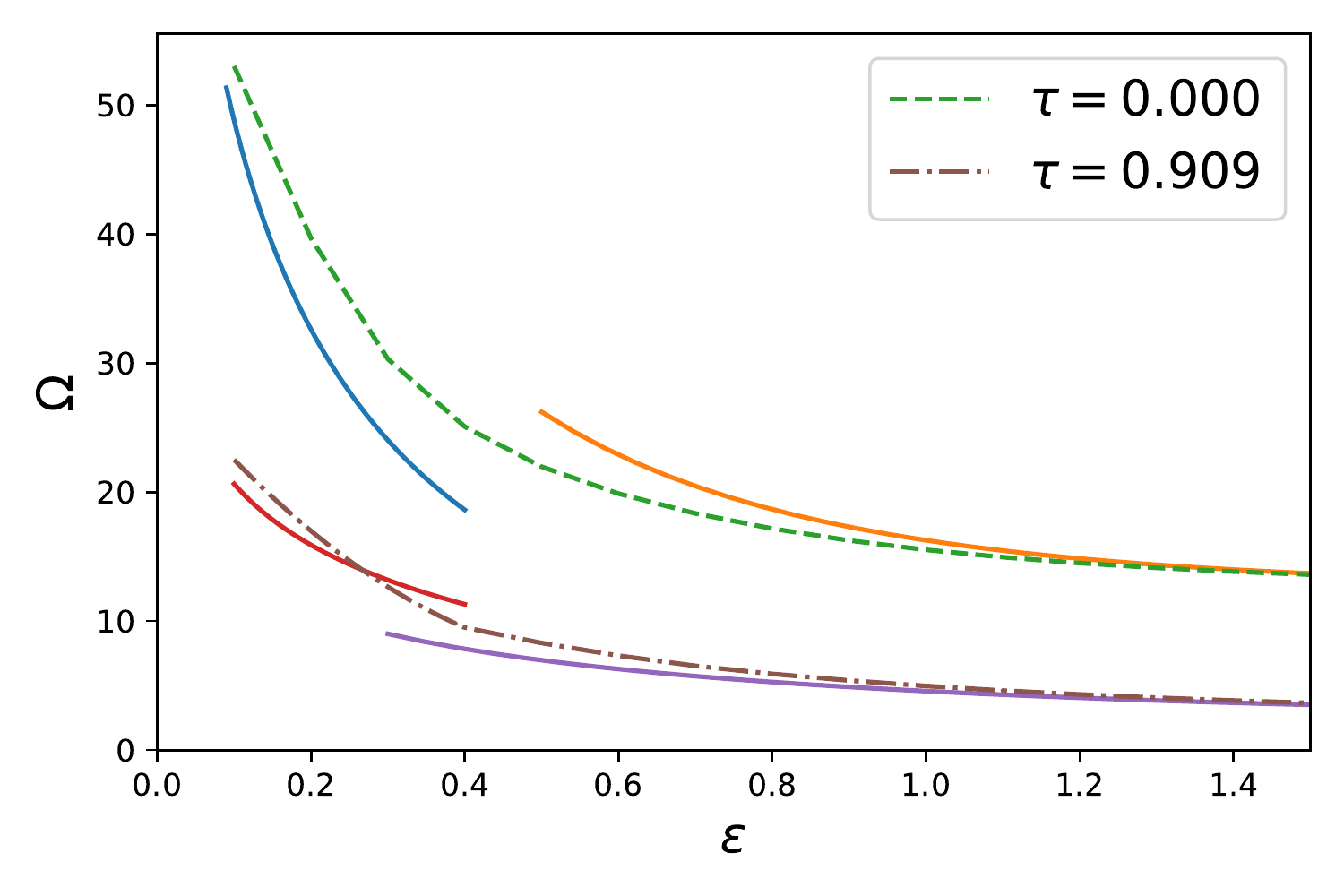}
    \caption{ Short and long range approximation to the conformal factor 
        for the symmetric case and a highly asymmetric configuration. The
        graph shows how the approximations fit the  
        numerical data in these 
        cases.}\label{fig:conformal-factor-approx-short-long}
\end{figure}

Figure~\ref{fig:curv-conf-factor} shows the Gaussian curvature
computed from the conformal factor,
\begin{equation}
%\label{eq:curvature}
\curvature = -\frac{1}{2\epsilon\,\cf}\dv{\epsilon}\pbrk{
    \epsilon \dv{\epsilon} \log \Omega
}.
\end{equation}
As can be seen in figure~\ref{fig:curv-conf-factor}, the curvature diverges to 
$\infty$ as
$\epsilon \to 0$, while on the other hand, for large $\epsilon$, it is
negative and decays exponentially fast to $0$ as $\epsilon\to\infty$.  
The moduli space can be realised as an embedded surface in $\reals^3$, 
we used proposition~2.3 of~\cite{hwang2003} to compute the embedding 
shown in figure~\ref{fig:embedding-r3}.

\begin{figure}
    \centering
    \includegraphics[width=.8\textwidth]{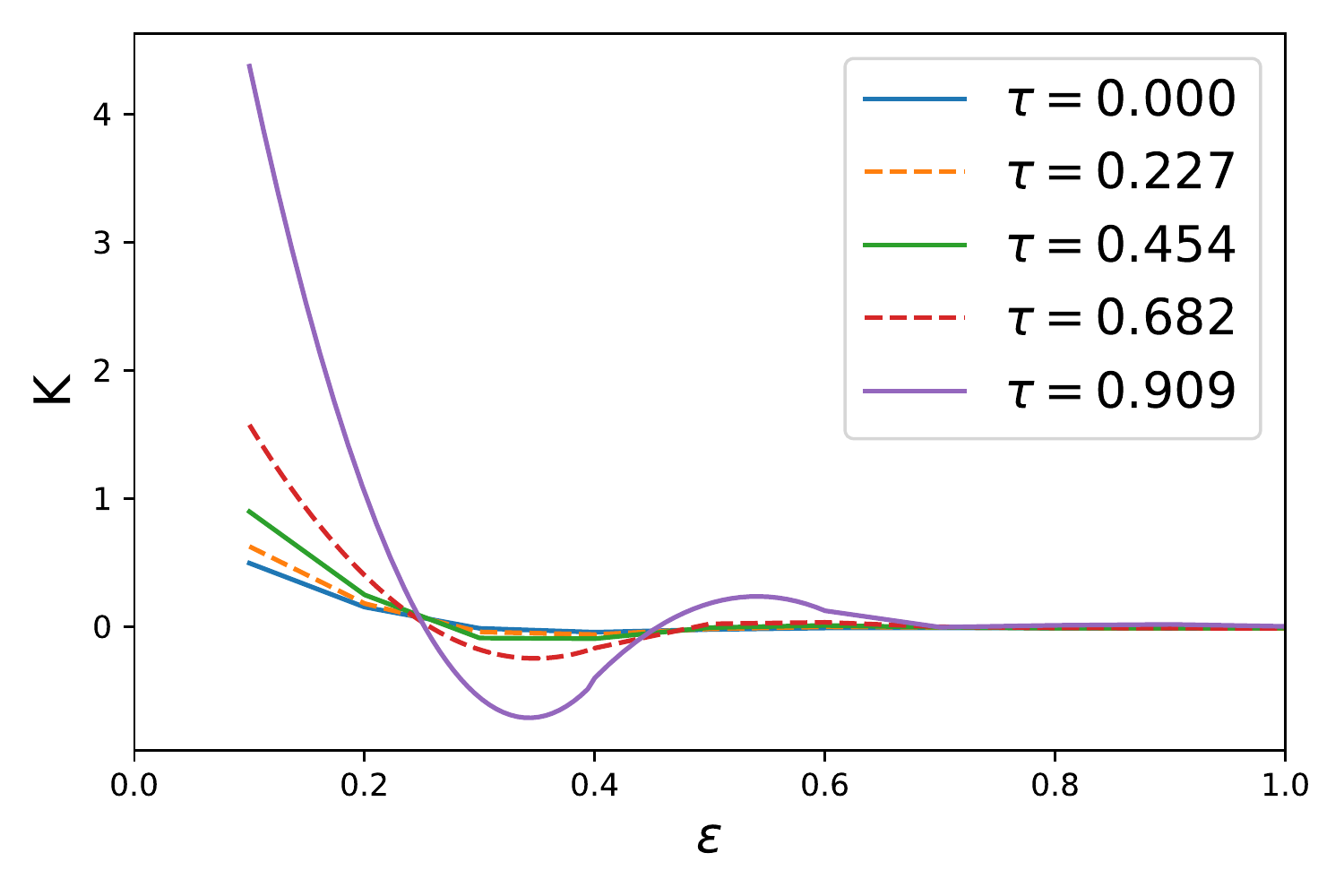}
    \caption{Curvature of the conformal factors}\label{fig:curv-conf-factor}
\end{figure}

\begin{figure}
    \centering
    \includegraphics[width=.8\textwidth]
    {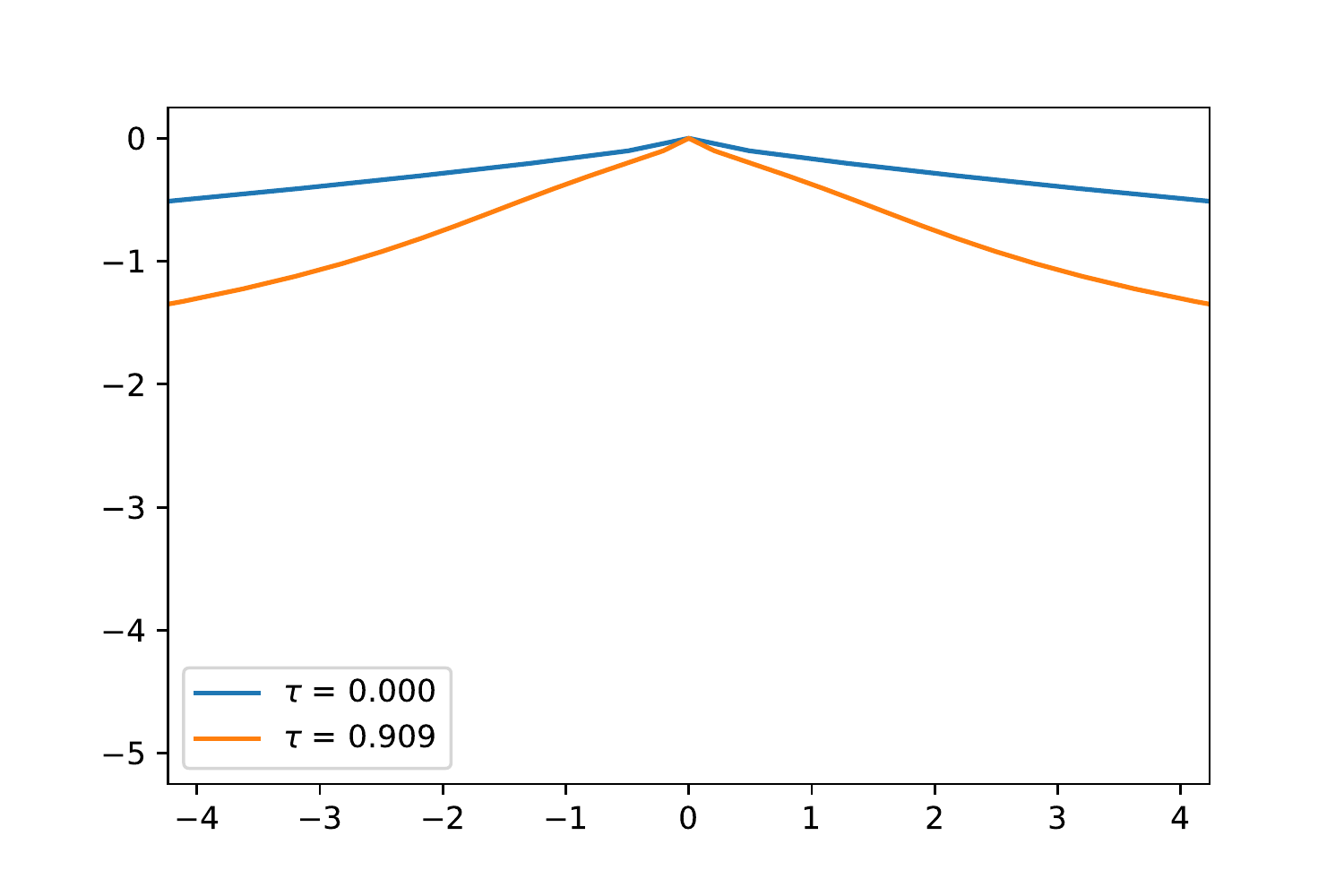}
    \caption{The image shows the profile of the moduli space as an embedded 
    revolution surface in $\reals^3$. 
    The data shows that the moduli spaces embed as 
    flat disks at infinity, with  
    infinite gaussian curvature at the origin.}\label{fig:embedding-r3}    
\end{figure}

Assuming the asymptotic approximations for large and 
small $\epsilon$, total Gaussian curvature can be shown to be zero, since 
total curvature is,
\begin{align}
2\pi\,\int_0^\infty \curvature(\epsilon)\,\epsilon\,\Omega(\epsilon)\,d\epsilon
= -\pi \pbrk{\epsilon 
\,\left.\frac{\Omega'(\epsilon)}{\Omega(\epsilon)}}\right\rvert_0^\infty.
\label{eq:total-curv-integral}
\end{align}

By~  \eqref{eq:conformal-factor-asymptotics-pair}, $\lim_{\epsilon \to 
\infty}\Omega(\epsilon) = 4\pi(1 - \tau^2)$, while 
 \begin{align}
 \Omega' = 4\pi (1 - \tau^2)^{3/2} q_1q_2 K_1(2(1-\tau^2)^{1/2}\epsilon),
 \end{align}
 hence $\lim_{\epsilon \to \infty} \epsilon\,\Omega'\,\Omega^{-1} = 0$ since 
 $K_1$ decays exponentially. For small $\epsilon$, we know $\Omega$ diverges 
 as $|\log \epsilon|$ according to~\eqref{eq:lambda-eps} while 
 $\epsilon\,\Omega'$ remains bounded, since $\Omega'$ diverges as  
 $\epsilon^{-1}$, then $\lim_{\epsilon \to 0} \epsilon\,\Omega'\,\Omega^{-1} = 
 0$. By~\eqref{eq:total-curv-integral} the total Gaussian curvature in 
 $\moduli_0^{1,1}$ is 0.

\subsection{Scattering}\label{sec:vav-scattering}

In this section we study the scattering of vortex-antivortex pairs in
the centre of mass frame. In the centre of mass frame, total momentum
is zero and the system preserves energy and angular momentum. 
For a trajectory on $(\epsilon, \theta)$ coordinates,
\begin{align}
  \Energy = \half\cf(\epsilon)\,(\dot\epsilon^2 + \epsilon^2\dot\theta^2), &&
  \ell = \cf(\epsilon)\,\epsilon^2\,\dot\theta.\label{eq:energy-ang-mom-conserv}
\end{align}

Hence, $\epsilon(t)$ is a solution to the autonomous system,
\begin{align}\label{eq:close-geod-condition}
  \dot\epsilon = \pbrk{\frac{2\Energy}{\cf} -
    \frac{\ell^2}{\cf^2\epsilon^2}}^{1/2}.
\end{align}
Equation~\eqref{eq:close-geod-condition} yields a necessary condition 
for the existence of closed geodesics, if  $\epsilon_0$ is the 
radial position of a closed geodesic,
\begin{align}\label{eq:close-geod-position}
2\Energy\,\cf(\epsilon_0)\,\epsilon_0^2 = \ell^2,
\end{align}
however, the right hand side of equation~\eqref{eq:close-geod-condition} 
is not differentiable at $\epsilon_0$ and therefore the  
fundamental theorem of existence and uniqueness of solutions of 
ordinary differential equations is not applicable  
and~\eqref{eq:close-geod-position} 
is not sufficient.
  
Based on our calculations, we assume for large separations
$2\epsilon$, the conformal factor is approximately constant,
\begin{align}
 \Omega_{\infty} =  4\pi (1 - \tau^2),
\end{align}

Suppose on the centre of mass frame a vortex moves from very far on
the left with initial
speed $v$ parallel to the $x$-axis towards an antivortex. Hence the
antivortex seems to move from far on the right towards the vortex with
initial speed $v' = (1 - \tau)(1 + \tau)^{-1}v$. We define
our impact parameter $a$ as the distance of the instantaneous initial
trajectory of the vortex to the $x$-axis as shown in the following diagram.

\begin{figure}[h]
    \centering
    \begin{tikzpicture}
    \draw[dashed] (-6,0) -- (6, 0) node [anchor=west] {$x$};
    \draw (-5, 2) -- (-5, 0) node [anchor=east, midway] {$a$};
    \draw [->] (-5, 2) -- (-4, 2) node [anchor=west] {$v$};
    \draw [->] (5, -1) -- (4, -1) node [anchor=east] {$v'$};
    \draw (-5, 2) -- (5, -1);
    \fill (5/3, 0) circle[radius=2pt, gray];
    \node at (5/3, 0) [anchor=north east] {$C.M.$};
    \end{tikzpicture}
    \caption{Scattering geometry with respect to the centre of mass.}
\end{figure}
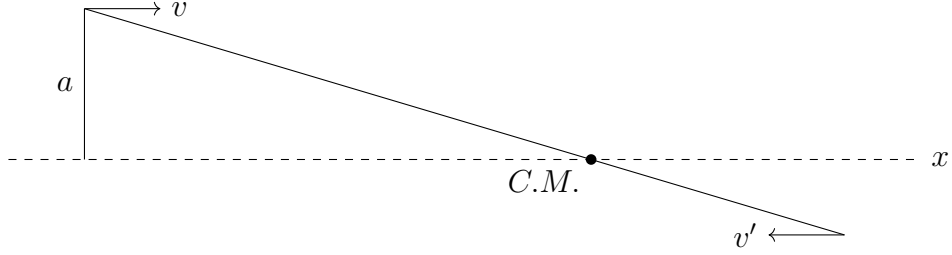

The total energy and angular 
momentum of the system are,
\begin{align}
  \Energy = \half\,\cf_{\infty}\,v^2, &&
  \ell = \cf_{\infty}\,a\,v.
\end{align}

Energy and angular momentum relate as,
\begin{align}
  \Energy = \half\frac{\ell^2}{\cf_{\infty}a^2},
\end{align}
and if we assume $\theta$ depends implicitly
on time as a function $\theta(\epsilon)$, we have,
\begin{align}
  \frac{d\theta}{d\epsilon} = \frac{\dot\theta}{\dot\epsilon} =
  \frac{\ell}{\epsilon\pbrk{2\Energy\,\cf\,\epsilon^2 - \ell^2}^{1/2}}.
\end{align}

The total deviation angle of the trajectory, $\Delta\theta$, from the
initial time to $\epsilon_{\min}$, at the moment of minimum approach
of the pair, therefore is,
\begin{align}
  \Delta\theta &= -\int_{\epsilon_{\min}}^{\infty} \frac{d\epsilon}
{\epsilon\,\left(\frac{\cf \, \epsilon^2}{\Omega_\infty a^2} -
    1\right)^{1/2}}.\label{eq:delta-theta-scattering}
\end{align}

As for a classical mechanical system, we define the deflection angle
as~\cite{goldstein2002classical},
\begin{align}
\Theta = \pi + 2\,\Delta\theta.
\end{align}

To compute $\epsilon_{\min}$ we used a secant method to solve the equation
\begin{align}
  \cf(\epsilon)\,\epsilon^2 - \cf_{\infty}\,a^2 = 0.
\end{align}

Then we used the numerical library \texttt{scipy} to compute the integral based 
on the approximation $\hat\Omega$. In practice, we chose a
small $\delta\epsilon$ and a value $\epsilon_{\max}$ for which our
data showed the conformal factor was almost constant. Then we computed
the integral,
\begin{align}
  \Delta\theta_1 = -\int_{\epsilon_{\min} +
    \delta\epsilon}^{\epsilon_{\max}} \frac{d\epsilon} 
{\epsilon\,\left(\frac{\cf \, \epsilon^2}{\Omega_\infty a^2} -
    1\right)^{1/2}},
\end{align}
and added the result to
\begin{align}
  \Delta\theta_2 = -\int_{\epsilon_{\max}}^{\infty} \frac{a\,d\epsilon}{
    \epsilon\,\pbrk{\epsilon^2 - a^2}^{1/2}
  } =
  -\frac{\pi}{2} + \arctan\pbrk{\frac{(\epsilon_{\max}^2 - a^2)^{1/2}}{a}}.
\end{align}

The result of our computations can be seen on
figure~\ref{fig:scattering}. The deflection angle is negative, hence a
vortex-antivortex pair behaves as a pair of attractive point
particles, however, we would not expect bound orbits because 
as the impact parameter decreased, the angle also decreased
until reaching a minimum, then is started growing again. The behaviour of the 
scattering angle can be explained based on the 
approximation~\eqref{eq:hat-omega}. We assume $\Omega$ is a monotonous, 
decreasing function,  such that, 
\begin{equation}
\begin{aligned}
\Omega(\epsilon) &\geq \Omega_\infty, \\
\Omega(\epsilon) &\approx -C \log 
\epsilon, \qquad \epsilon << 1.
\end{aligned}\label{eq:cf-asymp-props-scattering}
\end{equation}
where $C > 0$ is some constant, and such that there are positive constants 
$C_1$, $C_2$ such that,
\begin{align}
-C_1 \leq \Omega'(\epsilon)\,\epsilon < 0, &&
0 < \Omega''(\epsilon)\,\epsilon^2 
\leq C_2.
\end{align}  

Note that the approximation $\hat \Omega$ and the asymptotic approximations for 
small and large $\epsilon$ are consistent with these assumptions. Since 
for small $\epsilon$,
\begin{align}
\frac{d}{d\epsilon}\,(\Omega(\epsilon)\,\epsilon^2)
= (\Omega'(\epsilon)\,\epsilon + 2\,\Omega(\epsilon))\,\epsilon > 0,
\end{align}  
with these assumptions, there is a continuous bijection between small impact 
parameters 
$a$ and solutions $\epsilon_{\min}$ to the equation,
\begin{align}
\Omega(\epsilon)\,\epsilon^2 = \Omega_\infty\,a^2.\label{eq:eps-min}
\end{align}

If we use the approximation $\hat \Omega$ instead of $\Omega$, 
this is actually a global bijection valid for any $a > 0$. We aim to show that,
\begin{align}
\lim_{a \to 0}\,\Delta\theta  = -\frac{\pi}{2},
\end{align}
where $\Delta \theta$ is the integral \eqref{eq:delta-theta-scattering}. 
From now onwards we denote $\epsilon_{\min}$ as $m$ to shorten the following 
computations. 
With the change of variables $u = \epsilon / m$, the integral 
transforms into,
\begin{align}
\Delta\theta &= - \int_1^\infty \frac{d u}{u \,
    \pbrk{\frac{\Omega(m\cdot u)\,m^2\,u^2}{\Omega_\infty\,a^2} - 
    1}^{1/2}}\nonumber\\
&= - \int_1^\infty \frac{d u}{u \,
    \pbrk{\frac{\Omega(m\cdot u)}{\Omega(m)}\,u^2 - 
    1}^{1/2}},\label{eq:integrand-scattering}
\end{align}
where in the last step we used~\eqref{eq:delta-theta-scattering}. 
By~\eqref{eq:cf-asymp-props-scattering}, for any 
$u \geq 1$, we have pointwise convergence,
\begin{align}
\lim_{m \to 0} \frac{\cf(m\cdot u)}{\cf(m)} = 1.
\end{align}

To compute the integral by the dominated convergence theorem, we need to 
exhibit a function integrable in $[1, \infty)$ and bigger than each of the 
functions in the integrand of~\eqref{eq:integrand-scattering}. To this end, 
let us define the function
\begin{align}
f(u) = \frac{\cf(m\cdot u)}{\cf(m)}\,u^2 - 1,
\end{align}
as a short cumputation shows,
\begin{align}
f'(1) &= 2 + \frac{m\,\cf'(m)}{\cf(m)}\\
f''(u) &= \frac{1}{\cf(m)}\,\pbrk{
2\,\cf(m\cdot u) + 4 m\,\cf'(m\cdot u)\,u + m^2\,\cf''(m\cdot u)\,u^2} 
\label{eq:df2-scattering}.
\end{align}

%By our assumptions on the bounds for $\cf$ and their derivatives,
%%
%\begin{align}
%f''(u) \geq \frac{1}{\cf(m)}\,\pbrk{
%2\,\cf_\infty - 4 C_1 + m^2\,\cf''(m) u^2
%}\label{eq:second-der-f-bound-scattering}.
%\end{align}

%Assume $\cf_\infty \geq 2C_1$, if this is the case 
Assume $f''(u) \geq 0$ for any $u \geq 
1$. 
By Taylor's theorem, for any $u > 1$, there is some $\xi \in (1, u)$, such that
\begin{align}
f(u) = f'(1)\,(u - 1) + \frac{1}{2}f''(\xi)(u - 1)^2 > f'(1)(u - 1).
\end{align}

Since $f'(1) > 2$ for any $m > 0$, we deduce, 
\begin{align}
\int_1^\infty \frac{d u}{u \,
    \pbrk{\frac{\Omega(m\cdot u)}{\Omega(m)}\,u^2 - 
        1}^{1/2}} & = \int_1^\infty \frac{d u}{u \,
    f(u)^{1/2}}\nonumber\\
 &< \frac{1}{\sqrt 2}\,\int_1^\infty \frac{du}{u\,(u - 1)^{1/2}}\nonumber\\
 &= \frac{\pi}{\sqrt 2}.\label{eq:deflection-integral-bound}
\end{align}

Hence, by the dominated convergence theorem,
\begin{align}
\lim_{a \to 0}\Delta\theta &= 
-\lim_{m \to 0} \int_1^\infty \frac{du}{u f(u)^{1/2}}\nonumber\\
&= -\int_1^\infty \frac{du}{u\,(u^2 - 1)^{1/2}}\nonumber\\
&=  -\frac{\pi}{2},
\end{align}
provided $f''(u)$ is non-negative, or equivalently 
by~\eqref{eq:df2-scattering}, if 
\begin{align}
2\cf(x) + 4 x \cf'(x) + x^2\cf''(x) \geq 0,\label{eq:cf-second-der-bounds}
\end{align}
for all $x > 0$. By the asymptotic properties of $\Omega$, we know this is 
the case for small and large $x$, which shows it is sensible to assume this 
is the case, at least for not very large $\tau$, as 
figure~\ref{fig:cf-second-der-bounds} shows.  

Therefore, the total deflection satisfies,
\begin{align}
\lim_{a \to 0} \Theta = \pi + 2 \lim_{a \to 0}\Delta\theta = 
0,\label{eq:lim-defl-angle-scattering}
\end{align}
as shown in figure~\ref{fig:scattering}. Finally,  
equation~\eqref{eq:deflection-integral-bound} shows $\Delta \theta > -\pi/\sqrt 
2$ at least up to some $\tau$, hence, the data suggests the lower bound,
\begin{align}
\Theta > -(1 - \sqrt{2})\pi \approx -74.5^\circ,
\end{align}
as can be seen in the figure.
\begin{figure}
    \centering
    \includegraphics[width=\textwidth]
    {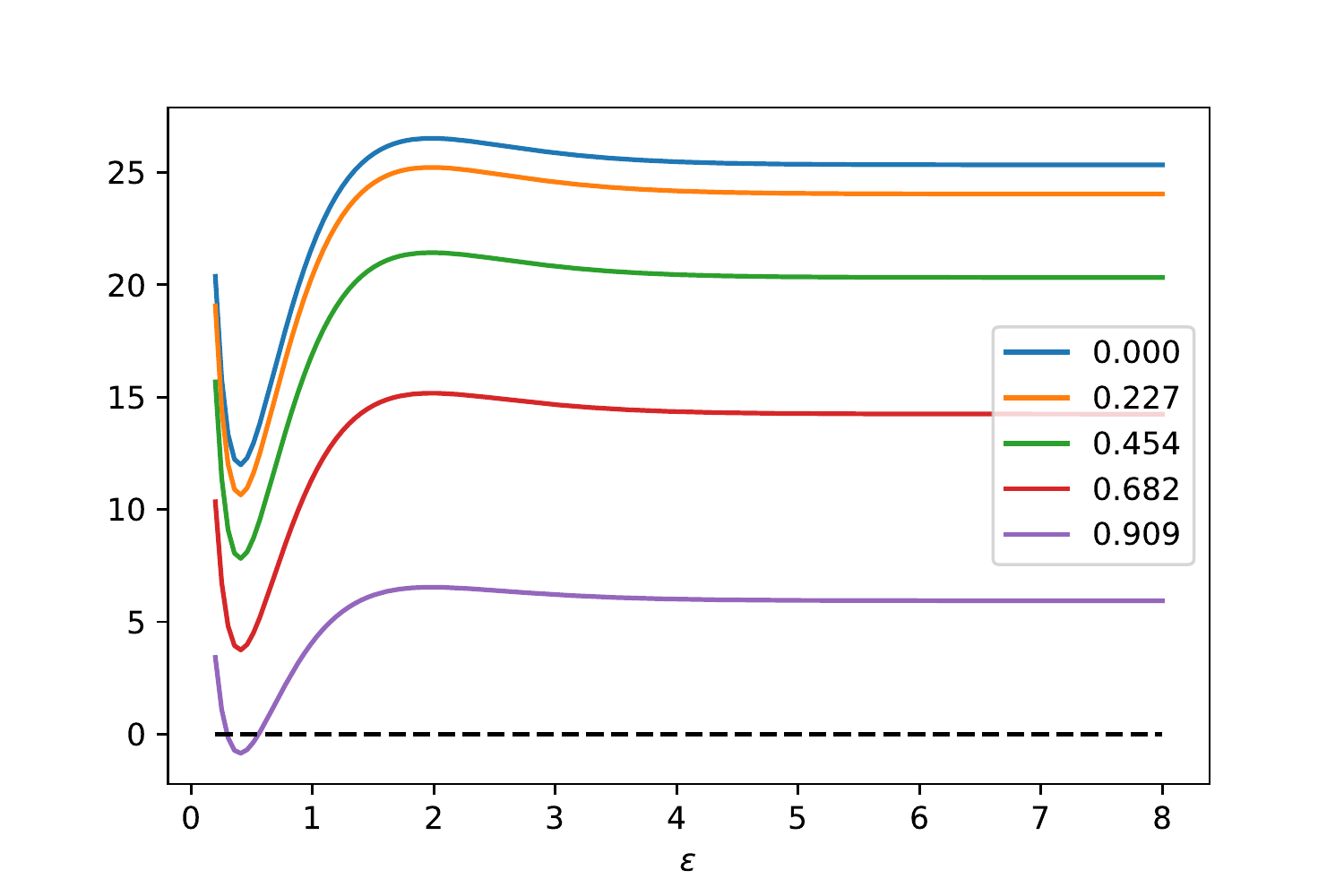}
    \caption{
    The graph shows the function $2\hat\cf(x) + 4 x \hat\cf'(x) + 
    x^2\hat\cf''(x)$ for various values of $\tau$, where $\hat\cf(x) = 
    A\,K_0(2x) + B$, and the coefficients $A$, $B$ are chosen such that 
    $\hat\cf$ interpolates the 
    values 
    of $\cf$ computed solving Taube's equation. The data 
    shows equation~\eqref{eq:cf-second-der-bounds} is expected to hold for 
    $\tau$ up 
    to some value $\tau_{\max}$, implying the deflection angle converges to 
    0 as the impact parameter decreases.
}\label{fig:cf-second-der-bounds}
\end{figure}

\subsubsection{Scattering at large separation}

We also approximated the scattering angle of a vortex-antivortex pair 
at large separation with the method Manton and 
Speight~\cite{manton-speight-asymptotic}. Suppose \(x(s) \) is a geodesic in
Cartesian coordinates, with initial position \(x(0)\), such that
\(a =x_2(0)\) is very big, \(x_1(0) \ll 0\) and the initial velocity is
\(\dot{x}(0) = v\,\del_1\). The geodesic equation for $x_2$ is
\begin{equation}\label{eq:geodesic-equations}
%  \begin{split}
    \ddot{x}_2 + \frac{\cf'}{\cf} \pbrk{
    \dot\epsilon \dot x_2 - \frac{x_2}{2\epsilon} 
    (\dot x_1^2 + \dot x_2^2)} = 0.
%  \end{split}
\end{equation} 

Since  \(a\) is big, the metric is
almost flat across the 
trajectory of the geodesic, the small deflection in the \(x_2 \)
axis is caused by the small correction on \(\dot x_2 \) due to the
conformal factor derivative. To leading order, \(\cf \) is constant but
we take $\cf'$ varying as in the long range approximation. 
\begin{equation}\label{eq:uprime-approx}
  \frac{\cf'}{2\cf} = 
  \half (1 - \tau^2)^{1/2}q_1 q_2 K_1(2 (1-\tau^2)^{1/2}\epsilon).
\end{equation}

We approximate \(x_2 \) and \(\dot{x}_1 \) as constants, \(\dot{x}_2
\) as a small number, such that the leading order term for \(\ddot{x}_2 \) is,

\begin{equation}\label{eq:vav-scattering-y-deflection}
  \ddot{x}_2 = \frac{\cf'}{2\cf}\,\frac{a v^2}{\epsilon}.
\end{equation}

For big \(a \) the deflection is small, the deviation angle can be
approximated as

\begin{equation}
  \label{eq:dev-angle-formula}
  \Theta = \frac{\Delta\dot{x}_2}{v}.
\end{equation}

The difference in $\dot x_2$ is,

\begin{equation}
  \label{eq:angle-difference}
  \Delta \dot{x}_2 = \int_{-\infty}^{\infty}
  \frac{\cf'}{2\cf}\,\frac{av^2}{\epsilon} d s 
  = a v\,\int_{-\infty}^{\infty}
  \frac{\cf'}{2\cf\,\epsilon} dx_1.
\end{equation}

Hence,

\begin{align}
  \Theta = \frac{a}{2} (1 - \tau^2)^{1/2}
  q_1q_2\, \int_{-\infty}^{\infty}\frac{K_1(2 (1 - \tau^2)^{1/2}\,\epsilon) }{\epsilon} d{x}_1.
\end{align}

Recall \(\epsilon = ( a^2 + x^2_1)^{1/2} \) and let us make 
the change of variables
\begin{align}
u = (1 - \tau^2)^{1/2}x_1, &&
a_\tau = (1 - \tau^2)^{1/2} a.
\end{align}

The deflection angle is
\begin{align}\label{eq:deflection-angle-asympt}
  \Theta &= \frac{a_\tau}{2}
  q_1q_2\, \int_{-\infty}^{\infty}
  \frac{K_1(2\,(a_\tau^2 + u^2)^{1/2})}{
  (a_\tau^2 + u^2)^{1/2}
  } du\\
  &= -\frac{q_1q_2}{4} \dv{a_\tau} 
  \int_{-\infty}^{\infty}
  {K_0(2\,(a_\tau^2 + u^2)^{1/2})}du
\end{align}

The last integral was calculated in \cite{manton-speight-asymptotic}, 
using their result, the deflection angle is,
\begin{align}
    \Theta = -\frac{q_1q_2}{4} \dv{a_\tau}\pbrk{\frac{\pi}{2}\exp(-2a_\tau)}
    = q_1q_2\,\frac{\pi}{4}\,\exp(-2a_\tau).
\end{align}

The constant $q_1 q_2$  is negative, hence, the
geodesic are slightly deflected towards the origin, which indicates a 
vortex-antivortex pair behaves as a pair of attractive particles in the
long distance approximation. On figure~\ref{fig:scattering} we can see
the large distance approximation fits the scattering data for the
symmetric case. Since $\Theta \to 0$ 
as $a \to 0$, the fact that for large $a$, $\Theta$ is negative explains the 
existence of a minimum negative deflection as seen in 
figure~\ref{fig:scattering}.

\begin{figure}
  \centering
  \includegraphics[width=.9\textwidth]{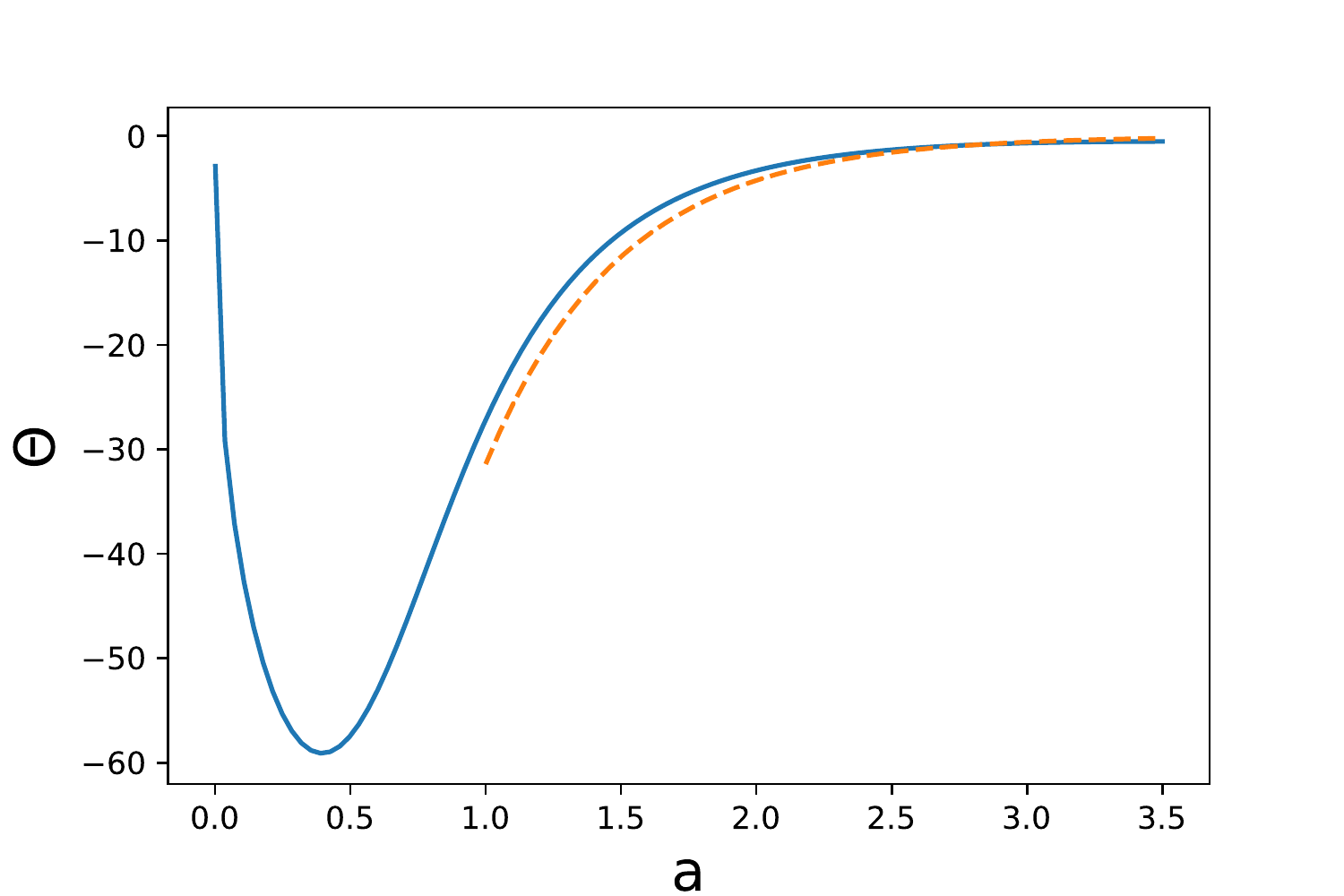}
  \includegraphics[width=.9\textwidth]{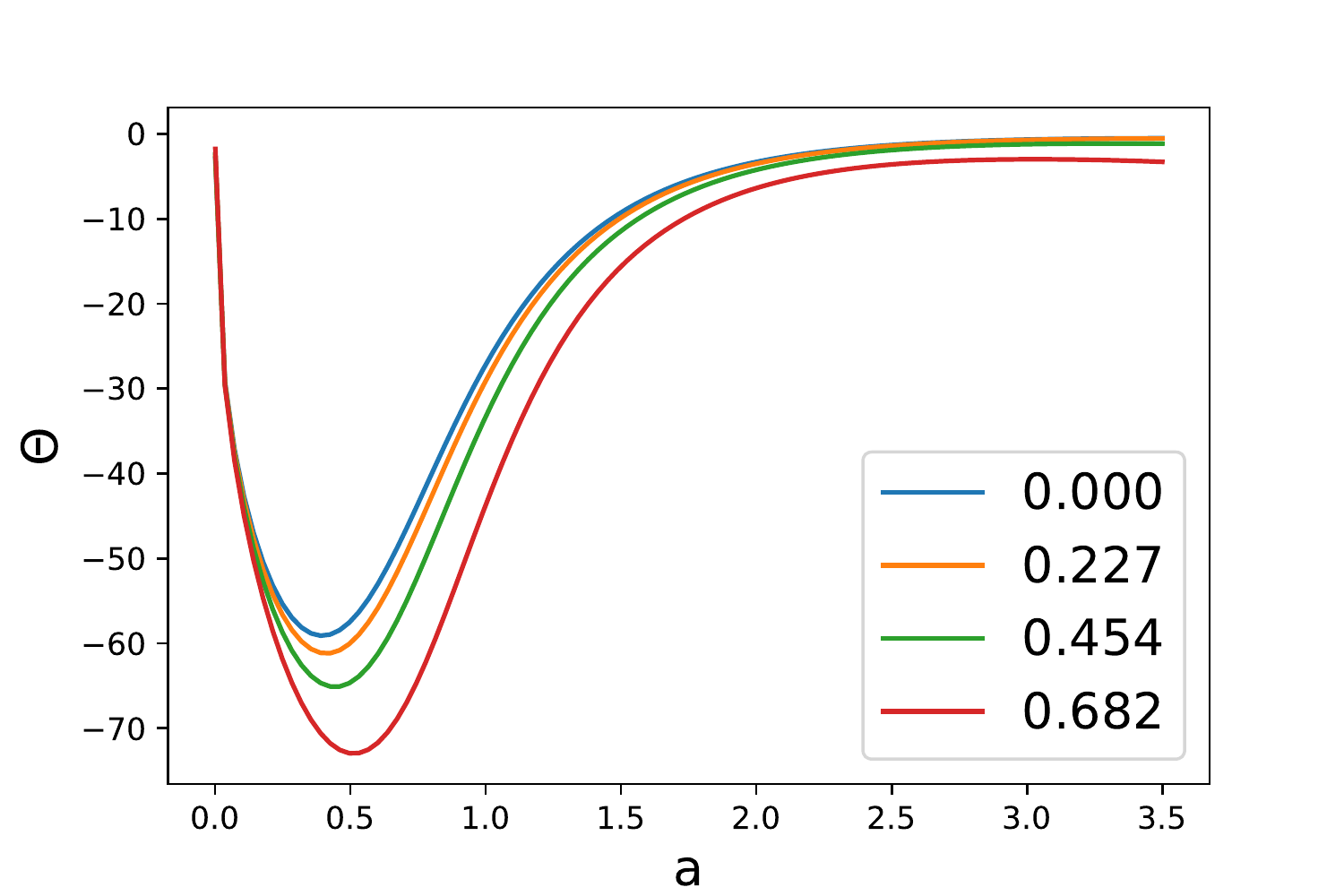}
  \caption{Above. Deflection angle at $\tau = 0$ and asymptotic
    approximation. Below. Comparison of the deflection angle for 
    different values of $\tau$.}\label{fig:scattering} 
\end{figure}

\let\spL\undefined 
\let\vol\undefined
\let \Energy\undefined
\let\cf\undefined
\let\curvature\undefined

%%% Local Variables:
%%% TeX-master: "../../../geometric-models.tex"
%%% End:

\newcommand*{\angMomentum}{\ell}
\newcommand*{\ricci}{Ric}
\newcommand*{\ricciForm}{\rho}
\newcommand*{\cf}{\Omega}
\newcommand*{\gaussK}{K}
\newcommand*{\interiorProd}{\iota}
\newcommand*{\Lagrangian}{\mathrm{L}}
\newcommand*{\Energy}{\mathrm{E}}
\newcommand*{\veff}{\mathrm{V}_{eff}}

\section{
  Ricci magnetic geodesic motion
}\label{sec:ricci-magnetic-geod-motion}

The metric on $\moduli^{1,1}(\plane)$ can be split isometrically in a
product with one flat term isometric to $\plane$, the centre of mass 
coordinate. Since this term is flat, in the reduced moduli space we
have that the global Ricci tensor coincides with the Ricci tensor as a
Riemann surface. Therefore, the Ricci form in
$\moduli_0^{1,1}(\plane)$ is the restriction of the global Ricci form
to the centre of mass frame, 
\begin{equation}
\label{eq:ricci}
\ricciForm = \gaussK\,\epsilon\,d\epsilon\wedge d\theta,
\end{equation}
where $\gaussK$ is the Gauss curvature of the reduced moduli
space. Interaction of vortices with a magnetic field can be modelled
by means of Ricci magnetic geodesics, abbreviated RMGs. RMGs on the moduli 
space 
were introduced for the Ginzburg-Landau
model with a Chern-Simons term by Collie and
Tong~\cite{collie_dynamics_2008}, who proposed that the Ricci form 
was the magnetic form of the Chern-Simons term. Later, mathematical properties 
of RMGs were investigated
by Krusch-Speight on hyperbolic space~\cite{krusch2010exact}. 
Although in our case 
RMG dynamics is not physically motivated, these curves are of mathematical 
interest: Krusch-Speight conjectured that
geodesic completeness and RMG completeness were equivalent until 
Alqahtani-Speight found 
examples of incomplete surfaces which are RMG 
complete~\cite{alqahtani2015ricci}.  
 A curve
$\gamma$ is a Ricci magnetic 
geodesic if there is a constant scalar $\lambda$ such that,
\begin{equation}
\label{eq:rmg}
\nabla_{\dot\gamma} \dot\gamma = \lambda\,
(\interiorProd_{\dot\gamma}\ricciForm)^{\sharp},
\end{equation}
where $\interiorProd_{\dot\gamma}\ricciForm = \ricciForm(\dot\gamma,
\cdot)$ is the interior 
product. Unlike geodesic flow, RMG trajectories are speed dependent,
with changes in initial speed being reflected in the constant
$\lambda$. On a surface of
revolution, RMG equations are determined by the Lagrangian,
\begin{equation}
\label{eq:rmg-lagrangian}
\Lagrangian = \half \cf (\dot\epsilon^2 + \epsilon^2\dot\theta^2) +
\frac{\lambda}{2} \pbrk{\frac{\epsilon\,\cf'}{\cf}}\dot\theta.
\end{equation}
This is a conservative Lagrangian symmetric with respect to
translations in time and rotations of space, therefore, RMG
trajectories on the reduced moduli space preserve energy and angular
momentum,
\begin{align}
  \Energy &= \half \cf\,(\dot\epsilon^2 + \epsilon^2\dot\theta^2), &
  \ell &= \cf\,\epsilon^2\dot\theta + \frac{\lambda}{2} \,
         \frac{\epsilon\,\cf'}{\cf}.\label{eq:energy-ell}
\end{align}
Eliminating $\dot\theta$ from these equations, a RMG is a solution to
the first order equation,
\begin{align}
\Energy = \half\,\cf\dot\epsilon^2 + \veff,
\end{align}
where the effective potential is defined as,
\begin{align}
\veff = \frac{1}{2\epsilon^2\,\cf}\pbrk{ \ell - \lambda\,\frac{\epsilon
  \,\cf'}{2\,\cf}}^2.\label{eq:veff}
\end{align}

Figure~\ref{fig:vav-veff} shows $\veff$ for several values of $\tau = 0$. 
 Data confirms $\veff \to \infty$ as $\epsilon \to 0$,  
consistently  
with the asymptotic approximation to the conformal factor, 
%since $\epsilon\,\cf'$ remains bounded, 
%while 
%$\cf$ diverges to $\infty$ like $-\log(\epsilon)$, 
%hence, $\epsilon^2\,\cf \to 0$.  
likewise, for $\epsilon \to 
\infty$, $\veff \to 0$ since $\cf \to \cf_\infty$ and $\cf' \to 0$. 
 The effective potential can be seen in figure~\ref{fig:vav-veff}, the  
 shape depends on the relative value of $\ell / \lambda$. A large computation 
 reveals
 \begin{align}
 \veff' = \frac{-1}{2\epsilon^3\cf}
 \pbrk{\ell - \lambda\frac{\epsilon\,\cf'}{2\cf}}
 \pbrk{\frac{\lambda}{2}\,
     \pbrk{\frac{\epsilon^2\cf''}{\cf} - 
 3\frac{\epsilon^2\,\cf'^2}{\cf^2}} 
 + 2\ell \pbrk{1 + \frac{\epsilon\,\cf'}{2\cf }}},
 \end{align}
 by virtue of the asymptotic approximations, both $\epsilon\cf'$ and 
 $\epsilon^2\cf''$ are bounded functions, while $\cf$ is positive and bounded 
 below, hence for given $\lambda$ if $|\ell|$ is large, 
 $\veff$ is a positive decreasing function.
 \begin{figure}
     \begin{center}
         \includegraphics[width=.4\textwidth]
         {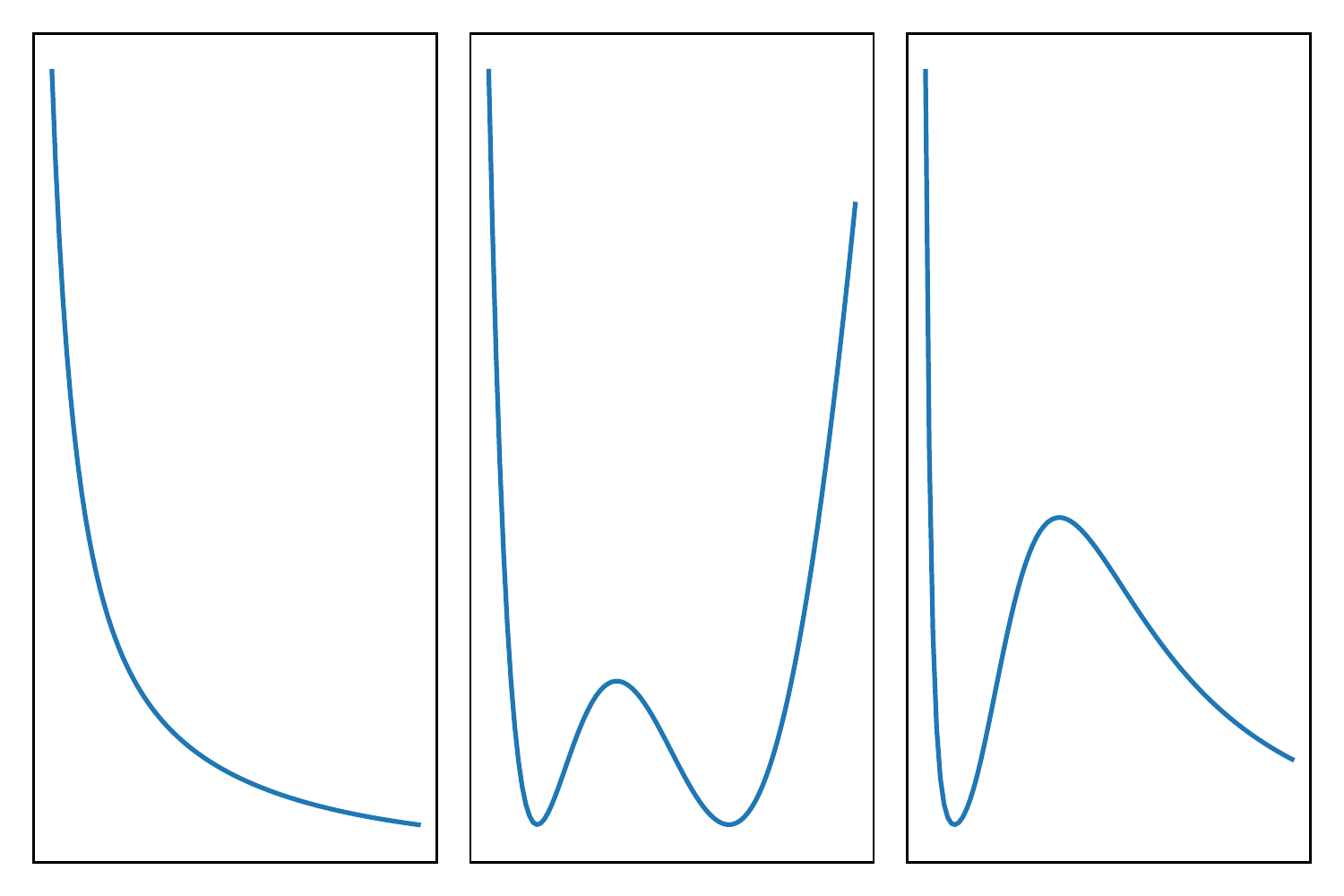}
     \end{center}
     \caption{
         Typical types of effective potentials for $\tau = 0$ (description in 
         text). In the three 
         cases, $\lambda = 1$, in the second case, $\veff \to 0$ as $\epsilon 
         \to \infty$ 
         although is not apparent in the figure because of the scale.
     }\label{fig:vav-veff}
 \end{figure}
 
  In this case RMGs are all unbounded curves. If $\ell$ is not very large, 
  $\veff$ has relative extrema, giving rise to both unbounded and bounded 
  trajectories orbiting 
  around the singularity at $\epsilon = 0$. By 
  equations~\eqref{eq:energy-ell} and~\eqref{eq:veff}, trajectories for which 
  $\Energy = \veff$ at constant $\epsilon_0$ are circular if 
  $\veff(\epsilon_0) \neq 0$ 
  or constant if $\veff(\epsilon_0) = 0$. If the perturbation is around a zero 
  of $\veff$, the angular velocity alternates sign, the pattern  
  is as seen on  the bounded curves on  the first row of figure~\ref{fig:rmgs}. 
  If the perturbation is around a local minimum of $\veff$ which is not a zero, 
  the angular velocity keeps the same sign and gives rise to the patterns seen 
  on the second row of the figure.
 \begin{figure}
     \centering
     \includegraphics[width=0.6\textwidth]
     {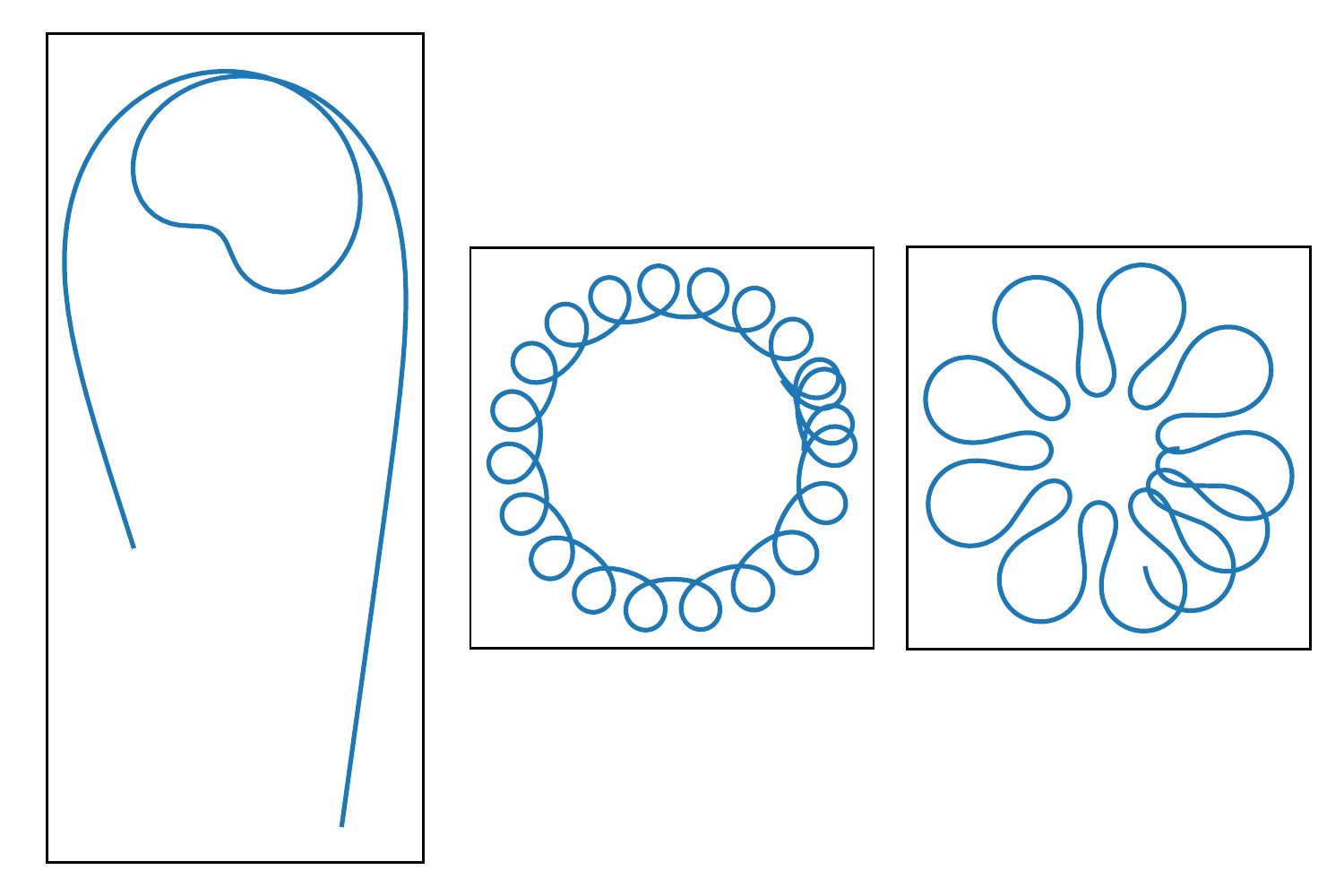}
     \includegraphics[width=0.6\textwidth]
    {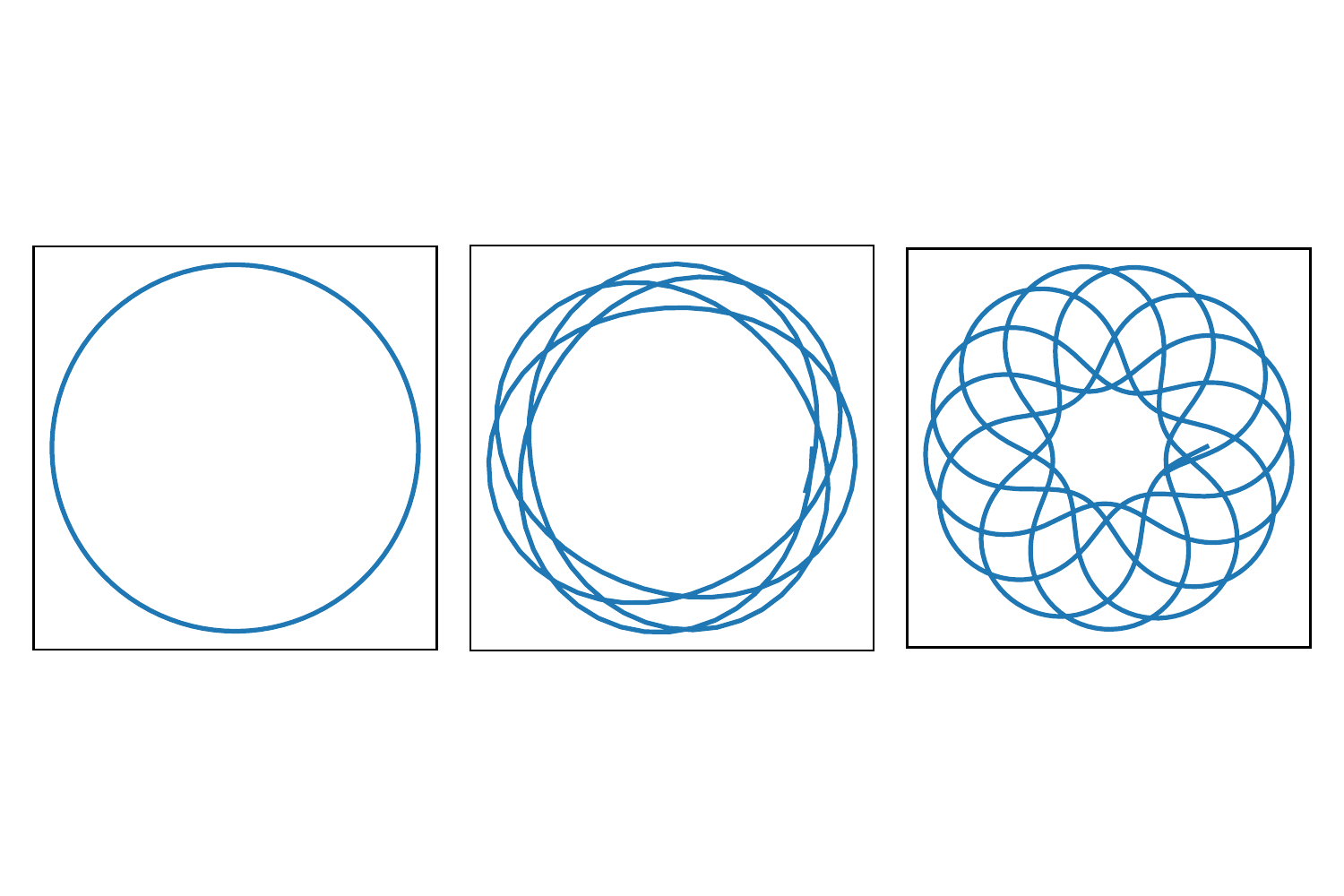}
     \caption{RMGs for $\tau = 0$ (Description in text).}\label{fig:rmgs}
 \end{figure}

%On the other hand, if $\ell = 0$, the effective potential converges to
%0 as $\epsilon\to 0$, therefore if the energy of a trajectory is
%greater than the maximum of $\veff$ and $\dot\epsilon < 0$, the
%trajectory will reach the singularity at finite time. Therefore,
%$\moduli^{1,1}(\plane)$ is also RMG incomplete, although the scalar
%curvature in $\moduli_0^{1,1}(\plane)$ is unbounded as $\epsilon
%\to 0$. This contrasts with the example found by Alqahtani 
%in~\cite[Thm. 1.4.2]{alqahtani2015ricci} of an incomplete surface of
%revolution for which scalar curvature at the conformal boundary
%diverges fast enough to prevent RMG to hit the boundary in finite time.

%The potential $\veff$ can be seen in figure~\ref{fig:vav-veff}. 
%\begin{figure}
%  \begin{center}
%    
%\includegraphics[width=.8\textwidth]{}
%    \\
%    
%\includegraphics[width=.8\textwidth]{}
%  \end{center}
%  \caption{
%    Effective potential for angular momentum $\angMomentum = 1$ and
%    $\angMomentum = 0$ respectively. In both cases, $\lambda = 1$.
%    }\label{fig:vav-veff}
%\end{figure}
%
%
As numerics show, the moduli space is RMG complete, even though 
it is geodesically incomplete, because the divergence of $\veff$ at 
the origin prevents RMGs of hitting the singularity.

\let \angMomentum \undefined
\let \ricci \undefined
\let \ricciForm \undefined
\let \cf \undefined
\let \gaussK \undefined
\let \interiorProd \undefined
\let \Lagrangian \undefined
\let \Energy \undefined
\let \veff \undefined

%%% Local Variables:
%%% mode: latex
%%% TeX-master: "../../../geometric-models.tex"
%%% End:

%%%%%%%%%% End local
\let\modulic    \undefined
\let\qty        \undefined
\let\lagrangian \undefined
\let\bog        \undefined
\let\va         \undefined
\let\vu         \undefined
\let\dd         \undefined
\let\lgrdensity \undefined 
\let\delzbar    \undefined 
%%%%%%%%%%%%%%%%%%%%%%%%%%%%%%%%%%%%%%%%%%%%%%%%%%%%%%%%%%%%%%%

%%% Local Variables:
%%% mode: latex
%%% TeX-master: "../../geometric-models.tex"
%%% End:

\label{ch:vav-euc}

\newcommand*\inc{\iota}
\newcommand*\kform{\omega}
\newcommand*\proj{\Pi}
\newcommand*\sign{s}
\newcommand*\wto{\rightharpoonup}
\newcommand*\diag{\Delta}
\newcommand*\vform{\Volume}
\newcommand*\Volume{\mathrm{Vol}}
\newcommand*\Wsp{\mathrm{W}}
\newcommand*\Bop{\mathrm{B}}
\newcommand*\fnF{F}
\newcommand*\diff{D}
\newcommand*\fnV{V}
\newcommand*\fnv{v}
\newcommand*{\gp}{A}
\newcommand*{\Jop}{\mathrm{J}}
\newcommand*{\fbundle}{\mathcal{F}}
\newcommand*{\pbundle}{\mathcal{P}}
\newcommand*{\Diff}{\mathcal{D}}
\newcommand*{\hf}{\phi}
\newcommand*{\potentialE}{\mathrm{V}}
\newcommand*{\Fstable}{F}
\newcommand*{\vq}{ {v_{\mathbf{q}}} }
\newcommand*{\vp}{ {v_{\mathbf{p}}} }
\newcommand*{\cf}{\Omega}
\newcommand*{\vol}{\mathrm{Vol}}
\newcommand*{\Hsp}{\mathrm{H}}
\newcommand*{\htilde}{\tilde h}
\newcommand*{\ctilde}{\tilde c}
\newcommand*{\pconnection}{\omega}
\newcommand*{\Energy}{\mathrm{E}}
\newcommand*{\Xsp}{\mathcal{X}}
\newcommand*{\suchthat}{\mid}
\newcommand*{\dist}{d}
\newcommand*{\moduliMetric}{g}
\newcommand*{\domain}{\mathcal{D}}
\newcommand*{\isometry}{\mathcal{I}}
\newcommand*{\pSpace}{\mathbb{P}}

\chapter[Vortex-antivortex on a compact surface]{Asymmetric vortex-antivortex pairs on a compact surface}\label{c:vav-compact}
%\chaptermark{Vortex-antivortex on a compact surface}

In this chapter we study vortex-antivortex systems on a compact surface. 
 We aim to prove that the moduli space is 
 incomplete and to compute the volume of the moduli space for the 
 round sphere and flat tori. On a general compact domain,  the problem
of the statistical mechanics of Ginzburg-Landau vortices was
addressed by Manton~\cite{manton1993statisticalmec} and 
by Manton-Nasir~\cite{manton1999volume}. 
%For
%vortices of the Chern-Simons functional on a sphere this was done by 
%Rom\~ao~\cite{romao2001quantum}. 
As shown 
in \cite{manton1993statisticalmec}, it can be described if we know the volume 
of the moduli space. For
the abelian $O(3)$ Sigma model however, the problem of the volume of the moduli
space is constrained by the fact that vortices and antivortices cannot
coalesce, however, computing the volume is necessary for the partition function
of a gas of BPS vortices  
\cite{romao2018,manton1999volume,manton1993statisticalmec}. There is a 
conjectured formula for the volume by Speight and Rom\~ao that depends on 
topological data, the 
volume of the domain, 
$\tau$ and the size of the sets $\vset$, $\avset$ of
core positions~\cite{romao2018}. The content of the chapter is as follows.

In section~\ref{sec:main-thm}, we prove that the Taubes equation has
 exactly one solution for any $\tau \in (-1, 1)$.

The main result of section~\ref{sec:compact-vav-eqn} 
is theorem~\ref{thm:moduli-space-incomplete-compact-surface} 
which asserts that the moduli 
space of vortex-antivortex pairs is incomplete. We prove the theorem 
after proving several lemmas necessary to bound the derivatives of solutions 
to the Taubes equation.

In section~\ref{sec:volume-moduli-space} we compute the volume of 
the moduli space of vortex-antivortex pairs for the round sphere and 
flat tori and compare our results with the conjecture.

\section{Existence of vortices}\label{sec:main-thm}

In this section we will prove the existence of solutions to the 
Taubes equation
 on a compact surface. In \cite{sibner2010} 
Sibner-Signer-Yang proved  existence and 
uniqueness of solutions of the gauged $O(3)$ Sigma model on a compact manifold 
for $\tau = 0$.
We prove the following generalisation of their results. 
\begin{theorem}
  \label{thm:main}
On any compact
  Riemann surface there exists exactly one
 solution $u$ to the 
Taubes equation~\eqref{eq:taubes}, provided the
condition
\begin{equation}
  \label{eq:vav-size-constraint}
 - \frac{1 + \tau}{2\pi}\abs{\surface}
  <
    k_+ - k_-  
  <  
   \frac{1 - \tau}{2\pi}\abs{\surface}
\end{equation}
holds. Moreover, 
$u$ is of class $C^2$ except for the core positions.
\end{theorem}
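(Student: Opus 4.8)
The plan is to pass to the regularised form \eqref{eq:taubes-reg}, $-\laplacian\tilde h = F(\tilde h + v + \mu) - g$ with $g = -\frac{4\pi}{\abs{\Sigma}}(k_+ - k_-)$, whose right-hand side is \emph{bounded} since $F(t) = 2\left(\frac{e^t-1}{e^t+1}+\tau\right)$ takes values in the open interval $(-2(1-\tau),\,2(1+\tau))$. Integrating over the closed surface annihilates the Laplacian, so $\int_\Sigma F(\tilde h + v + \mu)\,\vol = g\,\abs{\Sigma}$; since the left side lies strictly between $-2(1-\tau)\abs{\Sigma}$ and $2(1+\tau)\abs{\Sigma}$, this forces $-\frac{1+\tau}{2\pi}\abs{\Sigma} < k_+ - k_- < \frac{1-\tau}{2\pi}\abs{\Sigma}$; that is, \eqref{eq:vav-size-constraint} is \emph{necessary}. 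The content of the theorem is the converse, and for it I would write $\tilde h = w + c$ with $c\in\reals$ and $\int_\Sigma w\,\vol = 0$, solving for the two unknowns in turn.

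For fixed $c$ the zero-mean part should satisfy $-\laplacian w = F(w + c + v + \mu) - \overline{F(w + c + v + \mu)}$, which is the Euler--Lagrange equation of $J_c(w) = \half\int_\Sigma\abs{\grad w}^2\,\vol - \int_\Sigma\Phi(w + c + v + \mu)\,\vol$, with $\Phi' = F$, restricted to the zero-mean subspace of $\Hsp^1(\Sigma)$. Because $\Phi(t) = 4\log\cosh(t/2) + 2\tau t$ grows at most linearly, $\abs{\int_\Sigma\Phi(w + c + v + \mu)\,\vol}\leq C(1 + \norm{w}_{\Lsp^1})$ — the logarithmic poles of the singular function $v$ of \eqref{eq:fn-v} are harmless, being $\Lsp^1$ — so Poincar\'e's inequality on the zero-mean subspace makes $J_c$ coercive and weakly lower semicontinuous there, hence it attains a minimum $w_c$. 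Since the right-hand side is bounded, elliptic (Schauder) regularity upgrades $w_c$ to an $\Hsp^2(\Sigma)$ solution, smooth off the cores. Comparing $J_c(w_c)$ with $J_c(0)$ and using that $\abs{\Phi'} = \abs{F}$ is bounded gives a bound on $\norm{\grad w_c}_{\Lsp^2}$ uniform in $c$; thus along any sequence $c_n\to +\infty$ a subsequence of $w_{c_n}$ converges in $\Lsp^2(\Sigma)$ while $w_{c_n} + c_n + v + \mu\to +\infty$ almost everywhere, and dominated convergence yields $\frac{1}{\abs{\Sigma}}\int_\Sigma F(w_{c_n} + c_n + v + \mu)\,\vol\to 2(1+\tau)$, and symmetrically $-2(1-\tau)$ as $c\to -\infty$.

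It then remains to choose $c$ so that the mean-value constraint $\frac{1}{\abs{\Sigma}}\int_\Sigma F(w_c + c + v + \mu)\,\vol = g$ holds as well; since \eqref{eq:vav-size-constraint} places $g$ strictly inside $(-2(1-\tau),\,2(1+\tau))$, an intermediate-value argument would conclude. This gluing step is the part I expect to be hardest: $J_c$ need not be strictly convex when the first nonzero eigenvalue of $-\laplacian$ is $\leq 1$, so $w_c$ need not be unique and the ``charge function'' $c\mapsto\frac{1}{\abs{\Sigma}}\int_\Sigma F(w_c + c + v + \mu)\,\vol$ is not obviously continuous. I would get around this either by recasting the zero-mean problem as a fixed-point equation, computing its Leray--Schauder degree, and extracting a connected branch of solutions via theorem~\ref{thm:ls-deg-continuum}, or by running the existence argument as a continuation in $\tau$: at $\tau = 0$ the solution exists and is unique by Sibner--Sibner--Yang~\cite{sibner2010}, so the linearised operator is invertible and the Leray--Schauder index is $\pm 1$, and the uniform a priori bounds above let the solution be continued along the whole open $\tau$-interval on which \eqref{eq:vav-size-constraint} holds, the constraint itself preventing the branch from escaping to infinity.

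Uniqueness and regularity are comparatively soft. If $h_1, h_2$ both solve \eqref{eq:taubes} for the same cores, then $\psi = h_1 - h_2$ extends smoothly across the cores (the logarithmic singularities cancel) and satisfies $-\laplacian\psi = F(h_1) - F(h_2)$; integrating gives $\int_\Sigma(F(h_1) - F(h_2))\,\vol = 0$ and testing with $\psi$ gives $\int_\Sigma\abs{\grad\psi}^2\,\vol = \int_\Sigma(F(h_1) - F(h_2))\psi\,\vol$, and strict monotonicity of $F$ together with a comparison argument on the nodal set $\{h_1 > h_2\}$, exactly as in~\cite{sibner2010} for $\tau = 0$, forces $\psi\equiv 0$. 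Finally, boundedness of $F$ puts the right-hand side of \eqref{eq:taubes-reg} in $\Lsp^\infty(\Sigma)\subset\Lsp^2(\Sigma)$, so $\tilde h\in\Hsp^2(\Sigma)\subset C^0(\Sigma)$; since $v$ is smooth away from the cores, bootstrapping the equation puts $\tilde h$, and hence $h = \tilde h + v + \mu$, in $C^\infty(\Sigma\setminus(\vortexSet\cup\antivortexSet))$, in particular of class $C^2$ there, as claimed.
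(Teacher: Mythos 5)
Your decomposition runs in the opposite order to the paper's, and that is exactly where the gap sits. The paper writes $\tilde h = u + \tilde c$ with $u$ of zero mean and solves for the \emph{constant} first: for each fixed $u$ the map $c \mapsto \int_\Sigma F_0(v+u+c)\,\mathrm{Vol}$ is strictly increasing and continuous, so under \eqref{eq:vav-size-constraint} there is a \emph{unique} $\tilde c(u)$ killing the mean; it then proves weak sequential continuity of $u\mapsto\tilde c(u)$ and runs Sch\"afer's fixed-point alternative on the zero-mean variable. You fix the constant and minimise over the zero-mean part, which produces a solution $w_c$ of the projected equation but, as you yourself flag, not a single-valued continuous ``charge function'' $c\mapsto\frac{1}{\abs{\Sigma}}\int_\Sigma F(w_c+c+v+\mu)\,\mathrm{Vol}$, since $J_c$ need not be strictly convex. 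That is a genuine gap, and neither proposed repair closes it as stated. The continuation in $\tau$ fails at the starting line: Bradlow's bound at the given $\tau$ does not imply Bradlow's bound at $\tau=0$ (take $\tau<0$ and an integer $k_+-k_-$ with $\abs{\Sigma}/2\pi < k_+-k_- < (1-\tau)\abs{\Sigma}/2\pi$), so Sibner--Sibner--Yang need not supply an initial solution; the admissible $\tau$-interval for fixed $k_\pm$ need not contain $0$ at all. The degree-theoretic repair via theorem~\ref{thm:ls-deg-continuum} is the right instinct, but that theorem demands a parameter value $c_0$ at which the projected problem has an isolated solution of nonzero index in some $\overline{\mathcal U}$ containing no other solution --- which, once $\mathcal U$ is taken large enough to exploit the uniform a priori bound and get degree one, is precisely the uniqueness question you set out to avoid. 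One can plausibly prove uniqueness of $w_{c_0}$ for $c_0$ large by showing $F'(w+c_0+v+\mu)$ is small outside exponentially small neighbourhoods of the vortex points, but that estimate is absent and is where the real work lies.

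Two smaller points. With the paper's (geometer's) convention $\int\langle\grad u,\grad\varphi\rangle = \int\varphi\,\laplacian u$, the Euler--Lagrange equation of $\half\norm{\grad w}^2 - \int\Phi$ is $-\laplacian w = \overline{F}-F$, the negative of what you want; the correct functional is $\half\norm{\grad w}^2 + \int_\Sigma\Phi(w+c+v+\mu)\,\mathrm{Vol}$, which is in fact better behaved since $\Phi$ is bounded below and coercivity is then immediate. Everything else is sound and consistent with the paper where it overlaps: necessity of \eqref{eq:vav-size-constraint} by integration, the uniform gradient bound from $J_c(w_c)\leq J_c(0)$, the limits $\overline{F}\to 2(1\pm\tau)$ as $c\to\pm\infty$, uniqueness from monotonicity of $F$ (the energy identity $\norm{\grad\psi}^2 = -\int F'(\xi)\,\psi^2\leq 0$ finishes it directly, with no nodal-set argument), and the elliptic bootstrap to $C^2$.
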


We prove the theorem at the end of the section. The inequality 
\eqref{eq:vav-size-constraint} is a Bradlow type restriction
\cite{bradlow1990}, constraining the relative number of vortices and
antivortices on a compact surface.
It arises naturally from the second Bogomolny 
equation~\eqref{eq:bog2}, since the total magnetic flux is,
\begin{align}
2\pi(k_+ - k_-) &= 
\int_{\surface} B \nonumber\\
&= \int_\surface \lproduct{N,\hf}\,\vol - 
\tau\,|\surface|,
\end{align}
where $N$ is the north pole section on the target sphere 
and hence $\lproduct{N,\phi}\in [-1, 1]$, it follows 
that~\eqref{eq:vav-size-constraint} is a necessary condition for a pair 
$(\hf, \gp)$ of a field and a connection  to be a solution to the Bogomolny 
equations.

%%%% ESTO HAY QUE CAMBIARLO 
We will define 
 the function $\Fstable: \reals \to \reals$, 
\begin{equation}
\Fstable(t) = 2 \left(
\frac{e^t - 1}{e^t + 1} + \tau
\right),
\end{equation}
and the constant,
\begin{equation}
\label{eq:fstable-pm-infty}
\Fstable^{\pm\infty} = 2(\pm 1 + \tau),
\end{equation}
in order to simplify notation in the proof of theorem~\ref{thm:main}. 
 Let us  define 
$\Fstable_0: \reals \to \reals$ as the function,

\begin{equation}
\label{eq:param-fstable-sphere}
\begin{aligned}[b]
  \Fstable_0(t) &= 2\brk(\frac{e^t - 1}{e^t + 1} + \tau) + \frac{4\pi(k_+ -
    k_-)}{\abs\surface }\\
  &= \frac{4 e^t}{e^t + 1} - C_0, 
\end{aligned}
\end{equation}
where the constant $C_0$ is,
\begin{align}
C_0 = 2(1 - \tau) - \frac{4\pi}{\abs\surface}(k_+ - k_-).
\end{align}

For a given configuration of non-coalescent vortices, recall the function 
$v : \surface \to \reals\cup\set{\pm\infty}$, defined on 
equation~\eqref{eq:fn-v}, if $u$ is the solution of the Taubes equation, and 
we define $\tilde h = u - v$, then  
the regularized Taubes equation on a compact surface, 
equation~\eqref{eq:taubes-reg}, is equivalent to,
%We start defining $h = v + \tilde h$, then Taubes' equation has a
%solution if and only if $\tilde h$ is a solution to
%
%\begin{align}
%  \label{eq:regular-taubes-bps}
%  -\laplacian \tilde h = \frac{4 e^{v + \tilde h}}{e^{v +
%      \tilde h} + 1} - C_0.
%\end{align}
\begin{align}
\label{eq:regular-taubes-bps}
-\laplacian \tilde h = \Fstable_0(v + \tilde h).
\end{align}

%We call equation~\eqref{eq:regular-taubes-bps} the regular Taubes'
%equation. 
 Equation~\eqref{eq:regular-taubes-bps} shows why Bradlow's bound is 
necessary: If a smooth solution exists, by the divergence 
theorem a necessary condition for $C_0$ is,
\begin{equation}
\label{eq:abstract-c0-cond}
    C_0 = \frac{1}{|\surface|}\,\int_\surface\frac{4\,e^{v + \tilde h}}
    {e^{v + \tilde h} + 1}\,\vol \in [0, 4],
\end{equation}
%
%Condition \eqref{eq:vav-size-constraint}
Bradlow's bound is equivalent to
\eqref{eq:abstract-c0-cond}. Let 
\begin{equation}
  \label{eq:Xsp-def}
  \Xsp = \left\{
    u \in \Hsp^1(\surface) \,:\, \int_\surface u \,\vform = 0
  \right\}
\end{equation}
be the subspace of Sobolev's space $\Hsp^1(\surface)$ of functions of zero
average. Since $\surface$ is compact, $\Hsp^1(\surface)$ can be
decomposed as 
\begin{equation}
  \label{eq:cann-decomp}
  \Hsp^1(\surface) = \Xsp \oplus \reals.
\end{equation}

Any $h \in\Hsp^1(\surface)$ can be decomposed as a pair
$(u, \ctilde) \in \Xsp\times\reals$, such that $h = u +
\ctilde$. Hence, $u$ is a solution to the equation,
\begin{equation}
  \label{eq:htilde}
  -\laplacian u = \Fstable_0(v + u + \ctilde). 
\end{equation}

%In~\cite{han2019coexisting} the model was
%further extended to two interacting fields. 
We will use
Leray-Schauder theory to prove existence of solutions to the Taubes 
equation as in the proof of Sibner et 
al.~\cite{sibner2010} for $\tau = 0$. Given $\htilde \in \Xsp$, the
function
%
%\begin{equation}
%  \label{eq:integral-eqn}
%  c \mapsto \int_{\surface} \Fstable(v + \htilde + c) \,\vol
%  - C_0\,\abs{\surface}
%\end{equation}
%
\begin{equation}
\label{eq:integral-eqn}
c \mapsto \int_{\surface} \Fstable_0(v + \htilde + c) \,\vol,
\end{equation}
is a well defined, monotonous, continuous function. By Bradlow's
bound, there exists a unique number  
$\ctilde$ such that
%
%\begin{equation}
%  \label{eq:ctilde-defs}
%  \int_{\surface} \Fstable(v + \htilde + \ctilde) \,\vol
%  = C_0\,\abs{\surface}.
%\end{equation}
%
\begin{equation}
\label{eq:ctilde-defs}
\int_{\surface} \Fstable_0(v + \htilde + \ctilde) \,\vol = 0.
\end{equation}

\begin{lemma}\label{lem:ctilde-weak-cont}
  The function $ \mathcal{C}: \Xsp \to \reals$, $\mathcal{C}(\tilde h) = \tilde 
  c$ 
  is 
  weakly 
  sequentially continuous in
  $\Xsp$. 
\end{lemma}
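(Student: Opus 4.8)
The plan is to use three elementary properties of the nonlinearity $\Fstable_0(t) = \frac{4e^t}{e^t+1} - C_0$: it is bounded, continuous, and strictly increasing, with $\lim_{t\to -\infty}\Fstable_0 = -C_0$ and $\lim_{t\to+\infty}\Fstable_0 = 4 - C_0$, where Bradlow's bound \eqref{eq:vav-size-constraint} is equivalent to $C_0 \in (0,4)$. Fix a sequence $\htilde_n \wto \htilde_*$ weakly in $\Xsp$. Since $\Xsp \subset \Hsp^1(\surface)$ and $\surface$ is compact, Rellich--Kondrachov makes $\{\htilde_n\}$ precompact in $\Lsp^2(\surface)$; as the weak $\Hsp^1$-limit is unique, every $\Lsp^2$-convergent subsequence has limit $\htilde_*$, so $\htilde_n \to \htilde_*$ strongly in $\Lsp^2(\surface)$. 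Write $\ctilde_n = \mathcal{C}(\htilde_n)$, characterised by $\int_\surface \Fstable_0(v + \htilde_n + \ctilde_n)\,\vol = 0$ as in \eqref{eq:ctilde-defs}.

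First I would show $\{\ctilde_n\}$ is bounded. If not, then along a subsequence $\ctilde_n \to +\infty$ (the case $-\infty$ being symmetric). Passing to a further subsequence we may assume $\htilde_n \to \htilde_*$ pointwise almost everywhere; since $v$ is finite a.e. (its only singularities are the logarithmic ones at the cores, a null set), we get $v + \htilde_n + \ctilde_n \to +\infty$ a.e., hence $\Fstable_0(v + \htilde_n + \ctilde_n) \to 4 - C_0$ a.e. Because $\Fstable_0$ is bounded, dominated convergence gives $\int_\surface \Fstable_0(v + \htilde_n + \ctilde_n)\,\vol \to (4 - C_0)\,\abs{\surface} > 0$, contradicting that this integral vanishes for every $n$. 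So $\{\ctilde_n\}$ is bounded. Next, let $\ctilde_{n_k} \to c_*$ be any convergent subsequence; after a further subsequence with $\htilde_{n_k} \to \htilde_*$ a.e., dominated convergence yields $\int_\surface \Fstable_0(v + \htilde_* + c_*)\,\vol = \lim_k \int_\surface \Fstable_0(v + \htilde_{n_k} + \ctilde_{n_k})\,\vol = 0$, so $c_* = \mathcal{C}(\htilde_*)$ by uniqueness of the constant in \eqref{eq:ctilde-defs}. Since $\{\ctilde_n\}$ is bounded and every convergent subsequence has the same limit $\mathcal{C}(\htilde_*)$, the whole sequence converges to $\mathcal{C}(\htilde_*)$, which is the claim.

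The only mildly delicate point is the boundedness of $\{\ctilde_n\}$, and this is precisely where Bradlow's bound $C_0 \in (0,4)$ is used: it forces the limiting integrals $(4-C_0)\abs{\surface}$ and $-C_0\abs{\surface}$ to have strict, opposite signs, producing the contradiction. Everything else is a routine extraction-of-subsequences argument together with dominated convergence, using only that $\Fstable_0$ is bounded and continuous and that $v$ is finite a.e.
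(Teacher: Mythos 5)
Your proof is correct and follows the same overall strategy as the paper's: rule out $\ctilde_n \to \pm\infty$ by comparing the vanishing integrals $\int_\surface \Fstable_0(v+\htilde_n+\ctilde_n)\,\vol = 0$ with the strictly signed limits $(4-C_0)\abs{\surface}>0$ and $-C_0\abs{\surface}<0$ forced by Bradlow's bound, then identify the limit of the bounded sequence $\set{\ctilde_n}$. The technical execution differs in two places, both to your advantage in economy: where the paper invokes Egorov's theorem to get uniform convergence off a small set and then exploits monotonicity of $\Fstable_0$, you pass to a pointwise a.e.\ convergent subsequence and apply dominated convergence directly; and where the paper identifies the limit by a contradiction argument built on the Lipschitz estimate $\abs{\Fstable_0(s)-\Fstable_0(t)}\leq \sup\Fstable'\cdot\abs{s-t}$, you pass to the limit in the defining integral identity and appeal to the strict monotonicity (hence uniqueness of the zero) of $c\mapsto \int_\surface\Fstable_0(v+\htilde_*+c)\,\vol$, finishing with the standard subsequence principle. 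Both routes are sound; yours uses slightly less machinery, while the paper's Lipschitz estimate is reused verbatim in the subsequent compactness lemma for the operator $T$, which is presumably why the author set it up that way.
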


\begin{proof}
We will highlight the steps different from~\cite{sibner2010} in the
general case. If $\htilde_n \wto \htilde_0$ in $\Xsp$, then 
$\htilde_n$ is a bounded sequence in $\Xsp$, and by the Rellich lemma,
after passing to a sub-sequence if necessary, we can assume 
 $\htilde_n \to \htilde_0$ in $\Lsp^p$ for $p \geq 1$. Let $\ctilde_n =
 \ctilde(\htilde_n)$, $\ctilde_0 = \ctilde(\htilde_0)$ and assume
 towards a contradiction that $\ctilde_n$ does not converge to
 $\ctilde_0$. In this case we can assume the existence of a constant
 $\epsilon_0$ such that,
 \begin{equation}
   \label{eq:cn-conv-contradiction}
   \abs{\ctilde_n - \ctilde_0} \geq \epsilon_0,
 \end{equation}
 for all $n$. We claim the sequence $\left\{\ctilde_n\right\}$ 
 is bounded. Assume the contrary, after passing to a sub-sequence if
 necessary, we can assume the limit
 $\ctilde_n \to \infty$. Let $K$ be any bound  
 for $\Fstable_0$. By Egorov's theorem~\cite{liebanalysis} 
 and the strong convergence in $\Lsp^p$,
 there exists a measurable set $\surface_{\epsilon}$ and a constant
 $K_{\epsilon}$, such that $\abs{\surface_{\epsilon}} < \epsilon
 K^{-1}$, the sequence $\htilde_n$ converges uniformly to $\htilde_0$
 in $\surface \setminus \surface_{\epsilon}$ and $\abs{\htilde_n} \leq
 K_{\epsilon}$ in $\surface\setminus\surface_{\epsilon}$.

 On the one hand,
 the equality
 \begin{equation}
   \int_{\surface\setminus\surface_{\epsilon}}\Fstable_0(v +
     \ctilde_n + \htilde_n) \,\vol =
   -\int_{\surface_{\epsilon}}\Fstable_0(v + \ctilde_n +
     \htilde_n) \,\vol,
 \end{equation}
 implies,
 \begin{equation}
   \label{eq:egorov-integral-impl}
   \abs*{\int_{\surface\setminus\surface_{\epsilon}}\Fstable_0(v +
     \ctilde_n + \htilde_n)  \,\vol } \leq \epsilon,
 \end{equation}
 and on the other hand, by monotony of $\Fstable_0$, 
 \begin{equation}
   \label{eq:monotony-fstable0-integral}
   \int_{\surface\setminus\surface_{\epsilon}}\Fstable_0(v + \ctilde_n
   - K_{\epsilon}) \,\vol \leq 
   \int_{\surface\setminus\surface_{\epsilon}}\Fstable_0(v + \ctilde_n
   + \htilde_n) \,\vol.
 \end{equation}
 
Taking the limit as $n \to \infty$, from these two equations 
%\eqref{eq:egorov-integral-impl} and
%\eqref{eq:monotony-fstable0-integral}
we have,
\begin{equation}
\label{eq:int-finty-surface-minus-eps}
(\Fstable^{\infty} - C_0)(\abs\surface - \abs{\surface_{\epsilon}}) \leq \epsilon.
\end{equation}
Hence,
\begin{equation}
   \label{eq:fstable-inf-bound-contradiction}
(\Fstable^{\infty} - C_0) \abs\surface \leq \epsilon + 
K\abs{\surface_{\epsilon}} < 2\epsilon,
\end{equation}
%
%
%%the inequality,
%
%% \begin{equation}
%%   \label{eq:fstable-inf-bound-contradiction}
%%   (\Fstable^{\infty} - C_0)\cdot\abs{\surface} < \epsilon\,(1 +
%%   \Fstable^{\infty} - C_0),
%% \end{equation}
%
%
a contradiction since $\epsilon$ is arbitrary. A similar
argument shows $\ctilde_n$ is bounded below. Therefore, 
$\ctilde_n$ is a bounded sequence of real numbers. By the 
Bolzano-Weierstrass theorem, we can assume towards a contradiction
$\ctilde_n \to \ctilde$, but $\ctilde \neq \ctilde_0$
by~\eqref{eq:cn-conv-contradiction}. Let  
\begin{equation}
  \label{eq:alpha-def}
  \alpha = \abs*{
    \int_{\surface}\Fstable_0(v + \htilde_0 + \ctilde) \,\vol
  } > 0,
\end{equation}
bearing in mind the definition of $\htilde_n$, 
\begin{multline}
  \alpha = \abs*{
    \int_{\surface}\Fstable_0(v + \htilde_0 + \ctilde)
      - \Fstable_0(v + \htilde_n + \ctilde_n)
      \,\vol
    }\\
  \leq
    \sup_{t \in \reals}{\set{\Fstable'(t)}}\cdot\left(
     \abs{\ctilde - \ctilde_n}\cdot\abs{\surface} + C\,\norm{\htilde_0 -
       \htilde_n}_0 \cdot \abs{\surface}^{1/2}
    \right) \to 0.\label{eq:alpha-bound}
\end{multline}
%
% By equation \eqref{eq:alpha-bound},

Hence $\alpha = 0$, a
contradiction. Therefore \eqref{eq:cn-conv-contradiction} is 
false and $\ctilde_n \to \ctilde_0$. This proves the lemma. 
\end{proof}

Let us consider the operator $T: \Xsp \to \Xsp$, mapping each
$\htilde\in\Xsp$ to the weak solution $H \in \Xsp$ of the equation
\begin{equation}
  \label{eq:T-op}
  -\laplacian H = \Fstable_0(v + \ctilde + \htilde).
\end{equation}

Given that $\int_\surface \Fstable_0(v + \ctilde + \htilde)\,\vol = 0$, 
existence of a weak $\Hsp^1$ solution to~\eqref{eq:T-op} is a well established 
analysis fact~\cite[Thm.~4.7]{aubin2013some}, moreover, any two weak 
solutions to the equation differ by a constant, by taking $H \in \Xsp$ we 
guarantee it is unique. 

Recall a compact operator is an operator that maps bounded sequences
to sequences with convergent subsequences. We aim to use Sch\"afer's
alternative, theorem~\ref{thm:fixed-point-alternative}, to prove $T$
has a fixed point.

\begin{lemma}
  The operator $T:\Xsp \to \Xsp$ is compact in the strong topology of 
  $\Xsp$ as a subspace of $\Hsp^1(\surface)$.
\end{lemma}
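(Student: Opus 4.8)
The plan is to verify, against the definition of a compact operator, that $T : \Xsp \to \Xsp$ is continuous in the $\Hsp^1$-topology and that it carries bounded sets to precompact sets, working throughout in the strong topology of $\Xsp$ viewed as a closed subspace of $\Hsp^1(\surface)$. The single observation that powers everything is that the nonlinearity $\Fstable_0$ is bounded with bounded derivative, so that for \emph{any} $\htilde \in \Xsp$ the right-hand side $\Fstable_0(v + \ctilde + \htilde)$ of~\eqref{eq:T-op} lies in $\Lsp^{\infty}(\surface) \subset \Lsp^2(\surface)$ with a bound that does not feel the logarithmic singularities of $v$; the solution operator of $-\laplacian$ on $\Xsp$ is then bounded from $\Lsp^2$ into $\Hsp^1$ by Poincar\'e's inequality, its well-posedness being the fact cited from~\cite[Thm.~4.7]{aubin2013some}.

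Concretely, I would start from an arbitrary bounded sequence $\{\htilde_n\} \subset \Xsp$. Since $\Xsp$ is a closed subspace of a Hilbert space, the Banach--Alaoglu theorem gives a subsequence with $\htilde_n \wto \htilde_*$ weakly in $\Xsp$, and the Rellich--Kondrachov theorem gives a further subsequence with $\htilde_n \to \htilde_*$ strongly in $\Lsp^2(\surface)$. By lemma~\ref{lem:ctilde-weak-cont} the associated constants converge, $\ctilde_n = \mathcal{C}(\htilde_n) \to \ctilde_* = \mathcal{C}(\htilde_*)$. Writing $f_n = \Fstable_0(v + \ctilde_n + \htilde_n)$ and $f_* = \Fstable_0(v + \ctilde_* + \htilde_*)$, the mean value theorem together with the uniform bound on $\Fstable_0'$ gives
\begin{align*}
  \norm{f_n - f_*}_{\Lsp^2}
  \leq \norm{\Fstable_0'}_{\Lsp^{\infty}} \brk(
  \abs{\ctilde_n - \ctilde_*}\,\abs{\surface}^{1/2}
  + \norm{\htilde_n - \htilde_*}_{\Lsp^2}) \to 0 .
\end{align*}
Comparing $H_n := T\htilde_n$ with $H_* := T\htilde_*$, which by construction is the zero-average weak solution of $-\laplacian H_* = f_*$, I would subtract the two weak formulations and test against $H_n - H_* \in \Xsp$, obtaining $\norm{\grad(H_n - H_*)}_{\Lsp^2}^2 = \lproduct{f_n - f_*,\, H_n - H_*}_{\Lsp^2}$; Cauchy--Schwarz and Poincar\'e's inequality then yield $\norm{H_n - H_*}_{\Hsp^1} \leq C\,\norm{f_n - f_*}_{\Lsp^2} \to 0$. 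This establishes precompactness of the image of bounded sets. For continuity one runs the same chain: if $\htilde_n \to \htilde$ strongly then in particular $\htilde_n \wto \htilde$, so $T\htilde_n \to T\htilde$ in $\Hsp^1$, and since $\Xsp$ is a metric space this suffices.

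I do not expect a genuine obstacle here: the analytically delicate input — weak sequential continuity of $\htilde \mapsto \mathcal{C}(\htilde)$, where the monotonicity of $\Fstable_0$ and Egorov's theorem do the real work — has already been isolated in lemma~\ref{lem:ctilde-weak-cont}, so the present proof is a standard elliptic compactness argument downstream of it. The one thing to keep an eye on is that the singular function $v$ must never be exposed except inside the bounded composition $\Fstable_0(v + \cdot)$, so that no integrability question arises; once that is respected the estimates above are routine.
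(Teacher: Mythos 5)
Your proposal is correct and follows essentially the same route as the paper's proof: pass to a subsequence converging weakly in $\Xsp$ and strongly in $\Lsp^2$, invoke the weak sequential continuity of $\htilde \mapsto \ctilde$, use the Lipschitz bound on $\Fstable_0$ to control the right-hand sides, test the difference of the weak formulations against $H_n - H_*$, and finish with Cauchy--Schwarz and Poincar\'e. The only cosmetic difference is that you isolate the $\Lsp^2$ convergence of the source terms as a separate step and add an explicit remark on continuity, both of which are harmless.
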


\begin{proof}
Let $\{\tilde h_n\} \subset \Xsp$ be a bounded sequence, after
passing to a subsequence if necessary, we can assume $\htilde_n \wto
\htilde_0$ in $\Xsp$ and strongly in $\Lsp^2$. Let $H_n = T\htilde_n$,
$n \geq 0$, by lemma~\ref{lem:ctilde-weak-cont} $\ctilde_n \to
\ctilde_0$. Moreover, 
\begin{align}
  \nonumber
  \norm{\grad H_n - \grad H_0}_{\Lsp^2}^2
  &= \int_{\surface} (H_n - H_0)\, \laplacian (H_n  - H_0)\,\vol\\
  \nonumber
  &= \int_{\surface} (H_n - H_0)\left(\Fstable(v + \ctilde_n + \htilde_n)
- \Fstable(v + \ctilde_0 + \htilde_0)\right) \,\vol\\
\nonumber
&\leq \sup_{t \in \reals} \set{F'(t)}\,\int_{\surface} \left(
  \abs{\ctilde_n - \ctilde_0} + \abs{\htilde_n - \htilde_0}\right)
  \abs{H_n - H_0}\,\vol\\
&\leq \sup_{t \in \reals} \set{F'(t)}\, \left(
  \abs{\ctilde_n - \ctilde_0}\cdot\abs{\surface}^{1/2} + \norm{\htilde_n -
\htilde_0}_{\Lsp^2}
\right) \norm{H_n - H_0}_{\Lsp^2}. 
\end{align}
The last inequality is a consequence of the Cauchy-Schwarz
inequality. By the Poincar\'e inequality, there are constants $C_1$,
$C_2$ such that 
\begin{equation}
  \label{eq:Hn-H0-X-convergence}
  \norm{H_n - H_0}_{\Hsp^1} \leq C_1\abs{\ctilde_n - \ctilde_0}
  + C_2\norm{\htilde_n - \htilde_0}_{\Lsp^2} \to 0.
\end{equation}
This proves compactness of $T$.
\end{proof}

 Let us consider the set
\begin{equation}
  \label{eq:wt-def}
S = \set{\htilde \in \Xsp\, : \, 
    \exists\, t \in [0, 1]\;s.t.\;
    \htilde = t\cdot
  T\,\htilde}. 
\end{equation}

If $\htilde \in S$, then it is a solution of the equation,
\begin{equation}
  \label{eq:htilde-t-sol}
  \laplacian \htilde = t \Fstable_0(v + \ctilde + \htilde),
\end{equation}
where $\tilde c = \mathcal C(\tilde h)$ was defined on 
lemma~\ref{lem:ctilde-weak-cont}.

By the Cauchy-Schwarz inequality,
\begin{equation}
  %\label{eq:ht-energy-bound}
  \begin{aligned}
    \norm{\grad\htilde_t}_{\Lsp^2}^2 &= \lproduct{\tilde h_t, \laplacian
      \tilde h_t}
    \leq C
    \int_{\surface}\abs{\htilde_t} \,\vol
    \leq C\,
  \abs{\surface}^{1/2}\,\norm{\htilde_t}_{\Lsp^2}.
  \end{aligned}
\end{equation}
By the Poincar\'e 
inequality we conclude the existence of a constant  $C$ such that
\begin{equation}
  \label{eq:ht-uniform-bound}
  \norm{\htilde_t}_{\Hsp^1} \leq C.
\end{equation}

%\begin{proposition}
%  For any compact surface $\surface$, any pair of integers $(k_+,
%  k_-)$ such that Bradlow's bound holds and any choice of disjoint divisors of 
%  degrees $k_+$, $k_-$, there exist a unique solution
%  to Taube's equation.
%\end{proposition}

\begin{proof}[Proof of Theorem~\ref{thm:main}]
Since $S$ is bounded, by Sch\"afer's alternative there is a fixed point
$\htilde$ of $T$. Let $h = \htilde + \ctilde$, where $\ctilde = \mathcal 
C(\tilde h)$, then $h$ is a weak
solution to the regularised Taubes equation. By the elliptic
estimates $h$ is also a strong solution in $\Hsp^2$. We follow a
bootstrap argument to prove $h \in C^2$: By Sobolev's embedding we
know $h$ is continuous, hence $h \in \Lsp^p$ for any $p \geq
1$. By~\eqref{eq:regular-taubes-bps} and the elliptic estimates $h 
\in \Wsp^{2,p}$ for some $p > 2$, once more by Sobolev's embedding $h
\in C^1$. Let $u = h + v$, the derivative $d h \in
\Gamma(T^{*}\surface)$ is a weak solution of the linearized equation,
\begin{align}
  -\laplacian\,d h = \frac{4\,e^u}{(e^u + 1)^2}\,d h +
  \frac{4\,e^u}{(e^u + 1)^2}\,d v.
\end{align}

The potential function $e^u(e^u + 1)^{-2}$ is continuous and with
zeros of the same order than the singularities of $d v$ at the cores,
hence $\laplacian(d h) \in \Lsp^p$, $p > 2$. Since $d h$ is 
continuous, it is also an $\Lsp^p$ form. By the elliptic estimates and
Sobolev's embedding we conclude $h \in C^2$. Since $\Fstable$
is monotonous, $h$ is unique by the strong maximum
principle. Finally, $u$ is the necessarily unique solution to the 
Taubes equation.
\end{proof}

% For later application, we will quote here a result that can be found in \cite{sibner593}, section 6. Given $\htilde \in \Xsp$, the function

% \begin{equation}
%   \label{eq:integral-eqn}
%   c \mapsto \int_{\surface} \Fstable(v + \htilde + c) \star 1
%   - C_0\abs{\surface}
% \end{equation}

% is a well defined, monotonous, continuous function. Since $C_0$ satisfies
% condition \eqref{eq:abstract-c0-cond}, there exists a unique number 
% $\ctilde$ such that

% \begin{equation}
%   \label{eq:ctilde-defs}
%   \int_{\surface} \Fstable(v + \htilde + \ctilde) \star 1
%   = C_0\abs{\surface}.
% \end{equation}

% The function $\htilde \mapsto \ctilde$ is a weakly
% continuous function $\Xsp \to \R$. 

\section{Incompleteness of the moduli space}
\label{sec:compact-vav-eqn}

In \cite{romao2018} Rom\~ao and Speight prove that the moduli space of 
symmetric 
vortex-antivortex pairs on the sphere is incomplete. In this section we extend 
their result to general $\tau$ on a compact manifold. In order to prove this, 
we find bounds for 
the derivatives $\partial_{z_j}\nabla h_\epsilon$ on a holomorphic chart, 
where the cores are at positions $z_1, z_2$.
Let $\mu = \log\, (1 - \tau) - \log\, (1 + \tau)$, first we prove a pair of 
technical lemmas. 

\begin{lemma}
      Let $\diag$ be the diagonal set of $\surface \times \surface$ and 
      let $\set{\vb x_n} \subset \moduli^{1,1}(\surface)$ be a
      sequence such that $\vb x_n \to \vb x \in \diag$ in the product metric. 
      Let $\tilde h_n$ be the
      solution of the regular Taubes equation corresponding to each
      $\vb x_n$, then
      $\tilde h_n \wto \mu$ in $\Hsp^1$ and $\tilde h_n \to \mu$
      strongly in $\Lsp^2$.
    \end{lemma}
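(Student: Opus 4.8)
The plan is to combine the uniform bounds on $\tilde h_n$ coming from the maximum principle with a compactness argument, and then use uniqueness of the solution to the limiting (coalesced) problem to identify the weak limit.

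First I would establish a uniform $\Hsp^1$ bound on the sequence $\{\tilde h_n\}$. Writing $\vb x_n = (z_n, w_n)$ with $z_n, w_n$ the vortex and antivortex positions, the function $v_n = v_{z_n} - v_{w_n}$ (in the plane-model sense, or via Green's functions in the compact case) has the property that $v_n \to 0$ in $\Lsp^p$ for $p$ in a suitable range as $z_n, w_n \to x$, since the two logarithmic singularities cancel in the limit. Each $\tilde h_n$ solves $-\laplacian \tilde h_n = F_0(v_n + \tilde h_n)$ with $F_0$ bounded, so pairing with $\tilde h_n - \overline{\tilde h_n}$ and using Poincaré gives $\|\tilde h_n - \overline{\tilde h_n}\|_{\Hsp^1} \leq C$. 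For the average, the constraint $\int_\surface F_0(v_n + \tilde h_n)\,\vol = 0$ together with the monotonicity and boundedness of $F_0$ forces $\overline{\tilde h_n}$ to stay in a bounded interval (this is exactly the kind of argument used in lemma~\ref{lem:ctilde-weak-cont}). Hence $\{\tilde h_n\}$ is bounded in $\Hsp^1$.

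Next, by Banach--Alaoglu and Rellich--Kondrachov, after passing to a subsequence $\tilde h_n \wto \tilde h_*$ in $\Hsp^1$ and $\tilde h_n \to \tilde h_*$ strongly in $\Lsp^2$ (and in $\Lsp^p$ for any $p$ by interpolation with the $\Hsp^1$ bound and Sobolev embedding). Testing the equation against $\varphi \in C^\infty(\surface)$, using $v_n \to 0$ in $\Lsp^2$, the strong $\Lsp^2$ convergence of $\tilde h_n$, and the mean value theorem estimate $\|F_0(v_n + \tilde h_n) - F_0(\tilde h_*)\|_{\Lsp^2} \leq C(\|v_n\|_{\Lsp^2} + \|\tilde h_n - \tilde h_*\|_{\Lsp^2}) \to 0$, I conclude that $\tilde h_*$ is a weak solution of $-\laplacian \tilde h_* = F_0(\tilde h_*)$ on all of $\surface$ (no Dirac sources, since the sources have disappeared in the limit). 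By elliptic regularity $\tilde h_*$ is smooth, and since $F_0$ is strictly monotone the maximum principle gives that the only solution is the constant with $F_0(\tilde h_*) = 0$, i.e. $\tilde h_* = \mu$ (recall $F(\mu) = 0$, and $F_0$ differs from $F$ by the constant $4\pi(k_+ - k_-)/|\surface|$ which for $k_+ = k_- = 1$ vanishes; more generally one checks $\tilde h_*$ is the appropriate constant — here $k_+ = k_- = 1$ so it is exactly $\mu$). Since every subsequence has a further subsequence converging weakly in $\Hsp^1$ and strongly in $\Lsp^2$ to the same limit $\mu$, the full sequence converges: $\tilde h_n \wto \mu$ in $\Hsp^1$ and $\tilde h_n \to \mu$ in $\Lsp^2$.

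The main obstacle I anticipate is controlling the average $\overline{\tilde h_n}$ uniformly: one must rule out $\overline{\tilde h_n} \to \pm\infty$, and the cleanest route is to adapt the Egorov-theorem argument from the proof of lemma~\ref{lem:ctilde-weak-cont} — if $\overline{\tilde h_n} \to +\infty$ then $F_0(v_n + \tilde h_n) \to F^{+\infty} - C_0 > 0$ off a small exceptional set, contradicting the zero-integral constraint, and similarly for $-\infty$. A secondary technical point is justifying $v_n \to 0$ in $\Lsp^2$ as the cores coalesce; in the plane model this is the computation $\|v_{z_n} - v_{w_n}\|_{\Lsp^2} \to 0$ when $z_n - w_n \to 0$, and in the compact case it follows from the local expansion $G(x,y) = (2\pi)^{-1}\log d(x,y) + \tilde G(x,y)$ in equation~\eqref{eq:green-logd-smoothpart}, so that the singular logarithmic parts of $v_{z_n}$ and $v_{w_n}$ cancel in $\Lsp^2$ as $z_n, w_n \to x$.
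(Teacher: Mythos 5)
Your proposal is correct and follows essentially the same route as the paper's proof: split $\tilde h_n$ into its zero-average part (bounded via the equation, since $F_0$ is bounded) and its average (bounded by contradiction using the integral constraint $\int_\surface F_0(v_n+\tilde h_n)\,\mathrm{Vol}=0$ and the pointwise collapse of $v_n$ away from the collision point), then extract a weakly convergent subsequence, identify the limit as the source-free equation $-\laplacian h_*=F(h_*)$, and invoke the maximum principle plus the subsequence trick. The only differences are cosmetic — the paper gets the zero-average bound from Schauder estimates (hence a $C^0$ bound) rather than Poincar\'e pairing, and identifies the limit via dominated convergence on the bounded integrand $F(v_n+\tilde h_n)$ rather than via $\|v_n\|_{\mathrm{L}^2}\to 0$ and the mean value theorem — and both variants are sound.
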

    
    \begin{proof}
        Let $v_n = v_{\vb x_n}$ for  
      each point $\vb x_n$ in the given sequence. Let us
      decompose each 
      solution to the regular Taubes equation as $\tilde h_n = u_n +
      \tilde c_n \in \Xsp \oplus \reals$. We claim the sequence
      $\set{ \tilde c_n}$ is bounded. Assume towards a contradiction
      $\tilde c_n \to \infty$. Notice that in the vortex-antivortex
      case the functions $\fnF$ and $\fnF_0$ coincide. We know that,
      \begin{align}
        -\laplacian u_n = \fnF( u_n + \tilde c_n +
        \fnv_n).\label{eq:lap-h0n}
      \end{align}

      By the standard elliptic estimates, there is a constant $C$ such
      that
      \begin{align}
       \norm{u_n}_{\Hsp^2} \leq C \norm{\laplacian u_n}_{\Lsp^2}.
      \end{align}

      Since $\fnF$ is a 
      bounded function, $\set{u_n}$ is bounded in $\Hsp^2$ and by
      Sobolev's embedding also in $C^0$. 
      
      Assume $\vb x = (x_*, x_*)$ and notice that,
      \begin{align}
      |v_n(x)| = 4\pi |G(x, x_{1n}) - G(x, x_{2n})|,
      \end{align}
      where $\vb x_n = (x_{1n}, x_{2n})$, 
      since $G(x,y)$ is continuous away of the diagonal set,       
       $\fnv_n(x) \to 0$ for $x \neq x_*$, whence,
      we also have the convergence,
      \begin{align}
        \fnF( u_n + \tilde c_n +
        \fnv_n) \to 2(1 + \tau),
      \end{align}
      pointwise almost everywhere. Applying the dominated convergence theorem 
      and
      equation~\eqref{eq:lap-h0n},
      \begin{align}
        \int_{\surface} \fnF( u_n + \tilde c_n +
        \fnv_n) \,\vform = 0 \to 2(1 +
        \tau)\abs\surface,
      \end{align}
      a contradiction. If $\tilde c_n \to -\infty$ a similar argument
      holds. Therefore the sequence of averages $\tilde c_n$ is bounded, 
      implying
      $\{\tilde h_n\}$ is bounded in $C^0$.
      Hence, the sequence is also bounded in $\Lsp^p$ for any
      positive $p$. By the elliptic estimate
      \begin{align}
        \norm{\tilde h_n}_{\Hsp^2} \leq C
        \pbrk{\norm{\laplacian \tilde 
        h_n}_{\Lsp^2} + \norm{\tilde h_n}_{\Lsp^2}},
      \end{align}
      $\{\tilde h_n\}$ is also bounded in $\Hsp^1$. By the Alaoglu 
      and Rellich theorems, after passing to a subsequence if
      necessary, we can assume $\tilde h_n \wto h_{*} \in \Hsp^1$ and
      strongly in $\Lsp^2$. We claim that $h_*$ is the constant 
      function $\mu$. To see this, let $\varphi \in \Hsp^1$. From the 
      regularized Taubes equation we have,
      \begin{align}
        \lproduct{\tilde h_n, \varphi}_{\Hsp^1}
        &= \lproduct{\tilde h_n,
          \varphi}_{\Lsp^2} + \lproduct{\grad \tilde h_n, \grad
          \varphi}_{\Lsp^2},\nonumber \\
        &= \lproduct{\tilde h_n,
          \varphi}_{\Lsp^2} + \lproduct{\laplacian \tilde h_n,
          \varphi}_{\Lsp^2}\nonumber \\
        &= \lproduct{\tilde h_n,
          \varphi}_{\Lsp^2} - \lproduct{\fnF(\tilde h_n + \fnv_n),
          \varphi}_{\Lsp^2}. 
      \end{align}

      Since $\tilde h_n \to h_{*}$ strongly in $\Lsp^2$, after passing
      to a subsequence if necessary, we can assume
      $\tilde h_n\to h_{*}$ pointwise almost everywhere. By the weak
      convergence of $\tilde h_n$ in $\Hsp^1$, together with the
      strong convergence in $\Lsp^2$ and the dominated convergence
      theorem,
      \begin{align}
        \lproduct{h_{*}, \varphi}_{\Hsp^1} &= \lim\,\lproduct{\tilde h_n,
        \varphi}_{\Hsp^1}\nonumber\\
    &= \lim \lproduct{\tilde h_n,
        \varphi}_{\Lsp^2} - \lim \lproduct{\fnF(\tilde h_n + \fnv_n),
        \varphi}_{\Lsp^2}\nonumber\\
        &= \lproduct{h_{*},\varphi}_{\Lsp^2} - \lproduct{\fnF(h_{*}),
        \varphi}_{\Lsp^2}.
      \end{align}

      From this equation, we infer
      \begin{align}
        \lproduct{\grad h_{*}, \grad\varphi}_{\Lsp^2}
        = -\lproduct{\fnF(h_{*}), \varphi}_{\Lsp^2}.
      \end{align}

      Therefore, $h_{*}$ is a weak solution to the equation
      \begin{align}
        -\laplacian h_{*} = \fnF(h_{*}).
      \end{align}
      
      By elliptic regularity, $h_{*}$ is also a strong solution, and
      by the maximum principle, $h_{*}$ is constant since $\fnF$ is an
      increasing function. Since the only zero of $\fnF$ is at
      $t = \mu$, we conclude $h_{*} = \mu$. If $\tilde h_{n_k}$ is any
      subsequence of $\tilde h_{n}$, this argument shows it
      has a subsequence weakly converging to $\mu$ in $\Hsp^1$ and
      strongly in $\Lsp^2$, the
      claim of the lemma follows.
    \end{proof}

    \begin{lemma}
      $\tilde h_n \to \mu$ strongly in $\Wsp^{2,p}$ for any positive $p$.
    \end{lemma}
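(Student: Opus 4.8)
The plan is to upgrade the convergence $\tilde h_n \to \mu$ from the previously established weak-$\Hsp^1$ / strong-$\Lsp^2$ mode to strong $\Wsp^{2,p}$ for every $p$, using elliptic regularity together with the boundedness facts extracted in the proof of the previous lemma. First I would recall from that proof that the sequence $\{u_n\}$ of zero-average parts is bounded in $C^0(\surface)$ and that the averages $\tilde c_n$ are bounded, so $\{\tilde h_n\}$ is bounded in $C^0$, hence in $\Lsp^p$ for every $p$; moreover $\fnv_n(x) \to 0$ pointwise off the diagonal point $x_*$, so $\fnF(\tilde h_n + \fnv_n) \to \fnF(\mu) = 0$ pointwise almost everywhere, and $\fnF$ is uniformly bounded. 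By the dominated convergence theorem this gives $\fnF(\tilde h_n + \fnv_n) \to 0$ strongly in $\Lsp^p(\surface)$ for every finite $p$.

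Next I would feed this into the elliptic estimate on the compact surface. Writing $-\laplacian(\tilde h_n - \mu) = \fnF(\tilde h_n + \fnv_n)$ and using that $\tilde h_n - \mu$ has average $\tilde c_n - \mu \to 0$, the Schauder-type $\Hsp^k$/$\Wsp^{k,p}$ estimates from the preliminaries give
\begin{align}
\norm{\tilde h_n - \mu}_{\Wsp^{2,p}(\surface)}
\leq C\brk(
\norm{\fnF(\tilde h_n + \fnv_n)}_{\Lsp^p(\surface)}
+ \norm{\tilde h_n - \mu}_{\Lsp^p(\surface)}
).
\end{align}
The first term on the right tends to $0$ by the previous paragraph. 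For the second term I would argue that the uniform $C^0$ bound plus pointwise a.e.\ convergence $\tilde h_n \to \mu$ (after passing to a subsequence, via the strong $\Lsp^2$ convergence already known) forces $\tilde h_n \to \mu$ in $\Lsp^p(\surface)$ for every finite $p$, again by dominated convergence; since the full sequence converges in $\Lsp^2$ to $\mu$, a subsequence-of-every-subsequence argument upgrades this to convergence of the whole sequence in $\Lsp^p$. Hence the right-hand side above tends to $0$, proving $\tilde h_n \to \mu$ in $\Wsp^{2,p}(\surface)$.

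One point that needs care, and which I expect to be the main obstacle, is the behaviour of $\fnv_n$ near the coalescence point $x_*$: while $\fnv_n \to 0$ pointwise off $x_*$, each $\fnv_n$ has two logarithmic singularities located at $x_{1n}, x_{2n}$, which are colliding. I would handle this by noting that $\fnF$ is globally bounded regardless of the argument, so the singularities of $\fnv_n$ never spoil the $\Lsp^p$ bound on $\fnF(\tilde h_n + \fnv_n)$; only the pointwise a.e.\ limit matters, and the set $\{x_*\}$ where pointwise convergence of $\fnv_n$ fails is null. A secondary subtlety is that the elliptic constant $C$ in the estimate is uniform in $n$ (it depends only on $\surface$, $p$, and the fixed background metric), so no drift in constants occurs. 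Assembling these observations gives the claimed strong $\Wsp^{2,p}$ convergence for arbitrary positive $p$.
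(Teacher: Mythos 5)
Your proposal is correct and follows essentially the same route as the paper: pass to a subsequence with pointwise a.e.\ convergence, use dominated convergence to show $\fnF(\tilde h_n + v_n)=-\laplacian\tilde h_n\to 0$ in $\Lsp^p$, and then close with the elliptic estimate plus a subsequence-of-every-subsequence argument. The only (harmless) difference is that you obtain $\tilde h_n\to\mu$ in $\Lsp^p$ directly from the uniform $C^0$ bound of the previous lemma, whereas the paper first runs the estimate at $p=2$ and uses Sobolev embedding to get $C^0$ convergence before bootstrapping to general $p$.
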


    \begin{proof}
      We will prove that any subsequence of $\tilde h_n$ has another
      subsequence converging to $\mu$ in $\Wsp^{2,p}$, implying the 
      lemma. To simplify
      notation, we denote subsequences of $\tilde h_n$ by the
      same symbol. 
      From the previous lemma, $\tilde h_n \to \mu$ strongly in
      $\Lsp^2$. After passing to a subsequence if necessary, we can
      assume that $\tilde h_n \to \mu$ pointwise almost
      everywhere. We apply the
      dominated convergence theorem to deduce the limit,
      \begin{align}
        \norm{\laplacian \tilde h_n}_{\Lsp^{p}}
        = \norm{\fnF(\tilde h_n + \fnv_n)}_{\Lsp^p}
        \to \norm{\fnF(\mu)}_{\Lsp^p} = 0.
      \end{align}

      If $p = 2$, by the standard elliptic estimates, there is a
      constant $C$, such that,
      \begin{align}
        \norm{\tilde h_n - \mu}_{\Hsp^2} \leq C \pbrk{
        \norm{\laplacian \tilde h_n}_{\Lsp^2} +
        \norm{\tilde h_n - \mu}_{\Lsp^2}} \to 0.
      \end{align}

      By Sobolev's embedding, $\tilde h_n \to \mu$ uniformly in
      $C^0$, hence also in $\Lsp^p$ for any positive $p$. We apply one
      more time the elliptic estimate,
      \begin{align}
        \norm{\tilde h_n - \mu}_{\Wsp^{2,p}} \leq C \pbrk{
        \norm{\laplacian \tilde h_n}_{\Lsp^p} +
        \norm{\tilde h_n - \mu}_{\Lsp^p}} \to 0.
      \end{align}
    \end{proof}

    As a consequence of this lemma and Sobolev's embedding, we have
    the convergence,
    \begin{align}
      \norm{\tilde h_n - \mu}_{C^1} \to 0,
    \end{align}
    for any arbitrary sequence $\set{\vb x_n} \subset \moduli^{1,1}(\surface)$,
    such that $\vb x_n \to \vb x \in \Delta$. This proves the following 
    corollary,
    
    \begin{corollary}\label{cor:lim-tilde-h}
        The limit,
        \begin{align}
        \lim_{d(x_1, x_2) \to 0}\norm*{\tilde h(x; x_1, x_2) - \mu
        }_{C^1(\surface)}
        = 0, 
        \end{align}
        holds, where $d(x_1, x_2)$ is the Riemannian distance in $\surface$.
    \end{corollary}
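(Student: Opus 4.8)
The plan is to obtain the corollary from the two lemmas that immediately precede it by a compactness-and-contradiction argument; essentially no new analytic input is needed, since the work of showing that collapsing core configurations force $\tilde h \to \mu$ in $C^1(\surface)$ has already been carried out.

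First I would restate the claim in $\epsilon$--$\delta$ form: it suffices to prove that for every $\epsilon_0 > 0$ there is a $\delta > 0$ such that $d(x_1, x_2) < \delta$ implies $\norm*{\tilde h(x; x_1, x_2) - \mu}_{C^1(\surface)} < \epsilon_0$, where the pair $(x_1, x_2)$ ranges over all of $\moduli^{1,1}(\surface)$ and is \emph{not} assumed to converge to a prescribed diagonal point. Suppose, towards a contradiction, that this fails for some $\epsilon_0 > 0$. Then there is a sequence $\vb x_n = (x_{1n}, x_{2n}) \in \moduli^{1,1}(\surface)$ with $d(x_{1n}, x_{2n}) \to 0$ and $\norm*{\tilde h_n - \mu}_{C^1(\surface)} \geq \epsilon_0$ for all $n$, where $\tilde h_n$ is the solution of the regularized Taubes equation with cores at $\vb x_n$.

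Next I would use compactness of $\surface \times \surface$: after passing to a subsequence I may assume $\vb x_n \to (x_*, y_*)$ in the product metric, and since the Riemannian distance is continuous and $d(x_{1n}, x_{2n}) \to 0$ we must have $x_* = y_*$, so the limit lies on the diagonal $\diag$. The preceding lemma then applies to this subsequence and gives the strong convergence $\tilde h_n \to \mu$ in $\Wsp^{2,p}$ for any $p$; combined with Sobolev's embedding $\Wsp^{2,p}(\surface) \hookrightarrow C^1(\surface)$ for $p > 2$, this yields $\norm*{\tilde h_n - \mu}_{C^1(\surface)} \to 0$ along the subsequence, contradicting $\norm*{\tilde h_n - \mu}_{C^1(\surface)} \geq \epsilon_0$. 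This proves the corollary.

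I do not expect a genuine obstacle here. The only point requiring a moment's care is that $d(x_{1n}, x_{2n}) \to 0$ does not by itself single out a limiting diagonal point, which is exactly why one first extracts a convergent subsequence using compactness of $\surface \times \surface$ before invoking the lemma. Everything substantive --- the uniform $C^0$ bound on the zero-average parts coming from the boundedness of $\fnF$, the boundedness of the averages $\ctilde_n$ obtained by testing $\int_{\surface}\fnF(u_n + \ctilde_n + \fnv_n)\,\vol = 0$ against the dominated convergence theorem, the identification of every weak $\Hsp^1$ limit with the constant $\mu$ via elliptic regularity and the strong maximum principle, and the bootstrap to $\Wsp^{2,p}$ --- is already contained in the two preceding lemmas, so the corollary follows formally.
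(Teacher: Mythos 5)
Your argument is correct and is essentially the paper's own: the paper derives the corollary from the same two preceding lemmas together with Sobolev's embedding, leaving the final passage from ``convergence along every sequence tending to a diagonal point'' to the uniform statement implicit, which you make explicit via the standard subsequence-extraction-by-compactness-of-$\surface\times\surface$ contradiction argument. No gap; the extra care you take about the sequence not converging a priori is exactly the right (and only) point to check.
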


    Let $\surface^2_{\diag} = (\surface\times\surface)\setminus\diag$
    endowed with the product metric. As differentiable manifolds, 
    $\moduli^{1,1}$ and 
    $\surface^2_\diag$ are equivalent. In what follows, we will consider
    $\tilde h$ as a 
    function $\surface \times \surface^2_{\diag} \to \reals$. Let $U \subset 
    \surface$ be an open and dense subset and let 
    $\varphi:U \to V \subset \cpx$ be a 
    holomorphic chart. In what follows we denote points on the surface as $x$
    and points on $\cpx$ as $z$, so $z = \varphi(x)$ for $x \in U$. We also 
    assume vortices 
    and antivortices are both located in $U$, such that up to a holomorphic 
    chart, 
    $\tilde h: \surface \times V^2_\diag \to \reals$, where $V^2_\diag 
    = V^2\setminus \diag_V$ and  
    $\diag_V \subset \cpx^2$ is the diagonal set. On this chart partial 
    derivatives $\partial_{z_j}\tilde h$ are well defined functions
    \begin{align}
    \partial_{z_j}\tilde h : \surface \times V^2_\diag \to \cpx.
    \end{align}
    
    We denote the covariant derivative and Laplacian with respect to the first 
    variable by $\grad$ and $\laplacian$ and emphasize that 
    the metric on $V^2_\diag$ is the push forward of the metric induced 
    by the surface. Our aim is 
    %
%    The
%    space of differential forms in this space can be
%    decomposed as 
%    $C^{\infty}(\surface \times\surface^2_{\diag})\otimes
%    \Omega(\surface)\wedge \Omega(\surface^2_{\diag})$.  Let
%    % 
%    \begin{align*}
%      \diff: C^{\infty}(\surface \times\surface^2_{\diag})\otimes
%      \Omega(\surface)\wedge \Omega^k(\surface^2_{\diag}) \to
%      C^{\infty}(\surface \times\surface^2_{\diag})\otimes
%      \Omega(\surface)\wedge \Omega^{k+1}(\surface^2_{\diag}),
%    \end{align*}
%    %
%    be the exterior derivative with respect to 
%    $\surface^2_\diag$, in a local holomorphic chart $\varphi:U \to \cpx$, 
%    with $U \subset \surface$ open and dense $\diff \tilde h$ has the local 
%    representation,
%    % 
%    \begin{align*}
%      \diff \tilde h = \del_{x_j}\tilde h\,dx^j + \conj\del_{x_j}\tilde
%      h\,{d\bar x^j}, 
%    \end{align*}
%    %
%    where $(x, x_1, x_2) \in U \times U \times U$, $x_1 \neq x_2$. 
    %We aim 
    to estimate the rate at which the second 
    derivatives $\grad \partial_{z_j}\tilde h$ grow as a sequence $\vb z_n \in
    V^2_\diag$ diverges to the diagonal set. This will
    allow us to prove that the moduli space is incomplete. 
    %Recall $\laplacian$ is the Laplacian with
    %respect to 
    %the first factor in the product $\surface\times
    %\surface^2_{\diag}$ and the metric is a product, hence
    %$\laplacian$ and $\diff$ commute. Therefore,
    %
    Since, $\laplacian$ and $\partial_{z_j}$ commute, $\partial_{z_j}\tilde h$ 
    is the solution to the elliptic problem,
    % 
%    \begin{align*}
%      -\laplacian\diff\tilde h = \fnV(h)
%      \diff \tilde h + \fnV(h)\,\diff \fnv,
%    \end{align*}
\begin{align}
-\laplacian\partial_{z_j}\tilde h = \fnV(h)
\partial_{z_j} \tilde h + \sign_j\,\fnV(h)\,\partial_{z_j} \fnv_j,
\end{align}
    % 
    %where we have used that 
    %$\laplacian = d d^{*} + d^{*}d$ is acting on the
    %first term of the product $\surface \times \surface^2_{\diag}$.
    where $v_j(x) = 4\pi \,G(x, \varphi^{-1}(z_j))$. 
    Let $\dist_j(x) = \dist(x, x_j)$, $x_j = \varphi^{-1}(z_j)$, we know  
    there is a uniform constant $C$,
    such that the derivative of Green's function is
    bounded~\cite{aubin2013some},
    \begin{align}
        \abs{\grad G(x, x_j)} < \frac{C}{\dist_j}, &&
        \abs{\grad_2 G(x, x_j)} < \frac{C}{\dist_j},
      \end{align}
      where $\grad_2 G$ is the covariant derivative with respect to the 
      second variable. 
    %Green's function being symmetric, it follows that $\diff_jG$ is
    %also bounded by $k \dist_j^{-1}$. 
    Recall in holomorphic coordinates the metric is 
    $e^{\Lambda(z)}\abs{dz}^2$, hence, if $z_j$ is restricted to a 
    bounded domain,
    \begin{align}
    \abs{\del_{z_j}\fnv_j} \leq  4\pi e^{-\Lambda(z_j)}\,\abs{\grad_2 
    G(x,\varphi^{-1}(z_j))}
    < \frac{C}{d_j}.
    \end{align}
    
    %$\diff_j\fnv_j$ is also
    %bounded, this time by $4\pi k \dist_j^{-1}$.

    \begin{lemma}\label{lem:vol-dist-bounds}
      For any positive constant
      $C_1$, there is another constant $C$, such that, for all 
      $x, x_1, x_2 \in U$,
      \begin{align}
        \frac{\dist_{12}^2}{C_1\dist_1^2 + \dist_2^2}
        & \leq C,\label{eq:vol-1}
        \\
          \frac{\dist_j\dist_k^2}{(C_1\dist_1^2 + \dist_2^2)^2}
        &\leq \frac{C}{\dist_{12}},\label{eq:vol-2}
      \end{align}
      where $\set{\dist_j, \dist_k} = \set{\dist_1, \dist_2}$ 
      and $\dist_{12}=\dist(x_1, x_2)$.
    \end{lemma}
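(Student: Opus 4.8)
The plan is to reduce both estimates to elementary consequences of the triangle inequality for the Riemannian distance on $\surface$; the holomorphic chart plays no essential role, since $\dist_1(x)$, $\dist_2(x)$ and $\dist_{12}$ are all genuine $\surface$-distances and hence satisfy $\dist_{12}\le \dist_1(x)+\dist_2(x)$, along with the two companion triangle inequalities. Abbreviating $a=\dist_1(x)$, $b=\dist_2(x)$, $c=\dist_{12}$, the two facts I would record at the outset are $c\le a+b\le 2\max(a,b)$ and $C_1a^2+b^2\ge \min(C_1,1)\,\max(a,b)^2$ — the latter because $C_1a^2+b^2\ge C_1a^2$ when $a\ge b$ and $C_1a^2+b^2\ge b^2$ when $b\ge a$.

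For \eqref{eq:vol-1} I would simply combine these: $c^2\le 4\max(a,b)^2\le \tfrac{4}{\min(C_1,1)}\,\brk(C_1a^2+b^2)$, so \eqref{eq:vol-1} holds with $C=4/\min(C_1,1)$.

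For \eqref{eq:vol-2} the key observation is that, for either assignment $\set{\dist_j,\dist_k}=\set{\dist_1,\dist_2}$, one has $\dist_j\dist_k^2\le \max(a,b)^3$, since both $ab^2$ and $ba^2$ are at most $\max(a,b)^3$. Multiplying by $c\le 2\max(a,b)$ then gives
\[
  c\,\dist_j\dist_k^2 \;\le\; 2\max(a,b)^4 \;\le\; \frac{2}{\min(C_1,1)^2}\,\brk(C_1a^2+b^2)^2 ;
\]
dividing through by $(C_1a^2+b^2)^2>0$ and by $c=\dist_{12}>0$ — the latter legitimate precisely because on $\moduli^{1,1}(\surface)\cong\surface^2_{\diag}$ the two cores are distinct — yields \eqref{eq:vol-2} with $C=2/\min(C_1,1)^2$. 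Taking the larger of the two constants finishes the proof.

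I do not anticipate a genuine obstacle: the argument is a few lines. The only points that deserve a sentence of care are (i) recording explicitly that $a$, $b$, $c$ are distances on $\surface$ rather than on the chart image $V$, so that the triangle inequality is legitimately available even though the cores are being read off in holomorphic coordinates, and (ii) observing that $\dist_{12}$ is bounded below only on compact subsets of $\surface^2_{\diag}$ and may tend to $0$ globally, which is exactly why \eqref{eq:vol-2} is stated with the factor $\dist_{12}^{-1}$ on the right in place of a uniform constant. Compactness of $\surface$ itself is not used in this lemma, only in its later applications.
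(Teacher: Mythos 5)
Your proof is correct, and it is organized differently from the paper's. The paper proves \eqref{eq:vol-1} the same way you do (triangle inequality plus equivalence of the quadratic forms $d_1^2+d_2^2$ and $C_1d_1^2+d_2^2$ on $\mathbb{R}^2$), but for \eqref{eq:vol-2} it first peels off the factor $d_1d_2 \leq \tfrac12(d_1^2+d_2^2) \leq C(C_1d_1^2+d_2^2)$, reducing the claim to $d_2/(C_1d_1^2+d_2^2)\leq C/d_{12}$, and then splits into the cases $d_2\leq d_1$ and $d_1\leq d_2$, using \eqref{eq:vol-1} and the triangle inequality $d_2\leq d_1+d_{12}$ in the second case. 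You instead bound the entire product $d_{12}\,d_jd_k^2$ at once by $2\max(d_1,d_2)^4$ and compare with $(C_1d_1^2+d_2^2)^2 \geq \min(C_1,1)^2\max(d_1,d_2)^4$. The two arguments use identical ingredients, but yours eliminates the case analysis entirely and produces explicit constants ($4/\min(C_1,1)$ and $2/\min(C_1,1)^2$), which is a genuine simplification; the paper's route is no stronger, just longer. Your two side remarks are also apt: the lemma really is a statement about Riemannian distances on $\surface$ (the chart is irrelevant here), and the $d_{12}^{-1}$ on the right of \eqref{eq:vol-2} is precisely what absorbs the degeneration as the two cores collide, which is the whole point of the lemma in the later incompleteness argument.
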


    \begin{proof}
      By the triangle inequality and Cauchy-Schwarz,
      \begin{align}
        \dist_{12} \leq \dist_1 + \dist_2 \leq C\,(\dist_1^2 + \dist_2^2)^{1/2},
      \end{align}
      on the other hand, any two norms in a finite dimensional vector
      space are equivalents, hence, there is another constant such
      that,
      \begin{align}
        (\dist_1^2 + \dist_2^2)^{1/2} \leq C\,(C_1\dist_1^2 + \dist_2^2)^{1/2},
      \end{align}
      from these two inequalities we obtain the first claim of the
      lemma. For the second claim, it is enough to prove that the
      inequality
      \begin{align}
        \frac{\dist_1\dist_2^2}{(C_1\dist_1^2 + \dist_2^2)^2}
        \leq \frac{C}{\dist_{12}}, 
      \end{align}
      holds, the remaining case being equivalent to this one after
      relabelling $\dist_1$ and $\dist_2$. Let us note that since,
      \begin{align}
        \dist_1\dist_2 \leq \half (\dist_1^2+\dist_2^2)
        \leq C\,(C_1\dist_1^2 + \dist_2^2),
      \end{align}
      is sufficient to prove that,
      \begin{align}
        \frac{\dist_2}{C_1\dist_1^2 + \dist_2^2}
        \leq \frac{C}{\dist_{12}}.\label{eq:vol-dist2-c1norm-bound}
      \end{align}
      
      If
      $\dist_2 \leq \dist_1$, by the triangle inequality we have,
      \begin{align}
        \dist_2\dist_{12}
        &\leq \dist_1\dist_2 + \dist_2^2\nonumber\\
        &\leq \dist_1^2 + \dist_2^2\nonumber\\
        &\leq C\,(C_1\dist_1^2 + \dist_2^2),
      \end{align}
      hence~\eqref{eq:vol-dist2-c1norm-bound}. On the other hand, if
      $\dist_1\leq \dist_2$, repeating the 
      previous step, we find that
      \begin{align}
        \dist_1\dist_{12} \leq C\,(C_1\dist_1^2 + \dist_2^2),
      \end{align}
      this inequality, together with~\eqref{eq:vol-1} and the triangle
      inequality, implies,
      \begin{align}
        \frac{\dist_2}{C_1\dist_1^2+\dist_2^2}
        \leq \frac{\dist_1}{C_1\dist_1^2+\dist_2^2}
        + \frac{\dist_{12}}{C_1\dist_1^2+\dist_2^2}
        \leq \frac{C}{\dist_{12}}.
      \end{align}

      In any case, we conclude that equation~\eqref{eq:vol-2} holds.
    \end{proof}

    \begin{lemma}\label{lem:vav-sph-green-log-asympt}
      There is a constant $C$ such that for any pair of distinct
      points $x_1, x_2 \in \surface$,
      \begin{align}
        \abs*{G(x_1,x_2) - \frac{1}{2\pi}\log\,\dist(x_1,x_2)} \leq C.
      \end{align}
    \end{lemma}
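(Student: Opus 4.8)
The plan is to split $\surface\times\surface$ into a region bounded away from the diagonal, where both $G$ and $\log d$ are manifestly bounded by compactness, and a tubular neighbourhood of the diagonal, where the local expansion~\eqref{eq:green-logd-smoothpart} cancels the logarithmic singularity exactly, leaving a regular part that is bounded up to the diagonal. Fix $r_0>0$ equal to the injectivity radius of $\surface$ and fix any $\delta\in(0,r_0)$.

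First I would dispose of the off-diagonal region. Let $K_\delta=\set{(x_1,x_2)\in\surface\times\surface : d(x_1,x_2)\ge\delta}$. Since $\surface$ is compact, $K_\delta$ is compact and is contained in $(\surface\times\surface)\setminus\Delta$, where $G$ is smooth, hence $G$ is bounded on $K_\delta$. On $K_\delta$ one also has $\delta\le d(x_1,x_2)\le\operatorname{diam}(\surface)<\infty$, so $\log d(x_1,x_2)$ is bounded there. Thus $\abs{G(x_1,x_2)-\tfrac{1}{2\pi}\log d(x_1,x_2)}\le C_1$ on $K_\delta$ for some constant $C_1$.

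It remains to bound the difference on $W=\set{(x_1,x_2) : 0<d(x_1,x_2)<r_0}$, where by~\eqref{eq:green-logd-smoothpart} it equals the regular part $\tilde G(x_1,x_2)=G(x_1,x_2)-\tfrac{1}{2\pi}\log d(x_1,x_2)$, with $\tilde G(\cdot,x_2)$ smooth on the geodesic ball about $x_2$. The key point is that $\tilde G$ is bounded uniformly in the base point as $x_1\to x_2$. One way to see this: $\tfrac{1}{2\pi}\log d(\cdot,x_2)$ is a parametrix for $-\laplacian$ at $x_2$, so $-\laplacian_{x_1}\tilde G(x_1,x_2)=S(x_1,x_2)$ for a function $S$ smooth on $\set{d<r_0}$; interior elliptic estimates for $\tilde G(\cdot,x_2)$ on geodesic balls about $x_2$, together with the uniform bound $\norm{\tilde G(\cdot,x_2)}_{\Lsp^2}\le\norm{G(\cdot,x_2)}_{\Lsp^2(\surface)}+\tfrac{1}{2\pi}\norm{\log d(\cdot,x_2)}_{\Lsp^2}\le C$ (the $\Lsp^2$ norm of a logarithmic singularity being finite and independent of the base point), then give a $C^0$ bound on $\tilde G$ that is uniform in $x_2$. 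Alternatively, this is precisely the standard regularity statement for the regular part of the Green's function on a compact manifold, \cite{aubin2013some}, from which $\tilde G$ extends to a continuous function on all of $\surface\times\surface$. In particular $\tilde G$ is bounded on the compact set $\overline W\cap\set{d(x_1,x_2)\le\delta}$ by a constant $C_2$. Taking $C=\max(C_1,C_2)$ proves the lemma, since any pair of distinct points lies in $K_\delta$ or in $\set{d(x_1,x_2)\le\delta}$.

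The only genuinely delicate step is the one just flagged: verifying that the regular part $\tilde G$ stays bounded as $x_1\to x_2$ uniformly in the base point, i.e.\ the joint continuity of $\tilde G$ up to the diagonal. Everything else is compactness together with the already-quoted expansion~\eqref{eq:green-logd-smoothpart}, and if one is willing to cite \cite{aubin2013some} for that regularity, the whole proof is essentially immediate.
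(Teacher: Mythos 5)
Your proposal is correct, and the overall skeleton — bound the difference off the diagonal by compactness, then control it near the diagonal via a local expansion of $G$ — is the same as the paper's. Where you diverge is in how the near-diagonal uniformity is obtained, which you rightly identify as the only delicate point. You work with the intrinsic regular part $\tilde G(x_1,x_2)=G(x_1,x_2)-\tfrac{1}{2\pi}\log d(x_1,x_2)$ and either cite \cite{aubin2013some} for its continuous extension to the diagonal or sketch a parametrix-plus-elliptic-estimates argument for a uniform $C^0$ bound. The paper instead never works with $\log d$ directly near the diagonal: it covers $\surface$ by finitely many geodesically convex disks sitting inside holomorphic charts, uses the chart expansion $G=\tfrac{1}{2\pi}\log\abs{\varphi_j(x_1)-\varphi_j(x_2)}+\tilde G_j$ with $\tilde G_j$ continuous (hence bounded) on a compact product of closed coordinate disks, and then converts $\log\abs{\varphi_j(x_1)-\varphi_j(x_2)}$ into $\log d(x_1,x_2)$ up to a bounded error by the two-sided conformal-factor estimate $m\abs{\varphi_j(x_1)-\varphi_j(x_2)}\le d(x_1,x_2)\le M\abs{\varphi_j(x_1)-\varphi_j(x_2)}$. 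The paper's route is more self-contained: the uniformity over the base point comes for free from the finiteness of the cover and compactness of the product sets, so no regularity of the regular part up to the diagonal beyond joint continuity in the chart is needed. Your route is shorter if one is willing to lean on the cited regularity of the Green's function; if you instead pursue the elliptic-estimates sketch, you should make explicit that the parametrix error $S(\cdot,x_2)$ and the constants in the interior estimates can be taken uniform in $x_2$ (which again follows from compactness), since that is precisely the uniformity the lemma requires.
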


    \begin{proof}
        We cover $\surface$ with a finite cover of metric disks 
        $\disk_{R_j/2}(p_j)$ such that $R_j < \delta$, where $\delta$ is the 
        injectivity radius of the metric and for each disk there is a 
        holomorphic 
        chart $\varphi_j: U_j \to \cpx$, $\disk_{R_j}(p_j) \subset 
        U_j$. Let $R = \min\set{R_j}$, for any pair of distinct points $x_1, 
        x_2 
        \in \surface$,
        such that $d(x_1, x_2) < R/2$,  
        there is a disk such that $x_1, x_2 \in \disk_{R_j}(p_j)$. 
        For any disk in the cover, let $R_j'$ be a positive radius, such that,
        \begin{align}
        |\varphi_j(x) - \varphi_j(p_j)| < R_j', \qquad
        \forall x \in \disk_{R_j}(p_j).
        \end{align}
        
        Let $z_j = \varphi_j(p_j)$ and let us denote by $D_{R'_j}(z_j) \subset 
        \cpx$ the 
        holomorphic disk of radius $R'_j$ centred at $z_j$. For any small 
        $\epsilon > 0$ there are continuous functions $\tilde G_j: D_{R'_j + 
        \epsilon}(z_j) \times D_{R'_j + \epsilon}(z_j) \to \reals$ such that 
        if $x_1, x_2 \in \disk_{R_j}(p_j)$,
        \begin{align}
        G(x_1, x_2) = \frac{1}{2\pi} \log\,\abs{\varphi_j(x_1) - 
        \varphi_j(x_2)} 
        + \tilde G_j(\varphi_j(x_1), \varphi_j(x_2)).
        \end{align}  
        
        If $\exp \Lambda_j(z)$ is the conformal factor of the metric in the 
        chart $\varphi_j$, let 
        \begin{equation}
        \begin{aligned}
        M_j &= \max \set{e^{\Lambda_j(z)/2}\,:\, z \in 
            \overline{D_{R'_j}(z_j)}},
        \\
        m_j &= \min \set{e^{\Lambda_j(z)/2}\,:\, z \in 
            \overline{D_{R'_j}(z_j)}},
        \end{aligned}
        \end{equation}
        and $M = \max_j \set{M_j}$, $m = \min_j \set{m_j}$. Since each 
        $\disk_{R_j}(p_j)$ is geodesically convex, for any $x_1, x_2 \in 
        \disk_{R_j}(p_j)$,
        \begin{align}
        m\,\abs{\varphi_j(x_1) - \varphi_j(x_2)} \leq d(x_1, x_2) 
        \leq M\,\abs{\varphi_j(x_1) - \varphi_j(x_2)}.
        \end{align}

      Taking the log of this inequality we find a positive constant such that,
      \begin{align}
      \abs{\,d(x_1, x_2) - \log\,\abs{\varphi_j(x_1) - \varphi_j(x_2)}\,} 
      \leq C,
      \end{align}
      whenever $x_1, x_2 \in \disk_{R_j}(p_j)$. Since each function $\tilde 
      G_j$ is continuous in the compact set $\overline{D_{R'_j}(z_j)}$, we find 
      another constant such that,
      \begin{multline}
      \abs*{G(x_1, x_2) - \frac{1}{2\pi}\log\,d(x_1, x_2)} =\\
      \abs*{\frac{1}{2\pi}\pbrk{\log\,\abs{\varphi_j(x_1) - \varphi_j(x_2)} - 
      \log\,d(x_1,
              x_2)} + \tilde G_j(\varphi_j(x_1), \varphi_j(x_2))}
      \leq C.      
      \end{multline}
      
      This proves the inequality whenever $d(x_1, x_2) < R/2$. Since $G$ and 
      the distance function are continuous on the compact set,
      \begin{align}
      \set{(x_1, x_2) \in \surface \times \surface \,:\, d(x_1, x_2) \geq 
          \frac{R}{2}},\
      \end{align}
      we can find a second constant satisfying the inequality whenever 
      $d(x_1, x_2) \geq R/2$. Taking the maximum of both constants 
      concludes the lemma.
    \end{proof}

%    \begin{lemma}\label{lem:vol-unif-bound-pot}
%      Let $x_1 \neq x_2$ and suppose $\dist(x_1, x_2) < \delta$,
%      where $\delta$ is the
%      injectivity radius of the metric in $\surface$. 
%      For any $p > 0$,  there is a constant $C$,
%      independent of $(x_1, x_2)$, such that,
%      % 
%      \begin{align*}
%        \norm{\fnV(h)\diff_j v_j}_{\Lsp^p} \leq \frac{C}{\dist(x_1, x_2)}.
%      \end{align*}
%    \end{lemma}
    \begin{lemma}\label{lem:vol-unif-bound-pot}
        Let $D$ be any bounded domain on $\cpx$. For any $p > 0$, there is a 
        constant $C$, independent of $z_1, z_2 \in D$, $z_1 \neq z_2$ such 
        that, if $x_j = \varphi^{-1}(z_j)$,
    \begin{align}
    \norm{\fnV(h)\partial_{z_j} v_j}_{\Lsp^p} \leq \frac{C}{\dist(x_1, x_2)}.
    \end{align}
\end{lemma}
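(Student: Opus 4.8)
The plan is to prove a single pointwise estimate and then integrate. Concretely, I will show that there is a constant $C$, depending only on $D$ but not on $z_1,z_2\in D$ nor on the evaluation point $x$, such that for $j\in\set{1,2}$
\[ \fnV(h(x))\,\abs{\del_{z_j}v_j(x)} \;\le\; \frac{C}{\dist(x_1,x_2)}, \qquad x\in\surface\setminus\set{x_1,x_2}. \]
Granting this, the lemma is immediate: $\surface$ is compact, so for any $p>0$ the $\Lsp^p$ quantity satisfies $\norm{\fnV(h)\del_{z_j}v_j}_{\Lsp^p}\le \frac{C}{\dist(x_1,x_2)}\abs\surface^{1/p}$, and we absorb $\abs\surface^{1/p}$ into the constant. (Here, as is implicit in the running hypotheses, $\overline D$ is a compact subset of the chart image $\varphi(U)$, so $\varphi^{-1}(\overline D)$ is a fixed compact subset of $\surface$.)

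For the pointwise estimate I will combine three inputs. First, the bound $\abs{\del_{z_j}v_j}\le C/\dist_j$ recorded just before the statement of the lemma, where $\dist_j=\dist(x,x_j)$. Second, a \emph{uniform two-sided comparison}
\[ a\,\frac{\dist_1^2}{\dist_2^2}\;\le\; e^{h(x)} \;\le\; b\,\frac{\dist_1^2}{\dist_2^2}, \]
valid for all $x$ and all $z_1,z_2\in D$, with $0<a\le b$ depending only on $D$. Third, estimate~\eqref{eq:vol-2} of lemma~\ref{lem:vol-dist-bounds}. The comparison is where the earlier analytic work is used: writing $h=\tilde h+v$ with $v=v_{x_1}-v_{x_2}$ gives $e^{h}=e^{\tilde h}\,e^{v_{x_1}}\,e^{-v_{x_2}}$; since $v_{x_i}=4\pi\,G(\cdot,x_i)$, lemma~\ref{lem:vav-sph-green-log-asympt} yields $4\pi\,G(x,x_i)=2\log\dist(x,x_i)+r_i(x)$ with $\abs{r_i}$ bounded by an absolute constant, so $e^{v_{x_i}}$ is comparable to $\dist_i^{\,2}$ with constants independent of $x_i$, whence $e^{v}$ is comparable to $\dist_1^2/\dist_2^2$. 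For the factor $e^{\tilde h}$ I use that $\tilde h$ depends continuously on $(x;x_1,x_2)\in\surface\times\surface^2_{\diag}$ (the smooth-dependence statement underlying theorem~\ref{thm:h-reg-param-dep}), hence is bounded on the compact region where $x_1,x_2$ lie in $\varphi^{-1}(\overline D)$ and stay at distance $\ge\delta$ apart, while on the complementary region $\dist(x_1,x_2)<\delta$ corollary~\ref{cor:lim-tilde-h} forces $\norm{\tilde h-\mu}_{C^0}$ to be small and so again bounds $\tilde h$; together this gives a uniform two-sided bound on $e^{\tilde h}$. Multiplying the three factors produces the displayed comparison.

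The remaining steps are elementary. Setting $r=\dist_1^2/\dist_2^2$, the comparison gives $1+e^h\ge 1+ar$ and $e^h\le br$, so
\[ \fnV(h)=\frac{4e^{h}}{(1+e^{h})^2}\;\le\; \frac{4br}{(1+ar)^2}\;=\;\frac{4b\,\dist_1^2\,\dist_2^2}{(\dist_2^2+a\,\dist_1^2)^2}. \]
Multiplying by $\abs{\del_{z_j}v_j}\le C/\dist_j$ gives, for both $j=1$ and $j=2$, a constant multiple of $\dist_j\,\dist_k^2\,(a\,\dist_1^2+\dist_2^2)^{-2}$ with $\set{\dist_j,\dist_k}=\set{\dist_1,\dist_2}$; applying~\eqref{eq:vol-2} of lemma~\ref{lem:vol-dist-bounds} with $C_1=a$ bounds this by $C/\dist(x_1,x_2)$, uniformly in $x$ and in $z_1,z_2\in D$. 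This is precisely the pointwise estimate above, and integrating over the compact surface completes the argument. The one genuinely delicate point is the uniform two-sided comparison for $e^{h}$ — it is there that one must patch together the smooth dependence of $\tilde h$ on the cores, its $C^1$-limit at the diagonal from corollary~\ref{cor:lim-tilde-h}, and the logarithmic asymptotics of Green's function — while everything after that reduces to lemma~\ref{lem:vol-dist-bounds} and the monotonicity of $\fnV$.
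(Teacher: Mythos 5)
Your proof is correct and follows essentially the same route as the paper's: both factor $e^{h}=e^{\tilde h}e^{v_{x_1}-v_{x_2}}$, use lemma~\ref{lem:vav-sph-green-log-asympt} to compare $e^{v_{x_i}}$ with $\dist_i^2$, invoke the uniform boundedness of $\tilde h$, and then close the pointwise estimate with~\eqref{eq:vol-2} of lemma~\ref{lem:vol-dist-bounds} before integrating over the compact surface. Your write-up is somewhat more explicit than the paper's about \emph{why} $\tilde h$ is uniformly bounded (splitting into the near-diagonal region, handled by corollary~\ref{cor:lim-tilde-h}, and its complement, handled by continuity and compactness), but this is a refinement of the same argument rather than a different one.
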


    \begin{proof}
%      $\fnV$ is a positive, bounded function whose only
%      zeroes are located at $x_1$ and $x_2$. By the estimate on the norm of
%      $\diff_j\fnv_j$, if
%      $x \in \surface \setminus \disk_{\delta}(x_j)$,
%      %  
%      \begin{align*}
%        \abs{\fnV(h)\diff_jv_j} \leq \frac{C}{\dist_j} \leq
%        \frac{C}{\delta}.  
%      \end{align*}
      %
%      As proven in lemma~\ref{lem:vav-sph-green-log-asympt}, there is
%      another constant, such that if $\dist(x, y) < \delta$,
%      % 
%      \begin{align*}
%        \abs*{G(x, y) - \frac{1}{2\pi} \log \dist(x,y)} \leq C.
%      \end{align*}
    By lemma~\ref{lem:vav-sph-green-log-asympt}, there is
    a constant, such that for all $x, y \in \surface$, $x \neq y$,
    \begin{align}
    \abs*{G(x, y) - \frac{1}{2\pi} \log \dist(x,y)} \leq C.
    \end{align}
    
%      Hence, the leading term of $\fnv_k$ for any $k$ is $\log\dist_k^2$ and
%      inside the disk we have, 
      % 
      Hence,
      %
%      \begin{align*}
%        \abs{\fnV(h)\diff_j\fnv_j} = \abs*{
%        \frac{4e^{\fnv_1}e^{\fnv_2}e^{\tilde h}}{ 
%        (e^{\fnv_1}e^{\tilde h} + e^{\fnv_2})^2
%        } \diff_j\fnv_j} \leq
%        C \abs*{\frac{4 \dist_1^2\dist_2^2\,e^{\tilde h}}{ 
%        (\dist_1^2e^{\tilde h} + \dist_2^2)^2
%        } \frac{1}{\dist_j}}.
%      \end{align*}
\begin{align}
\abs{\fnV(h)\partial_{z_j}\fnv_j} = \abs*{
    \frac{4e^{\fnv_1}e^{\fnv_2}e^{\tilde h}}{ 
        (e^{\fnv_1}e^{\tilde h} + e^{\fnv_2})^2
    } \partial_{z_j}\fnv_j} \leq
C \abs*{\frac{4 \dist_1^2\dist_2^2\,e^{\tilde h}}{ 
        (\dist_1^2e^{\tilde h} + \dist_2^2)^2
    } \frac{1}{\dist_j}},
\end{align}
where the constant depends on $D$. 
%      Let $\fnv_k$ be the complement of $\fnv_j$ in the set
%      $\set{\fnv_1, \fnv_2}$. 
    Since $\tilde h$ is uniformly
      bounded on $\surface$, there are constants $C$, $C_1$, such that
       by lemma~\ref{lem:vol-dist-bounds}. 
      % 
%      \begin{align*}
%        \abs{\fnV(h)\diff_j\fnv_j} \leq
%        C \abs*{\frac{\dist_j\dist_k^2}{ 
%        (\dist_1^2C_1 + \dist_2^2)^2
%        }}.
%      \end{align*}
      \begin{align}
\abs{\fnV(h)\partial_{z_j}\fnv_j} \leq
C \abs*{\frac{\dist_1^2\dist_2^2}{ 
        (\dist_1^2C_1 + \dist_2^2)^2
} \frac{1}{\dist_j}}
\leq \frac{C}{\dist(x_1,x_2)},
\end{align}
this 
 inequality implies the claim.
      %
%      \begin{align*}
%        \abs{\fnV(h)\partial_{z_j}\fnv_j} \leq \frac{C}{\dist_{12}}.
%      \end{align*}
%      Since $\dist_{12}\leq \delta$, this inequality holds globally in
%      $\surface$, hence $\fnV(h)\diff_j\fnv_j$ is a function bounded
%      by $C\dist_{12}^{-1}$, this implies the claim of the lemma.      
    \end{proof}

    The proof of the lemma depends only on properties of Green's
    function, we could repeat the
    proof of lemma~\ref{lem:vol-unif-bound-pot} using $\grad \fnv_j$
    instead of $\partial_{z_j}v_j$ to prove for 
    any given domain $D\subset \cpx$ the existence of a constant, 
    independent of $z_1, z_2 \in D$ , such
    that,
    \begin{align}
      \norm{V(h)\grad v_j}_{\Lsp^p} \leq \frac{C}{d(x_1, x_2)}.
    \end{align}

    In the next lemmas we prove that the bilinear form, 
    \begin{align}
      \Bop:\Hsp^1\times \Hsp^1 \to \reals,
      &&
         \Bop(\phi,\psi) = \lproduct{\grad\phi, \grad\psi}^2_{\Lsp^2}
         + \lproduct{\fnV(h)\phi, \psi}_{\Lsp^2},
    \end{align}
    is coercive with a uniform coercivity constant.

    \begin{lemma}
      If $\fnV_n: \surface \to \reals$ is a sequence of continuous,
      uniformly bounded functions converging pointwise to the
      continuous function $\fnV_*$, and $\phi_n\to \phi_*$ in $\Lsp^2$,
      then
      \begin{align}
        \lproduct{\fnV_n, \phi_n^2}_{\Lsp^2} \to \lproduct{\fnV_*,
        \phi_*^2}_{\Lsp^2}.
      \end{align}
    \end{lemma}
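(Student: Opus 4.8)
The plan is to estimate the difference $\lproduct{\fnV_n, \phi_n^2}_{\Lsp^2} - \lproduct{\fnV_*, \phi_*^2}_{\Lsp^2}$ by splitting it into two pieces,
\begin{align}
\lproduct{\fnV_n, \phi_n^2}_{\Lsp^2} - \lproduct{\fnV_*, \phi_*^2}_{\Lsp^2}
= \lproduct{\fnV_n, \phi_n^2 - \phi_*^2}_{\Lsp^2}
+ \lproduct{\fnV_n - \fnV_*, \phi_*^2}_{\Lsp^2},
\end{align}
and to show that each term tends to zero. Here I use that $\set{\fnV_n}$ is uniformly bounded, say $\abs{\fnV_n} \leq M$ for all $n$, and that $\surface$ is a compact surface, so $\phi_*^2 \in \Lsp^1(\surface)$ since $\phi_* \in \Lsp^2(\surface)$.

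For the first term I would use the uniform bound together with the factorisation $\phi_n^2 - \phi_*^2 = (\phi_n - \phi_*)(\phi_n + \phi_*)$. By the Cauchy--Schwarz inequality,
\begin{align}
\abs{\lproduct{\fnV_n, \phi_n^2 - \phi_*^2}_{\Lsp^2}}
\leq M\,\norm{\phi_n - \phi_*}_{\Lsp^2}\,\norm{\phi_n + \phi_*}_{\Lsp^2}.
\end{align}
Since $\phi_n \to \phi_*$ in $\Lsp^2$, the factor $\norm{\phi_n + \phi_*}_{\Lsp^2}$ is bounded (by $2\norm{\phi_*}_{\Lsp^2}$ for $n$ large), while $\norm{\phi_n - \phi_*}_{\Lsp^2} \to 0$; hence this term vanishes in the limit.

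For the second term I would invoke the dominated convergence theorem. The integrand $(\fnV_n - \fnV_*)\phi_*^2$ converges to zero pointwise almost everywhere, because $\fnV_n \to \fnV_*$ pointwise, and it is dominated by the fixed integrable function $2M\,\phi_*^2$. Therefore $\lproduct{\fnV_n - \fnV_*, \phi_*^2}_{\Lsp^2} \to 0$, and adding the two estimates gives the claim. There is no real obstacle in this argument; the only mild point worth noting is that no pointwise subsequence extraction for $\phi_n$ is needed, since the estimates above use only norm convergence of $\phi_n$ in $\Lsp^2$ and pointwise convergence of the potentials $\fnV_n$.
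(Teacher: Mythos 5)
Your proposal is correct and follows essentially the same route as the paper: the same splitting into $\lproduct{\fnV_n,\phi_n^2-\phi_*^2}$ and $\lproduct{\fnV_n-\fnV_*,\phi_*^2}$, with Cauchy--Schwarz plus the uniform bound handling the first term and dominated convergence (dominating function a constant multiple of $\phi_*^2$) handling the second. Nothing is missing.
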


    \begin{proof}
      We have,
      \begin{align}
        \abs{\lproduct{\fnV_n,\phi_n^2}_{\Lsp^2} -
        \lproduct{\fnV_*,\phi_*^2}_{\Lsp^2}}
        \leq
        \abs{\lproduct{\fnV_n,\phi_n^2 - \phi_*^2}_{\Lsp^2}}
        +
        \abs{\lproduct{\fnV_n - \fnV_*, \phi_*^2}_{\Lsp^2}}.
      \end{align}

      Since the functions $\fnV_n$ are uniformly bounded, there is a
      constant $C$ such that,
      \begin{align}
        \abs{\lproduct{\fnV_n,\phi_n^2 - \phi_*^2}_{\Lsp^2}}
        &\leq C\,\lproduct{
          \abs{\phi_n - \phi_*},
          \abs{\phi_n + \phi_*}}_{\Lsp^2}\nonumber\\
        &\leq C\,\norm{\phi_n - \phi_*}_{\Lsp^2}\,\norm{\phi_n +
          \phi_*}_{\Lsp^2}, 
      \end{align}
      by the convergence $\phi_n \to \phi_*$ in $\Lsp^2$, we obtain
      the limit 
      \begin{align}
        \abs{\lproduct{\fnV_n,\phi_n^2 - \phi_*^2}_{\Lsp^2}} \to 0.
      \end{align}

      Since there is a constant $C$ such that the functions
      $(\fnV_n - \fnV_*)\phi_*^2$ are bounded by the measurable
      function $C\phi_*^2$ and $\fnV_n - \fnV_* \to 0$ pointwise, by
      the dominated convergence theorem,
      \begin{align}
        \abs{\lproduct{\fnV_n - \fnV_*, \phi_*^2}_{\Lsp^2}} \to 0.
      \end{align}
      
      Therefore,
      \begin{align}
        \abs{\lproduct{\fnV_n,\phi_n^2}_{\Lsp^2} -
        \lproduct{\fnV_*,\phi_*^2}_{\Lsp^2}} \to 0,
      \end{align}
      concluding the proof of the lemma.
    \end{proof}

    \begin{lemma}\label{lem:vol-unif-coercive}
      There is a positive constant $C$, independent of $(x_1, x_2) \in 
      \surface^2_\diag$, such
      that for any $\phi \in \Hsp^1$, 
      \begin{align}
        C\norm{\phi}^2_{\Hsp^1} \leq \Bop(\phi, \phi).
      \end{align}
    \end{lemma}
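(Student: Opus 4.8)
The plan is to argue by contradiction with a compactness argument, the whole point being to extract a coercivity constant that does not degenerate as the cores run over all of $\surface^2_\diag$, including in the coalescence limit. For a \emph{fixed} configuration, coercivity of $\Bop$ is essentially contained in Lemma~\ref{lem:h2-l2-iso} (the potential $\fnV(h)$ is non-negative, bounded, and vanishes only at the cores), so the real content of this lemma is the uniformity in $(x_1,x_2)$.

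Suppose the statement fails. Then there are configurations $\vb x_n = (x_{1n},x_{2n})\in\surface^2_\diag$ and functions $\phi_n\in\Hsp^1$ with $\norm{\phi_n}_{\Hsp^1}=1$ and $\Bop_n(\phi_n,\phi_n)\to 0$, where $\Bop_n$ is the form built from the potential $\fnV(h_n)$ associated with $\vb x_n$. Since both summands of $\Bop_n(\phi_n,\phi_n)$ are non-negative, each tends to $0$; in particular $\norm{\grad\phi_n}_{\Lsp^2}\to 0$, hence $\norm{\phi_n}_{\Lsp^2}\to 1$. By Banach--Alaoglu and Rellich--Kondrachov we may pass to a subsequence so that $\phi_n\wto\phi_*$ in $\Hsp^1$ and $\phi_n\to\phi_*$ strongly in $\Lsp^2$, whence $\norm{\phi_*}_{\Lsp^2}=1$, so $\phi_*\neq 0$; and since $\grad\phi_n\to 0$ in $\Lsp^2$ while $\grad\phi_n\wto\grad\phi_*$, we get $\grad\phi_*=0$, so $\phi_*$ is a non-zero constant.

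Next I would identify the limiting potential. By compactness of $\surface\times\surface$ we may also assume $\vb x_n\to(x_1^*,x_2^*)$. If $x_1^*\neq x_2^*$, Theorem~\ref{thm:h-reg-param-dep} (applied with $W$ a shrinking pair of disks about the limit cores) gives $h_n\to h^*$ pointwise off $\{x_1^*,x_2^*\}$, where $h^*$ solves the Taubes equation for $(x_1^*,x_2^*)$; if $x_1^*=x_2^*$, Corollary~\ref{cor:lim-tilde-h} gives $\tilde h_n\to\mu$ in $C^1(\surface)$ and, since $v_{\vb x_n}(x)=4\pi(G(x,x_{1n})-G(x,x_{2n}))\to 0$ for each fixed $x$ off the diagonal, $h_n\to\mu$ a.e. In either case the functions $\fnV(h_n)=4e^{h_n}(e^{h_n}+1)^{-2}$ are uniformly bounded by $1$ and converge pointwise a.e. to a non-negative continuous limit $\fnV_*$ (namely $\fnV(h^*)$, or the constant $1-\tau^2$), which is strictly positive off a finite set, so $\int_\surface\fnV_*\,\vol>0$. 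Then the lemma preceding this one, applied with $\fnV_n=\fnV(h_n)$, gives $\lproduct{\fnV(h_n)\phi_n,\phi_n}\to\lproduct{\fnV_*\phi_*,\phi_*}=\phi_*^2\int_\surface\fnV_*\,\vol$. But the left-hand side tends to $0$, forcing $\int_\surface\fnV_*\,\vol=0$, a contradiction. Hence a uniform $C>0$ exists.

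The only delicate step is the coalescence case: without Corollary~\ref{cor:lim-tilde-h} one has no a priori control on $h_n$ as $d(x_{1n},x_{2n})\to 0$, and it is precisely that corollary that prevents the limiting potential $\fnV_*$ from collapsing to zero and so keeps the coercivity constant bounded away from $0$.
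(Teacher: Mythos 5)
Your proposal is correct and follows essentially the same route as the paper: a contradiction argument via Banach--Alaoglu and Rellich--Kondrachov, splitting into the coalescent and non-coalescent cases for the limiting potential (using the $C^1$ convergence $\tilde h_n\to\mu$ in the former), and invoking the preceding convergence lemma for $\lproduct{\fnV_n,\phi_n^2}$. The only cosmetic difference is how the contradiction is closed — you show $\phi_*$ is a non-zero constant and force $\int_\surface \fnV_*\,\mathrm{Vol}=0$, whereas the paper deduces $\phi_*=0$ and contradicts $\norm{\phi_n}_{\Hsp^1}=1$ — but these are interchangeable endings to the same argument.
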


    \begin{proof}
      By the bilinearity of $\Bop$, it is sufficient to prove the
      lemma assuming $\norm{\phi}_{\Hsp^1} = 1$.     
      Let us assume towards a contradiction the statement is false, in
      this case there is a sequence
      $(\phi_n,\vb{x}_n) \subset \Hsp^1\times \surface^2_{\diag}$,
      with $\norm{\phi_n}_{\Hsp^1} = 1$, such that,
      \begin{align}
        \Bop(\phi_n, \phi_n) = \norm{\grad\phi_n}_{\Lsp^2}^2 +
        \lproduct{\fnV_n, \phi_n^2}_{\Lsp^2} \to 0,
      \end{align}
      where $\fnV_n = \fnV(h_n)$ is the potential function determined
      by $h_n$, the solution to the Taubes equation with data
      $\vb{x}_n$. Since the functions $\fnV_n$ are non-negative,
      \begin{align}
        \norm{\grad\phi_n}_{\Lsp^2} \to 0,
        &&
          \lproduct{\fnV_n, \phi_n^2}_{\Lsp^2} \to 0.
      \end{align}
      
      Passing to a subsequence if necessary, we can assume
      $\phi_n \wto \phi_*$ in $\Hsp^1$ and strongly in $\Lsp^2$ and
      $\vb{x}_n \to \vb{x}_{*}$ in $\surface\times \surface$. Since
      the functions
      \begin{align}
        e^{\fnv_j}: \surface\times\surface \to \reals
      \end{align}
      are continuous and $\tilde h_n$ varies continuously with the
      initial data, if $\vb{x}_{*} \not\in \diag$, we have the uniform
      convergence $\fnV_n \to \fnV_* = \fnV(h_{*})$, where $h_{*}$ is
      the solution to the Taubes equation determined by $\vb{x}_{*}$. On
      the other hand, if $\vb{x}_{*} \in \diag$, we know that
      $\tilde h_n \to \mu$ in $C^1$, hence, we have pointwise
      convergence
      $\fnV_n \to \fnV_{*} \equiv 4\exp(\mu) (\exp(\mu) + 1)^{-2}$. In
      any case, by our previous lemma,
      \begin{align}
        \lproduct{\fnV_n, \phi_n^2}_{\Lsp^2} \to
        \lproduct{\fnV_*, \phi_*^2}_{\Lsp^2},
      \end{align}
      but this limit is zero, hence $\phi_{*} = 0$ almost everywhere
      and $\phi_n \to 0$ in $\Hsp^1$ strongly, a contradiction.
    \end{proof}

    \begin{proposition}\label{prop:sph-ddh-bound}
      Let $D \subset V$ be any bounded domain. There is a positive constant
      $C(D)$, such that 
      %for any $(x_1,x_2)\in \surface^2_{\diag}$,
      %
      \begin{align}
        \norm{\partial_{z_j}\tilde h}_{C^1} \leq \frac{C}{\dist_{12}},
        &&
           \text{and}
        &&
           \norm{\grad \tilde h}_{C^1} \leq \frac{C}{\dist_{12}},
      \end{align}
      for all $z_1, z_2 \in D$ 
      with $z_1 \neq z_2$, where $\tilde h(x) = \tilde h(x; \varphi^{-1}(z_1), 
      \varphi^{-1}(z_2))$ and  
      $\dist_{12} = \dist(x_1, x_2)$.
    \end{proposition}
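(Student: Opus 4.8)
The plan is to view $\partial_{z_j}\tilde h$ and the components of $\grad\tilde h$ as solutions of linear elliptic equations whose right hand sides are controlled, \emph{uniformly in} $z_1,z_2\in D$, by a constant times $1/\dist_{12}$, and then run a short elliptic bootstrap. By Theorem~\ref{thm:h-reg-param-dep} the map $(z_1,z_2)\mapsto\tilde h(\cdot\,;z_1,z_2)$ is smooth into $\Hsp^r(\surface)$ for every $r$, so $\partial_{z_j}\tilde h$ is a genuine smooth function on $\surface$, and differentiating the regularised Taubes equation $-\laplacian\tilde h=\fnF(\fnv+\tilde h)$ in $z_j$ it solves
\begin{align}
\laplacian\,\partial_{z_j}\tilde h = -\fnV(h)\,\partial_{z_j}\tilde h-\sign_j\,\fnV(h)\,\partial_{z_j}\fnv_j
\end{align}
classically, with $\fnV(h)=\fnF'(h)$ non-negative and bounded independently of the cores. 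Testing against $\psi\in\Hsp^1(\surface)$ and integrating by parts, $\partial_{z_j}\tilde h$ is the weak solution of $\Bop(\partial_{z_j}\tilde h,\psi)=-\sign_j\lproduct{\fnV(h)\,\partial_{z_j}\fnv_j,\psi}_{\Lsp^2}$ for all $\psi$, where $\Bop$ is the bilinear form of Lemma~\ref{lem:vol-unif-coercive} and the source is already estimated: by Lemma~\ref{lem:vol-unif-bound-pot}, $\norm{\fnV(h)\,\partial_{z_j}\fnv_j}_{\Lsp^p(\surface)}\le C(D,p)/\dist_{12}$ for every $p>0$.

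First I would get the $\Hsp^1$ bound. Taking $\psi=\partial_{z_j}\tilde h$ and applying the \emph{uniform} coercivity of $\Bop$ (Lemma~\ref{lem:vol-unif-coercive}) together with Cauchy--Schwarz gives $C\norm{\partial_{z_j}\tilde h}^2_{\Hsp^1}\le\norm{\fnV(h)\,\partial_{z_j}\fnv_j}_{\Lsp^2}\norm{\partial_{z_j}\tilde h}_{\Lsp^2}$, hence $\norm{\partial_{z_j}\tilde h}_{\Hsp^1(\surface)}\le C/\dist_{12}$ with $C$ independent of the cores. Next I would bootstrap: since $\dim\surface=2$, Sobolev's embedding gives $\norm{\partial_{z_j}\tilde h}_{\Lsp^q(\surface)}\le C/\dist_{12}$ for all $q<\infty$, so (using that $\fnV(h)$ is bounded independently of the cores) the right hand side of the displayed equation lies in every $\Lsp^p(\surface)$ with norm $\le C/\dist_{12}$. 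The $\Lsp^p$ elliptic estimate for $\laplacian$ on the compact surface — whose constant is a metric constant, the potential having been moved to the right hand side — then yields $\norm{\partial_{z_j}\tilde h}_{\Wsp^{2,p}(\surface)}\le C(\norm{f}_{\Lsp^p}+\norm{\partial_{z_j}\tilde h}_{\Lsp^p})\le C/\dist_{12}$, and for a fixed $p>2$ the embedding $\Wsp^{2,p}(\surface)\subset C^1(\surface)$ gives the first inequality.

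For $\norm{\grad\tilde h}_{C^1}$ I would differentiate $-\laplacian\tilde h=\fnF(\fnv+\tilde h)$ in $x$; since $\grad$ and $\laplacian$ do not commute on a curved surface this produces a Weitzenb\"ock term, so $\grad\tilde h$ solves $\laplacian(\grad\tilde h)=-\fnV(h)\,\grad\tilde h-\fnV(h)\,\grad\fnv+\operatorname{Ric}(\grad\tilde h)$ with $\operatorname{Ric}$ bounded. Here $\fnV(h)\,\grad\fnv$ is again bounded by $C/\dist_{12}$ in $\Lsp^p(\surface)$ by the remark following Lemma~\ref{lem:vol-unif-bound-pot}, while Corollary~\ref{cor:lim-tilde-h}, together with the continuous dependence of $\tilde h$ on the cores on the compact set $\overline D\times\overline D$, supplies a uniform $C^0$ bound on $\grad\tilde h$; hence the right hand side is bounded by $C/\dist_{12}$ in $\Lsp^p(\surface)$, and the same elliptic-estimate-plus-Sobolev step gives $\norm{\grad\tilde h}_{\Wsp^{2,p}}\le C/\dist_{12}$ and then $\norm{\grad\tilde h}_{C^1}\le C/\dist_{12}$.

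The content of the argument is not the bootstrap, which is routine, but the uniformity of every constant over $z_1,z_2\in D$: this is precisely what Lemmas~\ref{lem:vol-unif-bound-pot} and~\ref{lem:vol-unif-coercive} are for (the elliptic and Sobolev constants depending only on the fixed metric), and the hypothesis $D\Subset V$ is what keeps the cores away from $\partial V$ so that those lemmas apply and the local holomorphic description of Green's function is valid. The only mild technical wrinkle is the Weitzenb\"ock term in the equation for the $1$-form $\grad\tilde h$, and it is harmless because it is lower order and already controlled in $C^0$ by Corollary~\ref{cor:lim-tilde-h}.
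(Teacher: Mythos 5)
Your proposal follows essentially the same route as the paper: the same linearised equations for $\partial_{z_j}\tilde h$ and $\grad\tilde h$, the uniform coercivity of Lemma~\ref{lem:vol-unif-coercive} combined with the source bound of Lemma~\ref{lem:vol-unif-bound-pot} to get the $\Hsp^1$ estimate (the paper phrases this via Lax--Milgram, you via testing against the solution itself, which is the same estimate), and then the identical elliptic bootstrap through $\Lsp^p$, $\Wsp^{2,p}$ and Sobolev embedding into $C^1$. Your explicit treatment of the Weitzenb\"ock term in the equation for $\grad\tilde h$, controlled via the uniform $C^1$ bound on $\tilde h$ from Corollary~\ref{cor:lim-tilde-h} and compactness of $\overline D\times\overline D$, is a point the paper passes over silently, and is handled correctly.
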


    \begin{proof}
      $\partial_{z_j}\tilde h$ is a solution to the equation
      %
%      \begin{align*}
%        -\laplacian \diff_j\tilde h
%        = \fnV(h)\diff_j\tilde h + \sign_j\fnV(h)\diff_j\fnv_j.
%      \end{align*}
        \begin{align}
        -\laplacian \partial_{z_j}\tilde h
        = \fnV(h)\partial_{z_j}\tilde h + \sign_j\fnV(h)\partial_{z_j}\fnv_j.
        \end{align}
      
      By lemma~\ref{lem:vol-unif-coercive}, there is a positive
      constant $C_1$ 
      %independent of ${\vb x} = (x_1, x_2)$, such that
      independent of $z_1, z_2$, such that
      \begin{align}
        C_1\,\norm{\phi}_{\Hsp^1}^2 \leq \norm{\grad\phi}_{\Lsp^2}^2 +
        \lproduct{\fnV(h)\,\phi, \phi}_{\Lsp^2},
      \end{align}
      for all $\phi \in \Hsp^1$. As in the proof of 
      lemma~\ref{lem:vol-unif-bound-pot}, a
      second uniform constant, dependent on $D$, can be found such that,
      %
%      \begin{align*}
%        \norm{\fnV(h)\diff_j\fnv_j}_{\Lsp^2} \leq \frac{C_2}{\dist_{12}}.
%      \end{align*}
      \begin{align}
        \norm{\fnV(h)\partial_{z_j}\fnv_j}_{\Lsp^2} \leq \frac{C_2}{\dist_{12}}.
      \end{align}

      By the Lax-Milgram theorem, we obtain the bound,
      %
%      \begin{align*}
%        \norm{\diff_j\tilde h}_{\Hsp^1} \leq \frac{C}{\dist_{12}},
%      \end{align*}
      \begin{align}
\norm{\partial_{z_j}\tilde h}_{\Hsp^1} \leq \frac{C}{\dist_{12}},
\end{align}

      where $C = C_2/C_1$. Now we follow a recursive argument: by
      Schauder's estimates, 
      %$\norm{\diff_j\tilde h}_{\Hsp^2}$
      $\norm{\partial_{z_j}\tilde h}_{\Hsp^2}$ is also bounded
      by $C\,d_{12}^{-1}$ for some constant $C$. By Sobolev's
      embedding, there is another constant such that
      %$\norm{\diff_j\tilde h}_{C^0}$ 
      $\norm{\partial_{z_j}\tilde h}_{C^0}$ is also bounded by
      $C\,d_{12}^{-1}$. Thence, for any given $p > 2$, 
      %
%      \begin{align*}
%        \norm{\diff_j\tilde h}_{\Lsp^p} \leq \frac{C}{\dist_{12}}.
%      \end{align*}
      %
            \begin{align}
      \norm{\partial_{z_j}\tilde h}_{\Lsp^p} \leq \frac{C}{\dist_{12}}.
      \end{align}
      By the elliptic estimates, 
      %
%      \begin{align*}
%        \norm{\diff_j\tilde h}_{\Wsp^{2, p}}
%        &\leq C\,(\norm{\laplacian \diff_j\tilde h}_{\Lsp^p}
%          + \norm{\tilde h}_{\Lsp^p})\\
%        &\leq C\,(\norm{\fnV(h)\diff_j\tilde h}_{\Lsp^p}
%          + \norm{\fnV(h)\diff_j\fnv_j}_{\Lsp^p} + \norm{\diff_j\tilde
%          h}_{\Lsp^p})\\
%        &\leq \frac{C}{\dist_{12}},
%      \end{align*}
\begin{align}
\norm{\partial_{z_j}\tilde h}_{\Wsp^{2, p}}
&\leq C\,(\norm{\laplacian \partial_{z_j}\tilde h}_{\Lsp^p}
+ \norm{\tilde h}_{\Lsp^p})\nonumber\\
&\leq C\,(\norm{\fnV(h)\partial_{z_j}\tilde h}_{\Lsp^p}
+ \norm{\fnV(h)\partial_{z_j}\fnv_j}_{\Lsp^p} + \norm{\partial_{z_j}\tilde
    h}_{\Lsp^p})\nonumber\\
&\leq \frac{C}{\dist_{12}},
\end{align}
      for the last inequality we have used that the function $\fnV(t)$
      is bounded. Sobolev's embedding implies the claimed
      bound,
      %
%      \begin{align*}
%        \norm{\diff_j\tilde h}_{C^1} \leq \frac{C}{\dist_{12}}. 
%      \end{align*}
      \begin{align}
        \norm{\partial_{z_j}\tilde h}_{C^1} \leq \frac{C}{\dist_{12}}. 
      \end{align}

      This argument is also valid for $\grad \tilde h$, because it is a
      solution to the elliptic problem,
      \begin{align}
        -\laplacian (\grad \tilde h) = \fnV(h)\grad \tilde h + \fnV(h)(
        \grad\fnv_{1} - \grad\fnv_{2}),
      \end{align}
      and the upper bound
      \begin{align}
        \norm{\fnV(h)\grad \fnv_{j}} \leq \frac{C}{\dist_{12}}
      \end{align}
      also holds. 
    \end{proof}

    %CHECAR A PARTIR DE AQUI

    For latter application, we need to translate this estimate to a
    holomorphic chart.

    \begin{lemma}\label{lem:sph-dist-dist-comp}
%      Let $x: U \to \cpx$ be a holomorphic chart in which the metric
%      is conformally flat. Suppose $x_2\in U
%      \mapsto 0$. For any $\epsilon > 0$ sufficiently small, there is
%      a constant $C$, such that if $\abs x < \epsilon$, 
%      %
%      \begin{align*}
%        C\abs x \leq \dist(x, x_2).
%      \end{align*}
    Let $\varphi:U \subset \surface \to V \subset \cpx$ be a holomorphic 
    chart and let $D$ be a geodesically convex neighbourhood such that 
    $\overline D \subset U$, there is a positive constant 
    $C$, such that for all $z_1, z_2 \in \varphi(D)$,
    \begin{align}
    C\,\abs{z_1 - z_2} \leq d(\varphi^{-1}(z_1), \varphi^{-1}(z_2)). 
    \end{align}
    \end{lemma}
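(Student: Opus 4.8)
The plan is to read off the result from the explicit form of the metric in the holomorphic chart, together with the geodesic convexity of $D$; this is just the lower half of the comparison already used in the proof of lemma~\ref{lem:vav-sph-green-log-asympt}. First I would record the compactness point: since $\surface$ is compact, $\overline D$ is compact, and because $\overline D \subset U$ the map $\varphi$ restricts to a homeomorphism of $\overline D$ onto the compact set $\varphi(\overline D) \subset \cpx$. In the chart $\varphi$ the Riemannian metric has the form $e^{\Lambda(z)}\abs{dz}^2$ with $\Lambda$ smooth on $V$, so $e^{\Lambda/2}$ is continuous and strictly positive on $\varphi(\overline D)$ and therefore attains a minimum $m>0$ there. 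I claim the lemma holds with $C = m$.

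Next, given $z_1,z_2 \in \varphi(D)$ with $x_j = \varphi^{-1}(z_j) \in D$, I would invoke Hopf--Rinow (applicable since $\surface$ is compact, hence complete) to produce a minimising geodesic $\gamma$ from $x_1$ to $x_2$, and then use the hypothesis that $D$ is geodesically convex to conclude that $\gamma$ stays inside $D$. Its image under $\varphi$ is a piecewise $C^1$ curve $\tilde\gamma = \varphi\circ\gamma$ in $\varphi(D)$ joining $z_1$ and $z_2$, whose Euclidean length dominates $\abs{z_1 - z_2}$. Since the $g$-length of $\gamma$ equals $\int e^{\Lambda(\tilde\gamma)/2}\abs{\tilde\gamma'}\,dt \geq m\int \abs{\tilde\gamma'}\,dt$, chaining the inequalities gives $d(x_1,x_2) = \operatorname{length}_g(\gamma) \geq m\,\abs{z_1-z_2}$, which is the assertion.

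I do not expect a genuine obstacle; the only points needing a little care are (i) that the set on which $e^{\Lambda/2}$ is bounded below is actually compact — which is exactly why the hypothesis $\overline D \subset U$ is imposed — and (ii) that a minimising geodesic between two points of $D$ can be taken to lie in $D$, which is precisely what geodesic convexity of $D$ provides, so that $\varphi\circ\gamma$ is a legitimate curve to which the chart formula for length applies. If one preferred not to assume the minimiser lies in $D$, one could instead split into the case where $d(x_1,x_2)$ is small — where the minimising geodesic automatically remains in a convex coordinate neighbourhood — and the case where $d(x_1,x_2)$ is bounded below, in which the inequality is trivial because $\abs{z_1-z_2}$ is bounded on the bounded set $\varphi(D)$; but the stated convexity hypothesis makes this detour unnecessary.
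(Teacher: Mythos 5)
Your argument is correct and is essentially the paper's own proof: bound the conformal factor $e^{\Lambda/2}$ below on the compact set $\varphi(\overline D)$, use geodesic convexity to obtain a minimising geodesic whose chart image joins $z_1$ to $z_2$ inside $\varphi(D)$, and compare its $g$-length with the Euclidean length of its image, which dominates $\abs{z_1-z_2}$. No discrepancies worth noting.
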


    \begin{proof}
%      Assume $\epsilon$ is so small, the holomorphic disk
%      $\abs{x} < \epsilon$ is contained in a geodesically convex
%      neighbourhood $N$ of the origin. Without loss of generality we
%      can suppose $\overline N$ compact. 
%The conformal factor $\cf$ is
The conformal factor is 
      a 
      continuous positive function on 
      $V$ and $\varphi(\overline D)$ is compact, hence there is a constant 
      $C > 0$, such that for all $z \in \varphi(D)$,
      %$x
      %\in N$,
      %
%      \begin{align*}
%        C^2 \leq \cf(x).
%      \end{align*}
%      
\begin{align}
C^2 \leq e^{\Lambda(z)}.
\end{align}

    Since $D$ is geodesically convex, for any pair $z_1, z_2 \in 
    \varphi(D)$, there is a curve $\gamma: [0, 1] \to \varphi(D)$ joining $z_1$ 
    to $z_2$ 
    such that $\varphi^{-1}\circ \gamma$ is a minimizing geodesic joining 
    $\varphi^{-1}(z_1)$ to $\varphi^{-1}(z_2)$, hence, 
%      If $\gamma
%      : [0, 1] \to N$ is the local parametrization of the minimising
%      geodesic in $N$ joining $0$ to
%      any other point $x$,
      %
%      \begin{align*}
%        C\int_0^1\abs{\dot \gamma}\,ds \leq \int_0^1\cf^{1/2}\abs{\dot
%          \gamma}\,ds = \dist(x, x_2).
%      \end{align*}
      \begin{align}
        C\int_0^1\abs{\dot \gamma}\,ds \leq \int_0^1 e^{\Lambda/2}\abs{\dot
    \gamma}\,ds = \dist(\varphi^{-1}(z_1), \varphi^{-1}(z_2)).
\end{align}

      By the triangle inequality,
      %
%      \begin{align*}
%        \abs x = \abs*{\int_0^1\dot\gamma}\leq \int_0^1 \abs {\dot
%          \gamma}\,ds,
%      \end{align*}
      \begin{align}
\abs{z_1 - z_2} = \abs*{\int_0^1\dot\gamma}\leq \int_0^1 \abs {\dot
    \gamma}\,ds,
\end{align}
      yielding the result.
    \end{proof}

    %The advantage of the holomorphic chart is that
    %explicit computations of the singularities of the $(0,1)$ form $b$
    %are possible; 
    The advantage of the holomorphic chart is that it makes computations 
     possible, on the other hand, the Riemannian distance 
    is a geometric invariant defined globally on the surface and better suited 
    to prove analytical properties of the solutions to the Taubes equation. 
    For the next lemma, notice that if $\surface_1 \times \surface_2$ is a 
    product of 
    Riemmann surfaces, for any function $f: \surface_1 \times \surface_2 
    \to \cpx$ in local coordinates $\varphi_j:U_j \to \cpx$, 
    $\varphi_j(x_j) = z_j$,
    \begin{align}
    \del_{x_1}\del_{x_2}f = \partial_{z_1z_2}f\,dz^1\otimes dz^2
    \in \Omega^{(2,0)}(\surface_1\times\surface_2).
    \end{align} 

    In the product metric, $dz^1$ and $dz^2$ are orthogonal, hence,
    \begin{align}
    \abs{\partial_{x_1}\partial_{x_2}f} = \abs{\partial_{z_1, z_2}f}\,
        \abs{dz^1}\,\abs{dz^2}.
    \end{align}    
    
%    function $f: M_1\times M_2 \to \reals$ defined on a 
%      product of Riemannian manifolds, in local coordinates we have 
%      the exterior covariant derivative with respect to the second
%      factor is,
%      %
%      \begin{align*}
%        d^{\grad}_yf = \del_{y_j}f\,dy^j,
%      \end{align*}
%      %
%      where we refer to points in the cartesian product as pairs $(x,
%      y)$. Since the metric is the product metric, for mixed covariant
%      derivatives, we have,
%      %
%      \begin{align*}
%        d^{\grad}_xd^{\grad}_yf &=
%        d^{\grad}_x(\del_{y_j}f)\otimes dy^j\\
%        &=
%        \del_{x_ky_j}f\,dx^k\otimes dy^j.
%      \end{align*}
%
%      Note that the set $\set{dx^k\otimes dy^j}$ is a local orthogonal base
%      for $T^{*}M_1\otimes T^{*}M_2$. Hence,
%      %
%      \begin{align*}
%        \abs{d^{\grad}_xd^{\grad}_yf} =
%        \abs{\del_{x_ky_j}f}\,\abs{dx^k}\,\abs{dy^j}.
%      \end{align*}
%
%If  $x_1, x_2\in\surface$ are in a in a neighbourhood $N$ with
%compact closure in the same holomorphic patch, there is a smooth
%function $\tilde v: N \times N \times N \to \reals$, such that for
%all triples $(x, x_1, x_2)$ in this neighbourhood with $x_1 \neq x_2$, 
%%
%\begin{align*}
%v(x) = \log\,\abs{x - x_1}^2 - \log\,\abs{x - x_2}^2+ \tilde
%v(x, x_1, x_2).
%\end{align*}

    \begin{lemma}\label{lem:sph-del1-b1-bound}
%      Let $x_2\in \surface$, for any holomorphic chart centred at
%      $x_2$ such that in this chart the metric is conformally flat and
%      any $\epsilon > 0$, there is a constant $C$, such that 
%      if $\abs {x_1 - x_2} < \epsilon$, 
%      %
%      \begin{align*}
%        \abs{\del_1b_1(x_1, x_2)} \leq \frac{C}{\abs{x_1 - x_2}}.
%      \end{align*}
    For any holomorphic chart $\varphi: U\subset \surface \to V \subset \cpx$ 
    and any geodesically convex neighbourhood $D$ such that $\overline D 
    \subset U$, there is a constant $C > 0$  
      such that, for all $z_1, z_2 \in \varphi(D)$, $z_1 \neq z_2$,
\begin{align}
\abs{\del_{z_1}b_1(z_1, z_2)} \leq \frac{C}{\abs{z_1 - z_2}},
\end{align}
where the coefficient $b_1$ appearing in the metric 
of $\moduli^{1,1}(\surface)$ is defined as in~\eqref{eq:pre-bcoef}.
    \end{lemma}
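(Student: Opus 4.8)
The plan is to unwind the definition of $b_1$ given in~\eqref{eq:pre-bcoef} and separate out the pieces that are controlled by the parametric estimates of Proposition~\ref{prop:sph-ddh-bound} from the pieces that are purely local to Green's function. Recall that
\begin{align}
b_1(z_1, z_2) = 2\,\eval{\conj\del_z}{z = z_1}\brk(\sign_1\, \fnh(\varphi^{-1}(z); z_1, z_2) - \log\,\abs{z - z_1}^2),
\end{align}
and writing $\fnh = \tilde h + v + \cmu$ with $v = \sign_1 v_1 + \sign_2 v_2$, the logarithmic singularity at $z_1$ coming from $v_1 = 4\pi G(\cdot, \varphi^{-1}(z_1))$ cancels the explicit $\log\abs{z - z_1}^2$ term up to a smooth remainder $H(z, z_1)$, exactly as in the proof of the corollary to Theorem~\ref{thm:h-reg-param-dep}. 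So, as in that corollary,
\begin{align}
b_1 = 8\pi\,\conj\del_z H_\varphi(z_1, z_1) + 8\pi\,\sign_1\sign_2\,\conj\del_z G_\varphi(z_1, z_2) + 2\sign_1\,\conj\del_z \tilde h(\varphi^{-1}(z_1); z_1, z_2),
\end{align}
where $H_\varphi$ is the smooth part of Green's function in the chart and $G_\varphi(z_1,z_2) = G(\varphi^{-1}(z_1),\varphi^{-1}(z_2))$.

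Then I would differentiate with respect to $z_1$ and bound each of the three resulting terms. The term $\del_{z_1}\conj\del_z H_\varphi(z_1, z_1)$ involves only the smooth function $H$ on the compact set $\overline{\varphi(D)} \times \overline{\varphi(D)}$, so it is bounded by a constant, hence trivially by $C/\abs{z_1 - z_2}$ on any bounded domain after absorbing a diameter factor. The term $\del_{z_1}\conj\del_z G_\varphi(z_1, z_2)$ is the only genuinely singular geometric contribution: away from the diagonal $G$ is smooth, and near the diagonal one uses $G_\varphi(z, w) = \frac{1}{2\pi}\log\abs{z - w} + H_\varphi(z, w)$, so $\del_{z_1}\conj\del_z\log\abs{z - z_1}^2 = 0$ for $z \neq z_1$ and the second mixed derivative of $\log$ drops out, leaving $\del_{z_1}\conj\del_z H_\varphi$ which is again bounded. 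Actually the surviving singular term is the first derivative structure: $\conj\del_z G_\varphi(z_1, z_2)$ carries a $(z_1 - z_2)^{-1}$ type singularity, and differentiating in $z_1$ produces a $\abs{z_1 - z_2}^{-2}$ term — but this is harmless because it is a derivative of an explicit elementary function and I can keep it inside a term bounded by $C/\abs{z_1 - z_2}^2$... no: more carefully, $\del_{z_1}\conj\del_z\brk(\frac{1}{2\pi}\log\abs{z-z_2}) = 0$ since it has no $z_1$ dependence, so in fact the relevant mixed derivative $\del_{z_1}\conj\del_z G_\varphi(z_1, z_2)$ equals $\del_{z_1}\conj\del_z H_\varphi(z_1, z_2)$ plus terms that vanish, and is therefore bounded. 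This matches the computation already carried out in the corollary to Theorem~\ref{thm:h-reg-param-dep}.

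The third term, $\del_{z_1}\conj\del_z \tilde h(\varphi^{-1}(z_1); z_1, z_2)$, is where Proposition~\ref{prop:sph-ddh-bound} does the work: it gives $\norm{\grad \tilde h}_{C^1} \leq C/\dist_{12}$ and $\norm{\partial_{z_j}\tilde h}_{C^1} \leq C/\dist_{12}$ on $D$, and combining these (via the product-rule structure $\del_{z_1}[\tilde h(\varphi^{-1}(z_1); z_1, z_2)] = (\del_x \tilde h)(\varphi^{-1}(z_1)) \cdot \del_{z_1}\varphi^{-1}(z_1) + (\del_{z_1}\tilde h)(\varphi^{-1}(z_1))$ and then applying $\conj\del_z$ similarly) bounds the mixed derivative by $C/\dist_{12}$; finally Lemma~\ref{lem:sph-dist-dist-comp} converts $\dist_{12} = \dist(\varphi^{-1}(z_1), \varphi^{-1}(z_2))$ into $\abs{z_1 - z_2}$, giving $\leq C/\abs{z_1 - z_2}$. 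Summing the three bounds, and noting the first two are bounded by a constant hence dominated by $C/\abs{z_1-z_2}$ near the diagonal while being trivially bounded away from it, yields $\abs{\del_{z_1}b_1(z_1,z_2)} \leq C/\abs{z_1 - z_2}$ on $\varphi(D)$. The main obstacle is bookkeeping the several changes of variable — holomorphic chart versus Riemannian distance, derivative-in-first-argument versus derivative-in-parameter — and making sure the singular term from Green's function genuinely cancels rather than contributing a worse-than-$\dist_{12}^{-1}$ power; the cancellation argument from the earlier corollary is the template, and Proposition~\ref{prop:sph-ddh-bound} supplies the only nontrivial estimate.
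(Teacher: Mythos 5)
Your proof is correct and follows essentially the same route as the paper: isolate the singular $-2/(\conj z_1 - \conj z_2)$ part of $b_1$, note it is annihilated by $\del_{z_1}$ because it is anti-holomorphic in $z_1$, bound the smooth Green's-function remainders by compactness, and control the surviving mixed derivative of $\tilde h$ via Proposition~\ref{prop:sph-ddh-bound} and Lemma~\ref{lem:sph-dist-dist-comp}. The only cosmetic difference is that the paper bounds the pure second spatial derivative $\del_z\conj\del_z\tilde h$ by a constant using the Taubes equation, whereas you absorb it into the $C/\dist_{12}$ estimate from the $C^1$ bound on $\grad\tilde h$ — both suffice.
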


    \begin{proof}
If  $z_1, z_2 \in \varphi(D)$, there is a smooth
function $\tilde v: \varphi(\overline D) \times \varphi(\overline D) \times 
\varphi(\overline D) \to \reals$, such that 
for
all triples $z, z_1, z_2$ of points in the domain with $z_1 \neq z_2$,
\begin{align}
v(\varphi^{-1}(z)) = \log\,\abs{z - z_1}^2 - \log\,\abs{z - z_2}^2+ \tilde
v(z, z_1, z_2).
\end{align}

Hence,
      \begin{align}
        b_1(z_1, z_2) &= 2\eval{\bar\del}{z=z_1}(h(\varphi^{-1}(z)) 
        - \log\,\abs{z - z_1}^2)\nonumber\\
        &= 2\eval{\bar\del}{z = z_1}(\tilde h(\varphi^{-1}(z)) - \log\,\abs{z - 
        z_2}^2 +
        \tilde v(z,z_1,z_2))\nonumber\\
        &= 2\,\bar\del_z\tilde h(\varphi^{-1}(z_1); \varphi^{-1}(z_1), 
        \varphi^{-1}(z_2)) - \frac{2}{\bar z_1 - \bar
        z_2} + 2\,\bar\del_z\tilde v(z_1, z_1, z_2),
    \end{align}
    where $\bar\del_z$ refers to complex derivatives with respect to the first 
    entry. In the following calculation we denote $\tilde h(\varphi^{-1}(z_1); 
    \varphi^{-1}(z_1), \varphi^{-1}(z_2))$ by $\tilde h$ and $\tilde v(z_1, 
    z_1, z_2)$ by $\tilde v$, whence,
    %
%    \begin{align*}
%      \del_{z_1}b_1 &= 2\eval{\del\bar\del}{x=x_1}\tilde h(x)
%      + 2\eval{\bar\del}{x=x_1}\del_1\tilde h(x) + 2\,\del_1\bar\del
%      \tilde v(x_1)\\
%      &= -\frac{\cf(x_1)}{2}\laplacian_{\surface}\tilde h
%      + 2\eval{\bar\del}{x=x_1}\del_1\tilde h(x)
%      + 2\,\del_1\bar\del
%      \tilde v(x_1)\\
%      &= \frac{\cf(x_1)}{2}\Fstable(h)
%      + 2\eval{\bar\del}{x=x_1}\del_1\tilde h(x)
%      + 2\,\del_1\bar\del
%      \tilde v(x_1).
%    \end{align*}
\begin{align}
\del_{z_1}b_1 &= 2 \pbrk{
    \del_z\bar\del_z\tilde h
    +{\del_{z_1}}\bar\del_z\tilde h 
    +\del_{z}\bar\del_z \tilde v
    +\del_{z_1}\bar\del_z \tilde v}\nonumber\\
&= 2 \pbrk{ -\frac{e^{\Lambda(z_1)}}{2}\laplacian_{\surface}\tilde h
+ \bar\del_z \del_{z_1} \tilde h 
+\del_{z}\bar\del_z \tilde v
+\del_{z_1}\bar\del_z \tilde v}\nonumber\\
&= 2\pbrk{\frac{e^{\Lambda(z_1)}}{2}\Fstable(h)
+ \bar\del_z \del_{z_1} \tilde h 
+\del_{z}\bar\del_z \tilde v
+\del_{z_1}\bar\del_z \tilde v}.
\end{align}

    Since $\varphi(\overline D)$ is compact, $\Lambda(z_1)$ and the last two 
    terms are 
    bounded functions on $\varphi(D)$ by continuity. Since function $F(t)$ is 
    also 
    bounded, we conclude the same statement for the first term. For the second 
    term, if $x = \varphi^{-1}(z)$ and $x_j = 
    \varphi^{-1}(z_j)$, we
    have by lemma~\ref{lem:sph-dist-dist-comp} and
    proposition~\ref{prop:sph-ddh-bound},
    %
%    \begin{align*}
%      \abs*{\bar\del_z \del_{z_1} \tilde h} &=
%      e^{\Lambda(z_1)}\,\abs*{\bar\del_z \del_{z_1} \tilde 
%      h}\,\abs*{dz}\,\abs{dz^1}\\
%      &= \abs{d^{\grad}_xd^{\grad}_{x_1}\tilde h}\\
%      &\leq \frac{C}{\dist(x_1,x_2)}\\
%      &\leq \frac{C}{\abs{x_1 - x_2}}.
%    \end{align*}
    \begin{align}
\abs*{\bar\del_z \del_{z_1} \tilde h} &=
e^{\Lambda(z_1)/2} \abs*{\bar\del_z \del_{z_1} \tilde 
    h} \abs*{dz}\nonumber\\
&= \abs{\bar \partial_x\partial_{z_1}\tilde h(x_1, \varphi^{-1}(z_1), 
\varphi^{-1}(z_2))}\nonumber\\
&\leq \frac{C}{\dist(x_1,x_2)}\nonumber\\
&\leq \frac{C}{\abs{z_1 - z_2}}.
\end{align}

Therefore the lemma is proved.
    \end{proof}

    \begin{theorem}\label{thm:moduli-space-incomplete-compact-surface}
      The moduli space is incomplete. There is a
      Cauchy sequence $\set{\vb x_n} \subset \moduli^{1,1}(\surface)$
      such that $\vb x_n \to \vb x \in \diag$ as a sequence in
      $\surface\times \surface$.
    \end{theorem}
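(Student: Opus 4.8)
The plan is to exhibit an explicit Cauchy sequence of vortex-antivortex configurations converging to a point in the diagonal and to show that the geodesic distance between consecutive terms goes to zero fast enough to sum. Fix a holomorphic chart $\varphi: U \to V \subset \cpx$ about a point $x_* \in \surface$ with $\varphi(x_*) = 0$, and take a geodesically convex neighbourhood $D$ with $\overline D \subset U$. Consider the curve $\gamma(\epsilon) = (\varphi^{-1}(\epsilon), \varphi^{-1}(-\epsilon)) \in \moduli^{1,1}(\surface)$ for $\epsilon \in (0, \epsilon_0]$ real, so that the vortex sits at $z_1 = \epsilon$ and the antivortex at $z_2 = -\epsilon$, both tending to $x_*$ as $\epsilon \to 0$. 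I will show this curve has finite length in the $\Lsp^2$ metric $\kmetric$, which immediately yields a Cauchy sequence $\vb x_n = \gamma(\epsilon_n)$ with $\epsilon_n \to 0$.

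First I would write down the conformal factor of the restriction of $\kmetric$ to this curve. By theorem~\ref{thm:bps-loc-formula}, the metric is $\kmetric = 2\pi \sum_{i,j}\big(e^{\Lambda(z_i)}(1 - \sgn_i\tau)\delta_{ij} + \del_{z_i}b_j\big)\,dz_i\,d\conj z_j$. Along $\gamma$ we have $z_1 = \epsilon = -z_2$, so $dz_1 = d\epsilon = -dz_2$, and the induced line element is a sum of the four coefficient terms contracted with $d\epsilon^2$. The diagonal pieces $e^{\Lambda(\pm\epsilon)}(1 \mp \sgn\tau)$ are bounded on $\overline D$, hence contribute a bounded amount to the squared length density. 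The dangerous terms are $\del_{z_1}b_1$, $\del_{z_1}b_2$, $\del_{z_2}b_1$, $\del_{z_2}b_2$. By lemma~\ref{lem:sph-del1-b1-bound} we have $|\del_{z_1}b_1(z_1,z_2)| \leq C/|z_1 - z_2| = C/(2\epsilon)$ on $\varphi(D)$. The symmetries from lemma~\ref{lem:coef-sym}, namely $\del_{z_i}b_j = \conj\del_{z_j}\conj b_i$, together with the translation invariance $b_1 + b_2 = 0$ and $\sum_k \del_{z_k} b_j = 0$ established in the Euclidean chapter (which I would re-derive or cite the compact analogue of, or simply bound each term directly by re-running the proof of lemma~\ref{lem:sph-del1-b1-bound} for every index pair), give the same $C/\epsilon$ bound for all four mixed derivatives. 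Therefore the conformal factor $\Omega(\epsilon)$ of the metric along $\gamma$ satisfies $\Omega(\epsilon) \leq C/\epsilon$ for small $\epsilon$.

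Next I would integrate: the length of $\gamma|_{(0,\epsilon_0]}$ is $\ell = \int_0^{\epsilon_0}\Omega(\epsilon)^{1/2}\,d\epsilon \leq \int_0^{\epsilon_0} C\epsilon^{-1/2}\,d\epsilon = 2C\epsilon_0^{1/2} < \infty$. Hence, taking any sequence $\epsilon_n \downarrow 0$, the points $\vb x_n = \gamma(\epsilon_n)$ form a Cauchy sequence in $(\moduli^{1,1}(\surface), \dist)$: for $m < n$ the distance $\dist(\vb x_m, \vb x_n)$ is bounded by the length of $\gamma$ between $\epsilon_n$ and $\epsilon_m$, which is at most $2C(\epsilon_m^{1/2} - \epsilon_n^{1/2}) \to 0$. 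On the other hand, in the ambient product manifold $\surface\times\surface$ with its product metric we have $\vb x_n = (\varphi^{-1}(\epsilon_n), \varphi^{-1}(-\epsilon_n)) \to (x_*, x_*) \in \diag$. Since the diagonal is exactly the complement of $\moduli^{1,1}(\surface)$ inside $\surface\times\surface$, the limit is not in the moduli space, so the Cauchy sequence does not converge there; the moduli space is incomplete.

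I expect the main obstacle to be bookkeeping rather than a conceptual difficulty: specifically, making sure the $C/|z_1-z_2|$ bound holds for all the off-diagonal coefficient derivatives $\del_{z_i}b_j$, not just $\del_{z_1}b_1$. Lemma~\ref{lem:sph-del1-b1-bound} as stated only covers $\del_{z_1}b_1$, so I would either invoke the symmetry relations of lemma~\ref{lem:coef-sym} plus the identities $b_1 + b_2 = 0$ and $\del_{z_1}b_j + \del_{z_2}b_j = 0$ to reduce every case to the one already proved, or — more robustly — observe that the proof of lemma~\ref{lem:sph-del1-b1-bound} applies verbatim with $b_1$ replaced by $b_2$ and with the roles of $z_1,z_2$ interchanged, since it only used proposition~\ref{prop:sph-ddh-bound} (which bounds $\del_{z_j}\tilde h$ and $\grad\tilde h$ for both $j$) and lemma~\ref{lem:sph-dist-dist-comp}. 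A secondary point requiring a line of care is that the distance estimate $\dist(\vb x_m, \vb x_n) \leq \mathrm{length}(\gamma|_{[\epsilon_n,\epsilon_m]})$ is legitimate because $\gamma$ is a genuine $C^1$ curve inside the open manifold $\moduli^{1,1}(\surface)$ for $\epsilon > 0$ (the cores never collide), so the integral formula for its length is valid and bounds the infimum over all connecting paths.
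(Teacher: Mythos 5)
Your proposal is correct and follows essentially the same route as the paper: exhibit a curve of finite $\Lsp^2$-length terminating on the diagonal, using the $C/d(x_1,x_2)$ bound on the coefficient derivatives (lemma~\ref{lem:sph-del1-b1-bound}, ultimately proposition~\ref{prop:sph-ddh-bound}) and integrating $\epsilon^{-1/2}$. The only difference is that the paper moves a single core along $\gamma(s)=(sz_1,0)$ while holding the other fixed, so that only $\del_{z_1}b_1$ enters the speed and the lemma applies as stated — this sidesteps the bookkeeping over all four $\del_{z_i}b_j$ that you correctly identify (and correctly resolve, via re-running the lemma's proof rather than relying on the Euclidean translation identities, which are not available on a general compact surface).
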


    \begin{proof}
      %Let us choose any holomorphic chart $x: U \subset \surface\to
      %\cpx$ in which the metric is conformally flat. Let $x_1 \in U$
%      be chosen such that the segment $s x_1$, $0 \leq s \leq 1$ is
%      contained in a geodesically convex neighbourhood of the
%      origin. On these local coordinates, let us define  the curve,
Let $\varphi: U\subset \surface \to \cpx$ be an holomorphic chart defined on 
an open and dense neighbourhood $U$. 
    Let $z_1 \in \cpx$ be chosen such that $\varphi^{-1}(s z_1)$, $0 \leq s 
    \leq 
    1$ is contained in a geodesically convex neighbourhood of 
    $\varphi^{-1}(0)$. 
    %On these local coordinates, let us define  the curve, 
    Let us define  the curve, 
      %
%      \begin{align*}
%        \gamma: (0, 1] \to \moduli^{1,1}(\surface), &&
%        \gamma(s) = (s\,x_1, 0).
%      \end{align*}
       \begin{align}
 \gamma: (0, 1] \to \cpx^2_\diag, &&
 \gamma(s) = (s\,z_1, 0).
 \end{align}
      
      %If $x(s) = s\,x_1$, the squared speed of $\gamma$ in moduli space is,
      Let $z(s) = s z_1$ and let 
      $\varphi^{-1}_*\gamma(s) = (\varphi^{-1}(z(s)), \varphi^{-1}(0))$, 
      be the push forward of the curve $\gamma$ to the moduli space, hence,
      %
%      \begin{align*}
%        \abs{\dot\gamma}^2 = (\cf(x)(1-\tau) +
%        \del_1b_1)\,\abs{x_1}^2,
%      \end{align*}
         \begin{align}
   \abs{\varphi^{-1}_*\dot\gamma}_\moduli^2 = (e^{\Lambda(z)}(1-\tau) +
   \del_{z_1}b_1)\,\abs{z_1}^2,
   \end{align}
   where we denote by $\abs{\cdot}_\moduli$ 
   the norm of vectors in $T_{\varphi^{-1}_*\gamma(s)}\moduli^{1,1}$.

      By Lemma~\ref{lem:sph-del1-b1-bound} there is a constant $C$,
      such that,
      \begin{align}
        \abs{\del_{z_1}b_1} \leq \frac{C}{\abs z} = \frac{C}{s\,\abs {z_1}}.
      \end{align}

      Since the conformal factor is a continuous positive function
      defined on the whole plane, there is another constant, also
      denoted $C$, such that,
      \begin{align}
        \abs {\varphi^{-1}_*\dot \gamma}_\moduli \leq \frac{C}{s^{1/2}}.
      \end{align}
      Let $\ell[\gamma, a, b]$ be the arc-length of the segment
      $\gamma|_{[a, b]}$, $a, b \in (0, 1)$, there is another constant,
       also denoted by $C$, such that,
      \begin{align}
        \ell[\gamma, a, b] = \int_a^b\abs{\varphi^{-1}_*\dot \gamma}_\moduli\,ds
        \leq C (b^{1/2} - a^{1/2}),
      \end{align}
whence,
      \begin{align}
        \dist(\varphi^{-1}_* \gamma(b), \varphi^{-1}_* \gamma(a)) \leq 
        C\,(b^{1/2} - a^{1/2}).
      \end{align}

      This inequality shows if $\set{s_n} \subset (0, 1]$ is any
      converging sequence $s_n \to 0$, the new sequence,
      \begin{align}
        \vb x_n = \varphi^{-1}_*\gamma(s_n) \in \moduli^{1,1}(\surface),
      \end{align}
      is Cauchy, however $\gamma$ is continuous which
      implies $\vb x_n \to (\varphi^{-1}(0), \varphi^{-1}(0)) \in 
      \diag_\surface$. 
      Therefore, the moduli space
      is incomplete.
    \end{proof}

%\section{The vortex-antivortex equation on the sphere}
%\label{sec:elliptic-sphere}

%\input{chapters/vav-sphere/sections/elliptic-problem-sphere.tex}

%\input{chapters/vav-sphere/sections/the-metric-on-moduli11.tex}

\section{The volume of the moduli space}\label{sec:volume-moduli-space}

We conclude this chapter computing the volume of the moduli space 
$\moduli^{1,1}(\surface)$ for the 
round sphere and flat tori. As it will turn out, the existence of a Lie 
group of isometries will play an important role in the 
calculations. Symmetries were studied for their relation to conservation laws 
in a Schrodinger-Chern-Simons model by Manton and Nasir 
in~\cite{mantonnasir}, for the Riemann sphere, symmetries of the coefficients 
of 
the $\Lsp^2$ metric for vortices of a non-relativistic Chern-Simons model 
were treated by Rom\~ao~\cite{romao2001quantum}. We follow similar  
ideas for asymmetric vortices of the $O(3)$ Sigma model. There is a 
general conjecture for the volume of the moduli space by 
Rom\~ao-Speight~\cite{romao2018}, which can be stated as follows,

\begin{conjecture}[The volume conjecture]\label{conj:volume-conjecture}
    Given a compact Riemann surface $\surface$ of genus $g$ and total area 
    $\abs\surface$, let,
    \begin{align*}
    J_\pm &= 2\pi (1 \mp \tau)\abs\surface - 4\pi^2(k_\pm - k_\mp),\\
    K_\pm &= \mp 2\pi^2,
    \end{align*}
    then the total volume of the moduli space $\moduli^{k_+, k_-}(\surface)$ is,
    \begin{equation*}
    \vol(\moduli^{k_+, k_-}(\surface)) = 
    \sum_{l = 0}^g \frac{g!(g - l)!}{(-1)^l l!} \prod_{\sigma = \pm}
    \sum_{j_\sigma = l}^g 
    \frac{
        (2\pi)^{2l} J_\sigma^{k_\sigma - j_\sigma} K_\sigma^{j_\sigma - l}
    }{
        (j_\sigma - l)! (g - j_\sigma)! (k_\sigma - j_\sigma)!
    }.
    \end{equation*}
\end{conjecture}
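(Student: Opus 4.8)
The plan is to compute the volume as the integral of the top exterior power of the K\"ahler form produced by Theorem~\ref{thm:bps-loc-formula}, thereby reducing a differential-geometric quantity to an intersection-theoretic one on a compact model of the moduli space. First I would identify $\moduli^{k_+,k_-}(\surface)$, away from the coalescence set, with the open dense subset of the compact K\"ahler manifold $M = \mathrm{Sym}^{k_+}(\surface) \times \mathrm{Sym}^{k_-}(\surface)$ obtained by deleting the divisor $\diag$ on which a vortex meets an antivortex. The localization formula endows this open set with the K\"ahler form $\omega$ of~\eqref{eq:kmetric}; since $\diag$ has real codimension two and the incompleteness analysis behind Theorem~\ref{thm:moduli-space-incomplete-compact-surface} shows the metric volume near $\diag$ is finite, the first technical step is to prove that $\omega$ extends across $\diag$ as a closed current with cohomology class in $H^2(M;\reals)$, and that
\[
  \vol(\moduli^{k_+,k_-}(\surface)) = \int_M \frac{\omega^{\,k_+ + k_-}}{(k_+ + k_-)!}.
\]

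The heart of the argument is to determine the class $[\omega] \in H^2(M;\reals)$ explicitly. Using the K\"unneth decomposition together with Macdonald's description of $H^*(\mathrm{Sym}^k(\surface))$, I would fix a basis of $H^2(M;\reals)$ consisting of the two area classes $\eta_\pm$ pulled back from the factors, the two genus classes $\theta_\pm$ built from $H^1(\surface)$, and the mixed classes in $H^1(\mathrm{Sym}^{k_+}) \otimes H^1(\mathrm{Sym}^{k_-})$ arising from the coupling between the two species. The coefficients are then read off by pairing $[\omega]$ against explicit $2$-cycles: transporting a single core around $\surface$ and invoking the second Bogomolny equation~\eqref{eq:bog2}, which fixes the total flux $\int_\surface B = 2\pi(k_+ - k_-)$ and the average of $\lproduct{N,\hf}$, should produce the coefficient $J_\pm = 2\pi(1\mp\tau)\abs\surface - 4\pi^2(k_\pm - k_\mp)$, while the toroidal $2$-cycles coming from $H^1(\surface)$ should produce $K_\pm = \mp 2\pi^2$. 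The delicate point is that the interaction coefficients $b_k$ entering $\omega$ are known only implicitly through the Taubes equation, so each period integral must be shown to collapse to a topological invariant; the symmetry identities of Lemma~\ref{lem:coef-sym} and the isometry-invariance identities of Lemma~\ref{lem:isom-inv-coefs}, combined with flux quantization, are exactly the tools I would deploy to force this reduction.

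With $[\omega]$ in hand, I would expand $\omega^{k_+ + k_-}/(k_+ + k_-)!$ by the multinomial theorem, factor the result over the two symmetric-product factors, and evaluate term-by-term using Macdonald's intersection numbers on $\mathrm{Sym}^{k}(\surface)$, that is, the pairings of powers of the area class with the genus class. The combinatorics of how the $2g$ one-dimensional classes of the two factors pair through the mixed term is precisely what generates the outer summation $\sum_{l=0}^{g}$ and the alternating factor $g!(g-l)!/((-1)^l\, l!)$: the index $l$ counts the genus classes paired across the two factors, so that each factor retains $j_\sigma \ge l$ of its own genus classes, which is why the products $\prod_{\sigma=\pm}\sum_{j_\sigma = l}^{g}$ appear. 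Matching the resulting closed form against the stated expression then finishes the proof.

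The main obstacle I expect is the cohomological identification in the second step: passing from the analytically-defined, implicit K\"ahler form to a concrete class in $H^2(M)$ requires genuine control of the interaction coefficients along non-trivial cycles, and, separately, a justification that the excised coalescence divisor contributes nothing to the volume integral even though the metric is singular there. Both are analytic rather than combinatorial difficulties and I anticipate they will absorb most of the work; by contrast, the final reorganization into the conjectured double sum should be routine once the class $[\omega]$ and the intersection numbers are established.
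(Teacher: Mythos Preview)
The statement you are attempting to prove is a \emph{conjecture}, and the paper does not prove it. It is quoted from Rom\~ao--Speight and then \emph{verified} only in three special cases: $\moduli^{k_+,0}(\sphere)$ and $\moduli^{0,k_-}(\sphere)$ on the round sphere (Theorem~\ref{thm:vol}), $\moduli^{1,1}(\sphere)$ on the round sphere (same theorem), and $\moduli^{1,1}(\mathbb{T}^2)$ on flat tori (Theorem~\ref{thm:vol-torus}). In each case the argument is concrete and low-tech, exploiting the large isometry group of the domain: on the sphere one uses the diagonal $SO(3)$ action to parametrize $\moduli^{1,1}$ as $(0,1]\times SO(3)$ and to derive differential identities among the $b_j$ (Lemmas~\ref{lem:coeff-sym}, \ref{lem:kahler-metric-can-form}, \ref{lem:samols-metric-in-eps-so3}); on flat tori one uses translation invariance to show $b_1+b_2$ is constant and $\del b\wedge\del b=0$, reducing the volume integral to a residue computation at the diagonal (Lemma~\ref{prop:b-sing}, Theorem~\ref{thm:vol-torus}). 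No cohomological machinery on $\mathrm{Sym}^{k}(\surface)$ is invoked.

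Your outline is in fact a plausible strategy for the \emph{general} conjecture, and is close in spirit to how such volume formulae are typically derived (and presumably to how the conjecture was formulated in the cited reference). But you have correctly located the genuine obstruction yourself: showing that the closed $(1,1)$-form $\omega$ built from the implicitly defined coefficients $b_j$ extends across the coalescence divisor as a current in the predicted cohomology class is an analytic problem that the paper does not solve in general, and the period computations you describe require control of $b_j$ along nontrivial cycles that is simply not available outside the highly symmetric cases treated. The paper's incompleteness results (Theorems~\ref{thm:vav-euc-incompleteness} and \ref{thm:moduli-space-incomplete-compact-surface}) give finite-volume bounds near $\diag$ only for $(k_+,k_-)=(1,1)$, and even there the extension of $[\omega]$ to the compactification is not carried out cohomologically. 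So your proposal is not wrong as a program, but it goes well beyond what the paper establishes; within the paper the statement remains open.
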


For $\surface = \sphere_{round}$, they corroborated 
it for a 
vortex-antivortex pair and $\tau = 0$. We aim to confirm the
conjecture on the round sphere and flat tori for vortex-antivortex 
pairs and general $\tau$.

\subsection{The Riemann sphere}
\label{sec:sph-cp1-volume}

%\section{Proof of theorem~\ref{thm:vol}}
%\label{sec:proof-theorem-volume}

On the round sphere, the three dimensional Lie group of orthogonal
transformations, $O(3)$, acts by
isometries. The vortex equations are invariant under
isometric actions on the domain,  if $\isometry: \surface \to
\surface$ is an  isometry and $u$ is the solution of the Taubes equation 
with vortex set $\vset$ and antivortex set $\avset$, then $u \circ
\isometry$ is the solution with data $\isometry^{-1}(\vset)$,
$\isometry^{-1}(\avset)$.  We will make use of this symmetry to obtain
conservation laws for the non-trivial  coefficients $b_j$ and an explicit 
formula in the subspace of vortices and
antivortices located at antipodal positions. This formula will lead us
to the volume formula. We will prove the following theorem,

Recall the conformal factor of the sphere of
radius $R$ in a stereographic projection chart with coordinate $z$ is, 
\begin{align}
\cf=\frac{4R^{2}}{\left(1+\abs z^{2}\right)^{2}}.
\end{align}

%In this chart, the vortex equation is,
%%
%\begin{align}
%\label{eq:7}
%-\laplacian u=2 \, \cf\left( \frac{e^u-1}{e^u+1}+\tau \right) + 4
%\pi\sum_{p\in\vset} \delta(z-p) - 4\pi\sum_{q\in\avset} \delta(z-q).
%\end{align}

We can give an explicit description of the coefficients in the metric
in the case of only $k_{+}$ coincident vortices or $k_{-}$ coincident
antivortices. By rotational symmetry, 
the function $u$ depends only on the chordal distance to
either the vortex or antivortex \cite{manton1999},
the coefficients $b_{\pm}$ in this case are, 
\begin{equation}
b_{\pm} = -\frac{ 2 k_{\pm}z_{\pm} }{ 1 +
  \abs{z_{\pm}}^{2}}.\label{eq:bpm-antipodal-position}  
\end{equation}

The proof relies on the rotational symmetry
of the configuration and is analogous to the proof for $n$ coincident
Ginzburg-Landau vortices on the sphere that can be found in
\cite{manton1993statisticalmec}. With this identity at hand, we prove the 
following theorem,
\begin{theorem}
    The volume of the moduli space $\moduli^{k_{+},0}\left( \sphere
    \right)$ is,
    
    \begin{equation}
    \label{eq:vol-moduli-k-0}
    \vol \left( \moduli^{k_{+},0}(\sphere) \right)
    =\frac{\left(4\pi^{2}R^{2}\left(
        2\,(1-\tau)-\frac{k_{+}}{R^{2}}\right)\right)^{k_{+}}}{k_{+}!}, 
    \end{equation}
    and the volume of $\moduli^{0, k_-}(\sphere)$ can be obtained from
    equation \eqref{eq:vol-moduli-k-0} by changing $\tau$ into
    $-\tau$. For a vortex-antivortex pair, the volume of $\moduli^{1,1}\left( 
    \sphere\right)$ is  
    \begin{align}
    \vol \left( \moduli^{1,1}(\sphere)  \right)
    =\left(8\pi^{2}R^{2}\right)^{2}\left(1-\tau^{2}\right).
    \end{align}
    \label{thm:vol}
\end{theorem}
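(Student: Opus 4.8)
The plan is to integrate the Kähler form of the $\Lsp^2$ metric over the relevant moduli space, exploiting in each case the explicit description of the coefficients $b_j$ afforded by rotational symmetry. First I would treat $\moduli^{k_+,0}(\sphere)$. Here the moduli space is the symmetric product $\mathrm{Sym}^{k_+}(\sphere)$, which is $\projSp^{k_+}$, and the $\Lsp^2$ metric is determined by~\eqref{eq:kmetric} with all signatures $\sgn_j = +1$. The volume is $\frac{1}{k_+!}\int (\omega/\pi)^{k_+}$ over $(\sphere)^{k_+}$ minus the fat diagonal, where $\omega$ is the Kähler form; since the diagonal has measure zero this equals $\frac{1}{k_+!}\int_{(\sphere)^{k_+}}\omega^{k_+}$. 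The key computational input is that $\omega^{k_+}$, being a top form on a $k_+$-fold product, picks out a determinant of the matrix $\pbrk{e^{\Lambda(z_i)}(1-\tau)\delta_{ij} + \del_{z_i}b_j}$, and then one uses the translation/rotation conservation laws for $\sum_j b_j$ together with~\eqref{eq:bpm-antipodal-position} — or rather its infinitesimal version — to reduce the integral. Concretely, following the Ginzburg-Landau computation in~\cite{manton1993statisticalmec}, the cross terms $\del_{z_i}b_j$ with $i\neq j$ integrate to zero by the symmetries of $b_j$ established in lemma~\ref{lem:coef-sym} and lemma~\ref{lem:isom-inv-coefs}-type identities, leaving an effective diagonal contribution whose integral over each $\sphere$ factor gives $4\pi^2 R^2\brk(2(1-\tau) - k_+/R^2)$. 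Raising to the $k_+$ power and dividing by $k_+!$ yields~\eqref{eq:vol-moduli-k-0}.

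For the vortex-antivortex pair I would use the decomposition of proposition~\ref{prop:metric-cm-decomp}, but adapted to the compact case: $\moduli^{1,1}(\sphere)$ is diffeomorphic to $(\sphere\times\sphere)\setminus\diag$, which compactifies to a quotient, and its volume is $\frac{1}{\pi^2}\int_{\sphere\times\sphere}\omega^2$ with $\omega$ from~\eqref{eq:kmetric} using $\sgn_1=+1$, $\sgn_2=-1$. Expanding, $\omega^2 = 2\pi^2\brk(e^{\Lambda(z_1)}(1-\tau)\,dz_1\wedge d\conj z_1 + \del_{z_1}b_1\,dz_1\wedge d\conj z_1)\wedge\brk(e^{\Lambda(z_2)}(1+\tau)\,dz_2\wedge d\conj z_2 + \del_{z_2}b_2\,dz_2\wedge d\conj z_2)$ up to the mixed $b$-terms. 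The leading term integrates to $(1-\tau)(1+\tau)\,(\vol\sphere)^2 = (1-\tau^2)(4\pi R^2)^2$ in the appropriate normalisation. The remaining terms — those involving $\del_{z_1}b_1$ or $\del_{z_2}b_2$ and the mixed ones — I expect to vanish upon integration: the off-diagonal mixed terms vanish by the conservation law $\del_{z_1}(b_1 + b_2) = 0$ inherited from translation invariance (here replaced by the $O(3)$ symmetry and the explicit form~\eqref{eq:bpm-antipodal-position} along the antipodal locus, extended by a Stokes/continuity argument using the corollary after theorem~\ref{thm:h-reg-param-dep} that $\del_{z_j}b_i$ extends continuously across the diagonal), while the single-$b$ terms integrate to a total derivative. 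Assembling the pieces gives $\vol\moduli^{1,1}(\sphere) = (8\pi^2 R^2)^2(1-\tau^2)$ after tracking the factor-of-$\pi$ normalisation in~\eqref{eq:kmetric} versus $R$.

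The main obstacle will be justifying that the $b$-dependent terms integrate to zero \emph{as a genuine integral over the compactified moduli space}, since $b_1$ and its derivatives blow up like $1/\dist_{12}$ near the diagonal (proposition~\ref{prop:sph-ddh-bound} and lemma~\ref{lem:sph-del1-b1-bound}); one must check the singularity is mild enough that the Stokes-theorem boundary contributions from small tubes around $\diag$ vanish as the tube radius shrinks. A $1/\dist_{12}$ bound on $\del_{z_1}b_1$ against a boundary of circumference $\sim\dist_{12}$ gives a contribution that stays bounded but need not vanish, so I would need the sharper information that the singular part of $b_1$ near coincidence matches the euclidean model $-2/(\bar z_1 - \bar z_2)$ (from corollary after theorem~\ref{thm:h-reg-param-dep}), whose contour contribution can be computed explicitly and shown to cancel against the analogous term from $b_2$. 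The secondary obstacle is purely bookkeeping: reconciling the normalisation conventions — the metric coefficient $2\pi$ in~\eqref{eq:kmetric}, the sphere of radius $R$ with area $4\pi R^2$, and the passage from the $(\epsilon,\theta)$-type local computation to the global integral — so that the stated clean answers emerge. I would verify the normalisation by specialising to $\tau=0$ and matching the value computed by Rom\~ao-Speight in~\cite{romao2018}, and by cross-checking $\moduli^{1,0}(\sphere)$ against~\eqref{eq:vol-moduli-k-0} with $k_+=1$, which should give $\vol = 4\pi^2 R^2(2-1/R^2) = 4\pi^2(2R^2 - 1)$, the area of $\projSp^1$ with the induced metric.
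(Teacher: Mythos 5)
Your route is genuinely different from the paper's for both parts, and in the second part it contains a concrete error in the proposed cancellation mechanism. For $\moduli^{k_+,0}(\sphere)$ the paper does not integrate $\omega^{k_+}$ directly: it restricts the K\"ahler form to the locus of $k_+$ coincident vortices, where $b_+$ is known explicitly by~\eqref{eq:bpm-antipodal-position}, computes the volume of that generating $2$-cycle of $\projSp^{k_+}$, and then invokes the cohomological identity $\mathrm{Vol}(\projSp^{k_+})=\frac{1}{k_+!}\brk(\int_{\text{cycle}}\omega)^{k_+}$. Your direct expansion of $\det\brk(e^{\Lambda(z_i)}(1-\tau)\delta_{ij}+\del_{z_i}b_j)$ would instead have to control all the off-diagonal terms and the residues of $b_i$ along the fat diagonal (where same-type cores coincide and $b_i\sim 2/(\conj z_i-\conj z_j)$); the claim that these ``integrate to zero by the symmetries'' is not established by lemma~\ref{lem:coef-sym}, and this is not in fact how the computation in~\cite{manton1993statisticalmec} proceeds.

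The sharper problem is in your treatment of $\moduli^{1,1}(\sphere)$. You correctly observe that the leading term $4\pi^2(1-\tau^2)\,\mathrm{Vol}(\sphere)^2$ already equals the stated answer, so the $b$-dependent terms must contribute nothing; but your proposed reason --- that the contour contribution of the singular part of $b_1$ near the diagonal ``cancels against the analogous term from $b_2$'' --- is false. By lemma~\ref{prop:b-sing}, $b_1=-2/(\conj z_1-\conj z_2)+\tilde b_1$ and $b_2=-2/(\conj z_2-\conj z_1)+\tilde b_2$; as functions of the variable being integrated ($z_1$ near fixed $z_2$, respectively $z_2$ near fixed $z_1$) these have the \emph{same} singular form, so the two residues are equal, not opposite, and after weighting by the effective masses $2\pi(1\mp\tau)$ they add to something proportional to $4\pi$. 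The paper's own torus computation (theorem~\ref{thm:vol-torus}) is exactly this calculation carried out honestly: the diagonal residues survive and produce the nonzero correction $16\pi^3\,\mathrm{Vol}(\mathbb{T}^2)$. The reason no such correction appears on the sphere is not a local cancellation at the diagonal but the failure of $b=b_j\,d\conj z^j$ to be globally defined there: under the chart change $z\mapsto 1/z$ the inhomogeneous term in~\eqref{eq:b-coeffs-change-of-coords} is nonzero, so the Stokes argument acquires a compensating contribution from the point omitted by the stereographic chart. Your proposal never engages with this, and as written it would either (incorrectly) predict a vanishing torus correction as well, or (if the residues are computed correctly) produce a spurious nonzero correction on the sphere. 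The paper avoids the issue entirely by putting any diagonally $SO(3)$-invariant, reflection-invariant K\"ahler metric on $(\sphere\times\sphere)\setminus\mathbb{S}^2_\Delta$ into the canonical form of lemma~\ref{lem:kahler-metric-can-form}, reading off $\mathrm{Vol}=4\pi^2\lim_{\epsilon\to 0}A(\epsilon)^2-c^2\pi^2$, and then evaluating $A$ and $c$ for the $\Lsp^2$ metric via the conservation laws of lemma~\ref{lem:coeff-sym} together with the limit $\epsilon b_1\to-1$; if you want to keep a direct-integration proof you would need to supply the boundary computation at the excluded point of the chart.
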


For $k_+ = 0$ or $k_- = 0$ we follow ideas of 
Manton-Nasir~\cite{manton1999volume}, as their proof relies on the topology 
of the symmetric product $(\sphere)^N/S_N$, $S_N$ being the $N$ symmetric 
group, and can be adapted easily to vortices of the $O(3)$ Sigma model of the
same type. For the case $k_+ = k_- = 1$, we extend the proof given 
by Rom\~ao-Speight~\cite[Thm.~5.2]{romao2018} for the symmetric case. For 
general $\tau$ we no longer have the symmetry $(z_1, z_2) \mapsto (z_2, z_1)$, 
instead, we complement the symmetries induced by $SO(3)$ in the moduli space 
with 
the symmetry $(z_1, z_2) \mapsto (\conj z_1, \conj z_2)$ to deduce a suitable 
formula for the volume of a general K\"ahler metric on $\sphere\!_{\diag}$.

\subsubsection{$k_+$ vortices of the same type} %\label{sec:vol-conj-n-0}

If there are $k_+$ vortices on $\sphere$ and no antivortices, the moduli
space is isomorphic to $\pSpace^{n}$, the complex projective space of
dimension $k_+$ \cite{manton1999}. The subspace
$\moduli_0^{k_+,0}(\sphere) \subset \moduli^{k_+,0}(\sphere)$ of $k_+$
coincident vortices on the other hand is isomorphic to $\pSpace^1$,
and can be parametrized with the coordinate $z_+$ of the coincident
vortices.  By equation (\ref{eq:bpm-antipodal-position}) we know how
to compute the coefficient $b_{+}$ in $\moduli_0^{k_+,0}(\sphere)$, 
\begin{equation}
b_{+}=-\frac{2 k_+ z_+}{1+\abs {z_+}^2}.\label{eq:b-formula-k-0}
\end{equation}
The metric in $\moduli^{k_+,0}_0(\sphere)$ therefore is,
\begin{align}
ds^{2} & =2 k_+\pi\left((1-\tau)\cf + \frac{\del b_{+}}{\del
z_+}\right)\abs{dz_+}^{2}\nonumber \\ 
 & = k_+
 \pi\left(2(1-\tau) - \frac{k_+}{R^2}
 \right)\Omega\abs{dz_+}^{2},\label{eq:metric-moduli-kp-0}  
\end{align}
as can be seen, the metric is a multiple of the round metric,
hence, the volume of $\moduli^{k_+,0}_0(\sphere)$ is,
\begin{equation}
4\pi^{2} R^{2}k_+ \left( 2 (1-\tau)-\frac{k_+}{R^{2}}
\right),\label{eq:vol-moduli-N}  
\end{equation}
this volume is $k_+$ times the volume of the generating cycle in
$\pSpace^1$, 
\begin{equation}
4\pi^{2}R^{2}\left( 2 (1-\tau)-\frac{k_+}{R^{2}}
\right).\label{eq:generating-cycle-vol} 
\end{equation}

The total volume of the moduli space therefore is,
\begin{equation}
\vol\left(\moduli^{k_+,0}(\sphere)\right) =
\frac{\left(8\pi^{2}R^{2}(1-\tau)-4\pi^{2} k_+
  \right)^{k_+}}{k_+!},\label{eq:total-volume-moduli-space} 
\end{equation}
the proof of the volume formula in $\moduli^{0,k_-}(\sphere)$ is analogous,
\begin{equation}
\vol\left(\moduli^{0,k_-}(\sphere)\right) =
\frac{\left(8\pi^{2}R^{2}(1+\tau)-4\pi^{2} k_-
  \right)^{k_-}}{k_-!}.%\label{eq:total-volume-moduli-space} 
\end{equation}

\subsubsection{The moduli space of vortex-antivortex pairs
}%\label{sec:vav-pairs} 

% Since $y^{+}$ and $y^{-}$ are antipodal, $y^{+}=-1/\conj y^{-}$.
% Let $y=y^{+}$ and let us denote by  $b^{\pm}(y,-1/\conj y)$ the coefficients
% of the vortex-antivortex system at antipodal positions, we deduce the
% symmetry property,

% For antipodal vortices in a projective chart, from the explicit formula \eqref{eq:bpm-antipodal-position} we have,

% \begin{equation}
% b^{+}\left(y,-\frac{1}{\conj y}\right)+b^{-}\left(y,-\frac{1}{\conj
%     y}\right)=-\frac{2(k^{+}-k^{-})\,y}{1+\abs y^{2}},\label{eq:bpm-basic-identity}
% \end{equation}

% Notice that if the number of vortices is equal to the number of antivortices, $b^{+}+b^-=0$. 

In general, there is no explicit expression for the
coefficients $b_j$ of the metric if the cores are at general position, however, 
we
can deduce from the invariance of  
the Taubes equation under the action of $O(3)$ several constraints on the 
coefficients due to symmetry. Before doing so, we need a general lemma that 
will also be necessary for flat tori in the next section.

\begin{lemma}\label{prop:b-sing}
    Let $\varphi: U \subset \surface \to V\subset \cpx$ be a 
    holomorphic chart,  containing the core set $\mathcal{Z}$ of a 
    point in the 
    moduli space $\moduli^{1,1}(\surface)$.  
    For any bounded domain $D \subset V$, such that $\mathcal{Z} 
    \subset \varphi^{-1}(D)$, there are continuous functions $\tilde b_j: 
    D\times D \to 
    \cpx$, 
    $j = 1, 2$,  
    such that:
    \begin{enumerate}
        \item If $\varphi(\mathcal{Z}) = \left\{z_1, z_2\right\}$, where 
        $z_1$ $(z_2)$ is the vortex (antivortex),
        \begin{align}
        b_j(z_1, z_2) = \frac{-2\,\sign_j}{\bar z_1 - \bar z_2} + \tilde
        b_j(z_1, z_2), 
        \end{align}
        where $b_j$, $j = 1, 2$, are the non-trivial coefficients in the 
        $\Lsp^2$ 
        metric, defined in lemma~\ref{lem:coef-sym}.
        
        \item 
        \begin{align}
        \lim_{|z_1 - z_2| \to 0} \tilde b_j(z_1, z_2) = 0.
        \end{align}
    \end{enumerate}
\end{lemma}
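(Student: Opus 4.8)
The plan is to decompose $b_j$ into an explicit singular piece coming from the local model of Green's function plus a remainder governed by the regularised Taubes solution $\tilde h$, and then to invoke the $C^1$-convergence result of Corollary~\ref{cor:lim-tilde-h} to kill the remainder in the coalescence limit. First I would recall that, in the holomorphic chart $\varphi$, the function $v = \sum_j \sign_j v_{c_j}$ with $v_{c_j} = 4\pi G(\cdot,\varphi^{-1}(z_j))$ has, by the asymptotic expansion~\eqref{eq:green-logd-smoothpart} of Green's function and the relation between the Riemannian distance and the coordinate modulus on a geodesically convex chart (as in the proof of lemma~\ref{lem:vav-sph-green-log-asympt}), the local form
\begin{align}
v(\varphi^{-1}(z)) = \log\abs{z - z_1}^2 - \log\abs{z - z_2}^2 + \tilde v(z, z_1, z_2),
\end{align}
where $\tilde v$ is smooth on $\varphi(\overline{D})\times\varphi(\overline{D})\times\varphi(\overline{D})$; this is exactly the construction already used in the proof of lemma~\ref{lem:sph-del1-b1-bound}. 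Since $h = \tilde h + v + \mu$ (with $\mu$ the constant, absorbed harmlessly), the definition~\eqref{eq:pre-bcoef} of $b_j$ gives
\begin{align}
b_j(z_1,z_2) = 2\,\eval{\conj\del_z}{z = z_j}\brk(\sign_j h(\varphi^{-1}(z)) - \log\abs{z - z_j}^2).
\end{align}
Substituting the local form of $v$ and evaluating: the $\log\abs{z-z_j}^2$ term from $v$ cancels the subtracted $\log\abs{z-z_j}^2$ when $\sign_j = +1$ and similarly for the other core, so the only surviving singular contribution is the cross term $\pm\log\abs{z - z_k}^2$ ($k\neq j$), whose $\conj\del_z$ derivative at $z = z_j$ equals $1/(\conj z_j - \conj z_k)$ up to sign. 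A careful bookkeeping of the two signs $\sign_1 = +1$, $\sign_2 = -1$ yields the stated coefficient $-2\sign_j/(\conj z_1 - \conj z_2)$, and I would set
\begin{align}
\tilde b_j(z_1,z_2) = 2\,\sign_j\,\conj\del_z\tilde h(\varphi^{-1}(z_j); \varphi^{-1}(z_1),\varphi^{-1}(z_2)) + 2\,\conj\del_z\tilde v(z_j, z_1, z_2).
\end{align}

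Next I would check part (1): continuity of $\tilde b_j$ on $D\times D$. The $\tilde v$ term is smooth by construction, hence continuous. For the $\tilde h$ term, by Theorem~\ref{thm:h-reg-param-dep} the map $(\vb p,\vb q)\mapsto\tilde h$ is smooth into $C^r(\surface)$ for every $r$, in particular $\conj\del_z\tilde h(\varphi^{-1}(z_1);\varphi^{-1}(z_1),\varphi^{-1}(z_2))$ is continuous jointly in $(z_1,z_2)$ as long as $z_1\neq z_2$ — but actually the point evaluation of the first derivative at $z_1$ is continuous even as $z_1\to z_2$ because $\tilde h$ extends smoothly across the diagonal in its dependence (the singularities of $h$ have been subtracted off into $v$). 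So $\tilde b_j$ is continuous on all of $D\times D$, including the diagonal.

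Finally, for part (2) I would apply Corollary~\ref{cor:lim-tilde-h}: as $d(x_1,x_2)\to 0$, $\norm{\tilde h(\cdot\,;x_1,x_2) - \mu}_{C^1(\surface)}\to 0$, so in particular $\conj\del_z\tilde h(\varphi^{-1}(z_j);\varphi^{-1}(z_1),\varphi^{-1}(z_2))\to 0$ uniformly (the chart $\varphi$ and its conformal factor being fixed on $\overline D$, the $C^1(\surface)$ bound controls the coordinate derivative on $D$). For the $\tilde v$ term, when $z_1 = z_2$ the two logarithmic singularities in $v$ cancel identically, so $\tilde v(z, z, z) \equiv 0$ on $D$; by continuity of $\tilde v$ this forces $\conj\del_z\tilde v(z_j, z_1, z_2)\to 0$ as $|z_1 - z_2|\to 0$ — here I would note that $\conj\del_z\tilde v$ is the derivative in the \emph{first} slot, and since $\tilde v(\cdot, z, z)$ is the zero function, its derivative in the first slot vanishes when the last two arguments coincide, and continuity of the derivative (smoothness of $\tilde v$) finishes it. Combining the two limits gives $\tilde b_j(z_1,z_2)\to 0$.

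The main obstacle, and the step deserving most care, is the justification that the point-evaluation $\conj\del_z\tilde h(\varphi^{-1}(z_1);\varphi^{-1}(z_1),\varphi^{-1}(z_2))$ really is continuous up to and on the diagonal and that the limit in part (2) is controlled: this rests on interpreting $\tilde h$ as a function on $\surface\times\surface^2_\diag$ that, by the parametric smoothness theorem, extends across the diagonal in the \emph{parameter} variables while retaining $C^1$ regularity in the base variable, together with the uniform $C^1(\surface)$ convergence to the constant $\mu$. Everything else is the algebraic bookkeeping of cancelling $\log$ terms, which I would not grind through in detail beyond recording the sign computation that produces $-2\sign_j/(\conj z_1 - \conj z_2)$.
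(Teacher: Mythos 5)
Your proposal follows essentially the same route as the paper's proof: split Green's function into its logarithmic singularity plus a smooth part $\tilde G$, read off the singular term $-2 s_j/(\bar z_1 - \bar z_2)$ by direct substitution into the definition of $b_j$, and control the remainder using the smoothness of $\tilde G$ together with the $C^1$-convergence $\tilde h \to \mu$ from Corollary~\ref{cor:lim-tilde-h}. The only quibble is a dropped factor of $s_j$ on the $2\,\bar\partial_z\tilde v$ term in your formula for $\tilde b_j$ (the paper's regular part is $2 s_j\,\bar\partial\vert_{z=z_j}(\tilde h + \tilde v)$), which is immaterial to both the continuity claim and the vanishing limit.
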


\begin{proof}
    On $\varphi^{-1}(D)$, Green's function
    can be written as
    \begin{align}
    G(x_1, x_2) = \frac{1}{2\pi}\,\log\,\abs{\varphi(x_1) - \varphi(x_2)} + 
    \tilde G(x_1, 
    x_2),
    \end{align}
    with a smooth regular part $\tilde G: \varphi^{-1}(D)\times 
    \varphi^{-1}(D) \to \reals$.  Therefore, the solution $h$ to the Taubes  
    equation can be written as
    \begin{align}
    h(x; x_1, x_2) = \tilde h(x; x_1, x_2) + \log\,\abs{\varphi(x) -
        \varphi(x_1)}^2 - \log\,\abs{\varphi(x) - \varphi(x_2)}^2 + \tilde v(x; 
    x_1, x_2), 
    \end{align}
    where
    \begin{align}\label{eq:tilde-v-vav-sigma}
    \tilde v(x; x_1, x_2) = 4\pi\, \tilde G(x, x_1)
    - 4\pi\, \tilde G(x, x_2),
    \end{align}
    and $\tilde h(x; x_1, x_2)$ can be extended in $C^1$ to the coincidence set 
    $x_1 = x_2$ by corollary~\ref{cor:lim-tilde-h}. 
    Denoting $h(\varphi^{-1}(z); \varphi^{-1}(z_1), 
    \varphi^{-1}(z_2))$ and $\tilde h(\varphi^{-1}(z); \varphi^{-1}(z_1), 
    \varphi^{-1}(z_2))$ as $h$, $\tilde h$, etcetera,
    \begin{align}
    b_j(z_1, z_2) &= 2\,\conj\del|_{z=z_j}
    \pbrk{\sign_j\,h - \log\abs{z - z_j}}\nonumber\\
    &= 2\,\conj\del_{z=z_j}
    \pbrk{\sign_j\tilde h - \log\abs{z - z_k} + \sign_j\tilde v}\nonumber\\
    &= \frac{-2}{\conj z_j - \conj z_k} 
    + 2\,\sign_j\conj\del|_{z=z_j}(\tilde h + \tilde v)\nonumber\\
    &= \frac{-2\,\sign_j}{\conj z_1 - \conj z_2} + \tilde b_j,
    \end{align}
    where the regular part $\tilde b_j$ is continuous in $D\times D$. This 
    proves the first statement. The second statement is a consequence of 
    corollary~\ref{cor:lim-tilde-h} and the fact that     
    by~\eqref{eq:tilde-v-vav-sigma},
    \begin{align}
    \lim_{|z_1 - z_2| \to 0} \conj\del\lvert_{z=z_j}\pbrk{\tilde 
    v(\varphi^{-1}(z); 
    \varphi^{-1}(z_1), \varphi^{-1}(z_2))} = 0.
    \end{align}
\end{proof}

Suppose $\gamma:U_{1} \subset \cpx \to U_{2} 
\subset \cpx$ is  
a holomorphic change of coordinates in ambient space, such that
$z_{k}\in U_{1}$ 
for all cores. There are pairs of corresponding
coefficients $b_{s}(z_{1},\ldots,z_{n})$,
$b'_{s}(z'_{1},\ldots,z'_{n})$ 
in each of the charts. Let $z'=\gamma(z)$,
$z'_{k}=\gamma(z_{k})$, as in~\cite{romao2001quantum}, we have the 
transformation rule  
\begin{equation}
b'_j = \frac{1}{\conj{\gamma'_j}}b_j-\frac{\conj{\gamma''_j}} {\left(
    \conj{ \gamma'_j}\right)^{2}}.\label{eq:b-coeffs-change-of-coords}
\end{equation}

Manton and Nasir noted in~\cite{manton1999c} that  
equation (\ref{eq:b-coeffs-change-of-coords}) is similar to the transformation
rule for the Levi-Civita connection on $\sphere$ and resembles the
topological nature of the coefficients $b_j$. In the sphere, the group of 
isometries is large, in the sense that it is a Lie group, and each of this 
isometries induces a holomorphic change of coordinates on the moduli space. 
We exploit this remark to prove the following lemmas. 

\begin{lemma}\label{lem:coeff-sym}
In the projective chart, the coefficients  $b_j$ satisfy the identities, 
\begin{gather}
\sum_k(2\,\conj z_k + \conj z_k^2\,b_k + \conj
b_k) = C,\label{eq:bs-gen-rel}\\
\sum_k\conj z_kb_k\in\reals,\label{eq:z-axis-symmetry-b}
\end{gather}
for some constant $C$. For a vortex-antivortex pair, $C = 0$. 
\end{lemma}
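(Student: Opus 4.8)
The plan is to exploit the fact that $SO(3)$ acts by isometries on the round sphere, hence induces a family of holomorphic coordinate changes in the projective chart under which the $b_j$ transform by the rule~\eqref{eq:b-coeffs-change-of-coords}. The one-parameter subgroups of $SO(3)$ give rise to Killing fields on the moduli space, and differentiating the transformation rule along these fields yields linear constraints on the $b_j$. Concretely, in the stereographic chart the isometries of $\sphere$ are the Möbius transformations fixing the round metric, i.e. elements of $PSU(2)$. The rotation about the axis through $0$ and $\infty$ is $\gamma_s(z) = e^{is}z$, the rotation exchanging $0$ and $\infty$ is generated by $z \mapsto (z\cos(s/2) + \sin(s/2))(-z\sin(s/2) + \cos(s/2))^{-1}$, whose infinitesimal generator is $\tfrac{1}{2}(1 + z^2)\del_z$, and the conjugate rotation gives $\tfrac{i}{2}(1 - z^2)\del_z$. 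Together these span the $\mathfrak{su}(2)$ action.

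First I would record, for each of these three infinitesimal generators $X = \sum_k (f(z_k)\del_{z_k} + \conj{f(z_k)}\conj\del_{z_k})$, the consequence of invariance. Since $u \circ \isometry$ solves the Taubes equation with data moved by $\isometry^{-1}$ and $b_j$ is determined by the local data of $u$ at $z_j$, applying $\mathcal{L}_X$ to~\eqref{eq:b-coeffs-change-of-coords} (with $\gamma = \gamma_s$ and differentiating at $s=0$) gives $\mathcal{L}_X b_j = -f'(z_j)\,\conj{\phantom{b}}\!$... more precisely, writing $\gamma_s'(z_j) = 1 + s f'(z_j) + O(s^2)$ and $\gamma_s''(z_j) = s f''(z_j) + O(s^2)$, one gets $\mathcal{L}_X b_j = -\conj{f'(z_j)}\,b_j - \conj{f''(z_j)}$. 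For the generator of the $U(1)$ rotation, $f(z) = iz$, so $f' = i$, $f'' = 0$, giving $\sum_k(iz_k\del_{z_k} - i\conj z_k\conj\del_{z_k})b_j = i\,b_j$ — a statement of the kind already derived in the Euclidean setting. For $f(z) = \tfrac12(1+z^2)$ one has $f'(z) = z$, $f''(z) = 1$, so $\mathcal{L}_X b_j = -\conj z_j b_j - 1$; and for $f(z) = \tfrac i2(1 - z^2)$, $f'(z) = -iz$, $f''(z) = -i$, so $\mathcal{L}_X b_j = i\conj z_j b_j + i$.

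Next I would sum each of these relations over $j$ and combine them. Writing $S = \sum_k b_k$ and using the two ``boost'' generators, the combination that produces a single quantity invariant under the full $\mathfrak{su}(2)$ is $\sum_k(2\conj z_k + \conj z_k^2 b_k + \conj b_k)$: indeed applying $X$ with $f = \tfrac12(1+z^2)$ to this expression and using $\mathcal{L}_X b_k = -\conj z_k b_k - 1$, $\mathcal{L}_X \conj b_k = \overline{\mathcal{L}_X b_k} = -z_k\conj b_k - 1$ together with $\mathcal{L}_X(\conj z_k^2) = \sum_j f(z_j)\del_{z_j}\conj z_k^2 + \conj{f(z_k)}\cdot 2\conj z_k = \conj{f(z_k)}\,2\conj z_k = (1 + \conj z_k^2)\conj z_k$, and $\mathcal{L}_X(2\conj z_k) = (1+\conj z_k^2)$, the terms cancel after also using the symmetry $\sum_k\del_{z_i}b_k = \sum_k\conj\del_{z_k}\conj b_i$ (lemma~\ref{lem:coef-sym}) exactly as in the Euclidean computation leading to ``$\sum_j b_j$ constant''. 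The same computation with the $U(1)$ generator and with $f = \tfrac i2(1-z^2)$ kills the remaining derivatives, so the expression is constant on the (connected) moduli space, giving~\eqref{eq:bs-gen-rel}. The second identity~\eqref{eq:z-axis-symmetry-b} comes from the extra discrete symmetry $z \mapsto \conj z$: by lemma~\ref{lem:isom-inv-coefs}-type reasoning (complex conjugation is an anti-holomorphic isometry of $\sphere$), $b_j(\conj z_1, \conj z_2) = \conj{b_j(z_1,z_2)}$, hence $\sum_k \conj z_k b_k$ is invariant under $(z_1,z_2)\mapsto(\conj z_1,\conj z_2)$ and real-valued, which combined with its behaviour under the $U(1)$ rotation forces $\sum_k\conj z_k b_k \in \reals$. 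Finally, for a vortex-antivortex pair one pins down the constant $C$ by evaluating~\eqref{eq:bs-gen-rel} in the limit $|z_1 - z_2|\to 0$: by lemma~\ref{prop:b-sing}, $b_j = -2\sign_j/(\conj z_1 - \conj z_2) + \tilde b_j$ with $\tilde b_j \to 0$, and since $\sign_1 = -\sign_2 = +1$ the singular parts in $\sum_k(2\conj z_k + \conj z_k^2 b_k + \conj b_k)$ cancel (the $\conj z_k^2 b_k$ terms combine to $(-2\conj z_1^2 + 2\conj z_2^2)/(\conj z_1 - \conj z_2) = -2(\conj z_1 + \conj z_2)$, the $\conj b_k$ terms cancel outright, and $2(\conj z_1 + \conj z_2) - 2(\conj z_1 + \conj z_2) = 0$), so letting $z_1 \to z_2$ the whole expression tends to $0$; hence $C = 0$.

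The main obstacle I anticipate is bookkeeping the action of $\mathcal{L}_X$ on the compound expression $2\conj z_k + \conj z_k^2 b_k + \conj b_k$ correctly — in particular handling the fact that $X$ acts both on the explicit $z_k$-dependence and, via the transformation rule, on the $b_k$ themselves, and verifying that the $\del_{z_i}b_k$ cross-terms assemble into something that vanishes by the symmetry of lemma~\ref{lem:coef-sym}. This is exactly the delicate part that in the Euclidean case produced ``$\sum b_j = 0$'', and here the non-vanishing of $f'$ and $f''$ makes the cancellation more intricate; one must be careful that the inhomogeneous $-\conj{f''(z_j)}$ pieces are precisely absorbed by the derivatives of the $2\conj z_k$ and $\conj z_k^2$ factors. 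Once that identity is pushed through, extracting the value $C = 0$ via the coincidence-limit of lemma~\ref{prop:b-sing} is routine.
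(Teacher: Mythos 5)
Your overall strategy is the paper's: exploit the equivariance of the $b_j$ under the M\"obius representation of $SO(3)$, differentiate the transformation rule~\eqref{eq:b-coeffs-change-of-coords} along the three Killing fields, feed in the symmetries of lemma~\ref{lem:coef-sym}, and pin down $C=0$ by the coincidence limit of lemma~\ref{prop:b-sing} (that last step is correct and matches the paper). But the central deduction has a genuine gap. You propose to show $\mathcal{L}_X\Sigma = 0$ for $\Sigma = \sum_k(2\conj z_k + \conj z_k^2 b_k + \conj b_k)$ and each generator $X$ of $\mathfrak{su}(2)$, and to conclude that $\Sigma$ is constant. Even if that held, it would only make $\Sigma$ constant on the $SO(3)$-orbits, which are $3$-dimensional inside the $2(k_++k_-)$-dimensional moduli space; for a pair there is still the separation parameter transverse to the orbits, so orbit-invariance does not give $d\Sigma = 0$. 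The paper's proof avoids this precisely by \emph{not} summing over $j$: the equivariance relation for each fixed $j$ is a sum over the differentiation index $k$, and the symmetries $\del_{z_k}b_j = \conj\del_{z_j}\conj b_k$, $\conj\del_{z_k}b_j = \conj\del_{z_j}b_k$ trade that sum for a single derivative $\del_{z_j}$ or $\conj\del_{z_j}$ of a sum over the coefficient index; adding and subtracting the resulting relations for the two ``boost'' generators then yields $\del_{z_j}\Sigma = \conj\del_{z_j}\Sigma = 0$ for every $j$, which is genuine constancy. This index swap is the whole point of the argument, and your sketch invokes it only in passing.

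Moreover, the cancellation you claim in $\mathcal{L}_X\Sigma$ is not complete: with your own formulas ($\mathcal{L}_X b_k = -\conj z_k b_k - 1$ for $f = \tfrac{1}{2}(1+z^2)$, etc.) the product rule leaves $\mathcal{L}_X\Sigma = \sum_k(\conj z_k b_k - z_k\conj b_k) = 2i\,\Im\sum_k \conj z_k b_k$, which vanishes only once the second identity is already known, so your two identities are entangled in the wrong order. Your derivation of~\eqref{eq:z-axis-symmetry-b} is itself not sound as stated: the reflection symmetry gives that $\sum_k\conj z_k b_k$ evaluated at the conjugated configuration equals the conjugate of its value at the original one, which forces reality only on configurations fixed by conjugation (all cores on the real axis); and $\sum_k\conj z_k b_k$ is invariant under the $U(1)$ rotation but not under the other two generators, so you cannot rotate an arbitrary configuration onto the real axis while preserving the quantity. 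The paper instead shows $\del_{z_j}\sum_k(z_k\conj b_k - \conj z_k b_k) = 0$ for every $j$ (again via the index swap, applied to the $\xi_Z$ relation), so the imaginary part is a genuine constant, and only then evaluates it on a real-axis configuration to conclude it is zero.
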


Rom\~ao deduced similar identities for vortices of a modified Chern-Simons 
model 
on the sphere in~\cite{romao2001quantum}, employing the action of $SO(3)$ on 
 the moduli space.

\begin{proof}
\newcommand*{\ldv}{\mathcal{L}}

 Let us consider a rotation  $\gamma:\sphere\to\sphere$. In a 
 stereographic projection chart, $\gamma$ can be represented as a
 Möbius transformation,
\begin{equation}
\gamma(z)=\frac{az+b}{-\conj bz+\conj a},%\label{eq:mobius-rotation}
\end{equation}
for some coefficients $a, b \in \cpx$, 
such that $\abs a^{2}+\abs b^{2}=1$. Since $\gamma:\cpx\setminus\left\{
  \conj a / \conj b\right\} \to\cpx\setminus\left\{ -a/\conj b \right\} $ is a 
  holomorphic
change of coordinates, a rotation of the core positions 
in the sphere reads,
\begin{equation}
  b'_j=(-b\conj z_j+a)^{2}b_j-2b(-b\conj z_j+a).
  \label{eq:b-rotation-formula}
\end{equation}

%$SO(3)$ is a three dimensional Lie group, generated by rotations
%with respect to each of the coordinate axis. 

% In euclidean space, the
% matrices, 

% \begin{align*}
% R_{x}(\alpha) & =\begin{pmatrix}1 & 0 & 0\\
% 0 & \cos(\alpha) & -\sin(\alpha)\\
% 0 & \sin(\alpha) & \cos(\alpha)
% \end{pmatrix},\\
% R_{y}(\beta) & =\begin{pmatrix}\cos(\beta) & 0 & \sin(\beta)\\
% 0 & 1 & 0\\
% -\sin(\beta) & 0 & \cos(\beta)
% \end{pmatrix},\\
% R_{z}(\gamma) & =\begin{pmatrix}\cos(\gamma) & -\sin(\gamma) & 0\\
% \sin(\gamma) & \cos(\gamma) & 0\\
% 0 & 0 & 1
% \end{pmatrix}
% \end{align*}

% are a set of generators of the group. For each of these generators,
% there is an equivalent $SU(2)$ matrix,
Invariance of the solutions to the Taubes equation under
the group of isometries means  
that the vector fields generated by $SO(3)$ in the moduli space by
diagonally acting on the cores' positions are Killing fields. These fields are generated by the 1-parameter families of matrices,
\begin{equation}
\begin{aligned}
U_{X}(\alpha) & =\begin{pmatrix}\cos\left(\frac{\alpha}{2}\right) & -i\sin\left(\frac{\alpha}{2}\right)\\
-i\sin\left(\frac{\alpha}{2}\right) & \cos\left(\frac{\alpha}{2}\right)
\end{pmatrix}, &
U_{Y}(\beta) & =\begin{pmatrix}\cos\left(\frac{\beta}{2}\right) & -\sin\left(\frac{\beta}{2}\right)\\
\sin\left(\frac{\beta}{2}\right) & \cos\left(\frac{\beta}{2}\right)
\end{pmatrix},\\
U_{Z}(\gamma) & =\begin{pmatrix}e^{-i\frac{\gamma}{2}} & 0\\
0 & e^{i\frac{\gamma}{2}} &
\end{pmatrix},
\end{aligned}
\end{equation}
$\alpha, \beta, \gamma \in \reals$. We can compute conservation equations 
corresponding to the generators
of the Lie algebra $\mathfrak{su}(2)$. These equations correspond to conservation
of angular momentum in the moduli space. The generating Killing fields
in the moduli space are,
\begin{equation}
\label{eq:killing-fields-moduli}
\begin{split}
\xi_{X} &= \frac{i}{2}\sum_j(z_j^{2}-1)\del_{z_j}-(\conj z_j^{2}-1)
\del_{\conj z_j},\\ 
\xi_{Y} &= -\frac{1}{2}\sum_j(z_j^{2}+ 1) \del_{z_j} + (\conj z_j^{2}
+ 1 )\del_{\conj z_j},\\
\xi_{Z} &=-i\sum_jz_j\del_{z_j}-\conj z_j\del_{\conj z_j}.  
\end{split}
\end{equation}

By \eqref{eq:b-rotation-formula}, the Lie derivatives of the coefficients
are,

\begin{equation}
\label{eq:bs-lie-derivative}
\begin{split}
\ldv_{\xi_{X}}b_j & =i(\conj z_jb_j+1),\\
\ldv_{\xi_{Y}}b_j & =\conj z_jb_j+1,\\
\ldv_{\xi_{Z}}b_j & =-ib_j.\\  
\end{split}
\end{equation}

Hence the coefficients $b_j$ satisfy the identities,
\begin{align}
\frac{1}{2}\sum_k(z_k^{2}-1)\del_{z_k}b_j- (\conj
z_k^{2}-1)\del_{\conj z_k} b_j & =\conj z_jb_j + 1\\
-\frac{1}{2}\sum_k(z_k^{2} + 1)\del_{z_k}b_j+(\conj z_k^{2}+1)\del_{\conj z_k}b_j & =\conj z_jb_j+1,\\
\sum_kz_k\del_{z_k}b_j - \conj z_k\del_{\conj z_k}b_j &= b_j.
\end{align}

%Locally, there exists a real function $\mathcal{B}$
%\cite{samols1992,romao2001quantum}, such that, 

%\begin{equation*}
%b_j=\del_{\conj z_j}\mathcal{B}.
%\label{eq:bgen}
%\end{equation*}

% This fact was proved by Samols for Chern-Simmons vortices on the plane
% \cite{Samols1992} and by Romao \cite{Romao2001} for Chern-Simmons
% vortices on the sphere.

%As a consequence, we obtain the identities,
%
Recall the coefficients $b_j$ have the symmetries,
\begin{equation}
  \del_{z_k}b_j=\del_{\conj z_j}\conj b_k,
  \qquad\del_{\conj z_k}b_j=\del_{\conj z_j}b_k.\label{eq:delz-bs-identities}
\end{equation}

Hence,
\begin{align}
\sum_k(z_k^{2}- 1) \del_{\conj z_j}\conj b_k-(\conj
z_k^2 - 1)\del_{\conj z_j}b_k &= 2(\conj z_jb_j+1),\label{eq:3}\\
\sum_k(z_k^2 + 1) \del_{\conj z_j}\conj b_k+ (\conj
z_k^2 + 1)\del_{\conj z_j}b_k &= -2(\conj z_jb_j + 1),\label{eq:4}\\
\sum_kz_k\del_{\conj z_j}\conj b_k-\conj z_k\del_{\conj
z_j}b_k &= b_j.\label{eq:5}
\end{align}

Adding equations \eqref{eq:3} and \eqref{eq:4} and also subtracting
and conjugating the same pair of equations,
\begin{align}
\sum_kz_k^2\del_{\conj z_j}\conj b_k+\del_{\conj z_j}b_k &= 0,\\
\sum_k\del_{z_j}b_k + z_k^2\del_{z_j}\conj b_k &= -2(z_j\conj b_j + 1).
\end{align}

From these two equations, we deduce,
\begin{align}
\partial_{z_j}\sum_k(2\conj z_k + \conj z_k^2b_k + \conj b_k) = 0, &&
\partial_{\bar z_j}\sum_k(2 \bar z_k + \conj z_k^2b_k + \conj b_k) = 0,
\end{align}
hence $\sum_k(2 \bar z_k + \conj z_k^2b_k + \conj b_k)$
is constant. 
%To determine the value of the constant, consider the 
%case where all the vortices and antivortices are coincident in two
%antipodal positions of the sphere.
%Without loss of generality, we can assume that the vortices are located
%at $z=1$ in  the projective chart. 
%From equation \eqref{eq:bpm-antipodal-position},
%%
%\begin{equation}
%\sum_k(2\,\conj z_k + \conj z_k^2\,b_k + \conj
%b_k) = -2(k^{+}-k^{-})(k^{+}+k^{-}-1).%\label{eq:bs-gen-rel}
%\end{equation}

Equation \eqref{eq:5} implies,
\begin{equation}
\del_{z_j}\sum_k\left(z_k\conj b_k - \conj
  z_kb_k\right) = 0.\label{eq:delzbars-imag-zrbr}
\end{equation}

From this equation and its conjugate, $\sum_k\left(\conj
  z_kb_k-z_k\conj b_k\right)$
is constant, but this quantity must be zero when all the vortices
and antivortices are located on the real line. Therefore,
\begin{equation}
\sum_k\conj z_kb_k\in\reals.%\label{eq:z-axis-symmetry-b}
\end{equation}

Finally, for a vortex-antivortex pair at positions $z_\pm = \pm \epsilon$, 
we have $b_\pm(\epsilon, -\epsilon) \in \reals$ and by~\eqref{eq:bs-gen-rel},
\begin{align}
b_+(\epsilon, -\epsilon) + b_-(\epsilon, -\epsilon) = \frac{C}{1 + \epsilon^2}.
\end{align}

By lemma~\ref{prop:b-sing}, there are continuous functions 
$\tilde b_\pm: \reals \to \reals$ such that,
\begin{align}
b_\pm(\epsilon, -\epsilon) = \mp \frac{1}{\epsilon} + \tilde b_\pm(\epsilon),
\end{align}
and $\lim_{\epsilon \to 0}\tilde b_\pm(\epsilon) = 0$, 
hence,
\begin{align}
\lim_{\epsilon \to 0} (b_+(\epsilon, -\epsilon) + b_-(\epsilon, -\epsilon))
= \lim_{\epsilon \to 0} (\tilde b_+(\epsilon) + \tilde b_-(\epsilon)) = 0.
\end{align}

Therefore, $C = 0$ for a vortex-antivortex pair.

\let\ldv\undefined

\end{proof}

Let $\mathbb{S}^2_\Delta$ be the diagonal in the product $\sphere
\times \sphere$. The orthogonal group acts diagonally on the moduli space 
$\moduli^{1,1}(\sphere) \cong (\sphere \times \sphere) 
\setminus \mathbb{S}^2_\Delta$ by isometries. We can always assume
there is a projective chart such that the pair is located with the
vortex at $z_1 = \epsilon$ and the antivortex at $z_2 =
-\epsilon$. From \eqref{eq:bs-gen-rel} and the fact that

\begin{equation}
b_{j}(\epsilon, -\epsilon) =
\conj{b_j}(\epsilon, -\epsilon),\label{eq:refl-invariance} 
\end{equation}
we conclude,
\begin{equation}
b_{1}(\epsilon,-\epsilon) +
b_{2}(\epsilon,-\epsilon)=0.\label{eq:b1-b2-complementarity}
\end{equation}

%\section{The volume conjecture}\label{sec:volume-conjecture}

%\subsection{The subspace of vortex-antivortex pairs}\label{sec:vav-pairs}
The $\Lsp^2$ metric in $\moduli^{1,1}(\sphere)$ is Kähler and invariant
under the diagonal action of $O(3)$, given any
pair $(z_1, z_2) \in \moduli^{1,1}(\sphere)$, we can always find a
rotation of $\sphere$ such that in south pole stereographic
projection, $z_1 = \epsilon$, $z_2 = -\epsilon$. In this way, we have
a diffeomorphism,
\begin{equation}
  (\sphere \times \sphere) \setminus \mathbb{S}^2_\Delta \cong (0, 1]
  \times SO(3),
  \label{eq:moduli11-so3-param}
\end{equation}
hence, the moduli space can be parametrized as $(0, 1]\times SO(3)$. 

\begin{lemma}
\label{lem:kahler-metric-can-form}
Let $g$ be a Kähler metric in $\sphere\times\sphere$ such that if $o
\in O(3)$ and $(z_1, z_2) \in \sphere\times\sphere$, then the action  
\begin{equation}
    o * (z_1, z_2) = (o * z_1, o * z_2),
\end{equation}
is by isometries. Let $E_0 = \del_\epsilon$ and let $E_j \in
\mathfrak{so}(3)$ be the left invariant vector field corresponding to 
rotations with respect to the $j$-th coordinate axis in $\reals^3$.
Then there exists a function
\begin{equation}
    A : (0, 1] \to \reals,
\end{equation}
and a real constant $c$ such that in the parametrization
\eqref{eq:moduli11-so3-param},  
\begin{multline}
\label{eq:so3-diag-inv-metric}
 g = A\left(\frac{1-\epsilon^{2}}{1+\epsilon^{2}}\,(\sigma^{1})^{2} +
   \frac{1+\epsilon^{2}}{1-\epsilon^{2}}(\sigma^{2})^{2}\right) -
 \frac{1}{\epsilon}
 \frac{dA}{d\epsilon}
 \left((\sigma^{0})^{2}+\epsilon^{2}(\sigma^{3})^{2}\right)\\
 + \frac{c}{1+\epsilon^{2}}\left(\sigma^{0}\sigma^{2}
   + \frac{\epsilon(1-\epsilon^{2})}{1+\epsilon^{2}} \sigma^{1}
   \sigma^{3}\right),
\end{multline}
where $\sigma^k\in T^*((0,1]\times SO(3))$ is the  co-vector dual to
$E_k$, $k = 0, \ldots, 3$. For this metric, the volume is, 
\begin{equation}
\label{eq:vol-formula}
\vol\left(\sphere\times\sphere\right) =
4\pi^{2}\lim_{\epsilon\to0}A(\epsilon)^{2}-c^{2}\pi^{2}.  
\end{equation}
\end{lemma}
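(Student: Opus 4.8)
The plan is to exploit the large symmetry group $O(3)$ acting diagonally and isometrically on $\sphere\times\sphere$, together with the Kähler condition, to pin down the metric up to a single function $A(\epsilon)$ and a single constant $c$. First I would set up the slice parametrization \eqref{eq:moduli11-so3-param}: every pair of distinct points on the sphere can be rotated into the antipodal-on-the-real-axis configuration $z_1=\epsilon$, $z_2=-\epsilon$, and the residual stabilizer of such a configuration is finite, so $(\sphere\times\sphere)\setminus\mathbb{S}^2_\Delta\cong(0,1]\times SO(3)$ with $\epsilon$ the slice coordinate and $E_0=\partial_\epsilon$, $E_1,E_2,E_3$ the left-invariant fields on $SO(3)$. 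An $O(3)$-invariant metric is then determined by its value along the slice $\{z_1=\epsilon,z_2=-\epsilon\}$, i.e.\ by the symmetric matrix $g(E_j,E_k)$ as a function of $\epsilon$ alone; the isotropy of the slice point (the subgroup generated by the rotation by $\pi$ about the axis through $\pm\epsilon$, and the reflection swapping the two spheres composed with a rotation) forces many of these ten components to vanish or to coincide, leaving exactly the shape displayed: the two ``horizontal'' terms with coefficients $A\frac{1-\epsilon^2}{1+\epsilon^2}$ and $A\frac{1+\epsilon^2}{1-\epsilon^2}$, the two ``radial/vertical'' terms with coefficient $-\frac1\epsilon A'$, and the single cross term with constant coefficient $c$.

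The two places where real input (beyond group theory) enters are the identification of the horizontal coefficients and the appearance of $-\frac1\epsilon\frac{dA}{d\epsilon}$ in the $(\sigma^0)^2$ and $\epsilon^2(\sigma^3)^2$ terms. Here I would use the Kähler condition: in the holomorphic coordinates $(z_1,z_2)$ the metric is $\kmetric = 2\pi\sum(e^{\Lambda(z_i)}(1-\sgn_i\tau)\delta_{ij}+\del_{z_i}b_j)\,dz_i\,d\bar z_j$ from Theorem~\ref{thm:bps-loc-formula}, with $\sgn_1=+1$, $\sgn_2=-1$; evaluating at $z_1=\epsilon$, $z_2=-\epsilon$ and translating $\partial_\epsilon$, the rotational fields, and the conjugate-reflection field $(z_1,z_2)\mapsto(\bar z_1,\bar z_2)$ into the $z_j$ basis lets one read off $g(E_0,E_0)$, $g(E_j,E_k)$ in terms of the diagonal entry $e^{\Lambda(\epsilon)}(1-\tau)+\partial_{z_1}b_1$ and the off-diagonal entry $\partial_{z_1}b_2=\overline{\partial_{z_2}b_1}$ (using the symmetry lemma~\ref{lem:coef-sym}). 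The identity $b_1(\epsilon,-\epsilon)+b_2(\epsilon,-\epsilon)=0$ from \eqref{eq:b1-b2-complementarity}, together with $b_j(\epsilon,-\epsilon)\in\reals$ from \eqref{eq:refl-invariance} and the $SO(3)$ Lie-derivative relations \eqref{eq:bs-lie-derivative}, is what collapses the would-be independent entries into the stated pattern; the coefficient $-\frac1\epsilon A'$ should emerge because the radial derivative of the horizontal coefficient is forced by the Kähler (closedness of $\kform$) condition, exactly as in the torsion-free connection transformation rule \eqref{eq:b-coeffs-change-of-coords}. I expect this bookkeeping — keeping track of which combinations of $\sigma^k$ are picked out by $E_0$ versus the rotation generators, and correctly transporting the real structure — to be the main obstacle; it is routine in principle but delicate in the constant factors.

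Once the metric is in the form \eqref{eq:so3-diag-inv-metric}, the volume computation is mechanical: compute $\sqrt{\det g}$ with respect to the coframe $d\epsilon\wedge\sigma^1\wedge\sigma^2\wedge\sigma^3$ (the $\sigma^k$, $k=1,2,3$, being the Maurer–Cartan forms on $SO(3)$ and $\sigma^0 = d\epsilon$ up to normalization), integrate the $SO(3)$ factor against its Haar volume, and integrate $\epsilon$ over $(0,1]$. The determinant of the $3\times3$ block along $E_1,E_2,E_3$ plus the $E_0E_2$ and $E_1E_3$ cross terms factors, after expanding, into a perfect total $\epsilon$-derivative: one finds $\sqrt{\det g}\,d\epsilon$ integrates to a boundary term, producing $4\pi^2\big(\lim_{\epsilon\to0}A(\epsilon)^2 - \tfrac{c^2}{4}\big)$ — I would double-check the constant against the normalization of Haar measure on $SO(3)$, since that is precisely what turns the bare boundary value into the stated $4\pi^2\lim A^2 - c^2\pi^2$. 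The boundary contribution at $\epsilon=1$ vanishes because the orbit degenerates there (the slice $(1-\epsilon^2)$ factors kill it), so only the $\epsilon\to0$ limit survives, which is where the coincidence locus is approached and where $A(\epsilon)$ tends to a finite limit by the smoothness/decay estimates of Section~\ref{sec:compact-vav-eqn}. This gives \eqref{eq:vol-formula}, and the theorem then follows by identifying $\lim_{\epsilon\to0}A(\epsilon)$ and $c$ explicitly from the $k_+=0$ and $k_-=0$ formulas already established, which is done in the subsequent part of the proof.
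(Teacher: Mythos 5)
There is a genuine gap, and it comes from conflating this lemma with the one that follows it. Lemma~\ref{lem:kahler-metric-can-form} is a statement about an \emph{arbitrary} K\"ahler metric on $\sphere\times\sphere$ invariant under the diagonal $O(3)$ action; its proof cannot invoke the localization formula, the coefficients $b_j$, or the identities $b_1+b_2=0$ and $b_j(\epsilon,-\epsilon)\in\reals$ — those are properties of the particular $\Lsp^2$ metric and are precisely what the \emph{subsequent} lemma uses to identify $A$ and $c$ for that metric. More importantly, the step you delegate to ``bookkeeping'' is the actual content of the proof, and your plan does not contain the ingredient that produces it: one must compute the action of the product complex structure $J$ on the invariant frame, namely $J E_0=\tfrac{1}{\epsilon}E_3$ and $J E_1=\tfrac{1-\epsilon^2}{1+\epsilon^2}E_2$ along the slice $(\epsilon,-\epsilon)$. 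Hermitian compatibility $g(JX,JY)=g(X,Y)$ then forces $A_{03}=A_{12}=0$, $A_{33}=\epsilon^2A_{00}$ and $A_{11}=\bigl(\tfrac{1-\epsilon^2}{1+\epsilon^2}\bigr)^2A_{22}$ — this is where every $\epsilon$-dependent ratio in \eqref{eq:so3-diag-inv-metric} comes from, and no amount of isotropy or Lie-derivative manipulation of the $b_j$ will substitute for it. Your isotropy argument is also off: the stabilizer of the ordered pair $(\epsilon,-\epsilon)$ in $SO(3)$ is trivial for $\epsilon<1$, and the ``reflection swapping the two spheres'' is not among the assumed symmetries (indeed the paper stresses that the swap map fails to be an isometry for $\tau\neq 0$); the vanishing of $A_{01}$ and $A_{23}$ instead comes from the orientation-reversing reflection $Y\mapsto -Y$ in $O(3)\setminus SO(3)$ acting diagonally.

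The remaining structure is likewise not ``forced by isotropy'' but by the K\"ahler form: skew-symmetry of $\omega=g(J\cdot,\cdot)$ gives the algebraic relation $\epsilon A_{02}=\tfrac{1+\epsilon^2}{1-\epsilon^2}A_{13}$, and closedness $d\omega=0$ (using $d\sigma^1=-\sigma^2\wedge\sigma^3$ and its cyclic permutations) yields the two ODEs $\epsilon A=-\tfrac{d}{d\epsilon}\bigl((1-\epsilon^4)B\bigr)$ and $\tfrac{1}{\epsilon}A_{13}=\tfrac{d}{d\epsilon}(\epsilon A_{02})$; the first, after relabelling $(1-\epsilon^4)B$ as $A(\epsilon)$, produces the coefficient $-\tfrac{1}{\epsilon}A'$, and the second integrates to $A_{02}=c/(1+\epsilon^2)$ — the constant $c$ is a constant of integration, not a consequence of symmetry. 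Your volume computation is salvageable in outline (the $AA'$ term is a total derivative, $A(1)=0$ by regularity at the antipodal locus, and the Haar normalization $\int_{SO(3)}\sigma^1\wedge\sigma^2\wedge\sigma^3=8\pi^2$ together with $\int_0^1\epsilon(1-\epsilon^2)(1+\epsilon^2)^{-3}\,d\epsilon=\tfrac18$ gives exactly $4\pi^2\lim_{\epsilon\to 0}A^2-c^2\pi^2$), but it presupposes the canonical form, which your argument as written does not establish.
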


\begin{proof}
\newcommand{\Vol}{\mathrm{Vol}}

This lemma is similar to \cite[Prop. 5.1]{romao2018}, but for $\tau \neq 0$, 
the 
swapping map $(z_1, z_2) \mapsto (z_2, z_1)$ is no longer a symmetry of the 
metric, instead, we consider the action of orientation reversing isometries 
of the sphere on the moduli space.

% A graphical representation of the generators of the Lie algebra can be
% seen in figure~\ref{fig:e0-e1-e2-e3}.

% \begin{figure}[h]
  
% \begin{tabular}[tb]{lcrp{width}*{2}{lcrp{}}|}
%   {\includegraphics[width=.5\textwidth]{img/E0.png}} &
%   {\includegraphics[width=.5\textwidth]{img/E1.png}} \\
%   {\includegraphics[width=.5\textwidth]{img/E2.png}} &
%   {\includegraphics[width=.5\textwidth]{img/E3.png}} &
% \end{tabular}

% \caption[]{
%   Representation of the vector field $E_0$ and the vector fields $E_1,
%   E_2, E_3$ generating 
%   the Lie algebra $\mathfrak{so}(3)$ in ambient
%   space as pairs of vectors generating rotations in a
%   vortex-antivortex pair at symmetric positions with respect to the
%   vertical axis.
%   \label{fig:e0-e1-e2-e3} }
% \end{figure}

A general symmetric bilinear form in
$T\left((0,1)\times SO(3)\right)$ invariant under the diagonal $SO(3)$ action, 
will be a linear combination
\begin{align}
A_{rs}\sigma^{r}\sigma^{s},
\end{align}
with $A_{rs}=A_{sr}$. Let $q(\epsilon)=(\epsilon,-\epsilon)$,
$\epsilon\in(0,1]$. 
Denoting by $(X,Y,Z)$ coordinates in $\reals^{3}$,  the basis $E_{j}$ can be 
represented in the canonical embedding of $\sphere$ as the unit sphere in 
$\reals^3$ as 
\begin{align}
E_{0} & =
\frac{2(1-\epsilon^2)}{\left(1+\epsilon^{2}\right)^{2}}\left(\frac{\del}{\del
    X_{1}}-\frac{\del}{\del
    X_{2}}\right)-4\frac{\epsilon}{\left(1+\epsilon^{2}\right)^{2}}\left(-\frac{\del}{\del
    Z_{1}}-\frac{\del}{\del Z_{2}}\right),\\ 
E_{1} & =-\frac{1-\epsilon^{2}}{1+\epsilon^{2}}\left(\frac{\del}{\del
    Y_{1}}+\frac{\del}{\del Y_{2}}\right),\\ 
E_{2} & =-\frac{1-\epsilon^{2}}{1+\epsilon^{2}}\left(\frac{\del}{\del
    X_{1}}+\frac{\del}{\del
    X_{2}}\right)+\frac{2\epsilon}{1+\epsilon^{2}}\,\left(-\frac{\del}{\del
    Z_{1}}+\frac{\del}{\del Z_{2}}\right),\\ 
E_{3} & =\frac{2\epsilon}{1+\epsilon^{2}}\left(\frac{\del}{\del
    Y_{1}}-\frac{\del}{\del Y_{2}}\right). 
\end{align}

A short calculation yields,
\begin{align}
\Jop E_{0}=\frac{1}{\epsilon}E_{3},\qquad\Jop 
E_{1}=\frac{1-\epsilon^{2}}{1+\epsilon^{2}}E_{2},
\end{align}
where $\Jop$ is the pseudo-complex structure on 
$T((\sphere\times\sphere) \setminus \mathbb{S}^2_\Delta )$. 
If the metric is Kähler, we deduce,
\begin{align}
A_{03}=A_{12}=0,\qquad A_{33}=\epsilon^{2}A_{00},\qquad
A_{11}=\left(\frac{1-\epsilon^{2}}{1+\epsilon^{2}}\right)^{2}A_{22}. 
\end{align}

Let $C:\sphere\to\sphere$ be the reflection map $Y \mapsto -Y$ on the $XZ$ 
plane. $C$
acts on $\sigma^k$ as follows, 
\begin{align}
C^{*}\sigma^{0}=\sigma^{0},\qquad C^{*}\sigma^{1}=-\sigma^{1},\qquad
C^{*}\sigma^{2}=\sigma^{2},\qquad C^{*}\sigma^{3}=-\sigma^{3}. 
\end{align}

From reflection invariance we further obtain,
\begin{align}
A_{01}=A_{23}=0.
\end{align}

Let $A=A_{00}$, $B=(1+\epsilon^{2})^{-2}\,A_{22}$, then the metric is,
\begin{multline}
g = A \left( (\sigma^{0})^{2}+\epsilon^{2}(\sigma^{3})^{2}\right) +
B\left( (1-\epsilon^{2})^{2}\, (\sigma^{1})^{2} +
  (1+\epsilon^{2})^{2}\, (\sigma^{2})^{2} \right) \\
+ A_{02}\,\sigma^{0}\sigma^{2}+A_{13}\,\sigma^{1}\sigma^{3}.   
\end{multline}

If $\omega=g(\Jop\cdot,\cdot)$ is the Kähler form of the metric,
then 
\begin{equation}
\omega = \epsilon\,A\,\sigma^{0}\wedge\sigma^{3} +
(1-\epsilon^{4})\,B\,\sigma^{1}\wedge\sigma^{2} +
\frac{1}{\epsilon}A_{13}\,\sigma^{0}\wedge\sigma^{1} -
\frac{1+\epsilon^{2}}{1 -
  \epsilon^{2}}A_{13}\,\sigma^{2}\wedge\sigma^{3},\label{eq:kahler-form}
\end{equation}
provided $\epsilon A_{02}=(1+\epsilon^{2})(1-\epsilon^{2})^{-1}A_{13}$,
to account for skew-symmetry of $\omega$. The $SO(3)$ valued forms
$\sigma^{1},\sigma^{2},\sigma^{3}$, are related by
$d{\sigma}^{1}=-\sigma^{2}\wedge\sigma^{3}$ and cyclic permutations of
this identity. Kähler forms are closed. For $\omega$ this 
is true provided the coefficients in (\ref{eq:kahler-form}) are solutions
to the equations,
\begin{align}
\epsilon A &= -\frac{d}{d\epsilon}\left( (1-\epsilon^{4})
  \,B\right),\qquad
\frac{1}{\epsilon} A_{13} = \frac{d}{d\epsilon}\left(\epsilon
  A_{02}\right).\label{eq:kahler-form-coeffs} 
\end{align}

Regularity of the metric as $\epsilon\to1$ implies
$\lim_{\epsilon\to1}(1-\epsilon^{4})B(\epsilon)=0$. From the second
equation in (\ref{eq:kahler-form-coeffs}) and the algebraic relation
of the coefficients $A_{13}$, $A_{02}$, we infer 
\begin{align}
A_{02}=\frac{c}{1+\epsilon^{2}},
\end{align}
for some real constant $c$. Redefining the function $(1-\epsilon^{4})B$
as $A(\epsilon)$, the metric has the form
\eqref{eq:so3-diag-inv-metric}. Since 
$\int_{SO(3)}\sigma^{1}\wedge\sigma^{2}\wedge\sigma^{3}=8\pi^{2}$
\cite{romao2018} and 
\begin{align}
\int_{0}^{1}\frac{\epsilon\,(1-\epsilon^{2})}
{(1+\epsilon^{2})^{3}}d\epsilon=\frac{1}{8},  
\end{align}
for this metric, the volume form is
\begin{align}
\vol = -\left( A\,A' + \frac{c^{2}\epsilon( 1 - \epsilon^{2} )}{ (1
    + \epsilon^{2})^{3}}\right) d\epsilon \wedge \sigma^{1} \wedge
\sigma^{2} \wedge\sigma^{3}.
\end{align}

After integration, the total volume of the metric is 
\begin{equation}
%\label{eq:vol-formula}
\Vol\left(\sphere\times\sphere\right) =
4\pi^{2}\lim_{\epsilon\to0}A(\epsilon)^{2}-c^{2}\pi^{2}.  
\end{equation}

\let\Vol\undefined

%%% Local Variables:
%%% TeX-master: "../../../geometric-models.tex"
%%% End:

\end{proof}

Applying lemma~\ref{lem:kahler-metric-can-form} to the $\Lsp^2$ metric, 
we obtain, 
 \begin{lemma}
 \label{lem:samols-metric-in-eps-so3}
 The $\Lsp^2$ metric on $\moduli^{1,1}(\sphere)$ has the structure 
 provided by Lemma~\ref{lem:kahler-metric-can-form}, with 
 \begin{gather}
     A = 2\pi \left(
     \frac{4R^2}{1 + \epsilon^2} - \epsilon\,b_1 - 2R^2 - 1
     \right),
	\label{eq:A-formula}\\
     c = 8\pi R^2\tau.
     \label{eq:c-formula}
 \end{gather}
 \end{lemma}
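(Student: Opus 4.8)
The plan is to compare the abstract Kähler metric $g$ of Lemma~\ref{lem:kahler-metric-can-form} with the explicit $\Lsp^2$ metric $\kmetric$ written in the stereographic chart, evaluated along the slice $q(\epsilon) = (\epsilon,-\epsilon)$. Since $\moduli^{1,1}(\sphere)$ with its $\Lsp^2$ metric satisfies all the hypotheses of Lemma~\ref{lem:kahler-metric-can-form} (it is Kähler by Theorem~\ref{thm:bps-loc-formula}, and $O(3)$ acts diagonally by isometries because the Taubes equation is isometry-invariant), the metric must have the canonical form~\eqref{eq:so3-diag-inv-metric} for some function $A(\epsilon)$ and constant $c$. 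Both $A$ and $c$ can be read off by evaluating $g(E_k,E_l)$ at $q(\epsilon)$ for appropriate pairs $(k,l)$ and matching against the corresponding entries of $\kmetric$.

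First I would compute the pushforwards to the moduli space of the vector fields $E_0 = \del_\epsilon$ and $E_j$ ($j=1,2,3$), using the explicit embeddings of $E_0,\dots,E_3$ into $T(\sphere\times\sphere)$ given in the proof of Lemma~\ref{lem:kahler-metric-can-form}, translated back to the stereographic coordinates $(z_1,z_2)$. At the point $q(\epsilon)=(\epsilon,-\epsilon)$ these are explicit combinations of $\del_{z_1},\del_{z_2},\del_{\conj z_1},\del_{\conj z_2}$. Then I would evaluate $\kmetric = 2\pi\sum_{i,j}\brk(e^{\Lambda(z_i)}(1-\sgn_i\tau)\delta_{ij} + \del_{z_i}b_j)\,dz_i\,d\conj z_j$ on these vectors, where $\sgn_1 = +1$, $\sgn_2 = -1$, $e^{\Lambda(z)} = 4R^2(1+\abs z^2)^{-2}$, so $\cf = \cf(\epsilon) = 4R^2(1+\epsilon^2)^{-2}$ at $z = \pm\epsilon$. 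The diagonal coefficient $A = A_{00} = g(E_0,E_0)/(\text{normalisation})$: evaluating $\kmetric(E_0,E_0)$ along the radial direction, using $b_2(\epsilon,-\epsilon) = -b_1(\epsilon,-\epsilon)$ from~\eqref{eq:b1-b2-complementarity}, $\del_{z_1}b_1 + \del_{z_2}b_1 = 0$ and the symmetries of Lemma~\ref{lem:coef-sym}, collapses the double sum to an expression in $\cf$, $\epsilon$, $b_1$ and $\del_\epsilon(\epsilon b_1)$; comparison with the structure $g = A((\sigma^0)^2 + \epsilon^2(\sigma^3)^2) + \cdots$ then gives~\eqref{eq:A-formula}. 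For $c$, I would use the cross term $A_{02}$, i.e. $g(E_0,E_2)$, which by the form of~\eqref{eq:so3-diag-inv-metric} equals $c(1+\epsilon^2)^{-1}$; evaluating $\kmetric$ on $(E_0,E_2)$ picks out the parts of the $b_j$-terms that are \emph{not} translation-symmetric, and here the reflection-symmetry identity~\eqref{eq:z-axis-symmetry-b} together with~\eqref{eq:bs-gen-rel} (with $C=0$ for a pair) forces $c$ to be proportional to $\tau$; tracking the constant gives $c = 8\pi R^2\tau$.

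The main obstacle will be the bookkeeping in the cross-term computation: one must be careful that the "extra" $\tau$-dependent piece of $\kmetric$ — the part responsible for the non-Kähler-product structure when $\tau\neq 0$ — is correctly isolated, since when $\tau = 0$ it must vanish (recovering $c=0$, consistent with~\cite[Prop.~5.1]{romao2018}), and the sign and factor of $R^2$ must come out right. A useful sanity check along the way is to verify that $c = 8\pi R^2\tau$ together with $A(\epsilon)$ as in~\eqref{eq:A-formula} reproduces the known symmetric-case answer and, via~\eqref{eq:vol-formula}, yields the volume $\vol(\moduli^{1,1}(\sphere)) = 4\pi^2\lim_{\epsilon\to 0}A(\epsilon)^2 - c^2\pi^2$; using $\lim_{\epsilon\to 0}\epsilon b_1(\epsilon,-\epsilon) = -1$ from Lemma~\ref{prop:b-sing}, $\lim_{\epsilon\to 0}A = 2\pi(4R^2 + 1 - 2R^2 - 1) = 4\pi R^2$, giving $16\pi^4\cdot 4R^4 - 64\pi^4 R^4\tau^2 = (8\pi^2 R^2)^2(1-\tau^2)$, matching Theorem~\ref{thm:vol}. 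I would present the $A$-computation in full and the $c$-computation with the key symmetry inputs flagged, deferring the most routine algebra.
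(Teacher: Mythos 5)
Your approach is the same as the paper's: read off $c$ from $g(E_0,E_2)$ and $A$ from $g(E_0,E_0)$ along the slice $(\epsilon,-\epsilon)$, using $b_1+b_2=0$ together with the translation-type identities $\sum_k\del_{z_k}b_j=0$ to kill the $b$-contributions to the cross term; your plan for $c$ is correct and is exactly what the paper does. The gap is in the $A$-computation. In the canonical form \eqref{eq:so3-diag-inv-metric} the coefficient of $(\sigma^0)^2+\epsilon^2(\sigma^3)^2$ is $-\frac{1}{\epsilon}\frac{dA}{d\epsilon}$, not $A$ (the function $A$ is redefined at the end of the proof of Lemma~\ref{lem:kahler-metric-can-form}), so matching against $g(E_0,E_0)$ yields only the first-order ODE \eqref{eq:A-diff-eq}; integrating it gives \eqref{eq:A-formula} up to an additive constant, and nothing in that matching determines the constant. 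The paper fixes it by the regularity requirement $\lim_{\epsilon\to 1}A(\epsilon)=0$ coming from Lemma~\ref{lem:kahler-metric-can-form} (the condition $\lim_{\epsilon\to 1}(1-\epsilon^4)B=0$), evaluated with the explicit antipodal value $b_1(1,-1)=-1$ from \eqref{eq:bpm-antipodal-position}; that is what produces the term $-2\pi(2R^2+1)$ in \eqref{eq:A-formula}. Your sanity check at $\epsilon\to 0$ uses the formula with this constant already in place, so it confirms consistency with Theorem~\ref{thm:vol} but does not determine the constant; you need to add the $\epsilon\to 1$ boundary-condition step to close the argument.
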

 
\begin{proof}
To compute the constant $c$, we calculate
$g\left(E_{0},E_{2}\right)$. Tangent vectors $E_{0}$, $E_{2}$ in
projective coordinates $(z_1, z_2) \in \sphere \times \sphere$ with
respect to the south pole are,  
\begin{equation}
  E_{0} = \frac{\del}{\del x_1} - \frac{\del}{\del x_{2}}, \qquad
  E_{2} = \frac{1+\epsilon^2}{2}\,\left( \frac{\del}{\del x_{1}} +
    \frac{\del}{\del x_2} \right).\label{eq:E0-E2-proj}
\end{equation}
where $z_k = x_k + i y_k$. Thence, 
\begin{align}
g\left(E_{0},E_{2}\right)
 & =\frac{1+\epsilon^{2}}{2}g\left(\frac{\del}{\del
x_{1}}-\frac{\del}{\del x_{2}},\frac{\del}{\del
x_{1}}+\frac{\del}{\del x_{2}}\right)\nonumber \\
 & =\frac{1+\epsilon^{2}}{2}2\pi\left(\cf(1+\tau)-\cf(1-\tau)+\frac{\del
 b_{1}}{\del z_{1}}+\frac{\del b_{2}}{\del z_{1}}-\frac{\del
 b_{1}}{\del z_{2}}-\frac{\del b_{2}}{\del
 z_{2}}\right)\label{eq:gl-e0-e2-expand} 
\end{align}

To simplify (\ref{eq:gl-e0-e2-expand}), we use the symmetries of the
coefficients $b_j$, lemma~\ref{lem:coeff-sym},
\begin{align}
	\sum_j\left(\pdv{z_1}{b_j} - \pdv{z_2}{b_j}\right) &=
	\frac{1}{2} \sum_j{\dv\epsilon}b_j(\epsilon,-\epsilon) -
\frac{i}{2}\sum_j 
	\left(\pdv{y_1}{b_j} - \pdv{y_2}{b_j}\right)\nonumber\\
	&=\frac{1}{2}{\dv \epsilon} (b_1 + b_2) + \frac{1}{2\epsilon}
(b_1 + b_2)\nonumber\\ 
	&= 0.
\end{align}

Hence,
\begin{equation}
	\label{eq:gl2-e0-e2}
	\moduliMetric(E_0, E_2) = \frac{8\pi R^2\tau}{1 + \epsilon^2}
\end{equation}
and consequently $c = 8\pi R^2\tau$. Let us compute
$\moduliMetric(E_0, E_0)$, 
\begin{align}
	\nonumber
	\moduliMetric(E_0, E_0) &= \moduliMetric \left( \pdv{x_1} -
	\pdv{x_2}, \pdv{x_1} -\pdv{x_2} \right)\\
	&= 2 \pi \left( \cf (1 + \tau) + \cf (1 - \tau)
	+  \pdv{z_1}{b_1} - \pdv{z_1}{b_2}
	- \pdv{z_2}{b_1} + \pdv{z_2}{b_2} \right).
\end{align}

Again by symmetry,
\begin{equation}
	\label{eq:db1-db2-prop-at-qeps}
	\pdv{z_1}{b_j} - \pdv{z_2}{b_j} = \frac{1}{2}
        \frac{d b_j}{d\epsilon} 
	+\frac{1}{2\epsilon}b_j.
\end{equation}

Hence,
\begin{equation}
  \label{eq:gl2-e0-e0}
  \moduliMetric(E_0, E_0) = 2\pi \left( \frac{8R^2}{(1 +
      \epsilon^2)^2} + 
    \frac{d b_1}{d\epsilon} 
    + \frac{1}{\epsilon}b_1 \right).
\end{equation}

Comparing \eqref{eq:gl2-e0-e0} and \eqref{eq:so3-diag-inv-metric},
\begin{equation}
  \label{eq:A-diff-eq}
  - \frac{1}{\epsilon} \frac{d A}{d\epsilon} = 2\pi \left(
    \frac{8R^2}{(1 + \epsilon^2)^2} + \frac{db_1}{d\epsilon}
    + \frac{1}{\epsilon}b_1 \right),
\end{equation}

Solving this equation, we find,
\begin{equation}
	\label{eq:A-diff-eq-sol}
	A = \frac{8\pi R^2}{1 + \epsilon^2} - 2\pi\epsilon b_1 +
        \mathrm{const}. 
\end{equation}

From the regularity condition $\lim_{\epsilon\to 1}A(\epsilon) = 0$ used
to compute the formula for the volume of the moduli space  and the
explicit formula \eqref{eq:bpm-antipodal-position} for $b_1$ in the
antipodal case, the constant is 
\begin{equation}
	\label{eq:A-sol-const}
	\mathrm{const}. = -2\pi (2R^2 + 1).
\end{equation}

Therefore,
\begin{equation}
	A = 2\pi \left( \frac{4R^2}{1 + \epsilon^2} - \epsilon b_1  - 2R^2 -
	1\right). 
\end{equation}
\end{proof}

We claim that
\begin{equation}
  \label{eq:lim-eps-b1}
  \lim_{\epsilon\to 0} \epsilon b_1 = -1 
\end{equation}
as can be seen numerically in figure~\ref{fig:a-priori} for the
symmetric case in the unit sphere.  

For a vortex-antivortex pair,
\begin{equation}
  b_1(\epsilon, -\epsilon) = 2 \eval{\frac{\del}{\del
      x}}{z=\epsilon} h_{\epsilon} - \frac{1}{\epsilon}. 
  \label{eq:vav-b1}
\end{equation}

Since $h_{\epsilon} \to \mu$ in $C^1$ as $\epsilon\to 0$,
\begin{equation}
  \lim_{\epsilon\to 0} \epsilon\,b_1(\epsilon, -\epsilon) = -1.
  \label{eq:the-limit-conclusion}
\end{equation}

Applying lemmas~\ref{lem:kahler-metric-can-form}
and~\ref{lem:samols-metric-in-eps-so3}, the volume of the moduli space is
\begin{equation}\label{eq:volume-conjecture-sphere-formula}
	\vol \left( \moduli^{1,1}(\sphere) \right) =  \left( 8\pi^2
          R^2\right)^2 (1 - \tau^2).
\end{equation}
Notice that another way to express the volume is as $4\pi^2(1-\tau^2)\vol 
(\sphere)$, which corresponds to the volume of a product of spheres, 
each factor weighted by $2\pi (1\pm\tau)$, the effective mass of a core, 
hence, it is expected that as $\tau \to \pm 1$, the volume vanishes, 
because of the negligible weight of one of the factors.

%\section{Proof of theorem~\ref{thm:incompleteness}}
%\label{sec:proof-thm-incompleteness}

%\input{chapters/vav-sphere/sections/proof-incompleteness.tex}

% \begin{figure}[p]
%   \centering
%   \includegraphics[width=.7\textwidth]{}
%   \caption[]{
%     $\tilde h_{-0.25}$ on the unit sphere after south pole
%     stereographic projection. 
%     \label{fig:vav-reg-part}}
% \end{figure}

\begin{figure}[p]
  \centering
    \includegraphics[width=.55\textwidth]{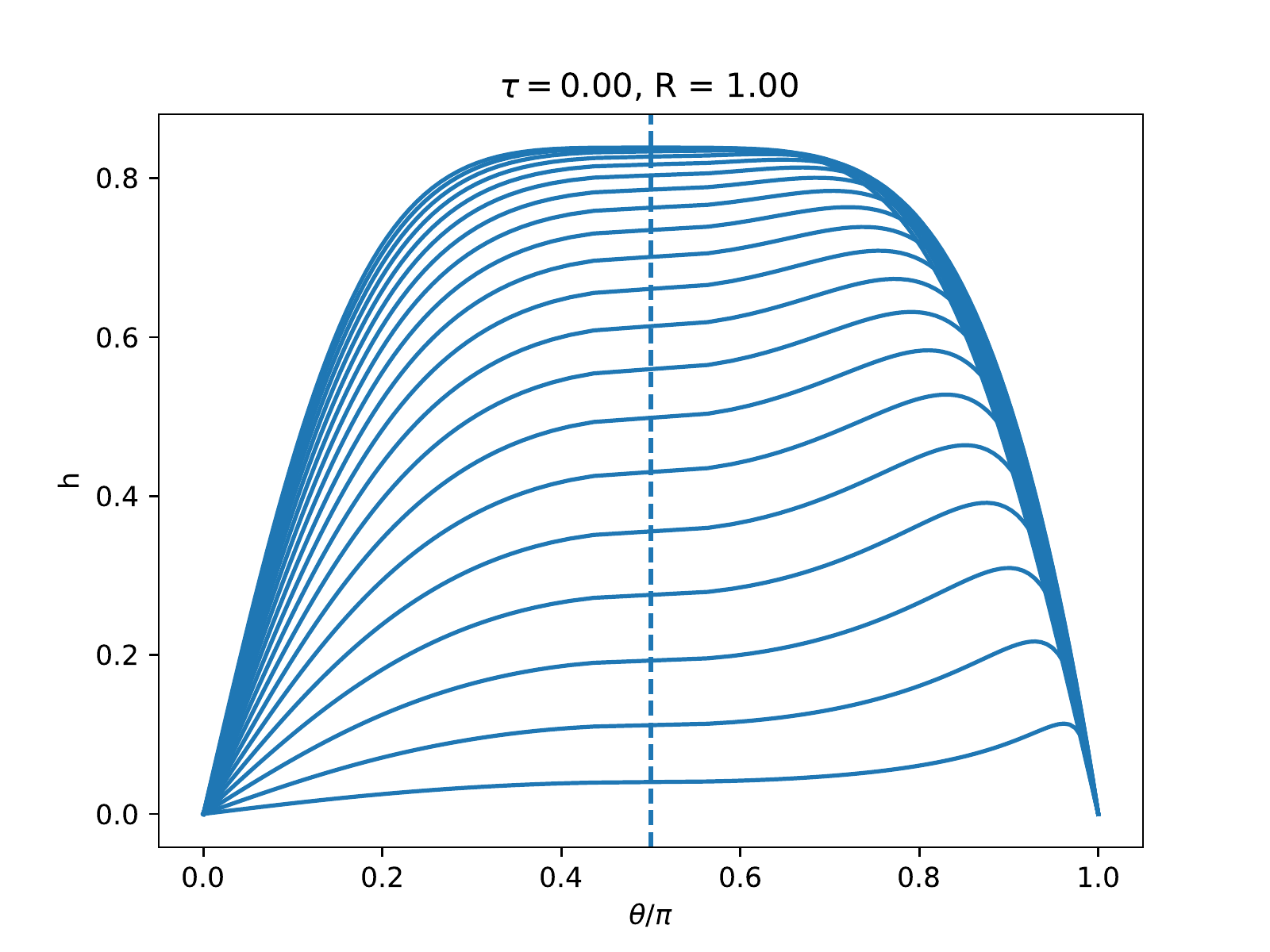}
    \includegraphics[width=.55\textwidth]{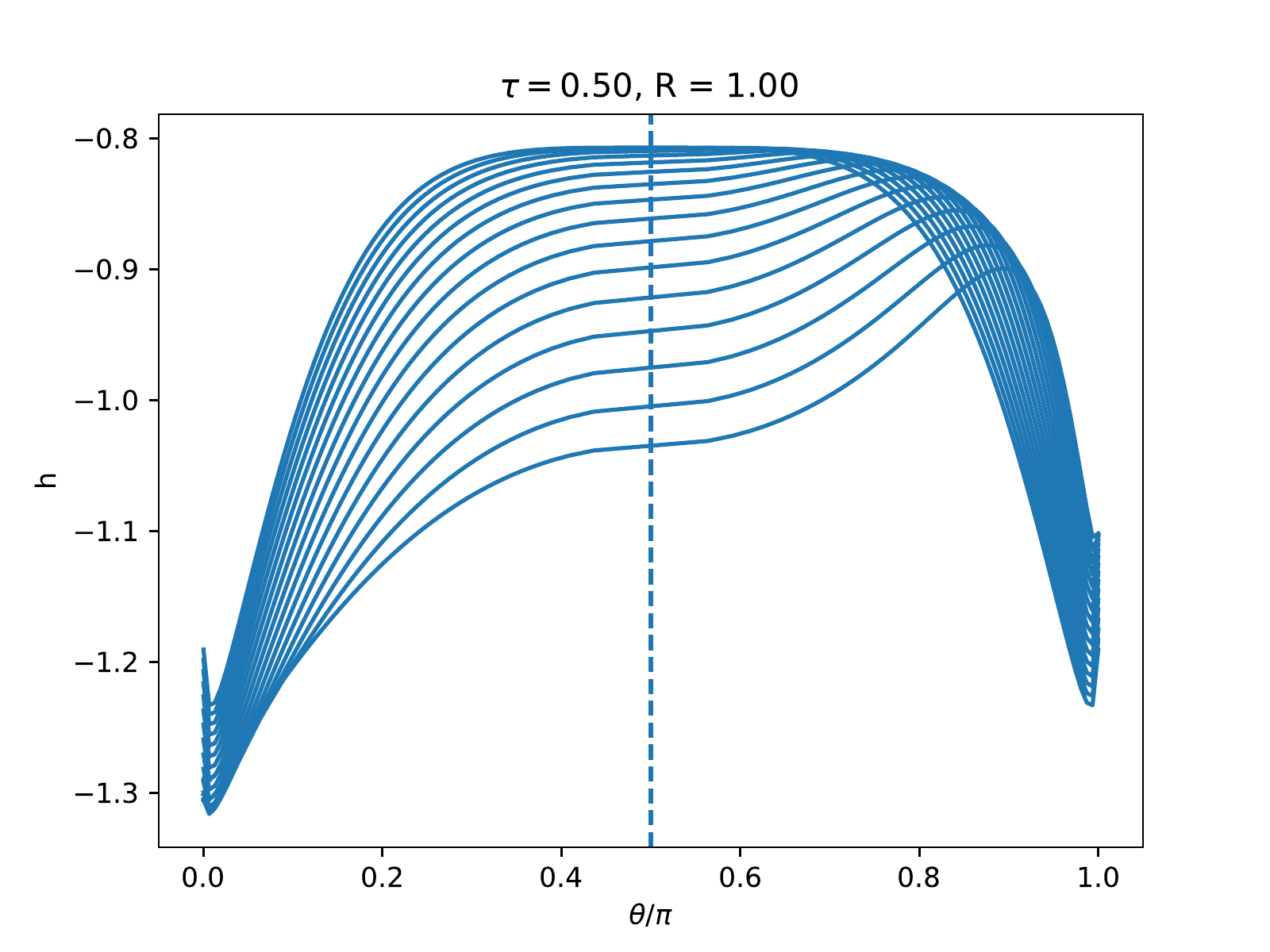}
    \includegraphics[width=.55\textwidth]{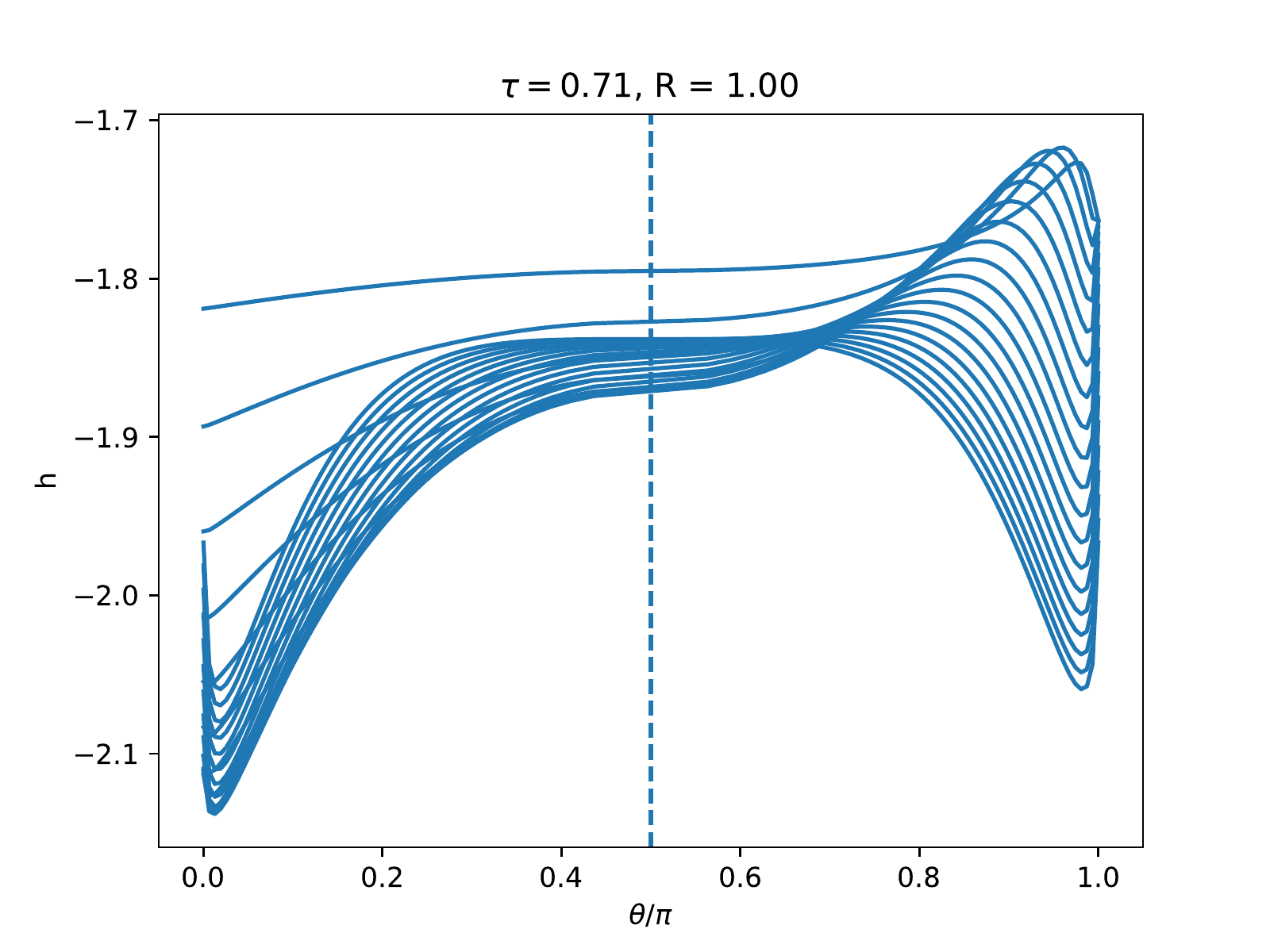}
    \caption{
      Three views of the declination data of $\tilde h_{\epsilon}$, 
      the regular part of the solution to the Taubes equation, for
      three different values of the asymmetry parameter $\tau$ on the 
      unit sphere. 
      \textbf{Top.} Vortex and
      antivortex are symmetric, with the same effective
      mass. \textbf{Middle and bottom}. The antivortex becomes more
      massive. We solved 
      from $\epsilon = 1$ down to $0.05$ in steps of 
    $0.05$, except that for $\tau = 0.5$, the computation
    stopped at $\epsilon = .20$ due to algorithm divergence. 
    As $\epsilon \to 0$ the data shows how $\tilde h_\epsilon$ flattens 
    as expected.
      }\label{fig:dec-data}
\end{figure}

\begin{figure}[p]
  \centering
  \includegraphics[width=.78\textwidth]{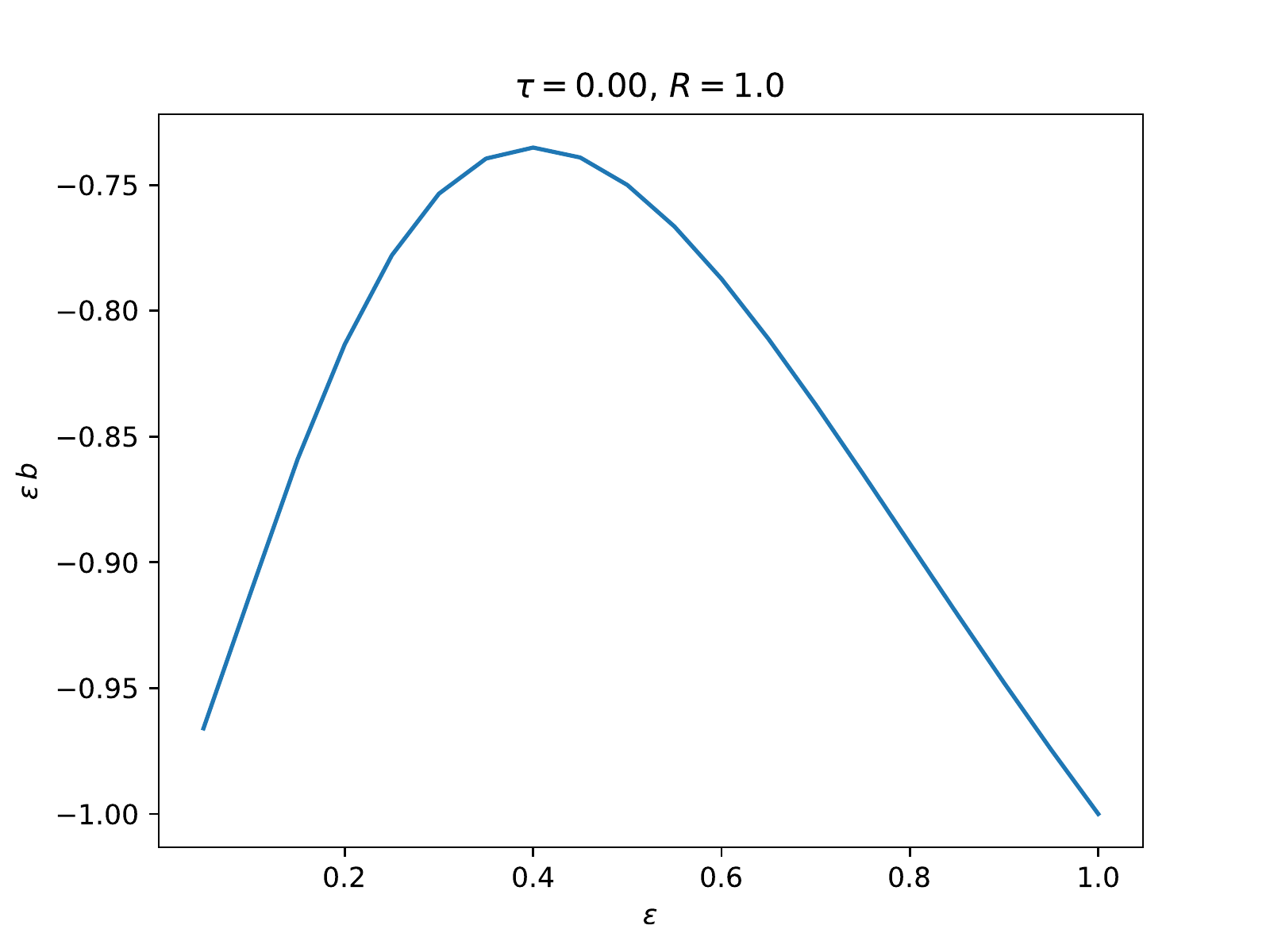}
  \includegraphics[width=.78\textwidth]{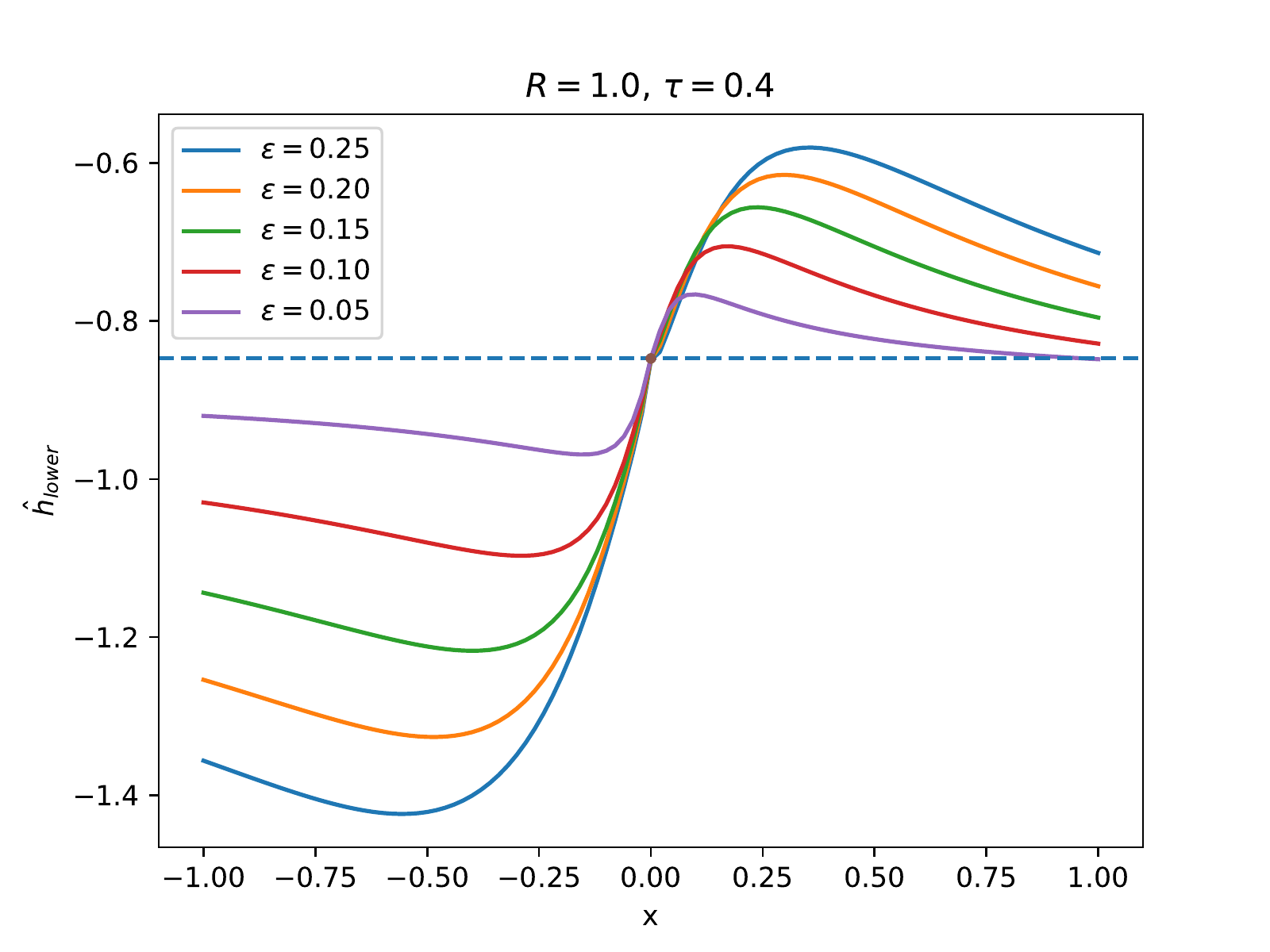}
  \caption{
    \textbf{Top}. Real profile of $\epsilon b$ in the symmetric
    case. The limit $\lim_{\epsilon\to 0} \epsilon b = -1$ is apparent
    in the numerical data. \textbf{Bottom}. Real profile of a
    vortex-antivortex pair located at $\pm \epsilon$ on the real axis
    of the extended complex plane for several values of
    $\epsilon$. In both cases, the domain is the unit sphere, the bottom 
    plot shows the behaviour of the real profile of
    $\tilde h$ as $\epsilon \to 0$ in the south pole of the domain. The dashed 
    horizontal line is $\log \left( (1 - \tau)(1 + 
      \tau)^{-1} \right)$. The data shows how the
    regular part of the solution to the Taubes equation converges to this
    constant value as the pair collides at the north pole. 
    }\label{fig:a-priori} 
\end{figure}

\subsection{Flat tori}

In this section we compute the volume of the moduli space for a flat 
tori, to 
this end, we extend the coefficients $b_q$ in 
the $\Lsp^2$ metric to a global object and relate it to the volume of 
$\moduli^{1,1}(\mathbb{T}^2)$ in lemma~\ref{lem:volume-integral}. 
Consider a holomorphic chart $\varphi: U \subset \mathbb{T}^2 \to \cpx$ 
on an open and dense set $U$, with coordinates $z = \varphi(x)$, $x \in U$. Let 
us define,
\begin{align}
b_U = b_j\,d\bar z^j \in \Omega^{(0,1)}((U \times U) \setminus \diag_U).
\end{align}

In general $b_U$ is only well defined on a chart, however, flat tori admit 
atlases such that the holomorphic changes of coordinates are 
translations. Since
translations have trivial second derivatives, by 
\eqref{eq:b-coeffs-change-of-coords}  $b_U$ extends to a global form $b \in
\Omega^{(0,1)}(\moduli^{1,1}(\mathbb{T}^2))$. By the symmetries of the 
coefficients $b_j$, this form is holomorphic, as the following short 
calculation shows in coordinates:
\begin{align}
\bar\del b_U &= \sum_{i,j} \bar\del_{z_i}b_j\,d\bar z^i\wedge d\bar 
z^j\nonumber\\
&= -\sum_{i,j} \bar\del_{z_j}b_i\,d\bar z^j\wedge d\bar z^i\nonumber\\
&= -\bar\del b_U,
\end{align}
hence, $\bar \del b_U = 0$.

To compute the volume of flat tori, we will use the $(1,1)$-form $\del b$ to
define another form in the moduli space which is more convenient
for calculations. Let $\proj_j:\mathbb{T}^2\times \mathbb{T}^2\to \mathbb{T}^2$ 
be the canonical projection map onto the $j$-th factor of the product. Let
us define the form
\begin{align}
  \kform_0 = 2\pi\,(1-\tau)\,\proj_1^{*}\,\kform_{\mathbb{T}^2} + 2\pi\,(1 +
  \tau)\, \proj_2^{*}\,\kform_{\mathbb{T}^2}.
\end{align}

The K\"ahler form on the moduli space can be written as,
\begin{align}
  \kform &= \kform_0 + \pi i\,\del b \in
  \Lambda^{1,1}(\moduli^{1,1}(\mathbb{T}^2)). 
\end{align}

Notice that,
\begin{align}
  \vform &= \half\,\kform\wedge\kform\nonumber\\
         &= \vform_0 + \pi i\,\kform_0\wedge \del b -
           \frac{\pi^2}{2}  \del b \wedge \del b,
\end{align}
where $\vform_0 = \half \kform_0\wedge\kform_0$ is the restriction of
the volume form in the product $\mathbb{T}^2\times\mathbb{T}^2$ to the moduli 
space.

    \begin{lemma}\label{lem:volume-integral}
      Let $\diag_\epsilon$ be the $\epsilon$-tubular neighbourhood of
      the diagonal set of $\mathbb{T}^2\times\mathbb{T}^2$ for small
      $\epsilon$. The volume of the moduli space can be computed as,
      \begin{multline}
        \Volume(\moduli^{1,1}(\mathbb{T}^2)) = 
        4\pi^2 (1 - \tau^2)\,\Volume(\mathbb{T}^2)^2 \\
        + \lim_{\epsilon\to 0}\int_{\mathbb{T}^2 \times \mathbb{T}^2
        \setminus \diag_\epsilon}  
        \pbrk{\pi i\, \kform_0\wedge \del b
        - \frac{\pi^2}{2}  
        \del b \wedge \del b}.\label{eq:vol-int-lim}
      \end{multline}
    \end{lemma}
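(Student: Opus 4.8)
The plan is to realise $\moduli^{1,1}(\mathbb{T}^2)$ as the open dense subset $(\mathbb{T}^2\times\mathbb{T}^2)\setminus\diag$ of the compact product, integrate the volume form $\vform = \vform_0 + \pi i\,\kform_0\wedge\del b - \tfrac{\pi^2}{2}\del b\wedge\del b$ over the complement of a tubular neighbourhood $\diag_\epsilon$ of the diagonal, and take $\epsilon\to 0$. First I would observe that $\vform$ is a smooth $4$-form on the open manifold $(\mathbb{T}^2\times\mathbb{T}^2)\setminus\diag$ and that $\moduli^{1,1}(\mathbb{T}^2)$ carries its natural orientation from the K\"ahler structure, so that $\Volume(\moduli^{1,1}(\mathbb{T}^2)) = \int_{(\mathbb{T}^2\times\mathbb{T}^2)\setminus\diag}\vform$, interpreted as an improper integral, i.e. as the limit over $(\mathbb{T}^2\times\mathbb{T}^2)\setminus\diag_\epsilon$ as $\epsilon\to 0$, provided that limit exists. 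The term $\vform_0$ is the restriction of a smooth form defined on all of the compact product, and $\diag_\epsilon$ has area $O(\epsilon^2)$, so $\int_{(\mathbb{T}^2\times\mathbb{T}^2)\setminus\diag_\epsilon}\vform_0 \to \int_{\mathbb{T}^2\times\mathbb{T}^2}\vform_0$; and the latter integral is straightforward to compute since $\vform_0 = \half\kform_0\wedge\kform_0$ with $\kform_0 = 2\pi(1-\tau)\proj_1^*\kform_{\mathbb{T}^2} + 2\pi(1+\tau)\proj_2^*\kform_{\mathbb{T}^2}$. Expanding, the cross term $\proj_1^*\kform_{\mathbb{T}^2}\wedge\proj_2^*\kform_{\mathbb{T}^2}$ integrates to $\Volume(\mathbb{T}^2)^2$ while the pure terms $\proj_j^*\kform_{\mathbb{T}^2}\wedge\proj_j^*\kform_{\mathbb{T}^2}$ vanish, giving $\int_{\mathbb{T}^2\times\mathbb{T}^2}\vform_0 = \tfrac12\cdot 2\cdot (2\pi)^2(1-\tau)(1+\tau)\,\Volume(\mathbb{T}^2)^2 = 4\pi^2(1-\tau^2)\,\Volume(\mathbb{T}^2)^2$.

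Next I would collect the remaining two terms: the claimed formula follows once I know that $\int_{(\mathbb{T}^2\times\mathbb{T}^2)\setminus\diag_\epsilon}\bigl(\pi i\,\kform_0\wedge\del b - \tfrac{\pi^2}{2}\del b\wedge\del b\bigr)$ has a finite limit as $\epsilon\to 0$ and that the splitting into the $\vform_0$ part and the $\del b$ part is legitimate termwise. The point I would stress is that $b = b_j\,d\bar z^j$ is a globally defined $(0,1)$-form on $\moduli^{1,1}(\mathbb{T}^2)$ — this uses that flat tori admit an atlas whose transition maps are translations, so that the inhomogeneous second-derivative term in the transformation rule \eqref{eq:b-coeffs-change-of-coords} drops out and $b_U$ patches to a global object — and moreover $\del b$ is a genuine smooth $(1,1)$-form away from the diagonal. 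Near the diagonal, by lemma~\ref{prop:b-sing} applied in a translation chart, $b_j(z_1,z_2) = -2\sign_j(\bar z_1 - \bar z_2)^{-1} + \tilde b_j$ with $\tilde b_j$ continuous and vanishing on $\diag$; thus $\del b$ has at worst a mild singularity and, crucially, $\kform = \kform_0 + \pi i\,\del b$ is the K\"ahler form of a bona fide (incomplete but smooth) metric on $(\mathbb{T}^2\times\mathbb{T}^2)\setminus\diag$, so $\vform = \half\kform\wedge\kform$ is a positive $4$-form and the improper integral either converges or diverges to $+\infty$; by comparison with the Riemann sphere computation (theorem~\ref{thm:vol}) and the expectation that the moduli space has finite volume, it converges. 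To make this rigorous I would bound $\del b\wedge\del b$ and $\kform_0\wedge\del b$ on $\diag_\epsilon\setminus\diag_{\epsilon/2}$ using the singularity estimate for $\del_{z_i}b_j$ (which by lemma~\ref{prop:b-sing} and the analogue of proposition~\ref{prop:sph-ddh-bound} on the torus behaves like $|z_1-z_2|^{-2}$, hence is $L^1$ against the Euclidean area, giving an $O(\log\epsilon)$ bound, or better $O(1)$ after the symmetry-induced cancellations are used), so that the tail contributions vanish and the limit exists.

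The main obstacle, as in the sphere case, is establishing that the right-hand limit in \eqref{eq:vol-int-lim} is actually finite and, eventually, that it evaluates in closed form — but for the present lemma I only need the finiteness and the bookkeeping. The finiteness reduces to the local behaviour of $\del b$ near the diagonal, and the cleanest route is to note that $\vform$ is non-negative (being the top power of a K\"ahler form) and to use $\vform_0$ as a dominating/comparison form on the safe region together with the explicit $L^1$ bound on the singular pieces of $\del b$; alternatively, Stokes' theorem converts $\int \del b\wedge\del b$ and $\int\kform_0\wedge\del b$ over $(\mathbb{T}^2\times\mathbb{T}^2)\setminus\diag_\epsilon$ into boundary integrals over $\partial\diag_\epsilon$ (using that $\del b$ is $\del$-closed, $\bar\del b = 0$ as shown just above the statement, and $d\kform_0 = 0$), reducing everything to residue-type computations on small circle bundles around the diagonal, which is how the closed-form evaluation would proceed in the next subsection. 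For the lemma itself I would therefore organise the proof as: (i) $b$ is global and $\del b$ is $L^1_{loc}$, so the improper integral of $\vform$ makes sense and equals $\lim_\epsilon\int_{(\mathbb{T}^2\times\mathbb{T}^2)\setminus\diag_\epsilon}\vform$; (ii) split off $\int\vform_0 = 4\pi^2(1-\tau^2)\Volume(\mathbb{T}^2)^2$ using that $\vform_0$ extends smoothly; (iii) conclude the stated identity by linearity of the limit, with the remaining two-term limit finite by the singularity estimates above. The hard part is purely analytic control of $\del b$ at the diagonal, for which lemma~\ref{prop:b-sing} (and its torus analogue of the second-derivative bound) is exactly the tool provided.
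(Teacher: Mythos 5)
Your proposal is correct and follows essentially the same route as the paper: write the volume as the limit of $\int\vform$ over the complement of $\diag_\epsilon$, split off $\vform_0$ (which extends smoothly to the compact product, the pure terms in $\kform_0\wedge\kform_0$ vanishing), and evaluate $\int\vform_0 = 4\pi^2(1-\tau^2)\Volume(\mathbb{T}^2)^2$ by Fubini. The extra analytic work you sketch on the finiteness of the remaining limit is not carried out in the paper's proof of this lemma — there it is pure bookkeeping, with the residual terms handled afterwards (the $\del b\wedge\del b$ term shown to vanish identically and the $\kform_0\wedge\del b$ term evaluated by residues in theorem~\ref{thm:vol-torus}).
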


    \begin{proof}
      \begin{align}
        \Volume(\moduli^{1,1}(\mathbb{T}^2))
        &= \lim_{\epsilon\to 0}  
          \int_{\mathbb{T}^2 \times \mathbb{T}^2 \setminus \diag_\epsilon}
          \vform\nonumber\\
        &= \int_{\mathbb{T}^2\times\mathbb{T}^2}\vform_0
          + \lim_{\epsilon\to 0}\int_{\mathbb{T}^2 \times
          \mathbb{T}^2 \setminus \diag_\epsilon}
          \pbrk{\pi i\,\kform_0\wedge \del b
          - \frac{\pi^2}{2} \del b \wedge \del b}.
      \end{align}

      On the other hand,
      \begin{align}
        \vform_0 = 4\pi^2(1 - \tau^2)\,\proj_1^{*}\,\kform_{\mathbb{T}^2} \wedge
        \proj_2^{*}\,\kform_{\mathbb{T}^2}.
      \end{align}

      Applying Fubini and the change of variables theorems,
      \begin{align}
        \int_{\mathbb{T}^2\times\mathbb{T}^2}\vform_0
        &= 4\pi^2(1-\tau^2)\pbrk{\int_{\mathbb{T}^2}\kform_{\mathbb{T}^2}}^2
          = 4\pi^2(1-\tau^2)\Volume(\mathbb{T}^2)^2.
      \end{align}

      This concludes the proof of the lemma.
    \end{proof}

 According to lemma~\ref{lem:volume-integral}, to compute the volume
 of $\moduli^{1,1}(\mathbb{T})$, we must compute the two non-trivial
 terms in~\eqref{eq:vol-int-lim}.

\begin{lemma}\label{lem:vav-comp-b1-b2-torus}
    Let $\pi: \cpx \to \mathbb{T}^2$ be the canonical covering map and let 
    $R \subset \cpx$ be an open parallelogram such that $\pi|_R: R \to 
    \mathbb{T}^2$ is a bi-holomorphism onto its image and $U = \pi|_R(R)$ is 
    open and dense. On the local coordinates $\pi|_R^{-1}:U \to R$, there is a 
    constant 
    $c \in \cpx$ such that for any pair of 
    different points $z_1, z_2 \in R$,
  \begin{align}
    b_1(z_1, z_2) + b_2(z_1, z_2) &= c,
  \end{align}
\end{lemma}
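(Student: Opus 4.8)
The plan is to adapt, to the flat torus, the argument used for vortices on the Euclidean plane in the discussion following Lemma~\ref{lem:isom-inv-coefs}; the two ingredients are translation invariance of the coefficients $b_j$ and the symmetries of Lemma~\ref{lem:coef-sym}. First I would note that a translation $\gamma(z) = z + a$ of $\mathbb{T}^2$ is an isometry, so the solution of the Taubes equation satisfies $h(\varphi^{-1}(z); z_1 + a, z_2 + a) = h(\varphi^{-1}(z - a); z_1, z_2)$; substituting this into the definition \eqref{eq:pre-bcoef} of $b_j$ and changing variables $w = z - a$ yields $b_j(z_1 + a, z_2 + a) = b_j(z_1, z_2)$ whenever both sides are defined (this is the pure-translation case of the computation carried out in Lemma~\ref{lem:isom-inv-coefs} on $\plane$, which goes through verbatim on $\mathbb{T}^2$). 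Equivalently, the global holomorphic $(0,1)$-form $b$ on $\moduli^{1,1}(\mathbb{T}^2)$ constructed above is invariant under the diagonal translation action, since by \eqref{eq:b-coeffs-change-of-coords} a translation, having vanishing second derivative, acts trivially on the coefficients.

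Next I would differentiate this invariance. Differentiating $b_j(z_1 + a, z_2 + a) = b_j(z_1, z_2)$ at $a = 0$ along the real directions $a = t$ and $a = it$ and combining the two identities gives $\sum_{k=1,2}\del_{z_k}b_j = 0$ and $\sum_{k=1,2}\conj\del_{z_k}b_j = 0$ for each $j$. Now I would invoke Lemma~\ref{lem:coef-sym}: conjugating the symmetry $\del_{z_i}\conj b_j = \del_{z_j}\conj b_i$ gives $\conj\del_{z_i}b_j = \conj\del_{z_j}b_i$, whence
\[ \conj\del_{z_j}(b_1 + b_2) = \sum_k \conj\del_{z_j}b_k = \sum_k \conj\del_{z_k}b_j = 0; \]
and from $\del_{z_i}b_j = \conj\del_{z_j}\conj b_i$,
\[ \del_{z_j}(b_1 + b_2) = \sum_k \del_{z_j}b_k = \sum_k \conj\del_{z_k}\conj b_j = \conj{\textstyle\sum_k \del_{z_k}b_j} = 0. \]
Thus $b_1 + b_2$ has vanishing differential on $\set{(z_1, z_2) \in R \times R \st z_1 \neq z_2}$, which is connected (it is the complement of a real-codimension-two submanifold in $R \times R \cong \reals^4$), so $b_1 + b_2 \equiv c$ for some $c \in \cpx$, as claimed.

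I do not expect a serious obstacle here; the only point demanding care is making the translation invariance of the $b_j$ on $\mathbb{T}^2$ precise — that is, checking that translations of $\cpx$ descend to isometries of the flat torus and are holomorphic with trivial second derivative, so that Lemma~\ref{lem:isom-inv-coefs}'s computation applies and the global form $b$ is genuinely translation invariant. Finally, I would remark that the constant can be pinned down: by Lemma~\ref{prop:b-sing}, $b_1 + b_2 = \tilde b_1 + \tilde b_2 \to 0$ as $\dist(x_1, x_2) \to 0$, the singular parts $-2\sign_j/(\conj z_1 - \conj z_2)$ cancelling since $\sign_1 = -\sign_2$ for a vortex–antivortex pair; hence in fact $c = 0$.
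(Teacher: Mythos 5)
Your proof is correct and follows essentially the same route as the paper's: translation invariance of the $b_j$ under the isometric action of $\mathbb{T}^2$ on itself, combined with the symmetries of Lemma~\ref{lem:coef-sym}, to show that $d(b_1+b_2)$ vanishes on the connected set of non-coincident pairs in $R\times R$. Your closing remark that $c=0$ (via Lemma~\ref{prop:b-sing} and the cancellation of the singular parts) is a valid extra observation that the paper does not include in this lemma, though it is not needed for the volume computation that follows.
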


\begin{proof}
  If $\mathcal{I} : \mathbb{T}^2 \to \mathbb{T}^2$ is an isometry, the Taubes 
  equation is invariant under $\mathcal{I}$, 
  \begin{align}
    h(\mathcal{I}(x); \mathcal{I}(x_1), \mathcal{I}(x_2)) = h(x; x_1, x_2),
  \end{align}
  $x, x_1, x_2 \in \mathbb{T}^2$, $x_1 \neq x_2$. 
  By construction, there is a $v \in \cpx$ such that 
  $\mathcal{I}_\varphi = \varphi\circ \mathcal{I} \circ \varphi^{-1}(z) = z + 
  v$ for $z \in \varphi(\mathcal{I}^{-1}(U)\cap U)$. For small $v$, the 
  translation $\mathcal{I}_\varphi$ maps a  neighbourhood, 
  not necessarily connected,
  $N \subset R$ of $x_1$ and $x_2$ into $R$. 
  This implies $b_j$ has the symmetries, 
  \begin{align}
    b_j(z_1 + v, z_2 + v) = b_j(z_1, z_2),
  \end{align}
  $v$ small. Hence,
  \begin{align}
    \del_{z_1}b_j + \del_{z_2}b_j = \bar\del_{z_1}b_j + \bar\del_{z_2}b_j = 0.
  \end{align}

  Applying the symmetries of the coefficients $b_j$,
  \begin{align}
\del_{z_j}(b_1 + b_2) = \bar\del_{z_1}\bar b_j + \bar\del_{z_2}\bar b_j = 0.
\end{align}

Similarly,
  \begin{align}
    \bar\del_{z_j}(b_1 + b_2) = 0.
  \end{align}
  
  Hence $b_1 + b_2$ is constant on the connected neighbourhood $R$.
\end{proof}

\begin{proposition}
  In a flat torus $\mathbb{T}^2$, for the $(1,1)$ form $\del b$ we
  have,
  \begin{align}
    \del b \wedge \del b = 0.
  \end{align}
\end{proposition}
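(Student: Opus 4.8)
The plan is to compute $\del b\wedge\del b$ directly in a holomorphic chart and read off that its single coefficient function vanishes because of Lemma~\ref{lem:vav-comp-b1-b2-torus}. Recall that $b\in\Omega^{(0,1)}(\moduli^{1,1}(\mathbb{T}^2))$ is a globally defined smooth $(0,1)$-form, so $\del b\wedge\del b$ is a globally defined smooth $(2,2)$-form; since $\moduli^{1,1}(\mathbb{T}^2)$ has complex dimension two this is a top-degree form, and it suffices to show it vanishes on the open dense subset $(U\times U)\setminus\diag_U$ carved out by a chart $\pi|_R^{-1}:U\to R$ of the kind used in Lemma~\ref{lem:vav-comp-b1-b2-torus}. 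Vanishing on the rest of $\moduli^{1,1}(\mathbb{T}^2)$ then follows by density together with smoothness of the form.

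On such a chart I would write $\del b=\sum_{i,j}\del_{z_i}b_j\,dz^i\wedge d\bar z^j$ in the coordinates $(z_1,z_2)$ of the two cores, and expand the wedge square. Of the sixteen monomials only the four with $\{i,k\}=\{j,l\}=\{1,2\}$ are of top degree, and after collecting the signs of the reorderings one obtains
\begin{align}
\del b\wedge\del b=2\,\brk(\del_{z_1}b_1\,\del_{z_2}b_2-\del_{z_1}b_2\,\del_{z_2}b_1)\;dz^1\wedge d\bar z^1\wedge dz^2\wedge d\bar z^2 .
\end{align}

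Now Lemma~\ref{lem:vav-comp-b1-b2-torus} states that on this chart $b_1+b_2$ is a constant, hence $\del_{z_1}b_2=-\del_{z_1}b_1$ and $\del_{z_2}b_2=-\del_{z_2}b_1$. Substituting these two relations, the bracket collapses to $-\del_{z_1}b_1\,\del_{z_2}b_1+\del_{z_1}b_1\,\del_{z_2}b_1=0$, so $\del b\wedge\del b$ vanishes on the dense chart, and therefore everywhere. The only point that needs care is the sign bookkeeping in the four surviving wedge-product terms; there is no real obstacle here, the entire content residing in the already-established constancy of $b_1+b_2$ on flat tori.
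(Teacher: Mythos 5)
Your proof is correct and follows essentially the same route as the paper: work in the chart from Lemma~\ref{lem:vav-comp-b1-b2-torus} where $b_1+b_2$ is constant, compute the single top-degree coefficient of $\del b\wedge\del b$, and extend to all of $\moduli^{1,1}(\mathbb{T}^2)$ by density. The only cosmetic difference is that you expand the wedge square directly into the determinant $\del_{z_1}b_1\,\del_{z_2}b_2-\del_{z_1}b_2\,\del_{z_2}b_1$, while the paper first simplifies $b_U\wedge\del b_U$ using the constant $c$ and then applies $\del$; both hinge on exactly the same lemma and yield the same cancellation.
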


\begin{proof}
  We apply the previous lemma to prove the proposition. 
  By lemma~\ref{lem:vav-comp-b1-b2-torus}, there is an open and dense 
  set $U \subset \mathbb{T}^2$ and a chart $\varphi: U \to R\subset \cpx$, 
  $R$ an open parallelogram, such that in this local coordinates $b_1 + b_2$ is 
  a constant. Denoting points in $R$ as $z_j$, a direct calculation shows,
  \begin{align}
    b_U \wedge \del b_U
    &=
      (b_2\,\del_{z_1}b_1 - b_1\,\del_{z_1}b_2)\,
    dz_1\wedge d\bar z_1 \wedge d\bar z_2\nonumber\\
    &\quad + (-b_2\,\del_{z_2}b_1 + b_1\,\del_{z_2} b_2)\,
      d\bar z_1 \wedge dz_2 \wedge d\bar z_2\nonumber\\
    &=
      -c\,\del_{z_1}b_2\, dz_1\wedge d\bar z_1 \wedge d\bar z_2
    -c\,\del_{z_2}b_1\,d\bar z_1 \wedge dz_2 \wedge d\bar z_2.
  \end{align}

  Since $b_1$ and $b_2$ add to a constant,
  \begin{align}
    \del b_U \wedge \del b_U
    =
    -c\,(\del_{z_2}\del_{z_1} b_2 + \del_{z_1}\del_{z_2}b_1)\,
    dz_1 \wedge d\bar z_1 \wedge dz_2 \wedge d\bar z_2 = 0.
  \end{align}

  Since $U$ is dense, we conclude $\del b \wedge \del b \equiv 0$.
\end{proof}

By this proposition and lemma~\ref{lem:volume-integral}, to compute the volume 
of the moduli space, we have to integrate 
$\kform_0\wedge \del b$. 

\begin{theorem}\label{thm:vol-torus}
  For a flat torus $\mathbb{T}^2$, the volume of the moduli space is,
  \begin{align}
    \Volume(\moduli^{1,1}(\mathbb{T}^2))
    =
    4\pi^2(1 - \tau^2)\,\Volume(\mathbb{T}^2)^2 +
    16\pi^3\,\Volume(\mathbb{T}^2). 
  \end{align}
\end{theorem}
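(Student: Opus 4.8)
The plan is to combine Lemma~\ref{lem:volume-integral} (which already reduces the computation to the limit of an integral over $\mathbb{T}^2 \times \mathbb{T}^2$ minus a tubular neighbourhood of the diagonal) with the preceding proposition, which kills the $\del b \wedge \del b$ term. So the only thing that remains is to evaluate
\begin{align*}
\lim_{\epsilon\to 0}\int_{\mathbb{T}^2 \times \mathbb{T}^2 \setminus \diag_\epsilon}
\pi i\, \kform_0\wedge \del b.
\end{align*}
First I would use that $\del b = d b - \bar\del b = d b$ on the moduli space (since $b$ is a holomorphic $(0,1)$-form, $\bar\del b = 0$ as shown just before Lemma~\ref{lem:vav-comp-b1-b2-torus}), so that $\kform_0 \wedge \del b = d(\kform_0 \wedge b)$ because $\kform_0$ is closed (it is a combination of pullbacks of the area form on $\mathbb{T}^2$). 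Then Stokes' theorem on $\mathbb{T}^2 \times \mathbb{T}^2 \setminus \diag_\epsilon$ turns the bulk integral into a boundary integral over $\del(\diag_\epsilon)$, oriented as the inner boundary; the torus being closed, there is no other boundary contribution. The whole volume computation thus localises to the behaviour of $b$ near the diagonal, which is exactly the content of Lemma~\ref{prop:b-sing}: in a flat chart $b_j$ has the singular part $-2 s_j/(\bar z_1 - \bar z_2)$ plus a continuous remainder that vanishes as the cores collide.

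Next I would parametrise a slice of $\diag_\epsilon$ near a diagonal point: write $z_1 = w + \zeta$, $z_2 = w - \zeta$ with $|\zeta| = \epsilon/2$ (adjusting constants to the chosen tubular-neighbourhood convention), so $\bar z_1 - \bar z_2 = 2\bar\zeta$, and integrate the $\zeta$-circle first, then the $w$-variable over $\mathbb{T}^2$. The singular part of $b = b_1 d\bar z_1 + b_2 d\bar z_2$ contributes, after the change of variables and passing to the $w,\zeta$ coordinates, a term proportional to $\oint_{|\zeta|=\epsilon/2} \frac{d\bar\zeta}{\bar\zeta}$ times $\kform_0$ evaluated on the diagonal; the continuous remainder contributes zero in the limit because it is $O(1)$ against a shrinking circle (or tends to $0$ outright by Lemma~\ref{prop:b-sing}(2)). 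The residue-type circle integral produces a universal constant (a multiple of $2\pi i$), and $\kform_0$ restricted to the diagonal is $2\pi(1-\tau) \kform_{\mathbb{T}^2} + 2\pi(1+\tau)\kform_{\mathbb{T}^2} = 4\pi\, \kform_{\mathbb{T}^2}$, whose integral over $\mathbb{T}^2$ is $4\pi\,\Volume(\mathbb{T}^2)$. Tracking the numerical factors — the $\pi i$ prefactor, the $-2 s_j$ coefficients, the two terms $b_1 d\bar z_1$ and $b_2 d\bar z_2$ combining with opposite signs of $d\bar z_2$, the Jacobian of $(z_1,z_2)\mapsto(w,\zeta)$, and the $2\pi i$ from the circle — should assemble to $16\pi^3\,\Volume(\mathbb{T}^2)$. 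Adding this to the $4\pi^2(1-\tau^2)\Volume(\mathbb{T}^2)^2$ term from Lemma~\ref{lem:volume-integral} gives the claimed formula.

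The main obstacle I anticipate is purely bookkeeping: getting every sign, factor of $2$, factor of $\pi$, and orientation convention (the inner boundary of $\diag_\epsilon$ is oriented oppositely to how one naively sets up Stokes, and the tubular-neighbourhood "radius" $\epsilon$ must be related carefully to $|z_1 - z_2|$ in the flat metric) correct, so that the residue computation lands on $16\pi^3$ and not some rational multiple of it. A secondary point needing care is justifying the exchange of limit and integral: one must check that the remainder $\tilde b_j$ and its derivatives stay bounded uniformly near the diagonal — this follows from Lemma~\ref{prop:b-sing} together with the $C^1$ convergence $\tilde h \to \mu$ from Corollary~\ref{cor:lim-tilde-h} and Proposition~\ref{prop:sph-ddh-bound}-type estimates adapted to the torus — so that the remainder's contribution to the boundary integral is $O(\epsilon \log\epsilon) \to 0$. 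Once those estimates are in place, the limit is genuinely just the residue of the singular part, and the rest is algebra. I would also double-check consistency with Conjecture~\ref{conj:volume-conjecture} for $g = 1$, $k_+ = k_- = 1$, which should reproduce exactly $4\pi^2(1-\tau^2)\Volume(\mathbb{T}^2)^2 + 16\pi^3\Volume(\mathbb{T}^2)$, as a sanity check on the factors.
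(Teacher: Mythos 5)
Your proposal is correct and follows essentially the same route as the paper's proof: both reduce the computation to Lemma~\ref{lem:volume-integral} and the vanishing of $\del b\wedge\del b$, then localise the remaining integral of $\kform_0\wedge\del b$ to a residue at the diagonal using the singular structure of $b$ from Lemma~\ref{prop:b-sing}. The only difference is organisational — the paper applies Fubini first and then Stokes/Cauchy on each two-dimensional slice $\mathbb{T}^2\setminus\disk_\epsilon(x_j)$, whereas you apply Stokes globally on $\mathbb{T}^2\times\mathbb{T}^2\setminus\diag_\epsilon$ and then integrate the circle fibre of $\del\diag_\epsilon$ — but this yields the same residue computation and the same constant $16\pi^3\,\mathrm{Vol}(\mathbb{T}^2)$.
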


Notice that the first term of the formula is 
similar to the case of the 
sphere~\eqref{eq:volume-conjecture-sphere-formula}, however, the second 
term is new, bearing in mind 
the volume conjecture,~\ref{conj:volume-conjecture}, one can argue the 
extra term is related to the genus of the base surface, however, it  
is not clear how to relate our computation to this fact and 
the relation is open to 
future work.

\begin{proof}
    Let,
    \begin{align}
    \mathbb{T}^2({\epsilon})
    &= (\mathbb{T}^2 \times \mathbb{T}^2) \setminus \diag_{\epsilon},\\
    \kform_j &= \proj^{*}_j\kform_{\mathbb{T}^2}, \qquad j = 1, 2,
    \end{align}
  and let $k$ be the complementary index of $j$, such that
  $\set{j, k} = \set{1, 2}$. By Fubini's theorem,
  \begin{align}
  \int_{{\mathbb{T}^2}(\epsilon)}\kform_0\wedge \del b
  &=
    2\pi \sum_j (1 - \sign_j\tau)\int_{\mathbb{T}^2}\pbrk{
    \int_{\mathbb{T}^2 \setminus
    \disk_{\epsilon}(x_j)} \inc_k^{*}\del b}\, \kform_{\mathbb{T}^2},
\end{align}
where for any given $x_j \in \mathbb{T}^2$, $\inc_k: \mathbb{T}^2
 \hookrightarrow
\mathbb{T}^2\times\mathbb{T}^2$ is the 
inclusion of the torus as the k-th factor of the product anchored at $x_j$. 
Since
$b$ is well defined globally,
\begin{align}
  \int_{\mathbb{T}^2\setminus\disk_{\epsilon}(x_j)} \inc_k^{*}\del b
  =
  \int_{\del \mathbb{T}^2\setminus\disk_{\epsilon}(x_j)} \inc_k^{*}b
  =
  -\int_{\del \disk_{\epsilon}(x_j)} \inc_k^{*}b,
\end{align}
where we always orient a submanifold by the outward pointing
normal. Let $\varphi: U \to \cpx$ be a holomorphic chart defined on an open 
and dense set $U$. If $x_j \in U$, for small $\epsilon$, $\disk_\epsilon(x_j) 
\subset U$. Assume $j = 1$, $k = 2$, in the chart,
\begin{align}
(\varphi^{-1})^*\inc_2^{*}b = b_2 d\bar z.
\end{align}

 If $z_1 = \varphi(x_1)$ and $D(z_1) \subset \cpx$ is a 
bounded domain and neighbourhood of $z_1$, by lemma~\ref{prop:b-sing}, 
\begin{align}
b_2 = \frac{2}{\conj z_1 - \conj z_2} + \tilde b_2(z_1, z_2),\qquad 
z_2 \in D(z_1).
\end{align}

If $D_\epsilon(z_j) = \varphi(\disk_\epsilon(x_j))$, by Cauchy's residue 
theorem,
\begin{align}
\int_{\del \disk_{\epsilon}(x_1)} \inc_2^{*}b = 
-2\int_{\del D_\epsilon(z_1)} \frac{d\conj z}{\conj z - \conj z_1} 
+ \int_{\del D_\epsilon(z_1)}\tilde b_2(z_1,z)d\conj z
= 4\pi i + \int_{\del D_\epsilon(z_1)}\tilde b_2(z_1,z)d\conj z.
\end{align}

If $j = 2$, $k = 1$, we find a similar result,
\begin{align}
\int_{\del \disk_{\epsilon}(x_2)} \inc_1^{*}b = 
4\pi i + \int_{\del D_\epsilon(z_2)} \tilde b_1(z, z_2) d\conj z.
\end{align}

Since $\tilde b_k$ is a continuous function in a neighbourhood of each 
$z_j \in \cpx$, 
\begin{align}
\lim_{\epsilon \to 0} \int_{\del D_\epsilon(z_j)} \tilde b_k\,d\conj z = 0.
\end{align}

Hence, since $U$ is dense in $\mathbb{T}^2$, 
\begin{align}
lim_{\epsilon\to 0}
\int_{\mathbb{T}^2}\pbrk{
    \int_{\mathbb{T}^2 \setminus
        \disk_{\epsilon}(x_j)} \inc_k^{*}\del b} \kform_{\mathbb{T}^2}
    &= -4\pi i \Volume(\mathbb{T}^2)\nonumber\\
    &\quad - \frac{i}{2} \int_{\cpx}
    lim_{\epsilon\to 0} \pbrk{\int_{\del D_{\epsilon}(z_j)} \tilde b_k}
    e^{\Lambda(z_j)} dz_j\wedge d\conj z_j\nonumber\\
    &= -4\pi i \Volume(\mathbb{T}^2).
\end{align}

Finally,
\begin{align}
  \int_{\moduli^{1,1}(\mathbb{T}^2)}\kform_0\wedge\del b
  &= 2\pi\sum_j(1 - \sign_j\tau)
lim_{\epsilon\to 0}
\int_{\mathbb{T}^2}\pbrk{
  \int_{\mathbb{T}^2 \setminus
    \disk_{\epsilon}(x_j)} \inc_k^{*}\del b} \kform_{\mathbb{T}^2}.\nonumber\\
&= 2\pi\sum_j(1 - \sign_j\tau)
\left(-4\pi i \Volume(\mathbb{T}^2) \right)\nonumber\\
&= -16\pi^2\,i\,\Volume(\mathbb{T}^2).
\end{align}

By lemma~\ref{lem:volume-integral}, we conclude the volume formula.
\end{proof}

\let \sign    \undefined
\let \Wsp     \undefined
\let \wto     \undefined
\let \diag    \undefined
\let \vform   \undefined
\let \Volume  \undefined
\let \Wsp     \undefined
\let \Bop     \undefined
\let \fnF     \undefined
\let \diff    \undefined
\let \fnV     \undefined
\let \fnv     \undefined
\let \gp      \undefined
\let \Jop     \undefined
\let \fbundle \undefined
\let \pbundle \undefined
\let \Diff    \undefined
\let \hf      \undefined
\let \potentialE \undefined
\let \Fstable    \undefined
\let \vq         \undefined
\let \vp         \undefined
\let \cf         \undefined
\let \vol        \undefined
\let \Hsp        \undefined
\let \htilde     \undefined
\let \ctilde     \undefined
\let \pconnection \undefined
\let \Energy      \undefined
\let \Xsp         \undefined
\let \suchthat    \undefined
\let \dist        \undefined
\let \moduliMetric \undefined
\let \domain       \undefined
\let \isometry     \undefined
\let \pSpace       \undefined
\let \proj         \undefined
\let \kform        \undefined
\let \inc          \undefined

%preview-LaTeX-command

%(custom-set-faces 
%   '(preview-reference-face ((t (:background "#ebdbb2" :foreground
%  "black")))))

% (local-set-key (kbd "<C-tab>") 'LaTeX-indent-line)
% (local-set-key (kbd "C-c %") 'comment-region)
% (TeX-add-symbols '("eqref" TeX-arg-ref (ignore)))

% (setq reftex-label-alist '((nil ?e nil "~\\eqref{%s}" nil nil)))

% (TeX-run-style-hooks "amsmath")

%%% Local Variables:
%%% TeX-master: "../../geometric-models.tex"
%%% End:

\label{ch:vav-compact}

\chapter{Chern-Simons deformations of vortices}
\label{ch:cs-moduli}

% \newcommand*\vb[1]{\mathbf{ #1 }}

% \DeclareDocumentCommand\brk{d()}{
%   \left( #1 \right)
% }

% \newcommand*\surface{\Sigma}%
% \newcommand*\potentialE{V}%
\newcommand*\nf{N}
\newcommand*\np{n}
\newcommand*\hf{\phi}%
\newcommand*\gp{a}%
\newcommand*\Diff{\mathcal{D}}%
% \newcommand*\conj[1]{\overline{#1}}%
% \newcommand*\half{\frac{1}{2}}%
% \newcommand*\pbundle{P}%
% \newcommand*\fbundle{F}%
% \newcommand*\north{n}%
% \newcommand*\pconnection{A}%
% \newcommand*\pcurvature{F}%
% \newcommand*\isometry{\mathcal{I}}
% %\newcommand*\abs[1]{\lvert{#1}\rvert}%

% \NewDocumentCommand\abs{sm}{
%   \IfBooleanTF#1{
%     \left\lvert{#2}\right\rvert
%   }{
%     \lvert{#2}\rvert
%   }
% }

% \newcommand*\laplacian{\Delta}%
% \newcommand*\vset{\mathcal{P}}%
% \newcommand*\avset{\mathcal{Q}}%
% \newcommand*\moduli{\mathcal{M}}%
\newcommand*\modulikk{{\moduli^{k_+,k_-}}}
% \newcommand*\eval[2]{\left.{#1}\right\rvert _{#2}}%
% \newcommand*\cf{\Omega}
% \newcommand*\reals{\mathbb{R}}%
% \newcommand*\cpx{\mathbb{C}}%
% \newcommand*\pSpace{\mathbb{P}}%
% \newcommand*\Jop{\mathrm{J}}%
% \newcommand*\Vol{\mathrm{Vol}}%
% \newcommand*\Energy{\mathrm{E}}
% \newcommand*\order{\mathcal{O}}%
% \newcommand*\ldv{\mathcal{L}}%
% \newcommand*{\moduliMetric}{g}
% \newcommand*{\dv}[1]{\frac{d}{d{#1}}}
% \newcommand*\Lp[2][2]{\mathrm{L}^{#1}(#2)}

% \DeclareDocumentCommand{\pdv}{m g}
% {
%   \IfValueTF{#2}
%   {
%     \frac{\del #1}{\del #2}
%   }
%   {
%     \frac{\del }{\del #1}	
%   }
% }

% \newcommand*\dd{d}
% \newcommand*\length{\ell}
% \newcommand*{\disk}{\mathbb{D}}
% \newcommand*{\htilde}{\tilde{h}}
% \newcommand*{\hstar}{h^*}
% \newcommand*{\ctilde}{\tilde{c}}
% \newcommand*\cstar{c^*}
% \newcommand*{\hhat}{\hat{h}}
% \newcommand*{\plane}{\R^2}
% \newcommand*{\Fstable}{\mathcal{F}}
% \newcommand*{\feps}{f_\epsilon}
% \newcommand*\Csp[2][0]{\mathrm{C}^{#1}(#2)}
\newcommand*{\Hsp}{\mathrm{H}}
\newcommand*\Wsp{\mathrm{W}}
\newcommand*\Top{\mathrm{T}}
\newcommand*\Lop{\mathrm{L}}
% \newcommand\dist{\mathrm{dist}}

% \NewDocumentCommand\norm{sm}{
%   \IfBooleanTF#1{
%     \left\lvert\left\lvert#2\right\rvert\right\rvert
%   }{
%     \lvert\lvert#2\rvert\rvert
%   }
% }

% \NewDocumentCommand\lproduct{sm}{
%   \IfBooleanTF{#1}
%   {\left\langle #2 \right\rangle}
%   {\langle #2 \rangle}
% }

% \newcommand*{\Hnorm}[3][1]{\norm{#2}_{\Hsp[#1]{#3}}}
% \newcommand*{\Lnorm}[3][2]{\norm{#2}_{\Lsp[#1]{#3}}}
% \newcommand*{\Lprod}[2][\sphere]{\langle #2 \rangle_{#1}}
% \newcommand*{\set}[1]{\left\{#1\right\}}
% \newcommand*{\vd}[1]{
%   \IfEqCase{#1}{
%     {1}{v'}
%     {2}{v''}
%   }[\PackageError{V}{Only 1 or 2 allowed as options, got #1}{}]
% }
% \newcommand*\uniformlyto\rightrightarrows
% \newcommand*\htildeEps{\htilde^\epsilon}
% \newcommand*\ctildeEps{\ctilde^\epsilon}
% \newcommand*\vdeps[1]{{\vd{#1}}^\epsilon}
% \newcommand*\grad{\nabla}
% \newcommand*\delEps{\del_\epsilon}
% \newcommand*\proj{\mathbb{P}}
% \newcommand*\ueps{u}
% \newcommand*\afn{a}
% \newcommand*\bfn{b}
% \newcommand*\manifold{M}
% \newcommand*\metric[1]{\langle{#1}\rangle}
% \newcommand*\aForm{\mathcal{A}}
% \newcommand*\fFunc{B}
% \newcommand*\hSp{\mathcal{H}}
% \newcommand*\uBound{K_1}
% \newcommand*\cBound{K_2}
\newcommand*\vol{\mathrm{Vol}}
% \newcommand*\suchthat{\;\middle\vert\;}
% \newcommand*\intManifold{\int_{\manifold}}
% \newcommand*\vortexSet{\mathcal{P}}
% \newcommand*\avortexSet{\mathcal{Q}}
% \newcommand*\vavSet{\mathcal{S}}
% \newcommand*\Dfrechet{\mathcal{D}}
% \newcommand*\brackets{}
% \newcommand*\domain{\mathscr{D}}
% \newcommand\SobSp{W}

%\section{Introduction}

%\label{sec:cs-intro}
%\label{ch:cs-loc}

In this chapter we consider Chern-Simons deformations of vortices of the $O(3)$ 
Sigma model and of the Abelian Higgs model. We will consider deformations 
relying on 
a deformation constant $\kappa$. There are several results in the literature 
 about existence of solutions to the field equations, for both types of models. 

In section~\ref{sec:cs-intro} we address existence and uniqueness of solutions 
to the field equations for  
 deformations of the $O(3)$ Sigma model. 
  On the 
 plane,  
Han and Nam prove in \cite{han2005topological} that the field equations 
admit a solution up to some upper and lower bound for $\kappa$. If
there are only vortices or antivortices, 
Han and Song prove in~\cite{han2011existence} 
existence of solutions for any $\kappa$. On a flat 
torus, Chae-Nam~\cite{chae2001condensate}  and Chiacchio-Ricciardi 
prove~\cite{chiacchio2007multiple}
the existence of a bound on the constant for the existence of
solutions as well as the existence of multiple solutions if the number of 
vortices and antivortices on the surface is different. We extend the technique 
used by Flood and Speight 
in~\cite{flood2018chern} for 
Chern-Simons deformations of  the Abelian Higgs model to show 
the existence of a minimal
deformation constant, independent of the position of the vortices, if the 
surface is compact. We know from chapter~\ref{c:vav-compact} that for 
$\kappa=0$, the moduli space is incomplete, 
imposing some technical difficulties in the techniques 
used for deformations of the Abelian Higgs model. 
%Following~\cite{flood2018chern}, we show smooth dependence on small $\kappa$ 
%for a family
%of solutions close to the undeformed solution at $\kappa = 0$, and using 
%degree theory, introduced in chapter~\ref{ch:pre}, 
%in subsection~\ref{sec:exist-mult-sols} we show 
%the existence of multiple solutions to the field equations provided the total 
%number of
%vortices and antivortices differ. 
In subsection~\ref{sec:small-deform}, we show that on a compact surface, 
small deformations of the solution to the Taubes equation vary 
smoothly with $\kappa$. In subsection~\ref{sec:proof-cnt-bound}, we 
show the existence of a positive lower bound for $|\kappa|$, independent of 
the 
position of the cores. In subsection~\ref{sec:unbalanced-case} we 
focus on the unbalanced case where the number of vortices and antivortices 
differ and show that the possible constants $\kappa$ for which 
the field equations admit a solution are bounded. By means of several bounds in 
the norm of solutions to the 
governing elliptic problem, we show the existence of multiple 
solutions in subsection~\ref{sec:exist-mult-sols}. We finalize in 
subsection~\ref{sec:symm-deform-sphere} with numerical evidence on the sphere 
supporting a conjecture about the existence of solutions to the field equations 
for any $\kappa$ if the number of vortices and antivortices coincide.

In section~\ref{ch:cs-loc} we study low energy dynamics of both 
the Abelian Higgs 
and the $O(3)$ Sigma model vortices with a Chern-Simons deformation. 
The results 
discussed in section~\ref{sec:cs-intro} guarantee this is a well posed problem 
for small $\kappa$. Previous work in this direction includes the models by 
Kim-Lee~\cite{kim_vortex_1994} and by Collie-Tong~\cite{collie_dynamics_2008} 
for deformations of Abelian vortices. 
From the work of Alqahtani-Speight~\cite{alqahtani2015ricci} we know 
the model of Kim-Lee cannot extend to the coincidence set. We show 
our formula can be extended and compare it with 
the model of Collie-Tong for deformations of Abelian vortices, showing that 
our computation leads to different dynamics for pairs of vortices on the plane. 

In subsection~\ref{subsec:mhcs-model} we introduce the 
Maxwell-Higgs-Chern-Simons model, as we will see, the introduction of 
a Chern-Simons term in the field equations induces a connection 
term affecting the dynamics in moduli space. In 
subsections~\ref{sec:cs-localisation}
 and \ref{sec:loc-cs} we find the localization 
 formula~\eqref{eq:loc-conn-form} for this term and compare it 
 with the Collie-Tong connection. In subsection~\ref{subsec:cs-loc-o3-sigma} 
 we show how to extend our arguments to include the $O(3)$ Sigma model. 
 Finally, in subsection~\ref{sec:ellipt-reg}, we show that the 
 connection term can be extended to coalescence points, in the 
 case of the $O(3)$ Sigma model, provided the cores are of the 
 same type.

\section{Chern-Simons deformations of the O(3) Sigma model}
\label{sec:cs-intro}

Recall the construction of the $O(3)$ Lagrangian~\eqref{eq:o3-lag} in
section~\ref{sec:field-th}. To the pair $(\hf, A)$ 
of a field and a connection, we 
add an additional neutral field $N\in C^{\infty}(\surface)$ and
to avoid a name collision, in this chapter we denote the north pole section as 
$\np$, so that we modify the $O(3)$ Lagrangian as,
\begin{multline}
  L_{O(3),CS} = \half\left(\norm{D_t\hf}^2  +
  \norm{e}^2 + \norm{\dot N}^2 - \left(\norm{D \hf}^2 + \norm{B}^2  +
    \norm{dN}^2 
  \right.\right.\\
\left.\left. + \norm{\kappa \nf + \tau - \hf_3}^2
+ \norm{\nf X_{\hf}}^2 \right)\right),
\end{multline}
where $\hf_3$ is the gauge invariant product $\lproduct{\np, \hf}$ and  
$X_\hf$ is defined in a local trivialization $\hf_{\alpha}: U_\alpha \to 
\sphere$ as the section such that,
\begin{align}
X_{\hf_\alpha} = \vb e_3 \times \hf_\alpha.
\end{align}
%that was defined in equation~\eqref{eq:cov-der-line-bundle}

 We add a Chern-Simons term to the Lagrangian, 
\begin{equation}
  L_{CS} = \half \brk(\lproduct{a, *e} + \lproduct{a_0, *B}).
\end{equation}

This Chern-Simons term is not gauge invariant, however, any two gauge
related terms differ by a divergence. The product is the
$\Lsp^2$ product induced in the exterior algebra by 
the metric in $\surface$. With this notation, the Kim-Lee-Lee
Lagrangian \cite{kimm1996anyonic} is,
\begin{equation}\label{eq:o3-sigma-cs-lagrangian}
  L = L_{O(3),CS} + \kappa L_{CS}.
\end{equation}
For the Abelian Higgs model, there are two ways to introduce a Chern-Simons 
term in the theory, one is due to Jackiew-Lee-Weinberg~\cite{jackiw1990} 
and the other to Lee-Lee-Min~\cite{lee_self-dual_1990}. In the first case,  
the connection term is replaced by a Chern-Simons term and the 
potential term is replaced by a sextic potential that admits 
a set of Bogomolny equations. It is known that several difficulties arise 
to study solutions to this model~\cite{flood2018chern}. For the second 
model, the extension to Chern-Simons 
deformations of the $O(3)$ Sigma model given 
in~\eqref{eq:o3-sigma-cs-lagrangian} is well established, yet the 
model has only being explored for compact tori due to the difficulties 
that non-compactness of the moduli space impose. We address those 
difficulties in this section, let us start considering variations with respect 
to $a_0$ yielding the Gauss law,
\begin{equation}
  d^{*} e = -\lproduct{D_t \hf, X_{\hf}}  + \kappa *B,
\end{equation}
where $d^{*} = -*d*$ is the codifferential. 
%%%%%%%%%%%%%%%%%%%%%%%%%%%%%%%%%%%%%%%%%%%%%%%%%%%%
Instead of computing the field 
equations by the variational method, we note that these equations admit the use 
of the Bogomolny trick. Assume Gauss's law holds,  
the total energy of a tuple $(\hf, A, \nf)$ is,
\begin{align}
\mathrm{E} = \frac{1}{2}\, \left(\norm{D_t\phi}^2 + 
\norm{e}^2 + \norm{\dot{N}}^2 + 
\norm{D\phi}^2 + \norm{B}^2 + \norm{d N}^2
\right.\nonumber\\
\left.
+ \norm{\kappa N + \tau - \phi_3}^2
+ \norm{N X_\phi}^2
\right).
\end{align} 

Let us define the quadratic form,
\begin{align}
\mathrm{Q} &=\frac{1}{2}\,\left(\norm{D_t\phi - N\,X_\phi}^2 + 
\norm{e - d N}^2 + \norm{\dot{N}}^2 
+
\norm{*B + \kappa N + \tau - \phi_3}^2
\right)  + \norm{\bar \partial_A\phi}^2,
\end{align}
where $\bar\partial_A\hf$ is the $(0,1)$ component of $D\hf$ with respect to 
the 
almost complex structures of $\surface$ and the target $\sphere$. 
We can simplify $\mathrm{Q}$ as follows,
\begin{align}
\mathrm{Q} &=\mathrm{E} - \lproduct{D_t\phi, N X_\phi} - \lproduct{e, 
    d N} + \lproduct{*B, \kappa N + \tau - \phi_3} 
+ \norm{\bar \partial_A\phi}^2 - \frac{1}{2}\norm{D\phi}^2\nonumber\\
&= \mathrm{E} - \lproduct{\,\lproduct{D_t\phi,X_\phi} + d^*e - \kappa *B, 
     N} + \lproduct{*B, \tau - \phi_3}
+ \norm{\bar \partial_A\phi}^2 - \frac{1}{2}\norm{D\phi}^2\nonumber\\
&= \mathrm{E} + \lproduct{*B, \tau - \phi_3}
+ \norm{\bar \partial_A\phi}^2 - \frac{1}{2}\norm{D\phi}^2,
\end{align}
where we have used Gauss's law in the second equation. 

Since,
\begin{align}
D\phi &= \partial_A\phi + \bar\partial_A\phi,
\end{align}
we deduce,
\begin{align}
\mathrm{Q} &= \mathrm{E} + \lproduct{*B, \tau - \phi_3}
+ \frac{1}{2}\norm{\bar \partial_A\phi}^2 - \frac{1}{2}\norm{\partial_A\phi}^2.
\end{align}

Consider a trivialization 
$ \psi : \pi^{-1}(U) \to U \times \sphere $, such that 
$\psi\circ\phi(x) = (x, \tilde 
\phi(x))$, where $ \tilde \phi: U \to \sphere $ and $ U $ is an open, simply 
connected, dense set such that $Z = \hf_3^{-1}(\pm 1) \subset U$.  
Assume in this chart the connection is represented 
by a form $ a \in \Omega^1(U) $, 
let $ \theta: \sphere \setminus \left\{ \pm \np \right\}  \to \reals $ be the 
azimuthal angle on the sphere, Rom\~ao and Speight show in~\cite{romao2018} 
that there is a 
well defined, gauge invariant form $ \xi \in \Omega^1(\surface\setminus Z)$, 
such that on $U$, 
\begin{align}
\xi = \phi_3 \cdot (a - \tilde \phi^* d\theta).
\end{align}

A short computation in local coordinates shows
\begin{align}
\frac{1}{2}\,*(|\partial_A\phi|^2 - |\bar \partial_A\phi|^2) =
d\xi - \tau\,B + (\tau - \phi_3)\,B,
\end{align}
hence, 
\begin{align}
\mathrm{Q} &= \mathrm{E} - \int_\surface (d\xi - \tau B)\nonumber\\
&= \mathrm{E} - (2\pi\,(1-\tau)k_+ - 2\pi\,(1+\tau)k_-),
\end{align}
where the last integral was also computed in~\cite{romao2018}. Therefore, as 
for the $O(3)$ Sigma model without deformation, 
\begin{align}
\mathrm{E} \geq 2\pi\,(1-\tau)k_+ - 2\pi\,(1+\tau)k_-,
\end{align}
with equality if and only the following Bogomolny  
equations are satisfied,
\begin{align}
  \dot \nf &= 0, %\label{eq:nf}
  \\
  e &= d \nf, %\label{eq:e}
  \\
  \Diff_t\hf &= \nf X_{\hf}, %\label{eq:X}
  \\
\conj{\del}_A\hf &= 0,\label{eq:hf}\\
*B &= -(\kappa \nf + \tau - \lproduct{n, \hf}).%\label{eq:B}
\end{align}

%where $\conj{\del}_A$ is the anti-holomorphic part of $\Diff$, the
%covariant derivative on the line bundle. 
%To avoid name collisions with the 
%neutral field that will be used thoroughly, in this chapter we denote the 
%north 
%pole section as $\np$. 
%Let $\hf_3 = \lproduct{\np,
%  \hf} \in C^{\infty}(\surface)$ be the north pole projection of a
%section, 
Since equation
\eqref{eq:hf} holds, by the 
result~\cite[~p.~8]{sibner2010} of Sibner et al., $Z$ is finite, 
moreover, if we
consider a holomorphic chart $\varphi:U \subset
\surface \to  V \subset \cpx$ about $c \in Z$, and a trivialisation 
$\psi: \pi^{-1}(U) \to U \times \sphere$, such that 
$\psi\circ \hf|_U = (\mathrm 
{id}, \tilde \hf)$, 
then the degree of the map $\tilde\hf\circ \varphi^{-1} : V \to \widehat{\cpx}$ 
at $\varphi(c)$ is independent of the holomorphic coordinates chosen, as for 
the $O(3)$ Sigma 
model. We call this the degree of the
section $\hf$ at $c$. 
As in the $O(3)$ Sigma model, we define the sets $\vset = 
\hf_3^{-1}(1)$, $\avset =
\hf_3^{-1}(-1)$ of vortices and antivortices and  
denote by $k_+ =
\abs\vset$, $k_- = 
\abs\avset$ the size of the core sets, counted with multiplicity, where
a core $c$ is repeated as many times as the degree of $\hf(c)$. If
we choose the gauge $a_0 = -\nf$, the fields become stationary: 
$\dot\hf = \dot a = 0$. Defining the function $u:\surface\setminus
\vset \cup \avset \to \reals$,
\begin{equation}
  u = \log \brk(\frac{1 - \hf_3}{1 + \hf_3}),
\end{equation}
from the Bogomolny equations and the Gauss's law, we find that $(u, N)$ is a 
solution to the elliptic problem, 
\begin{equation}
  \begin{aligned}
-\laplacian u &= 2 \brk(
 \kappa N + \tau + \frac{e^u - 1}{e^u + 1}
) + 4\pi \sum_{p\in\vset} \delta_p - 4\pi\sum_{q\in\avset}\delta_q,\\
-\laplacian N &= \kappa \brk(
\kappa N + \tau + \frac{e^u - 1}{e^u + 1}
) + \frac{4 e^u}{(e^u + 1)^2}N,
\end{aligned}
\label{eq:ell-cs}
\end{equation}
%
%where $\laplacian = \delta d + d\delta$ is Laplace-De Rahm's operator,
%which in local normal coordinates at $x \in \surface$ is $-\del_1^2 -
%\del_2^2$. 

As a consequence of the second equation, if $\kappa =
0$, $N \equiv 0$ and the equation for $u$ reduces to the elliptic
problem of the abelian $O(3)$ Sigma model, in which case we 
know that there exists exactly one solution, provided Bradlow's bound
holds. Given a set of disjoint divisors $D = (P, Q)$, we define
\begin{align}
  \kappa_{*}(D) = \sup \set{\kappa > 0 \,\mid 
  \text{there exists a solution of the field equations}}. 
\end{align}

This is a non-negative number we aim to prove that satisfies the
inequality,
\begin{align}
  \inf\,\set{\kappa_{*}(D)\,\mid {D\in\moduli^{k_+,k_-}}} > 0.
\end{align}

Moreover, if $\abs{k_+ - k_-} > 0$ and Bradlow's bound is fulfilled,
it will turn out that the supremum
\begin{align}
  \sup\,\set{\kappa_{*}(D)\,\mid {D\in\moduli^{k_+,k_-}}},
\end{align}
is bounded and for small but positive $\kappa$, there are two
solutions to the field equations, one close to the solution $(u_0, 0)$
of $BPS$ solitons and another with arbitrarily large norm, in a sense
to be defined on the following sections. Similar statements hold for
negative $\kappa$.

%%% Local Variables:
%%% mode: latex
%%% TeX-master: "../../../../geometric-models.tex"
%%% End:

\subsection{Small deformations of $\kappa$}
\label{sec:small-deform}

In this section we prove that small deformations of the solution 
$h_0$ to the Taubes equation vary smoothly with $\kappa$. In order to do 
this, we will define a suitable operator and use it together 
with the implicit function 
theorem. 
Recall for any holomorphic chart $\varphi: U\subset \surface \to \cpx$ and 
bounded domain $D \subset \cpx$, there is a smooth function 
$\tilde G: \varphi^{-1}(D) \to \reals$ such that if $x, y \in \varphi^{-1}(D)$,
\begin{equation}
  G(x, y) = \frac{1}{2\pi}\log\,|\varphi(x) - \varphi(y)| + \tilde G(x, y),
\end{equation}

Hence, for the functions,
\begin{align}
  v_{\vb c} = 4\pi\,\sum_i\,G(x, c_i), \qquad \vb{c} = (c_1,\ldots, c_{k_\pm}),
\end{align}
$e^{v_{\vb c}}$ varies smoothly with $\vb c$.  We denote by
$\Delta_{k_+, k_-}$ the $(k_+, k_-)$ diagonal of 
$\surface^{k_+}\times \surface^{k_-}$. The space of solutions of
the elliptic problem at $\kappa = 0$ is the moduli space of asymmetric
vortices and antivortices described on Chapter~\ref{ch:pre}. We define
the function 
\begin{align}
  v :  \modulikk \to C^{\infty}(\surface, \conj\reals), && 
  v(\vb p, \vb q) = v_{\vb p} - v_{\vb q},
\end{align}
where $C^{\infty}(\surface,\conj\reals)$ means the set of smooth
functions, except at a finite set of points at which we have divergences to 
$\pm \infty$. If $F:\reals
\to \reals$ is the function 
\begin{equation}
  F(t) = \frac{e^t - 1}{e^t + 1} + \tau,
\end{equation}
then solving equation \eqref{eq:ell-cs} is equivalent to
finding a pair of functions $(h, N)$ such
that,
\begin{align}
  \laplacian h + 2 (\kappa N + F(v + h)) + \frac{4\pi}{\abs \surface}(k_+ -
k_-) &= 0,
\label{eq:h}\\
  \laplacian N + \kappa \brk(\kappa N + F(v + h)) + 2 F'(v + h)\,N &= 0.
\label{eq:N}
\end{align}

% \subsection{Variation of the solutions with the deformation
%   parameter}\label{sec:proof-small-def}

We introduce the potential function $V(t) = 2F'(t)$ such that, if
$(\vb p, \vb q) \in \modulikk$, $h \in \Hsp^r$ and $v = v(\vb p, \vb
q)$, then,
\begin{align}
V(v + h) = \frac{4 e^{v_{\vb p} + {v_{\vb q}} + h}}{(e^{v_{\vb p} + h} + 
e^{v_{\vb q}})^2}.
\end{align}
As shown in the proof of Theorem~\ref{thm:h-reg-param-dep}, by 
equation~\eqref{eq:green-logd-smoothpart} the functions $e^{v_{\vb p}}$, 
$e^{v_{\vb q}}$ are smooth and vary smoothly
with $(\vb p, \vb q)$. We observe that 
the spaces $\Hsp^{r}$, $r \geq 2$, are algebras, a prove can be found 
in \cite{flood2018chern} where Flood and Speight use this result to 
prove that $e^{h}$ is a smooth function
$\Hsp^r \to \Hsp^r$, the claim follows because $e^h$ is the limit of 
the absolutely converging power series $\sum_{n=0}^\infty h^n/n!$. 
As a consequence,  $V(v + h) 
\in \Hsp^r$. Likewise, $F(v + h) \in \Hsp^r$ if $h \in \Hsp^r$. For any
given pair of disjoint sets $\vset$, $\avset$, let us define the operator,
\begin{equation}
 \Top: \reals \times \Hsp^r \times \Hsp^r \to \Hsp^{r-2} \times
  \Hsp^{r-2},  
\end{equation}

\begin{align}
   \Top (\kappa, h, N) &= \left(
  \laplacian h + 2\brk(\kappa N + F(v + h)) + \frac{4\pi}{\abs
    \surface}(k_+ - k_-),\right.\\
&\quad\left.
  \laplacian N + \kappa\brk(\kappa N + F(v + h)) + V(v + h) N \right)
\end{align}

$\Top$ is a smooth mapping between Hilbert spaces. For any given $h \in \Hsp^r$,
$r\geq 2$, we define the operator
\begin{align}
 \Lop: \Hsp^r \to \Hsp^{r-2}, && \Lop = \laplacian + V(v + h).
\end{align}
The derivative of the restriction
\begin{align}
  \Top|:\Hsp^r\times \Hsp^r \to \Hsp^{r - 2} \times \Hsp^{r-2}, &&
  \Top|(h, N) = \Top(0, h, N),
\end{align}
at a point $(h, 0)$ is $d\Top|_{(h,0)} = \Lop \oplus \Lop$.

\begin{lemma}
  For any set of core points $(\vb p, \vb q)$ in the moduli space, the operator
  $\Lop$ is a Hilbert space isomorphism $\Hsp^r\to \Hsp^{r - 2}$.
\end{lemma}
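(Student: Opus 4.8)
The plan is to recognize that $\Lop = \laplacian + V(v+h)$ is exactly the Schrödinger-type operator handled by Lemma~\ref{lem:h2-l2-iso}, so the entire proof reduces to checking that the potential $V(v+h)$ satisfies the hypotheses of that lemma. Recall that $V(v+h) = 2F'(v+h) = \frac{4e^{v+h}}{(e^{v+h}+1)^2}$, which in the factored form is $\frac{4\,e^{v_{\vb p}}e^{v_{\vb q}}e^h}{(e^{v_{\vb p}+h}+e^{v_{\vb q}})^2}$. First I would verify that this is a non-negative smooth function on $\surface$ whose zero set is exactly $\vset\cup\avset$: the numerator vanishes precisely where $e^{v_{\vb p}}$ or $e^{v_{\vb q}}$ vanishes (i.e.\ at the cores), and since $\vb p$, $\vb q$ are disjoint, $e^{v_{\vb p}}$ and $e^{v_{\vb q}}$ have no common zeros, so the denominator never vanishes and the quotient is smooth everywhere on $\surface$, with only finitely many zeros. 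On the plane one must also check $\lim_{|x|\to\infty}V(v+h)(x)\in(0,1]$; since $h\to\mu$ and $v\to 0$ at infinity (using the decay estimates for $h$ and for $e^{v_c}$ recalled in the proof of Theorem~\ref{thm:h-reg-param-dep}), this limit is $V(\mu)=2F'(\mu)=1-\tau^2\in(0,1]$. Finally, all derivatives $\grad^k V(v+h)$ are bounded: on a compact surface this is automatic by smoothness and compactness, and on the plane it follows from the fact that $e^{v_c}$ and all its derivatives are bounded (shown in the proof of Theorem~\ref{thm:h-reg-param-dep}) together with the exponential decay of $h$ and its derivatives.

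Having established these facts, I would simply invoke Lemma~\ref{lem:h2-l2-iso} with potential $V=V(v+h)$ and $r$ replaced by $r-2$, which gives directly that $\laplacian + V(v+h): \Hsp^{r}(\surface)\to \Hsp^{r-2}(\surface)$ is a Hilbert space isomorphism for every $r\geq 2$. (Note: in the excerpt the Sobolev spaces are written $\spH^k$; in this chapter the same spaces carry the macro $\Hsp^k$, so the indices match with $r+2 \mapsto r$.) The one small bookkeeping point is that Lemma~\ref{lem:h2-l2-iso} is stated for $h$ the solution of the Taubes equation, but its proof only uses the listed structural properties of the potential, so it applies verbatim once those properties are checked.

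I do not expect any serious obstacle here; the statement is essentially a corollary of Lemma~\ref{lem:h2-l2-iso}, and the only genuine content is the verification that the three conditions on the potential hold, especially the no-common-zeros observation (which is where disjointness of $\vset$ and $\avset$ enters) and, on the non-compact case, the boundedness of all derivatives of $e^{v_{\vb p}}$, $e^{v_{\vb q}}$. The mild care needed is that $h$ here is a general element of $\Hsp^r$ rather than the Taubes solution; but in the intended application $h$ will be close to the Taubes solution $h_0$, and in any case smoothness of $h$ (via $\Hsp^r\subset C^{r-2}$ by Sobolev embedding for $r$ large, or simply the algebra property of $\Hsp^r$) suffices to run the same argument. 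So the proof is short: state the potential, check non-negativity, finiteness of the zero set, the limit at infinity in the planar case, and boundedness of derivatives, then apply Lemma~\ref{lem:h2-l2-iso}.

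\begin{proof}
The operator $\Lop$ has the form $\laplacian + V(v+h)$ with
\[
  V(v+h) = \frac{4e^{v+h}}{(e^{v+h}+1)^2}
         = \frac{4\,e^{v_{\vb p}}e^{v_{\vb q}}e^h}{(e^{v_{\vb p}+h}+e^{v_{\vb q}})^2}.
\]
This is a non-negative smooth function on $\surface$: the numerator vanishes exactly where $e^{v_{\vb p}}$ or $e^{v_{\vb q}}$ vanishes, that is, precisely on $\vset\cup\avset$, which is finite; and since $\vset$ and $\avset$ are disjoint, $e^{v_{\vb p}}$ and $e^{v_{\vb q}}$ have no common zero, so the denominator is everywhere positive and $V(v+h)$ is smooth with only finitely many zeros. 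If $\surface$ is compact, all covariant derivatives $\grad^k V(v+h)$ are bounded by smoothness and compactness. If $\surface = \plane$, then as recalled in the proof of Theorem~\ref{thm:h-reg-param-dep} the functions $e^{v_{\vb p}}$, $e^{v_{\vb q}}$ and all of their derivatives are bounded, and $h = h_0$ and its derivatives decay exponentially as $\abs x \to \infty$; hence $V(v+h)$ and all of its derivatives are bounded, and moreover
\[
  \lim_{\abs x \to \infty} V(v+h)(x) = V(\mu) = 2F'(\mu) = 1 - \tau^2 \in (0, 1].
\]
Thus $V(v+h)$ satisfies all the hypotheses of Lemma~\ref{lem:h2-l2-iso}. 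Applying that lemma with potential $V(v+h)$ and with $r$ there equal to $r-2$ here, we conclude that
\[
  \Lop = \laplacian + V(v+h) : \Hsp^{r}(\surface) \to \Hsp^{r-2}(\surface)
\]
is a Hilbert space isomorphism for every $r \geq 2$.
\end{proof}
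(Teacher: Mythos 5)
Your reduction to Lemma~\ref{lem:h2-l2-iso} is the same basic move the paper makes, and your verification of the zero set (disjointness of the cores forcing the denominator to be nonvanishing) is correct. But there is a gap in how you invoke that lemma. You apply it ``with $r$ replaced by $r-2$'' to get the isomorphism $\Hsp^r \to \Hsp^{r-2}$ in one shot, and to do so you assert that $V(v+h)$ is a \emph{smooth} function with all derivatives bounded. That is false in the generality the lemma requires: here $h$ is an arbitrary element of $\Hsp^r$ (this operator is the derivative of $\Top$ at an arbitrary point, not only at the Taubes solution), so $V(v+h)$ is only as regular as $h$ --- it lies in $\Hsp^r$ by the algebra property, equivalently in $C^{r-2}$ by Sobolev embedding, but it is not $C^\infty$ and its derivatives of all orders are not controlled. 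The hypotheses of Lemma~\ref{lem:h2-l2-iso} are therefore not satisfied at the regularity level you need, and its proof genuinely uses repeated differentiation of the potential in the higher-order estimates. Your parenthetical ``$h=h_0$'' in the planar discussion papers over exactly this point (and the planar case is moot anyway: this lemma lives in the compact-surface chapter, as the term $4\pi(k_+-k_-)/\abs{\surface}$ in $\Top$ already presupposes).

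The paper's proof avoids this by using Lemma~\ref{lem:h2-l2-iso} only at the base level: since $h\in\Hsp^r\subset C^0$, the potential is continuous, non-negative, with finite zero set, so for each $\psi\in\Hsp^{r-2}\subset\Lsp^2$ there is a unique $\varphi\in\Hsp^2$ with $\Lop\varphi=\psi$. It then lifts the regularity by Schauder's estimate
\begin{equation*}
  \norm{\varphi}_{\Hsp^r} \leq C\brk(\norm{\laplacian\varphi}_{\Hsp^{r-2}} + \norm{\varphi}_{\Lsp^2}),
\end{equation*}
writing $\laplacian\varphi = \psi - V(v+h)\varphi$ and using that $V(v+h)$ multiplies $\Hsp^{r-2}$ into itself (this is where the algebra property you mention actually does the work), which shows $\varphi\in\Hsp^r$; bijectivity plus the open mapping theorem then gives the isomorphism. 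To repair your proof you should replace the direct higher-order invocation by this bootstrap, or else restate and reprove Lemma~\ref{lem:h2-l2-iso} under the weaker hypothesis $V\in\Hsp^r$ --- either way the elliptic-regularity step is the missing content, not the verification of the potential's structural properties.
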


\begin{proof}
By Sobolev's embedding, $h \in C^0(\surface)$, hence $V \geq 0$ is a
continuous function which is 
only zero at the finite set $\vset\cup \avset$. By
Lemma~\ref{lem:h2-l2-iso}, for any $\psi \in \Hsp^{r -2}$ there is
exactly one $\varphi \in \Hsp^2$ such that,
\begin{equation}
  \Lop \varphi = \psi,
\end{equation}
but by Schauder's estimates,
\begin{equation}
  \norm{\varphi}_{\Hsp^r} \leq C \brk(\norm{\laplacian
    \varphi}_{\Hsp^{r - 2}} + \norm{\varphi}_{\Lsp^2}),
\end{equation}
for some constant $C$, hence $\varphi \in \Hsp^r$ and $\Lop$ is a 
bijective bounded operator. By the open mapping theorem,
$\Lop^{-1}$ is also continuous, hence bounded and the claim
follows.
\end{proof}

\begin{proposition}
  \label{thm:small-def} Assume Bradlow's bound holds, then 
  there is a positive constant 
  $\kappa_0(\vset, \avset)$ such that if $\abs \kappa < \kappa_0$,
  the elliptic problem \eqref{eq:ell-cs} has a solution. Moreover, for
  any open neighbourhood $U$ of $\vset \cup 
  \avset$, the restriction of the solutions $(h, N)$ to $\surface
  \setminus \overline{U}$ varies smoothly with $\kappa$.   
\end{proposition}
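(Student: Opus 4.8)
The plan is to apply the implicit function theorem to the operator $\Top$ at the known solution at $\kappa = 0$. First I would recall that for $\kappa = 0$ the system decouples: equation~\eqref{eq:N} forces $N \equiv 0$ by the maximum principle (the linear operator $\laplacian + V(v+h)$ has trivial kernel by the preceding lemma), and then equation~\eqref{eq:h} reduces to the regularised Taubes equation, which has a unique solution $h_0 \in \Hsp^r$ by Theorem~\ref{thm:main} (equivalently Theorem~\ref{thm:h-reg-param-dep}), provided Bradlow's bound holds. Thus $(\kappa, h, N) = (0, h_0, 0)$ is a zero of $\Top$, and moreover $\Top$ is a smooth map between the Hilbert spaces $\reals \times \Hsp^r \times \Hsp^r$ and $\Hsp^{r-2} \times \Hsp^{r-2}$ because $e^h$, $F(v+h)$ and $V(v+h)$ are smooth maps $\Hsp^r \to \Hsp^r$ (using that $\Hsp^r$, $r \geq 2$, is a Banach algebra and that $e^{v_{\vb p}}, e^{v_{\vb q}}$ are smooth).

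Next I would compute the partial derivative of $\Top$ with respect to the pair $(h, N)$ at the point $(0, h_0, 0)$. As observed just before the statement, the off-diagonal terms vanish at $N = 0$, so this derivative is the block-diagonal operator $\Lop \oplus \Lop$ with $\Lop = \laplacian + V(v + h_0) : \Hsp^r \to \Hsp^{r-2}$. By the lemma immediately preceding the proposition, $\Lop$ is a Hilbert space isomorphism (it is the Schrödinger operator of Lemma~\ref{lem:h2-l2-iso} with a continuous nonnegative potential vanishing only on the finite core set, so it is invertible, and Schauder's estimates upgrade the $\Hsp^2$ solution to $\Hsp^r$). Hence $d_{(h,N)}\Top|_{(0, h_0, 0)} = \Lop \oplus \Lop$ is an isomorphism $\Hsp^r \times \Hsp^r \to \Hsp^{r-2} \times \Hsp^{r-2}$. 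The implicit function theorem then produces a constant $\kappa_0 = \kappa_0(\vset, \avset) > 0$ and a smooth map $\kappa \mapsto (h_\kappa, N_\kappa) \in \Hsp^r \times \Hsp^r$, defined for $|\kappa| < \kappa_0$, with $(h_0, N_0) = (h_0, 0)$, solving $\Top(\kappa, h_\kappa, N_\kappa) = 0$, i.e.\ solving the elliptic problem~\eqref{eq:ell-cs}.

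Finally I would deduce the smooth dependence on $\kappa$ away from the cores. By Sobolev's embedding, the map $\kappa \mapsto (h_\kappa, N_\kappa) \in C^{r-2}(\surface)$ is smooth for every $r \geq 2$, hence smooth as a map into $C^\infty(\surface)$; composing with the restriction map to $\surface \setminus \overline{U}$ gives the claimed smoothness of $(h, N)|_{\surface \setminus \overline U}$ in $\kappa$. One should also note that the actual fields $(\hf, \gp, \nf)$ are recovered from $(h, N)$ by $\hf_3 = (1 - e^h)/(1 + e^h)$ off the cores, $N$ being already the neutral field, and the connection reconstructed from the Bogomolny equations, all depending smoothly on $(h, N)$ away from $\vset \cup \avset$, so the statement transfers to the fields themselves.

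The main obstacle, and the point that requires genuine care rather than bookkeeping, is verifying that $\Top$ really is a smooth (or at least $C^1$) map between the stated Sobolev spaces: one must check that $h \mapsto F(v + h)$ and $h \mapsto V(v+h)$ are well-defined $\Hsp^r \to \Hsp^{r-2}$ and Fréchet differentiable there, which hinges on the singular background function $v$ (which diverges logarithmically at the cores) interacting correctly with the bounded smooth functions $F$, $F'$; the key structural fact is that $e^{v_{\vb p}}$ and $e^{v_{\vb q}}$ are genuinely smooth (their product with $e^h$ is controlled because $F$ depends on $e^{v+h}$ only through bounded rational combinations), so that $F(v+h)$ and $V(v+h)$ land in $\Hsp^r \subset \Hsp^{r-2}$. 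Once this regularity is in place, everything else is a direct application of the implicit function theorem together with results already established in the excerpt.
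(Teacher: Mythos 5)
Your proposal is correct and follows essentially the same route as the paper: both verify that $(0, h_0, 0)$ is a zero of the smooth operator $\Top$, identify the partial derivative in $(h,N)$ at that point with the block-diagonal isomorphism $\Lop \oplus \Lop$ supplied by the preceding lemma, and invoke the implicit function theorem, then upgrade regularity via elliptic bootstrap and Sobolev embedding to get smooth dependence on $\kappa$ away from the cores. The paper's proof is merely terser, leaving the verification of smoothness of $\Top$ and the identification of the linearisation to the discussion preceding the proposition, exactly as you fill them in.
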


\begin{proof}
Let $h_0 \in 
\Hsp^r$ be the solution of equation \eqref{eq:h} with $\kappa = N =
0$, i.e. $h_0$ is the solution to the regularised Taubes equation of
the abelian Sigma model, since $\Top(0, h_0, 0) = 
(0, 0)$, by the implicit function theorem, there is an interval
$(-\kappa_{0}, \kappa_{0})$ such that the map,
\begin{equation}
  (-\kappa_{0}, \kappa_{0}) \ni \kappa \mapsto  (\kappa, h_\kappa,
  N_\kappa) \in (-\kappa_{0}, \kappa_{0}) \times \Hsp^r\times \Hsp^r,
\end{equation}
is smooth and $\Top(\kappa, h_\kappa, N_\kappa) = (0, 0)$. Therefore,
each pair $(h_\kappa, N_\kappa)$ is a solution to the regular
elliptic problem \eqref{eq:h}, \eqref{eq:N} in $\Hsp^r \times
\Hsp^r$. By a bootstrap argument, each $(h_{\kappa}, N_{\kappa})$ is in
$\Hsp^k\times \Hsp^k$ for any $k \geq r$. Hence, by Sobolev's
embedding, $(h_{\kappa}, N_{\kappa})$ is smooth, moreover, the
function $u_{\kappa} = v + h_{\kappa}$ varies smoothly as a
function of $\kappa$ and $(\vb p, \vb q)$ if $p_j, q_k\in U$
are such that $p_j \neq q_k$ for each vortex and antivortex position.
\end{proof}

Thus if $\kappa$ is small, there is a family of solutions to the field
equations close to the BPS soliton determined by $h_0$, in the sense
that $(h_{\kappa} - h_0, N_{\kappa})$ is small in the $\Hsp^r \times
\Hsp^r$ norm for any $r > 0$.

\subsection{A positive gap for $\kappa_{*}(D)$}
\label{sec:proof-cnt-bound}

By proposition~\ref{thm:small-def}, $\kappa_{*}(D) > 0$ for any
distribution of the divisors. On this section we will prove the
existence of a positive lower bound for $\kappa_{*}$, independent of
the core positions. Thus, localization of vortex-antivortex systems
makes sense globally for small deformations $\kappa$ of the BPS
model as is for the case of Ginzburg-Landau
vortices~\cite{flood2018chern}, even though in this case
the moduli space should be incomplete as is the case for the BPS model at
$\kappa = 0$. We prove several technical lemmas first, in order to 
find bounds for the norm of $\Top'$, the derivative of the 
operator defined in the previous section.

\begin{lemma}
  \label{lem:h0-bound}
  The solutions $h$ of equation~\eqref{eq:h} with $\kappa = 0$ are
  uniformly bounded on $\Hsp^2$. 
\end{lemma}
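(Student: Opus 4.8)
The plan is to exploit the fact that, at $\kappa=0$, equation~\eqref{eq:h} is exactly the regularised Taubes equation for the abelian $O(3)$ Sigma model, whose solvability is governed by Bradlow's bound, and then to run the same Leray--Schauder / elliptic-estimate machinery that was used in the proof of Theorem~\ref{thm:main} but keeping track of constants so that they do not depend on the divisor $D=(P,Q)$. Concretely, first I would decompose $h=u+\tilde c\in\Xsp\oplus\reals$ as in Section~\ref{sec:main-thm}, so that $u$ solves $-\laplacian u = \fnF_0(v+u+\tilde c)$. Since $\fnF$ and all its derivatives are bounded, the standard elliptic estimate $\norm{u}_{\spH^2}\le C\norm{\laplacian u}_{\spL^2}\le C\norm{\fnF_0}_{\spL^\infty}\abs\surface^{1/2}$ gives a bound on $u$ in $\spH^2$ that depends only on $\surface$, $\tau$ and $k_+-k_-$, not on where the cores sit.

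The genuine work is to bound the average $\tilde c$ uniformly in $D$. Here I would argue by contradiction exactly as in the proof of Lemma~\ref{lem:ctilde-weak-cont}: if there were divisors $D_n$ with $\abs{\tilde c_n}\to\infty$, then using Egorov's theorem together with the uniform $C^0$ bound on $u_n$ (just obtained) and the monotonicity and boundedness of $\fnF_0$, one derives $\int_\surface\fnF_0(v_n+u_n+\tilde c_n)\,\vol \to (\fnF^{\pm\infty}-C_0)\abs\surface \neq 0$, contradicting~\eqref{eq:ctilde-defs}; the key point is that all the quantities entering this estimate ($K=\norm{\fnF_0}_{\spL^\infty}$, $\abs\surface$, the $C^0$ bound on $u_n$) are uniform in $D_n$. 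Hence $\set{\tilde c_n}$ is bounded, and therefore $\norm{h}_{\spH^2}\le\norm{u}_{\spH^2}+\abs{\tilde c}\abs\surface^{1/2}\le C$ with $C$ independent of $D$.

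The one subtlety I anticipate is that the $v_n$ are only in $\spH^r$ away from their singularities and diverge at the cores, so the pointwise convergence statements in the Egorov step must be handled on the complement of small disks around $\vset_n\cup\avset_n$, whose total measure can be made uniformly small; this is the same device already used in Lemma~\ref{lem:ctilde-weak-cont} and in Section~\ref{sec:eps-0}, so no new analytic input is needed, only care that the radii and resulting measure bounds are chosen independently of $n$. With the uniform $\spH^2$ bound in hand, the lemma follows; in the subsequent sections this bound (via Sobolev embedding, giving a uniform $C^0$ bound on $h$ and hence a uniform lower bound on the potential $V(v+h)$ away from the cores) is what feeds into uniform estimates on $\Top'$ and ultimately the positive gap $\inf_D\kappa_*(D)>0$.
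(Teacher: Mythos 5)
Your proposal is correct and follows essentially the same route as the paper: bound the zero-average part via the elliptic estimate and the boundedness of $F$, then rule out $\abs{\tilde c_n}\to\infty$ by a contradiction with Bradlow's bound after passing to a convergent subsequence of divisors. The only difference is that the paper dispenses with the Egorov/monotonicity device of Lemma~\ref{lem:ctilde-weak-cont} and applies dominated convergence directly, which is possible precisely because the $C^0$ bound on $h-c(h)$ you establish is already uniform.
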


\begin{proof}
  Let,
  \begin{equation}
    c(h) = \frac{1}{\abs \surface} \int_{\surface} h \,\vol,
  \end{equation}
  be the average of $h$ on $\surface$. $h - c$ is
  orthogonal to the kernel of $\laplacian$, Schauder's estimates in
  this case give,
  \begin{equation}
    \norm{h - c}_{\Hsp^2} \leq C \norm{\laplacian h}_{\Lsp^2},
  \end{equation}
  where we denote by $C$ a positive constant, independent of the function $h$. 
  The function $F(t)$ is bounded, hence
  \begin{equation}
    \laplacian h = -2F(v + h) - \frac{4\pi}{\abs\surface}(k_+ - k_-)
  \end{equation}
  is uniformly bounded in $\Lsp^2$. Therefore, the set of functions $\set{h -
  c}$ is bounded in $\Hsp^2$ and by Sobolev's
  embedding also in $C^0(\surface)$. We claim  that the averages
  are also bounded. Assume otherwise towards a
  contradiction. Let $\tilde h 
  = h - c$, then
  there are sequences $v_n$, $\tilde h_n$, $c_n$ such that $\abs
  {c_n} \to \infty$. Suppose  $c_n \to \infty$, and let $(\vb p_n, \vb
  q_n)  \in \modulikk$ be the points
  defining $v_n$. Since $\surface$ is compact, we can assume the
  convergence $(\vb p_n, \vb q_n) \to (\vb p_{*}, \vb q_{*}) \in
    \surface^{k_+} \times \surface^{k_-}$. We have pointwise convergence $v_n \to
  v_{*} = v_{\vb p_{*}} - v_{\vb q_{*}}$, except possibly at points on
  the surface which are in $\vb p_{*}$ and $\vb q_{*}$ if there is
  any. Since the functions $\tilde h_n$ are uniformly bounded, we also have,
  \begin{equation}
    2F(v_n + \tilde h_n + c_n) \to 2(1 + \tau), \qquad \text{pointwise a.e.}
  \end{equation}
  
  By the dominated convergence theorem,
  \begin{equation}
    \int_{\surface}2F(v_n + \tilde h_n + c_n) \vol \to 2(1 + \tau)\abs\surface,
  \end{equation}
  but by the divergence theorem,
  \begin{equation}
    \int_{\surface}2F(v_n + \tilde h_n + c_n) \vol =
    -\int_{\surface}\brk(\laplacian h + \frac{4\pi}{\abs\surface}(k_+
    - k_-))\vol 
    = - 4\pi (k_+ - k_-),
  \end{equation}
  and this contradicts Bradlow's bound. If $c_n\to -\infty$ the same
  argument gives another contradiction. Therefore the set of averages 
  $\set{c(h)}$ is bounded. The lemma follows.
\end{proof}

\begin{lemma}\label{lem:pot-bound}
  For any $\epsilon > 0$ there is a positive constant $C(\epsilon)$,
  such
  that for any set of divisors and any $h \in \Hsp^2$ with $\norm{h}_{\Hsp^2} < 
  \epsilon$, 
  \begin{equation}
   \lproduct{V(v + h), 1}_{\Lsp^2} \geq C.
 \end{equation}
 
\end{lemma}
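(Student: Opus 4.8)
The plan is to prove this uniform lower bound by contradiction, using compactness of $\surface^{k_+}\times\surface^{k_-}$ together with the dominated convergence theorem. Suppose the statement fails for some fixed $\epsilon > 0$. Then there exist a sequence of disjoint divisors $D_n = (\vb p_n, \vb q_n) \in \modulikk$, with associated functions $v_n = v(\vb p_n, \vb q_n)$, and a sequence $h_n \in \Hsp^2(\surface)$ with $\norm{h_n}_{\Hsp^2} < \epsilon$, such that $\lproduct{V(v_n + h_n), 1}_{\Lsp^2} = \int_{\surface} V(v_n + h_n)\,\vol \to 0$. Since $\surface$ is compact I would pass to a subsequence so that $(\vb p_n, \vb q_n) \to (\vb p_*, \vb q_*) \in \surface^{k_+}\times\surface^{k_-}$; the limiting tuple is allowed to lie in the fat diagonal, that is, cores may coalesce. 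Since $\set{h_n}$ is bounded in $\Hsp^2(\surface)$ and $\surface$ is two-dimensional, the Rellich-Kondrachov theorem and Sobolev's embedding let me pass to a further subsequence with $h_n \to h_*$ uniformly on $\surface$ for some $h_* \in C^0(\surface)$.

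The two ingredients I would then use are elementary. First, writing $A = e^{v_{\vb p}+h} \geq 0$ and $B = e^{v_{\vb q}} \geq 0$, the regularised potential is $V(v+h) = 4AB/(A+B)^2$, and the arithmetic-geometric mean inequality $(A+B)^2 \geq 4AB$ gives $0 \leq V(v+h) \leq 1$ pointwise; hence the constant $1$ is an integrable dominating function on the compact surface $\surface$. Second, as recalled in this chapter, $e^{v_{\vb c}}$ extends to a continuous (indeed smooth) function of $\vb c \in \surface^{k_\pm}$, including the coalescence case, because the logarithmic singularity of Green's function is resolved by a factor $d(x,c_i)^2$, exactly as in the proof of Theorem~\ref{thm:h-reg-param-dep}; consequently, since $\vb p_n \to \vb p_*$ and $\vb q_n \to \vb q_*$, one has $e^{v_{\vb p_n}} \to e^{v_{\vb p_*}}$ and $e^{v_{\vb q_n}} \to e^{v_{\vb q_*}}$ in $C^0(\surface)$.

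Combining these, $A_n := e^{v_{\vb p_n}+h_n} \to A_* := e^{v_{\vb p_*}+h_*}$ and $B_n := e^{v_{\vb q_n}} \to B_* := e^{v_{\vb q_*}}$ uniformly on $\surface$, so $V(v_n+h_n) = 4A_nB_n/(A_n+B_n)^2 \to V_* := 4A_*B_*/(A_*+B_*)^2$ at every $x$ with $A_*(x)+B_*(x) \neq 0$. Since $e^{h_*} > 0$, the denominator $A_*+B_*$ vanishes only where $e^{v_{\vb p_*}}$ and $e^{v_{\vb q_*}}$ both vanish, i.e. at the finitely many points of $\vb p_* \cap \vb q_*$; thus $V(v_n+h_n) \to V_*$ almost everywhere, and by the dominated convergence theorem $\int_{\surface} V(v_n+h_n)\,\vol \to \int_{\surface} V_*\,\vol$, forcing $\int_{\surface} V_*\,\vol = 0$. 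But $V_* \geq 0$, and $V_*(x) > 0$ whenever $A_*(x) > 0$ and $B_*(x) > 0$, that is for every $x$ outside the finite set $\vb p_* \cup \vb q_*$; hence $V_* > 0$ almost everywhere and $\int_{\surface} V_*\,\vol > 0$, a contradiction. This yields the desired uniform bound, with $C(\epsilon)$ depending on $\epsilon$, $k_+$, $k_-$, $\tau$ and $\surface$, but not on the divisor.

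The one point that requires care — and the only genuine obstacle — is the behaviour of $e^{v_{\vb p}}$ and $e^{v_{\vb q}}$ when cores collide in the limit: I must verify both that the resolved quotient $4AB/(A+B)^2$ stays bounded and that $V_*$ remains positive off a null set. Both are supplied by the resolved form of Green's function recalled above, so in the end this is bookkeeping rather than a real difficulty.
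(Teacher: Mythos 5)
Your proof is correct and follows essentially the same route as the paper: argue by contradiction, use compactness of $\surface^{k_+}\times\surface^{k_-}$ to let the divisors converge (possibly into the diagonal), and apply dominated convergence to a limit potential that is positive off a finite set. The only difference is how the $h_n$ are handled: the paper avoids extracting a convergent subsequence by using that $V(t)$ decreases in $\abs{t}$ together with the Sobolev bound $\norm{h_n}_{C^0}\leq C_0\epsilon$ to minorise $V(v_n+h_n)$ by $V(\abs{v_{\vb p_n}}+\abs{v_{\vb q_n}}+C_0\epsilon)$, whereas you pass to a uniform limit $h_n\to h_*$ via the compact embedding $\Hsp^2\hookrightarrow C^0$ — both steps are valid and interchangeable here.
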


\begin{proof}
We will omit the subindex in the product
$\lproduct{V, 1}$ since it 
  is clear that we refer to $\Lsp^2(\surface)$. The potential is a
  non negative 
  function, hence $\lproduct{V(v + h), 1} \geq 0$. Assume towards a 
  contradiction the 
  existence of sequences $\set {v_n}$, $\set{h_n}$,  where
  $\norm{h_n}_{\Hsp^2} < \epsilon$ 
  and with vortices and antivortices at positions $\vb p_n$, $\vb q_n$, such 
  that for the sequences of potentials,
  \begin{equation}
   V_n = V(v_n + h_n),
  \end{equation}
  we have $\lproduct{V_n, 1} \to 0$. As in the previous lemma,
  we can assume $\vb p_n \to \vb p_* \in \surface^{k_+}$ and $\vb q_n \to \vb 
  q_* \in \surface^{k_-}$ together with pointwise convergence $v_n \to v_{*}$, 
  except possibly
  at points $x \in \surface$ belonging to the set of coordinates of $\vb p_{*}$ 
  or $\vb q_{*}$. Let $C_0 > 0$ be Sobolev's
  constant, such that,
  \begin{equation}
    \norm{h}_{C^0(\surface)} \leq C_0\,\norm{h}_{\Hsp^2}.
  \end{equation}

  Hence,
  \begin{equation}
    0 \leq \lproduct{V(\abs{v_{\vb p_{n}}} + \abs{v_{\vb q_{n}}} +
      C_0\epsilon), 1} \leq \lproduct{V_n, 1} \to 0.
  \end{equation}

  On the other hand, $V(\abs{v_{\vb p_{n}}} + \abs{v_{\vb q_{n}}} +
      C_0\epsilon)$ is a sequence of bounded functions converging pointwise
      to the continuous function $V(\abs{v_{\vb
          p_{*}}} + \abs{v_{\vb q_{*}}} + C_0\epsilon)$. By the
      dominated convergence theorem,
      \begin{equation}
        \lproduct{V(\abs{v_{\vb p_{*}}} + \abs{v_{\vb q_{*}}} +
      C_0\,\epsilon), 1} = 0,
  \end{equation}
  a contradiction.
\end{proof}

Given any pair $(D, h) \in \modulikk\times \Hsp^r$, $r \geq 2$, the
potential $V(v + h)$ is a non negative continuous function such that
$0 < \lproduct{V(v + h), 1}$. If $\psi \in \Lsp^2$, by
Lemma~\ref{lem:h2-l2-iso}, there is exactly one $\varphi \in \Hsp^2$
such that, 
\begin{equation}
  (\laplacian + V(v + h))\,\varphi = \psi,
\end{equation}
and a positive constant $C'$, independent of
$V(v + h),\, \varphi$ and $\psi$, such that,
\begin{align}
\label{eq:lax-milgram-bound}
\norm{\varphi}_{\Hsp^1} \leq \frac{C'}{\lproduct{V(v + h),1}} \brk(
 \frac{\norm{V(v + h)}^2_{\Lsp^2}}{\lproduct{V(v + h), 1}} + \norm{V(v + 
 h)}_{\Lsp^2}
+ 1
) \norm{\psi}_{\Lsp^2}.
\end{align}

We let the point in the moduli space vary in order to define the operator,
\begin{align}
 \mathcal{L}  : \moduli^{k_+,k_-}\times \Hsp^2 \to B(\Hsp^2, \Lsp^2), &&
\mathcal{L}(\vb p, \vb q, h) = L,
\end{align}
where $L = \laplacian + V(v + h)$ was defined previously. 
$\mathcal{L}$ is a continuous map such that each $\mathcal{L}(\vb p,
\vb q, h)$ is invertible. Since inversion of bounded invertible
operators is continuous, the map  
\begin{align}
  \mathcal{L}':\moduli^{k_+,k_-}\times \Hsp^2 \to B(\Lsp^2, \Hsp^2), &&
  \mathcal{L}'(\vb p, \vb q, h) = \Lop^{-1},
\end{align}
is also continuous. 

\begin{lemma}\label{lem:bound-cstar}
  Given $\epsilon > 0$, let $\Omega = \moduli^{k_+,k_-}\times
  B_{\epsilon}(0)$ and let,
\begin{equation}
  C_{*}(\epsilon) = \sup_{\Omega}\,\norm{\mathcal{L}'}, 
\end{equation}
then $C_{*}$ is finite. 
\end{lemma}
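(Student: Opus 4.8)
The plan is to establish finiteness of $C_*(\epsilon) = \sup_\Omega \|\mathcal{L}'\|$ by combining the uniform Lax-Milgram bound \eqref{eq:lax-milgram-bound} with the uniform positive lower bound on $\lproduct{V(v+h), 1}_{\Lsp^2}$ from Lemma~\ref{lem:pot-bound} and a uniform upper bound on $\norm{V(v+h)}_{\Lsp^2}$. The key observation is that the right-hand side of \eqref{eq:lax-milgram-bound} is a continuous, increasing function of $\norm{V(v+h)}_{\Lsp^2}$ and a decreasing function of $\lproduct{V(v+h),1}_{\Lsp^2}$, so once we control these two scalar quantities uniformly over $\Omega$ we immediately obtain a uniform bound on $\norm{\Lop^{-1}\psi}_{\Hsp^1}$ for $\norm{\psi}_{\Lsp^2} \le 1$, and then an upgrade to $\Hsp^2$ via Schauder's estimates.

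First I would fix $\epsilon > 0$ and note that $\Omega = \moduli^{k_+,k_-}\times B_\epsilon(0)$. For any $(\vb p, \vb q, h) \in \Omega$, Sobolev's embedding gives $\norm{h}_{C^0(\surface)} \le C_0\,\epsilon$ for a fixed Sobolev constant $C_0$. Since $V(v+h) = \tfrac{4 e^{v_{\vb p}+v_{\vb q}+h}}{(e^{v_{\vb p}+h}+e^{v_{\vb q}})^2}$ and $V$ itself is a bounded function of its argument (indeed $0 \le V(t) \le 1$, as $V = 2F'$ and $F'(t) = \tfrac{2e^t}{(e^t+1)^2}$), we have the pointwise bound $0 \le V(v+h) \le 1$, hence $\norm{V(v+h)}_{\Lsp^2} \le \abs\surface^{1/2}$ uniformly on $\Omega$. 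For the lower bound on $\lproduct{V(v+h),1}_{\Lsp^2}$, Lemma~\ref{lem:pot-bound} applies directly: there is $C(\epsilon) > 0$ such that $\lproduct{V(v+h), 1}_{\Lsp^2} \ge C(\epsilon)$ for every set of divisors and every $h$ with $\norm{h}_{\Hsp^2} < \epsilon$. (If one wants strict inequality $\norm{h}_{\Hsp^2} \le \epsilon$ instead of $<$, it suffices to apply Lemma~\ref{lem:pot-bound} with $\epsilon' = \epsilon + 1$.)

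Next, substituting the bounds $\norm{V(v+h)}_{\Lsp^2} \le \abs\surface^{1/2}$ and $\lproduct{V(v+h),1}_{\Lsp^2} \ge C(\epsilon)$ into \eqref{eq:lax-milgram-bound}, for any $\psi \in \Lsp^2$ with $\Lop \varphi = \psi$ we obtain
\begin{align}
\norm{\varphi}_{\Hsp^1} \le \frac{C'}{C(\epsilon)}\brk(\frac{\abs\surface}{C(\epsilon)} + \abs\surface^{1/2} + 1)\,\norm{\psi}_{\Lsp^2},
\end{align}
a bound independent of $(\vb p, \vb q, h) \in \Omega$. To upgrade to $\Hsp^2$, note $\laplacian\varphi = \psi - V(v+h)\varphi$, so by Schauder's estimates $\norm{\varphi}_{\Hsp^2} \le C\brk(\norm{\laplacian\varphi}_{\Lsp^2} + \norm{\varphi}_{\Lsp^2}) \le C\brk(\norm{\psi}_{\Lsp^2} + \norm{V(v+h)\varphi}_{\Lsp^2} + \norm{\varphi}_{\Lsp^2})$, and since $\norm{V(v+h)\varphi}_{\Lsp^2} \le \norm{\varphi}_{\Lsp^2} \le \norm{\varphi}_{\Hsp^1}$, this is controlled by a uniform multiple of $\norm{\psi}_{\Lsp^2}$. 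Taking the supremum over $\norm{\psi}_{\Lsp^2} \le 1$ shows $\norm{\mathcal{L}'(\vb p, \vb q, h)} = \norm{\Lop^{-1}}_{B(\Lsp^2, \Hsp^2)}$ is bounded by a constant depending only on $\epsilon$, $\surface$, and the universal constants, hence $C_*(\epsilon) < \infty$.

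The main point requiring care — though it is essentially handed to us by Lemma~\ref{lem:pot-bound} — is the \emph{uniform} lower bound on $\lproduct{V(v+h),1}_{\Lsp^2}$: a priori, as the cores $\vb p, \vb q$ approach the diagonal or coalesce, the functions $v$ develop stronger singularities and $V(v+h)$ could in principle concentrate its mass or vanish. The compactness argument in Lemma~\ref{lem:pot-bound} (passing to convergent subsequences $\vb p_n \to \vb p_*$, $\vb q_n \to \vb q_*$ in the compact surface, pointwise a.e.\ convergence of $v_n$ away from the limiting core set, and dominated convergence) rules this out, so the only real work in the present lemma is the bookkeeping of substituting scalar bounds and applying Schauder's estimates once more. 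I therefore expect no genuine obstacle here; the proof is a direct assembly of the preceding lemmas.
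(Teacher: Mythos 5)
Your proof is correct and follows essentially the same route as the paper's: the uniform lower bound on $\lproduct{V(v+h),1}$ from Lemma~\ref{lem:pot-bound}, the pointwise bound $V \leq 1$ fed into \eqref{eq:lax-milgram-bound} to get the uniform $\Hsp^1$ estimate, and a final application of Schauder's estimates to upgrade to $\Hsp^2$. The extra bookkeeping you supply (the explicit $\abs{\surface}^{1/2}$ bound on $\norm{V(v+h)}_{\Lsp^2}$ and the open-versus-closed ball remark) is harmless and consistent with what the paper leaves implicit.
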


\begin{proof}
  By lemma~\ref{lem:pot-bound}, there is a constant $C(\epsilon)$ such that
  $\lproduct{V(v + h), 1} \geq C$ for any $(D, h) \in \Omega$ . If
  $\varphi = \mathcal{L}'\,\psi$, with $\norm{\psi}_{\Lsp^2} = 1$ by
  \eqref{eq:lax-milgram-bound} we have the bound
  \begin{equation}
  \norm{\varphi}_{\Hsp^1} \leq \frac{C'}{C}\brk(
  \frac{\abs\surface}{C} + \abs{\surface}^{1/2} + 1).
\end{equation}

By Schauder's estimates, 
\begin{align}
  \norm{\varphi}_{\Hsp^2} &\leq C \brk(\norm{\laplacian
    \varphi}_{\Lsp^2} + \norm{\varphi}_{\Lsp^2})\nonumber\\
  &= C \brk(\norm{-V(v + h)\varphi + \psi}_{\Lsp^2} + 
  \norm{\varphi}_{\Lsp^2})\nonumber\\
  &\leq C,
\end{align}
where the last constant is not necessarily equal to the first one.  
Therefore, $\norm{\mathcal{L}'|_{\Omega}} \leq C$, hence $C_{*} \leq C$.
\end{proof}

For given $\kappa$, let us define $\Top_{\kappa}|:\Hsp^2\times
\Hsp^2 \to \Lsp^2\times \Lsp^2$ as the restriction
$\Top_{\kappa}|(h, N) = \Top(\kappa, h, N)$, then we have,
\begin{align}
d\Top_{\kappa}|_{(h, N)} &= \Lop \oplus \Lop + \Top',
\end{align}
where,
\begin{equation}
  \Top'(h', N') = \brk(2\kappa N', \kappa^2 N' + \frac{\kappa}{2} V(v
  + h) h' + V'(v + h) N h').
\end{equation}

Since $V'(t)$ and $V(t)$ have range $[0, 1]$,
\begin{align}
  \norm{\Top'(h', N')}_{\Lsp^2\times\Lsp^2}\leq (2\abs\kappa +
\kappa^2)\norm{N'}_{\Lsp^2}  + \brk(
\frac{\abs{\kappa}}{2} + \norm{N}_{\Lsp^2}
) \norm{h'}_{\Lsp^2}.
\end{align}

By Cauchy-Schwarz,
\begin{align}
 \norm{T'} \leq \brk(\kappa^2(2 + \abs{\kappa})^2 +
\brk(\abs\kappa / 2 + \norm{N}_{\Lsp^2})^2)^{1/2}.
\end{align}

We use
lemma~\ref{lem:h0-bound} and choose $\epsilon_0$ such that $\norm{h_0}_{\Hsp^2}
< \epsilon_0$ for any solution $(h_0, 0)$ of equation \eqref{eq:h}
with $\kappa = 0$. We also take the  
constant $C_{*}(\epsilon_0)$ of lemma~\ref{lem:bound-cstar} and define
\begin{equation}
  \epsilon = \min\set{\epsilon_0, \frac{1}{2\, C_{*}\cdot (\frac{7}{2} +
      \epsilon_0)}}. 
\end{equation}

\begin{proposition}\label{prop:kappa-*-lb}
  For any set of divisors $D \in \moduli^{k_+,k_-}$,
  \begin{equation}
  \kappa_{*}(D) \geq \epsilon \min\set{
    1, \frac{1}{2\,C_{*} (2\epsilon(1 + \epsilon) + \max\set{1 \pm
        \tau}\abs\surface^{1/2})} 
  }.
\end{equation}
\end{proposition}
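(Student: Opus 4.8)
The plan is to prove Proposition~\ref{prop:kappa-*-lb} by a quantitative application of the implicit function theorem (more precisely, a Newton--Kantorovich / contraction-mapping argument) that is uniform over the moduli space, exactly mirroring the strategy of Flood--Speight~\cite{flood2018chern} for Ginzburg--Landau vortices but now carrying along the extra equation for $N$. Fix a divisor $D = (\vb p, \vb q)$ and let $(h_0, 0)$ be the solution of the elliptic problem at $\kappa = 0$, which exists by Theorem~\ref{thm:main} (Bradlow's bound is assumed) and satisfies $\norm{h_0}_{\Hsp^2} < \epsilon_0$ by Lemma~\ref{lem:h0-bound}. We want to solve $\Top(\kappa, h, N) = (0,0)$ with $(h,N)$ close to $(h_0, 0)$ in $\Hsp^2 \times \Hsp^2$. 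Write $\Top(\kappa, h_0 + h', N') = \Top(0, h_0, 0) + (\Lop \oplus \Lop)(h', N') + \mathcal{R}$, where $\Lop = \laplacian + V(v + h_0)$ and $\mathcal{R}$ collects the terms linear in $\kappa$ together with the higher-order remainder. Since $(\Lop \oplus \Lop)^{-1}$ exists and is bounded by $C_*(\epsilon_0)$ uniformly in $D$ by Lemma~\ref{lem:bound-cstar}, solving the equation is equivalent to finding a fixed point of the map $\Psi(h', N') = -(\Lop \oplus \Lop)^{-1}\mathcal{R}(\kappa, h', N')$ on the closed ball $\overline{B_{\epsilon}(0)} \subset \Hsp^2 \times \Hsp^2$, with $\epsilon$ as defined just before the proposition.

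The two things I would check are that $\Psi$ maps $\overline{B_\epsilon(0)}$ into itself and that it is a contraction there, provided $|\kappa|$ is below the stated threshold. For the self-map property, I would bound $\norm{\mathcal{R}(\kappa, h', N')}_{\Lsp^2 \times \Lsp^2}$ when $\norm{(h',N')}_{\Hsp^2 \times \Hsp^2} \le \epsilon$: the purely $\kappa$-dependent source terms in $\Top(\kappa, h_0 + h', N')$ not already present at $\kappa = 0$ are the $2\kappa N$ term in the first component and the $\kappa(\kappa N + F(v+h))$ term in the second; since $F$ is bounded, $|V|, |V'| \le 1$, and $\Hsp^2$ is an algebra with $N = N'$ small, these contribute a quantity bounded by a constant times $|\kappa|(\max\{1\pm\tau\}|\surface|^{1/2} + \epsilon + \epsilon^2)$, while the quadratic-in-$(h',N')$ remainder is controlled by a constant times $\epsilon^2 \le \epsilon \cdot \tfrac72$ type bound. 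Multiplying by $C_*$ and using the definition of $\epsilon$ (which was precisely engineered so that $2 C_* (\tfrac72 + \epsilon_0)\epsilon \le 1$) and the threshold $|\kappa| \le \epsilon \min\{1, (2 C_*(2\epsilon(1+\epsilon) + \max\{1\pm\tau\}|\surface|^{1/2}))^{-1}\}$, one gets $\norm{\Psi(h',N')}_{\Hsp^2\times\Hsp^2} \le \epsilon$. The contraction estimate is similar: the Lipschitz constant of $\mathcal{R}$ on the ball is $O(|\kappa| + \epsilon)$, so after multiplying by $C_*$ and using the same choice of $\epsilon$ and $|\kappa|$, one gets a contraction constant $< 1$. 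By the Banach fixed point theorem there is a solution, so $\kappa$ is admissible, i.e. $\kappa_*(D) \ge$ the stated bound; the bound does not depend on $D$ because $C_*(\epsilon_0)$ does not (Lemma~\ref{lem:bound-cstar}) and because $\epsilon_0$ was chosen uniformly (Lemma~\ref{lem:h0-bound}).

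The main obstacle I anticipate is bookkeeping the constants so that the final inequality comes out in exactly the clean form stated, and in particular verifying that the coupling between the $h$-equation and the $N$-equation does not force a worse dependence. The coupling enters through the off-diagonal pieces of $d\Top_\kappa$ — the $2\kappa N'$ term feeding into the first component and the $V'(v+h)Nh'$ and $\tfrac{\kappa}{2}V(v+h)h'$ terms feeding into the second — and one needs the $\Hsp^2$-algebra property together with $|V|,|V'|\le 1$ and the Sobolev embedding $\Hsp^2 \hookrightarrow C^0$ to see that each such term is bounded by a constant times $(|\kappa| + \norm{N}_{\Hsp^2})\norm{h'}_{\Hsp^2}$ or $|\kappa|\norm{N'}_{\Hsp^2}$, hence controlled on $\overline{B_\epsilon(0)}$ by something of size $O(|\kappa|)$ plus $O(\epsilon)$ times the radius. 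Once those estimates are in hand the proposition follows. I would also remark that this argument simultaneously re-proves Proposition~\ref{thm:small-def} with a quantitative, $D$-uniform radius, which is the point: it shows localization of vortex-antivortex dynamics makes sense globally for all sufficiently small $\kappa$, despite the incompleteness of the moduli space at $\kappa = 0$.
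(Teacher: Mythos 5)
Your strategy is sound and uniform in $D$ for the same reasons as the paper's (Lemmas~\ref{lem:h0-bound} and~\ref{lem:bound-cstar}), but it is a genuinely different route. The paper does not run a contraction argument at fixed $\kappa$: it takes the solution curve $\chi(\kappa)=(h_\kappa,N_\kappa)$ already produced by Proposition~\ref{thm:small-def} and continues it in $\kappa$, observing that on the region $\abs{\kappa}<\epsilon$, $\norm{h}<\epsilon$, $\norm{N}<\epsilon$ the perturbation $\mathrm{T}'$ of the linearisation satisfies $\norm{\mathrm{T}'}<\tfrac{1}{2C_*}$, so $d\mathrm{T}_\kappa$ stays invertible with $\norm{(d\mathrm{T}_\kappa)^{-1}}\le 2C_*$, and then bounding the \emph{velocity} of the curve, $\norm{\dot\chi}\le 2C_*\norm{\del_\kappa \mathrm{T}}\le 2C_*\brk(2\epsilon(1+\epsilon)+\max\set{1\pm\tau}\abs{\surface}^{1/2})$. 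The dichotomy ``either the curve reaches $\kappa=\epsilon$, or it must exit the $\epsilon$-ball, forcing $\norm{\dot\chi}\ge \epsilon/\kappa_*$ somewhere'' then yields exactly the stated constant. The advantage of this continuation argument is that it only ever uses \emph{first}-derivative information about the nonlinearity, evaluated at the current point of the curve; no Taylor remainder of $F$ or $V$ appears, which is precisely why the final inequality is clean.

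This is where your version runs into a concrete (if repairable) difficulty. In your fixed-point map the linearisation is frozen at $(h_0,0)$, so the Lipschitz constant of $\mathcal{R}$ on the ball picks up, in addition to the paper's $\norm{\mathrm{T}'}\le\epsilon(\tfrac72+\epsilon_0)$, the frozen-coefficient error $\norm{V(v+h_0+h')-V(v+h_0)}_{\mathrm{L}^\infty}\lesssim C_{\mathrm{Sob}}\,\epsilon$ together with the second-order Taylor remainders of $F$ and $V$ (again entering through the Sobolev embedding $\mathrm{H}^2\hookrightarrow C^0$). The paper's $\epsilon$ was engineered only to make $2C_*(\tfrac72+\epsilon_0)\epsilon\le 1$ and does not account for these extra constants, so with that exact $\epsilon$ your map is not obviously a contraction on $\overline{B_\epsilon(0)}$. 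You would either have to shrink $\epsilon$ (changing the constant in the final bound) or reorganise the estimate so that the linearisation is taken at the moving point — at which stage you have essentially reconstructed the paper's continuation argument. So: the approach proves a lower bound of the same form, uniform over the moduli space, but reproducing the precise stated inequality requires the bookkeeping you flagged, and it does not come for free from the definitions of $\epsilon$ and $C_*$ as given.
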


\begin{proof}
  For $\abs{\kappa} < \epsilon$, $\norm{h}_{\Lsp^2} < \epsilon$ and
$\norm{N}_{\Lsp^2} < \epsilon$, we have,
\begin{equation}
  \norm{\Top'} < \frac{1}{2C_{*}} \leq \frac{1}{2 \norm{(\Lop\oplus
      \Lop)^{-1}}},
\end{equation}
hence, as in the proof of Lemma 5 in \cite{flood2018chern}, we can
conclude that $d\Top_{\kappa}|$ is invertible, independently of the
point in the moduli space and
\begin{equation}
  \norm{(d\Top_{\kappa}|)^{-1}} \leq 2C_{*}.
\end{equation}

If $\chi(\kappa) = (h_{\kappa}, N_{\kappa})$ is the curve of solutions to
equations \eqref{eq:h},\eqref{eq:N}, guaranteed to exist by
proposition~\ref{thm:small-def}, then by the implicit function theorem,
this curve can be extended whenever $d\Top_{\kappa}|$ is invertible at
$\chi(\kappa)$. This is the case if $\abs{\kappa} < \epsilon$ and
$\norm{h}, \norm{N} < \epsilon$. So, for any $D \in \moduli^{k_+,
  k_-}$, let $\kappa_0 > 0$ be the right end of the maximal interval
$[0, \kappa_0)$ on which this curve can be extended. Either $\kappa_0
\geq \epsilon$,  
or there exists a $\kappa_1$ with $\abs{\kappa_1} < \kappa_{*} <
\epsilon$ such that $\norm{\dot\chi(\kappa)}_{\Hsp^2\times\Hsp^2} \geq \epsilon
\kappa_{*}^{-1}$. In the later case,
\begin{align}
  \norm{\dot\chi_{\kappa_1}}_{\Hsp^2\times\Hsp^2} &=
  \norm{(d\Top_{\kappa}|_{\chi(\kappa)})^{-1}\del_{\kappa}
    \Top|_{(\kappa_1,\chi(\kappa_1))}}_{\Hsp^2\times\Hsp^2}\nonumber\\
 &\leq 
  2C_{*}\norm{\del_{\kappa}\Top|_{(\kappa_1,\chi(\kappa_1))}}_{\Lsp^2\times\Lsp^2}. 
\end{align}

But,
\begin{equation}
  \del_{\kappa}\Top = (2N, 2\kappa N + F(v + h)),
\end{equation}
hence,
\begin{align}
  \norm{\del_{\kappa}\Top}_{\Lsp^2\times\Lsp^2} &\leq 2\norm{N}_{\Lsp^2}
  + 2\abs \kappa \norm{N}_{\Lsp^2} +
  \sup \set{\abs F} \abs{\surface}^{1/2}\nonumber\\
  &\leq 2\epsilon(1 + \epsilon) + \max\set{1 \pm \tau}\abs\surface^{1/2}.
\end{align}

Therefore, either $\kappa_0 \geq \epsilon$ or 
\begin{equation}
  \kappa_0 \geq \frac{\epsilon}{2C_{*} (2\epsilon(1 + \epsilon) +
    \max\set{1 \pm 
        \tau}\abs\surface^{1/2})} 
  .
\end{equation}

Since $\kappa_{*}(D) \geq \kappa_0$ we conclude the claimed lower bound. 
\end{proof}

%%% Local Variables:
%%% mode: latex
%%% TeX-master: "../../../../geometric-models.tex"
%%% End:

%%%% LOCAL LaTeX %%%%%%%%%%%%%%%%%%
\newcommand*\ux{\underline x}
\newcommand*\ox{\overline x}

\newcommand*\Ik{I_{\kappa}}
\newcommand*\Aset{\mathcal{A}}
\newcommand*\avh{\overline{h}}
\newcommand*\Kset{\mathcal{K}}
\newcommand*\opPhi{\Phi}
\newcommand*\opL{\mathrm{L}}
\newcommand*\opI{\mathrm{I}}
\newcommand*\lsdeg{\mathrm{deg}_{\mathrm{LS}}}
\newcommand*\av{c}
%%%%%%%%%%%%%%%%%%%%%%%%%%%%%%%%%%% 

\subsection{The unbalanced case}\label{sec:unbalanced-case}

In this section we assume $k_+ \neq k_-$. In this case the family of
deformation constants is bounded, contrasting with the 
euclidean case, where there are examples for which the elliptic
problem can be solved for any $\kappa$~\cite{chen2019analysis}. 
We will prove the existence of multiple solutions of the field equations, 
the first step will be to describe the possible limit points of 
sequences $(h_{\kappa_n}, \kappa_n N_{\kappa_n})$ as $\kappa_n \to 0$.

It will be convenient to redefine the neutral field as follows. Let $N' =
\kappa N$, equation~\eqref{eq:N} can be rewritten as, 
\begin{equation}
  \label{eq:N'}
  \laplacian N' + (\kappa^2 + 2F'(v + h))\,N' + \kappa^2F(v + h) = 0.
\end{equation}

\begin{proposition}\label{prop:kappa-*-ub}
  If $\abs{k_+ - k_-} > 0$, $\kappa_{*}(D)$ is uniformly bounded,
  \begin{align}
 \kappa_{*}(D) \leq \pbrk{\frac{\max\set{1 \mp \tau}
   \abs\surface}{2\pi\,\abs{k_+ - k_-}}}^{1/2}.    
  \end{align}
 
\end{proposition}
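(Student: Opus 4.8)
The plan is to integrate the second Bogomolny equation over the compact surface $\surface$ and extract a sign constraint from the resulting flux identity. Recall that from $*B = -(\kappa N + \tau - \hf_3)$ and the fact that the total magnetic flux is a topological quantity,
\begin{align}
2\pi(k_+ - k_-) &= \int_\surface B = -\int_\surface (\kappa N + \tau - \hf_3)\,\vol.
\end{align}
Hence $\kappa\int_\surface N\,\vol = -2\pi(k_+-k_-) - \tau\abs\surface + \int_\surface \hf_3\,\vol$, and since $\hf_3 = \lproduct{n,\hf} \in [-1,1]$ the right hand side lies between $-2\pi(k_+-k_-) - (1+\tau)\abs\surface$ and $-2\pi(k_+-k_-) + (1-\tau)\abs\surface$. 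This bounds $\kappa\int_\surface N\,\vol$ but not $\kappa$ itself, so a second relation is needed.

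The second relation I would obtain by integrating equation~\eqref{eq:N} (or equivalently the $N'$-form~\eqref{eq:N'}) over $\surface$. Integrating $-\laplacian N = \kappa(\kappa N + \tau + \tfrac{e^u-1}{e^u+1}) + \tfrac{4e^u}{(e^u+1)^2}N$ and using $\int_\surface \laplacian N\,\vol = 0$ gives
\begin{align}
0 &= \kappa\int_\surface\brk(\kappa N + \tau - \hf_3)\,\vol + \int_\surface \frac{4e^u}{(e^u+1)^2}\,N\,\vol,
\end{align}
where I have written $\tfrac{e^u-1}{e^u+1} = -\hf_3$ using the definition of $u$. But $\kappa(\kappa N + \tau - \hf_3) = -\kappa *B$, so $\kappa\int_\surface(\kappa N + \tau - \hf_3)\,\vol = -\kappa\int_\surface B = -2\pi\kappa(k_+-k_-)$. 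Therefore
\begin{align}
2\pi\kappa(k_+ - k_-) &= \int_\surface \frac{4e^u}{(e^u+1)^2}\,N\,\vol = \int_\surface V(u)\,N\,\vol,
\end{align}
with $0 \le V(u) \le 1$. The task is then to bound the right hand side; the idea is to control $\int_\surface V(u)N\,\vol$ by combining the first flux identity (which controls $\kappa\int N$, hence $\int N$ once $\kappa\neq 0$) with a suitable manipulation, perhaps multiplying~\eqref{eq:N} by $N$ and integrating, or using that $\int_\surface(\kappa N + \tau - \hf_3) = -2\pi(k_+-k_-)$ together with an $L^\infty$ or sign argument on $N$. A cleaner route: observe that integrating the first Bogomolny/Gauss relation already yields $\int_\surface(\kappa N + \tau - \hf_3)\,\vol = -2\pi(k_+-k_-)$, and since $\tau - \hf_3 = \tau - \lproduct{n,\hf}$ has the sign of... — more carefully, $\abs{\int_\surface V(u)N\,\vol} \le \brk(\int_\surface V(u)\,\vol)^{1/2}\brk(\int_\surface V(u)N^2\,\vol)^{1/2}$ by Cauchy--Schwarz, and $\int_\surface V(u)N^2\,\vol$ can be related to $\norm{\grad N}^2$ and $\kappa$-dependent terms by multiplying~\eqref{eq:N} by $N$; this is where the actual estimate lives.

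The main obstacle I expect is turning $2\pi\kappa(k_+-k_-) = \int_\surface V(u)N\,\vol$ into the stated bound $\kappa_*(D) \le \brk(\max\{1\mp\tau\}\abs\surface/(2\pi\abs{k_+-k_-})\,)^{1/2}$, because the square-root and the $\max\{1\mp\tau\}$ strongly suggest a Cauchy--Schwarz step of the shape $2\pi\kappa\abs{k_+-k_-} \le \abs{\int V(u)N} \le \norm{V(u)^{1/2}}_{L^2}\,\norm{V(u)^{1/2}N}_{L^2}$ followed by an energy identity that produces $\norm{V(u)^{1/2}N}_{L^2}^2 \le \kappa^2 \cdot (\text{something bounded by } \max\{1\mp\tau\}\abs\surface)$, giving $2\pi\kappa\abs{k_+-k_-} \le \kappa\,(\max\{1\mp\tau\}\abs\surface)^{1/2}$ and hence the result after dividing by $\kappa$. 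To nail the energy identity I would multiply~\eqref{eq:N'} by $N' = \kappa N$ and integrate: $\norm{\grad N'}^2 + \int(\kappa^2 + V(u))(N')^2 + \kappa^2\int F(v+h)N' = 0$, whence $\int V(u)(N')^2 \le -\kappa^2\int F(v+h)N' \le \kappa^2\sup\abs{F}\cdot\norm{N'}_{L^1}$, and one still needs an a priori bound on $\norm{N'}_{L^1}$ from the flux identity $\int N' = \int(\kappa N) = -2\pi(k_+-k_-) - \tau\abs\surface + \int\hf_3$ together with a sign/maximum-principle argument showing $N'$ (equivalently $\kappa N$) does not change sign or is otherwise $L^1$-controlled. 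Assembling these pieces carefully — in particular identifying $\sup\abs F = \max\{1\mp\tau\}$ and checking the constant comes out exactly as claimed — is the delicate part; the rest is routine integration by parts on the compact surface $\surface$.
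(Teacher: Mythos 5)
Your first half coincides with the paper's: eliminating $F(v+h)$ between \eqref{eq:h} and \eqref{eq:N'} and integrating over $\surface$ (equivalently, your integration of the $N$-equation combined with flux quantization) yields the key identity $2\pi\kappa^2(k_+-k_-) = 2\lproduct{F'(v+h),N'}$ with $N'=\kappa N$, which is the same as your $2\pi\kappa(k_+-k_-)=\int_\surface V(u)N$. The gap is in the second half: the estimate of the right-hand side. The paper closes it with the single elementary step that you gesture at ("a sign/maximum-principle argument") but never supply, namely a pointwise maximum-principle bound on $N'$. Evaluating \eqref{eq:N'} at an interior minimum $\underline{x}$ and maximum $\overline{x}$ of $N'$, where the (geometer's) Laplacian satisfies $\laplacian N'\le 0$, resp.\ $\ge 0$, gives
\begin{equation*}
  -\frac{\kappa^2 F(\underline{x})}{\kappa^2+2F'(\underline{x})}\;\le\; N' \;\le\; -\frac{\kappa^2 F(\overline{x})}{\kappa^2+2F'(\overline{x})},
\end{equation*}
and since $F'\ge 0$ and $-1+\tau<F<1+\tau$, this forces $-(1+\tau)\le N'\le 1-\tau$, hence $\abs{N'}\le\max\set{1\mp\tau}$ uniformly in the data. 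Combined with $2F'\le 1$, the identity immediately gives $2\pi\kappa^2\abs{k_+-k_-}\le\max\set{1\mp\tau}\abs\surface$, which is the claim. The square root in the statement is merely the square root of this $\kappa^2$-bound; it does not signal a Cauchy--Schwarz step.

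Your proposed detour --- Cauchy--Schwarz on $\lproduct{V,N'}$ followed by the energy identity obtained by pairing \eqref{eq:N'} with $N'$ --- does not close the argument. First, it still requires exactly the $\Lsp^1$ (or $\Lsp^\infty$) control of $N'$ that you flag as missing, so nothing is saved. Second, even granting $\abs{N'}\le\max\set{1\mp\tau}$, that route yields $\abs{\lproduct{V,N'}}\le\abs{\surface}^{1/2}\brk(\kappa^2\sup\abs{F}\,\norm{N'}_{\Lsp^1})^{1/2}\le\abs{\kappa}\max\set{1\mp\tau}\abs\surface$, i.e.\ a bound on $2\pi\kappa^2\abs{k_+-k_-}$ that is \emph{linear} in $\abs\kappa$ rather than a fixed constant; under Bradlow's bound the right-hand side of the proposition exceeds $1$, so the resulting estimate $\abs{\kappa}\le\max\set{1\mp\tau}\abs\surface/(2\pi\abs{k_+-k_-})$ is strictly weaker than the stated one and does not produce the claimed constant. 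The direct pointwise bound is the ingredient you need.
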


\begin{proof}
  Let $\ux$ be such that $N'(\ux) = \min_\surface N'$ and likewise, 
  let $\ox$ be such that $N'(\ox) = \max_\surface N'$.
  %
%  \begin{align}
%    \ux &= \min_{\surface} \set{N'(x)}, &
%    \ox &= \max_{\surface}\set{
%  N' }.
%  \end{align}
  
  Let us denote $F(v(x) + h(x))$ as $F(x)$. Likewise, we denote $F'(v(x)
  + h(x))$ as $F'(x)$.

  By the maximum principle,
\begin{align}
-\frac{\kappa^2F(\ux)}{\kappa^2 + 2F'(\ux)} \leq
N'(x) \leq
- \frac{\kappa^2 F(\ox)}{\kappa^2 + 2F'(\ox)}.
\end{align}

Since $F'(x) \geq 0$, we conclude the uniform bounds,
\begin{align}\label{eq:N'-bound}
-(1 + \tau) \leq N' \leq (1 - \tau).
\end{align}

From equations \eqref{eq:h}  and \eqref{eq:N'} we obtain,
\begin{align}
\laplacian N'- \frac{\kappa^2}{2} \brk(\laplacian h + \frac{4\pi}{\abs
\surface} (k_+ - k_-)) + 2F'(v + h)N' = 0. 
\end{align}

Integrating this equation,
\begin{align}
-2\pi\kappa^2(k_+ - k_-) + 2 \lproduct{F'(v + h), N'} = 0.
\end{align}

Using the fact that $F'$ is a positive function bounded by
$1/2$ and the uniform bound for $N'$,
\begin{equation}
\kappa^2 = \frac{\lproduct{F'(v + h), N'}}{\pi\,(k_+ - k_-)}
  \leq \frac{\max\set{1 \mp \tau} \abs\surface}{2\pi\,\abs{k_+ - k_-}}.
\end{equation}
\end{proof}

As a consequence of this proposition, we have the following lemma,

\begin{lemma}\label{lem:h-n-bound}
  If $\abs{k_+ - k_-} > 0$, and $(h_{\kappa}, \nf'_{\kappa})$ denotes a 
  solution to the pair of equations~\eqref{eq:h},~\eqref{eq:N'} with
  deformation parameter $\kappa$, for any $p \geq 2$ there is a
  uniform constant 
  $C(p)$, such that,
  \begin{align}
    \norm{h_{\kappa} - c(h_{\kappa})}_{\Wsp^{2, p}} +
    \norm{\nf'_{\kappa}}_{\Wsp^{2, p}} \leq C,
  \end{align}
  where $c(h_{\kappa})$ is the average of $h_{\kappa}$ on the
  surface. 
\end{lemma}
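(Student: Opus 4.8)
The plan is to exploit the uniform bounds already available from Proposition~\ref{prop:kappa-*-ub} together with the standard elliptic machinery (Schauder estimates, Sobolev embedding) collected in Section~\ref{c:intro-gov-elliptic}, bootstrapping from $\Lsp^2$ up to $\Wsp^{2,p}$. First I would recall what is in hand: the neutral field $\nf'_\kappa = \kappa \nf_\kappa$ satisfies the pointwise bound $-(1+\tau)\le \nf'_\kappa\le (1-\tau)$ from~\eqref{eq:N'-bound}, so in particular $\norm{\nf'_\kappa}_{\Lsp^\infty}$ is uniformly bounded, hence so is $\norm{\nf'_\kappa}_{\Lsp^p}$ for every $p\ge 2$ since $\surface$ is compact. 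Similarly $F(v+h_\kappa)$ and $F'(v+h_\kappa)$ are bounded by constants independent of everything, because $F$ and $F'$ are globally bounded functions on $\reals$.

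Next I would bound the right-hand sides of~\eqref{eq:N'} and of the shifted equation $-\laplacian h_\kappa = 2(\kappa\nf_\kappa + F(v+h_\kappa)) + \tfrac{4\pi}{|\surface|}(k_+-k_-) = 2(\nf'_\kappa + F(v+h_\kappa)) + \tfrac{4\pi}{|\surface|}(k_+-k_-)$. The crucial point is that after the substitution $\nf' = \kappa\nf$ the equation for $h_\kappa$ no longer contains $\kappa$ explicitly except through $\kappa^2$-free terms, so $\laplacian h_\kappa$ is uniformly bounded in $\Lsp^\infty$, hence in $\Lsp^p$ for all $p$, and $\laplacian \nf'_\kappa = -(\kappa^2 + 2F'(v+h_\kappa))\nf'_\kappa - \kappa^2 F(v+h_\kappa)$ is uniformly bounded in $\Lsp^p$ using~\eqref{eq:N'-bound} and the uniform bound $\kappa\le \kappa_*(D)\le (\max\{1\mp\tau\}|\surface|/(2\pi|k_+-k_-|))^{1/2}$ from Proposition~\ref{prop:kappa-*-ub}. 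Then Schauder's estimates on the compact surface $\surface$ give $\norm{h_\kappa - c(h_\kappa)}_{\Wsp^{2,p}} \le C\norm{\laplacian h_\kappa}_{\Lsp^p}$ (using the zero-average form of the estimate, since $h_\kappa - c(h_\kappa)$ has zero trace) and $\norm{\nf'_\kappa}_{\Wsp^{2,p}} \le C(\norm{\laplacian \nf'_\kappa}_{\Lsp^p} + \norm{\nf'_\kappa}_{\Lsp^p})$, both with $C$ depending only on $p$ and $\surface$; putting these together yields the claimed bound.

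One subtlety I would address carefully: the constant appearing in Schauder's/elliptic estimates must be genuinely uniform in the core positions $(\vb p,\vb q)$, since the function $v = v_{\vb p} - v_{\vb q}$ moves with the divisors and blows up near the diagonal. This is not an obstacle for the $\Lsp^p$-norms of the right-hand sides because those right-hand sides are $F(v+h_\kappa)$, $F'(v+h_\kappa)$, $\nf'_\kappa$ — all bounded by constants \emph{independent of $v$} since $F,F'$ are bounded — and the Schauder constant on a fixed compact manifold $\surface$ depends only on the metric, not on the data. So the only place $v$ could hurt is inside $\laplacian h_\kappa$, but that is controlled by $2(\nf'_\kappa + F(v+h_\kappa)) + \tfrac{4\pi}{|\surface|}(k_+-k_-)$, again bounded uniformly in $v$. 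I expect this bookkeeping — checking that every constant is independent of both $\kappa$ and the divisor — to be the main (if routine) point requiring care; the rest is a direct application of the elliptic regularity theorems quoted in the excerpt. Finally, if desired one can iterate once more to promote $\Wsp^{2,p}$ to $C^{1,\alpha}$ via Sobolev embedding, but the statement as written only asks for $\Wsp^{2,p}$, so the two-step argument above suffices.
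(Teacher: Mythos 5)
Your proposal is correct and follows essentially the same route as the paper: bound $\kappa$ via Proposition~\ref{prop:kappa-*-ub}, use the pointwise bound~\eqref{eq:N'-bound} on $N'$ and the global boundedness of $F$, $F'$ to get uniform $\Lsp^p$ control of $\laplacian N'_\kappa$ and $\laplacian h_\kappa$, then apply the $\Wsp^{2,p}$ elliptic estimate (which the paper attributes to Calder\'on--Zygmund theory rather than to Schauder, a purely terminological difference since the quoted Schauder theorem in the preliminaries is stated only for $\spL^2$-based spaces). Your extra care about uniformity in the divisor positions is sound but, as you suspect, routine, since every right-hand side is controlled by constants independent of $v$.
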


\begin{proof}
  We will assume the constant $C$ 
 can change from one line to the next. Since $\kappa_{*}$ is
 bounded, from \eqref{eq:N'} we
have a uniform bound for the norm of $N'$ in Sobolev's space, 
\begin{equation}
  \norm{\laplacian N'}_{\Lsp^p} \leq C.
\end{equation}

By Calderon-Sygmund theory, this implies,
\begin{equation}
  \norm{N'}_{\Wsp^{2,p}} \leq C \brk(
  \norm{\laplacian N'}_{\Lsp^p} + \norm{N'}_{\Lsp^p}
  ) \leq C.
\end{equation}

Similarly, from~\eqref{eq:h} we deduce the existence of an upper
bound for the set
$\set{h_{\kappa} - c(h_{\kappa})}$ in $\Wsp^{2,p}$. 
\end{proof}

If we fix any $p > 2$, Sobolev's theory says that the embedding
$\Wsp^{2,p} \to C^1(\surface)$ is continuous, thence there 
exists a constant independent of $\kappa$, such that,
\begin{equation}
\norm{h_{\kappa} - \av(h_{\kappa})}_{C^1} +  \norm{N'}_{C^1} \leq C. 
\end{equation}

Let $(h_n, N'_n)$ denote a sequence of solutions to the
elliptic equations \eqref{eq:h}-\eqref{eq:N'} with a corresponding
sequence of parameters $\set{\kappa_n}$. We are interested in
describing the behaviour of these solutions as $\kappa_n \to
0$. Although the sequence $(h - \av(h), \nf')$ is uniformly bounded in 
$C^1$ we cannot rule out the possibilities $\av(h_n) \to \pm
\infty$. In the following lemmas we deal with the three cases arising
on this analysis.

\begin{lemma}\label{lem:hn-N'n-conv}
  If $(h_n, N'_n)$ is a sequence of solutions to the elliptic
  equations \eqref{eq:h}-\eqref{eq:N'} with parameters $\kappa_n \to
  0$ such that the sequence is bounded in $\Hsp^1\times\Hsp^1$, then
  for any $p \geq 2$ the sequence converges  
  to $(h_0, 0)$ strongly in $\Wsp^{2,p} \times \Wsp^{2,p}$, where
  $h_0$ is the solution to the regularized Taubes equation.
\end{lemma}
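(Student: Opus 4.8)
The plan is to extract a weak limit from the bounded sequence, identify it using the elliptic equations, and then upgrade convergence via elliptic regularity and a bootstrap. First I would invoke Banach--Alaoglu and Rellich--Kondrachov: since $(h_n, N'_n)$ is bounded in $\Hsp^1\times\Hsp^1$, after passing to a subsequence we may assume $h_n \wto h_*$ and $N'_n \wto N'_*$ weakly in $\Hsp^1$, and strongly in $\Lsp^2$, and (passing to a further subsequence) pointwise a.e. Because $\kappa_{*}(D)$ is bounded, the sequence $\kappa_n$ is bounded and $\kappa_n \to 0$. Since $\surface$ is compact we may also assume the cores are fixed (they do not vary here), so the regularising function $v$ is fixed; the potential terms $F(v + h_n)$ and $F'(v + h_n)$ are uniformly bounded (as $F$ and $F'$ are bounded) and converge pointwise a.e.\ to $F(v + h_*)$, $F'(v + h_*)$, hence strongly in every $\Lsp^p$ by dominated convergence.

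Next I would pass to the limit in the weak formulations of \eqref{eq:h} and \eqref{eq:N'}. Testing \eqref{eq:N'} against a test function and using $\kappa_n^2 \to 0$ together with the uniform bound $-(1+\tau)\le N'_n \le (1-\tau)$ from \eqref{eq:N'-bound}, the term $\kappa_n^2 F(v+h_n)$ and $\kappa_n^2 N'_n$ vanish in $\Lsp^2$, so in the limit $N'_*$ is a weak solution of $\laplacian N'_* + 2F'(v+h_*)\,N'_* = 0$. Since $2F'(v+h_*) = V(v+h_*) \ge 0$ is continuous with zero set exactly $\vset\cup\avset$, Lemma~\ref{lem:h2-l2-iso} (Schr\"odinger operator isomorphism, applied with the constant-rescaled potential, or directly the maximum principle) forces $N'_* = 0$. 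Then the limit of \eqref{eq:h} reads $\laplacian h_* + 2F(v + h_*) + \frac{4\pi}{|\surface|}(k_+ - k_-) = 0$, i.e.\ $h_*$ is a weak solution of the regularised Taubes equation; by uniqueness (Theorem~\ref{thm:main}) $h_* = h_0$.

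Then I would upgrade to strong convergence by a bootstrap exactly as in the proofs of Theorem~\ref{thm:h-reg-param-dep} and the lemmas in Section~\ref{sec:compact-vav-eqn}. From \eqref{eq:N'} and \eqref{eq:h}, $\norm{\laplacian N'_n}_{\Lsp^2}$ and $\norm{\laplacian(h_n - h_0)}_{\Lsp^2}$ are controlled by quantities going to $0$ (using $\kappa_n\to 0$, the pointwise-and-dominated convergence of the $F$, $F'$ terms, and the $\Lsp^2$ convergence of $h_n$, $N'_n$); to make this clean I should first argue $h_n \to h_0$ and $N'_n \to 0$ strongly in $\Lsp^2$ — this follows from the weak $\Hsp^1$ convergence plus the identified limit plus the Schauder/elliptic estimate $\norm{h_n - \avg{h_n} - (h_0 - \avg{h_0})}_{\Hsp^2}\le C\norm{\laplacian h_n - \laplacian h_0}_{\Lsp^2}$, and a separate argument that the averages $\avg{h_n}$ converge (here one uses that the sequence is already assumed bounded in $\Hsp^1$, so $\avg{h_n}$ is bounded, and any subsequential limit of $\avg{h_n}$ must give, via the same weak-limit argument, a solution of the Taubes equation, hence equals $\avg{h_0}$ by uniqueness). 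Once $h_n\to h_0$, $N'_n\to 0$ in $\Lsp^2$, the elliptic estimate gives convergence in $\Hsp^2$, hence (Sobolev) in $C^0$, hence in $\Lsp^p$ for all $p$; feeding this back into \eqref{eq:h}, \eqref{eq:N'} gives $\norm{\laplacian(h_n-h_0)}_{\Lsp^p} + \norm{\laplacian N'_n}_{\Lsp^p}\to 0$, and a final application of the $\Wsp^{2,p}$ elliptic estimate yields $h_n\to h_0$ and $N'_n \to 0$ in $\Wsp^{2,p}$. Since every subsequence has a further subsequence converging to the same limit, the whole sequence converges.

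The main obstacle I anticipate is the average term: the $C^1$ bound in Lemma~\ref{lem:h-n-bound} only controls $h_n - c(h_n)$, so a priori $c(h_n)$ could run off to $\pm\infty$, and this is precisely why the hypothesis ``bounded in $\Hsp^1$'' is imposed — it rules this out. I would make sure to use that hypothesis explicitly at the point where I pass to the weak limit and when I argue $c(h_n)\to c(h_0)$; the other two cases $c(h_n)\to\pm\infty$ are handled in the companion lemmas (Bradlow-bound contradictions of the type in Lemma~\ref{lem:h0-bound}), so here I only need the bounded case. A secondary technical point worth care is that the Schr\"odinger operator $\laplacian + 2F'(v+h_*)$ appearing in the equation for $N'_*$ has a potential vanishing at the cores; Lemma~\ref{lem:h2-l2-iso} applies once one checks $2F'(v+h_*)$ meets its hypotheses (non-negative, smooth away from the finite zero set, bounded with bounded derivatives on the compact surface), which it does since $h_*=h_0\in C^2$ away from the cores.
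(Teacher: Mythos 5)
Your proposal is correct and follows essentially the same route as the paper's proof: weak compactness via Banach--Alaoglu and Rellich--Kondrachov, identification of the limit $(h_0,0)$ through the limiting elliptic system (with $N'_*=0$ forced by the Schr\"odinger operator / maximum principle and $h_*=h_0$ by uniqueness of the regularised Taubes equation), the subsequence trick to upgrade to full-sequence $\Lsp^2$ convergence, and then the same elliptic-estimate bootstrap through $\Hsp^2$, $C^0$, $\Lsp^p$ and $\Wsp^{2,p}$. Your handling of the averages is a slight detour the paper avoids by applying the elliptic estimate directly to $h_n-h_0$ once $\Lsp^2$ convergence is in hand, but this is only a presentational difference, not a gap.
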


In particular, this means the convergence is uniform in $C^1\times
C^1$. 

\begin{proof}
  By the Banach-Alaoglu theorem, for any subsequence $(h_{n_k},
  \nf'_{n_k})$, after passing to another subsequence if necessary, we can
  assume $(h_{n_k}, \nf'_{n_k}) \to (h_{*},\nf'_{*})$ weakly in
  $\Hsp^1\times \Hsp^1$ and by the Rellich-Kondrachov lemma,
  strongly in $\Lsp^2\times\Lsp^2$. Let $(u, w) \in
  \Hsp^1\times\Hsp^1$, equations~\eqref{eq:h}~-~\eqref{eq:N'} can
  be expressed in weak form as,
  \begin{equation}%\label{eq:h-N'-weak}
    \begin{gathered}
      \lproduct{\grad u, \grad h_{n_k}} + \lproduct*{u, 2\,(\nf'_{n_k}  +
      F(v + h_{n_k})) +  \frac{4\pi}{\abs\surface}\,(k_+ - k_-)} =
  0,\\
      \lproduct{\grad w, \grad \nf'_{n_k}} + \lproduct*{w,
        (\kappa_{n_k}^2 + 2\,F'(v + h_{n_k}))\,\nf'_{n_k} +
        \kappa^2_{n_k}\,F(v + h_{n_k})} = 0.
    \end{gathered}
  \end{equation}

Weak convergence in $\Wsp^1$ plus strong convergence in
$\Lsp^2$ imply
\begin{align}
  \lproduct{\grad u, \grad h_{n_k}} \to \lproduct{\grad u, \grad
    h_{*}}, &&
  \lproduct{\grad w, \grad \nf'_{n_k}} \to \lproduct{\grad w, \grad
    \nf'_{*}}.
\end{align}

After passing to another subsequence if necessary, we can assume $h_{n_k}
\to h_{*}$ pointwise almost everywhere. By the 
dominated convergence theorem,
\begin{align}
  \lproduct{u, F(v + h_{n_k})} = \lproduct{u, F(v + h_{*})},
\end{align}
and simmilarly for $w$. Therefore, $(h_{*}, \nf'_{*})$ is a weak solution
to the equations,

  \begin{align}
    \laplacian h_{*} + 2\,(\nf'_{*} + F(v + h_{*})) + \frac{4\pi}{\abs
      \surface}\,(k_+ - k_-) = 0,\label{eq:h*}\\
    \laplacian \nf'_{*} + 2\,F'(v + h_{*})\,\nf'_{*} = 0.\label{eq:N'*}
  \end{align}

Ellipticity guarantees the solution is in fact strong, hence, by the usual
elliptic estimates, $(h_{*}, \nf'_{*}) \in \Hsp^2\times \Hsp^2$ and  
the solution is continuous. This together with~\eqref{eq:N'*} implies
$\nf'_{*} \equiv 0$. Therefore~\eqref{eq:h*} is the regularised
Taubes equation, whose unique solution is $h_{*} \equiv
h_0$.

Since any subsequence of $(h_n, \nf'_n)$ can be refined to a
convergent subsequence to 
$(h_0, 0)$ in $\Lsp^2\times \Lsp^2$, we
obtain the limit,
\begin{align}
  \norm{h_n - h_0}_{\Lsp^2} + \norm{\nf'_n}_{\Lsp^2} \to 0.
\end{align}

In particular, $\nf'_n \to 0$ in $\Lsp^2$, this limit and the
boundedness of the functions $F(t)$ and $F'(t)$ imply by means of
equation~\eqref{eq:N'} the limit,
\begin{align}
  \norm{\laplacian \nf'_n}_{\Lsp^2} \to 0.
\end{align}

Hence, by the usual combination of Schauder's estimates and Sobolev's
embedding, we find two constants such that,
\begin{align}
  \norm{\nf'_n}_{C^0} \leq C_1\,\norm{\nf'_n}_{\Hsp^2} \leq
  C_2\,(\norm{\laplacian \nf'_n}_{\Lsp^2} + \norm{\nf'_n}_{\Lsp^2}) \to 0.
\end{align}

Whereas by equation~\eqref{eq:h}, 
\begin{align}
  -\laplacian (h_n - h_0) = 2\,\nf'_n + 2\,(F(v +
  h_n) - F(v + h_0)).\label{eq:lapl-hn-h0-nfp-Fn-F0}
\end{align}

Note that by the mean value theorem, for any $x \not\in Z$, 
\begin{align}
F(v(x) + h_n(x)) - F(v(x) + h_0(x)) = F'(\xi)\,(h_n(x) - h_0(x)),
\end{align}
for some $\xi$ between $h_n(x)$ and $h_0(x)$, whereas for $x \in Z$,
\begin{align}
F(v(x) + h_n(x)) - F(v(x) + h_0(x)) = 0,
\end{align}
since $F(v(x) + h_n(x)) = F(v(x) + h_0(x)) = \pm 1 + \tau$ in this case, 
hence, there is 
a constant $C > 0$, such that,
\begin{align}
|F(v + h_n) - F(v + h_0)| \leq C\,|h_n - h_0|,\label{eq:abs-Fn-F0-leq-hn-h0}
\end{align}
by~\eqref{eq:lapl-hn-h0-nfp-Fn-F0} and~\eqref{eq:abs-Fn-F0-leq-hn-h0},
\begin{align}
  \norm{\laplacian(h_n - h_0)}_{\Lsp^2} &\leq 2\,\norm{\nf'_n}_{\Lsp^2} +
  2\,\norm{F(v +
    h_n) - F(v + h_0)}_{\Lsp^2}\nonumber\\
  &\leq C\,(\norm{\nf'_n}_{\Lsp^2} + \norm{h_n-h_0}_{\Lsp^2}).
\end{align}

We deduce $\norm{\laplacian(h_n-h_0)}_{\Lsp^2} \to
0$. Repeating the elliptic estimate argument we find,
\begin{align}
  \norm{h_n- h_0}_{\Hsp^2} \to 0,
\end{align}
and consequently also $h_n \to h_0$ in $C^0(\surface)$. Finally, we
follow a bootstrap argument. Knowing the convergence $(h_n, \nf'_n)
\to (h_0, 0)$ is uniform in $C^0$, we can repeat the previous
computations for the norm of the Laplacian, this time in the $\Lsp^p$
norm and deduce the claimed limit.
\end{proof}

In case $(h_n, \nf'_n)$ is not bounded, necessarily $\set{c(h_n)}$ has
a subsequence diverging to $\pm \infty$. We consider each possibility in
the following lemma.

\begin{lemma}\label{lem:hn-N'n-div}
    If $\set{(h_n, \nf'_n)}$ is a sequence of solutions to
    equations~\eqref{eq:h}~\eqref{eq:N'} such that $\kappa_n \to 0$ 
   and $c(h_n) \to \infty$, and if 
  $p \geq 2$ is a fixed but otherwise arbitrary constant, the
  following limit holds,
  \begin{align}
    \norm{h_n - c(h_n)}_{\Wsp^{2,p}} + \norm{\nf'_n - \alpha_+}_{\Wsp^{2,p}} 
    \to 0,
  \end{align}
  where
  \begin{align}
    \alpha_+ = -1  - \tau - \frac{2\pi\,(k_+ - k_-)}{\abs \surface},
  \end{align}
  and the condition $k_+ - k_- < 0$ is necessary. Similar statements
  hold if $c(h_n) \to -\infty$, where the constant $\alpha$ in this case
  is,
  \begin{align}
    \alpha_- = 1  - \tau - \frac{2\pi\,(k_+ - k_-)}{\abs \surface},
  \end{align}
  and the condition $k_+ - k_- > 0$ is necessary.
\end{lemma}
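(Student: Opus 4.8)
The plan is to reuse the bootstrap machinery from Lemma~\ref{lem:hn-N'n-conv}, but now centred at the diverging average rather than at $h_0$. First I would set $\tilde h_n = h_n - c(h_n)$, so that $\set{\tilde h_n}$ is bounded in $C^1$ by Lemma~\ref{lem:h-n-bound} and has zero average; likewise $\set{\nf'_n}$ is bounded in $C^1$ with uniformly bounded Laplacian (Proposition~\ref{prop:kappa-*-ub}). By Banach--Alaoglu and Rellich--Kondrachov, along a subsequence $\tilde h_n \wto \tilde h_*$ and $\nf'_n \wto \nf'_*$ weakly in $\Hsp^1$ and strongly in $\Lsp^2$, with pointwise a.e.\ convergence after a further refinement. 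The crucial observation is that on $\surface\setminus Z$ the argument $v(x)+h_n(x) = v(x)+c(h_n)+\tilde h_n(x) \to +\infty$ (using $c(h_n)\to\infty$ and boundedness of $v+\tilde h_n$ away from the cores), so $F(v+h_n) \to 1+\tau$ and $F'(v+h_n)\to 0$ pointwise a.e. Passing to the weak limit in the weak formulations of~\eqref{eq:h} and~\eqref{eq:N'}, and using the dominated convergence theorem for the $F$, $F'$ terms, I get that $(\tilde h_*, \nf'_*)$ weakly solves $\laplacian \tilde h_* + 2\,\nf'_* + 2(1+\tau) + \tfrac{4\pi}{\abs\surface}(k_+-k_-) = 0$ and $\laplacian \nf'_* = 0$. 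The second equation forces $\nf'_*$ constant on the compact $\surface$; substituting into the first and integrating kills the Laplacian, yielding $\nf'_* = -1-\tau - \tfrac{2\pi(k_+-k_-)}{\abs\surface} = \alpha_+$ and hence $\laplacian \tilde h_* = 0$, so $\tilde h_*$ is constant, and being of zero average, $\tilde h_* \equiv 0$.

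Next I would establish the necessity of $k_+ - k_- < 0$. From the uniform bound~\eqref{eq:N'-bound}, $-(1+\tau)\le \nf'_n \le 1-\tau$, hence $\alpha_+ = \lim \nf'_n$ must satisfy $-(1+\tau)\le \alpha_+$, which rearranges precisely to $k_+ - k_- \le 0$; the strict inequality comes from the hypothesis $\abs{k_+-k_-}>0$. (Equivalently, if $k_+-k_-\ge 0$ one can integrate the combined equation $\laplacian N' - \tfrac{\kappa^2}{2}(\laplacian h + \tfrac{4\pi}{\abs\surface}(k_+-k_-)) + 2F'(v+h)N' = 0$ as in Proposition~\ref{prop:kappa-*-ub} to get $\lproduct{F'(v+h),N'} = \pi\kappa^2(k_+-k_-)$, and a sign analysis as $\kappa_n\to 0$ with the known sign of $\nf'_n$ forces a contradiction.) Since every subsequence of $(\tilde h_n, \nf'_n)$ has a further subsequence converging in $\Lsp^2$ to the same limit $(0,\alpha_+)$, the full sequence converges: $\norm{\tilde h_n}_{\Lsp^2} + \norm{\nf'_n - \alpha_+}_{\Lsp^2} \to 0$.

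Finally I would upgrade this $\Lsp^2$ convergence to $\Wsp^{2,p}$ by the same bootstrap used at the end of Lemma~\ref{lem:hn-N'n-conv}. From~\eqref{eq:N'} write $-\laplacian(\nf'_n - \alpha_+) = (\kappa_n^2 + 2F'(v+h_n))\nf'_n + \kappa_n^2 F(v+h_n)$; since $\kappa_n\to 0$, $\nf'_n$ is bounded, and $F'(v+h_n)\to 0$ in $\Lsp^2$ (dominated convergence, $F'$ bounded, pointwise limit $0$), the right side goes to $0$ in $\Lsp^2$, so $\laplacian \nf'_n \to 0$ in $\Lsp^2$; Schauder plus Sobolev give $\nf'_n \to \alpha_+$ in $C^0$. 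Then in~\eqref{eq:h}, $-\laplacian \tilde h_n = 2\nf'_n + 2F(v+h_n) + 2(1+\tau) - 2(1+\tau) + \tfrac{4\pi}{\abs\surface}(k_+-k_-)$; a mean-value-theorem estimate as in~\eqref{eq:abs-Fn-F0-leq-hn-h0}, comparing $F(v+h_n)$ with its pointwise limit $1+\tau$ (noting $F'$ decays so the difference is controlled off $Z$ and vanishes on $Z$), shows $\laplacian \tilde h_n \to 0$ in $\Lsp^2$, hence $\tilde h_n \to 0$ in $C^0$. With uniform $C^0$ control in hand one repeats the Laplacian estimates in $\Lsp^p$ and applies elliptic regularity once more to conclude $\norm{\tilde h_n}_{\Wsp^{2,p}} + \norm{\nf'_n - \alpha_+}_{\Wsp^{2,p}} \to 0$. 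The case $c(h_n)\to -\infty$ is identical with $v+h_n \to -\infty$, giving $F\to \tau-1$, $F'\to 0$, and $\alpha_- = 1-\tau-\tfrac{2\pi(k_+-k_-)}{\abs\surface}$, with $1-\tau \ge \alpha_-$ forcing $k_+-k_- \ge 0$. The main obstacle I anticipate is making the pointwise divergence of $v+h_n$ on $\surface\setminus Z$ rigorous and uniform enough to apply dominated convergence cleanly — in particular controlling the behaviour near the cores, where $v$ is singular; but this is handled by the fact that $v + \tilde h_n$ is bounded on compact subsets away from $Z$ uniformly in $n$ (from the $C^1$ bound on $\tilde h_n$), and near $Z$ the potential $F'(v+h_n)$ is dominated by an integrable function independent of $n$ since $v\to\mp\infty$ there and $F'$ is bounded with the right decay.
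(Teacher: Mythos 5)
Your proposal is correct and follows essentially the same route as the paper: recentre at the diverging average, extract a weak/strong limit via Banach--Alaoglu and Rellich, use $c(h_n)\to\infty$ to force $F\to 1+\tau$ and $F'\to 0$ in the limit equations, identify $\nf'_*$ as the constant $\alpha_+$ by integration, kill $\tilde h_*$ via its zero average, and bootstrap to $\Wsp^{2,p}$; the necessity of $k_+<k_-$ from the uniform bound~\eqref{eq:N'-bound} is also exactly the paper's argument. The extra care you take with dominated convergence near the cores is a detail the paper leaves implicit but does not change the structure.
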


\begin{proof}
  We proceed as in lemma~\ref{lem:hn-N'n-conv}. Let $\tilde h_n = h_n
  - \av(h_n)$, we consider a subsequence $(\tilde h_{n_k},
  \nf'_{n_k})$. By lemma~\ref{lem:h-n-bound}, after passing to another
  subsequence if necessary, we can assume $(\tilde h_{n_k}, \nf'_{n_k})
  \to (\tilde 
  h_{*}, \nf_{*}')$ weakly in $\Hsp^1\times \Hsp^1$ and strongly in
  $\Lsp^2\times \Lsp^2$. Repeating the argument of
  lemma~\ref{lem:hn-N'n-conv}, this time using the fact that
  $\av(h_{n_k}) \to \infty$, we find $(\tilde h_{*}, \nf'_{*})$ is a
  strong solution to the equations,

    \begin{gather}
      \laplacian \tilde h_{*} + 2\,(\nf'_{*} + 1 + \tau) +
      \frac{4\pi}{\abs \surface}\, (k_+ - k_-) = 0,\label{eq:tilde-h-n}
      \\
      \laplacian \nf'_{*} = 0.\nonumber
    \end{gather}

  Since $\nf'_{*}$ is in the kernel of the Laplacian, it has to be a
  constant function $\alpha$. By a bootstrap argument, $\tilde
  h_{*}$ is smooth. To determine the constant, we integrate
  equation~\eqref{eq:tilde-h-n} by means of the divergence theorem and find,
  \begin{align}
    \alpha = - 1 - \tau - \frac{2\pi}{\abs \surface}\,(k_+ - k_-),
  \end{align}
  and,
  \begin{align}
    \laplacian \tilde h_{*} = 0.
  \end{align}

  Since each $\tilde h_n$ has zero average the same holds for $\tilde
  h_{*}$, therefore $\tilde h_{*} \equiv 0$.

  We follow once more the pattern of lemma~\ref{lem:hn-N'n-conv} to
  conclude strong convergence $(\tilde h_{n_k},\nf'_{n_k}) \to (0,
  \alpha)$ in $\Lsp^2\times \Lsp^2$ and deduce that for any $p \geq
  2$, 
  \begin{align}
    \norm{\tilde h_n}_{\Wsp^{2,p}} + \norm{\nf'_n - \alpha}_{\Wsp^{2,p}} \to 0.
  \end{align}

  Finnally, by~\eqref{eq:N'-bound},
  \begin{align}
    -(1 + \tau) \leq \alpha \leq 1 - \tau,
  \end{align}
  hence,
  \begin{align}
    -(1 + \tau) \leq -(1 + \tau) - \frac{2\pi}{\abs \surface}\,(k_{+}
    - k_-),
  \end{align}
  implying $k_+ < k_-$. It is clear these arguments can be rearranged for
  the case $\av(h_n) \to -\infty$. 
\end{proof}

Lemmas~\ref{lem:hn-N'n-conv},~\ref{lem:hn-N'n-div} thus prove the
following proposition.

\begin{proposition}\label{prop:cs-lim-points}
  If $\set{(h_n, \nf'_n)}$ is a sequence of solutions to
  equations~\eqref{eq:h}~\eqref{eq:N'} such that $\kappa_n \to 0$, the
  only possible limit points are $(h_0, 0)$ and $(0, \alpha_{\pm})$,
  for $\alpha_{\pm}$ defined on lemma~\ref{lem:hn-N'n-div}. If
  $\set{(h_n, \nf'_n)}$ has a bounded (unbounded) subsequence, then
  $(h_0, 0)$ ($(0, \alpha_{\pm})$) is a limit point. 
\end{proposition}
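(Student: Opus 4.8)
The proof of Proposition~\ref{prop:cs-lim-points} is really a bookkeeping exercise combining Lemmas~\ref{lem:hn-N'n-conv} and~\ref{lem:hn-N'n-div}, so the plan is simply to organise the case analysis carefully. First I would recall the basic dichotomy established by Lemma~\ref{lem:h-n-bound}: for any $p\geq 2$ the sequence $\set{(h_n - c(h_n), \nf'_n)}$ is uniformly bounded in $\Wsp^{2,p}\times\Wsp^{2,p}$, hence in $C^1\times C^1$. Therefore a subsequence $\set{(h_{n_k}, \nf'_{n_k})}$ is bounded in $\Hsp^1\times\Hsp^1$ if and only if the corresponding sequence of averages $\set{c(h_{n_k})}$ is bounded, and in that case, after passing to a further subsequence, $c(h_{n_k})$ converges; if instead $\set{c(h_{n_k})}$ is unbounded, a further subsequence has $c(h_{n_k})\to +\infty$ or $c(h_{n_k})\to -\infty$.

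Next I would treat the bounded case. If $\set{(h_n,\nf'_n)}$ has a bounded subsequence in $\Hsp^1\times\Hsp^1$, Lemma~\ref{lem:hn-N'n-conv} applied to that subsequence shows it converges to $(h_0,0)$ strongly in $\Wsp^{2,p}\times\Wsp^{2,p}$, in particular in $C^1\times C^1$; so $(h_0,0)$ is a limit point. Conversely, if $(h_*,\nf'_*)$ is any limit point arising from a subsequence that is bounded in $C^1$ but not necessarily in $\Hsp^1$, the decomposition $h_n = \tilde h_n + c(h_n)$ forces the situation: either $\set{c(h_{n_k})}$ is bounded, reducing to Lemma~\ref{lem:hn-N'n-conv} and giving $(h_0,0)$, or $c(h_{n_k})\to\pm\infty$. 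In the latter case Lemma~\ref{lem:hn-N'n-div} gives $\norm{h_{n_k} - c(h_{n_k})}_{\Wsp^{2,p}} + \norm{\nf'_{n_k} - \alpha_\pm}_{\Wsp^{2,p}}\to 0$ with $\alpha_\pm$ as defined there, whence the $C^1$-limit point is $(0,\alpha_\pm)$; moreover the necessary sign conditions ($k_+<k_-$ for $\alpha_+$, $k_+>k_-$ for $\alpha_-$) are recorded in that lemma, so one of the two constants may be excluded depending on the sign of $k_+-k_-$. This establishes that the only possible limit points are $(h_0,0)$, $(0,\alpha_+)$, $(0,\alpha_-)$, and that an unbounded subsequence forces one of $(0,\alpha_\pm)$ to appear as a limit point.

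The last assertion — that a bounded subsequence forces $(h_0,0)$ to be a limit point and an unbounded one forces $(0,\alpha_\pm)$ — follows from the two observations just made, since a subsequence of $\set{(h_n,\nf'_n)}$ that is bounded (unbounded) in $\Hsp^1\times\Hsp^1$ can be refined to one converging to $(h_0,0)$ (respectively to $(0,\alpha_\pm)$) by the cited lemmas. I do not anticipate a genuine obstacle here: all the analytic work has been done in the preceding lemmas, and the proposition is a matter of patching the cases together and making explicit which limits are compatible with the fixed sign of $k_+-k_-$. The only point requiring slight care is the convention for ``limit point'': one must state it with respect to $C^1(\surface)\times C^1(\surface)$ convergence (equivalently $\Wsp^{2,p}\times\Wsp^{2,p}$ for any $p\geq 2$), consistent with the topology in which Lemmas~\ref{lem:hn-N'n-conv} and~\ref{lem:hn-N'n-div} prove convergence, so that the statement is unambiguous.
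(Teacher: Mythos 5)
Your proposal is correct and follows essentially the same route as the paper, which simply observes that Lemmas~\ref{lem:hn-N'n-conv} and~\ref{lem:hn-N'n-div} (together with the dichotomy on the averages $c(h_n)$ coming from Lemma~\ref{lem:h-n-bound}) already establish the proposition. Your explicit case analysis, and your remark that the limit point $(0,\alpha_\pm)$ must be read via the decomposition $h_n = (h_n - c(h_n)) + c(h_n)$ in the topology of $\Wsp^{2,p}\times\Wsp^{2,p}$, is exactly the bookkeeping the paper leaves implicit.
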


\subsection{Existence of multiple solutions}\label{sec:exist-mult-sols}

In this section we prove the existence of multiple solutions to the field 
equations using theorem~\ref{thm:ls-deg-continuum}. In order to do this, 
 we define an operator $\Phi$ satisfying the hypothesis of the 
 theorem trough a series of technical lemmas.

\begin{lemma}\label{lem:kappa-bounded}
  For any $0 < \kappa_0 < \kappa_{*}(D)$ and $p \geq 2$, the set
  $\set{(h_{\kappa}, N'_{\kappa}) \mid  \kappa > \kappa_0}$ is
  bounded in $\Wsp^{2,p}\times\Wsp^{2,p}$.
\end{lemma}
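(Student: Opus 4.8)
The plan is to adapt the strategy used in Lemma~\ref{lem:h-n-bound} and Proposition~\ref{prop:kappa-*-ub}, but this time exploiting a fixed lower bound $\kappa_0$ on the deformation parameter to control the \emph{average} $c(h_\kappa)$ of $h_\kappa$, which was the only piece of data not already controlled uniformly. Recall that by Lemma~\ref{lem:h-n-bound} the differences $h_\kappa - c(h_\kappa)$ and the neutral fields $N'_\kappa$ are already bounded in $\Wsp^{2,p}$ uniformly in $\kappa \in (0,\kappa_*(D))$, because $\kappa_*(D)$ is finite (Proposition~\ref{prop:kappa-*-ub}) and the nonlinearities $F$, $F'$ are bounded. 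So the claim reduces to showing that $\{c(h_\kappa) \mid \kappa_0 < \kappa < \kappa_*(D)\}$ is a bounded subset of $\reals$; then a final application of Schauder's estimates and Sobolev's embedding upgrades this to the $\Wsp^{2,p}$ bound on $(h_\kappa, N'_\kappa)$ itself.

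First I would argue by contradiction: suppose there is a sequence $(h_n, N'_n)$ of solutions with parameters $\kappa_n > \kappa_0$ and $|c(h_n)| \to \infty$. Passing to a subsequence we may assume $\kappa_n \to \kappa_\infty \in [\kappa_0, \kappa_*(D)]$ and, say, $c(h_n) \to +\infty$ (the case $-\infty$ being symmetric). Writing $\tilde h_n = h_n - c(h_n)$, Lemma~\ref{lem:h-n-bound} gives a uniform $\Wsp^{2,p}$ bound on $\tilde h_n$ and $N'_n$; by Banach--Alaoglu and Rellich--Kondrachov, after a further subsequence, $\tilde h_n \wto \tilde h_* $ and $N'_n \wto N'_*$ weakly in $\Hsp^1$ and strongly in $\Lsp^2$, with $\tilde h_n \to \tilde h_*$ pointwise a.e. Since $c(h_n) \to \infty$, we have $F(v + h_n) = F(v + \tilde h_n + c(h_n)) \to 1 + \tau$ and $F'(v + h_n) \to 0$ pointwise a.e. (away from the cores, which form a null set), and the dominated convergence theorem lets me pass to the limit in the weak formulations of \eqref{eq:h} and \eqref{eq:N'}. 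This is essentially the computation in Lemma~\ref{lem:hn-N'n-div}, except here $\kappa_n$ does \emph{not} tend to $0$: the limiting equation for $N'_*$ becomes
\begin{equation}
  \laplacian N'_* + \kappa_\infty^2\, N'_* + \kappa_\infty^2\,(1 + \tau) = 0,
\end{equation}
while $\tilde h_*$ solves $\laplacian \tilde h_* + 2(N'_* + 1 + \tau) + \frac{4\pi}{|\surface|}(k_+ - k_-) = 0$. Integrating the $N'_*$ equation over $\surface$ via the divergence theorem gives $\kappa_\infty^2 \int_\surface (N'_* + 1 + \tau)\,\vol = 0$, i.e. $\int_\surface N'_* \,\vol = -(1+\tau)|\surface|$; but from \eqref{eq:N'-bound} (which is a pointwise bound on $N' = \kappa N$ obtained via the maximum principle and is stable under the $\Lsp^2$-limit) we have $N'_* \geq -(1+\tau)$ a.e., forcing $N'_* \equiv -(1+\tau)$. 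Feeding this back, $\laplacian \tilde h_* = -\frac{4\pi}{|\surface|}(k_+ - k_-)$, which since $\tilde h_*$ has zero average has a solution only if we also check consistency; more importantly, plugging $N'_* \equiv -(1+\tau)$ into the $N'_*$-equation gives $\kappa_\infty^2(-(1+\tau)) + \kappa_\infty^2(1+\tau) = 0$, which is automatically satisfied, so I need a sharper contradiction. The real obstruction comes from integrating the combination $\laplacian N'_n - \tfrac{\kappa_n^2}{2}(\laplacian h_n + \tfrac{4\pi}{|\surface|}(k_+ - k_-)) + 2F'(v+h_n)N'_n = 0$ (as in the proof of Proposition~\ref{prop:kappa-*-ub}), which yields $\kappa_n^2 \pi (k_+ - k_-) = \langle F'(v+h_n), N'_n\rangle \to \langle 0, N'_*\rangle = 0$, contradicting $\kappa_n \geq \kappa_0 > 0$ and $k_+ \neq k_-$.

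The main obstacle, then, is organizing the limiting argument so that the right integral identity surfaces the contradiction: the naive limit of the individual equations is consistent, and one must instead take the limit in the \emph{integrated} consistency relation $\kappa_n^2 = \langle F'(v+h_n), N'_n\rangle / (\pi(k_+ - k_-))$ to see that the left side is bounded below by $\kappa_0^2$ while the right side tends to $0$. Once the average is bounded, $\{h_\kappa\}$ is bounded in $\Lsp^p$ for all $p$, hence by \eqref{eq:h}, \eqref{eq:N'} and the boundedness of $F$, $F'$ the Laplacians $\laplacian h_\kappa$, $\laplacian N'_\kappa$ are bounded in $\Lsp^p$; Calderón--Zygmund (or Schauder) estimates then give the uniform $\Wsp^{2,p}$ bound for $(h_\kappa, N'_\kappa)$ over $\kappa > \kappa_0$, completing the proof. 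The same contradiction also shows $c(h_\kappa) \to -\infty$ is impossible, so no separate case analysis is needed beyond the sign bookkeeping already present in Lemma~\ref{lem:hn-N'n-div}.
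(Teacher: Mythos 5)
Your proof is correct and shares the paper's overall strategy: reduce everything to bounding the average $c(h_\kappa)$ via Lemma~\ref{lem:h-n-bound}, suppose it diverges along a subsequence with $\kappa_n \to \kappa_\infty \geq \kappa_0 > 0$, and derive a contradiction. Where you diverge from the paper is in how the contradiction is extracted. The paper integrates \emph{both} limiting equations over $\surface$: the divergence theorem applied to the $N'_*$-equation determines $\lproduct{N'_*, 1}$, while applied to the $\tilde h_*$-equation it determines the same quantity shifted by $-2\pi(k_+ - k_-)$, so $k_+ = k_-$ follows at once. The ``sharper contradiction'' you went looking for was therefore already on the table — you integrated only the first limiting equation and then chased the pointwise conclusion $N'_* \equiv -(1+\tau)$, which is indeed self-consistent, but integrating the second equation closes the argument immediately. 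That said, the route you settled on is also valid and arguably lighter: passing to the limit in the exact identity $\kappa_n^2\,\pi(k_+ - k_-) = \lproduct{F'(v + h_n), N'_n}$ from the proof of Proposition~\ref{prop:kappa-*-ub}, using that $F'(v+h_n) \to 0$ pointwise a.e.\ (since $h_n - c(h_n)$ is uniformly bounded in $C^0$ and $|c(h_n)| \to \infty$) together with the uniform pointwise bound \eqref{eq:N'-bound} on $N'_n$ and dominated convergence. This needs no weak-compactness extraction or limiting PDE at all, and it treats both signs of divergence symmetrically — a small advantage, since the paper's appeal to Lemma~\ref{lem:hn-N'n-div} to fix the sign of the divergence is strictly speaking stated only for $\kappa_n \to 0$. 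Both arguments use $k_+ \neq k_-$ essentially, consistent with the standing assumption of the section, and your concluding Calder\'on--Zygmund bootstrap to the $\mathrm{W}^{2,p}$ bound matches the paper's.
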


\begin{proof}
Assume otherwise towards a contradiction. 
If $(h_n, \nf'_n)$ is an unbounded sequence, by
lemma~\ref{lem:h-n-bound} we can suppose
$\av(h_n) \to \pm \infty$ depending on the sign of $k_+ -
k_-$. Without loss of generality, let us assume
$k_+ - k_- > 0$. In this case $\av(h_n) \to -\infty$ by
lemma~\ref{lem:hn-N'n-div}. Since $\set{\kappa_n}$
is bounded, we can assume $\kappa_n \to k_* \neq 0$. Let $\tilde h_n =
h_n - \av(h_n)$, going through the
steps of the proof of lemma~\ref{lem:hn-N'n-conv}, we deduce the
existence of a strong limit $(\tilde h_n, \nf'_n) \to (\tilde h_{*},
\nf'_{*})$ in $\Wsp^{2,p}\times\Wsp^{2,p}$, such that $(\tilde h_{*},
\nf'_{*})$ is a solution to the problem,
\begin{align}
(\laplacian + \kappa_*^2)\, N'_* + \kappa_*^2\,(-1 + \tau) &= 0,\\
\laplacian \tilde h_* + 2 N'_* + 2\brk(-1 + \tau +
\frac{2\pi}{\abs\surface} (k_+ 
- k_-)) &= 0.
\end{align} 

By elliptic regularity the pair $(\tilde h_{*}, N'_*)$ is
smooth. Since $\kappa_*^2 > 0$, 
integrating the first equation, we obtain,
\begin{equation}
  \lproduct{N_*', 1} + (-1 + \tau)\abs\surface = 0.
\end{equation}

Integrating the second equation, we have,
\begin{equation}
  \lproduct{N_*', 1} + (-1 + \tau)\,\abs\surface + 2\pi(k_+ - k_-) = 0.
\end{equation}

Hence $2\pi\,(k_+ - k_-) = 0$, a contradiction. 
\end{proof}

We will prove the
existence of multiple solutions to the field equations in the
unbalanced case, which can be seen in figure~\ref{fig:sols} on the
right column, adapting the argument from \cite{han2014multiplicity}
which relies on Leray-Schauder's degree. 

We define the operators,
\begin{align}
 \opL &= (-\laplacian
 -\lambda, -\laplacian - \lambda), &
\opPhi_{\kappa}(h, N) &=
(f_{\kappa}(h,N), g_\kappa(h,N)),
\end{align}
where
\begin{align}
  f_{\kappa}(h, N) &= 2(N + F(v + h)) + \frac{4\pi}{\abs\surface}(k_+
- k_-) - \lambda h,\\
  g_{\kappa}(h, N) &= (\kappa^2 + 2F'(v + h)) N + \kappa^2F(v + h) -
\lambda N,
\end{align}
and $\lambda$ is a positive constant.  
Recall a continuous non-linear map $T : X \to Y$ of Banach spaces is
said to be compact if it maps any bounded subset $A\subset X$ to a
precompact set $TA \subset Y$, 

\begin{lemma}\label{lem:op-phi-cont}
  The operator $\opPhi : [0, \kappa_{*}] \times \Hsp^2\times \Hsp^2
  \to \Lsp^2\times\Lsp^2$ such that $\opPhi(\kappa, \cdot,\cdot) =
  \opPhi_{\kappa}$ is continuous.
\end{lemma}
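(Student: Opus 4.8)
The plan is to prove continuity of $\opPhi$ by decomposing each of its two components $f_\kappa$ and $g_\kappa$ into elementary pieces --- the affine terms in $h$ and $N$, the constant term, the superposition operators $h \mapsto F(v+h)$ and $h \mapsto F'(v+h)$, and the pointwise products --- and checking continuity of each piece. Here the divisor $D = (\vb p, \vb q)$, and hence the function $v$, is fixed. The two basic ingredients are the continuous Sobolev embedding $\Hsp^2(\surface) \hookrightarrow C^0(\surface)$, available because $\surface$ is a compact surface, together with the elementary fact used repeatedly already that $F$ and all of its derivatives are bounded functions on $\reals$.

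First I would treat the superposition operators $h \mapsto F(v+h)$ and $h \mapsto F'(v+h)$. Since $F$ is bounded, $F(v+h) \in \Lsp^\infty(\surface) \subset \Lsp^2(\surface)$ despite the logarithmic singularities of $v$ at the cores, and the same holds for $F'(v+h)$; this is the one place where boundedness of $F$ is essential. By the mean value theorem and the boundedness of $F'$ and $F''$, for any $h_1, h_2 \in \Hsp^2$,
\[
\norm{F(v+h_1) - F(v+h_2)}_{\Lsp^2} \le \norm{F'}_{\Lsp^\infty(\reals)}\,\norm{h_1 - h_2}_{\Lsp^2},
\]
\[
\norm{F'(v+h_1) - F'(v+h_2)}_{\Lsp^\infty} \le \norm{F''}_{\Lsp^\infty(\reals)}\,\norm{h_1 - h_2}_{\Lsp^\infty} \le C\,\norm{h_1 - h_2}_{\Hsp^2},
\]
the last inequality by Sobolev's embedding. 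Thus $h \mapsto F(v+h)$ is Lipschitz from $\Hsp^2$ into $\Lsp^2$ and $h \mapsto F'(v+h)$ is Lipschitz from $\Hsp^2$ into $\Lsp^\infty$.

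Next I would assemble the products and the $\kappa$-dependence. The term $F'(v+h)N$ lies in $\Lsp^\infty \cdot \Lsp^\infty \subset \Lsp^2$ because $N \in \Hsp^2 \hookrightarrow \Lsp^\infty$; writing
\[
F'(v+h_1)N_1 - F'(v+h_2)N_2 = F'(v+h_1)(N_1 - N_2) + \brk(F'(v+h_1) - F'(v+h_2))N_2
\]
and using $\norm{F'}_{\Lsp^\infty} < \infty$ together with the Lipschitz bound above and a uniform bound $\norm{N_2}_{\Lsp^2} \le C$ on a bounded set, one obtains local Lipschitz continuity of $(h,N) \mapsto F'(v+h)N$ into $\Lsp^2$. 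The affine terms $-\lambda h$ and $-\lambda N$ are continuous from $\Hsp^2$ into $\Lsp^2$ since $\lambda$ is a fixed positive constant, the constant $\tfrac{4\pi}{\abs\surface}(k_+ - k_-)$ contributes nothing, and the remaining terms $\kappa^2 N$ and $\kappa^2 F(v+h)$ depend polynomially on $\kappa \in [0,\kappa_{*}]$ with coefficients bounded in $\Lsp^2$ on bounded subsets of $\Hsp^2 \times \Hsp^2$. As $[0,\kappa_{*}]$ is compact, $\kappa \mapsto \kappa^2$ is Lipschitz there, and combining all pieces by the triangle inequality shows $\opPhi$ is locally Lipschitz, hence continuous, on $[0,\kappa_{*}] \times \Hsp^2 \times \Hsp^2$.

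There is no genuine obstacle in this lemma; the only point requiring attention is to make the estimates uniform on bounded subsets of $[0,\kappa_{*}] \times \Hsp^2 \times \Hsp^2$, so as to get joint rather than merely separate continuity. This is automatic, because all the constants appearing above depend only on $\norm{F'}_{\Lsp^\infty(\reals)}$, $\norm{F''}_{\Lsp^\infty(\reals)}$, the Sobolev constant of $\surface$, $\kappa_{*}$, $\lambda$, and an a priori bound on $\norm{N}_{\Lsp^2}$ on the bounded set in question.
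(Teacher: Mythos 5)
Your proposal is correct and follows essentially the same route as the paper: both establish local Lipschitz estimates for the components $f_\kappa$ and $g_\kappa$ via the mean value theorem together with the boundedness of $F'$ and $F''$, the Sobolev embedding $\Hsp^2 \hookrightarrow C^0$ to control the sup norm of $N$, and the triangle-inequality splitting of the product $F'(v+h)N$. The only cosmetic difference is that you isolate the superposition operators $h \mapsto F(v+h)$ and $h \mapsto F'(v+h)$ as standalone Lipschitz maps before assembling the pieces, whereas the paper estimates the full differences $\norm{f_{\kappa_2}(h_2,N_2)-f_{\kappa_1}(h_1,N_1)}_{\Lsp^2}$ and $\norm{g_{\kappa_2}(h_2,N_2)-g_{\kappa_1}(h_1,N_1)}_{\Lsp^2}$ directly.
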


\begin{proof}
We
show the component functions $f_{\kappa}$ and $g_{\kappa}$ are
continuous as follows. Notice, 
\begin{align}
  \norm{f_{\kappa_2}(h_2, \nf_2) - f_{\kappa_{1}}(h_{1}, \nf_{1})}_{\Lsp^2}
  &\leq 2\,\norm{\nf_2 - \nf_1}_{\Lsp^2} + 2\,\norm{F'(\xi)\,(h_2 - 
  h_1)}_{\Lsp^2} +
  \lambda\,\norm{ h_2 - h_1}_{\Lsp^2}\nonumber\\
  &\leq C\,(\norm{N_2 - N_1}_{\Lsp^2} + \norm{h_2 - 
  h_1}_{\Lsp^2}),\label{eq:f-k1-k2}
\end{align}
where $\xi$ is well defined almost everywhere. For $g_{\kappa}$ we have,
\begin{align}
  \norm{g_{\kappa_2}(h_2, \nf_2) - g_{\kappa_1}(h_1, \nf_1)}_{\Lsp^2}
  &\leq \norm{\kappa_2^2\,\nf_2 - \kappa_1^2\,\nf_1}_{\Lsp^2}\nonumber\\
  &\quad + \norm{
    \kappa_2^2\,F(v + h_2) - \kappa_1^2\,F(v + h_1) 
  }_{\Lsp^2}\nonumber\\
  &\quad +  2\,\norm{\nf_2\,F'(v + h_2) - \nf_1\,F'(v + 
  h_1)}_{\Lsp^2}\nonumber\\
  &\quad + \lambda\,\norm{\nf_2 - \nf_1}_{\Lsp^2}\nonumber\\
  &\leq \abs{\kappa_2^2 - \kappa_1^2}\,\norm{\nf_2}_{\Lsp^2}
  + \kappa_1^2\,\norm{\nf_2 - \nf_1}_{\Lsp^2}\nonumber\\
  &\quad \abs{\kappa_2^2 - \kappa_1^2}\,\norm{F(v + h_2)}_{\Lsp^2} +
  \kappa_1^2\, \norm{F(v+h_2) - F(v + h_1)}_{\Lsp^2}\nonumber\\
  &\quad 2\,\norm{(\nf_2 - \nf_1)\,F'(v + h_2)}_{\Lsp^2}\nonumber\\
  &\quad + 2\,\norm{\nf_1\,(F'(v
    + h_2) - F'(v + h_1))}_{\Lsp^2}\nonumber\\
  &\quad + \lambda\,\norm{\nf_2 - \nf_1}_{\Lsp^2}.
\end{align}

$F(v + h_2)$ and $F'(v + h_2)$ are uniformly bounded by a constant
independent of $h_2$. Also, there exist functions $\xi$, $\eta$ well
defined except at core positions, such that,
\begin{align}
  \abs{F(v + h_2) - F(v + h_1)} = |F'(\xi)\,(h_2 - h_1)| \leq
  C\,\abs{h_2 - h_1}, \\
  \abs{F'(v + h_2) - F'(v + h_1)} = |F''(\eta)\,(h_2 - h_1)| \leq
  C\,\abs{h_2 - h_1}.
\end{align}

By Sobolev's embedding, $\nf_1$ is continuous and the norm
$\norm{\nf_1}_{C^0}$ is controlled by $\norm{\nf_1}_{\Hsp^2}$, hence,
\begin{align}
  \norm{\nf_1\,(F'(v + h_2) - F'(v + h_1))}_{\Lsp^2} &\leq
  \norm{\nf_1}_{C^0}\,\norm{F'(v + h_2) - F'(v + h_1)}_{\Lsp^2}\nonumber\\
  &\leq C\,\norm{\nf_1}_{\Hsp^2}\,\norm{h_2 -  h_1}_{\Lsp^2}.
\end{align}

Hence, there is a constant $C$, independent of $(\kappa_j, h_j,
\nf_j)$, such that, 
\begin{multline}
  \norm{g_{\kappa_2}(h_2, \nf_2) - g_{\kappa_1}(h_1, \nf_1)}_{\Lsp^2}
  \leq
  C\,(\abs{\kappa_2^2 - \kappa_1^2}\,(1 + \norm{\nf_2}_{\Lsp^2})\\
  + \norm{\nf_2 - \nf_1}_{\Lsp^2}\,(1 + \kappa_1^2)
  +
  \norm{h_2 - h_1}_{\Lsp^2}\,(\kappa_1^2 + \norm{\nf_1}_2)).\label{eq:g-k1-k2}
\end{multline}

\eqref{eq:f-k1-k2} and~\eqref{eq:g-k1-k2} prove the component
functions are continuous.
\end{proof}

\begin{proposition}\label{lem:oplphi-compact}
  The operator $T = \opL^{-1}\circ \opPhi : [0, \kappa_{*}] \times
  \Hsp^2\times \Hsp^2 \to \Hsp^2 \times \Hsp^2$ is compact.
\end{proposition}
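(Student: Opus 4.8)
The statement to prove is that $T = \mathrm{L}^{-1}\circ\Phi$ is a compact operator on $[0,\kappa_*]\times\mathrm{H}^2\times\mathrm{H}^2 \to \mathrm{H}^2\times\mathrm{H}^2$. The plan is to factor the problem through the structure $T = \mathrm{L}^{-1}\circ\Phi$, where $\Phi$ has already been shown to be continuous in Lemma~\ref{lem:op-phi-cont} and $\mathrm{L}^{-1}$ is a bounded linear operator, so that compactness will come entirely from the gain of regularity and the Rellich--Kondrachov embedding. Concretely, I would first recall that $\lambda$ should be chosen so that $-\laplacian - \lambda : \mathrm{H}^2(\surface) \to \mathrm{L}^2(\surface)$ is invertible, i.e. $\lambda$ not in the spectrum of $-\laplacian$; then $\mathrm{L} = (-\laplacian - \lambda)\oplus(-\laplacian-\lambda)$ is a Hilbert space isomorphism $\mathrm{H}^2\times\mathrm{H}^2 \to \mathrm{L}^2\times\mathrm{L}^2$ by the standard elliptic estimate, and in particular $\mathrm{L}^{-1}$ is bounded.

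The key steps, in order, would be: (1) Take a bounded set $A \subset [0,\kappa_*]\times\mathrm{H}^2\times\mathrm{H}^2$; by Lemma~\ref{lem:op-phi-cont} $\Phi$ is continuous, and moreover a direct inspection of the bounds~\eqref{eq:f-k1-k2} and~\eqref{eq:g-k1-k2} shows $\Phi$ maps bounded sets to bounded sets (the functions $F$, $F'$ are bounded, and the $\mathrm{H}^2$-norms control the relevant $C^0$-norms by Sobolev embedding), so $\Phi(A)$ is bounded in $\mathrm{L}^2\times\mathrm{L}^2$. (2) Apply $\mathrm{L}^{-1}$: since $\mathrm{L}^{-1}$ is bounded, $T(A) = \mathrm{L}^{-1}(\Phi(A))$ is bounded in $\mathrm{H}^2\times\mathrm{H}^2$. (3) Here is the regularity gain: actually one shows $\Phi(A)$ is bounded in a better space, or one simply uses that a bounded sequence in $\mathrm{H}^2\times\mathrm{H}^2$ has, by Rellich--Kondrachov on the compact surface $\surface$, a subsequence converging in (say) $\mathrm{H}^1\times\mathrm{H}^1$ or $C^0\times C^0$. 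To upgrade to convergence in $\mathrm{H}^2$, I would take a bounded sequence $(\kappa_n, h_n, N_n)\in A$, pass to a subsequence with $\kappa_n\to\kappa_*$, $(h_n,N_n)\to(h_*,N_*)$ strongly in $\mathrm{L}^2\times\mathrm{L}^2$ (and pointwise a.e.\ after a further subsequence); then by the continuity estimates~\eqref{eq:f-k1-k2}, \eqref{eq:g-k1-k2} together with the dominated convergence theorem, $\Phi_{\kappa_n}(h_n,N_n) \to \Phi_{\kappa_*}(h_*,N_*)$ strongly in $\mathrm{L}^2\times\mathrm{L}^2$; finally applying the bounded operator $\mathrm{L}^{-1}$ gives strong convergence of $T(\kappa_n,h_n,N_n)$ in $\mathrm{H}^2\times\mathrm{H}^2$. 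This exhibits the required precompactness of $T(A)$.

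For continuity of $T$ itself (part of being a compact operator in the sense of the Definition preceding Theorem~\ref{thm:fixed-point-alternative}), it is immediate: $T$ is the composition of the continuous map $\Phi$ (Lemma~\ref{lem:op-phi-cont}) with the bounded linear, hence continuous, map $\mathrm{L}^{-1}$.

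The main obstacle I anticipate is the passage from $\mathrm{L}^2$-convergence of $\Phi_{\kappa_n}(h_n,N_n)$ to honest strong convergence, which requires care with the terms involving $F(v+h_n)$ and $F'(v+h_n)$ where $v$ is singular at the cores: one must check that the dominating functions (e.g.\ bounds like $|F(v+h_2)-F(v+h_1)|\le C|h_2-h_1|$ from Lemma~\ref{lem:op-phi-cont}) are genuinely integrable independently of $n$, using that $F$, $F'$, $F''$ are globally bounded so the singularity of $v$ never enters the difference estimates. Once that is in hand, the compactness follows mechanically from boundedness of $\mathrm{L}^{-1}$ and the a.e.\ convergence plus dominated convergence; no delicate estimate on the moduli space incompleteness is needed here since we work at fixed (finite) $\mathrm{H}^2$-scale.
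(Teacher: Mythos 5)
Your proposal is correct and follows essentially the same route as the paper: extract a subsequence converging strongly in $\Lsp^2\times\Lsp^2$ via Banach--Alaoglu and Rellich--Kondrachov, use the Lipschitz-type estimates~\eqref{eq:f-k1-k2} and~\eqref{eq:g-k1-k2} to get strong $\Lsp^2$ convergence of the $\opPhi$-images, and then apply the bounded inverse $\opL^{-1}$ (elliptic estimate) to upgrade to $\Hsp^2\times\Hsp^2$ convergence. The appeal to pointwise a.e.\ convergence and dominated convergence is superfluous, since those estimates already control the $\Lsp^2$ difference of images directly by the $\Lsp^2$ difference of arguments, but this does not affect correctness.
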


\begin{proof}
  By the Cauchy-Schwarz inequality and
  the standard elliptic estimates, if $u, f \in \Lsp^2$, and
  \begin{align}
    (\laplacian + \lambda)\,u = f
  \end{align}
  in the weak sense, then $u \in \Hsp^2$ and there is a constant $C$,
  independent of $(u, f)$ such that, 
  \begin{align}
    \norm{u}_{\Hsp^2} \leq C\,\norm{f}_{\Lsp^2}. 
  \end{align}

  This shows  $(\laplacian + \lambda)^{-1}: \Lsp^2
  \to \Hsp^2$ is continuous, therefore $\opL$ is also continuous and
  by lemma~\ref{lem:op-phi-cont} also $T$. If $T(\kappa, h, \nf) = (u,
  w)$, then we have, 
  \begin{align}
    (\laplacian + \lambda)\,u &= -f_{\kappa}, &
    (\laplacian + \lambda)\,w &= -g_{\kappa}. \label{eq:u-w-op-sig}
  \end{align}
  
Let $A \subset [0, \kappa_{*}]\times \Hsp^2 \times \Hsp^2$ be bounded
and closed and let $R 
> 0$ be sufficiently large such that if $(\kappa, h, \nf) \in A$, then
$\norm{h}_{\Hsp^2} + \norm{\nf}_{\Hsp^2} \leq R$. If $\set{(u_n, w_n)}$ is a
sequence in $T(A)$, such that $(u_n, w_n) = T(\kappa_n, h_n, \nf_n)$
for $(\kappa_n, h_n, \nf_n) \in A$, 
we can find a subsequence $(u_{n_k}, w_{n_k})$ such that $\kappa_{n_k}
\to \kappa' \in [0, \kappa_{*}]$ and  $(h_{n_k},
\nf_{n_k}) \to (h_{*}, \nf_{*})$ weakly in $\Hsp^1\times\Hsp^1$ and
strongly in $\Lsp^2\times\Lsp^2$. By equations \eqref{eq:f-k1-k2}
and~\eqref{eq:g-k1-k2}  and the fact that $\set{\nf_n}$ is bounded in
$\Hsp^2$, the sequence $\set{\opPhi(\kappa_{n_k}, h_{n_k}, \nf_{n_k})}$ is
Cauchy in $\Lsp^2 \times \Lsp^2$. By~\eqref{eq:u-w-op-sig},
$\set{(u_{n_k}, w_{n_k})}$ is Cauchy in $\Hsp^2\times \Hsp^2$, therefore 
convergent in $\overline{T(A)}$.
\end{proof}

By this proposition, for any bounded open set $\Omega \subset
\Hsp^2\times \Hsp^2$ such that $(I -
\opL^{-1}\circ\opPhi_{\kappa})^{-1}(0)\not\in \del\Omega$, the  degree 
\begin{align}
  \mathrm{deg}(I - \opL^{-1}\circ\opPhi_{\kappa}, \Omega, 0),
\end{align}
is well defined and a homotopical invariant for $\kappa$ restricted to
any subinterval $[a, b] \subset [0, \kappa_{*}]$ such that,
\begin{align}
  (I -
\opL^{-1}\circ\opPhi_{\kappa})^{-1}(0)\not\in \del\Omega\, \qquad
\forall \kappa \in [a, b].
\end{align}

Notice $(h, \nf) \in (I
- \opL^{-1}\circ \opPhi_{\kappa})^{-1}(0)$ if and only if it is a solution
to the governing elliptic problem, \eqref{eq:h}, \eqref{eq:N'}.

\begin{lemma}\label{lem:oplphi-cont-k}
For any ball $B \subset \Hsp^2 \times \Hsp^2$ and for any $\epsilon > 0$, 
there is a $\delta > 0$ such that for any $(h, N) \in B$,  
$|\kappa_2 - \kappa_1| < \delta$ implies  $|T(\kappa_2, h, N) - T(\kappa_1, h, 
N)| < \epsilon$.
\end{lemma}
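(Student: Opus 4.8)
The plan is to prove the equicontinuity-in-$\kappa$ statement of Lemma~\ref{lem:oplphi-cont-k} by factoring $T = \opL^{-1}\circ\opPhi$ and showing that each piece behaves well. First I would note that $\opL^{-1} = ((-\laplacian - \lambda)^{-1}, (-\laplacian - \lambda)^{-1})$ is a fixed bounded linear operator $\Lsp^2\times\Lsp^2 \to \Hsp^2\times\Hsp^2$, with some operator norm $C_\opL$; this is exactly the elliptic estimate established in the proof of Proposition~\ref{lem:oplphi-compact}. Therefore it suffices to prove the corresponding statement for $\opPhi$ itself: for any ball $B \subset \Hsp^2\times\Hsp^2$ and any $\epsilon > 0$, there is $\delta > 0$ such that $(h, N) \in B$ and $|\kappa_2 - \kappa_1| < \delta$ imply $\norm{\opPhi_{\kappa_2}(h, N) - \opPhi_{\kappa_1}(h, N)}_{\Lsp^2\times\Lsp^2} < \epsilon/C_\opL$.

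Next I would observe that $\opPhi_{\kappa}$ depends on $\kappa$ only through the scalar $\kappa^2$, and that dependence is polynomial (in fact affine in $\kappa^2$) in the second component $g_\kappa$ and absent in the first component $f_\kappa$. Concretely, the only $\kappa$-dependent terms are $\kappa^2 N$ and $\kappa^2 F(v+h)$ appearing in $g_\kappa$. Hence
\begin{align}
  \opPhi_{\kappa_2}(h,N) - \opPhi_{\kappa_1}(h,N)
  = \left(0,\ (\kappa_2^2 - \kappa_1^2)\,\left(N + F(v + h)\right)\right).
\end{align}
Taking $\Lsp^2$ norms and using that $F$ is bounded (so $\norm{F(v+h)}_{\Lsp^2} \leq \sup|F|\cdot\abs\surface^{1/2}$, a constant independent of $h$) together with the fact that $(h, N) \in B$ forces $\norm{N}_{\Lsp^2} \leq \norm{N}_{\Hsp^2} \leq R_B$ for the radius $R_B$ of $B$, I obtain
\begin{align}
  \norm{\opPhi_{\kappa_2}(h,N) - \opPhi_{\kappa_1}(h,N)}_{\Lsp^2\times\Lsp^2}
  \leq \abs{\kappa_2^2 - \kappa_1^2}\,\left(R_B + \sup|F|\cdot\abs\surface^{1/2}\right).
\end{align}
This is precisely the kind of bound already recorded in~\eqref{eq:g-k1-k2}, specialized to the case $h_1 = h_2$, $N_1 = N_2$; I would in fact just cite that inequality with those substitutions rather than rederive it.

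Finally, I would close the argument by noting that on the compact parameter interval $[0,\kappa_{*}]$ (or indeed on any bounded $\kappa$-range, which is all we need since $\kappa_{*}(D)$ is finite in the unbalanced case by Proposition~\ref{prop:kappa-*-ub}), the map $\kappa \mapsto \kappa^2$ is uniformly continuous, so $\abs{\kappa_2^2 - \kappa_1^2} \leq (2\kappa_{*} + 1)\abs{\kappa_2 - \kappa_1}$; choosing
\begin{align}
  \delta = \frac{\epsilon}{C_\opL\,(2\kappa_{*} + 1)\,(R_B + \sup|F|\cdot\abs\surface^{1/2})}
\end{align}
gives the claim. There is essentially no obstacle here: the lemma is a routine continuity bookkeeping step whose only content is that the $\kappa$-dependence of the nonlinearity is polynomial with coefficients controlled uniformly on balls, and this has already been isolated in the estimates~\eqref{eq:f-k1-k2} and~\eqref{eq:g-k1-k2}. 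The mild point to be careful about is simply recording that the bound is uniform over $(h,N)$ in a fixed ball — which is why the hypothesis restricts to balls rather than asking for continuity uniform over all of $\Hsp^2\times\Hsp^2$, where $\norm{N}_{\Lsp^2}$ is unbounded.
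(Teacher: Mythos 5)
Your proposal is correct and follows essentially the same route as the paper: the paper likewise observes that the first component $f_\kappa$ is $\kappa$-independent, bounds $g_{\kappa_2}-g_{\kappa_1}$ by $C(R)\,\abs{\kappa_2^2-\kappa_1^2}$ via~\eqref{eq:g-k1-k2} with $(h_1,N_1)=(h_2,N_2)$, and then converts this to an $\Hsp^2$ bound on the difference of outputs using the boundedness of $(\laplacian+\lambda)^{-1}$ (which the paper re-derives on the spot via the energy identity and Schauder estimates, where you instead cite the operator norm of $\opL^{-1}$ from Proposition~\ref{lem:oplphi-compact}). The only cosmetic difference is that you make the final factorization $\abs{\kappa_2^2-\kappa_1^2}\leq(\kappa_1+\kappa_2)\abs{\kappa_2-\kappa_1}$ explicit, which the paper leaves implicit.
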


It is said that the operator $T$ is continuous in $\kappa$ uniformly with 
respect to $(h, N)$ in balls in $\Hsp^2\times\Hsp^2$

\begin{proof}
 In order to prove the lemma, it is sufficient to consider balls centred at 
 the origin. Let $R > 0$, and $(h, N) \in B_R(0)$. If $(u, w) = 
 \opL^{-1}\circ
 \Phi(\kappa, h, \nf)$, 
 then $u$ and $w$ are solutions to~\eqref{eq:u-w-op-sig}.
 Let $(u_j, w_j) = \opL^{-1}\circ\opPhi(\kappa_j, h, \nf)$, $j =
 1, 2$,  then $u_1 = u_2$
 because $f_{\kappa}(h, \nf)$ is  
 independent of $\kappa$. For $w_j$, by~\eqref{eq:g-k1-k2} there is a
 constant $C = C(R)$ independent of $(h, \nf)$, such that,
 \begin{align}
   \norm{g_{\kappa_2}(h,\nf) - g_{\kappa_1}(h, \nf)}_{\Lsp^2} \leq
   C\,\abs{\kappa_2 - \kappa_1}^2.
 \end{align}
 
 By~\eqref{eq:u-w-op-sig}, 
 \begin{align}
    \norm{\grad(w_2 - w_1)}^2_{\Lsp^2} + \lambda\,\norm{w_2 - w_1}^2_{\Lsp^2}
    &= -\lproduct{(w_2 - w_1), g_{\kappa_2}(h, N) - g_{\kappa_1}(h, 
    N)}\nonumber\\
    &\leq \norm{w_2 - w_1}_{\Lsp^2}\cdot
    \norm{g_{\kappa_2}(h, \nf) - g_{\kappa_1}(h, \nf)}_{\Lsp^2}
 \end{align}
 
Whence, there exists another constant, independent of $(h, N)$, such 
 that,
 \begin{align}
     \norm{w_2 - w_1}_{\Lsp^2} \leq C\,\norm{g_{\kappa_2}(h, \nf) -
       g_{\kappa_1}(h, \nf)}_{\Lsp^2}.
 \end{align}
 
 By Schauder's estimates,
 \begin{align}
     \norm{w_2 - w_1}_{\Hsp^2} 
     &\leq C\,\brk(
     \norm{\laplacian(w_2 - w_1)}_{\Lsp^2} + \norm{w_2 - 
     w_1}_{\Lsp^2})\nonumber\\
     &\leq C\brk(
     \norm{g_{\kappa_2}(h, N) - g_{\kappa_1}(h, N)}_{\Lsp^2}
     + (\lambda + 1)\,\norm{w_2 - w_1}_{\Lsp^2})\nonumber\\
     &\leq C\,\norm{g_{\kappa_2}(h,\nf) - g_{\kappa_1}(h, 
     \nf)}_{\Lsp^2}\nonumber\\
      &\leq C\abs{\kappa_2^2 - \kappa_1^2}.
 \end{align}
 Therefore, $\norm{(u_2, w_2) - (u_1, w_1)}_{\Hsp^2\times\Hsp^2} \to 0$ 
 uniformly as $\kappa_2 \to \kappa_1$.
\end{proof}

If $0 < \kappa_0 < \kappa_*(D)$, we know by 
lemma~\ref{lem:kappa-bounded} that there exists an $R > 0$ such that
for any $\kappa \in [\kappa_0, \kappa_*(D)]$ the 
solution to equations \eqref{eq:h}, \eqref{eq:N'} is in the
interior of the disk $\disk(0, R) \subset \Hsp^2\times \Hsp^2$. Since for
any $\epsilon > 0$ there is no solution to the equations for
$\kappa_*(D) + \epsilon$, by the homotopy invariance of the degree, we
conclude, 
\begin{equation}
  \lsdeg(\opI{} - \opL^{-1}\circ\opPhi_{\kappa}, \disk(0, R), 0) = 0, \qquad
  \kappa \in [\kappa_0, \kappa_*(D)].
\end{equation}

By proposition~\ref{thm:small-def} we know there is a
neighbourhood $U$ of $(h_0, 
0)$ such that for $\kappa$ small enough, there is exactly one solution
$(h_\kappa, N'_\kappa)$ to equations \eqref{eq:h} and \eqref{eq:N'} in
$U$ and this solution varies smoothly in $\Hsp^2\times\Hsp^2$ with
$\kappa$.

\begin{lemma}\label{lem:index-T0}
  $\abs{\mathrm{ind}(I - T(0,\cdot,\cdot), (h_0, 0), 0)} = 1$. 
\end{lemma}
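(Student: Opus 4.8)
The plan is to compute the Leray–Schauder index of the operator $I - T(0,\cdot,\cdot)$ at the isolated solution $(h_0, 0)$ by appealing to Theorem~\ref{thm:leray-schauder-index}. By that theorem, it suffices to show that the Fréchet derivative of $T(0,\cdot,\cdot)$ at $(h_0,0)$, call it $\mathcal{T}_0' = \mathcal{T}_0'(h_0,0)$, is a compact linear operator and that $I - \mathcal{T}_0'$ is injective; then $\mathrm{ind}(I - T(0,\cdot,\cdot),(h_0,0),0) = (-1)^\beta = \pm 1$ where $\beta$ counts eigenvalues of $\mathcal{T}_0'$ exceeding $1$ with algebraic multiplicity. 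Compactness of $\mathcal{T}_0'$ is immediate: $T(0,\cdot,\cdot) = \opL^{-1}\circ \opPhi_0$ is compact by proposition~\ref{lem:oplphi-compact}, and the derivative of a compact operator at a point is a compact linear operator (this is a standard fact used implicitly in the setup preceding theorem~\ref{thm:leray-schauder-index}). So the only real work is proving injectivity of $I - \mathcal{T}_0'$.

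First I would compute $\mathcal{T}_0'$ explicitly. Writing $\opPhi_0 = \opPhi_{\kappa}|_{\kappa = 0}$, its components are $f_0(h,N) = 2(N + F(v+h)) + \tfrac{4\pi}{|\surface|}(k_+ - k_-) - \lambda h$ and $g_0(h,N) = 2F'(v+h)N - \lambda N$. Linearising at $(h_0,0)$, using $g_0(h_0,0) = 0$ and the fact that $\partial_h g_0|_{(h_0,0)} = 0$ (because $g_0$ is linear in $N$ with $N = 0$ at the base point, and the $h$-derivative of $2F'(v+h)N$ vanishes when $N = 0$), the derivative of $\opPhi_0$ at $(h_0,0)$ sends $(\delta h, \delta N)$ to $\big(2\delta N + (2F'(v+h_0) - \lambda)\delta h,\ (2F'(v+h_0) - \lambda)\delta N\big)$. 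Since $\opL = (-\laplacian - \lambda, -\laplacian - \lambda)$, the fixed-point equation $(I - \mathcal{T}_0')(\delta h, \delta N) = 0$ becomes $\opL(\delta h,\delta N) = \opPhi_0'(\delta h,\delta N)$, i.e.
\begin{align}
-\laplacian \delta h &= 2\delta N + 2F'(v+h_0)\,\delta h,\\
-\laplacian \delta N &= 2F'(v+h_0)\,\delta N.
\end{align}
Here $2F'(v+h_0) = V(v+h_0) = V(u_0) \geq 0$ is the potential appearing in lemma~\ref{lem:h2-l2-iso}, continuous, non-negative, with zero set exactly $\vset\cup\avset$, and on a compact surface with positive integral. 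The second equation says $(\laplacian + V(u_0))\delta N = 0$; by lemma~\ref{lem:h2-l2-iso}, $\laplacian + V(u_0)$ is a Hilbert space isomorphism $\Hsp^2 \to \Lsp^2$, hence injective, so $\delta N = 0$. Substituting back, the first equation reduces to $(\laplacian + V(u_0))\delta h = 0$, which by the same isomorphism forces $\delta h = 0$. Therefore $I - \mathcal{T}_0'$ is injective.

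With injectivity in hand, theorem~\ref{thm:leray-schauder-index} gives $|\mathrm{ind}(I - T(0,\cdot,\cdot),(h_0,0),0)| = 1$. (One should remark, for the index to be defined, that $(h_0,0)$ is indeed an isolated solution of $(I - T(0,\cdot,\cdot))(h,N) = 0$: this follows from proposition~\ref{thm:small-def}, which provides a neighbourhood $U$ of $(h_0,0)$ in $\Hsp^2\times\Hsp^2$ containing no other solution of the elliptic problem at $\kappa = 0$.) The main obstacle — though it turns out to be mild — is the bookkeeping of the linearisation: one must be careful that the cross term $\partial_h g_0$ genuinely vanishes at $(h_0,0)$ so that the linearised system decouples, because it is exactly this triangular structure that lets lemma~\ref{lem:h2-l2-iso} be applied twice in succession. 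If instead the system did not decouple one would have to analyse a $2\times 2$ elliptic system and invoke a maximum-principle or spectral argument directly; fortunately the vanishing of the cross term at $N=0$ makes this unnecessary. Everything else is a direct appeal to results already established (compactness from proposition~\ref{lem:oplphi-compact}, isomorphism from lemma~\ref{lem:h2-l2-iso}, the index formula from theorem~\ref{thm:leray-schauder-index}).
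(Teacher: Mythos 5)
Your proposal is correct and follows essentially the same route as the paper: compute $\opPhi_0'(h_0,0)$, observe the linearised system decouples because the cross term vanishes at $N=0$, apply Lemma~\ref{lem:h2-l2-iso} twice to get injectivity of $I - T_0'(h_0,0)$, and conclude via Theorem~\ref{thm:leray-schauder-index}. The only difference is that you spell out the compactness of the derivative and the isolatedness of $(h_0,0)$, which the paper leaves implicit.
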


\begin{proof}
  At $\kappa = 0$, the derivative of $\opPhi_0$ at $(h_0, 0)$ has components,
  \begin{align}
    f_0'(h_0, 0)\cdot(\delta h, \delta N) &= 2\,(\delta N + F'(v +
    h_0)\,\delta h) - \lambda\, \delta h,\\
    g_0'(h_0, 0)\cdot(\delta h, \delta N) &= 2\,F'(v + h_0)\,\delta N
    - \lambda \,\delta N.
  \end{align}

  The operator $\opL^{-1}$ is linear, hence, the derivative of $T_0 =
  T(0, \cdot, \cdot)$ at $(h_0, 0)$ is,
  \begin{align}
    T_0'(h_0, 0) = \opL^{-1}\circ\,\opPhi_0'(h_0, 0).
  \end{align}

  If $(\delta h, \delta N) \in \mathrm{Ker}(I - T_0'(h_0, 0))$,
  then $(\delta h, \delta N)$ is the solution to the elliptic problem,
  \begin{align}
    -\laplacian\,\delta h &= 2\,\delta N + 2\,F'(v +
    h_0)\,\delta h,\\
    -\laplacian\,\delta N &= 2\,F'(v + h_0)\,\delta N.
  \end{align}

  By lemma~\ref{lem:h2-l2-iso}, the operator $\laplacian + 2\,F'(v +
  h_0): \Hsp^2 \to \Lsp^2$ is an isomorphism. Therefore, $\delta h =
  \delta N = 0$. By theorem~\ref{thm:leray-schauder-index},
  \begin{align}
    \mathrm{ind}(I - T(0,\cdot,\cdot), (h_0, 0), 0) = \pm 1,
  \end{align}
  where the sign depends on the multiplicities of the eigenvalues
  $\lambda > 1$ of $I - T'_0(h_0, 0)$. 
\end{proof}

\begin{proposition}\label{prop:existence-cs-sol-infty}
  There is a $\kappa_0 > 0$ such that, if $0 < \kappa < \kappa_0$,
  equations~\eqref{eq:h},\eqref{eq:N'} have exactly two continuous
  families of solutions. As 
  $\kappa \to 0$ one of the families is convergent to $(h_0, 0)$, the
  solution to of the regularised Taube's equation and the second
  family is such that $(h_{\kappa} - \av(h_{\kappa}), \nf'_{\kappa})
  \to (0, \alpha_{\pm})$ and $\av(h_{\kappa}) \to \mp\infty$, where
  $\alpha_{\mp}$ and the sign of the divergence depend on the sign of
  $k_+ - k_-$ as in lemma~\ref{lem:hn-N'n-div}. Moreover, for any $R >
  0$, there is a $\kappa' > 0$ such that if $0 < \kappa < \kappa'$,
  then $\norm{(h_{\kappa}, \nf_{\kappa})}_{\Hsp^2\times \Hsp^2} > R$ for
  at least one pair of solutions to the equations.
\end{proposition}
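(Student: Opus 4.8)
I would obtain the two families by continuing the smooth branch of Proposition~\ref{thm:small-def} with the Leray--Schauder continuation theorem, Theorem~\ref{thm:ls-deg-continuum}. The key preliminary fact is that at $\kappa = 0$ the system \eqref{eq:h}--\eqref{eq:N'} has $(h_0,0)$ as its \emph{only} solution: \eqref{eq:N'} becomes $(\laplacian + 2F'(v+h))N' = 0$, which by the injectivity part of Lemma~\ref{lem:h2-l2-iso} forces $N'\equiv 0$, whereupon \eqref{eq:h} is the regularised Taubes equation, with unique solution $h_0$. Writing solutions as the zeros of $I - T(\kappa,\cdot)$ with $T = \opL^{-1}\circ\opPhi$ on $x = (h,N')\in\Hsp^2\times\Hsp^2$ (the operator being defined for all $\kappa\in\reals$ by the same formulae), I would check the hypotheses of Theorem~\ref{thm:ls-deg-continuum} at $(\lambda_0,x_0) = (0,(h_0,0))$: $T(\kappa,\cdot)$ is compact by Proposition~\ref{lem:oplphi-compact}, $T$ is continuous in $\kappa$ uniformly with respect to $x$ in balls by Lemma~\ref{lem:oplphi-cont-k}, $x_0$ is the unique solution at $\kappa=0$ so lies inside a bounded open $\mathcal U$ containing no other solution at $\kappa=0$, and $\deg(I - T(0,\cdot),\mathcal U,0) = \mathrm{ind}(I - T(0,\cdot,\cdot),x_0,0) = \pm1\neq 0$ by Lemma~\ref{lem:index-T0}. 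This produces a closed connected set of solutions $\mathcal C^+\subset[0,\infty)\times(\Hsp^2\times\Hsp^2)$ with $(0,x_0)\in\mathcal C^+$.

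\textbf{The second, unbounded family.} The alternative in Theorem~\ref{thm:ls-deg-continuum} in which $\mathcal C^+$ meets $\{0\}\times((\Hsp^2\times\Hsp^2)\setminus\overline{\mathcal U})$ is excluded because $x_0$ is the only solution at $\kappa=0$, so $\mathcal C^+$ is unbounded. Its $\kappa$-projection lies in the bounded interval $[0,\kappa_*]$ by Proposition~\ref{prop:kappa-*-ub} (here $k_+\neq k_-$ is used), while by Lemma~\ref{lem:kappa-bounded} its $\Hsp^2\times\Hsp^2$-projection is bounded on every slice $\{\kappa\geq\kappa_0\}$, $\kappa_0>0$; hence the norm blows up along $\mathcal C^+$ only as $\kappa\to 0^+$. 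Combining this with a degree-excision count — the degree of $I - T(\kappa',\cdot)$ over a sufficiently large ball is zero for $\kappa'$ slightly beyond $\kappa_*$ (where there are no solutions) and, by homotopy invariance, for all smaller $\kappa'$ down to any $\kappa>0$, while the small branch has index $\pm1$ for small $\kappa$ by the non-degeneracy built into Proposition~\ref{thm:small-def} — shows that for every $R$ there is $\kappa'$ such that whenever $0<\kappa<\kappa'$ there is a solution of $\Hsp^2\times\Hsp^2$-norm greater than $R$, which is the last assertion. By Proposition~\ref{prop:cs-lim-points} the only limit points of solutions as $\kappa\to 0$ are $(h_0,0)$ and $(0,\alpha_{\pm})$, so this second family satisfies $\av(h_\kappa)\to\mp\infty$ and $(h_\kappa - \av(h_\kappa),N'_\kappa)\to(0,\alpha_{\pm})$, with signs as in Lemma~\ref{lem:hn-N'n-div}, and is plainly distinct from the bounded family of Proposition~\ref{thm:small-def}.

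\textbf{Exactly two.} To see these are all the solutions for small $\kappa$, take any sequence of solutions with $\kappa_j\to 0$. By Proposition~\ref{prop:cs-lim-points} a subsequence is either bounded — hence convergent to $(h_0,0)$ and, for large $j$, inside the neighbourhood $U$ of Proposition~\ref{thm:small-def} on which the solution is unique — or unbounded, with $\av(h_j)\to\mp\infty$ and $(h_j-\av(h_j),N'_j)\to(0,\alpha_\pm)$. For the unbounded case I would carry out a Lyapunov--Schmidt reduction around the point at infinity: writing $\tilde h = h - \av(h)$, $t = \av(h)$, $w = N'$, the pair $(\tilde h,w)$ ranges in the product of the zero-average subspace of $\Hsp^1$ with $\Hsp^2$, and $t$ is recovered from the scalar constraint obtained by integrating \eqref{eq:h} and \eqref{eq:N'}, namely $-2\pi\kappa^2(k_+-k_-) + 2\lproduct{F'(v+h),w} = 0$ from the proof of Proposition~\ref{prop:kappa-*-ub}. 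In the limit $\kappa\to 0$, $t\to\mp\infty$, the reduced problem has the solution $(\tilde h,w)=(0,\alpha_\pm)$, and its linearisation is $\laplacian$ on the zero-average subspace (an isomorphism there) together with the scalar equation; the implicit function theorem then gives a single local branch parametrised by $\kappa$, so the unbounded family is unique and there are exactly two.

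\textbf{Main obstacle.} The delicate point is this Lyapunov--Schmidt reduction at the point at infinity. Since $\av(h_\kappa)\to\infty$ the potential $2F'(v+h_\kappa)$ degenerates to $0$, so the neutral-field operator degenerates to the non-invertible $\laplacian$, and one must work in correctly rescaled variables, split off the one-dimensional kernel through the integrated constraint, and check that the reduction is uniform in $\kappa$ near $0$ and insensitive to the logarithmic singularities of $v$ at the cores — harmless provided one keeps the regularised field $h - v - \av(h)$ rather than $h$. The analytic ingredients (the a priori bounds of Lemmas~\ref{lem:h-n-bound} and \ref{lem:kappa-bounded}, the limit classification of Proposition~\ref{prop:cs-lim-points}, the non-degeneracy of Lemma~\ref{lem:h2-l2-iso}) are all in hand, so what remains is to assemble them into this reduction.
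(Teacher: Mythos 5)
Your proposal follows the paper's own route for the core of the argument: the same Leray--Schauder continuation (Theorem~\ref{thm:ls-deg-continuum}) applied to $T$, with compactness from Proposition~\ref{lem:oplphi-compact}, uniform continuity in $\kappa$ from Lemma~\ref{lem:oplphi-cont-k}, nonvanishing local degree from Lemma~\ref{lem:index-T0}, uniqueness of the solution at $\kappa=0$ (via Lemma~\ref{lem:h2-l2-iso}) to exclude the second alternative of the continuation theorem, boundedness of the admissible $\kappa$ (Proposition~\ref{prop:kappa-*-ub}) together with Lemma~\ref{lem:kappa-bounded} to force the continuum to blow up as $\kappa\to0^{+}$, and Proposition~\ref{prop:cs-lim-points} to identify the limit $(0,\alpha_{\pm})$. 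Where you genuinely diverge is the final ``norm $>R$'' assertion. The paper proves it by a connectedness argument on the continuum $\mathcal{C}$: if all solutions at some sequence $\kappa_n\to 0$ had norm at most $R$, then, using Lemma~\ref{lem:hn-N'n-conv} to show the sphere of radius $R$ carries no solutions for small $\kappa$, the part of $\mathcal{C}$ with norm $>R$ and parameter below $\kappa_n$ would be nonempty, open and closed in $\mathcal{C}$, hence all of $\mathcal{C}$, contradicting $(0,h_0,0)\in\mathcal{C}$. Your degree-excision count is a legitimate alternative, but as written it only yields, for each small $\kappa$, a second solution \emph{outside the small neighbourhood}; it does not by itself give norm $>R$. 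To close it you must add the same observation the paper uses: under the contradiction hypothesis those second solutions form a bounded sequence, so Lemma~\ref{lem:hn-N'n-conv} drives them into the neighbourhood of $(h_0,0)$ where the solution is unique. Note also that your excision needs the index of the small branch to be $\pm1$ at $\kappa>0$, not merely at $\kappa=0$; that follows from the invertibility of the linearisation along the branch in Proposition~\ref{prop:kappa-*-lb}, not from Lemma~\ref{lem:index-T0} alone.

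On the word ``exactly'': your Lyapunov--Schmidt reduction at the point at infinity attempts to prove uniqueness of the unbounded family, something the paper's proof never actually establishes --- it only delivers local uniqueness near $(h_0,0)$ plus existence of an unbounded branch, i.e.\ \emph{at least} two families. Your sketch is the right idea but is not yet a proof: the linearised neutral-field operator degenerates to the Laplacian, whose kernel (the constants) must be absorbed by the integrated scalar constraint, and the uniformity of the reduction as $\kappa\to0$ is precisely the hard point you flag without resolving. So this part of your write-up is incomplete, but the gap is one the paper shares rather than one you introduced.
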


\begin{proof}
    Proposition~\ref{lem:oplphi-compact}, 
    and Lemma~\ref{lem:oplphi-cont-k} show $T$ satisfies the hypotesis of 
    theorem~\ref{thm:ls-deg-continuum}. By proposition~\ref{thm:small-def}, 
    there exists $\kappa_0 \in (0,
  \kappa_{*}]$ and an open bounded set $U \subset
  \Hsp^2\times\Hsp^2$ of $(h_0, 0)$ such that the restriction of
  $(h_{\kappa}, \nf_{\kappa})$ to $U$ varies smoothly with $\kappa \in
  [0, \kappa_0)$. By lemma~\ref{lem:index-T0}, if $\mathrm{diam}(U)$
  is small, $\mathrm{deg}(I - T(0,
  \cdot, \cdot), U, 0) \neq 0$ and there is no other solution for
  $\kappa = 0$ in $\overline U$. By
  theorem~\ref{thm:ls-deg-continuum}, there is a connected closed set
  $\mathcal{C} \subset [0, \kappa_{*}] \times \Hsp^2 \times \Hsp^2$,
  such that $(0, h_0, 0) \in \mathcal{C}$ and either $\mathcal{C}$ is
  unbounded or $\mathcal{C} \cap (\set{0} \times (\Hsp^2\times \Hsp^2
  \setminus \overline U)) \neq \emptyset$. Since for $\kappa = 0$
  there is only one solution to equations~\eqref{eq:h}~\eqref{eq:N'},
  we rule out the second possibility. As $\kappa_{*} < \infty$, by
  lemma~\ref{lem:kappa-bounded} there is a second family $(h_n,
  \nf_n)$ of solutions to the equations, such that $\kappa_n \to 0$
  and $\norm{(h_{\kappa},
    \nf_{\kappa})}_{\Hsp^2\times \Hsp^2} \to \infty$. By
  proposition~\ref{prop:cs-lim-points}, $(0, \alpha_{\pm})$ is a limit
  point. In order to prove the last claim, assume towards a
  contradiction, the existence of $R > 0$ and a sequence $\kappa_n \to
  0$ such that $\norm{(h_{\kappa_n} - h_0, \nf_{\kappa_n})}_{\Hsp^2
    \times \Hsp^2} \leq R$ for all solutions with parameter
  $\kappa_n$. By
  lemma~\ref{lem:hn-N'n-conv}, the set of solutions
  $\set{(h_{\kappa}, \nf_{\kappa}) \mid \norm{(h_{\kappa} - h_0,
    \nf_{\kappa})} = R}$ can not accumulate at $\kappa = 0$. Let
$\kappa_R > 0$ be such that if $\norm{(h_{\kappa} - h_0,
  \nf_{\kappa})}_{\Hsp^2 \times \Hsp^2} = R$, then $\kappa >
\kappa_R$ and let us choose $n$ such that $\kappa_n <
\kappa_R$. Consider the relatively open set, 
\begin{align}
 V = \set{(\kappa, h, \nf) \in [0, \kappa_{*}] \times \Hsp^2 \times
   \Hsp^2 \mid \norm{(h - h_0, \nf)}_{\Hsp^2 \times \Hsp^2} > R, \; 0
   <\kappa  < \kappa_n} \cap \mathcal{C}. 
\end{align}

$V$ is not empty because there is a divergent sequence in
$\mathcal{C}$ with deformation parameter converging to $0$, we claim
$V$ is  also closed, because if 
$\set{(\mu_n, h_n, \nf_n)} \subset V$ has an accumulation point
$(\mu_{*}, h_{*}, \nf_{*})$, then $(\mu_{*}, h_{*}, \nf_{*}) \in \mathcal{C}$
because this set is closed. At the same time, $\norm{(h_{*} - h_0,
  \nf_{*})}_{\Hsp^2 \times \Hsp^2} \geq R$ and $0 \leq \mu_{*} \leq
\kappa_n$. Since $\norm{(h_{*} - h_0,
  \nf_{*})}_{\Hsp^2 \times \Hsp^2} = R$ implies $\mu_{*} > \kappa_R$,
we can discard this case. If $\mu_{*} = 0$ then $(h_{*}, \nf_{*}) =
(h_0, 0)$ which is 
impossible. If $\mu_{*} = \kappa_n$, then we also have $\norm{(h_{*} -
  h_0, \nf_{*})}_{\Hsp^2 \times \Hsp^2} \leq R$, which is
absurd. Therefore $(\mu_{*}, h_{*}, \nf_{*}) \in V$ and this set is
open and closed. Since $\mathcal{C}$ is connected, $V =
\mathcal{C}$. A contradiction.
\end{proof}

In view of this proposition, we can define $\kappa_{\mathcal{C}}(D)$
as the supremum, 
\begin{align}
  \kappa_{\mathcal{C}}(D) = \sup \set{\kappa > 0 \mid (\kappa,
    h_{\kappa}, \nf_{\kappa}) \in \mathcal{C}}. 
\end{align}

Proposition~\ref{prop:existence-cs-sol-infty} shows
$\kappa_{\mathcal{C}}$ is actually a maximum, moreover, since the set
of solutions for which $(h_{\kappa}, \nf_{\kappa}) \in U$ is contained
in $\mathcal{C}$, the same lower bound for $\kappa_{*}$ is also valid
for $\kappa_C$. We summarise the results of this section in the
following theorem.

\begin{theorem}\label{thm:limit-kappa-0-hf}
  Let $(\kappa, \hf_{\kappa}, A_{\kappa}, \nf_{\kappa})$ be a
  solution of the Bogomolny equations of the BPS soliton equations with
  Chern-Simons deformation constant $\kappa$. Assume $k_+ - k_- \neq
  0$ and Bradlow's bound is satisfied, then the following properties hold,
  \begin{enumerate}
  \item $\kappa$ is bounded by a constant independent of the position
    of the divisors as given in proposition~\ref{prop:kappa-*-ub},

  \item If $\kappa$ is small, for each divisor there are at least two 
  gauge inequivalent families of solutions to the Bogomolny equations. 
  
  \item There are an $\epsilon > 0$ and $\kappa_0 > 0$ such that, if
    $\abs \kappa < \kappa_0$ there is exactly one gauge equivalence class 
    $(\kappa, \hf_{\kappa}, A_{\kappa}, \nf_{\kappa})$ of solutions
    to the Bogomolny equations, such that
    \begin{align}
      \norm{h_{\kappa} - h_0}_{C^1} + \norm{\nf_{\kappa}}_{C^1} < \epsilon.
    \end{align}

    Outside any closed neighbourhood $\overline U$ of the core set
    $\vset \cup \avset$, this family of solutions varies smoothly with
    $\kappa$. 
  \item For any  neighbourhood $U$ of the core set,
    we have the following property: For any $\epsilon > 0$, there is a
    $\kappa' > 0$, such that if $\abs{\kappa} < \kappa'$, there is a
    solution $(\kappa, \hf_{\kappa}, A_{\kappa}, \nf_{\kappa})$ to the 
    Bogomolny equations, such that
    \begin{align}
      \norm{\hf_3^{\kappa} \mp 1}_{C^1(\surface\setminus U)} +
      \norm{\nf_{\kappa} - \alpha_{\pm}}_{C^1} < \epsilon,
    \end{align}
    where the signs chosen and the constant $\alpha_{\pm}$ depend on
    the sign of the difference $k_+ -  k_-$  as defined on
    lemma~\ref{lem:hn-N'n-div}.     
  \end{enumerate}
\end{theorem}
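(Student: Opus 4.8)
The plan is to assemble Theorem~\ref{thm:limit-kappa-0-hf} entirely from the machinery developed in subsections~\ref{sec:small-deform}--\ref{sec:exist-mult-sols}, translating statements about the pair $(h_\kappa, N_\kappa)$ (or $(h_\kappa, N'_\kappa)$) into statements about the quadruple $(\kappa, \phi_\kappa, A_\kappa, N_\kappa)$. The key preliminary remark, which I would state first, is the dictionary between solutions of the elliptic system~\eqref{eq:ell-cs} (equivalently~\eqref{eq:h},~\eqref{eq:N} on a compact surface) and gauge equivalence classes of solutions of the Bogomolny equations: given a solution $(u, N)$ of~\eqref{eq:ell-cs} with $u = v + h$, one recovers $B$ from $*B = -(\kappa N + \tau - \langle n,\phi\rangle)$, then $A$ up to gauge, and $\phi$ up to gauge, exactly as in the undeformed $O(3)$ Sigma model discussed in section~\ref{sec:field-th}; conversely every BPS solution arises this way in the gauge $a_0 = -N$. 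Uniqueness statements at the level of $(h, N)$ thus become uniqueness up to gauge at the level of $(\phi, A, N)$. With this dictionary in place each of the four items is a direct citation.

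For item~(1), I would simply invoke Proposition~\ref{prop:kappa-*-ub}: since $\kappa_*(D) \leq (\max\{1\mp\tau\}\,|\surface|/(2\pi|k_+-k_-|))^{1/2}$ and, by definition, no solution exists for $\kappa > \kappa_*(D)$, any $\kappa$ admitting a solution obeys this bound, which is independent of $D$. For item~(3), I would quote Proposition~\ref{thm:small-def} together with the uniqueness part baked into its proof via the implicit function theorem applied to $\Top$ at $(0, h_0, 0)$: there is $\kappa_0 > 0$ and a neighbourhood $U$ of $(h_0,0)$ in $\Hsp^r\times\Hsp^r$ such that for $|\kappa| < \kappa_0$ the only solution in $U$ is the smooth branch $(h_\kappa, N_\kappa)$, and by Sobolev embedding $\|h_\kappa - h_0\|_{C^1} + \|N_\kappa\|_{C^1} < \epsilon$ for $\kappa_0$ small. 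The smooth dependence outside a neighbourhood $\overline U$ of $\vset\cup\avset$ is the content of the last sentence of Proposition~\ref{thm:small-def} (the function $u_\kappa = v + h_\kappa$ varies smoothly there because the singular part $v$ is $\kappa$-independent). Items~(2) and~(4) are read off from Proposition~\ref{prop:existence-cs-sol-infty}: the first asserts two continuous families of solutions for $0 < \kappa < \kappa_0$, hence (via the dictionary) two gauge-inequivalent families of BPS solutions; the last claim of that proposition, together with Proposition~\ref{prop:cs-lim-points} and Lemma~\ref{lem:hn-N'n-div}, gives that for any neighbourhood $U$ of the cores and any $\epsilon > 0$ there is $\kappa' > 0$ with a solution satisfying $\|h_\kappa - \bar c(h_\kappa)\|_{C^1} \to 0$ and $\|N'_\kappa - \alpha_\pm\|_{C^1} \to 0$; translating $h_\kappa - \bar c(h_\kappa) \to 0$ and the divergence $\bar c(h_\kappa) \to \mp\infty$ back through $u_\kappa = v + h_\kappa$ and $\langle n,\phi\rangle = (e^{u}-1)/(e^{u}+1)$ yields $\|\phi_3^\kappa \mp 1\|_{C^1(\surface\setminus U)} < \epsilon$, and $N = N'/\kappa$ is the field appearing in the Lagrangian while $N'$ is what converges.

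The only genuinely delicate point — and where I would spend the most care — is the passage from the convergence $\bar c(h_\kappa)\to\mp\infty$, $h_\kappa - \bar c(h_\kappa)\to 0$ to the claimed $C^1$ closeness of $\phi_3^\kappa$ to $\pm 1$ away from the cores. One must check that on $\surface\setminus U$ the regularised field $h_\kappa$ (not just $h_\kappa - \bar c(h_\kappa)$) tends to $\pm\infty$ uniformly, so that $u_\kappa = v + h_\kappa \to \pm\infty$ uniformly there (recall $v$ is bounded on $\surface\setminus U$), whence $\phi_3^\kappa = \tanh(u_\kappa/2) \to \pm 1$ uniformly with derivatives controlled by those of $u_\kappa$, which are uniformly small by the $C^1$ convergence of $h_\kappa - \bar c(h_\kappa)$ and smoothness of $v$ off the cores. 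This is essentially bookkeeping on top of Lemma~\ref{lem:hn-N'n-div}, but it is the step that requires the neighbourhood $U$ of the cores to be excised and that pins down which sign ($\pm$) and which constant $\alpha_\pm$ occurs in terms of the sign of $k_+ - k_-$. I would also remark that the lower bound for the $\kappa$-interval in items (2)--(4) is the uniform gap of Proposition~\ref{prop:kappa-*-lb}, together with the observation after Proposition~\ref{prop:existence-cs-sol-infty} that $\kappa_{\mathcal C}$ inherits the same lower bound, so that all constants $\kappa_0, \kappa'$ may be taken uniform in $D \in \moduli^{k_+,k_-}$.
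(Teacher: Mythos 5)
Your proposal is correct and follows essentially the same route as the paper: the theorem is presented there explicitly as a summary of the section, with item (1) being Proposition~\ref{prop:kappa-*-ub}, item (3) Proposition~\ref{thm:small-def} with uniqueness from the implicit function theorem, and items (2) and (4) Proposition~\ref{prop:existence-cs-sol-infty} together with Proposition~\ref{prop:cs-lim-points} and Lemma~\ref{lem:hn-N'n-div}, so your assembly --- including the dictionary between solutions of the elliptic system and gauge classes of Bogomolny solutions, and the observation that it is $N' = \kappa N$ that converges to $\alpha_\pm$ --- is exactly what the paper leaves implicit. The only slip is the formula $\phi_3 = \tanh(u/2)$, which should read $\phi_3 = -\tanh(u/2)$ since $u = \log\bigl((1-\phi_3)/(1+\phi_3)\bigr)$, but as you explicitly defer the sign bookkeeping to the final step this does not affect the argument.
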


%%%%%%%%%%%%%%%%%%%%%%%%%%%%%%%%%%%%
\let \ux \undefined
\let \ox \undefined

\let\Ik\undefined
\let\Aset\undefined
\let\avh\undefined
\let\Kset\undefined
\let\opPhi\undefined
\let\opL\undefined
\let\opI\undefined
\let\lsdeg\undefined
\let\av\undefined
%%% end-local %%%%%%%%%%%%%%%%%%%%%%

%XXXXXXXXXXXXXX DELETE THIS XXXXXXXXXXXXXXX
%\input{sections/bounds-var}
%
%
%XXXXXXXXXXXXXXXXXXXXXXXXXXXXXXXXXXXXXXXXXX
%%% Local Variables:
%%% mode: latex
%%% TeX-master: "../../../../geometric-models"
%%% End:

\subsection{Symmetric deformations on the sphere}
\label{sec:symm-deform-sphere}

In this section we study the deformation constant in the sphere. It is
known that in the Euclidean plane, there is a solution to the
elliptic problem for any $\kappa \in \reals$. Hence, the existence of
an upper bound for $\kappa_{*}$ in a compact surface is a nontrivial
task. We will suppose all the vortices are located at the north pole
of the domain sphere, and all the antivortices at the south pole. We
choose trivialisations $\hf_{\pm}: U_{\pm} \to \sphere$ at
$U_{\pm} = \sphere \setminus \set{(0, 0, \mp 1)}$, which
stereographically project from the south or north pole respectively,
as $\varphi_{\pm}:U_{\pm} \to \cpx$. This projections are related by a
gauge transformation, which by spherical symmetry is,
\begin{equation}
  \varphi_+ = \frac{e^{in\theta}}{\varphi_-}, \qquad x \in U_+\cap U_-. 
\end{equation}

Whereas for the connection, if it is represented locally by $\gp_{\pm} \in
\Omega^1(U_{\pm})$,
\begin{equation}
  \gp_+ = \gp_- + n d\theta, \qquad x \in U_+ \cap U_-. 
\end{equation}

Stereographic coordinates in the domain sphere will be denoted
accordingly $x_{\pm} = r_{\pm} e^{i\theta_{\pm}}$. Hence, $x_+x_- = 1$
in $U_+\cap U_-$ and $\theta_+ =-\theta_-$. We choose the ansatz,
\begin{align}
  \varphi_{\pm} &= f_{\pm}(r_{\pm})e^{i k_{\pm}\theta_{\pm}} &
  \gp_{\pm} &= \gp_{\pm}(r_{\pm})\,d\theta_{\pm},
\end{align}
which is justified by the equivariant rotational symmetry of the
problem. Compatibility of the fields then requires $n = k_+ -
k_-$. The Bogomolny equations reduce to a system of ODES which we aim to
integrate,
\begin{align}
f_{\pm}' &= \frac{1}{r} (k_{\pm} \mp \gp_{\pm})\, f_{\pm},\\
\gp_{\pm}' &= r \Omega(r) B_{\pm},\\
N_{\pm}'' &= -\Omega(r) \brk(\kappa B_{\pm} - \frac{4 f_{\pm}^2N_{\pm}}{(1 +
f_{\pm}^2)^2}) - \frac{1}{r}N'_{\pm},
\end{align}
where,
\begin{align}
\Omega(r) &= \frac{4R^2}{(1 + r^2)^2}, \\
B_{\pm} &= - \brk(\kappa N_{\pm} + \tau \pm 1 \mp \frac{2}{1 + f_{\pm}^2}).
\end{align}

We solved the Bogomolny equations in the punctured disk
$\disk_1(0)\setminus\set{0}$ adding the compatibility conditions,
\begin{equation}
\begin{aligned}
f_+(1)f_-(1) = 1, &\qquad&
\gp_+(1) + \gp_-(1) = n, \\
N_+(1) = N_-(1), &\qquad&
N_+'(1) = - N_-'(1),
\end{aligned}
\end{equation}
together with the lowest order approximation to the fields at $r = 0$,
\begin{align}
f_{\pm} &= q_{\pm}r^{k_{\pm}} + \order\brk(r^{k_{\pm} + 1}),
\\
B_{\pm} &=
\begin{cases}
  -\brk(\kappa p_{\pm} + \tau \pm 1 \mp \frac{2}{1 + q_{\pm}^2}) + \order(r), &
  k_{\pm} = 0,\\
  -(\kappa p_{\pm} + \tau \pm 1 \mp 2) + \order(r), &
  k_{\pm} \neq 0,\\
\end{cases}
\\
\gp_{\pm}  &= 2 B_{\pm}R^2r^2 + \order(r^3),
\\
N_{\pm} &=
\begin{cases}
  p_{\pm} + \brk(
  -\kappa B_{\pm} + \frac{4q_{\pm}^2p_{\pm}}{(1 + q_{\pm}^2)^2}
  )\,R^2r^2 + \order(r^3),
  & k_{\pm} = 0, \\
  p_{\pm} - 
  \kappa B_{\pm}
  \, R^2r^2 + \order(r^3),
  & k_{\pm} \neq 0.
\end{cases}
\end{align}

To find the initial stable solution, we used the shooting method in
the interval $[\delta, 1]$ for a small value $\delta > 0$. Given
initial conditions $Z = 
(q_+, q_-, p_+, p_-)$ for the parameters, we solved the Bogomolny 
equations and defined a map $M: \reals^4 \to \reals^4$,
\begin{align}
Z \mapsto (f_+(1)f_-(1) - 1, \gp_+(1) + \gp_-(1) - n,
N_+(1) - N_-(1), N'_+(1) + N'_-(1)),
\end{align}
whose zero determines suitable initial conditions for a solution to the 
Bogomolny equations compatible at the boundary of the disk. Next, we
applied the pseudo-arclength continuation method, as described in
\cite{flood2018chern}. Given initial data $(\kappa_0, Z_0) \in \reals^5$, we
sought a nearby point $(\kappa, Z)$ such that,
\begin{equation}
  \dot Z_0 \cdot (Z - Z_0) + \dot \kappa_0\,(\kappa - \kappa_0) =
  \delta s,
\end{equation}
for a small positive constant $\delta s$. We restricted ourselves to
positive $\kappa$ and solved the Bogomolny 
equation in the vortex-antivortex case and the case $k_+ = 2$, $k_- =
0$. We solved both cases on a sphere of radius $2$. The results can be
seen on Figure~\ref{fig:kappa}. We found that for the
vortex-antivortex case, the data suggests  $\kappa$ is unbounded. This
would be the case if for all the solutions, the function $h = \log
f^2$ have bounded average. If rotationally symmetric solutions are
unique, invariance of Taube's equation under isometries of the sphere
implies the average is actually zero. Therefore, we conjecture
$\kappa$ unbounded for this configuration of cores' on the sphere. On the
 $(2, 0)$ case, arclength continuation started growing fairly quickly
 until it reached a maximum value and started 
decreasing towards zero as expected.  In Figure~\ref{fig:sols} on the
right column can be seen the two limiting solutions of
theorem~\ref{thm:limit-kappa-0-hf}. At $\kappa = 0$  we obtained the
solution to the Taubes equation as expected and a limiting solution, as
the averages of $h$ diverged towards infinity, the gauge invariant
component $\hf_3$ of the Higgs field $\hf$ started converging to
constant $1$, in other words, $\hf$ converged to the north pole
section, while $\kappa\nf$ converged to the expected limit
\begin{align}
  \alpha_- = \frac{3}{4}.
\end{align}

% Goal  a largo plazo
% \begin{proposition}
%   If $\kappa_+ = \kappa_-$, $\tau = 0$ and $D$ is a configuration of
%   cores on the round sphere, such that for each vortex there is an
%   antivortex at antipodal position, $\kappa(D)$ is unbounded.
% \end{proposition}

% \begin{proof}
%   If $(u, N)$ is a solution to the elliptic problem~\eqref{eq:ell-cs}
%   for the given configuration of cores, and $o: \sphere \to \sphere$
%   is the antipodal map, then $(-u\circ o, -N\circ o)$ is also a
%   solution. 
% \end{proof}

\begin{figure}
  \centering
  \includegraphics[width=.85\textwidth]{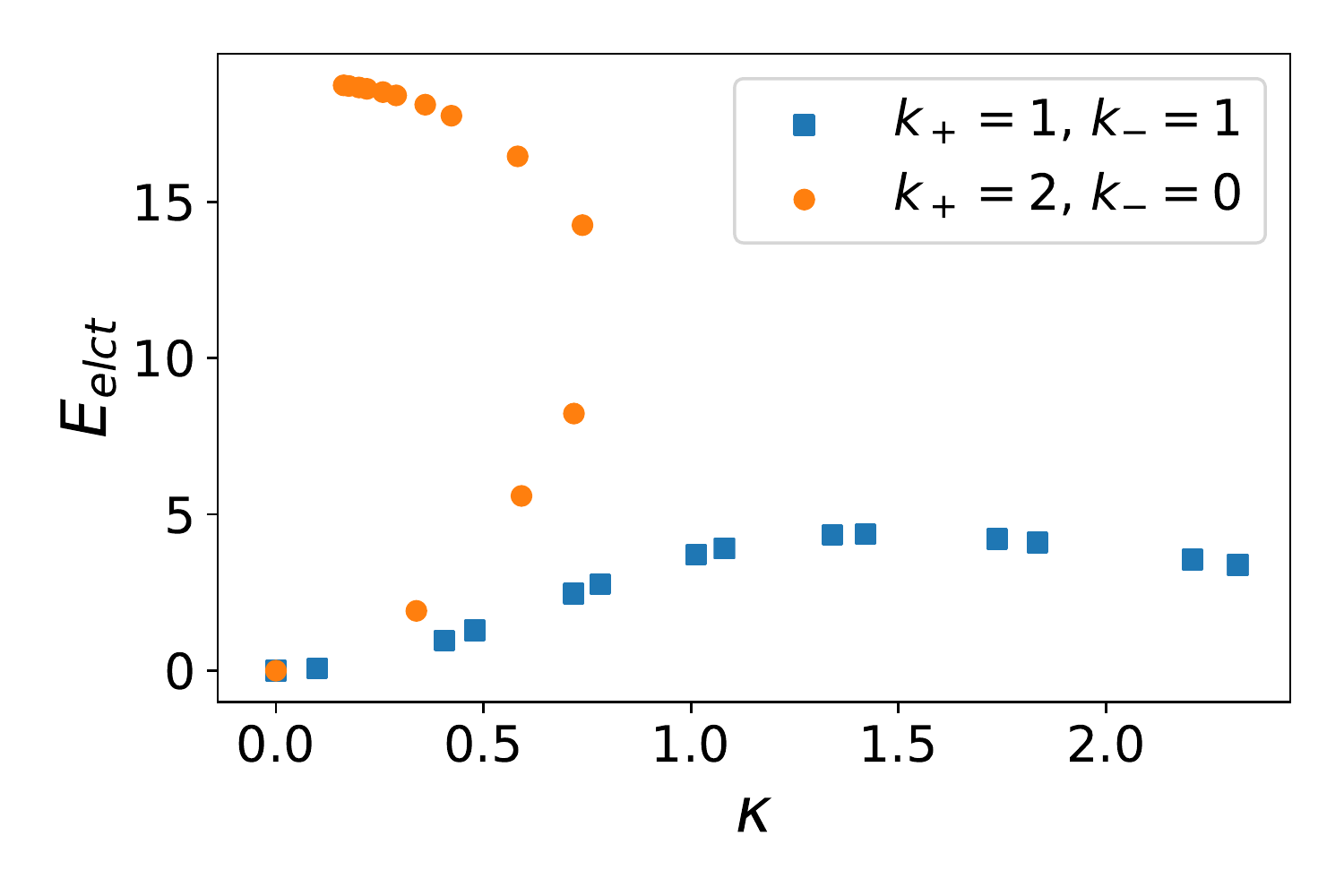}
  \caption{Comparison of the electrostatic energy in the balanced and
    unbalanced cases. The existence of two types of solutions if
    $k_+ \neq k_-$ is evident from the graph, while the energy also suggests
    uniqueness of the solution $(h_{\kappa}, N_{\kappa})$ for each
    $\kappa$ in the balanced case.}
  \label{fig:kappa}
\end{figure}

\begin{figure}
  \centering
  \begin{tabular}{cc}
    \includegraphics[width=2.4in]{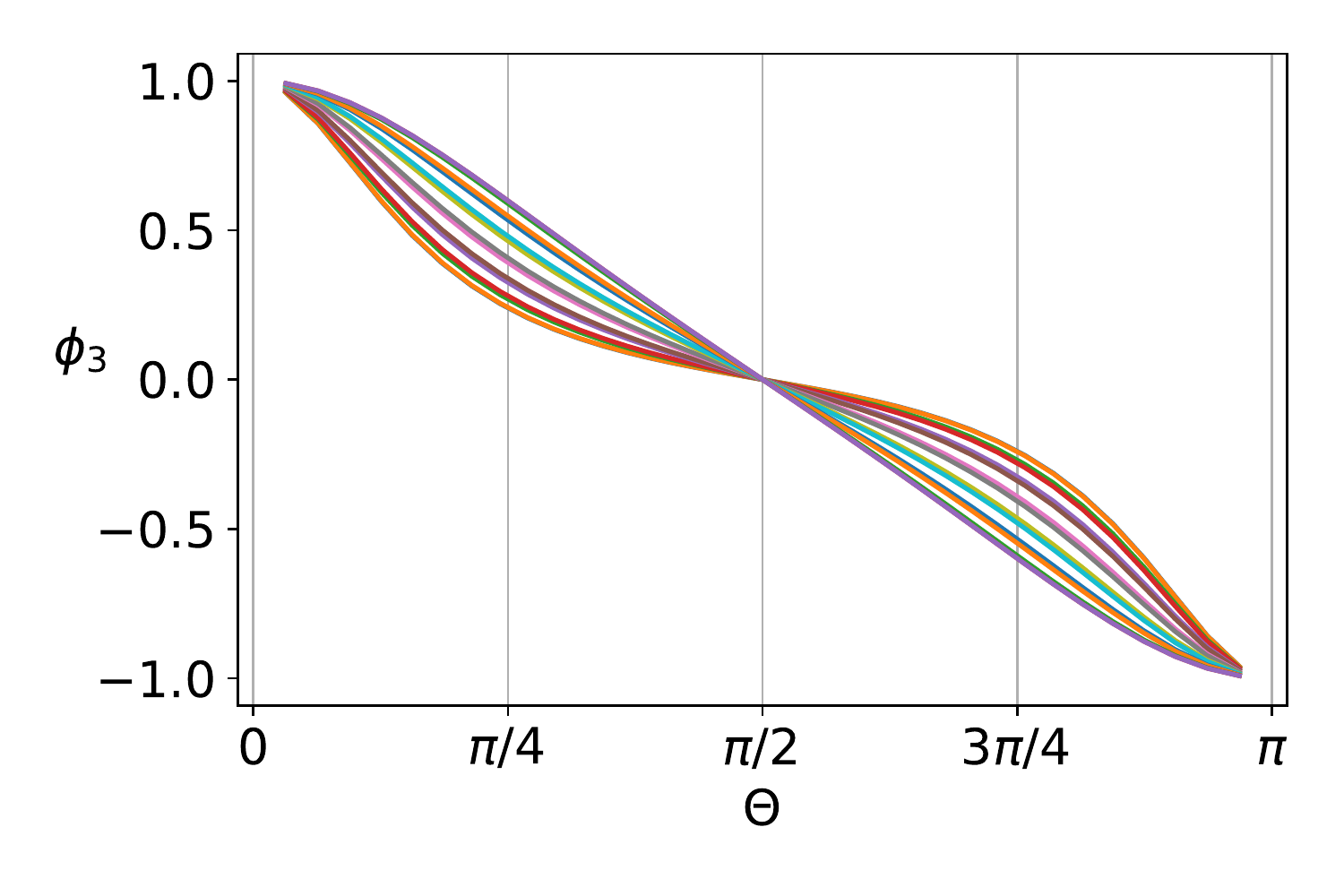} &
    \includegraphics[width=2.4in]{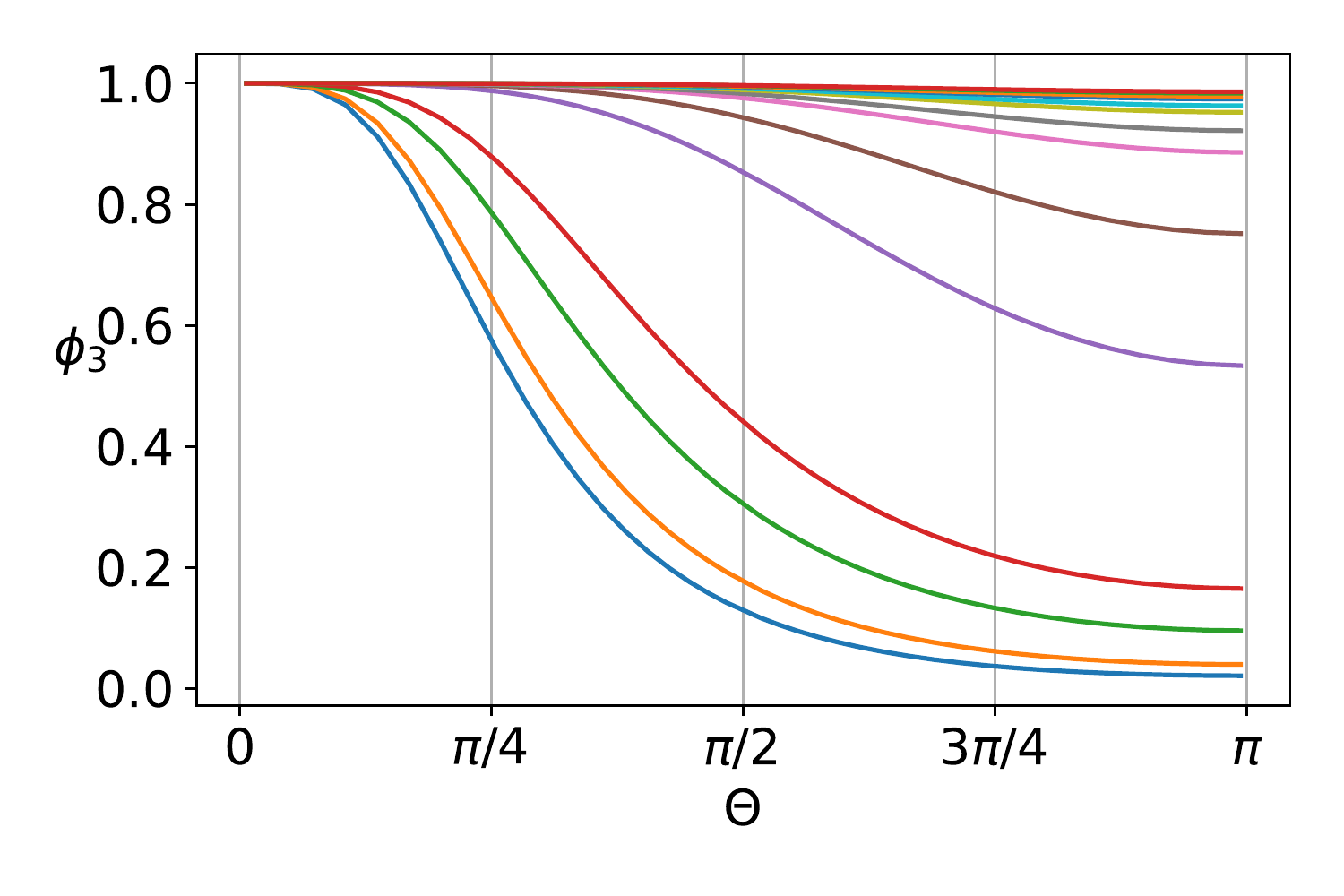} \\
    \includegraphics[width=2.4in]{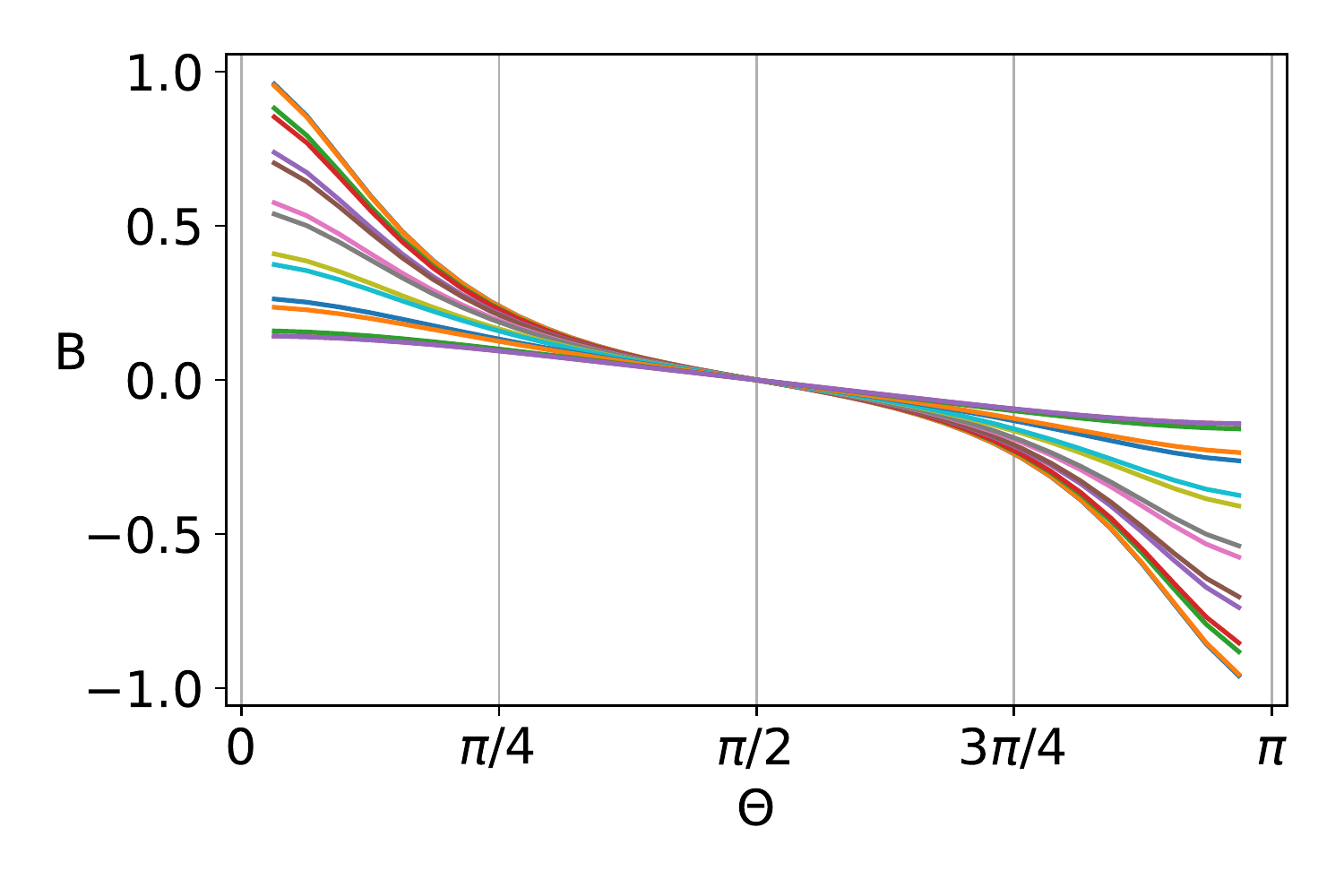}    &
    \includegraphics[width=2.4in]{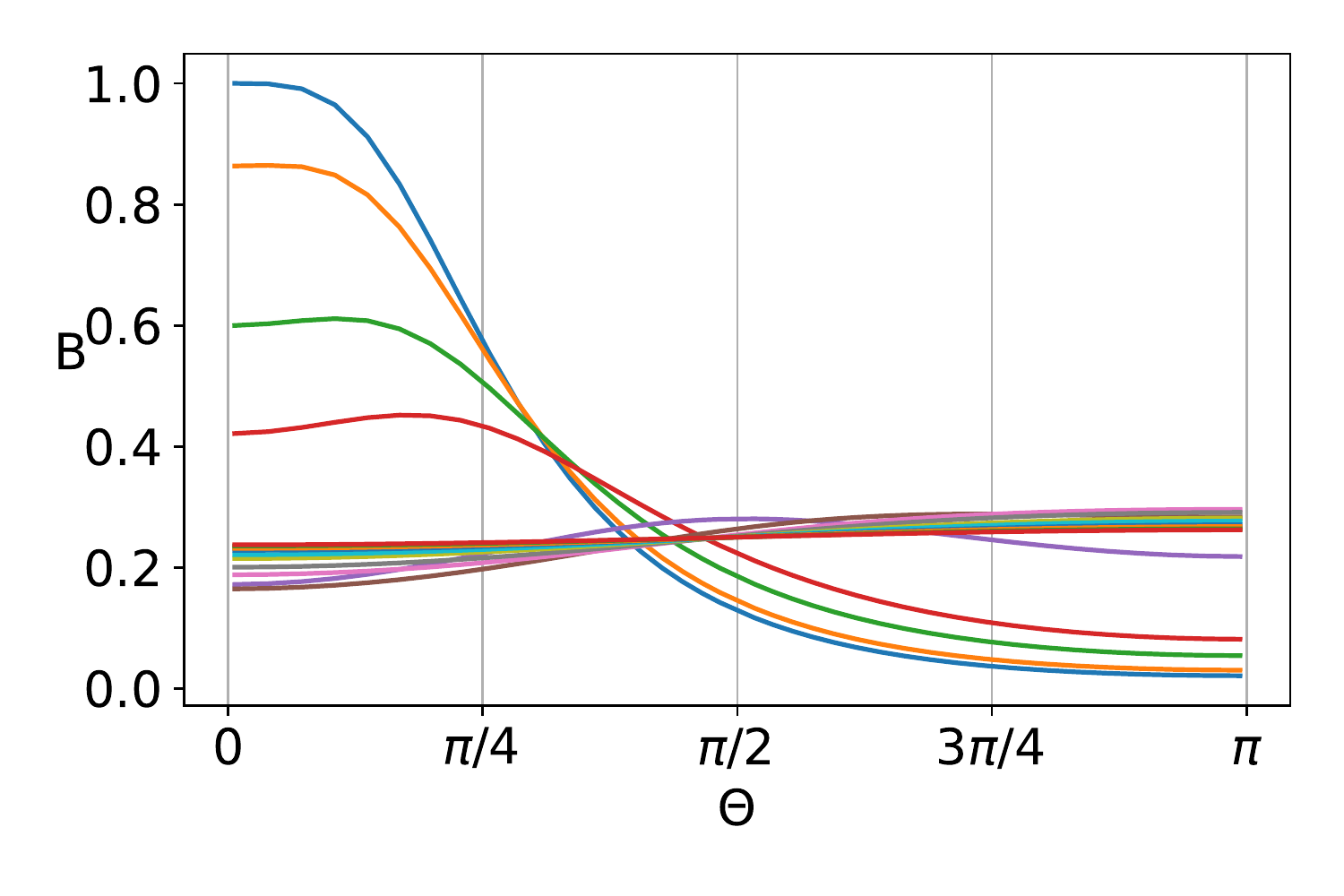}    \\
    \includegraphics[width=2.4in]{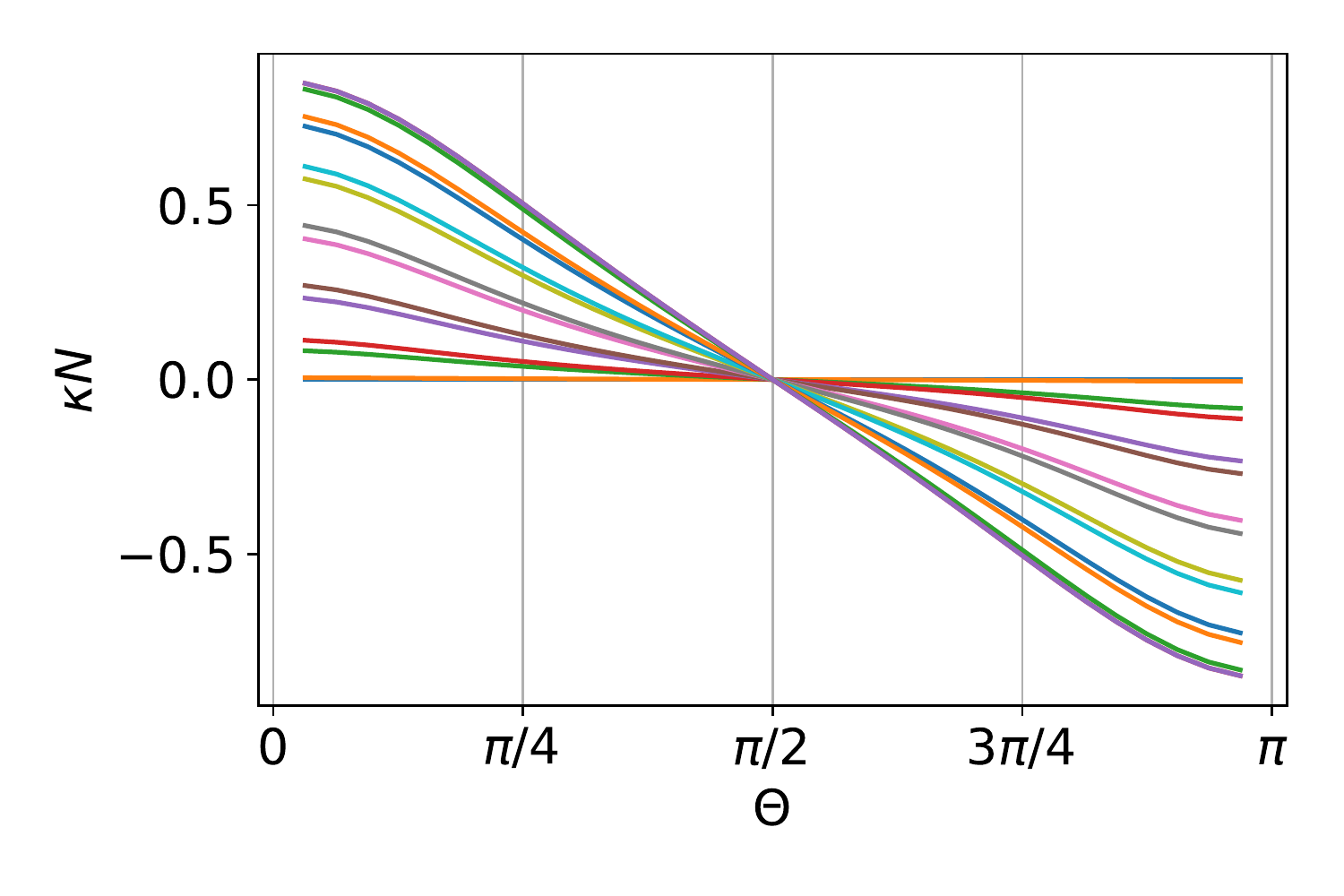}  &
    \includegraphics[width=2.4in]{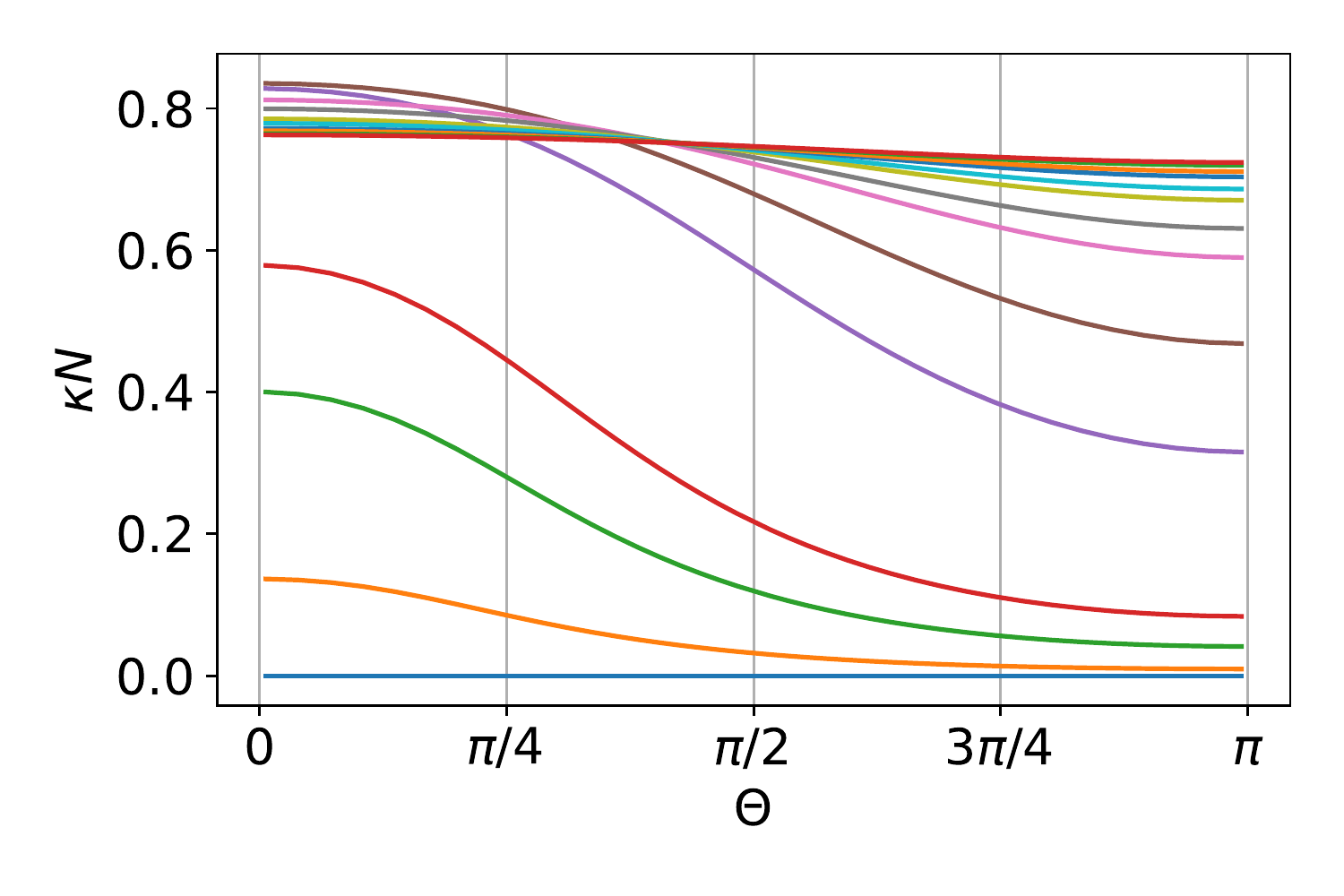}  \\
    \includegraphics[width=2.4in]{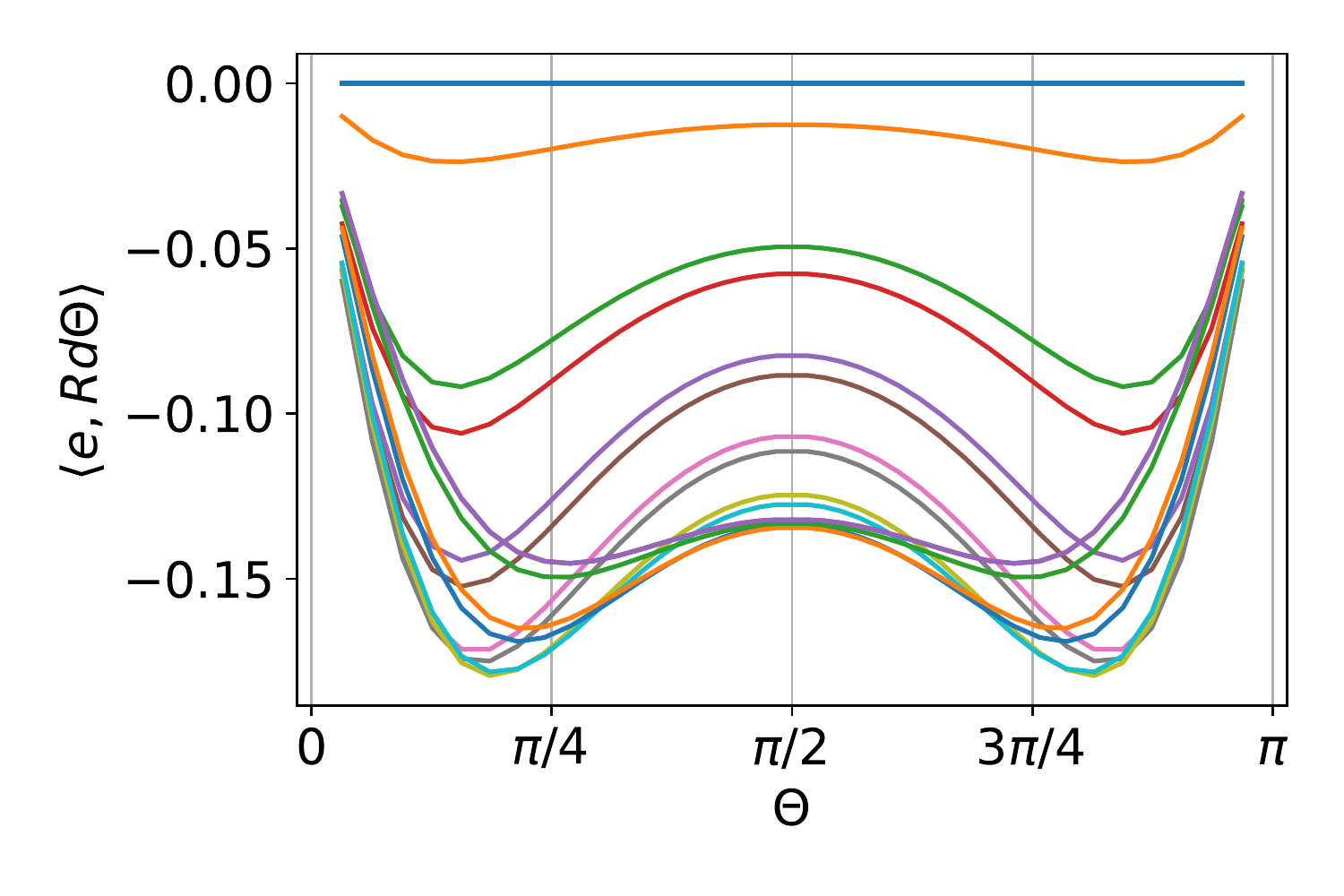} &
    \includegraphics[width=2.4in]{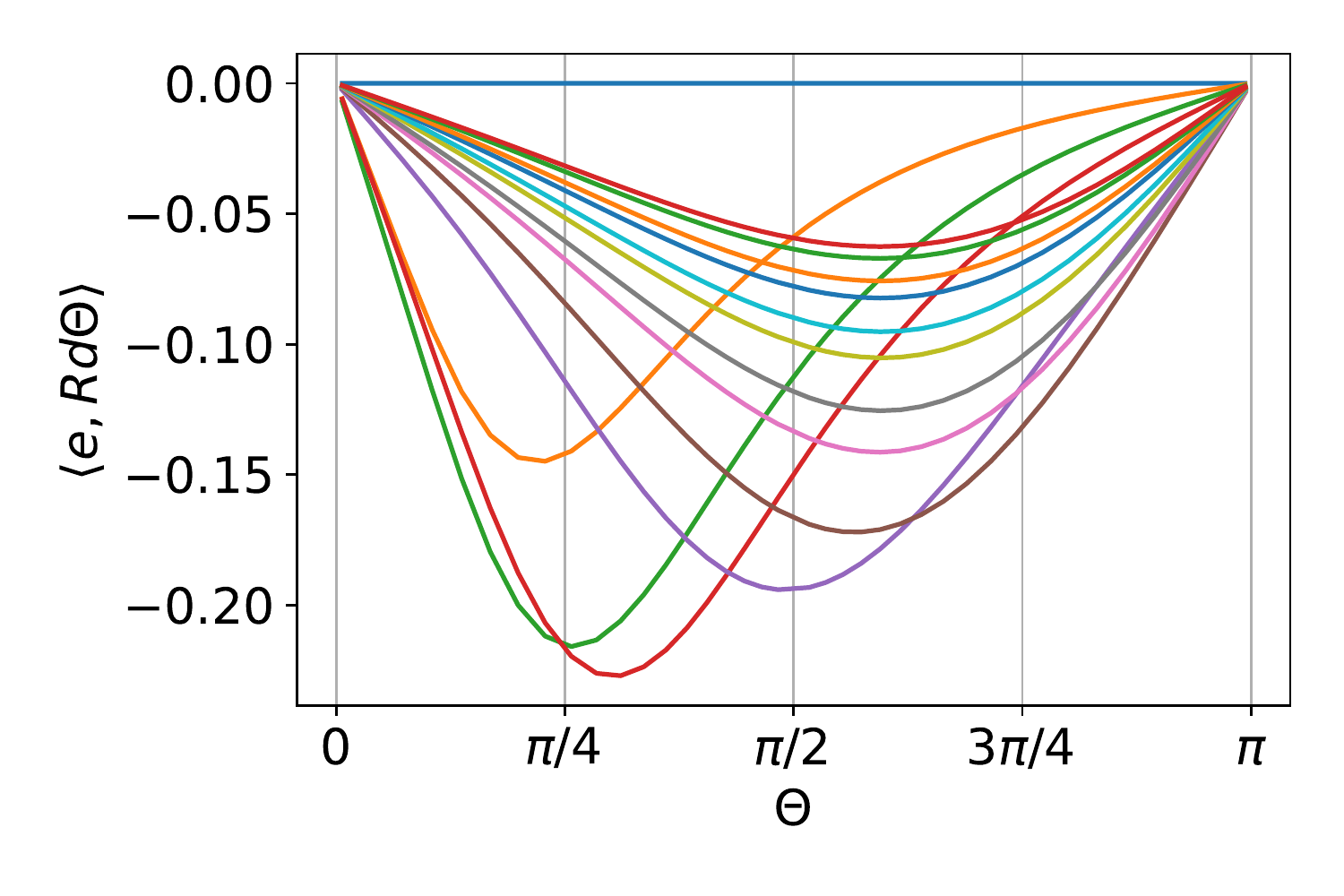}
  \end{tabular}
    
  \caption{Snapshots of solutions to the Bogomolny equations on the
    sphere along the declination angle $\Theta$. The radius was set to
  $R = 2$. \textbf{Left.} Vortex-antivortex case. \textbf{Right.} Two
  vortices at north pole and no antivortices. The asymmetry parameter
  was set to $\tau = 0$.}
  \label{fig:sols}
\end{figure}

%%% Local Variables:
%%% mode: latex
%%% TeX-master: "../../../geometric-models.tex"
%%% End:

% \let         \vb          \undefined   
% \let         \brk         \undefined   
% \let         \surface     \undefined   
% \let         \potentialE  \undefined   
\let         \nf          \undefined   
\let         \np          \undefined   
\let         \hf          \undefined   
\let         \gp          \undefined   
\let         \Diff        \undefined   
\let         \modulikk    \undefined   
\let         \Hsp         \undefined   
\let         \Wsp         \undefined   
\let         \Top         \undefined   
\let         \Lop         \undefined   
\let         \vol         \undefined   
% \let         \suchthat    \undefined   
% \let         \intManifold \undefined   
% \let         \vortexSet   \undefined   
% \let         \avortexSet  \undefined   
% \let         \vavSet      \undefined   
% \let         \Dfrechet    \undefined   
% \let         \brackets    \undefined   
% \let         \domain      \undefined   
% \let         \SobSp       \undefined

%%% Local Variables:
%%% mode: latex
%%% TeX-master: ../../../geometric-models.tex
%%% End:

\section{Dynamics of the moduli space of Ginzburg-Landau vortices  with a
  Chern-Simons term}
\label{ch:cs-loc}

%\usepackage{mathrsfs}

% \NewDocumentCommand\st{m}{
%   \;\left\lvert\;{#1}\right.
% }

% \NewDocumentCommand\brk{r()}{
%     \left( {#1} \right)
% }

\def \higgsField              {\phi}                  
\def \gaugePotential          {a}                     
\def \electricField           {e}                     
\def \magneticField           {B}                     
\def \spacetimeGaugePotential {A}                     
\def \curvatureField          {F}                     
\def \neutralField            {N}                     
\def \fieldConfSpace          {\mathcal{A}}           
\def \gaugeTransfSpace        {\mathcal{G}}           
\def \configurationSpace      {\mathscr{C}}           
\def \bigTantent              {\mathcal{T}}           
\def \vertSpace               {\mathcal{V}}           
\def \connectionSpace         {\mathscr{A}}           
\def \gaugeTransfSpace        {\mathscr{G}}           
\def \LagrangianDensity       {\mathcal{L}}
\def \sign                    {s}

\newcommand* \hf           {\higgsField}             
\newcommand* \gp           {\gaugePotential}         
\newcommand* \ef           {\electricField}          
\newcommand* \mf           {\magneticField}          
\newcommand* \cf           {\curvatureField}         
\newcommand* \nf           {\neutralField}           
\newcommand* \confSp       {\configurationSpace}     
\newcommand* \gaugeSp      {\gaugeTransfSpace}       
\newcommand* \vertSp       {\mathcal{V}}             
\newcommand* \gaugeGp      {\gaugeTransfSpace}       
\newcommand* \fieldSp      {\fieldConfSpace}         
\newcommand* \connSp       {\connectionSpace}        
\newcommand* \connForm     {\omega}                  
\newcommand* \bigTangent   {\mathcal{T}}             
\newcommand* \pair         {\Phi}                    
\newcommand* \stgp         {\spacetimeGaugePotential}
\newcommand* \hreg         {\tilde h}                
\newcommand* \htilde       {\hreg}                   
\newcommand* \chireg       {\tilde \chi}             
\newcommand* \ConTr        {\Omega_\moduliSp} 
\newcommand* {\Energy    } {E}
\newcommand* {\KinEn     } {T}
\newcommand* {\PotEn     } {V}
\newcommand* {\KinTr     } {K}
\newcommand* {\Lagrangian} {L}

%%% Geometry

\def\covariantDerivative{\nabla}

\newcommand* \cdv      {\covariantDerivative}
\newcommand* \vp       {\mathbf{p}}          
\newcommand* \conFc    {\Omega}              
\newcommand* \dist     {\mathrm{d}}          
\newcommand* \manifold {M}

%%% Analysis

\newcommand*{\moduliSp   }{\mathcal{M}}
\newcommand*{\slaplacian }{\Delta_\surface}
\newcommand*{\dvr        }[1]{\nabla\cdot\mathbf{#1}} %DiVeRgence
\newcommand*{\Diff       }{\mathcal{D}}
\newcommand*{\vdel       }{\del}
\newcommand*{\dTaubes    }{\del_v\!\TaubesOp}
\newcommand*{\afn        }{a}
\newcommand*{\bfn        }{b}
\newcommand*{\ueps       }{u}
\newcommand*{\Xsp        }{\mathcal{X}}
\newcommand*{\hilbSp     }{\mathcal{H}}
\newcommand*{\fFunc      }{\mathcal{F}}
\newcommand*{\intManifold}{\int_\manifold}
\newcommand*{\Hn         }{\mathrm{H}}
\newcommand*{\wto        }{\rightharpoonup}
\newcommand*{\ElpOp      }{E}
\newcommand*{\Hsp        }{H}
\newcommand*{\TaubesOp   }{L}
\newcommand*{\Proj       }{\Pi}

% \NewDocumentCommand\lproduct{sm}{
%   \IfBooleanTF#1
%     {
%         \left\langle{#2}\right\rangle
%     }{
%         \langle{#2}\rangle
%     }
% }

% \NewDocumentCommand\pdv{mo}{
%  % Example ussage:
%  % \pdv{x}y  : This is the partial derivative operator "times" y.
%  % \pdv{x}[y]: This is the partial derivative y with respect to x.
%  \IfValueTF{#2}
%  {\frac{\del #2}{\del #1}}
%  {\frac{\del}{\del #1}}
% }

% \NewDocumentCommand\abs{sm}{
%   \IfBooleanTF#1
%   {
%     \left\lvert {#2} \right\rvert}
%   {
%     \lvert {#2} \rvert
%   }
% }

% \NewDocumentCommand\norm{sm}{
%   \IfBooleanTF#1
%   {
%     \left\lvert\left\lvert {#2} \right\rvert\right\rvert
%   }
%   {
%     \lvert\lvert {#2} \rvert\rvert
%   }
% }

% \newcommand*\dv[1]{\frac{d}{d #1}}

% %%% Algebra

\NewDocumentCommand\hprod{sm}{
  \IfBooleanTF#1
  {
    \left\langle {#2} \right\rangle
  }
  {
    \langle {#2} \rangle
  }
}

\newcommand*\aForm{B}

%\label{sec:introduction}
In section~\ref{sec:cs-intro} we proved the existence of a minimum 
constant $\kappa_{\min}$ such that regardless of core positions on the 
moduli space $\moduli^{k_+, k_-}(\surface)$ of vortices and
antivortices of the $U(1)$-gauged $O(3)$ Sigma model with a Chern-Simons deformation, there
exists a solution to the Bogomolny equations close to the
solution at $\kappa = 0$. This result extends similar claims for the
Ginzburg-Landau model obtained in~\cite{flood2018chern} and justifies
the possible existence of a localization formula similar to the one
obtained in section~\ref{subsec:loc-bps-solitons} for BPS 
solitons of the gauged $O(3)$ Sigma model, with the addition of a term 
dependent on the Chern-Simons
constant $\kappa$. In this section we apply the general framework of
section~\ref{sec:loc-formulas} to compute the extra $\kappa$ term in
the localization formula. As the calculations are similar for both 
models, we compute a localization
formula for each one and finalise discussing about the extension of
our formula to the coincidence set.

Previous work on the subject for the Ginzburg-Landau functional
includes models of Kim-Min and
Kim-Lee~\cite{kim199281,kim_vortex_1994}, where the authors considered
a related model with a different type of the Chern-Simons
interaction,~\cite{kim_first_2002}, where Kim and Lee analysed the dynamics of
the Ginzburg-Landau model with a neutral field  
on the plane and~\cite{collie_dynamics_2008} where Collie and Tong 
addressed motion on the moduli space of abelian vortices 
in the presence of a magnetic field and concluded that the extra
Chern-Simons term in the localization formula is the Ricci
form of the metric on the reduced moduli space. Later, 
in~\cite{alqahtani2015ricci} Alqahtani and Speight showed that the deformation 
term of Kim-Lee cannot extend to the coincidence set of modulli space, whereas 
the term from Collie-Tong can, and thus Kim-Lee and Collie-Tong deformations 
of the Abelian Higgs model are different.

%%% Local Variables:
%%% TeX-master: "../../../geometric-models.tex"
%%% coding: utf-8
%%% End:

%\input{chapters/cs-term/sections/loc.tex}

\subsection{The Maxwell Higgs Chern Simons model}\label{subsec:mhcs-model}

We work on a Riemann surface $\surface$ that can be either compact or
the Euclidean plane. The setup is as in
section~\ref{sec:loc-formulas}. We assume the existence of a principal
bundle, 
\begin{equation}
\label{eq:p-bundle}
U(1) \to P \to \reals\times\surface
\end{equation}
and denote by $\rho$ the representation of $U(1)$ as isometries of the
complex plane, 
\begin{equation}
\label{eq:rep}
\rho: U(1) \to \Aut(\cpx).
\end{equation}

Let $F =
(\reals\times\surface) \times_{\rho} \cpx$ be the hermitian bundle
associated to $\rho$. We fix the metric in $\reals\times\surface$ as the
product
\begin{equation}
\label{eq:st-metric}
dt^2 - g,
\end{equation}
where $g$ denotes the Riemannian metric in $\surface$. Let $\tilde\Diff$ be
the connection induced by $\rho$,
\begin{equation}
\label{eq:diff-def}
\tilde\Diff: \Gamma F \to  \Gamma\brk((\reals \oplus T^{*}\surface)
\otimes TF). 
\end{equation}

Given $\hf \in \Gamma F$, in a local trivialisation this is the map
\begin{equation}
\label{eq:diff-def-loc}
\tilde\Diff \hf = d\hf - i \tilde \stgp\otimes \hf.
\end{equation}

As in section~\ref{sec:cs-intro}, we add a neutral scalar field,
\begin{equation}
\label{eq:nf}
\nf \in C^{\infty}(\reals \times\surface).
\end{equation}

It will be convenient to make the division of space and time explicit,
we denote by $\Diff_t\hf\in C^{\infty}(\reals\times 
\surface)$ the time component of $\tilde\Diff\hf$ and from now onwards
$d\hf: \reals \to \Gamma(T^{*}\surface\otimes TF)$ will be the spatial
component of $d\hf$ as a function of time. The spatial component of
$\tilde\Diff\hf$, denoted by $\Diff\hf$, is
\begin{align}
\label{eq:diff-hf-def}
\Diff\hf: \reals \to \Gamma\brk(T^{*}\surface\otimes TF), &&
\Diff\hf = d\hf - i A\otimes\hf,
\end{align}

The Maxwell-Higgs Lagrangian is,
\begin{multline*}
\label{eq:mh-lag}
\Lagrangian_{MH} = \half\brk(\norm{\Diff_t\hf}^2 + \norm{\ef}^2 +
\norm{\dot\nf}^2 - \norm{\Diff \hf}^2 - \norm{\mf}^2 - \norm{d\nf}^2)
\\ - \lproduct{1, U},
\end{multline*}
the norms and the product in the Lagrangian are in the
$\Lsp^2$ sense. The potential function $U$ is given by,
\begin{equation}
\label{eq:u-pot}
U = \frac{1}{8} \brk( -2\kappa\nf + 1 - \abs\hf^2)^2 + \half \abs{\nf\hf}^2.
\end{equation}

We add a Chern-Simons term to the Lagrangian,
\begin{equation}
\label{eq:lag-cs}
\Lagrangian_{CS} = \half\,\brk( \lproduct{A,*\ef} + \lproduct{\tilde A_0,
  *\mf} ),
\end{equation}
the Maxwell-Higgs-Chern-Simons Lagrangian is,

\begin{equation}
\label{eq:mhcs-lag}
\Lagrangian = \Lagrangian_{MH} + \kappa \Lagrangian_{CS}.
\end{equation}

As for the $O(3)$ Sigma model, the Chern-Simons term is not gauge invariant,
however if two gauge 
potentials differ by a gauge transformation, the corresponding
Lagrangians will differ by a total divergence, hence they will yield
the same field equations. If $\surface$ is compact, $A$ and $\tilde A_0$
are only defined locally, however, we can always choose an open and
dense subset of $\surface$, diffeomorphic to the unit disk by the
Riemann mapping theorem, in which the connection is trivializable. In
any case, the Lagrangian is well defined up to gauge
equivalence. Variating $\Lagrangian$ with respect to $\tilde A_0$,  
Gauss's law is,
\begin{equation}
\label{eq:gauss-law}
d^{*} \ef = -\lproduct{\Diff_t\hf, i\hf} + \kappa\, *\mf.
\end{equation}

In terms of the gauge potential this is the same as,
\begin{equation}
\label{eq:gauss-law-gp}
-\brk(\laplacian + \abs{\hf}^2) \tilde A_0 = \frac{1}{2i}
\brk(
\hf \dot\hf^{\dagger} - \hf^{\dagger}\dot\hf
) - d^{*} \dot A + \kappa *\mf.
\end{equation}

Let $\KinEn$ and $\PotEn$ be the kinetic and potential energy of the
fields,
\begin{equation}
  \label{eq:kin-pot-en}
  \begin{aligned}
\KinEn &= \half(
\norm{\Diff_t\hf}^2 + \norm{\ef}^2 + \norm{\dot\nf}^2
), \\
\PotEn &= \half(\norm{\Diff\hf}^2 + \norm{\mf}^2 + \norm{d\nf}^2)
+ \lproduct{1, U},    
  \end{aligned}
\end{equation}
the total conserved energy of the fields is
\begin{equation}
\label{eq:total-energy}
\Energy = \KinEn + \PotEn,
\end{equation}

If $\surface$ is the Euclidean plane, we assume the following
convergence at infinity, 
\begin{align}
\label{eq:conv-infinity-fields}
%\begin{aligned}
  \dot A, A \in \Hsp^1\brk(\Omega^1(\plane)), &&
  \tilde A_0, \nf \in \Hsp^1\brk(\plane), \\
  \dot\nf, \dot\hf \in\Lsp^2\brk(\plane), &&
  \lim_{\abs{x}\to\infty} \abs{\hf(x)}^2 = 1.
%\end{aligned}
\end{align}

Let $\del_A\hf$ be the projection of $\Diff\hf$ in the sub-space $(1,
0)$ of the complexification of the bundle $T^{*}\surface\otimes TF$,
in local coordinates,
\begin{equation}
\label{eq:del-gp}
\conj\del_{A}\hf = \half\brk(\Diff_1\hf + i\,\Diff_2\hf ).
\end{equation}

We apply the Bogomolny trick to obtain a set of equations obeyed by the
fields,

\begin{equation}
%\label{eq:bog-trick}
\begin{aligned}[b]
  0 &\leq \half\left(\norm{\Diff_t\hf - i\nf\hf}^2
  + \norm{\dot\nf}^2 + \norm{-d\nf + \ef}^2\right.\\
    &\quad\left. + \norm*{*\mf -   \half \brk(-2\kappa\nf + 1
    - \abs\hf^2)}^2\right) + \norm{\conj\del_A\hf}^2\\
    &= \Energy - \lproduct{\Diff_t\hf, i\nf\hf} - \lproduct{d\nf, \ef} -
    \half \lproduct{*\mf, -2\kappa\nf + 1 - \abs\hf^2}\\
    &\quad + \norm{\conj\del_A\hf}^2 - \half\norm{\Diff\hf}^2 \\
&= \Energy - \lproduct*{\nf, -*d*\ef +\lproduct{\Diff_t\hf, i\hf} -
  \kappa*\mf } - \half\lproduct{*\mf, 1}\\
&\quad + \half\lproduct{*\mf, \abs{\hf}^2}
+ \norm{\conj\del_A\hf}^2 - \half\norm{\Diff\hf}^2\\
&= \Energy - n\pi.
\end{aligned}
\end{equation}

To obtain the last equation, we discarded several
divergences and used Gauss's law and the identity,
\begin{equation}
%\label{eq:cpx-diff}
*\mf\,\abs\hf^2 = \abs{\del_A\hf}^2 - \abs{\conj\del_A\hf}^2.
\end{equation}

We also used that,
\begin{equation}
%\label{eq:chern-cond}
\int_{\surface} \mf = 2n\pi, \qquad n \in \mathbb{Z}.
\end{equation}

If $\surface$ is compact, this is due to the fact that $\mf$ is the
curvature of the line bundle $F$, in the case that $\surface$ is the
Euclidean plane, this comes from the assumptions of decaying of the
gauge potential at infinity and the nontrivial winding of $A$ at the
circle at infinity. Hence, a set of fields  $\pair = (\hf, \nf,
A) \in \fieldSp$, is a minimal with energy 
\begin{equation}
%  \label{eq:minimal-energy}
  \Energy = n\pi,
\end{equation}
provided it satisfies the Bogomolny equations,
\begin{align}
%\label{eq:bog-dotnf}
\dot\nf &= 0,\\
\label{eq:bog-dnf}
 \ef &= d\nf, \\
\label{eq:bog-d0-hf}
\Diff_t\hf &= i\nf\hf,\\
\label{eq:bog-conj-del-gp-hf}
\conj\del_A\hf &= 0, \\
\label{eq:bog-mf}
*\mf &= \half\brk(-2\kappa\nf + 1 - \abs\hf^2).
\end{align}

From equations \eqref{eq:bog-dnf}, \eqref{eq:bog-d0-hf}  and Gauss's
law, $\nf$ is a solution to the elliptic problem,
\begin{equation}
\label{eq:nf-elliptic-problems}
\brk(\laplacian + \abs\hf^2) \nf = \kappa*\mf. 
\end{equation}

If $\pair$ is a solution to the Bogomolny equations and Gauss's
law, by conservation of Energy and the Bogomolny equations,
\begin{equation}
\label{eq:lag-bog-spc}
\Lagrangian = 2\,\KinEn + \kappa\,\Lagrangian_{CS} - \Energy
= \frac{\kappa}{2} \brk(\tilde A_0 + \nf, *\mf) - n\pi.
\end{equation}

If we take the radiation gauge,  $\tilde A_0 = -\nf$, $\pair$ is an
extremal of the Lagrangian. The solution is stationary by the 
Bogomolny equations. Hence $\pair$ is a solution to the field
equations.

%However, it is known that the set $\fieldSp'$ of fields
%satisfying Bogomolny's equations in this gauge are not the only
%extremals of the Lagrangian.

%%% Local Variables:
%%% TeX-master: "../../../geometric-models.tex"
%%% End:

\subsection{Low energy dynamics with a Chern-Simons
  term}\label{sec:cs-localisation} 

To apply the low energy approximation, we will work in the
space $\fieldSp'$ of fields $\pair = (\nf,\hf, \tilde A)$ which are
solutions to the Bogomolny equations. We have two models of the moduli
space, the $O(3)$ Sigma model, for which localization was discussed
on section~\ref{sec:loc-formulas} and the Ginzburg-Landau model,
studied by Samols on the plane~\cite{samols1992}. The Bogomolny equations 
for both soliton types have a similar structure, allowing to compute
the contribution to the $\Lsp^2$ metric by the Chern-Simons term in
both models. In this section, we do so and compute the energy
contribution for the space of solutions to the Bogomolny equations
$\fieldSp'$, our computation will be valid in both cases. In the next
section we specialise into the MHCS moduli space and later extend our
result to the $O(3)$ Sigma model. Given $\pair \in  
\fieldSp'$, the formal tangent space $T_{\pair}\fieldSp'$ is the space
of solutions to the linearization of the Bogomolny equations at
$\pair$. We introduce the $\Lsp^2$ metric on $T_{\pair}\fieldSp'$
induced by the metrics in $\surface$ and target space. We have the
inclusion
\begin{align}
\label{eq:gsp-into-tsp}
\gaugeSp \hookrightarrow T_{\pair}\fieldSp', &&
\alpha\in \gaugeSp \mapsto (\nf, e^{i\alpha}\hf, A + d\alpha) \in
T_{\pair}\fieldSp',
\end{align}
defining the vertical bundle $\gaugeSp \to \vertSp \to
\fieldSp'$. Suppose $\pair_s:\reals \to \fieldSp'$ is a differentiable
curve in $\fieldSp'$, 
meaning that as a function in an open dense set $U\subset
\surface$,  $\reals \times U \to \reals\times F
\times \Omega^1(U)$ is
differentiable. Let us define $\beta = \tilde A_0 + \nf$, by Gauss's law
and~\eqref{eq:nf-elliptic-problems}, $\beta$ 
is a solution to
\begin{equation}
\label{eq:bog-get-eq}
\brk(\laplacian + \abs\hf^2) \beta = -\frac{1}{2i}\brk(
\hf\dot\hf^{\dagger} - \hf^{\dagger}\dot\hf
) + d^{*} \dot A.
\end{equation}

Equation \eqref{eq:bog-get-eq} means $\beta$ is the orthogonal
projection of $\dot\pair$ onto $\vertSp$. Recall in $\fieldSp'$ the energy is
conserved. By the Bogomolny equations energy is given by the expression,
\begin{equation}
\label{eq:energy-fieldsp}
\Energy = \half\brk(\norm{\nf\hf}^2 + \norm{d\nf}^2) + \PotEn = n\pi.
\end{equation}

As in section~\ref{sec:loc-formulas}, we assume variations of fields in
$\fieldSp'$ are good approximations to slowly moving
vortices.  We work perturbatively in the deformation
parameter. Assume $\kappa$ 
is small, by equation \eqref{eq:nf-elliptic-problems},
\begin{equation}
\label{eq:nf-kappa-pert}
\nf = \kappa\,\nf_{\kappa} + \order(\kappa^2).
\end{equation}

Discarding terms of order $\kappa^2$ and a divergence, the kinetic
energy of a field $\pair \in\fieldSp'$ is,
\begin{equation}
  \label{eq:kinen-field-sp}
  \begin{aligned}[b]
    \KinEn &= \half\brk( \norm{\dot\hf^{\perp} - i\nf\hf}^2 +
    \norm{\dot A^{\perp} - d\nf}^2)\\
    &= \half\brk( \norm{\dot\hf^{\perp}}^2 + \norm{\dot A^{\perp}}^2
    + \norm{\nf\hf}^2 + \norm{d\nf}^2)
  \end{aligned}
\end{equation}

For the Chern-Simons term we have,
\begin{equation}
\label{eq:eq-cs-approx-fieldsp}
\begin{aligned}[b]
  \Lagrangian_{CS} &= \half\brk(
  \lproduct{ A, *\ef} + \lproduct{\tilde A_0, *\mf}
  )\\
  &= \half\brk(
  \lproduct{A, *\dot A - *d\tilde A_0} + \lproduct{\tilde A_0, *\mf}
  )\\
  &= \half
  \lproduct{A,*\dot A} + \lproduct{\tilde A_0,*\mf}
\end{aligned}
\end{equation}

To first order in $\kappa$, the Lagrangian can be approximated as, 
\begin{equation}
  \label{eq:approx-lagrangian}
  \begin{aligned}[b]
    \Lagrangian' &= \KinEn - \PotEn + \kappa\Lagrangian_{CS}\\
    &= \half\brk(\norm{\dot\hf^{\perp}}^2 + \norm{\dot A^{\perp}}^2) +
    \norm{\nf\hf}^2 + \norm{d\nf}^2 - n\pi + \frac{\kappa}{2}
    \lproduct{ A, *\dot A} + \kappa\,\lproduct{\tilde A_0, *\mf} \\
    &= \half(\norm{\dot\hf^{\perp}}^2 + \norm{\dot A^{\perp}}^2)
    - n\pi
     +\kappa\,\lproduct{\beta, *\mf} + \frac{\kappa}{2}
    \lproduct{A, *\dot A},
  \end{aligned}
\end{equation}

where we discarded another divergence and used
\eqref{eq:nf-elliptic-problems}. Let us introduce the kinetic 
and connection terms, $\KinTr, \ConTr: T\fieldSp' \to \reals$, defined as,
\begin{align}
\label{eq:kin-term-field-sp}
\KinTr &= \half(\norm{\dot\hf^{\perp}}^2 + \norm{\dot A^{\perp}}^2),
&
\ConTr &= \kappa\brk(\lproduct{\beta, *\mf} + \half
    \lproduct{A, *\dot A}),
\end{align}
in geometric terms, $\KinTr$ plays the role of a metric on tangent space, 
on the other hand, $\ConTr$ is a  
 connection, deviating the motion of the fields from geodesic motion, 
 as will be evident in the next subsection when 
we obtain a formula for this term. 
Therefore, the effective Lagrangian at low energy is,
\begin{equation}
\label{eq:eff-lag}
\Lagrangian_{eff} = \KinTr + \ConTr.
\end{equation}

$\KinTr$ is gauge invariant and if $\kappa = 0$,  it is the kinetic energy
term in the Samols approximation to Ginzburg-Landau theory or the
kinetic energy term computed in section~\ref{sec:loc-formulas}. 
However if $\kappa \neq 0$, this
term does not render the same energy as the extra $\kappa$ term in the 
Bogomolny equations deforms the fields. Although $\ConTr$ is gauge
dependent because of the $\beta$ 
factor, $\Lagrangian_{eff}$ determines the dynamics in a gauge
invariant way, because any gauge transformation contributes a total
divergence.

%%% Local Variables:
%%% TeX-master: "../../../geometric-models.tex"
%%% End:

\subsection{A formula for the connection term}
\label{sec:loc-cs}

In this section we focus on the Maxwell-Higgs-Chern-Simons model. Let
$\vset \subset \surface$ be the set of zeros of $\hf$. We assume the 
zeros are simple. If the energy of a solution
to the Bogomolny equations is $n\pi$, there are $n$
vortices on $\surface$. We work in a chart $\varphi:U \to \cpx$ defined on an 
open and dense subset $U \subset
\surface$ and assume that $\vset \subset U$. We denote by $z = \varphi(x)$ 
points on $\cpx$ and assume the metric takes the form
\begin{equation}
\label{eq:metric-isoth-coords}
g = e^{\Lambda(z)} \brk(dz_1^2 + dz_2^2),\qquad z \in \cpx.
\end{equation}

Since $U$ is contractible, the restriction  $F\vert_U$ is
trivial. Let $U' = U\setminus \vset$, we define 
the fields $h, \chi \in C^{\infty}(U')$, $\eta \in
C^{\infty}(U',\cpx)$, such that,  
\begin{align}
%\label{eq:h-chi-def}
\hf &= e^{\frac{h}{2} + i\chi}, &&
\eta = \frac{\dot h}{2} + i\dot\chi.
\end{align}

As for the $O(3)$ Sigma model $\chi$ is only well defined modulo $2\pi$, 
however, $h$, $\eta$ and
$d\chi$ are well defined functions on $U'$. Since the zeros of 
$\hf$ are simple, for any $p\in\vset$ there is a coordinate
neighbourhood $U_p$ and a smooth function $\tilde\hf_p\in
C^{\infty}(U_p, \cpx\setminus\set{0})$, such that, 
\begin{align}
%\label{eq:loc-simple-zero}
\hf(x) = (\varphi(x) - \varphi(p))\,\tilde\hf(x), \qquad x \in U_p.
\end{align}

Let $r_p  = \log\,\abs{\varphi(x) - \varphi(p)}$, 
$\theta_p = \Arg(\varphi(x) - \varphi(p))$,  $x \in U'_p
= U_p\setminus\set{p}$, we have local expansions, 
\begin{align}
\label{eq:loc-h-chi}
h(x) = \log r_p^2 + \tilde h_p(x), &&
\chi = \theta_p(x) + \tilde\chi_p(x),
\end{align}
where the regular parts are functions $\tilde h_p, \tilde\chi_p \in
C^{\infty}(U_p)$. Locally, by
\eqref{eq:bog-conj-del-gp-hf} the gauge potential can be expressed in
terms of $d\chi$ and $dh$,
\begin{equation}
\label{eq:dgp-dchi-dh}
A = d\chi - \half *dh.
\end{equation}

Hence in $U'$, $h$ and $\chi$ satisfy the equations,
\begin{align}
\label{eq:elliptic-chi-h}
\laplacian h = 2*\mf, &&
\laplacian \chi = d^{*} A.
\end{align}

By Gauss's law, on $U'$ we have the following relation between
$\beta$ and $\dot\chi$,
\begin{equation}
  \label{eq:dot-chi-to-beta}
  \brk(\laplacian + \abs\hf^2)\dot\chi =
  \brk(\laplacian + \abs\hf^2) \beta.
\end{equation}

Note that $\beta$ is a smooth function defined on $U$
whereas $\dot\chi$ has divergences at vortex positions. Let
$\disk_{\epsilon}$ denote a collection of small $\epsilon$ 
geodesic disks, each one centred at one vortex position. The
orientation in each geodesic disk given by
the outward unit normal. Let $U_{\epsilon} = U\setminus
\disk_{\epsilon}$ be the surface with the holes left by removing the
disks. The orientation of $\del U_{\epsilon}$ is given by the
outward unit normal and if $\surface$ is the euclidean plane, we
assume the fields $\beta$ and $\nf$ converge fast enough at infinity.
Using Green's second identity and discarding divergences in the following
integral, we find,
\begin{equation}
\label{eq:int-beta-b}
\begin{aligned}[b]
  \kappa \int_{\surface} \beta \mf &= \int_{\surface}
  \beta\,(\laplacian + \abs\hf^2)\nf \\
  &= \int_{\surface} (\laplacian + \abs\hf^2)\beta \cdot \nf\\
  &= \lim_{\epsilon\to 0} \int_{U_{\epsilon}} (\laplacian
    + \abs\hf^2)\,\dot\chi \cdot \nf\\ 
  &= \lim_{\epsilon\to 0} \kappa\int_{U_{\epsilon}}
  \dot\chi \mf + \lim_{\epsilon\to 0} \int_{\del\disk_{\epsilon}} 
  (-\dot\chi *d\nf + \nf*d\dot\chi).
\end{aligned}
\end{equation}

On the other hand,
\let \gpalt \gp
\let \gp A
\begin{equation}
%  \label{eq:gp-dot-gp}
  \begin{aligned}[b]
    \gp\wedge\dot\gp &= \gp\wedge d\dot\chi - \half\gp\wedge*d\dot h\\
  &= \gp\wedge d\dot\chi + \half \dot\gp \wedge *d h - \half \dv{t}
\brk( \gp \wedge *dh)\\
  &= \gp\wedge d\dot\chi + \half \brk(
   d\dot\chi \wedge *dh - \half *d\dot h \wedge * dh
)  - \half \dv{t} \brk( \gp \wedge *dh)\\
  &= 2 \gp \wedge d\dot\chi - d\chi \wedge d\dot\chi + \frac{1}{4} dh
\wedge d\dot h  - \half \dv{t} \brk( \gp \wedge *dh)\\
  &= 2 \gp \wedge d\dot\chi + d\brk(\dot\chi d\chi) - \frac{1}{4} d
\brk(\dot h d h)  - \half \dv{t} \brk( \gp \wedge *dh)\\
  &= 2 \dot\chi \mf - 2 d\brk(\dot\chi \gp) + d\brk(\dot\chi d\chi) -
\frac{1}{4} d \brk(\dot h d h)  - \half \dv{t} \brk( \gp \wedge *dh).
  \end{aligned}
\end{equation}

Discarding the time derivative, we find,
\begin{equation}
\label{eq:int-gp-dot-gp}
\int_{\surface} \gp\wedge \dot\gp = 2\lim_{\epsilon\to
  0}\int_{U_{\epsilon}}\dot\chi\mf + \lim_{\epsilon\to
  0}\int_{\del\disk_{\epsilon}} \brk( 2\dot\chi\gp -\dot\chi d\chi +
\frac{1}{4} \dot h dh  ).
\end{equation}

Thence,
\begin{equation}
  \label{eq:omega-ver-me}
  \begin{aligned}[b]
    \ConTr &= \kappa\int_{\surface}\beta\mf - \frac{\kappa}{2}
    \int_U \gp \wedge \dot\gp\\
    &=  \lim_{\epsilon\to 0}\int_{\del\disk_{\epsilon}} 
\brk(
  -\dot\chi *d\nf + \nf*d\dot\chi  - \kappa\dot\chi\gp +
  \frac{\kappa}{2} \dot\chi d\chi - \frac{\kappa}{8}\dot h dh
).
  \end{aligned} 
\end{equation}

If $\alpha \in\Omega^1(U_p)$, in local, polar coordinates at $U_p$,
\begin{align}
\label{eq:eta-polar-coords}
\alpha &= \alpha_rdr_p + \alpha_{\theta} r_pd\theta_p, &
\dot\theta_p &= - \frac{-\sin\theta_p\,\dot p_1 + \cos\theta_p\,\dot
p_2 }{r_p},
\end{align}
where for a time varying point, $p(s)$, 
in local coordinates $\varphi_*\dot p = \dot p_1\,\del_1 + \dot p_2\,\del_2$.  
We deduce,
\begin{equation}
\label{eq:lim-dtheta-eta}
\lim_{\epsilon\to 0} \int_{\del\disk_{\epsilon}(p)}\dot\theta_p\,\alpha =
-\pi \brk(\alpha_1(p)\,\dot p_1 + \alpha_2(p)\,\dot p_2).
\end{equation}

Let $\moduliSp' \subset \moduliSp$ be the open subset of non coalescent
vortices, which we can identify with the set,

\begin{equation}
\label{eq:non-coalesc}
\set{\left.
    (p_1,\ldots, p_n) \in \surface^n
  \;\right\rvert\; p_j \neq p_k \; if\; j \neq k}.
\end{equation}

We define the projector,
\begin{align}
\label{eq:def-projector}
\Proj_j : \moduliSp' \to \cpx, && \Proj_j(\vp) = p_j.
\end{align}

Thence,
\begin{equation}
\label{eq:projector-integral}
\lim_{\epsilon\to 0} \sum_{j} \int_{\del\disk_{\epsilon}(p_j)}
 \dot\theta_{p_j}\, \alpha = - \pi \sum_j\lproduct*{\Proj_j^{*}\alpha, \dot\vp},
\end{equation}
where $\lproduct{\cdot,\cdot}$ is the
pairing of the pullback of $\alpha$ with the tangent vector $\dot\vp
\in T_{\vp}\moduliSp'$.

In equation \eqref{eq:omega-ver-me}, all the regular parts of the forms will
converge to zero as $\epsilon \to 0$, thus, the only terms to consider
are the singular parts. We compute those singular parts in the
following equations, 
\begin{equation}
\label{eq:int-dotchi-forms}
\begin{aligned}[b]
  \lim_{\epsilon\to 0}\int_{\del\disk_{\epsilon}}\dot\chi (-*d\nf -
 \kappa\gp) &=
\lim_{\epsilon\to 0} \sum_{j}\int_{\del\disk_{\epsilon}(p_j)} \dot\theta_{p_j}
(-*d\nf - \kappa\,\gp)\\ 
 &= -\pi \sum_j \lproduct{\Pi^{*}_j(-*d\nf - \kappa\gp), \dot\vp}.
\end{aligned}
\end{equation}

We also have,
\begin{equation}
\label{eq:sddottheta}
\begin{aligned}[b]
  \lim_{\epsilon\to 0}\int_{\disk_{\epsilon}} \nf *d\dot\chi &=
 \lim_{\epsilon\to 0} \sum_{j} \int_{\del\disk_{\epsilon}(p_{(j)})} \nf
 *d\dot\theta_{p_{(j)}}\\
 &= \lim_{\epsilon\to 0}\sum_{j}
 \int_{\del\disk_{\epsilon}(p_{(j)})} \nf \brk(
   \frac{-\sin\theta_{p_{(j)}}\, \dot p_{(j)1} + \cos\theta_{p_{(j)}}\, \dot 
   p_{(j)2}}{\epsilon}
   ) d\theta_{p_{(j)}},
\end{aligned}
\end{equation}
where we denote $p_j$ as $p_{(j)}$ to avoid confusion with the role of both 
subindexes. Taylor's expansion of  
$\nf(x)$ in a neighbourhood $V$ of a point $p$ is,
\begin{equation}
\label{eq:nf-taylor}
\nf(x) = \nf(p) + \del_1(\nf\circ\varphi^{-1})(\varphi(p))\,r_{p}\cos\theta_{p} 
+ \del_2(\nf\circ\varphi^{-1})(\varphi(p))\,r_{p}\sin\theta_{p} 
+ \order(r_{p}^2),
\end{equation}
for $x \in V$. Let $\nf_\varphi = \nf\circ\varphi^{-1}: U \to \reals$, thence, 
\begin{multline}
\label{eq:int-nf-sddotchi-disk}
\lim_{\epsilon\to 0}\int_{\del\disk_{\epsilon}(p)} \nf
\brk(
   \frac{-\sin\theta_p\, \dot p_1 + \cos\theta_p\, \dot p_2}{\epsilon}
   ) d\theta_p =
   \pi \brk(-\del_2\nf_\varphi(\varphi(p))\, \dot p_1 
   + \del_1\nf_\varphi(\varphi(p))\, \dot p_2).
\end{multline}

We conclude,
\begin{equation}
\label{eq:int-nf-sddotchi-algebraic-form}
\lim_{\epsilon\to 0} \int_{\disk_{\epsilon}} \nf *d\dot\chi = \pi
\sum_j \lproduct*{\Proj_j^{*}(*d\nf), \dot\vp}.
\end{equation}

Similarly, for the term $\dot h dh$ at $p \in \vset$ we have,
\begin{align}
\label{eq:doth-dh-local}
\dot h &= - 2\brk(
 \frac{\cos(\theta_p) \dot p_1 + \sin(\theta_p) \dot p_2}{r_p}
) + \del_t\hreg_p, &
dh &= \frac{2 dr_p}{r_p} + d\hreg_p.
\end{align}

Discarding the integrals of regular parts,
\begin{equation}
%\label{eq:int-dot-h-dh}
\begin{aligned}[b]
  \lim_{\epsilon\to 0}\int_{\del\disk_{\epsilon}} \dot h dh &=
  -2 \lim_{\epsilon\to 0} \sum_{j}
\int_{\del\disk_{\epsilon}(p_{(j)})} 
\brk(
  \frac{\cos(\theta_{p_{(j)}}) \dot p_{(j)1} + 
      \sin(\theta_{p_{(j)}}) \dot
    p_{(j)2}}{\epsilon}
  ) d\hreg_p\\
  &= -2 \lim_{\epsilon\to 0} \sum_{j}\int_{\del\disk_{\epsilon}(p_{(j)})}
  \left(
  -\del_1(\hreg_{p_{(j)}}\circ\varphi^{-1})(\varphi(p_{(j)}))\cdot
  \sin^2\theta_{p_{(j)}}\cdot\dot p_{(j)2} 
  \right.\\
  &\qquad \left. + 
  \del_2(\hreg_{p_{(j)}}\circ\varphi^{-1})(\varphi(p_{(j)}))
  \cdot\cos^2\theta_{p_{(j)}}\cdot\dot p_{(j)1}
  \right)
  d\theta_{p_{(j)}}\\
  &= 2\pi\sum_j \lproduct*{\Proj_j^{*}(*d\hreg_{p_{(j)}}), \dot\vp}.
\end{aligned}
\end{equation}

For the $\dot\chi\,d\chi$ term we have,
\begin{align}
\label{eq:chi-local-sing-reg-decomp}
\dot\chi &= \dot\theta_p + \del_t\chireg_p,&
d\chi &= d\theta_p + d\chireg_p.
\end{align}

We deduce the remaining integral is,
\begin{equation}
%\label{eq:int-dot-chi-d-chi}
\begin{aligned}[b]
  \lim_{\epsilon\to 0}\int_{\del\disk_{\epsilon}} \dot\chi d\chi &=
  \lim_{\epsilon\to 0}\sum_{j}\int_{\del\disk_{\epsilon}(p_j)} (
  \dot\chi d\theta_{p_j} + \dot\chi d\chireg_{p_j}
  )\\
 &= \lim_{\epsilon\to 0} \sum_{j}\int_{\del\disk_{\epsilon}(p_j)}
 \brk(
 \del_t\chireg_{p_j}\cdot d\theta_{p_j} + \dot\chi d\chireg_{p_j})\\
 &= \pi\sum_j \brk(
 2\,\del_t\chireg_{p_j}(p_j) 
 - \lproduct{\Proj_j^{*}\brk(d\chireg_{p_j}),\dot\vp}
 ).
\end{aligned}
\end{equation}

Collecting all the pieces, we find the following expression for the
connection on the moduli space,
\begin{equation}
\label{eq:loc-conn-form}
\ConTr = \pi\sum_j \brk(
\Proj_j^{*}\brk(
2*d\nf +\kappa\gp - \frac{\kappa}{2}\,d\chireg_{p_j}
- \frac{\kappa}{4} *d\hreg_{p_j}
)
+ \kappa \del_t\chireg_{p_j}(p_j)).
\end{equation}

Because of this term, motion of a set of vortices on the moduli space deviates 
from geodesic motion according to a force
given by the two form,

\begin{align}
\nonumber
  d\ConTr &= \pi\sum_j \brk(
\Proj_j^{*}\brk(
2 d*d\nf +\kappa \mf - \frac{\kappa}{4} d*d\hreg_{p_j}
)
+ \kappa\, d\del_t\chireg_{p_j}(p_j))\\
\nonumber
&= \pi\sum_j \brk(
\Proj_j^{*}\brk(
2\,\abs\hf^2 *\nf - 2\kappa\mf +\kappa \mf - \frac{\kappa}{4} (-2\mf)
)
+ \kappa\, d\del_t\chireg_{p_j}(p_j))\\
\label{eq:cs-force}
&= \kappa\pi \sum_j\brk( -\half \Proj_j^{*} \mf +
d\del_t\chireg_{p_j}(p_j)). 
\end{align}

To simplify this equation, we used \eqref{eq:nf-elliptic-problems} and
\eqref{eq:dgp-dchi-dh}. if $z_j = \varphi(p_j)$, $\chireg_{z_j} = 
\chireg_{p_j}\circ\varphi^{-1}$ are local expression on the chart,
discarding higher order terms in $\kappa$, from \eqref{eq:bog-mf} we find,
\begin{equation}
\label{eq:simp-proj-s-kappa-mf}
\kappa\,\Proj_j^*\mf = \Proj_j^{*} \brk(\frac{\kappa}{2}(1 -
\abs\hf^2)\,e^{\Lambda(z)} dz^1\wedge dz^2)
= \frac{\kappa}{2}e^{\Lambda(z_j)}\,dz_j^1\wedge dz_j^2.
\end{equation}

Hence,
\begin{equation}
\label{eq:domega-mf-simp}
d\ConTr = \kappa\pi  \sum_{j}\brk(
- \frac{1}{4}e^{\Lambda(z)}dz_j^1\wedge dz_j^2 + d\del_t\chireg_{z_j}(z_j)
).
\end{equation}

To obtain an explicit formula for the remaining terms
$d\del_t\chireg_{z_j}(z_j)$, we will work to lowest order in $\kappa$. From
\eqref{eq:bog-mf} and \eqref{eq:elliptic-chi-h},  $h$ and $\nf$ are
solutions to the following system of equations in the sense of
distribution,  
\begin{align}
-\laplacian h &= e^h - 1 + 2\kappa\nf + 4\pi\sum_{j}
\delta_{p_j},\label{eq:elliptic-prob-h}\\
-\laplacian \nf &= e^h\nf - \frac{\kappa}{2} (1 - 
e^h),\label{eq:elliptic-prob-nf}
\end{align}

Let $(h_0,\nf_0)$ be the solution to this system at $\kappa =
0$, we know  $\nf_0 = 0$ by the Julia-Zee theorem
\cite{spruck-proof-2009} and  $h_0$ is the solution to the Taubes 
equation for the Ginzburg-Landau functional \cite{taubes1980},
\begin{equation}
\label{eq:cs-taubes}
-\laplacian h_0 = e^{h_0} - 1 + 4\pi\sum_{j}\delta_{p_j},
\end{equation}
if $(\del_{\kappa}h, \del_{\kappa}\nf)$ is the next order solution at
$\kappa = 0$, then in $U'$,
\begin{align}
%\label{eq:elliptic-problem-order-kappa-h}
  -\laplacian \del_{\kappa}h &= e^{h_0}\del_{\kappa}h,\\
\label{eq:elliptic-problem-order-kappa-nf}
  -\laplacian \del_{\kappa}\nf &= e^{h_0}\del_{\kappa}\nf -
  \half + \frac{\kappa}{2} e^{h_0}\del_{\kappa}h.
\end{align}

We make the assumption $(\del_{\kappa}h, \del_{\kappa}\nf)
\in \Lsp^2\times \Lsp^2$, in this case elliptic regularity implies
these are smooth functions, hence $\del_{\kappa}h \equiv 0$ and to
lowest order,
\begin{equation}
\label{eq:elliptic-problem2-order-kappa-nf}
\brk(\laplacian + e^{h_0})\del_{\kappa}\nf = \half.
\end{equation}

Therefore $\del_{\kappa}\nf \neq 0$, moreover,
\begin{align}
%\label{eq:kappa-approx}
h &= h_0 + \order(\kappa^2), &
\nf &= \kappa\,\del_{\kappa}\nf + \order(\kappa^2).
\end{align}

For the remaining of the argument, we will assume $h$ is the solution to
 the Taubes equation without further notice. We can get and explicit
formula for the nontrivial term in \eqref{eq:domega-mf-simp}
introducing complex coordinates on $\moduli'$. If
we consider the singularity of $h$ at $p_j$, we have the following
equations in the sense of
distributions,  
\begin{align}
\label{eq:bog-h-lin}
-\brk(\laplacian + e^h) \del_{z_j} h &= 4\pi\,\del\delta_{p_j}, &
-\brk(\laplacian + e^h) \conj{\del_{z_j}} h = 4\pi\,\conj\del\delta_{p_j}.
\end{align}

Let
\begin{equation}
\label{eq:bog-eta-def}
\eta = \half \dot h + i \dot\chi,
\end{equation}
from \eqref{eq:dot-chi-to-beta} and \eqref{eq:loc-h-chi}, we deduce, 
\begin{equation}
\label{eq:bog-eta-eq}
\eta = \sum_{j} \dot p_j\,\del_{z_j}h + i\beta.
\end{equation}

We can expand $h$ in a neighbourhood of each $p_j \in \vset$ as in the
$O(3)$ Sigma model,
\begin{equation}
\label{eq:bog-loc-h}
h = \log r_j^2 + a_j + \half\brk( \conj{b}_j\, (z - z_j) + b_j (\conj z -
\conj z_j) ) + \order(r_j^2).
\end{equation}

Hence,
\begin{equation}
  \label{eq:bog-dotp-hp-local}
  \begin{aligned}[b]
    \del_t\chireg_{p_i}(p_i) &= 
    \Im \brk(\dot q\cdot \del_q\hreg_{p_i}(p_i)) + \beta
    \\
    &= \Im \brk(
    \brk(
      \sum_{j} \dot q_j \del_{z_j}a_i
      ) - \half\, \conj{b_i}\,\dot z_i
      ) + \beta.
  \end{aligned}
\end{equation}

So far, we have not used our gauge freedom in $\fieldSp'$, we can do
so now and discard the $\beta$ term to ease the final expression of
the computation. Let us define the complex form $\omega_c$, 
\begin{equation}
\label{eq:bog-omegac}
\omega_c = \sum_{i,j} \brk(
\del_{z_j}a_i - \half \conj{b}_i\,\delta_{ij}) dz^j
= \vdel a - \half \conj b,
\end{equation}
where,
\begin{align}
\label{eq:vdel}
%\label{eq:b-inv}
\vdel &= \sum_{i} dz^i\otimes\del_{z_i}, &
a &= \sum_{i} a_i, &
\conj b &= \sum_{i} \conj{b}_i\,dz^i,
\end{align}

The imaginary part of $\omega_c$ is the nontrivial term in $\ConTr$, 

\begin{equation}
\label{eq:d-omegac}
d\Im\brk( \omega_c ) = \Im\brk( d\omega_c ) = \Im\brk(
\conj\vdel \vdel a - \half \brk(\vdel \conj b +
\conj\vdel \conj b)).
\end{equation}

The coefficients $b_i$ have the symmetries,
\begin{align}
%\label{eq:b-sym}
\del_{z_j}\conj{b}_i &= \del_{z_i}\conj{b}_j, &
\conj{\del}_{z_j}\conj{b}_i &= \del_{z_i} b_j,
\end{align}
proved in~\cite{manton_topological_2004} by Manton-Sutcliffe for the
Euclidean plane. The proof can be adapted to 
compact manifolds and is essentially the same as the proof of 
lemma~\ref{lem:coef-sym}. Whence,
\begin{align}
\label{eq:del-b-sym}
\vdel\conj b &= 0, &
\conj\vdel \conj b &= -\vdel b.
\end{align}

Hence $\vdel b \in \Lambda^2(U,i\reals)$. Since $a$ is real, it is
also valid that $\conj\vdel\vdel a \in \Lambda^2(U, i\reals)$. Hence, the
curvature induced by the Chern-Simons term is written locally as,
\begin{equation}
\label{eq:bog-curv-form}
d\ConTr = - \kappa\pi i\, \brk(
\frac{1}{8}
\sum_{i}e^{\Lambda(z_i)} dz^i\wedge {d\conj z^i} + \conj\vdel \vdel a
+\half \vdel b.
)
\end{equation}

To lowest order, the metric is the $\Lsp^2$ metric, 
\begin{equation}
\label{eq:samols-approx}
ds^2 = \pi \sum_{i,j} \brk(
 e^{\Lambda(z_i)}\delta_{ij}  + 2 \del_{z_i} b_j) dz^i\,{d\conj z^j}.
\end{equation}
whose symplectic form is \cite[p.~212]{manton_topological_2004},
\begin{equation}
\label{eq:samos-symp-form}
\omega_0 = \frac{i\pi}{2} \brk(
\sum_{i} e^{\Lambda(z_i)}\delta_{ij}\,dz^i \wedge {d\conj z^j} + 2 \vdel b
).
\end{equation}

Therefore, the Chern-Simons curvature in $\moduliSp'$ is related to
the symplectic form of the $\Lsp^2$ metric by,
\begin{equation}\label{eq:bog-curv-form-d}
  d\ConTr = 
  -\frac{\kappa}{2} \omega_0 - \kappa\pi i \brk(
  -\frac{1}{8}  \sum_{i}
  e^{\Lambda(z_i)}\delta_{ij}\,dz^i\wedge {d\conj z^j} + \conj\vdel \vdel a
  ).
\end{equation}

\let \gp \gpalt
\let \gpalt \undefined

\subsubsection{Comparing with the Collie-Tong connection}

With our choice of notation, 
Collie and Tong proposed for vortices of the 
Abelian Higgs model~\cite{collie_dynamics_2008} that $d\ConTr = \kappa \rho$, 
where $\rho$ is the Ricci form of the metric in the moduli space, then 
properties of the dynamics of vortices with a Chern-Simons interaction term 
were studied by Krusch-Speight~\cite{krusch2010exact} and 
Alqahtani-Speight~\cite{alqahtani2015ricci} in all cases assuming the 
dynamics is modified by the Ricci form, however, little is 
known in the literature about how good the Ricci form approximation is. 
 We do not compare the dynamics of the Collie-Tong proposal with the 
connection term found due to lack of time, the problem remains open for 
future work, instead, 
if we consider~\eqref{eq:bog-curv-form-d}, we can see the connection 
term is not the Ricci form in the case of the moduli space 
$\moduli^{2}(\plane)$ of the MHCS 
model. For a pair $(z_1, z_2)$ of non-coalescent abelian vortices in $\plane$, 
define centre of mass coordinates $(Z, W)$, such that $z_1 = Z + W$, 
$z_2 = Z - W$, $W = \epsilon\,e^{i\theta}$, then the metric in the open and 
dense subset of non-coalescent vortices is,
\begin{equation}
g_{\Lsp^2} = 2\pi\,dZ\,d\overline Z + f(\epsilon)\,(d\epsilon^2 + 
\epsilon^2\,d\theta^2), 
\end{equation}
where the conformal factor is,
\begin{equation}
f(\epsilon) = 2\pi\,\pbrk{1 + \frac{1}{\epsilon}\,\frac{d}{d\epsilon}(\epsilon 
\tilde b(\epsilon))},
\end{equation}
and the coefficient $\tilde b(\epsilon)$ is defined as $\tilde b(\epsilon) = 
b_1(\epsilon, 
-\epsilon)$. In centre of mass coordinates, the Ricci form is,
\begin{equation}
\begin{aligned}[b]
\rho &= i\,\partial\bar\partial \log \sqrt{|g_{\Lsp^2}|}\\
&= i\,\partial\overline{\partial}\,\log f(\epsilon)\\
&= -K(\epsilon)\,f(\epsilon)\,\epsilon d\epsilon\wedge d\theta,
\end{aligned}
\end{equation}
where, $K(\epsilon)$ is the Gaussian curvature of the subspace of vortex pairs 
with $Z = 0$,
\begin{equation}
K = -\frac{1}{2\epsilon f(\epsilon)}\,\frac{d}{d\epsilon}\pbrk{\epsilon 
\frac{d}{d\epsilon}\log f(\epsilon)}.
\end{equation}

On the other hand, by~\eqref{eq:bog-curv-form-d},
\begin{multline}\label{eq:dConTr-pair}
d\ConTr = -\frac{\kappa}{2}\,i\pi\,dZ \wedge d\overline{Z}  
- \frac{\kappa}{2} f(\epsilon)\,\epsilon d\epsilon \wedge d\theta\\ 
-\kappa\,\pi\,i\pbrk{-\frac{1}{4}\,dZ \wedge d\overline{Z} + 
\frac{i}{2}\,r\,dr\wedge d\theta + \bar\partial \partial a}.
\end{multline}

On the plane, $(h, N)$, 
the solution to  
equations~\eqref{eq:elliptic-prob-h}-\eqref{eq:elliptic-prob-nf}, is 
invariant under isometries. For small $\kappa$ and small 
perturbations of 
$(h_0, 0)$, where $h_0$ is the solution to the Taubes equation for the Abelian 
Higgs model, the result of Flood-Speight~\cite{flood2018chern} shows 
the existence of exactly one solution $(h, N)$ to the field equations 
on a compact surface. It is sensible to assume the same statement holds on the 
plane, 
this implies $a$ is invariant under isometries of $\plane$, hence,
\begin{align}
\partial_{z_1}a + \partial_{z_2}a = 0, &&
\bar\partial_{z_1}a + \bar\partial_{z_2}a = 0.
\end{align}

From these equations, we deduce,
\begin{align}
\partial a = \partial_{z_1}a\,dz^1 + \partial_{z_2}a\,dz^2 
= -\partial_{z_1}a\,(dz^2 - dz^1).
\end{align}

Likewise,
\begin{equation}
\begin{aligned}[b]
\bar\partial\partial a &= 
-(\bar\partial_{z_1}\partial_{z_1}a\,d\bar z^1 
+ \bar\partial_{z_2}\partial_{z_1}a\,d\bar z^2)
\wedge (dz^2 - dz^1)\\
&= \bar\partial_{z_1}\partial_{z_1}a
\,(d\bar z^2 -  d\bar z^1)\wedge (dz^2 - dz^1)\\
&= 2\,i\,\bar\partial_{z_1}\partial_{z_1}a\cdot \epsilon\,d\epsilon\wedge 
d\theta.
\end{aligned}
\end{equation}

Let $\tilde a(\epsilon) = a(\epsilon, -\epsilon)$, isometric invariance of 
$a$ implies,
\begin{equation}
\bar\partial_{z_1}\partial_{z_1}a = \frac{1}{4\epsilon}\,\frac{d}{d\epsilon} 
\pbrk{\epsilon\,\frac{d \tilde a}{d\epsilon}}.
\end{equation}

Going back to equation~\eqref{eq:dConTr-pair}, we find,
\begin{multline}\label{eq:dConfTr-2v-CM}
d\ConTr = -\frac{\kappa\,\pi\,i}{4}\,dZ \wedge d\bar Z \\
-\frac{\kappa \pi}{2}\,\pbrk{1 + 
\frac{2}{\epsilon}\frac{d}{d\epsilon}(\epsilon\,\tilde b(\epsilon))
- \frac{1}{\epsilon}\,\frac{d}{d\epsilon}\pbrk{\epsilon\,\frac{d\tilde 
a}{d\epsilon}}}\,\epsilon\,d\epsilon\wedge d\theta.
\end{multline}

Equation~\eqref{eq:dConfTr-2v-CM} shows $d\ConTr \neq \kappa\rho$, since 
$\rho$ has no $dZ\wedge  d\bar Z$ component.

\subsection{Chern-Simons localization on the O(3) Sigma 
model}\label{subsec:cs-loc-o3-sigma}

\let \gpalt \gp
\let \gp A

We can adapt our previous arguments to the $O(3)$ Sigma model with some
minor adjustments. Most of our previous deduction follows without
change, since the Bogomolny equations have the same structure, except
for the algebraic formula of $*\mf$. Gauss's law in this case has to be
replaced as in section~\ref{sec:cs-intro}. In this case, the
projection $\beta$ of a solution $\pair \in \fieldSp'$ to the static
Bogomolny equations onto vertical space is a solution to
equation~\eqref{eq:perp-condition}, which for variations of the core
positions is,
\begin{align}
  (\laplacian + \abs{X_{\hf}}^2)\,\beta = \lproduct{\dot\hf, X_{\hf}}
  + d^{*}\dot\gp,
\end{align}

In this case at each core $p_j \in \vset \cup \avset$, we have to take
into consideration the sign function $\sign_j$, otherwise our
computation on sections~\ref{sec:cs-localisation},~\ref{sec:loc-cs}
follow the same pattern and we find that for non-coalescent vortices, 
to first order in $\kappa$,
\begin{align}
    d\ConTr &= \kappa\pi \sum_j\,\sign_j\brk( -\half \Proj_j^{*} \mf +
    d\del_t\chireg_{p_j}(p_j))\nonumber\\
    &= \kappa\pi\,\sum_j\,\sign_j\,\pbrk{\half
      \Proj_j^{*}(*(\kappa\,\nf + \tau - \lproduct{n, \hf})) +
      d\del_t\chireg_{p_j}(p_j)
    }\nonumber\\
    &= \kappa\pi\,\sum_j\,\pbrk{-\frac{i}{4}
      (1 - \sign_j\tau)\,e^{\Lambda(z_j)}\,dz^j\wedge d\conj z^j + 
      \sign_j\,d\del_t\chireg_{z_j}(z_j)
    }.
\end{align}

To deduce a formula for the second term in the sum, we know by
section~\ref{sec:cs-intro} that for $\kappa = 0$, the only solution to
the governing elliptic problem is $(h_0, 0)$ where $h_0$ is the
solution to the Taubes equation for the $O(3)$ Sigma model. As is shown in
equation~\eqref{eq:eta-formula}, $\eta$ can be computed from the
derivatives of $h$, in accordance to~\eqref{eq:bog-eta-eq}. If we
recall equation~\eqref{eq:loc-h-coeffs-p}, we find for any small
holomorphic neighbourhood $U_j$ of $p_j \in \vset\cup\avset$, 
$z_j = \varphi(p_j)$,
\begin{align}
  \tilde h_{p_j}(\varphi(x)) = \sign_j\,a_j + \half\,\sign_j\,\pbrk{ \conj b_j
 \, (z - z_j) + b_j\,(\conj{z} - \conj z_j)} 
 + \order(r_j^2). 
\end{align}

Comparing with~\eqref{eq:bog-loc-h}, we deduce,
\begin{align}
  \sign_{j}\del_t\tilde\chi_{p_j}(p_j) = \Im\pbrk{\sum_{i}
    \dot z_i \del_{z_i}a_{j}  - \half\, \conj b_{j}\,\dot z_j}.
\end{align}

By lemma~\ref{lem:coef-sym}, $b$ has the same symmetries than for
Ginzburg-Landau solitons, therefore, 
\begin{align}
    d\ConTr &= -\kappa\pi\,i\,\sum_j\,\pbrk{\frac{1}{4}
      (1 - \sign_j\tau)\,e^{\Lambda(z_j)}\,dz^j\wedge {d\conj z^j} + 
      \conj\del\del\,a + \half\,\del b
    }.
\end{align}

\let \gp \gpalt
\let \gpalt \undefined

%%%%%%%%%%%%%%%%%%%%%%%%%%%%%%%%%%%%%%%%%%%%%%%%%%%%%%%%%%%

%% Conjectured value of the force term
%\begin{equation}
% \label{eq:bog-force-symp-form}
% d\omega = -\kappa\pi \text{symp} + \kappa\pi\sum_{p\in\vset} \Im \brk(
% \conj\del\del a_p),
% \end{equation}

% where $\text{symp}$ is the symplectic form of the Kähler metric.

%%% Local Variables:
%%% TeX-master: "../../../geometric-models.tex"
%%% End:

%d\chi = \frac{-p_2dx^1 + p_1dx^2}{r_p^2}

\subsection{Extending to the coalescence points}
\label{sec:ellipt-reg}

In the previous sections we developed two formulae for the extra term
in a localization formula of vortices in a model with a Chern-Simons
term. We assumed $\kappa$ small and found two related formulae on
$\moduli'$, the open and dense subspace of the moduli space of
non-coalescent vortices.
To extend $d\ConTr$ to the coalescence
points means to study the limit of the formula as any pair of vortices,
(of the same type for deformations of the $O(3)$ Sigma model) coalesce, meaning 
as $d(p_j, p_k)
\to 0$ for $p_j\neq p_k$ and $p_j$, $p_k$ vortices of the same
type. Let us consider first the $O(3)$ Sigma model. Let $\htilde \in
C^{\infty}(\surface)$ be the regular part of $h$,
theorem~\ref{thm:h-reg-param-dep} shows $\tilde h$ depends smoothly on
vortex positions as long as vortices and antivortices do not
coalesce. Recall in section~\ref{c:intro-gov-elliptic} we defined
smooth functions $v_j: \surface\setminus\set{p_j} \to \reals$, such that,
\begin{align}
  h = \sum_{j} \sign_j\,v_j + \tilde h.
\end{align}

In the compact case, we assume   $\vset\cup\avset \subset U$, $U$ an open and 
dense subset of the surface in which a holomorphic chart $\varphi: U \to \cpx$ 
is defined. Let $D\subset \cpx$ be a bounded domain containing $\varphi(\vset 
\cup \avset)$. Since each 
$v_j$ is a constant multiple of Green's function, there is a smooth function
$\tilde v : D \times D \to \reals$ such that for
any $z \in D$,
\begin{align}
  v_j(\varphi^{-1}(z)) = \log\,\abs{z - z_j}^2 + \tilde v(z, z_j).
\end{align}

Thus,
\begin{align}
  \sign_ia_i = \sum_{j \neq i}\sign_j\log\,\abs{z_j - z_i}^2 +
  \sum_j \sign_j\tilde v(z_j, z_i) + \tilde h(\varphi(z_i)),
\end{align}
and,
\begin{align}
  \sign_ib_i &= 2\,\conj\del_z\pbrk{
    \sum_{j \neq i}\sign_j\log\,\abs{z - z_j}^2 + \sum_j\sign_j\,\tilde
    v(z, z_j) + \tilde h(\varphi^{-1}(z))
  }(z_i)\nonumber\\
  &=\sum_{j \neq i}\frac{2\sign_j}{\conj z_i - \conj z_j} +
  2\sum_j\sign_j\,\conj\del_z\tilde v(z_j, z_i) + 2\,\conj\del_z\tilde h(z_i).
\end{align}
 
Hence, for non-coalescent cores at $D$,
\begin{equation}\label{eq:drv-reg}
  \begin{gathered}
\conj\vdel \vdel\, a =
\sum_{i,j}\sign_i\sign_j\,\conj\vdel\vdel\,\tilde v(z_j, z_i) +
\sum_i\sign_i\,\conj\vdel\vdel\,\tilde h(z_i),\\
\vdel\,b = 2\vdel\pbrk{\sum_{i,j}\pbrk{\sign_i\sign_j\conj\del_z\tilde
v(z_j,z_i)
 + \sign_i\delta_{ij}\conj\del_z\tilde h(z_i)}{d\conj z^i}}.
  \end{gathered}
\end{equation}

If $z \in D$ is such that for a fixed pair  of indices $j, k$,
the vortices at $z_j$, $z_k$ are of the same type and both 
converge to $z$, equation~\eqref{eq:drv-reg} implies the limit $
\lim_{|z_j-z_k|\to 0}d\ConTr$ exists and is unique, in fact, it
corresponds to solving the regularised Taubes equation with
configuration $\vb p$ such that $p_j = p_k = \varphi^{-1}(z)$.

For Ginzburg-Landau vortices the same argument is valid, it is
simpler, since in this case all the vortices are of the same type and
we can take $\sign_p = 1$ for all the cores. In this case we arrive to
an algebraic expression similar to \eqref{eq:drv-reg} and apply
proposition~\ref{prop:gl-reg-taubes-param-dep} to conclude $d\ConTr$
can also be
extended to the coalescence points.

%%% Local Variables:
%%% mode: latex
%%% TeX-master: "../../../geometric-models.tex"
%%% End:

%\appendix

%\input{chapters/cs-term/sections/coeff-sym.tex}

%\input{chapters/cs-term/sections/linear-prob.tex}

%preview-LaTeX-command

%(custom-set-faces 
%   '(preview-reference-face ((t (:background "#ebdbb2" :foreground
%  "black")))))

% (local-set-key (kbd "<C-tab>") 'LaTeX-indent-line)
% (local-set-key (kbd "C-c %") 'comment-region)
% (TeX-add-symbols '("eqref" TeX-arg-ref (ignore)))

% (setq reftex-label-alist '((nil ?e nil "~\\eqref{%s}" nil nil)))

% (TeX-run-style-hooks "amsmath")

\let \higgsField              \undefined              
\let \gaugePotential          \undefined              
\let \electricField           \undefined              
\let \magneticField           \undefined              
\let \spacetimeGaugePotential \undefined              
\let \curvatureField          \undefined              
\let \neutralField            \undefined              
\let \fieldConfSpace          \undefined              
\let \gaugeTransfSpace        \undefined              
\let \configurationSpace      \undefined              
\let \bigTantent              \undefined              
\let \vertSpace               \undefined              
\let \connectionSpace         \undefined              
\let \gaugeTransfSpace        \undefined              
\let \LagrangianDensity       \undefined              
\let \hf                      \undefined              
\let \gp                      \undefined              
\let \ef                      \undefined              
\let \mf                      \undefined              
\let \cf                      \undefined              
\let \nf                      \undefined              
\let \confSp                  \undefined              
\let \gaugeSp                 \undefined              
\let \vertSp                  \undefined              
\let \gaugeGp                 \undefined              
\let \fieldSp                 \undefined              
\let \connSp                  \undefined              
\let \connForm                \undefined              
\let \bigTangent              \undefined              
\let \pair                    \undefined              
\let \stgp                    \undefined              
\let \hreg                    \undefined              
\let \htilde                  \undefined              
\let \chireg                  \undefined              
\let \ConTr                   \undefined              
\let \Energy                  \undefined              
\let \KinEn                   \undefined              
\let \PotEn                   \undefined              
\let \KinTr                   \undefined              
\let \Lagrangian              \undefined              
\let \covariantDerivative     \undefined              
\let \cdv                     \undefined              
\let \vp                      \undefined              
\let \conFc                   \undefined              
\let \dist                    \undefined              
\let \manifold                \undefined              
\let \moduliSp                \undefined              
\let \slaplacian              \undefined              
\let \dvr                     \undefined              
\let \Diff                    \undefined              
\let \vdel                    \undefined              
\let \dTaubes                 \undefined              
\let \afn                     \undefined              
\let \bfn                     \undefined              
\let \ueps                    \undefined              
\let \Xsp                     \undefined              
\let \hilbSp                  \undefined              
\let \fFunc                   \undefined              
\let \intManifold             \undefined              
\let \Hn                      \undefined              
\let \wto                     \undefined              
\let \ElpOp                   \undefined              
\let \Hsp                     \undefined              
\let \TaubesOp                \undefined              
\let \Proj                    \undefined              
\let \hprod                   \undefined 
\let \sign                    \undefined

%%% Local Variables:
%%% TeX-master: "../../geometric-models.tex"
%%% coding: utf-8
%%% End:

\chapter{Conclusion}

In this work we focused on geometric models of vortices and
antivortices of the $O(3)$ Sigma model. We emphasised the geometric nature of
the interaction of a vortex-antivortex pair on the moduli space.

We were able to prove that the $\Lsp^2$ metric in the moduli space is 
incomplete 
both on the euclidean plane and on a compact surface. We also analysed the 
dynamical
properties of the interaction on the plane, focusing on
scattering of vortex-antivortex pairs.

We also computed the volume of the moduli space on spheres and flat tori,
corroborating the work of Speight and R\~omao who conjectured a
formula for the volume of the moduli space for a general surface.

The fact that the moduli space is incomplete imposed some technical
difficulties on the proofs, that we overcame by analysing the
behaviour of solutions to the Taubes equations in the collision of a
vortex and an antivortex.

Finally, we added a Chern-Simons interaction term to our model and
applied the geodesic approximation ideas to determine the extra term in the
metric of the moduli space for small perturbations due to the
interaction. Our analysis indicates that the extra term 
can be extended to the coincidence set.

Some questions remain opened, representing an opportunity for future
work. The short range 
approximation formula for the metric on the space of vortex-antivortex
pairs of the euclidean plane relies on uniform convergence of the
family $\tilde h_{\epsilon}/\epsilon$, as $\epsilon \to 0$, where
$\tilde h_{\epsilon}$ is the regular part of the Taubes equation. 
Numerical evidence
suggests this conjecture is true. Should it be the case, we would be
able to prove formally that the Gaussian curvature of
$\moduli^{1,1}_0(\plane)$ diverges as $\epsilon \to 0$ 
as expected from the numerical evidence and we could
also justify analytically the effective potential of Ricci magnetic
geodesics. The equivalent conjecture for a compact surface would allow
to compute the volume formula for a general surface, where we no
longer have the extra symmetries that we used for the task.

In conclusion, geometric ideas to study field theory originated in the realm of
superconductivity with the Ginzburg-Landau functional at critical
coupling, but they have proved to be fruitful in a broader context. In
particular, for asymmetric vortex-antivortex systems of the $O(3)$ Sigma model, 
where with these ideas one can understand dynamics from a geometric point of
view.

%%% Local Variables:
%%% TeX-master: "../../geometric-models.tex"
%%% End:

%\bibliographystyle{Classes/jmb} % bibliography style
\bibliographystyle{plainnat} % bibliography style
\renewcommand{\bibname}{References} 
\bibliography{refs} % References file
%\printbibliography
\addcontentsline{toc}{chapter}{References} % Adds References to contents page

\end{document}